\def\confversion{0}
\def\ifconf{\ifnum\confversion=1}
\def\ifnotconf{\ifnum\confversion=0}
\def\showauthornotes{0}
\def\showkeys{0}
\def\showdraftbox{0}
\definecolor{darkred}{rgb}{0.5,0,0}
\definecolor{darkgreen}{rgb}{0,0.35,0}
\definecolor{darkblue}{rgb}{0,0,0.55}
\newcommand{\Authornote}[2]{{\sf\small\color{red}{[#1: #2]}}}
\newcommand{\Authorcomment}[2]{{\sf \small\color{gray}{[#1: #2]}}}
\newcommand{\Authorfnote}[2]{\footnote{\color{red}{#1: #2}}}
\newcommand{\Authornote}[2]{}
\newcommand{\Authorcomment}[2]{}
\newcommand{\Authorfnote}[2]{}
\newtheorem{theorem}{Theorem}[section]
\newtheorem{definition}[theorem]{Definition}
\newtheorem{lemma}[theorem]{Lemma}
\newtheorem{remark}[theorem]{Remark}
\newtheorem{proposition}[theorem]{Proposition}
\newtheorem{corollary}[theorem]{Corollary}
\newtheorem{claim}[theorem]{Claim}
\newtheorem{fact}[theorem]{Fact}
\newtheorem{remk}[theorem]{Remark}
\newtheorem{example}[theorem]{Example}
\newtheorem{algo}[theorem]{Algorithm}
\def\FullBox{\hbox{\vrule width 6pt height 6pt depth 0pt}}
\def\qed{\ifmmode\qquad\FullBox\else{\unskip\nobreak\hfil
\penalty50\hskip1em\null\nobreak\hfil\FullBox
\parfillskip=0pt\finalhyphendemerits=0\endgraf}\fi}
\def\qedsketch{\ifmmode\Box\else{\unskip\nobreak\hfil
\penalty50\hskip1em\null\nobreak\hfil$\Box$
\parfillskip=0pt\finalhyphendemerits=0\endgraf}\fi}
\def\to{\rightarrow}
\def\eps{\varepsilon}
\def\epsilon{\varepsilon}
\def\eps{\epsilon}
\def\phi{\varphi}
\def\cal{\mathcal}
\newcommand{\defeq}{:=}
\newcommand{\etal}{et al.\xspace}
\newcommand{\R}{{\mathbb R}}
\newcommand{\E}{{\mathbb E}}
\newcommand{\N}{{\mathbb{N}}}
\newcommand{\norm}[1]{\ensuremath{\left\lVert #1 \right\rVert}}
\newcommand{\ip}[2] {\ensuremath{\langle #1 , #2 \rangle}}
\def\ProbabilityRender#1#2{
  \@ifnextchar\bgroup%
  {\renderwithdist{#1}{#2}}
   {\singlervrender{#1}{#2}}
}
\def\singlervrender#1#2{%
   \ensuremath{\mathchoice
       {{#1}\left[ #2 \right]}
       {{#1}[ #2 ]}
       {{#1}[ #2 ]}
       {{#1}[ #2 ]}
   }
}
\def\renderwithdist#1#2#3{%
   \@ifnextchar\bgroup
   {\superfancyrender{#1}{#2}{#3}}
   {\ensuremath{\mathchoice
      {\underset{#2}{#1}\left[ #3 \right]}
      {{#1}_{#2}[ #3 ]}
      {{#1}_{#2}[ #3 ]}
      {{#1}_{#2}[ #3 ]}
     }
   }
}
\def\superfancyrender#1#2#3#4#5{
   \ensuremath{\mathchoice
      {\underset{#1}{{#1}}\left#4 #3 \right#5}
      {{#1}_{#2}#4 #3 #5}
      {{#1}_{#2}#4 #3 #5}
      {{#1}_{#2}#4 #3 #5}
   }
}
\newfont{\inhead}{eufm10 scaled\magstep1}
\newcommand{\calD}{{\cal D}}
\newcommand{\calE}{{\cal E}}
\newcommand{\calF}{{\cal F}}
\newcommand{\calI}{{\cal I}}
\newcommand{\calL}{{\cal L}}
\newcommand{\calM}{{\cal M}}
\newcommand{\calP}{{\cal P}}
\newcommand{\poly}{{\mathrm{poly}}}
\newcommand{\littleoh}{\operatorname{o}}
\newcommand{\Erdos}{Erd\H{o}s\xspace}
\newcommand{\Renyi}{R\'enyi\xspace}
\newcommand\reallywidehat[1]{%
	\savestack{\tmpbox}{\stretchto{%
			\scaleto{%
				\scalerel*[\widthof{\ensuremath{#1}}]{\kern-.6pt\bigwedge\kern-.6pt}%
				{\rule[-\textheight/2]{1ex}{\textheight}}
			}{\textheight}%
		}{0.5ex}}%
	\stackon[1pt]{#1}{\tmpbox}%
}
\newcommand{\RR}{\mathbb{R}}
\newcommand{\NN}{\mathbb{N}}
\newcommand{\EE}{\mathop{\mathbb{E}}}
\newcommand{\GN}{\mathcal{N}}
\newcommand{\sig}{\sigma}
\newcommand{\al}{\alpha}
\newcommand{\lda}{\lambda}
\newcommand{\Lda}{\Lambda}
\newcommand{\Gam}{\Gamma}
\newcommand{\gam}{\gamma}
\newcommand{\Del}{\Delta}
\newcommand{\tens}[2] {\ensuremath{#1 ^ {\otimes #2}}}
\def\<{{\langle}} \def\>{{\rangle}}       
\newcommand{\iprod}[1]{\langle #1 \rangle}
\DeclareMathOperator{\Tr}{Tr}
\newtheorem{propn}[theorem]{Proposition}
\renewcommand{\E}{\mathop{\mathbb{E}}}
\newcommand{\pE}{\tilde{\EE}}
\newcommand{\psdmass}{\normalfont{(PSD mass) }}
\newcommand{\middleshapebounds}{\normalfont{(Middle shape bounds) }}
\newcommand{\intersectionbounds}{\normalfont{(Intersection term bounds) }}
\newcommand{\truncationbounds}{\normalfont{(Truncation error bounds) }}
\newcommand{\middleshapeboundstwo}{Middle shape bounds}
\newcommand{\intersectionboundstwo}{Intersection term bounds}
\newcommand{\truncationboundstwo}{Truncation error bounds}
\renewcommand*{\circle}[1]{\scalebox{0.85}{\footnotesize
    \tikz[baseline=(char.base)]{
        \node[shape=circle,draw,inner sep=1.5pt](char) {   \ifx&#1&
        \color{white} $i$
        \else
        $#1$
        \fi};
    }}}
\renewcommand*{\square}[1]{\scalebox{0.85}{\footnotesize
    \tikz[baseline=(char.base), square/.style={regular polygon,regular polygon sides=4}]{
        \node[draw,square, inner sep=0.5pt](char) {
        \ifx&#1&
        \color{white} $t$
        \else
        $#1$
        \fi};
    }}}
\pgfplotsset{compat=1.15}
\begin{document}

\title{Machinery for Proving Sum-of-Squares Lower Bounds on Certification Problems}

\author{
  Aaron Potechin\thanks{{\tt University of Chicago}. {\tt potechin@uchicago.edu}. Supported in part by NSF grant CCF-2008920.}
  \and
  Goutham Rajendran\thanks{{\tt University of Chicago}. {\tt goutham@uchicago.edu}. Supported in part by NSF grants CCF-1816372 and CCF-2008920.}
}

\date{\today}

\maketitle
\thispagestyle{empty}

{\abstract{In this paper, we construct general machinery for proving Sum-of-Squares lower bounds on certification problems by generalizing the techniques used by Barak \etal~\cite{BHKKMP16} to prove Sum-of-Squares lower bounds for planted clique. Using this machinery, we prove degree $n^{\epsilon}$ Sum-of-Squares lower bounds for tensor PCA, the Wishart model of sparse PCA, and a variant of planted clique which we call planted slightly denser subgraph.
}}

	\newpage
	\ifnotconf
	\pagenumbering{roman}
	\setcounter{tocdepth}{2}
	\clearpage
	\fi

	\pagenumbering{arabic}
	\setcounter{page}{1}

	\section{Introduction}\label{sec: intro}

	The Sum-of-Squares (SoS) hierarchy is an optimization technique that harnesses the power of semidefinite programming to solve optimization tasks. First independently investigated by Shor \cite{shor1987approach}, Nesterov \cite{nesterov2000squared}, Parillo \cite{parrilo2000structured}, Lasserre \cite{lasserre2001global} and Grigoriev \cite{grigoriev2001complexity, Grigoriev01}, the SoS hierarchy offers a sequence of convex relaxations for polynomial optimization problems and is parameterized by an integer called the degree of the SoS hierarchy. As we increase the degree $d$ of the hierarchy, we get progressively stronger convex relaxations which are solvable in $n^{O(d)}$ time\footnote{There is a caveat, see \cite{o2017sos}}. This has paved the way for the SoS hierarchy to be a powerful tool in algorithm design both in the worst case and the average case settings. Indeed, there has been tremendous success in using the SoS hierarchy to obtain efficient algorithms for combinatorial optimization problems (e.g., \cite{GW94, AroraRV04, GuruswamiS11, raghavendra2017strongly}) as well as problems stemming from Statistics and Machine Learning (e.g., \cite{BarakBHKSZ12, bks15, HopSS15, pot17, kothari2017outlier}). In fact, SoS achieves the state-of-the-art approximation guarantees for many fundamental problems such as Sparsest Cut \cite{AroraRV04}, MaxCut \cite{GW94}, Tensor PCA \cite{HopSS15} and all Max-$k$-CSPs \cite{Raghavendra08}. Moreover, for a large class of problems, it's been shown that SoS relaxations are the most efficient among all semidefinite programming relaxations \cite{lrs15}.

On the flip side, some problems have remained intractable beyond a certain threshold even by considering higher degrees of the SoS hierarchy \cite{BHKKMP16, KothariMOW17, MRX20, sklowerbounds}. For example, consider the Planted Clique problem where we have to distinguish a random graph sampled from the \Erdos-\Renyi model $G(n, \frac{1}{2})$ from a random graph which is obtained by first sampling a graph from $G(n, \frac{1}{2})$ and then planting a clique of size $n^{\frac{1}{2} - \eps}$ for a small constant $\eps > 0$. It was shown in \cite{BHKKMP16} that with high probability degree $o(\log n)$ SoS fails to solve this distinguishing problem.

There are many reasons for why studying lower bounds against the SoS hierarchy is important. Firstly, SoS is a generic proof system that captures a broad class of algorithmic reasoning \cite{FKP19}. In particular, under mild conditions, SoS captures statistical query algorithms and algorithms based on low degree polynomials. Therefore, SoS lower bounds indicate to the algorithm designer the intrinsic hardness of the problem and suggest that if they want to break the algorithmic barrier, they need to search for algorithms that are not captured by SoS. Secondly, in average case problem settings, standard complexity theoretic assumptions such as P $\neq$ NP have not been shown to give insight into the limits of efficient algorithms. Instead, lower bounds against powerful proof systems such as SoS have served as strong evidence of computational hardness \cite{hop17, hop18, kunisky2021spectral}. Thus, understanding the power of the SoS hierarchy on these problems is an important step towards understanding the approximability of these problems.

\subsection{Our contributions}

In this paper, we consider the following general category of problems: Given a random input, can we certify that it does not contain a given structure?
Some important examples of this kind of problem are as follows.
\begin{enumerate}
    \item Planted clique: Can we certify that a random graph does not have a large clique?
    \item Tensor PCA: Given an order $k$ tensor $T$ with random independent Gaussian entries, can we certify that there is no unit vector $x$ such that $\ip{T}{x \otimes\ldots\otimes x}$ is large?
    \item Wishart model of sparse PCA: Given an $m \times d$ matrix $S$ with random independent Gaussian entries (which corresponds to taking $m$ samples from $\GN(0, I_d)$), can we certify that there is no $k$-sparse unit vector $x$ such that $\norm{Sx}$ is large?
\end{enumerate}

These kinds of problems, known as certification problems, are closely related to their optimization or estimation variants. A certification algorithm is required to produce a proof/certificate of a bound that holds for \textit{all} inputs, as opposed to most inputs. The Sum-of-Squares hierarchy provides such certificates in a canonical way for a wide variety of such problems, so analyzing SoS paves the way towards understanding the certification complexity of these problems. We investigate the following question.

\begin{quote}
    \em{For certification problems, what are the best bounds that SoS can certify?}
\end{quote}

In this work, we build general machinery for proving probabilistic Sum of Squares lower bounds on certification problems. To build our machinery, we generalize the techniques pioneered by \cite{BHKKMP16} for proving Sum of Squares lower bounds for planted clique. We start with the standard framework for proving probabilistic Sum of Squares lower bounds:
\begin{enumerate}
    \item Construct candidate pseudo-expectation values $\pE$ and the corresponding moment matrix $\Lambda$ (see \cref{subsec: sos}).
    \item Show that with high probability, $\Lambda \succeq 0$.
\end{enumerate}
For planted clique, \cite{BHKKMP16} constructed $\pE$ and the corresponding moment matrix $\Lambda$ by introducing the pseudo-calibration technique (see \cref{subsec: pseudocalibration}). They then showed through a careful and highly technical analysis that with high probability $\Lambda \succeq 0$.

In this paper, we investigate how generally the techniques used for planted clique can be applied. We show that by considering coefficient matrices based on the coefficients obtained by pseudo-calibration, we can give a general framework which captures most of the technical analysis for the sum of squares lower bound for planted clique \cite{BHKKMP16}. Using this framework, we show that there are relatively simple conditions on these coefficient matrices which are sufficient to ensure that the moment matrix $\Lda$ is PSD with high probability. This allows us to focus on the structure which is specific to each problem rather than the technical details of the analysis.

We exhibit the usefulness of our machinery by using it to achieve strong SoS lower bounds for the problems of Tensor PCA, Sparse PCA and a problem closely related to the Planted Clique problem that we call Planted Slightly Denser Subgraph. We do this with relative ease once the machinery is in place. The Sparse PCA lower bounds complement a long line of work on algorithmic guarantees that stretches for over two decades (the most recent one being \cite{sparse_pca_focs20}), thereby giving a complete picture (up to polylogarithmic factors) for the approximability-inapproximability thresholds for Sparse PCA.

We believe that our machinery is a promising approach for proving a variant of the low-degree conjecture (see \cite{hop18, holmgren2020counterexamples} for background on the low-degree conjecture). The low-degree conjecture is a fascinating open problem which says if there is a low-degree polynomial lower bound for a problem which is sufficiently symmetric then there is a sum of squares lower bound for a noisy version of the problem. If we can show that having a low-degree polynomial lower bound implies that the coefficient matrices obtained via pseudo-calibration (for the noisy version of the problem) obey the conditions required for our machinery, then this would prove this variant of the low-degree conjecture.

\subsubsection{A brief summary of pseudo-calibration}
A natural way to prove lower bounds on a certification problem is as follows.
\begin{enumerate}
    \item Construct a ``maximum entropy'' planted distribution of inputs which has the given structure.
    \item Show that we cannot distinguish between the random and planted distributions and thus cannot certify that a random input does not have the given structure.
\end{enumerate}
Based on this idea, the pseudo-calibration technique introduced  by \cite{BHKKMP16} constructs candidate pseudo-expectation values $\pE$ so that as far as low degree tests are concerned, $\pE$ for the random distribution mimics the behavior of the given structure for the planted distribution (for details, see \cref{subsec: pseudocalibration}). This gives a candidate moment matrix $\Lambda$ which we can then analyze with our machinery. A majority of known high-degree average-case SoS lower bounds in the literature have pseudo-expectation values that were either obtained by, or could be obtained by pseudocalibration, e.g., Planted Clique \cite{BHKKMP16}, Max-$k$-CSPs \cite{KothariMOW17}, Max-Cut on regular graphs \cite{MRX20}, Sherrington-Kirkpatrick problem \cite{sklowerbounds, MRX20}. It has also been successful for Densest-$k$-subgraph but for the weaker Sherali-Adams Hierarchy \cite{chlamtavc2018sherali}.

Naturally, pseudocalibration is the starting point for our SoS lower bounds. That said, our machinery is quite general and can be applied even if the candidate moment matrix $\Lambda$ is not obtained via pseudo-calibration.

\subsubsection{Our results on Tensor PCA, Sparse PCA, and Planted Slightly Denser Subgraph}
In this section, we formally state the main hardness theorems we show by applying our machinery. We defer discussing prior work, how we improve on them, and other related work to \cref{sec: prior_work}.

We describe the planted distributions we use to show our SoS lower bounds for planted slightly denser subgraph, tensor PCA, and the the Wishart model of sparse PCA. We also state the random distributions for completeness and for contrast. We then state our results.

\paragraph{Planted slightly denser subgraph}
We use the following distributions.
\begin{restatable}{itemize}{PLDSdistributions}
    \item Random distribution: Sample $G$ from $G(n, \frac{1}{2})$
    \item Planted distribution: Let $k$ be an integer and let $p > \frac{1}{2}$. Sample a graph $G'$ from $G(n, \frac{1}{2})$. Choose a random subset $S$ of the vertices, where each vertex is picked independently with probability $\frac{k}{n}$. For all pairs $i, j$ of vertices in $S$, rerandomize the edge $(i, j)$ where the probability of $(i, j)$ being in the graph is now $p$. Set $G$ to be the resulting graph.
\end{restatable}
In \cref{sec: plds_qual}, we compute the candidate moment matrix $\Lambda$ obtained by pseudo-calibration.

\begin{restatable}{theorem}{PLDSmain}\label{thm: plds_main}
Let $C_p > 0$. There exists a constant $C > 0$ such that for all sufficiently small constants $\eps > 0$, if $k \le n^{\frac{1}{2} - \eps}$ and $p =  \frac{1}{2} + \frac{n^{-C_p\eps}}{2}$, then with high probability, the candidate moment matrix $\Lda$ given by pseudo-calibraton for degree $n^{C\eps}$ Sum-of-Squares is PSD.
\end{restatable}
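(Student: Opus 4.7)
The plan is to invoke the general machinery of this paper (stated informally as \cref{informalmaintheoremstatement} and formally as \cref{simplifiedmaintheorem} and \cref{generalmaintheorem}) on the candidate moment matrix $\Lambda$ produced by pseudo-calibration for the planted slightly denser subgraph distribution described above. Verifying that $\Lambda$ meets the sufficient conditions of the machinery immediately yields $\Lambda \succeq 0$ with high probability, which is the conclusion of the theorem.

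The first step is to write $\Lambda$ in graph-matrix form. Let $\{x_v\}_{v \in [n]}$ be indicator variables for membership in the planted set and let $\chi_e(G) = (-1)^{G_e}$ be the Fourier characters of $G \sim G(n,1/2)$. Computing the joint moments under the planted distribution shows that each vertex placed into the planted set contributes a factor $k/n$ and each edge required to lie inside it contributes a factor $1-2p$. Truncating the Fourier support as dictated by pseudo-calibration, one then obtains a decomposition
\[
\Lambda \;=\; \sum_{\alpha} \lambda_\alpha \, M_\alpha, \qquad |\lambda_\alpha| \;\asymp\; \Big(\tfrac{k}{n}\Big)^{|V(\alpha)|}(2p-1)^{|E(\alpha)|},
\]
where $\alpha$ ranges over shapes compatible with the SoS index structure and $M_\alpha$ is the corresponding graph matrix.

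The second step is to check the two main hypotheses of the machinery: (i) after the standard middle-shape / approximator decomposition, the trivial shape contributes an identity-like PSD term whose spectrum dominates the combined contributions of all non-trivial shapes, and (ii) the pseudo-calibration truncation error is negligible. Using $k \le n^{1/2-\eps}$ and $2p-1 = n^{-C_p\eps}$ one obtains
\[
|\lambda_\alpha| \;\le\; n^{-(1/2+\eps)|V(\alpha)|}\cdot n^{-C_p\eps |E(\alpha)|}
\]
and combining this with the standard graph matrix norm bound $\|M_\alpha\| \le \widetilde O\bigl(\sqrt{n}^{\,|V(\alpha)|-|S_\alpha|}\bigr)$, where $S_\alpha$ is a minimum vertex separator between the two boundaries of $\alpha$, each extra vertex beyond the separator contributes a factor of $n^{-\eps}$ and each edge contributes $n^{-C_p\eps}$ to the norm of $\lambda_\alpha M_\alpha$. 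This per-shape decay is exactly what the charging/dominance inequality in the machinery requires.

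The main technical obstacle is the interplay between the SoS degree $d = n^{C\eps}$ and the pseudo-calibration truncation parameter. The truncation must be taken large enough to make the truncation-error term in condition (ii) negligible, but raising it introduces factors of $d^{O(|V(\alpha)|)} = n^{O(C\eps \, |V(\alpha)|)}$ into the graph matrix norm bounds, which must still be dominated by the per-shape gain $n^{-\Omega(\eps)}$ established above. Choosing $C$ as a sufficiently small constant depending on $C_p$ ensures that this gain continues to beat the slack on each shape, so the geometric sum over shapes converges and the dominance condition holds. Once this is arranged, \cref{simplifiedmaintheorem} (or \cref{generalmaintheorem}) concludes that $\Lambda \succeq 0$ with high probability, completing the proof.
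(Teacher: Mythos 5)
Your proposal is correct and follows essentially the same route as the paper: decompose $\Lambda$ into graph matrices with $\lambda_\alpha = (k/n)^{|V(\alpha)|}(2p-1)^{|E(\alpha)|}$ (note the minor sign slip — the per-edge factor is $2p-1 > 0$, not $1-2p$), observe that $k \le n^{1/2-\eps}$ gives a per-vertex gain of $n^{-\eps}$ and $2p-1 = n^{-C_p\eps}$ gives a per-edge gain of $n^{-C_p\eps}$, use these to verify the three charging conditions of \cref{simplifiedmaintheorem}, and then tune the truncation parameter $D_V = n^{C_V\eps}$ and SoS degree $D_{sos} = n^{C_{sos}\eps}$ so the polynomial-in-$D_V$ slack in the norm bounds is dominated and the truncation error is exponentially small. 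The one thing worth being explicit about in a full write-up, which your sketch leaves implicit, is that $H_{Id_U} = \frac{1}{|Aut(U)|}v_U v_U^T$ is rank one (so condition 1 is immediate) and that the per-edge decay is precisely what lets you handle nontrivial middle shapes $\tau$ with $U_\tau = V_\tau$ — this is the reason the construction works for planted slightly denser subgraph but not for planted clique.
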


\begin{corollary}\label{cor: plds_main}
Let $C_p > 0$. There exists a constant $C > 0$ such that for all sufficiently small constants $\eps > 0$, if $k \le n^{\frac{1}{2} - \eps}$ and $p =  \frac{1}{2} + \frac{n^{-C_p\eps}}{2}$, then with high probability, degree $n^{C\eps}$ Sum-of-Squares cannot certify that a random graph $G$ from $G(n, \frac{1}{2})$ does not have a subgraph of size $\approx k$ with edge density $\approx p$.
\end{corollary}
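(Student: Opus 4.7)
The plan is to apply the general machinery of \cref{simplifiedmaintheorem} (or its more refined form \cref{generalmaintheorem}) to the candidate moment matrix $\Lambda$ produced by pseudo-calibration against the planted distribution above. The candidate $\Lambda$ is computed explicitly in \cref{sec: plds_qual}; the key structural fact that I would exploit is that after expanding $\pE[\text{monomials in }x_S]$ in the $\pm 1$ Fourier basis of the edge indicators of $G$, $\Lambda$ decomposes naturally as a sum over shapes, i.e., graphs built out of index vertices, internal vertices, and labeled edges, together with a truncation that discards shapes of too large a total size. This is exactly the form on which the machinery operates, so the proof reduces to verifying the quantitative hypotheses of the machinery for the coefficients of these shapes.

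Next I would track how the parameters $k$ and $p$ enter each coefficient. Each internal vertex of a shape contributes a factor roughly $k/n$, coming from the probability that the associated vertex is chosen into the planted set $S$, while each labeled edge contributes a factor $2(p-\tfrac{1}{2}) = n^{-C_p \eps}$, coming from the Fourier coefficient of the rerandomized edge distribution. Thus a shape with $v$ internal vertices and $e$ labeled edges receives a coefficient that scales like $(k/n)^{v}\cdot n^{-C_p \eps\cdot e}$, which must be compared with the spectral norm bound on the associated graph matrix that the machinery supplies, a bound of the form $n^{(\text{vertex-separator size})/2}$ up to factors polynomial in the truncation degree $n^{C\eps}$. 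Choosing $\eps$ sufficiently small relative to $C_p$, and $C$ sufficiently large, makes the shape coefficient beat this norm by a factor polynomial in $n^{-\eps}$ for every non-trivial shape.

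The main obstacle will be the charging step, in which the machinery requires that the contribution of every non-trivial shape be absorbed by PSD mass assigned from a nearby identity-like shape. Because the planted signal here is considerably weaker than in planted clique (the density boost is only $n^{-C_p\eps}$ rather than a jump all the way to density $1$), the signal-to-noise ratio available to charge any given shape is smaller, but the shape coefficients also decay more slowly per edge, so after rescaling $\eps$ by a constant depending on $C_p$ the relevant ratios reduce to essentially the same ones handled in the planted clique analysis of \cite{BHKKMP16}. The bulk of the technical work is therefore to verify that this rescaling is compatible with the truncation degree $n^{C\eps}$ and with the separator-based bounds and intersection-term conditions listed in \cref{generalmaintheorem}; once those are checked, the desired conclusion $\Lambda \succeq 0$ with high probability follows by directly invoking the machinery, and \cref{cor: plds_main} is then immediate from the standard correspondence between PSD candidate moments and pseudoexpectations that fail to certify the absence of a dense subgraph.
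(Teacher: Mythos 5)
Your proposal follows essentially the same route as the paper: pseudo-calibrate to get $\Lambda$, decompose it over shapes, and verify the coefficient-matrix conditions of \cref{simplifiedmaintheorem}, with the corollary then immediate from the PSD moment matrix yielding a valid pseudoexpectation. One characterization worth correcting: you say the relevant ratios "reduce to essentially the same ones handled in the planted clique analysis of \cite{BHKKMP16}," but the paper's point (Remark~\ref{rmk: planted_clique_failure}) is the opposite — the machinery here \emph{cannot} handle planted clique because middle shapes $\tau$ with $U_\tau = V_\tau$ but $E(\tau)\neq\emptyset$ have no vertex decay and, at $p=1$, no edge decay either, so condition~2 fails; \cite{BHKKMP16} instead resorts to a separate clique-indicator grouping argument. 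The slightly denser subgraph problem is chosen precisely so that the per-edge factor $(2p-1) = n^{-C_p\epsilon}$ provides decay that obviates that special-case trick and makes the generic conditions go through (this is what \cref{lem: plds_charging} exploits). Aside from that, the ingredients you identify — $(k/n)$ per vertex, $(2p-1)$ per edge, the separator-based norm bounds, the charging step, and choosing $C$ small relative to $C_V$ to handle the truncation error of \cref{lem: plds_cond4} — are exactly what the paper verifies in Section~\ref{sec: plds_quant}.
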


\paragraph{Tensor PCA}
Let $k \ge 2$ be an integer. We use the following distributions.
\begin{restatable}{itemize}{TPCAdistributions}
    \item Random distribution: Sample $A$ from $\GN(0, I_{[n]^k})$.
	\item Planted distribution: Let $\lda,\Delta > 0$. Sample $u$ from $\{-\frac{1}{\sqrt{\Delta n}}, 0, \frac{1}{\sqrt{\Delta n}}\}^n$ where the values are taken with probabilites $\frac{\Delta}{2}, 1 - \Delta, \frac{\Delta}{2}$ respectively. Then sample $B$ from $\GN(0, I_{[n]^k})$. Set $A = B + \lda \tens{u}{k}$.
\end{restatable}

In \cref{sec: tpca_qual}, we compute the candidate moment matrix $\Lambda$ obtained by pseudo-calibration.

\begin{restatable}{theorem}{TPCAmain}\label{thm: tpca_main}
    Let $k \ge 2$ be an integer. There exist constants $C,C_{\Del} > 0$ such that for all sufficiently small constants $\eps > 0$, if $\lda \le n^{\frac{k}{4} - \eps}$ and $\Del = n^{-C_{\Del}\eps}$ then with high probability, the candidate moment matrix $\Lambda$ given by pseudo-calibration for degree $n^{C\eps}$ Sum-of-Squares is PSD.
\end{restatable}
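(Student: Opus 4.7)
The plan is to apply the general machinery (\cref{simplifiedmaintheorem}/\cref{generalmaintheorem}) to the candidate moment matrix $\Lda$ produced by pseudo-calibration on the tensor PCA planted distribution. My first step would be to make $\Lda$ explicit. Since $B$ is Gaussian and $u$ is a sparse Rademacher-like vector with density $\Del$, the truncated pseudo-moments $\pE[x^S]$ admit a natural Hermite expansion in the entries of $A$: one sums over multi-hyperedge collections $E$ on $[n]$, each hyperedge of size $k$ contributing a factor $\lda$ multiplied by the corresponding Hermite polynomial of $A$, and each ``planted'' vertex incidence contributing a factor depending on $\frac{1}{\sqrt{\Del n}}$. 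After truncating at degree $n^{C\eps}$, this yields an expression for $\Lda$ as a linear combination of graph matrices $M_\alpha$ indexed by $k$-uniform hypergraph shapes $\alpha$, with coefficients that are monomials in $\lda$, $\Del$, and $\frac{1}{\sqrt{n}}$.

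The second step is to verify the hypotheses of the main theorem on this decomposition. At a high level, the machinery demands (i) isolating the trivial shape, which supplies the leading positive contribution proportional to the identity, and (ii) exhibiting a charging scheme (the approximate factorization $L M_\alpha L^{\T} \approx Q_\alpha$ used in \cite{BHKKMP16}) that allows every non-trivial shape to be absorbed by the trivial one. I would catalogue the shapes by their hyperedge count $|E(\alpha)|$ and the size of their minimum vertex separator, then combine the standard norm bound $\norm{M_\alpha} \le \tilde O\bigl(n^{(|V(\alpha)| - |V_{\mathrm{sep}}(\alpha)|)/2}\bigr)$ with the pseudo-calibration coefficient $\lda^{|E(\alpha)|} \Del^{|V_{\mathrm{planted}}(\alpha)|/2} n^{-|V(\alpha)|/2}$ to obtain the effective weight of each shape.

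The main obstacle is exactly this balance. One must check that, for $\lda \le n^{k/4 - \eps}$ and $\Del = n^{-C_\Del \eps}$ with $C_\Del$ chosen sufficiently large, every non-trivial shape has effective weight at most $n^{-\Omega(\eps)}$ times the leading positive contribution, uniformly over the at most $n^{O(C\eps)}$ shapes of bounded size. This is where the $k/4$ threshold enters: the $k$-uniform hypergraph version of the min-vertex-separator bound gives that a shape with $|E|$ hyperedges contributes at most $n^{k|E|/2}$ to $\norm{M_\alpha}^2$ divided by the $n$-powers arising from the pseudo-calibration normalization, so $\lda^{|E|}$ is critical precisely at $\lda \sim n^{k/4}$. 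The role of $\Del$ is to supply an extra $n^{-\Omega(\eps) \cdot |V_{\mathrm{planted}}(\alpha)|}$ factor that kills the ``inconvenient'' shapes (those where the machinery would otherwise fail to recover PSDness), while still being large enough that the trivial shape itself is not damped. Picking $C$ small (to limit the number of shapes and control the truncation error) and $C_\Del$ large (to ensure every nontrivial shape picks up enough $n^{-C_\Del \eps}$ factors) gives the required inequality.

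Finally, with these norm bounds in hand, I would feed the decomposition of $\Lda$ into \cref{simplifiedmaintheorem} (or \cref{generalmaintheorem} if extra technical conditions on the coefficient functions are needed, analogous to the ``well-behavedness'' required in \cite{BHKKMP16}) to conclude $\Lda \succeq 0$ with high probability over the draw of $A$. The hardest part is not the abstract framework but the bookkeeping: verifying that the coefficient functions produced by pseudo-calibration on the tensor PCA planted distribution satisfy the qualitative hypotheses of the machinery, and then quantitatively dominating the sum over nontrivial shapes by the trivial one using the $\lda$--$\Del$--$\eps$ numerology above.
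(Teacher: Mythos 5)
Your high-level plan — expand $\Lambda$ in the Hermite/graph-matrix basis, track the exponent of $n$ coming from the norm bound against the pseudo-calibration coefficient, and feed the result into \cref{generalmaintheorem} — is the paper's approach, and the $k/4$ numerology you identify is exactly where the balance is struck. However, several concrete steps in your plan do not survive contact with the machinery. First, verifying condition~1 of \cref{generalmaintheorem}, i.e.\ $H_{Id_U} \succeq 0$, is \emph{not} a matter of ``isolating the trivial shape'': the pseudo-calibration coefficients vanish unless every vertex has even total degree, so $H_{Id_U}$ is block-structured rather than rank one, and the paper has to introduce parity vectors $\rho \in \calP_U$, split $H_{Id_U} = \sum_\rho H_{Id_U,\rho,\rho}$, and exhibit each block as $\frac{1}{|Aut(U)|} v_\rho v_\rho^T$. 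Without this parity decomposition the PSD-ness of $H_{Id_U}$ — and hence of the leading term you are trying to charge against — is not established. Second, your proposal says nothing about condition~3 of the main theorem ($c(\gamma)^2 N(\gamma)^2 B(\gamma)^2 H^{-\gamma,\gamma}_{Id_V} \preceq H'_\gamma$) nor about the truncation-error condition (\cref{lem: tpca_cond4}); in the paper these intersection-term and positivity conditions are the bulk of \cref{sec: tpca_quant} and require their own charging arguments (\cref{lem: tpca_charging2}, \cref{lem: tpca_charging3}, the positivity machinery of \cref{sec: showing_positivity}), not just the per-shape norm bookkeeping you describe for the $\tau$ condition.

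Your treatment of $C_\Del$ is also off. You write that one should pick $C_\Del$ ``large (to ensure every nontrivial shape picks up enough $n^{-C_\Del\eps}$ factors)'' while at the same time keeping $\Del$ ``large enough that the trivial shape itself is not damped'' — these two demands pull in opposite directions, and the resolution is the reverse of what you state. The factor $\Del^{|V(\al)|}$ does supply per-vertex decay, but the per-edge factor $\bigl(\frac{\lda}{(\Del n)^{k/2}}\bigr)^{l_e}$ contains $\Del^{-k l_e/2}$, which blows up as $C_\Del$ increases. The charging lemma (\cref{lem: tpca_charging}) needs $\Del^{-k/2} \le n^{\Theta(\eps)}$, which forces $C_\Del \lesssim 1/k$; this is why the paper chooses $C_\Del$ \emph{sufficiently small}. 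The correct picture is that $C_\Del$ must be positive (to get any per-vertex decay in conditions 3 and 4) but bounded above by roughly $1/k$ (so the edge factor is not spoiled), and the paper threads this needle by taking a small positive constant.
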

\begin{corollary}\label{cor: tpca_main}
    Let $k \ge 2$ be an integer. There exists a constant $C > 0$ such that for all sufficiently small constants $\eps > 0$, if $\lda \le n^{\frac{k}{4} - \eps}$, then with high probability, degree $n^{C\eps}$ Sum-of-Squares cannot certify that for a random tensor $A$ from $\GN(0, I_{[n]^k})$, there is no vector $u$ such that $\norm{u} \approx 1$ and $\ip{A}{\underbrace{u \otimes\ldots\otimes u}_{\text{k times}}} \approx \lambda$.
\end{corollary}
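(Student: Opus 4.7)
The plan is to apply the general PSDness machinery (\cref{simplifiedmaintheorem} and \cref{generalmaintheorem}) to the candidate moment matrix $\Lda$ computed via pseudo-calibration in \cref{sec: tpca_qual}. All of the work in the theorem is therefore in verifying the hypotheses of that machinery for the specific $\Lda$ arising from the tensor PCA planted distribution; once those hypotheses hold, PSDness follows immediately.

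The first step is to write $\Lda$ in its Hermite/Fourier basis with respect to the Gaussian entries of $A$ and identify it as a weighted sum of graph matrices indexed by shapes. In the tensor setting, each shape is a bipartite-style combinatorial object whose ``edges'' are $k$-tuples of index vertices (corresponding to entries $A_{i_1,\dots,i_k}$) and whose vertex set splits into a left index set, a right index set, and internal vertices that play the role of occurrences of coordinates of the planted spike $u$. Pseudo-calibration assigns to each shape $\alpha$ a coefficient that is an explicit monomial in $\lda$, $\Del$, and $n$: roughly, $\lda^{e(\alpha)} \cdot (\Del/n)^{v(\alpha)/2}$ where $e(\alpha)$ counts hyperedges and $v(\alpha)$ counts internal (spike-coordinate) vertices, together with a combinatorial constant coming from the Hermite expansion of the $\{-1/\sqrt{\Del n},0,1/\sqrt{\Del n}\}$ distribution.

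Next, I would check the decomposition conditions of \cref{generalmaintheorem}. The crucial quantity is the minimum vertex separator of each shape $\alpha$, which controls the norm of its graph matrix by roughly $n^{(|V(\alpha)| - |S_{\min}(\alpha)|)/2}$ up to polylogarithmic factors. Multiplying by the pseudo-calibrated coefficient, each shape contributes a charge of roughly $\lda^{e(\alpha)} \cdot \Del^{v(\alpha)/2} \cdot n^{-e(\alpha) k/2} \cdot n^{(|V(\alpha)| - |S_{\min}(\alpha)|)/2}$. Substituting $\lda \le n^{k/4 - \eps}$, this becomes at most $n^{-\eps \cdot e(\alpha)} \cdot \Del^{v(\alpha)/2} \cdot n^{\text{(balance term)}}$, where the balance term compares the number of internal vertices not in the separator to the number of hyperedges. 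Choosing $C_\Del$ large enough ensures that every shape of positive size contributes negatively in the exponent, so the total charge is geometrically decaying in the shape size, and truncating to shapes of size at most $n^{C\eps}$ leaves a negligible tail.

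The main obstacle I expect is controlling ``spider''-like shapes where many hyperedges concentrate on a small set of internal spike vertices, since these produce a lot of $\lda$ factors relative to the number of fresh vertices, and the naive norm bound is loose. Handling these requires the refined graph matrix norm bounds and the PSDness decomposition (splitting $\Lda$ as $\sum_\alpha \lda_\alpha M_\alpha M_\alpha^\T$ plus error terms), exactly in the style of \cite{BHKKMP16} as generalized by the main theorems. The $\Del^{v(\alpha)/2}$ factor from sparsity is what makes this work: each additional occurrence of a spike coordinate costs $\Del^{1/2}$, which suppresses heavy spiders provided $C_\Del$ is chosen appropriately relative to $k$ and the constant $C$ governing the truncation degree. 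Once the shape-by-shape bookkeeping is done and the hypotheses of \cref{generalmaintheorem} are verified, PSDness of $\Lda$ with high probability follows, establishing the theorem.
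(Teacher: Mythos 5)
Your sketch is aimed at the underlying theorem (\cref{thm: tpca_main}, PSDness of the pseudo-calibrated $\Lambda$) rather than at the corollary itself, which additionally needs the observation that the pseudo-calibrated $\pE$ has $\pE[1]=1\pm o(1)$, approximately satisfies the constraint $\norm{x}^2=1$, and attains $\pE[\ip{A}{x^{\otimes k}}]\approx\lambda$, so that a PSD moment matrix yields a feasible SoS solution certifying a value $\approx\lambda$; this last step is routine but should be named.

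The more serious issue is a direction error in the role of the slack parameter. You write that choosing $C_\Delta$ ``large enough'' produces the needed exponential decay, attributing the suppression of heavy shapes to a factor like $\Delta^{v(\alpha)/2}$. In the paper's coefficient (\cref{def: tpca_coeffs}) the per-hyperedge factor is $\lambda/(\Delta n)^{k/2}$, so $\Delta$ appears in the \emph{denominator} of the dominant term: making $\Delta$ smaller (i.e.\ taking $C_\Delta$ larger) makes each edge's contribution \emph{bigger} and ruins the bound. The paper therefore takes $C_\Delta>0$ \emph{sufficiently small} (see the remark in \cref{sec: tpca_qual} and the proof of \cref{lem: tpca_charging}), so that $\lambda/(\Delta n)^{k/2}\le n^{-k/4-\Omega(\eps)}$; this per-edge decay is what controls all shapes, including heavy ones with many hyperedges on few fresh vertices. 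The per-vertex factor $\Delta$ (not $\Delta^{1/2}$) plays a secondary role: it is invoked in \cref{lem: tpca_charging2} to get a $d^{B\eps}$ decay per vertex of $\gamma$ for the intersection-term analysis, and there any constant $C_\Delta>0$ suffices. So the true constraint on $C_\Delta$ is that it be positive but small (roughly $C_\Delta<1/k$), the opposite of what your sketch asserts, and as written the plan would fail at exactly the charging step you flag as the main obstacle.
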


\paragraph{Wishart model of Sparse PCA}
We use the following distributions.
\begin{restatable}{itemize}{SPCAdistributions}
    \item Random distribution: $v_1, \ldots, v_m$ are sampled from $\GN(0, I_d)$ and we take $S$ to be the $m \times d$ matrix with rows $v_1, \ldots, v_m$.
	\item Planted distribution: Sample $u$ from $\{-\frac{1}{\sqrt{k}}, 0, \frac{1}{\sqrt{k}}\}^d$ where the values are taken with probabilites $\frac{k}{2d}, 1 - \frac{k}{d}, \frac{k}{2d}$ respectively. Then sample $v_1, \ldots, v_m$ as follows. For each $i \in [m]$, with probability $\Delta$, sample $v_i$ from $\GN(0, I_d + \lda uu^T)$ and with probability $1 - \Delta$, sample $v_i$ from $\GN(0, I_d)$. Finally, take $S$ to be the $m \times d$ matrix with rows $v_1, \ldots, v_m$.
\end{restatable}

In \cref{sec: spca_qual}, we compute the candidate moment matrix $\Lambda$ obtained by pseudo-calibration.

\begin{restatable}{theorem}{SPCAmain}\label{thm: spca_main}
    There exists a constant $C > 0$ such that for all sufficiently small constants $\eps > 0$, if $m \le \frac{d^{1 - \eps}}{\lda^2}, m \le \frac{k^{2 - \eps}}{\lda^2}$, and there exists a constant $A$ such that $0 < A < \frac{1}{4}$, $d^{4A} \le k \le d^{1 - A\eps}$, and $\frac{\sqrt{\lda}}{\sqrt{k}} \le d^{-A\eps}$, then with high probability, the candidate moment matrix $\Lambda$ given by pseudo-calibration for degree $d^{C\eps}$ Sum-of-Squares is PSD.
\end{restatable}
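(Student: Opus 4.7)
The plan is to apply our general machinery (\cref{informalmaintheoremstatement}, \cref{simplifiedmaintheorem}, \cref{generalmaintheorem}) to the candidate moment matrix $\Lambda$ produced by pseudo-calibration on the Wishart sparse PCA planted distribution described above. The explicit form of $\Lambda$, including its decomposition into shape contributions, is derived in \cref{sec: spca_qual}; here I outline how the conditions of the main theorem are verified under the parameter constraints on $m, d, k, \lda$.

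First, I would express $\Lambda$ in the shape/graph-matrix decomposition. Since the input variables $S_{ij}$ are Gaussian rather than $\pm 1$, the natural Fourier basis is the Hermite basis. The relevant shapes are bipartite graphs with two vertex types, sample vertices from $[m]$ and coordinate vertices from $[d]$, with edges labeled by Hermite indices corresponding to the expansion of each $S_{ij}$. The pseudo-calibration coefficient for a shape $\alpha$ is determined by the low-degree part of the planted distribution and carries factors of $\lda$ (from the planted signal strength) together with powers of $k/d$ (from the $k$-sparse indicator of the planted vector). These coefficients tell us exactly how much spectral mass each shape contributes.

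Second, I would verify the two hypotheses of \cref{generalmaintheorem}. The ``trivial'' shape supplies the dominant identity part, and I must show that the spectral mass of all non-trivial shape contributions is smaller by a $d^{-\Omega(\eps)}$ factor. For each shape $\alpha$ I apply the standard graph-matrix norm bound, which scales with the numbers of sample vertices, coordinate vertices, and the sparsity of $\alpha$. Combined with the pseudo-calibration coefficient, the inequality $m \le d^{1-\eps}/\lda^2$ controls shapes dominated by coordinate-side edges, while $m \le k^{2-\eps}/\lda^2$ controls those dominated by edges that interact with the planted sparsity. The lower bound $k \ge d^{4A}$ prevents degeneracies in the Hermite coefficients when the sparsity is small, and $\sqrt{\lda/k} \le d^{-A\eps}$ guarantees that the planted covariance perturbation is small enough for the low-degree truncation used in pseudo-calibration to remain accurate.

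The main obstacle is the multi-parameter bookkeeping. Unlike Tensor PCA, whose shape analysis is controlled essentially by the single ratio $\lda/n^{k/4}$, here one must simultaneously balance $m, d, k$, and $\lda$, and the graph-matrix norms depend on a partition of each shape's vertex set by type. Because the graph matrices are ``rectangular'' with different dimensions on the two sides, each shape admits several admissible norm bounds and one must assign to it the partition that best matches the pseudo-calibration weight. After this careful case analysis, each non-trivial shape contributes a factor at most a small negative power of $d$ per additional edge or vertex beyond a certain baseline, so summing the resulting geometric series over all shapes of size at most the truncation radius yields the PSD inequality. This approach goes through as long as the SoS degree is at most $d^{C\eps}$ for a sufficiently small constant $C$, giving the stated bound.
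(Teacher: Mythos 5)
Your high-level plan matches the paper's route: express $\Lambda$ in the bipartite Hermite shape basis, compute the pseudo-calibration coefficient $\lambda_\alpha$, and verify the conditions of \cref{generalmaintheorem} by charging each shape's norm bound against its coefficient. However, the proposal glosses over the ingredient that makes sparse PCA genuinely harder than the other applications, and which the paper spends most of \cref{sec: spca_qual} establishing: because $\tau$ and $\gamma$ can have type-2 (sample) vertices in $U_\tau$, $V_\tau$ carrying nonzero Hermite degree, the matrices $H_{Id_U}$, $H_\tau$, and $H^{-\gamma,\gamma}_{Id_V}$ do \emph{not} factor as simple outer products $v v^T$ (as they do for planted dense subgraph or tensor PCA). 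The paper resolves this by (i) partitioning left shapes by parity vectors $\rho$, and (ii) introducing the discretized Gaussian $\mathcal{D}$ (\cref{fact: quadrature}) so that $H_{Id_U,\rho,\rho} = \frac{1}{|Aut(U)|}\EE_x[v_{\rho,x}v_{\rho,x}^T]$ and $H_\tau$ is an expectation of rank-one outer products weighted by monomials $\prod x_i^{\deg^\tau(i)}$. The PSD certificates for conditions 1 and 2 of \cref{generalmaintheorem} then go pointwise over the randomness of $x$, with the extra factor $R(\tau)$ (and $R(\gamma)$) absorbing the size of the discretized support $|x_i| \le C_{disc}\sqrt{D_E}$. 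Without this step the claim ``$H_{Id_U} \succeq 0$'' and the block-PSD condition are not established, and that is where most of the work lies. The proposal does not mention any of this.

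Two further points. First, \cref{generalmaintheorem} has \emph{three} PSD/operator conditions plus a separate truncation-error hypothesis; the paper verifies the latter via \cref{lem: spca_cond4}, which is decomposed into \cref{lem: spca_cond5} (a lower bound $\sum_U M^{fact}(H_{Id_U}) \succeq d^{-O(D_{sos}^2)}Id_{sym}$ via a path-weight argument over $\mathcal{I}_{mid}$) and \cref{lem: spca_cond6} (an upper bound $\frac{d^{O(D_{sos})}}{2^{D_V}}$ on the truncation error); your proposal doesn't acknowledge this part of the argument. Second, your reading of the constraint $k \ge d^{4A}$ is off: it is not about ``degeneracies in the Hermite coefficients.'' It enters as the inequality $\frac{1}{\sqrt{k}} \le d^{-2A}$ in Claim~\ref{claim: spca_decay2} (case $r_1 < r_2$), where it converts the leftover factor $\left(\frac{1}{\sqrt{k}}\right)^{\eps r_2}$ into the required $\left(\frac{1}{d^{A\eps}}\right)^{2r_2}$ decay; together with $m \le k^{2-\eps}/\lambda^2$, $m \le d^{1-\eps}/\lambda^2$, and $\frac{\sqrt{\lambda}}{\sqrt{k}}\le d^{-A\eps}$ it yields the uniform per-edge decay $d^{-A\eps}$ needed to beat the combinatorial factors $c(\tau)$, $B_{vertex}$, $B_{edge}(e)$, and $R(\tau)$.
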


\begin{corollary}\label{cor: spca_main}
    There exists a constant $C > 0$ such that for all sufficiently small constants $\eps > 0$, if $m \le \frac{d^{1 - \eps}}{\lda^2}, m \le \frac{k^{2 - \eps}}{\lda^2}$, and there exists a constant $A$ such that $0 < A < \frac{1}{4}$, $d^{4A} \le k \le d^{1 - A\eps}$, and $\frac{\sqrt{\lda}}{\sqrt{k}} \le d^{-A\eps}$, then with high probability, the degree $d^{C\eps}$ degree Sum-of-Squares cannot certify that for a random $m \times d$ matrix $S$ with Gaussian entries, there is no vector $u$ such that $u$ has $\approx k$ nonzero entries, $\norm{u} \approx 1$, and $\norm{Su}^2 \approx m + m{\Delta}\lambda$.
\end{corollary}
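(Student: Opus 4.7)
The plan is to derive Corollary~\ref{cor: spca_main} from Theorem~\ref{thm: spca_main} by packaging the pseudo-calibration construction into a valid low-degree pseudo-expectation $\pE$ whose objective value matches $m+m\Del\lda$. The sparse PCA problem is naturally encoded in variables $u_1,\ldots,u_d$ (together with auxiliary indicator variables $z_1,\ldots,z_d$ for the support of $u$) subject to polynomial constraints enforcing $u_i \in \{-1/\sqrt{k},0,1/\sqrt{k}\}$, $\sum_i z_i = k$, and $\sum_i u_i^2 = 1$. An SoS refutation of the bound $\|Su\|^2 \lesssim m + m\Del\lda$ in this encoding would preclude the existence of a degree $d^{C\eps}$ pseudo-expectation satisfying these constraints with the claimed objective value, so it suffices to exhibit one.

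First I would define $\pE$ via the pseudo-calibration recipe from \cref{sec: spca_qual} (see \cref{subsec: pseudocalibration}): for every monomial $u^{\alpha}$ of degree at most $d^{C\eps}$, let $\pE[u^{\alpha}]$ be the Hermite truncation of $\E_{\text{planted}}[u^{\alpha} \mid S]$ to $S$-degree at most some appropriate parameter. By construction the moment matrix of this $\pE$ is exactly the $\Lda$ appearing in Theorem~\ref{thm: spca_main}, and that theorem supplies the PSDness in the stated parameter regime $m \le d^{1-\eps}/\lda^2$, $m \le k^{2-\eps}/\lda^2$, $d^{4A} \le k \le d^{1-A\eps}$, $\sqrt{\lda}/\sqrt{k} \le d^{-A\eps}$.

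Next I would verify that the sparse PCA constraint ideal is (approximately) satisfied on $\pE$. Pseudo-calibration matches the planted distribution on every $S$-polynomial that stays within the truncation window, so each constraint $g$ satisfies $\pE[g \cdot p] = 0$ exactly when $g \cdot p$ fits inside the window and only incurs a small boundary error otherwise. The standard correction step from \cite{BHKKMP16} then projects $\pE$ onto the exact constraint ideal; because the correction is an additive perturbation of $\Lda$ of magnitude polynomially smaller than the spectral slack produced in the proof of Theorem~\ref{thm: spca_main} (which yields strict PSDness rather than mere PSDness), the corrected moment matrix remains PSD with high probability. Finally, since each planted row $v_i$ has covariance $\Del(I_d + \lda uu^T)+(1-\Del)I_d = I_d + \Del\lda uu^T$, we have $\E_{\text{planted}}[\|Su\|^2 \mid u] = m + m\Del\lda\|u\|^2$, and since $\|Su\|^2$ is degree $2$ in $S$ it lies safely inside the truncation window, so $\pE[\|Su\|^2] \approx m + m\Del\lda$ concentrates as required.

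The main obstacle is the PSDness step, which is exactly what Theorem~\ref{thm: spca_main} handles using the general machinery developed earlier in the paper; that is the technical heart of the argument and the reason the parameter regime takes the specific form it does. Beyond invoking the theorem, what remains for the corollary is genuinely routine: checking that the constraint polynomials and objective polynomial all have low enough degree in $S$ and $u$ to be faithfully captured by pseudo-calibration, bounding the boundary-truncation error for the standard ideal-projection correction, and concentrating $\pE[\|Su\|^2]$ around its planted mean $m + m\Del\lda$ via a scalar concentration argument on the degree-two quantity in $S$.
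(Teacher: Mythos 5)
Your high-level structure is right — invoke Theorem~\ref{thm: spca_main} for PSDness, then observe that pseudo-calibration automatically produces a moment matrix that (approximately) respects the sparse PCA constraints and places the objective near $m + m\Del\lda$, which is all that the paper uses. In particular, the step where you compute $\E_{\mathrm{planted}}[\|Su\|^2 \mid u] = m + m\Del\lda\|u\|^2$ and observe that this degree-two-in-$S$ quantity is faithfully captured by pseudo-calibration matches the paper's (implicit) argument, and your identification of the PSDness of $\Lambda$ as the technical heart is exactly right.

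However, there is a genuine gap in the middle of your proposal. You assert that ``the standard correction step from \cite{BHKKMP16} then projects $\pE$ onto the exact constraint ideal'' and that the remaining verification is ``genuinely routine.'' The paper explicitly disclaims this: see the Remark immediately after Corollary~\ref{cor: spca_main}, which states that the planted distributions ``only approximately satisfy constraints such as \dots having a unit vector $u$, and having $u$ be $k$-sparse,'' and that ``resolving this issue is a subtle but important open problem.'' The corollary is deliberately phrased with ``$\approx$'' (approximately $k$-sparse, norm $\approx 1$, objective $\approx m + m\Del\lda$) precisely because the pseudo-expectation produced by pseudo-calibration does not lie in the exact constraint ideal and the paper does not carry out a correction. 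Your argument that the correction ``is an additive perturbation of $\Lda$ of magnitude polynomially smaller than the spectral slack'' is not established anywhere; indeed the whole difficulty is that the natural corrections (e.g., forcing $\pE[\sum_i u_i^2 \cdot p] = \pE[p]$ for all low-degree $p$) interact with the graph-matrix decomposition in a way that breaks the $\sigma/\tau/\sigma'$ structure used by the machinery. So your proof proposal is trying to prove a strictly stronger statement than the corollary actually asserts, via a step the paper itself flags as unresolved. If you instead drop the ideal-projection step and observe that pseudo-calibration gives $\pE$ whose moments are, by design, low-degree truncations of the planted moments — so the constraints hold up to $1\pm o(1)$ multiplicative error and the objective is $m + m\Del\lda(1 \pm o(1))$ — you recover exactly what Corollary~\ref{cor: spca_main} claims, and the derivation from Theorem~\ref{thm: spca_main} really is immediate.
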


\begin{remark}
Note that our planted distributions only approximately satisfy constraints such as having a subgraph of size $k$, having a unit vector $u$, and having $u$ be $k$-sparse. While we would like to use planted distributions which satisfy such constraints exactly, these distributions bring additional technical difficulties. This same issue appeared in the SoS lower bounds for planted clique \cite{BHKKMP16}, which was recently resolved by Pang~\cite{Pang21}. Resolving this issue in general is a subtle but important open problem.
\end{remark}

\subsection{Relation to prior work on Planted Clique/Dense Subgraph, Tensor PCA, and Sparse PCA}\label{sec: prior_work}

\subsubsection{Planted Dense Subgraph}\label{sec: plds}

In the planted dense subgraph problem, we are given a random graph $G$ where a dense subgraph of size $k$ has been planted and we are asked to find this planted dense subgraph.
This is a natural generalization of the $k$-clique problem \cite{karp1972reducibility} and has been subject to a long line of work over the years (e.g. \cite{feige1997densest, feige2001dense, khot2006ruling, bhaskara2010detecting, bhaskara2012polynomial, braverman2017eth, manurangsi2017almost}).
In this work, we consider the following certification variant of planted dense subgraph.

\begin{quote}
\em{Given a random graph $G$ sampled from the \Erdos-\Renyi model $G(n, \frac{1}{2})$, certify an upper bound on the edge density of the densest subgraph on $k$ vertices.}
\end{quote}

For many different parameter regimes of the random and planted distributions (an example being planting $G_{k, q}$ in $G_{n, p}$ for constants $p < q$), and when $k = o(\sqrt{n})$, the hardness of the easier distinguishing version of planted dense subgraph problem has been posed as formal conjecture (often referred to as the PDS conjecture) before in the literature (see e.g., \cite{hajek2015computational, chen2014statistical, brennan2018reducibility, brennan2019universality}). This has also led to many reductions to other problems \cite{brennan2019optimal}, although it's not clear if these reductions can be made in the SoS framework without loss in the parameter dependence.

In our case, we consider the slightly planted denser subgraph version where for $k \le n^{\frac{1}{2} - \eps}$, we plant a subgraph of density $\frac{1}{2} + \frac{1}{n^{O(\eps)}}$, i.e. $p = \frac{1}{2}, q = \frac{1}{2} + \frac{1}{n^{O(\eps)}}$. This has been widely believed to require sub-exponential time. Our work provides strong evidence towards this by exhibiting unconditional lower bounds against the powerful SoS hierarchy, even if we consider $n^{O(\eps)}$ levels, which corresponds to $n^{n^{O(\eps)}}$ running time! We expect this to lead to this problem being used as a natural starting point for reductions to show sub-exponential time hardness for various problems.

Within the SoS literature, \cite{BHKKMP16} show that for $k \le n^{\frac{1}{2} - \eps}$ for a constant $\eps > 0$, the degree $o(\log n)$ Sum-of-Squares cannot distinguish between a fully random graph sampled from $G(n, \frac{1}{2})$ from a random graph which has a planted $k$-clique. This implies that degree $o(\log n)$ SoS cannot certify an edge density better than $1$ for the densest $k$-subgraph if $k \le n^{\frac{1}{2} - \eps}$.

In \cref{cor: plds_main}, we show that for $k \le n^{\frac{1}{2} - \eps}$ for a constant $\eps > 0$, degree $n^{\Omega(\eps)}$ SoS cannot certify an edge density better than $\frac{1}{2} + \frac{1}{n^{O(\eps)}}$. The degree of SoS in our setting, $n^{\Omega(\eps)}$ is vastly higher than the earlier known result which uses degree $o(\log n)$. To the best of our knowledge, this is the first result that proves such a high degree lower bound. 

We remark that when we take $k = n^{\frac{1}{2} - \eps}$,  the true edge density of the densest $k$-subgraph is $\frac{1}{2} + \frac{\sqrt{\log(n/k)}}{\sqrt{k}} + \littleoh(\frac{1}{\sqrt{k}}) \approx \frac{1}{2} + \frac{1}{n^{1/4 - \eps/2}}$ as was shown in \cite[Corollary 2]{gamarnik2019landscape} whereas, by \cref{cor: plds_main}, the SoS optimum is as large as $\frac{1}{2} + \frac{1}{n^{\eps}}$. This highlights a significant difference in the optimum value.

\subsubsection{Tensor PCA}

The Tensor Principal Component Analysis problem, originally introduced by \cite{richard2014statistical}, is a generalization of the PCA problem from machine learning to higher order tensors.
Tensor PCA is a remarkably useful technique to exploit higher order moments of the data.
It was originally applied in latent variable modeling \cite{anandkumar2014tensor, kivva2021learning, kivva2022identifiability, kivvaidentifiability, anandkumar2014analyzing} but it has now found applications in topic modeling, video processing, collaborative filtering,  community detection, etc. (see e.g. \cite{hsu2012spectral, anandkumar2014guaranteed, richard2014statistical, anandkumar2014tensor, anandkumar2014analyzing, duchenne2011tensor, li2010tensor}).
Formally, given an order $k$ tensor of the form $\lda u^{\otimes k} + B$ where $u \in \RR^n$ is a unit vector and $B \in \RR^{[n]^k}$ has independent Gaussian entries, we would like to recover $u$. Here, $\lda$ is known as the signal-to-noise ratio.

This can be equivalently considered to be the problem of optimizing a homogenous degree $k$ polynomial $f(x)$, with random Gaussian coefficients over the unit sphere $\norm{x} = 1$. In general, polynomial optimization over the unit sphere is a fundamental primitive with a lot of connections to other areas of optimization (e.g. \cite{frieze2008new, brubaker2009random,brandao2017quantum, barak2014rounding, bks15, bhattiprolu2017weak}). Tensor PCA is an average case version of the above problem and has been studied before in the literature \cite{richard2014statistical, HopSS15, tensorpca16, hop17}. In this work, we consider the certification version of this average case problem.

\begin{quote}
\em{For an integer $k \ge 2$, given a random tensor $A \in \RR^{[n]^k}$ with entries sampled independently from $\GN(0, 1)$, certify an upper bound on $\ip{A}{x^{\otimes k}}$ over unit vectors $x$.}
\end{quote}

In \cite{tensorpca16}, it was shown that $q \le n$ levels of SoS certifies an upper bound of $\frac{2^{O(k)} (n \cdot \text{polylog}(n))^{k/4}}{q^{k/4 - 1/2}}$ for the Tensor PCA problem. When $q = n^{\eps}$ for sufficiently small $\eps$, this gives an upper bound of $n^{\frac{k}{4} - O(\eps)}$. \cref{cor: tpca_main} shows that this is tight.

In \cite{hop17}, they state a theorem similar to \cref{cor: tpca_main} and observe that it can be proved by applying the techniques used to prove the SoS lower bounds for planted clique. However, they do not give an explicit proof. Also, while they consider the setting where the random distribution has entries from $\{-1, 1\}$, we work with the more natural setting where the distribution is $\GN(0, 1)$. We remark that our machinery can also easily recover their result with the entries being restricted to $\{-1, 1\}$.

When $k = 2$, the maximum value of $\ip{\tens{x}{k}}{A}$ over the unit sphere $\norm{x}^2 = 1$ is precisely the largest eigenvalue of $(A + A^T)/2$ which is $\Theta(\sqrt{n})$ with high probability. For any integer $k \ge 2$, the true maximum of $\ip{\tens{x}{k}}{A}$ over $\norm{x}^2 = 1$ is $O(\sqrt{n})$ with high probability \cite{tomioka2014}. In contrast, by \cref{cor: tpca_main}, the optimum value of the degree $n^{\eps}$ SoS is as large as $n^{\frac{k}{4} - O(\eps)}$. This exhibits an integrality gap of $n^{\frac{k}{4} - \frac{1}{2} - O(\eps)}$.

\subsubsection{Wishart model of Sparse PCA}

The Wishart model of Sparse PCA, also known as the Spiked Covariance model, was originally proposed by \cite{johnstone_lu2009}. In this problem, we observe $m$ vectors $v_1, \ldots, v_m \in \RR^d$ from the distribution $\GN(0, I_d + \lda uu^T)$ where $u$ is a $k$-sparse unit vector, and we would like to recover $u$. Here, the sparsity of a vector is the number of nonzero entries and $\lda$ is known as the signal-to-noise ratio.

Sparse PCA is a fundamental routine that has applications in a diverse range of fields such as medicine, economics, image and signal processing and finance (e.g. \cite{wang2012online, naikal2011informative, majumdar2009image, tan2014classification, chun2009expression, allen2011sparse}).
We remark that in many of these applications, to learn models with sparse structure, heuristics are often used, such as greedy algorithms (e.g. \cite{johnson2012high, liu2014forward, jalali2011learning, zhang2008adaptive}) and score-based algorithms (e.g. \cite{chickering2002optimal, nandy2018high, rajendran2021structure}) but Sparse PCA is a more principled framework.
It's known that vanilla PCA does not yield good estimators in high dimensional settings \cite{baik2005phase, paul2007asymptotics, johnstone_lu2009}. A large volume of work has gone into studying Sparse PCA and it's variants, both from an algorithmic perspective (e.g. \cite{amini_wainwright2008, ma2013sparse, krauthgamer2015, deshpande2016, wang2016statistical}) as well as from an inapproximability perspective (e.g. \cite{berthet2013complexity, ma_wigderson_15, diakonikolas2017statistical, hop17, brennan2019optimal}).

Given the decades of research on this problem and how fundamental it is for a multitude of applications and disciplines, understanding the computational threshold behavior of the Wishart model of Sparse PCA is an extremely important research topic in statistics. In particular, prior works have explored statistical query lower bounds, SDP lower bounds, lower bounds by reductions from widely believed conjectures, etc. On the other hand, there have only been two prior works on lower bounds against SoS, specifically only for degree $2$ and degree $4$ SoS, which can be attributed to the difficulty in proving such lower bounds. In this paper, we vastly strengthen these lower bounds and show almost-tight lower bounds for the SoS hierarchy of degree $d^{\eps}$ which corresponds to a running time of $d^{d^{O(\eps)}}$.

Between this work and prior works, we completely understand the parameter regimes where sparse PCA is easy or conjectured to be hard up to polylogarithmic factors. In \cref{fig: spca_thresholds1} and \cref{fig: spca_thresholds2}, we assign the different parameter regimes into the following categories.
\begin{itemize}
	\item Diagonal thresholding: In this regime, Diagonal thresholding \cite{johnstone_lu2009, amini_wainwright2008} recovers the sparse vector. Covariance thresholding \cite{krauthgamer2015, deshpande2016} and SoS \cite{sparse_pca_focs20} can also be used in this regime. Covariance thresholding has better dependence on logarithmic factors and SoS works in the presence of adversarial errors.
	\item Vanilla PCA: Vanilla PCA can recover the vector, i.e. we do not need to use the fact that the vector is sparse (see e.g. \cite{berthet2013, sparse_pca_focs20}).
	\item Spectral: An efficient spectral algorithm recovers the sparse vector (see e.g. \cite{sparse_pca_focs20}).
	\item Can test but not recover: A simple spectral algorithm distinguishes the planted distribution from the random distribution but it is information theoretically impossible to recover the sparse vector \cite[Appendix E]{sparse_pca_focs20}.
	\item Hard: A regime where it is conjectured to be hard to distinguish between the random and the planted distributions. We discuss this in more detail below.
\end{itemize}

In \cref{fig: spca_thresholds1} and \cref{fig: spca_thresholds2}, the regimes corresponding to Diagonal thresholding, Vanilla PCA and Spectral are dark green, while the regimes corresponding to Spectral* and Hard are light green and red respectively. The hard regime is the one studied in this work.

\begin{figure}[!ht]
    \centering
    \includegraphics[scale=.6, trim={0 0 0 0},clip]{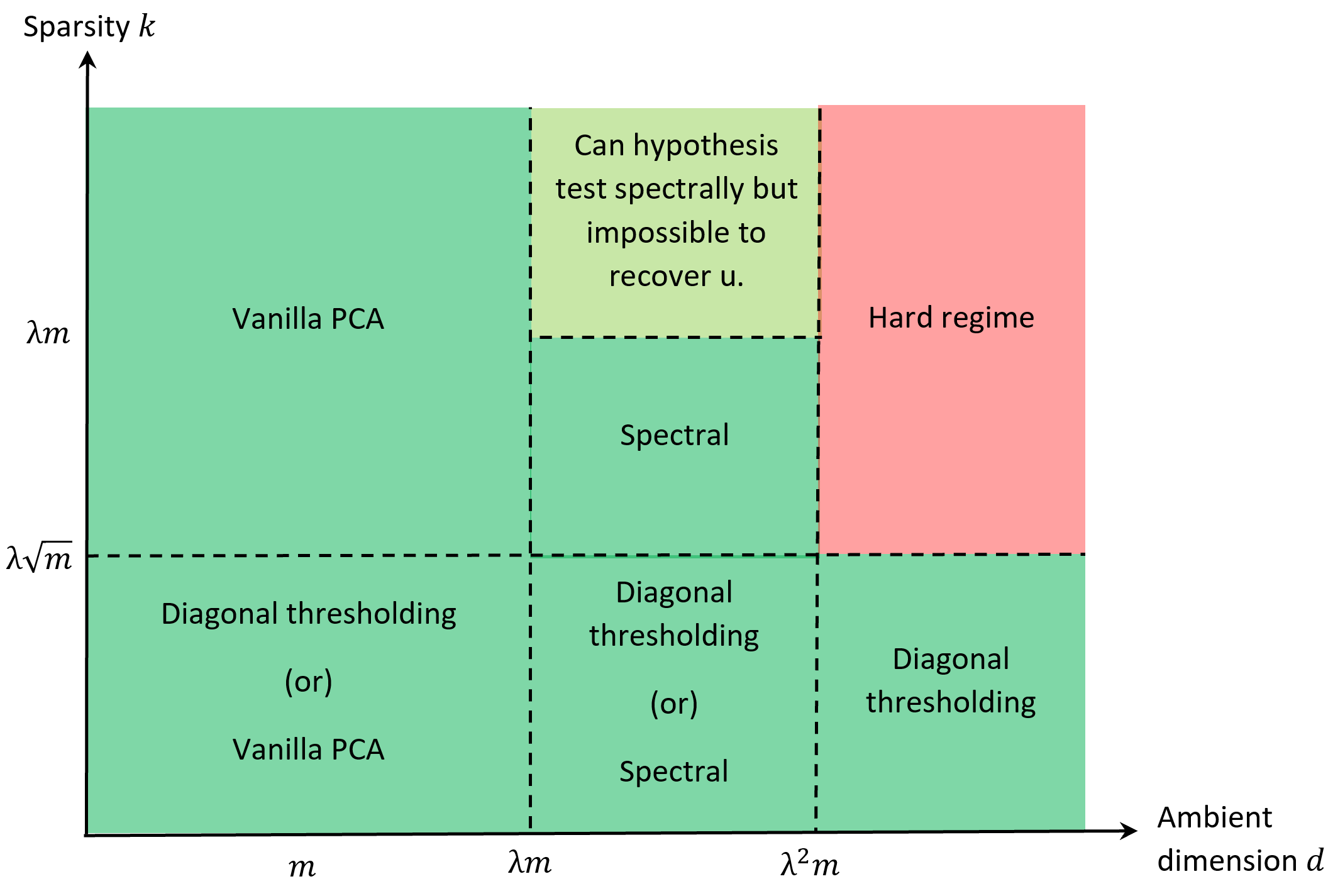}
    \caption{The computational barrier diagram when $\lda \ge 1$}
    \label{fig: spca_thresholds1}
\end{figure}

\begin{figure}[!ht]
    \centering
    \includegraphics[scale=.6, trim={0 0 0 0},clip]{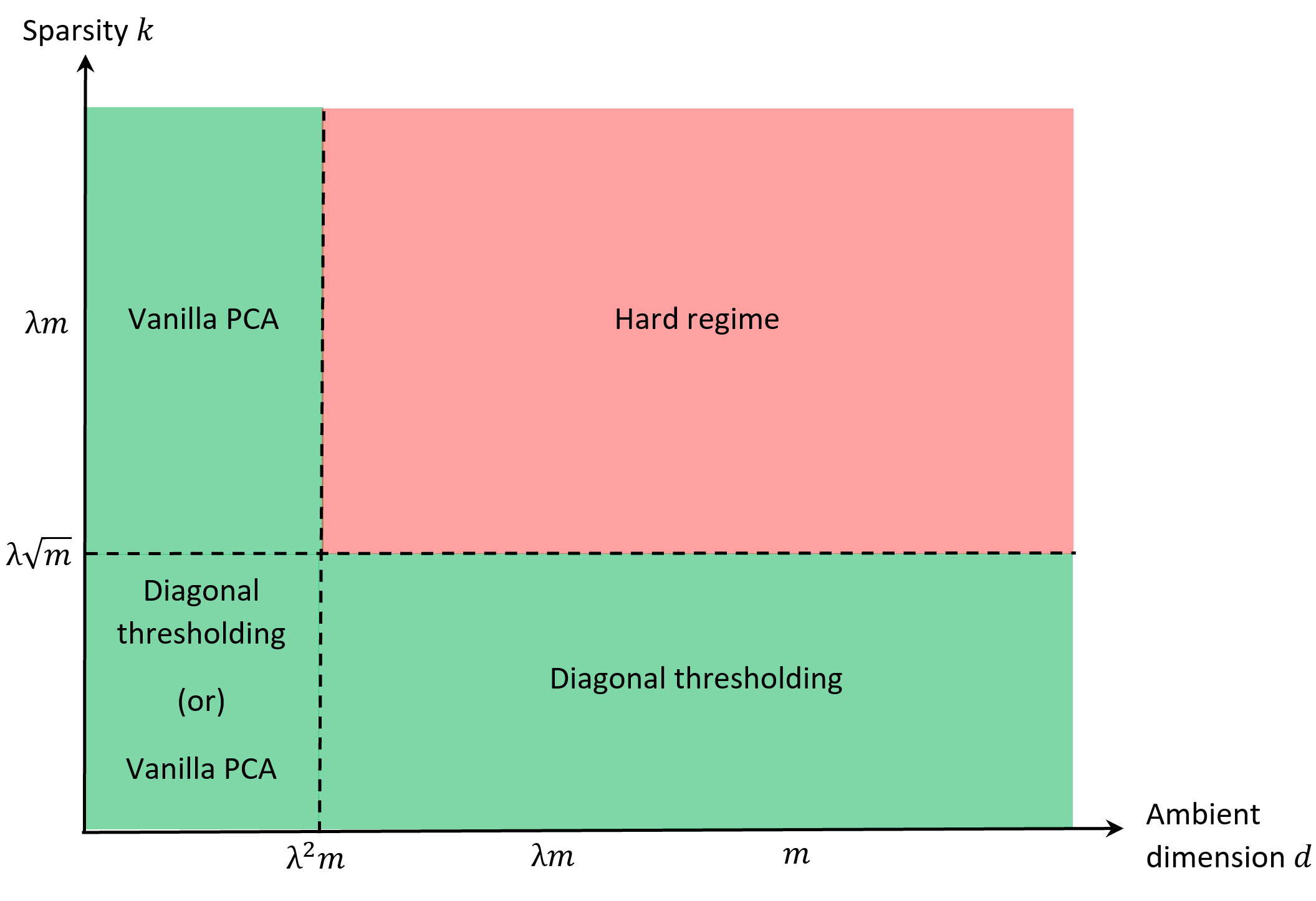}
    \caption{The computational barrier diagram when $\lda < 1$}
    \label{fig: spca_thresholds2}
\end{figure}

In the \textit{Hard} parameter regime where $m \ll \frac{k^2}{\lda^2}$ and $m \ll \frac{d}{\lda^2}$, degree $2$ and (weak) degree $4$ SoS lower bounds have been explored in prior works, while we handle degree $d^{O(\eps)}$. In particular, the works \cite{krauthgamer2015, berthet2013} obtain degree $2$ SoS lower bounds. \cite{ma_wigderson_15} obtain degree $4$ SoS lower bounds using an ad-hoc construction. It's not clear if their construction can be generalized for higher degrees. Moreover, the bounds they obtain are tight up to polylogarithmic factors when $\lda$ is a constant but are not tight when $\lda$ is not a constant, so we improve their bounds even in the degree $4$ case. We subsume all these earlier known results in this work with \cref{cor: spca_main}. This is a vast improvement over prior known sum of squares lower bounds and provides compelling evidence for the hardness of Sparse PCA in this parameter range.

The work \cite{hop17} considers the related but qualitatively different Wigner model of Sparse PCA and they state degree $d^{\eps}$ SoS lower bounds, without explicitly proving these bounds. The techniques in that work do not recover our results because the matrix formed by the random samples in the Wishart model is asymmetric, and handling it correctly is far from being a mere technicality. On the other hand, our machinery can recover tight results on the Wigner model as well, though we only analyze the Wishart model in this paper.

In \cite{sparse_pca_focs20}, they prove that if $m \le \frac{d}{\lda^2}$ and $m \le \left(\frac{k^2}{\lda^2}\right)^{1 - \Omega(\eps)}$, then degree $n^{\eps}$ polynomials cannot distinguish the random and planted distributions. 
\cref{cor: spca_main} says that under mildly stronger assumptions, degree $n^{\eps}$ Sum-of-Squares cannot distinguish the random and planted distributions, so we confirm that SoS is no more powerful than low degree polynomials in this setting.

There have also been direct reductions from planted clique to Sparse PCA \cite{brennan2019optimal}, and it's natural to ask if these reductions can obtain SoS lower bounds on Sparse PCA from the known SoS lower bounds on planted clique \cite{BHKKMP16}. To the best of our knowledge, no such reduction is known and constructing such a reduction would be challenging as it would have to be captured by SoS and avoid losing too much in the parameters. Still, it may well be possible to construct such a reduction.

\subsection{An overview of the machinery: All three results from a single main theorem}

Now that we have described our results,
we want to highlight that all three results are obtained via applications of one main theorem, which we call the machinery. \cref{thm: plds_main}, \cref{thm: tpca_main} and \cref{thm: spca_main} all essentially boil down to showing that a large moment matrix $\Lda$ is PSD. To show this, the machinery constructs certain \emph{coefficient matrices} from $\Lda$ and gives conditions on these coefficient matrices which are sufficient to guarantee that $\Lda$ is PSD with high probability. In this section, we give an informal sketch of the machinery and how it generalizes the techniques used to prove the SoS lower bound for planted clique \cite{BHKKMP16}. We also motivate some of the conditions that arise in the machinery.
\subsubsection{Shapes and graph matrices}
Before we can describe how our machinery works, we need to describe shapes and graph matrices, which were originally introduced by \cite{BHKKMP16, medarametla2016bounds} and later generalized in \cite{AMP20}. Both the planted clique analysis and our analysis use shapes and graph matrices.


Shapes $\al$ are graphs that contain extra information about the vertices. Corresponding to each shape $\al$, there is a matrix-valued function (i.e. a matrix whose entries depend on the input) $M_{\al}$ that we call a graph matrix. Graph matrices are analogous to a Fourier basis, but for matrix-valued functions that exhibit a certain kind of symmetry. In our setting, $\Lda$ will be such a matrix-valued function, so we can decompose $\Lda$ as a linear combination of graph matrices.

Shapes and graph matrices have several properties which make them very useful to work with. First, $\norm{M_{\al}}$ can be bounded with high probability in terms of simple combinatorial properties of the shape $\al$. Second, if two shapes $\alpha$ and $\beta$ match up in a certain way, we can combine them to form a larger shape $\alpha \circ \beta$. We call this operation shape composition. Third, each shape $\al$ has a canonical decomposition into three shapes, the left, middle and right parts of $\al$, which we call $\sigma$, $\tau$, and ${\sigma'}^T$. For this canonical decomposition, we have that $\alpha = \sigma \circ \tau \circ {\sigma'}^T$ and $M_{\alpha} \approx M_{\sigma}M_{\tau}M_{{\sigma'}^T}$ \footnote{Actually, due to a technical issue related to automorphism groups, this equation is off by a multiplicative constant. For details, see Lemma \ref{lm:morthsimplereexpression}.}. This decomposition turns out to be crucial for both the planted clique analysis and our analysis.
\subsubsection{Summary of the SoS lower bound for planted clique and the machinery}
We now give a brief summary of the techniques for the SoS lower bound for planted clique and for our machinery. We elaborate on these steps in \cref{ideadescriptionsubsection}, we formally describe these steps in \cref{sec: informal_statement}, and we carry out these steps in full generality in \cref{sec: technical_def_and_main_theorem} and \cref{sec: proof_of_main}.





For planted clique, the SoS lower bound analysis works as follows
\begin{enumerate}
	\item Using the technique of pseudo-calibration, construct a candidate moment matrix $\Lda$.
	\item Decompose the moment matrix $\Lda$ as a linear combination $\Lda = \sum_{\text{shapes } \alpha}{\lambda_{\alpha}M_{\alpha}}$ of graph matrices $M_{\alpha}$.
	\item For each shape $\alpha$, decompose $\alpha$ into a left part $\sigma$, a middle part $\tau$, and a right part ${\sigma'}^T$. 
	We then have that $M_{\alpha} \approx M_{\sigma}M_{\tau}M_{{\sigma'}^T}$.
	\item Using the approximate decompositions $M_{\alpha} \approx M_{\sigma}M_{\tau}M_{{\sigma'}^T}$, give an approximate decomposition $\Lda \approx LQL^T$ of $M$ where $Q \succeq 0$ with high probability.
	\item Show that with high probability, $\Lda = LQL^T - (LQL^T - M) \succeq 0$ by carefully analyzing the difference $LQL^T - M$ using similar techniques.
\end{enumerate}

For our machinery, we use a similar framework. The key innovation of our machinery is that we introduce coefficient matrices (step 4) and carry out the analysis in terms of these coefficient matrices.
\begin{enumerate}
	\item Construct a candidate moment matrix $\Lda$. This can be done either using pseudo-calibration or in a more ad-hoc manner.
	\item Decompose the moment matrix $\Lda$ as a linear combination $\Lda = \sum_{\text{shapes } \alpha}{\lambda_{\alpha}M_{\alpha}}$ of graph matrices $M_{\alpha}$.
	\item For each shape $\alpha$, decompose $\alpha$ into a left part $\sigma$, a middle part $\tau$, and a right part ${\sigma'}^T$.
	\item Based on the coefficients $\lambda_{\alpha}$ and the decompositions of the shapes $\alpha$ into left, middle, and right parts, construct coefficient matrices $H_{Id_U}$ and $H_{\tau}$.
	\item Based on the coefficient matrices $H_{Id_U}$ and $H_{\tau}$, obtain an approximate PSD decomposition of $\Lda$.
	\item Show that the error terms (which we call intersection terms) can be bounded by the approximate PSD decomposition of $\Lda$.
\end{enumerate}
We show that this analysis will succeed as long as three conditions on the coefficient matrices are satisfied (see \cref{sec: informal_statement} for a qualitative statement of these conditions and \cref{sec: technical_def_and_main_theorem} for the precise statement of these conditions). Thus, in order to use our machinery to prove sum of squares lower bounds, it is sufficient to do the following.
\begin{enumerate}
    \item Construct a candidate moment matrix $\Lda$.
    \item Decompose the moment matrix $\Lda$ as a linear combination $\Lda = \sum_{\text{shapes } \alpha}{\lambda_{\alpha}M_{\alpha}}$ of graph matrices $M_{\alpha}$ (akin to Fourier decomposition) and find the corresponding coefficient matrices.
    \item Verify the required conditions on the coefficient matrices.
\end{enumerate}

\subsubsection{A sketch of the intuition behind the machinery conditions}\label{ideadescriptionsubsection}

\paragraph{Giving an approximate PSD factorization}
As discussed above, we decompose the moment matrix $\Lda$ as a linear combination $\Lda = \sum_{\text{shapes } \alpha}{\lambda_{\alpha}M_{\alpha}}$ of graph matrices $M_{\alpha}$. We then decompose each $\alpha$ into left, middle, and right parts $\sigma$, $\tau$, and ${\sigma'}^T$. We now have that
\[
\Lda = \sum_{\alpha = \sigma \circ \tau \circ {\sigma'}^T}{\lambda_{\sigma \circ \tau \circ {\sigma'}^T}M_{\sigma \circ \tau \circ {\sigma'}^T}}
\]

We first consider the terms $\sum_{\sig, \sig'} \lda_{\sig \circ \sig'^T}M_{\sig \circ \sig'^T} \approx \sum_{\sig, \sig'} \lda_{\sig \circ \sig'^T}M_{\sig} M_{\sig'^T}$ where $\tau$ corresponds to an identity matrix and can be ignored.

If there existed real numbers $v_{\sig}$ for all left shapes $\sig$ such that $\lda_{\sig \circ \sig'^T} = v_{\sig}v_{\sig'}$, then we would have
\[
\sum_{\sig, \sig'} \lda_{\sig \circ \sig'^T}M_{\sig} M_{\sig'^T} = \sum_{\sig, \sig'} v_{\sig}v_{\sig'}M_{\sig} M_{\sig'^T} = (\sum_{\sig} v_{\sig}M_{\sig})(\sum_{\sig} v_{\sig}M_{\sig})^T \succeq 0
\]
which shows that the contribution from these terms is positive semidefinite. In fact, this turns out to be the case for the planted clique analysis. However, this may not hold in general. To handle this, we note that the existence of $v_{\sig}$ can be relaxed as follows: Let $H$ be the matrix with rows and columns indexed by left shapes $\sig$ such that $H(\sig, \sig') = \lda_{\sig \circ \sig'^T}$. Up to scaling, $H$ will be one of our coefficient matrices. If $H$ is positive semidefinite then the contribution from these terms will also be positive semidefinite. In fact, this will be
the PSD mass condition of our main theorem, the qualitative version of which can be found in \cref{informalmaintheoremstatement}.

\paragraph{Handling terms with a non-trivial middle part}

Unfortunately, we also have terms $\lda_{\sig \circ \tau \circ \sig'^T}M_{\sig \circ \tau \circ \sig'^T}$ where $\tau$ is non-trivial. Our strategy will be to charge these terms to other terms.

For the sake of simplicity, we will describe how to handle one term. A starting point is the following inequality. For a left shape $\sig$, a middle shape $\tau$, a right shape $\sig'^T$, and real numbers $a, b$,
\[(a M_{\sig} - bM_{\sig'}M_{\tau^T})(a M_{\sig} - bM_{\sig'}M_{\tau^T})^T \succeq 0\]
which rearranges to
\begin{align*}
ab(M_{\sig}M_{\tau}M_{\sig'^T} + (M_{\sig}M_{\tau}M_{\sig'^T})^T) &\preceq a^2M_{\sig}M_{\sig^T} + b^2M_{\sig'}M_{\tau^T}M_{\tau}M_{\sig'^T}\\
&\preceq a^2M_{\sig}M_{\sig^T} + b^2\norm{M_{\tau}}^2M_{\sig'}M_{\sig'^T}
\end{align*}

If $\lda_{\sig \circ \tau \circ \sig'^T}^2\norm{M_{\tau}}^2 \le \lda_{\sig \circ \sig^T}\lda_{\sig' \circ \sig'^T}$, then we can choose $a, b$ such that $a^2 \le \lda_{\sig \circ \sig^T}, b^2 \norm{M_{\tau}}^2 \le \lda_{\sig' \circ \sig'^T}$ and $ab = \lda_{\sig \circ \tau \circ \sig'^T}$. This will approximately imply
\[\lda_{\sig \circ \tau \circ \sig'^T}(M_{\sig \circ \tau \circ \sig'^T} + M_{\sig \circ \tau \circ \sig'^T}^T) \preceq \lda_{\sig \circ \sig^T}M_{\sig \circ \sig^T} + \lda_{\sig' \circ \sig'^T}M_{\sig' \circ \sig'^T}\]
which will give us a way to charge terms with a nontrivial middle part against terms with a trivial middle part.

While we could try to apply this inequality term by term, it is not strong enough to give us our results. Instead, we generalize this inequality to work with the entire set of shapes $\sig, \sig'$ for a fixed $\tau$. This will lead us to
the middle shape bounds condition of our main theorem, the qualitative version of which can be found in \cref{informalmaintheoremstatement}.

\paragraph{Handing intersection terms}

There's one important technicality in the above heuristic calculations. Whenever we decompose $\alpha$ into left, middle, and right parts $\sigma$, $\tau$, and ${\sigma'}^T$, $M_{\sigma}M_{\tau}M_{{\sigma'}^T}$ is only approximately equal to $M_{\alpha} = M_{\sigma \circ \tau \circ {\sigma'}^T}$. All the other error terms have to be carefully handled in our analysis. We call these terms intersection terms.

These intersection terms themselves turn out to be graph matrices and our strategy is to now recursively decompose them into $\sig_2 \circ \tau_2 \circ \sig_2'^T$ and apply the previous ideas. To do this methodically, we employ several ideas such as the notion of intersection patterns and the generalized intersection tradeoff lemma (see \cref{sec: proof_of_main}). Properly handling the intersection terms is one of the most technically intensive parts of our work. 
This analysis leads us to the intersection term bounds condition of our main theorem, the qualitative version of which can be found in \cref{informalmaintheoremstatement}.

\paragraph{Applying the machinery}

To apply the machinery to our problems of interest, we verify the spectral conditions that our coefficients should satisfy and then we can use our main theorem. The planted slightly denser subgraph application is straightforward and will serve as a good warmup to understand our machinery. In the applications to Tensor PCA and Sparse PCA, the shapes corresponding to the graph matrices with nonzero coefficients have nice structural properties that will be crucial for our analysis. We exploit this structure and use novel charging arguments to verify the conditions of our machinery.

\subsection{Comparison to Other Sum-of-Squares Lower Bounds on Certification Problems}
\cite{BHKKMP16} proved sum of squares lower bounds for the planted clique problem. Our machinery vastly generalizes the techniques of their paper. However, for a specific technical reason, our machinery actually doesn't recover the sum of squares lower bound for planted clique (See \cref{rmk: planted_clique_failure}). That said, this is because we have attempted to keep our framework as general as possible so that it is applicable to other problems, at the cost of losing a specific technicality that's needed for planted clique in particular.

\cite{hop17} remarked that the techniques used in \cite{BHKKMP16} can be used to give Sum-of-Squares lower bounds for $\pm{1}$ variants of tensor PCA and sparse PCA, though this is not made explicit. In this paper, we use our machinery to make these lower bounds explicit. We also handle the Wishart model of sparse PCA, which is more natural and significantly harder to prove lower bounds for. In particular, the bounds we prove do not follow solely from the techniques used in prior works.

\cite{KothariMOW17} proved that for random constraint satisfaction problems (CSPs) where the predicate has a balanced pairwise independent distribution of solutions, with high probability, degree $\Omega(n)$ SoS is required to certify that these CSPs do not have a solution. While they don't state it in this manner, the pseudo-expectation values used by \cite{KothariMOW17} can also be derived using pseudo-calibration \cite{rajendran2018combinatorial, brown2020extended}. That said, their analysis for showing that the moment matrix is PSD is very different. It is an interesting question whether or not it is possible to unify these analyses.

\cite{MRX20} showed that it's possible to lift degree $2$ SoS solutions to degree $4$ SoS solutions under suitable conditions, and used it to obtain degree $4$ SoS lower bounds for average case $d$-regular Max-Cut and the Sherrington Kirkpatrick problem. Their construction is inspired by pseudo-calibration and their analysis also goes via graph matrices.

In a joint work with others~\cite{sklowerbounds}, we proved degree $n^{\eps}$ SoS lower bounds for the Sherrington-Kirkpatrick problem via an intermediate problem which we called Planted Affine Planes. In that work, the construction and analysis also goes via pseudo-calibration and graph matrices, but since the constructed moment matrix had a nontrivial nullspace,
we had to do a certain kind of preprocessing to handle this nullspace. After this preprocessing, the moment matrix was dominated by its expected value, so the graph matrix norm bounds \cite{AMP20} were sufficient to prove our result and we did not need to use this machinery, which would have been overkill.

\cite{kunisky2020} recently proposed a technique to lift degree $2$ SoS lower bounds to higher levels and applied it to construct degree $6$ lower bounds for the Sherrington-Kirkpatrick problem. Interestingly, their construction does not go via pseudo-calibration.

In a joint work with other collaborators~\cite{jones2021sum}, we recently obtained SoS lower bounds for the problem of independent set on sparse \Erdos-\Renyi random graphs. To do this, we introduced a variant of psuedo-calibration which we called connected truncation. In that work, the analysis was similar to the planted clique analysis \cite{BHKKMP16} and the analysis in this machinery. Indeed, several techniques from the analysis of our machinery were used, such as shifting to shapes by incurring factors of sizes of automorphism groups and bounding sums by defining the $c(\al)$ function. However, we proved our main lower bound directly in that work because we were trying to push the limits of our machinery and in general, the limits of known SoS lower bounds. In particular, we were working with sparse inputs where graph  matrices behave differently (\cite{jones2021sum, RT20}). While this can be accommodated by our machinery by modifying certain parameters, we had other technical barriers such as missing edge indicators that were problem specific and hence, not handled here. Finally, while we have a slack of $\poly(d \log n)$ in our bounds in the machinery, in \cite{jones2021sum} we obtained a tighter bound in \cite{jones2021sum} which is tight up to a factor of $O(\poly(d) \log n)$ (where $d$ is the SoS degree).

\subsection{Related Algorithmic Techniques}\label{subsec: related_techniques}

\paragraph{Low degree polynomials} Consider a problem where the input is sampled from one of two distributions and we would like to identify which distribution it was sampled from. Usually, one distribution is the completely random distribution while the other is a planted distribution that contains a given structure not present in the random distribution. In this setting, a closely related method is to use low-degree polynomials to try and distinguish the two distributions. More precisely, if there is a low-degree polynomial such that its expected value on the random distribution is very different than its expected value on the planted distribution, this distinguishes the two distributions.
Recently, this method has been shown to be an excellent heuristic, as it recovers the conjectured hardness thresholds for several problems and is considerably easier to analyze \cite{hop17, hop18, kunisky19notes}.

Under mild conditions, the SoS hierarchy is at least as powerful as low degree polynomials \cite{hop17}. However, it is unknown whether low degree polynomials generally have the same power as the SoS hierarchy or if there are situations where the SoS hierarchy is significantly more powerful. The low-degree conjecture \cite{hop17, hop18} says that if there is a low-degree polynomial lower bound for a problem which is sufficiently symmetric then if we consider a noisy version of the problem, no polynomial time algorithm can solve this problem. In particular, low-degree sum of squares cannot solve this problem.

So far, the low-degree conjecture has stood up well. \cite{holmgren2020counterexamples} tested the low-degree conjecture by exploring problems where there is a low-degree polynomial lower bound yet the problem can be solved in polynomial time. They found that while such problems exist, so far all of these problems either rely on an asymmetric structure such as an error correcting code or can be made harder by adding some noise \footnote{As shown by \cite{holmgren2020counterexamples}, we have to be careful about what kind of noise we add. In particular, replacing a fraction of the input with random noise is insufficient. Instead, we should add some noise to every entry of the input.}. Thus, a variant of the low-degree conjecture may well be true. It is
a fascinating open problem to obtain SoS lower bounds directly from low-degree polynomial lower bounds.

In this paper, we confirm that for tensor PCA and the Wishart model of sparse PCA with slightly adjusted planted distributions, the SoS hierarchy is no more powerful than low-degree polynomials. Our machinery is also a potential approach for obtaining SoS lower bounds from low-degree polynomial lower bounds. In particular, if we could show that low-degree polynomial bounds for a problem imply that the conditions required by our machinery are satisfied for a noisy version of the problem, this would give the desired SoS lower bounds.

\paragraph{The Statistical Query Model}
The statistical query model is another popular restricted class of algorithms introduced by \cite{kearns1998efficient}. In this model, for an underlying distribution, we can access it by querying expected value of functions of the distribution. Concretely, for a distribution $D$ on $\RR^n$, we have access to it via an oracle that given as query a function $f: \RR^n \to [-1, 1]$ returns $\EE_{x \sim D} f(x)$ up to some additive adversarial error. SQ algorithms capture a broad class of algorithms in statistics and machine learning and have been used to study information-computation tradeoffs. There has also been significant work trying to understand the limits of SQ algorithms (e.g. \cite{feldman2017statistical, feldman2018complexity, diakonikolas2017statistical}).

An important distinction to note is that SQ algorithms do not take into account the complexity of the oracle, so SQ algorithms may have more power than SoS. On the other hand, SoS can work directly with the input samples while SQ algorithms can only query expected values of functions on the input, which may give SoS more power than SQ algorithms. Thus, in general, SQ algorithms and SoS are incomparable. That said, the recent work \cite{brennan2020statistical} showed that low-degree polynomials and statistical query algorithms have equivalent power under mild conditions. Under these conditions, SoS lower bounds give strictly stronger evidence of hardness.

\subsection{Organization of the paper}

In this work, we occasionally distinguish between the qualitative and quantitative versions of theorem statements. Qualitative theorem statements capture the essence of the inequalities we prove, and serve to illustrate the main forms of the bounds we desire, which are helpful to build intuition. Quantitative theorems on the other hand build on their qualitative counterparts by stating the precise bounds that are needed.

The remainder of this paper is organized as follows. In \cref{sec: prelim}, we give some preliminaries. In particular, we describe the Sum-of-Squares hierarchy and present a brief overview of the machinery and some proof techniques that we use. In \cref{sec: informal_statement}, we present the informal statement of the main theorem. In \cref{sec: plds_qual}, \cref{sec: tpca_qual} and \cref{sec: spca_qual}, we qualitatively verify the conditions of the machinery for planted slightly denser subgraph, tensor PCA, and sparse PCA respectively. While these sections only verify the qualitative conditions, the results in these sections are precise and will be reused in \cref{sec: plds_quant}, \cref{sec: tpca_quant} and \cref{sec: spca_quant} to fully verify the conditions of the machinery.

In \cref{sec: technical_def_and_main_theorem}, we introduce the formal definitions and state the main theorem in full generality. We recall the definitions in the simpler case and also show how it generalizes (denoted by an asterisk *). We leave the exposition choice to the reader. This section also contains many examples to illustrate the definitions.
In \cref{sec: proof_of_main}, we prove the main theorem while abstracting out the choice of several functions. In \cref{sec: choosing_funcs}, we choose these functions so that that they satisfy the conditions needed for our main theorem. In Section \ref{sec: showing_positivity}, we give a general strategy to bound truncation error. In \cref{sec: plds_quant}, \cref{sec: tpca_quant} and \cref{sec: spca_quant}, we prove \cref{thm: plds_main}, \cref{thm: tpca_main} and \cref{thm: spca_main} respectively. We conclude in \cref{sec: conclusion}.

	\section{Preliminaries}\label{sec: prelim}

	\subsection{The Sum of Squares Hierarchy}\label{subsec: sos}

The SoS hierarchy is a powerful class of algorithms parameterized by it's degree. As we increase the degree, we get progressively stronger algorithms (with longer running times). It's been shown formally to obtain the state-of-the art guarantees for many problems both in the worst case and the average case setting.
For constant degree SoS, the hierarchy can be optimized in polynomial time\footnote{There is a caveat, see \cite{o2017sos}}. In general, for degree-$d$ SoS, we can solve it in $n^{O(d)}$ time. Most of our applications in this paper focus on showing hardness for the SoS hierarchy when the degree is $n^{\eps}$, which corresponds to a subexponential running time.

We now formally describe the sum of squares hierarchy.

\begin{definition}[Pseudo-expectation values]\label{def: pseudoexpectation}
Given polynomial constraints $g_1 = 0$,\ldots,$g_m = 0$, degree $d$ pseudo-expectation values are a linear map $\pE$ from polynomials of degree at most $d$ to $\mathbb{R}$ satisfying the following conditions:
  \begin{enumerate}
    \item $\pE[1] = 1$, \label{pe:normalized}
    \item $\pE[f \cdot g_i] = 0$ for every $i \in [m]$ and polynomial $f$ such that $\deg(f \cdot g_i) \leq d$. \label{pe:feasible}
    \item $\pE[f^2] \geq 0$ for every polynomial $f$ such that $\deg(f^2) \le d$. \label{pe:psdness}
  \end{enumerate}
\end{definition}
The intuition behind pseudo-expectation values is that the conditions on the pseudo-expectation values are conditions that would be satisfied by any actual expected values over a distribution of solutions, so optimizing over pseudo-expectation values gives a relaxation of the problem. Moreover, the conditions on pseudo-expectation values can be captured by a semidefinite program.
In particular, \cref{pe:psdness} in \cref{def: pseudoexpectation} can be reexpressed in terms of a matrix called the moment matrix.

\begin{definition}[Moment Matrix of $\pE$]
Given degree $d$ pseudo-expectation values $\pE$, define the associated moment matrix $\Lda$ to be a matrix with rows and columns indexed by monomials $p$ and $q$ such that the entry corresponding to row $p$ and column $q$ is
  \[
  \Lda[p, q] \defeq \pE\left[pq\right].
  \]
\end{definition}

It is easy to verify that \cref{pe:psdness} in~\cref{def: pseudoexpectation} equivalent to $\Lda \succeq 0$.

For our setting, we are investigating the following kind of question. Given polynomial constraints $g_1 = 0$,\ldots,$g_m = 0$, can degree $d$ SoS certify that some other polynomial $h$ has value at most $c$?

If there do not exist pseudo-expectation values $\pE$ satisfying the conditions in~\cref{def: pseudoexpectation} such that $\pE[h] > c$ then degree $d$ SoS certifies that $\pE[h] \leq c$. More precisely, by duality, there exists a degree $d$ SoS/Positivstellensatz proof that $h \leq c$.
On the other hand, if there exist degree $d$ pseudo-expectation values $\pE$ satisfying the conditions in~\cref{def: pseudoexpectation} such that $\pE[h] > c$ then degree $d$ SoS fails to certify that $h \leq c$. This is what we need to show in order to prove SoS lower bounds on certification problems.

While originally introduced in the context of combinatorial optimization where it remains an effective technique \cite{GW94, AroraRV04, GuruswamiS11, raghavendra2017strongly}, SoS algorithms have recently revolutionized robust machine learning, where we study learning algorithms for noisy data, where the noise could be either random or adversarial.
Robust machine learning has found a variety of important safety-critical applications, e.g. in computer vision \cite{szegedy2013intriguing, goodfellow2014explaining, xie2019feature, hendrycks2021natural, sebe2013robust, xie2020adversarial, fischer2017adversarial, kurakin2016adversarial} and speech recognition \cite{hsu2021robust, wang2022wav2vec, rajendran2022analyzing, ravanelli2020multi, li2015robust, alzantot2018did, neekhara2019universal, olivier2022recent}.
In this important field, SoS has recently lead to breakthrough algorithms for
long-standing open problems \cite{bakshi2020robustly, liu2021settling, hopkins2020mean, klivans2018efficient, FKP19, kothari2017outlier, bakshi2020outlier, bakshi2020list, schramm2017fast}. Highlights include robustly learning mixtures of Gaussians (see \cite{bakshi2020robustly, liu2021settling} and references therein), efficient robust algorithms for regression \cite{klivans2018efficient}, moment estimation \cite{kothari2017outlier}, clustering \cite{bakshi2020outlier} and subspace recovery \cite{bakshi2020list}. See also the works \cite{BarakBHKSZ12, bks15, HopSS15, pot17}) for more.

\subsection{Pseudo-calibration}\label{subsec: pseudocalibration}

To obtain SoS integrality gaps on random instances, we need to construct valid pseudo-expectation values for a random input instance of an optimization problem. Naturally, these pseudo-expectation values will depend on the input. Psuedo-calibration is a heuristic introduced by \cite{BHKKMP16} to construct such candidate pseudo-expectation values almost mechanically by considering a planted distribution supported on instances of the problem with large objective value and using this planted distribution as a guide to construct the pseudo-expectation values.

Unfortunately, psuedo-calibration doesn't guarantee feasibility of these candidate pseudo-expectation values and the corresponding moment matrix and this has to be verified separately for different problems. This verification of feasibility is relatively easy except for the PSDness condition, which often leads to highly technical and involved analyses. The machinery attempts to mitigate this problem by providing easily verifiable conditions to prove PSDness, regardless of whether the moment matrix was obtained via pseudo-calibration.

For our applications, psuedocalibration is used to obtain a candidate pseudoexpectation operator $\pE$ and a corresponding moment matrix $\Lda$
from the random vs planted problem. This will be the starting point for all our applications. Here, we do not attempt to motivate and describe it in great detail. Instead, we will briefly describe the heuristic, the intuition behind it and show an example of how to use it. A detailed treatment can be found in \cite{BHKKMP16}.

Let $\nu$ denote the random distribution and $\mu$ denote the planted distribution. Let $v$ denote the input and $x$ denote the variables for our SoS relaxation. The main idea is that, for an input $v$ sampled from $\nu$ and any polynomial $f(x)$ of degree at most the SoS degree, pseudo-calibration proposes that for any low-degree test $g(v)$, the correlation of $\pE[f]$ should match in the planted and random distributions. That is,
\[\EE_{v \sim \nu}[\pE[f(x)]g(v)] = \EE_{(x, v) \sim \mu}[f(x)g(v)]\]

Here, the notation $(x, v) \sim \mu$ means that in the planted distribution $\mu$, the input is $v$ and $x$ denotes the planted structure in that instance. For example, in planted clique, $x$ would be the indicator vector of the clique. If there are multiple, pick an arbitrary one.

Let $\calF$ denote the Fourier basis of polynomials for the input $v$. By choosing different basis functions from $\calF$ as choices for $g$ such that the degree is at most $n^{\eps}$ (hence the term low-degree test), we get all lower order Fourier coefficients for $\pE[f(x)]$ when considered as a function of $v$. Furthermore, the higher order coefficients are set to be $0$ so that the candidate pseudoexpectation operator can be written as
\[\pE f(x) = \sum_{\substack{g \in \calF\\deg(g) \le n^{\eps}}} \EE_{v \sim \nu}[\pE[f(x)]g(v)] g(v) = \sum_{\substack{g \in \calF\\deg(g) \le n^{\eps}}} \EE_{(x, v) \sim \mu}[[f(x)]g(v)] g(v)\]

The coefficients $\EE_{(x, v) \sim \mu}[[f(x)]g(v)]$ can be explicitly computed in many settings, which therefore gives an explicit pseudoexpectation operator $\pE$.

One intuition for pseudo-calibration is as follows. The planted distribution is usually chosen to be a maximum entropy distribution which still has the planted structure. This conforms to the philosophy that random instances are hard for SoS, such as the uniform Bernoulli distribution for planted clique or the Gaussian distribution for Tensor PCA. By conditioning on the lower order moments matching such a planted distribution, pseudo-calibration can be interpreted as sort of interpolating between the random and planted distributions by only looking at lower order Fourier characters. This intuition has proven to be successful, since pseudo-calibration been successfully exploited to construct SoS lower bounds for a wide variety of dense as well as sparse problems.

An advantage of pseudo-calibration is that this construction automatically satisfies some nice properties that the pseudoexpectation $\pE$ should satisfy. It's linear in $v$ by construction. For all polynomial equalities of the form $f(x) = 0$ that is satisfied in the planted distribution, it's true that $\pE[f(x)] = 0$. For other polynomial equalities of the form $f(x, v) = 0$ that are satisfied in the planted distribution, the equality $\pE[f(x, v)] = 0$ is approximately satisfied. In most cases, $\pE$ can be mildly adjusted to satisfy these exactly.

The condition $\pE[1] = 1$ is not automatically satisfied but in most applications, we usually require that $\pE[1] = 1 \pm \littleoh(1)$. Indeed, this has been the case for all known successful applications of pseudo-calibration. Once we have this, we simply set our final pseudoexpectation operator to be $\pE'$ defined as $\pE'[f(x)] = \pE[f(x)] / \pE[1]$.

We remark that the condition $\pE[1] = 1 \pm \littleoh(1)$ corresponds to having a low-degree polynomial lower bound and has been quite successful in predicting the right thresholds between approximability and inapproximability \cite{hop17, hop18, kunisky19notes}.

\paragraph{Example: Planted Clique}
As an warmup, we review the pseudo-calibration calculation for planted clique. Here, the random distribution $\nu$ is $G(n, \frac{1}{2})$.
The planted distribution $\mu$ is as follows. For a given integer $k$, first sample $G'$ from $G(n, \frac{1}{2})$, then choose a random subset $S$ of the vertices where each vertex is picked independently with probability $\frac{k}{n}$. For all pairs $i, j$ of distinct vertices in $S$, add the edge $(i, j)$ to the graph if not already present. Set $G$ to be the resulting graph.

The input is given by $G \in \{-1, 1\}^{\binom{[n]}{2}}$ where $G_{i, j}$ is $1$ if the edge $(i, j)$ is present and $-1$ otherwise. Let $x_1, \ldots, x_n$ be the boolean variables for our SoS program such that $x_i$ indicates if $i$ is in the clique.

\begin{definition}
Given a set of vertices $V \subseteq [n]$, define $x_V = \prod_{v \in V}{x_v}$.
\end{definition}
\begin{definition}
Given a set of possible edges $E \subseteq \binom{[n]}{2}$, define $\chi_E = (-1)^{|E \setminus E(G)|} = \prod_{(i, j) \in E}G_{i, j}$.
\end{definition}

Pseudo-calibration says that for all small $V$ and $E$,
\[
\EE_{G \sim \nu}\left[\tilde{E}[x_V]\chi_E\right] = \EE_{\mu}\left[x_V{\chi_E}\right]
\]
Using standard Fourier analysis, this implies that if we take
$
c_E = \EE_{\mu}\left[x_V{\chi_E}\right] = \left(\frac{k}{n}\right)^{|V \cup V(E)|}
$
where $V(E)$ is the set of the endpoints of the edges in $E$, then for all small $V$,
\[
\pE[x_V] = \sum_{E:E \text{ is small}}{{c_E}\chi_E} = \sum_{E:E \text{ is small}}{\left(\frac{k}{n}\right)^{|V \cup V(E)|}\chi_E}
\]

Since the values of $\pE[x_V]$ are known, by multi-linearity, this can be naturally extended to obtain values $\pE[f(x)]$ for any polynomial $f$ of degree at most the SoS degree.

	\section{Informal Description of our Machinery}\label{sec: informal_statement}

	In this section, we informally describe our machinery for proving sum of squares lower bounds on planted problems. Our goal for this section is to qualitatively state the conditions under which we can show that the moment matrix $\Lambda$ is PSD with high probability (see \cref{informalmaintheoremstatement}). For simplicity, in this section we restrict ourselves to the setting where the input is $\{-1,1\}^{\binom{n}{2}}$ (i.e. a random graph on $n$ vertices). We also defer the proofs of several important facts until \cref{sec: technical_def_and_main_theorem}. In \cref{sec: technical_def_and_main_theorem}, we give the general definitions, fill in the missing proofs, and give the full, quantitative statement of our main result (see \cref{generalmaintheorem}).

\subsection{Fourier analysis for matrix-valued functions: ribbons, shapes, and graph matrices}
For our machinery, we need the definitions of ribbons, shapes, and graph matrices from \cite{AMP20}.


\subsubsection{Ribbons}
\emph{Ribbons} lift the usual Fourier basis for functions $\{ f \, : \, \{ \pm 1\}^{n \choose 2} \rightarrow \R \}$ to  matrix-valued functions.




\begin{definition}[Simplified ribbons -- see \cref{def: ribbons}]
	Let $n \in \N$.
	A ribbon $R$ is a tuple $(E_R,A_R,B_R)$ where $E_R \subseteq {[n] \choose 2}$ and $A_R,B_R$ are tuples of elements in $[n]$.
	$R$ thus specifies:
	\begin{enumerate}
		\item A Fourier character $\chi_{E_R}$.
		\item Row and column indices $A_R$ and $B_R$.
	\end{enumerate}
	We think of $R$ as a graph with vertices
	\[
	V(R) = \{ \text{ endpoints of $(i,j) \in E_R$ } \} \cup A_R \cup B_R
	\]
	and edges $E(R) = E_R$, where $A_R, B_R$ are distinguished tuples of vertices.
\end{definition}

\begin{definition}[Matrix-valued function for a ribbon $R$]
	Given a ribbon $R$, we define the matrix valued function $M_R \, : \, \{ \pm 1\}^{n \choose 2} \rightarrow \R^{\frac{n!}{(n - |A_R|)!} \times \frac{n!}{(n - |B_R|)!}}$ to have entries $M_R(A_R,B_R) = \chi_{E_R}$ and $M_R(A',B') = 0$ whenever $A' \neq A_R$ or $B' \neq B_R$.
\end{definition}

The following proposition captures the main property of the matrix-valued functions $M_R$ -- they are an orthonormal basis. We leave the proof to the reader.
\begin{proposition}
	The matrix-valued functions $M_R$ form an orthonormal basis for the vector space of matrix valued functions with respect to the inner product
	\[
	\iprod{M,M'} = \E_{G \sim \{ \pm 1\}^{n \choose 2} }\left[\Tr \left(M(G) (M'(G))^\top\right)\right].
	\]
\end{proposition}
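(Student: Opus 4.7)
The plan is to verify both orthonormality and spanning by direct computation, using the fact that each $M_R$ is essentially a ``single-entry'' rank-one matrix scaled by a Fourier character on the randomness. Concretely, for a ribbon $R = (E_R, A_R, B_R)$ we can write $M_R(G) = \chi_{E_R}(G) \cdot e_{A_R} e_{B_R}^\top$, where $e_{A_R}$ and $e_{B_R}$ denote the standard basis vectors in the spaces indexed by ordered tuples of the appropriate sizes. This factored form decouples the randomness (which lives entirely in the scalar $\chi_{E_R}(G)$) from the index structure (which lives in the outer product), and this decoupling is what makes both checks clean.

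For orthonormality, fix ribbons $R, R'$ with matching row- and column-tuple sizes so that the inner product is defined. The product $e_{A_R} e_{B_R}^\top e_{B_{R'}} e_{A_{R'}}^\top$ has trace $\mathds{1}[B_R = B_{R'}] \cdot \mathds{1}[A_R = A_{R'}]$, so
\[
\Tr\bigl(M_R(G)\, M_{R'}(G)^\top\bigr) \;=\; \chi_{E_R}(G)\,\chi_{E_{R'}}(G)\cdot \mathds{1}\bigl[A_R = A_{R'},\ B_R = B_{R'}\bigr].
\]
Taking expectation over $G$ uniform on $\{\pm 1\}^{\binom{n}{2}}$ and using orthogonality of the scalar Fourier characters, $\mathbb{E}[\chi_{E_R}\chi_{E_{R'}}] = \mathds{1}[E_R = E_{R'}]$, one obtains $\langle M_R, M_{R'}\rangle = \mathds{1}[R = R']$, which is exactly orthonormality.

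For the spanning part, I will expand any matrix-valued function entrywise in the scalar Fourier basis. If $M$ has rows indexed by ordered tuples of size $r$ and columns by ordered tuples of size $c$, then each entry $M(A,B)(G)$ admits a Fourier expansion $\sum_E \widehat{M}_{A,B}(E)\,\chi_E(G)$, so $M = \sum_{A,B,E} \widehat{M}_{A,B}(E)\cdot M_{(E,A,B)}$, exhibiting $M$ as an $\mathbb{R}$-linear combination of the $M_R$'s. Alternatively, a dimension count confirms completeness: the space of matrix-valued functions of this shape has dimension $\frac{n!}{(n-r)!}\cdot\frac{n!}{(n-c)!}\cdot 2^{\binom{n}{2}}$, which equals the number of ribbons with $|A_R| = r$, $|B_R| = c$, so orthonormality already forces the $M_R$'s to be a basis.

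There is no substantive obstacle; the proof is essentially a one-line trace computation followed by Fourier orthogonality. The only bookkeeping requiring care is that $A_R$ and $B_R$ are ordered tuples of distinct vertices (not unordered sets), and that the inner product and trace are implicitly taken within the space of matrix-valued functions of a single fixed shape, so orthonormality of $M_R$ and $M_{R'}$ is only meaningful when $|A_R| = |A_{R'}|$ and $|B_R| = |B_{R'}|$.
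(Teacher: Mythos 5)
Your proof is correct, and since the paper explicitly leaves this proposition to the reader, there is nothing to compare it against; your argument is exactly the intended one. The factorization $M_R(G) = \chi_{E_R}(G)\,e_{A_R}e_{B_R}^\top$, the trace computation reducing to the indicator $\mathds{1}[A_R=A_{R'},\,B_R=B_{R'}]$, the appeal to $\E[\chi_{E_R}\chi_{E_{R'}}] = \mathds{1}[E_R=E_{R'}]$, and the entrywise Fourier expansion for spanning (with the dimension count as a sanity check) together form the standard proof. Your closing caveats are also the right ones to flag: the entries of $A_R$ and $B_R$ are tuples of distinct indices (as the dimensions $\frac{n!}{(n-|A_R|)!}\times\frac{n!}{(n-|B_R|)!}$ indicate), and orthogonality is only asserted within a fixed matrix shape, i.e.\ between ribbons of matching $|A_R|$ and $|B_R|$.
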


We don't directly utilize this proposition in our work but this gives insight on to the structure of the matrix valued functions we define and motivates the definition of graph matrices, that we use extensively.


\begin{example}
    In \cref{fig: ribbon_shape}, consider the ribbon $R$ as shown. We have $A_R = (1, 3), B_R = (4), V(R) = \{1, 2, 3, 4\}, E_R = \{\{1, 2\}, \{3, 2\}, \{2, 4\}\}$. The Fourier character is $\chi_{E_R} = \chi_{1, 2}\chi_{3, 2}\chi_{2, 4}$. And finally, $M_R$ is a matrix with rows and columns indexed by tuples of length $|A_R| = 2$ and $|B_R| = 1$ respectively, with exactly one nonzero entry $M_R((1, 3), (4)) = \chi_{E_R}$. Succinctly, \[M_R =
  \begin{blockarray}{rl@{}c@{}r}
    & & \makebox[0pt]{column $(4)$} \\[-0.5ex]
    & & \,\downarrow \\[-0.5ex]
    \begin{block}{r(l@{}c@{}r)}
    & \makebox[3.1em]{\Large $0$\bigstrut[t]} & \vdots &\makebox[4.2em]{\Large $0$} \\[-0.2ex]
    \text{row }(1, 3) \to \mkern-9mu & \raisebox{0.5ex}{\makebox[3.2em][l]{\dotfill}} & \chi_{1, 2}\chi_{3, 2}\chi_{2, 4} & \raisebox{0.5ex}{\makebox[4.2em][r]{\dotfill}} \\[+0ex]
    & \makebox[3.1em]{\Large $0$} & \vdots &\makebox[4.2em]{\bigstrut\Large $0$} \\
    \end{block}
  \end{blockarray}\]
\end{example}

\begin{figure}[!h]
    \centering
    \includegraphics[scale=.6, trim={0 5cm 2 5cm},clip]{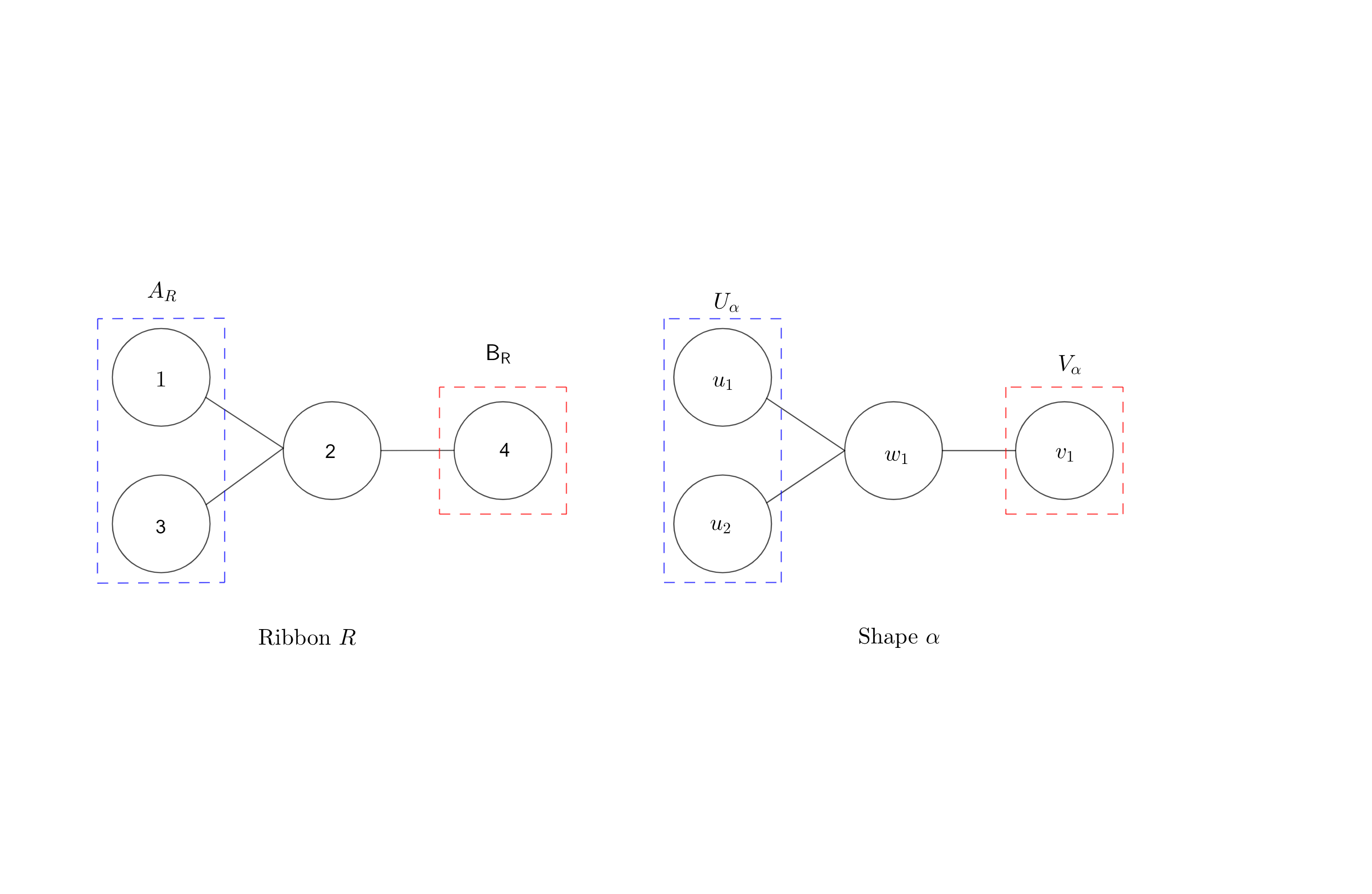}
    \caption{Example of a ribbon and a shape}
    \label{fig: ribbon_shape}
\end{figure}

\subsubsection{Shapes and Graph Matrices}
As described above, \emph{ribbons} are an orthonormal basis for matrix-valued functions. However, we will need an orthogonal basis for the subset of those functions which are symmetric with respect to the action of $S_n$. For this, we use \emph{graph matrices}, which are described by \emph{shapes}. The idea is that each ribbon $R$ has a shape $\alpha$ which is obtained by replacing the vertices of $R$ with unspecified indices. Up to scaling, the graph matrix $M_{\alpha}$ is the average of $M_{\pi(R)}$ over all permutations $\pi \in S_n$.



\begin{definition}[Simplified shapes -- see \cref{def: shapes}]
	Informally, a shape $\alpha$ is just a ribbon $R$ where the vertices are specified by variables rather than having specific values in $[n]$. More precisely, a shape $\alpha = (V(\alpha),E(\alpha),U_{\alpha},V_{\alpha})$ is a graph on vertices $V(\alpha)$, with
	\begin{enumerate}
		\item Edges $E(\alpha) \subseteq {{V(\alpha)} \choose 2}$
		\item Distinguished tuples of vertices $U_\alpha = (u_1,u_2,\dots)$ and $V_\alpha = (v_1,v_2,\dots)$, where $u_i,v_i \in V(\alpha)$.
	\end{enumerate}
	(Note that $V(\alpha)$ and $V_\alpha$ are not the same object!)
\end{definition}

\begin{definition}[Shape transposes]
	Given a shape $\alpha$, we define $\alpha^{\top}$ to be the shape $\alpha$ with $U_{\alpha}$ and $V_{\alpha}$ swapped i.e. $U_{\sigma^{\top}} = V_{\sigma}$ and $V_{\sigma^{\top}} = U_{\sigma}$.
	Note that $M_{\alpha^{\top}} = M_\alpha^{\top}$, where $M_\alpha^{\top}$ is the usual transpose of the matrix-valued function $M_\alpha$.
\end{definition}

\begin{definition}[Graph matrices]
	Let $\alpha$ be a shape.
	The graph matrix $M_{\alpha} \, : \, \{ \pm 1\}^{n \choose 2} \rightarrow \R^{\frac{n!}{(n - |U_{\alpha}|)!} \times \frac{n!}{(n - |V_{\alpha}|)!}}$ is defined to be the matrix-valued function with $A,B$-th entry
	\[
	M_{\alpha}(A,B) = \sum_{\substack{R \text{ s.t. } A_R = A, B_R = B \\ \exists \phi:V(\alpha) \to [n]: \\ \phi \text{ is injective}, \phi(\alpha) = R}}{\chi_{E_R}}
	\]
	In other words, $M_\alpha = \sum_{R} M_R$ where the sum is over ribbons $R$ which can be obtained by assigning each vertex in $V(\alpha)$ a label from $[n]$.
\end{definition}

\begin{example}
In \cref{fig: ribbon_shape}, consider the shape $\al$ as shown. We have $U_{\al} = (u_1, u_2), V_{\al} = (v_1), V(\al) = \{u_1, u_2, v_1, w_1\}$ and $E(\al) = \{\{u_1, w_1\}, \{u_2, w_1\}, \{w_1, v_1\}\}$. $M_{\al}$ is a matrix with rows and columns indexed by tuples of length $|U_{\al}| = 2$ and $|V_{\al}| = 1$ respectively. The nonzero entries will have rows and columns indexed by $(a_1, a_2)$ and $b_1$ respectively for all distinct $a_1, a_2, b_1$, with the corresponding entry being $M_{\al}((a_1, a_2), (b_1)) = \sum_{c_1 \in [n] \setminus \{a_1, a_2, b_1\}} \chi_{a_1, c_1}\chi_{a_2, c_1}, \chi_{c_1, b_1}$. Here, the injective map $\phi$ maps $u_1, u_2, w_1, v_1$ to $a_1, a_2, c_1, b_1$ respectively and we sum over all such maps. Succinctly, \[M_{\al} =
  \begin{blockarray}{rl@{}c@{}r}
    & & \makebox[0pt]{column $(b_1)$} \\[-0.5ex]
    & & \,\downarrow \\[-0.5ex]
    \begin{block}{r(l@{}c@{}r)}
    &  & \vdots & \\[-0.2ex]
    \text{row }(a_1, a_2) \to \mkern-9mu & \raisebox{0.5ex}{\makebox[3.2em][l]{\dotfill}} & \sum_{c_1 \in [n] \setminus \{a_1, a_2, b_1\}} \chi_{a_1, c_1}\chi_{a_2, c_1}\chi_{c_1, b_1} & \raisebox{0.5ex}{\makebox[4.2em][r]{\dotfill}} \\[+.5ex]
    &  & \vdots & \\
    \end{block}
  \end{blockarray}\]
\end{example}

\begin{remk}
    The fact that we are summing over all "free" vertices in $V(\al) \setminus (U_{\al} \cup V_{\al})$ is how we are incorporating symmetry into the definition of these graph matrices.
\end{remk}

The following examples illustrate that simple matrices such as the adjacency matrix of a graph and the identity matrix are also graph matrices.

\begin{example}[Adjacency matrix]\label{ex:adj-matrix}
	Let $\alpha$ be the shape with two vertices $V(\alpha) = \{u_1,v_1\}$ and a single edge $E(\alpha) = \{\{ u_1,v_1\}\}$. The tuples $U_\alpha, V_\alpha$ are $(u_1), (v_1)$, respectively.
	Then $M_\alpha$ has entries $(M_\alpha)_{i,j}(G) = G_{ij}$ if $i \neq j$ and $(M_\alpha)_{i,i} = 0$.
	If $G \in \{ \pm 1\}^{n \choose 2}$ is thought of as a graph, then $M_\alpha$ is precisely its $\pm 1$ adjacency matrix with zeros on the diagonal.
\end{example}

\begin{example}[Identity matrix]
	If $V(\alpha) = \{u\}$ is a singleton, $E(\alpha) = \emptyset$, and $U_{\alpha} = V_{\alpha} = (u)$, then $M_\alpha(G)$ is identically equal to the $n \times n$ identity matrix, independent of $G$.
\end{example}



For more examples of graph matrices and why they can be a useful tool to work with, see \cite{AMP20}.
\begin{remark}
As noted in \cite{AMP20}, we index graph matrices by tuples rather than sets so that they are symmetric (as a function of the input) under permutations of $[n]$.
\end{remark}

\subsection{Factoring Graph Matrices and Decomposing Shapes into Left, Middle, and Right Parts}

A crucial idea in our analysis is the idea from \cite{BHKKMP16} of decomposing each shape $\alpha$ into left, middle, and right parts. This will allow us to give an approximate factorization of each graph matrix $M_{\alpha}$.

\subsubsection{Leftmost and Rightmost Minimum Vertex Separators and Decomposition of Shapes into Left, Middle, and Right Parts}
For each shape $\alpha$ we will identify three other shapes, which we denote by $\sigma,\tau,{\sigma'}^T$ and call (for reasons we will see soon) the \emph{left, middle, and right parts of $\alpha$}, respectively.
The idea is that $M_{\alpha} \approx M_{\sigma} M_{\tau} M_{{\sigma'}^T}$.
We obtain $\sigma, \tau$, and ${\sigma'}^T$ by splitting the shape $\alpha$ along the \emph{leftmost and rightmost minimum vertex separators}.

\begin{definition}[Vertex Separators]
	We say that a set of vertices $S$ is a vertex separator of $\alpha$ if every path from $U_{\alpha}$ to $V_{\alpha}$ in $\alpha$ (including paths of length $0$) intersects $S$. Note that for any vertex separator $S$, $U_{\alpha} \cap V_{\alpha} \subseteq S$.
\end{definition}



\begin{definition}[Minimum Vertex Separators]
	We say that $S$ is a minimum vertex separator of $\alpha$ if $S$ is a vertex separator of $\alpha$ and for any other vertex separator $S'$ of $\alpha$, $|S| \leq |S'|$.
\end{definition}

\begin{definition}[Leftmost and Rightmost Minimum Vertex Separators] \
	\begin{enumerate}
		\item We say that $S$ is the leftmost minimum vertex separator of $\alpha$ if $S$ is a minimum vertex separator of $\alpha$ and for every other minimum vertex separator $S'$ of $\alpha$, every path from $U_{\alpha}$ to $S'$ intersects $S$.
		\item We say that $T$ is the rightmost minimum vertex separator of $\alpha$ if $T$ is a minimum vertex separator of $\alpha$ and for every other minimum vertex separator $S'$ of $\alpha$, every path from $S'$ to $V_{\alpha}$ intersects $T$.
	\end{enumerate}
\end{definition}
It is not immediately obvious that leftmost and rightmost minimum vertex separators are well-defined. For the simplified setting we are considering here, this was shown by \cite{BHKKMP16}. We give a more general proof in \cref{separatorswelldefinedsection}.

We now describe how to split $\alpha$ into left, middle, and right parts $\sigma, \tau$, and ${\sigma'}^T$.

\begin{definition}[Decomposition Into Left, Middle, and Right Parts]
	Let $\alpha$ be a shape and let $S$ and $T$ be the leftmost and rightmost minimum vertex separators of $\alpha$. Given orderings $O_S$ and $O_T$ for $S$ and $T$, we decompose $\alpha$ into left, middle, and right parts $\sigma$, $\tau$, and ${\sigma'}^T$ as follows.
	\begin{enumerate}
		\item The left part $\sigma$ of $\alpha$ is the part of $\alpha$ reachable from $U_\alpha$ without passing through $S$. It includes $S$ but excludes all edges which are entirely within $S$.
		More formally,
		\begin{enumerate}
		    \item $V(\sigma) = \{u \in V(\alpha): \text{ there is a path } P \text{ from } U_{\alpha} \text{ to } u \text{ in } \alpha \text{ such that } (V(P) \setminus \{u\}) \cap S = \emptyset\}$
		    \item $U_\sigma = U_\alpha$ and $V_\sigma = S$ with the ordering $O_S$
		    \item $E(\sigma) = \{\{u,v\} \in E(\alpha): u,v \in V(\sigma), u \notin S \text{ or } v \notin S\}$
		\end{enumerate}
		\item 
		The right part ${\sigma'}^T$ of $\alpha$ is the part of $\alpha$ reachable from $V_\alpha$ without intersecting $T$ more than once. It includes $T$ but excludes all edges which are entirely within $T$.
		More formally,
		\begin{enumerate}
		    \item $V({\sigma'}^T) = \{u \in V(\alpha): \text{ there is a path } P \text{ from } V_{\alpha} \text{ to } u \text{ in } \alpha \text{ such that } (V(P) \setminus \{u\}) \cap T = \emptyset\}$
		    \item $U_{{\sigma'}^T} = T$ with the ordering $O_T$ and $V_{{\sigma'}^T} = V_\alpha$.
		    \item $E({\sigma'}^T) = \{\{u,v\} \in E(\alpha): u,v \in V({\sigma'}^T), u \notin T \text{ or } v \notin T\}$
		\end{enumerate}
		\item The middle part $\tau$ of $\alpha$ is, informally, the part of $\alpha$ between $S$ and $T$ (including $S$ and $T$ and all edges which are entirely within $S$ or within $T$).
		More formally, let $U_\tau = S$ with the ordering $O_S$, let $V_\tau = T$ with the ordering $O_T$, and let $E(\tau) = E(\alpha) \setminus (E(\sigma) \cup E(\sigma'))$ be all of the edges of $E(\alpha)$ which do not appear in $E(\sigma)$ or $E(\sigma')$.
		Then $V(\tau)$ is all of the vertices incident to edges in $E(\tau)$ together with $S, T$.
	\end{enumerate}
\end{definition}

\begin{example}
    \cref{fig: basic_shape_comp} illustrates an example decomposition.
    \begin{enumerate}
        \item If we start with the shape $\al$ denoted as $\sig \circ \sig'^T$, observe that there is a unique minimum vertex separator, which consists of the middle vertex of degree $5$, i.e. the one that's not in either $U_{\sig \circ \sig'^T}$ or $V_{\sig \circ \sig'^T}$.
        Then, $\al$ is decomposed in to the left part $\sig$, a trivial middle part $\tau$ (not shown in this figure) which has $V(\tau) = \{u\}, U_{\tau} = V_{\tau} = (u), E(\tau) = \emptyset$, and the right part $\sig'^T$.
        \item If we start with the shape $\al$ denoted as $\sig \circ \tau \circ \sig'^T$, then the leftmost minimum vertex separator is the vertex of degree $4$ and the rightmost minimum vertex separator is the vertex of degree $5$. Then, $\al$ is decomposed into the left part $\sig$, the middle part $\tau$ and the right part $\sig'^T$, which are all shown in this figure.
    \end{enumerate}
\end{example}

\begin{remark}
Note that the decomposition into left, middle, and right parts depends on the ordering for the vertices in $S$ and $T$. As we will discuss later (see Section \ref{fullcoefficientmatrixsubsection}), we will use all possible orderings simultaneously and then scale things by an appropriate constant.
\end{remark}

\begin{figure}[!h]
    \centering
    \includegraphics[scale=0.45, trim={4.5cm 2cm 0 2cm},clip]{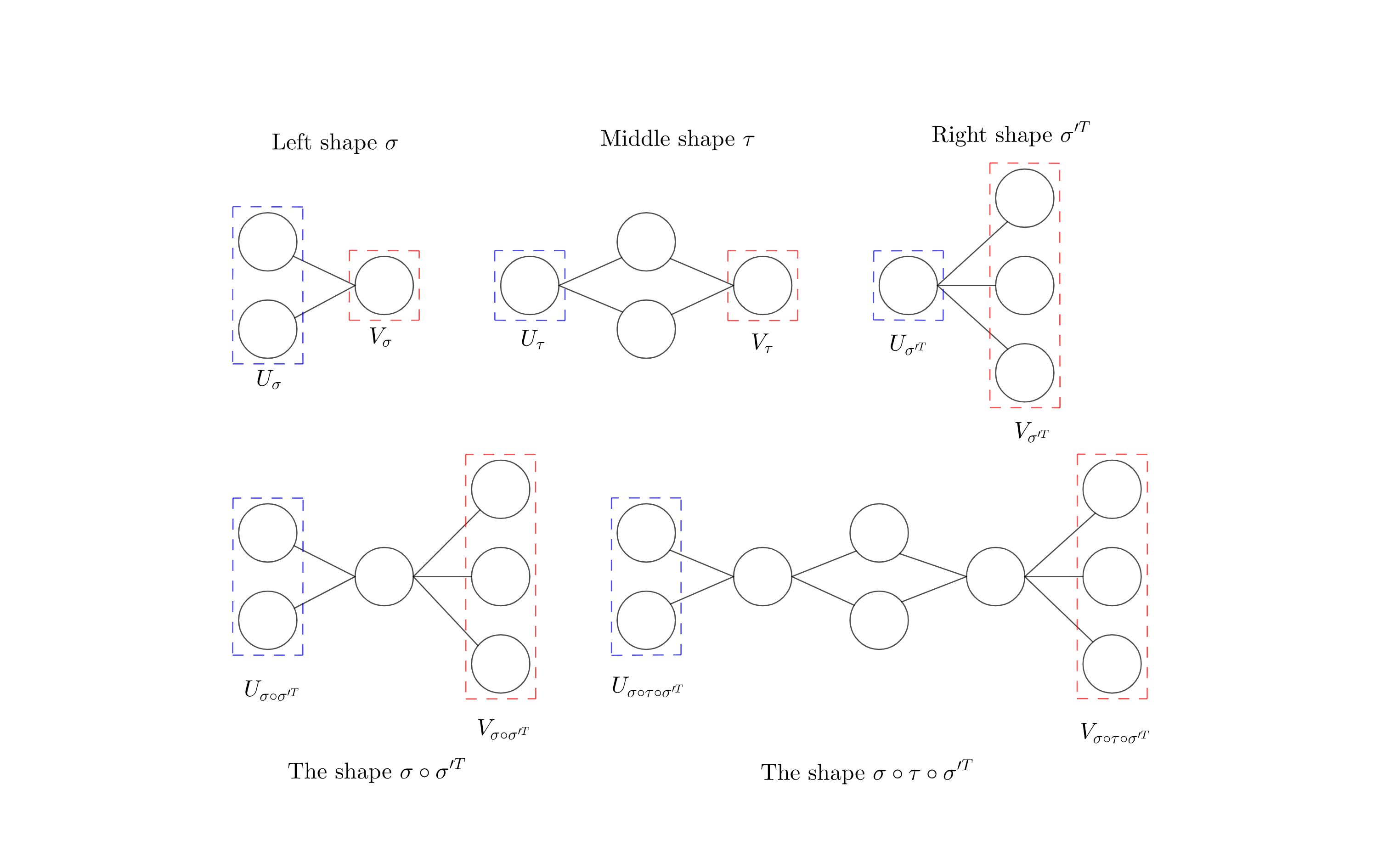}
    \caption{Illustration of shape composition and decomposition.}
    \label{fig: basic_shape_comp}
\end{figure}

\noindent Because of the minimality and leftmost/rightmost-ness of the vertex separators $S,T$ used to define $\sigma, \tau, \sigma'$, the shapes $\sigma, \tau, \sigma'$ have some special combinatorial structure, which we capture in the following proposition. We defer the proof until \cref{sec: technical_def_and_main_theorem} where we state a generalized version.

\begin{proposition}
	$\sigma$, $\tau$, and ${\sigma'}^{T}$ have the following properties:
	\begin{enumerate}
		\item $V_{\sigma} = S$ is the unique minimum vertex separator of $\sigma$.
		\item $S$ and $T$ are the leftmost and rightmost minimum vertex separators of $\tau$.
		\item $T = U_{{\sigma'}^T}$ is the unique minimum vertex separator of ${\sigma'}^T$.
	\end{enumerate}
\end{proposition}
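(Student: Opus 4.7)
The plan is to prove the three claims separately, noting that claims $1$ and $3$ are dual to each other via the symmetry $U_\alpha \leftrightarrow V_\alpha$, $\sigma \leftrightarrow {\sigma'}^T$, $S \leftrightarrow T$, while claim $2$ requires a slightly different argument. The unifying principle throughout will be a lifting argument: any vertex separator of a sub-shape corresponds to a vertex separator of $\alpha$ of the same size, and minimality of $S$ and $T$ in $\alpha$ therefore forces minimality in the sub-shape. For the uniqueness / leftmost / rightmost statements I will combine this lifting with the special combinatorial structure of $V(\sigma)$ and $V({\sigma'}^T)$ (reachability from $U_\alpha$ or $V_\alpha$ while avoiding $S$ or $T$).

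For claim $1$, the set $S = V_\sigma$ is trivially a vertex separator of $\sigma$, since every path in $\sigma$ ending in $V_\sigma$ ends in $S$. To see that $S$ is a \emph{minimum} vertex separator of $\sigma$, suppose $S'$ is any vertex separator of $\sigma$; I would show that $S'$ separates $U_\alpha$ from $V_\alpha$ in $\alpha$. Given any path $P$ from $U_\alpha$ to $V_\alpha$ in $\alpha$, let $w$ be the first vertex of $S$ on $P$. The initial segment of $P$ up to $w$ has all internal vertices in $V(\sigma) \setminus S$ (by the definition of $V(\sigma)$ applied to the prefix) and each of its edges has at least one endpoint outside $S$, so it lies in $E(\sigma)$. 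Hence the initial segment is a path in $\sigma$ from $U_\sigma$ to $V_\sigma$ and must intersect $S'$; thus $|S'| \geq |S|$ by minimality of $S$ in $\alpha$. For uniqueness, any other minimum separator $S''$ of $\sigma$ lifts by the same argument to a minimum separator of $\alpha$; but any $v \in S'' \setminus S$ would lie in $V(\sigma) \setminus S$, hence by definition of $V(\sigma)$ would admit a path from $U_\alpha$ to $v$ in $\alpha$ avoiding $S$, contradicting the leftmost property of $S$. So $S'' \subseteq S$, and equal sizes force $S'' = S$. Claim $3$ follows by the same argument after interchanging the roles of $U_\alpha, V_\alpha$, $\sigma$, ${\sigma'}^T$, $S$, $T$.

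For claim $2$, note first that $S = U_\tau$ and $T = V_\tau$ are each trivially vertex separators of $\tau$, and the leftmost / rightmost properties are automatic: any path in $\tau$ starting at $U_\tau = S$ already contains a vertex of $S$, so every path from $U_\tau$ to any other minimum separator of $\tau$ intersects $S$, and symmetrically $V_\tau = T$ gives the rightmost property. What remains is to show that $|S|$ (equivalently $|T|$) is the minimum separator size in $\tau$. For this I would invoke Menger's theorem in $\alpha$: since $|S|$ is the minimum $U_\alpha$-to-$V_\alpha$ vertex cut, there exist $|S|$ vertex-disjoint $U_\alpha$-to-$V_\alpha$ paths $P_1, \dots, P_{|S|}$, and each $P_i$ meets $S$ exactly once at some $s_i$ and $T$ exactly once at some $t_i$. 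The restriction of $P_i$ to the segment from $s_i$ to $t_i$ then yields a family of vertex-disjoint $S$-to-$T$ paths lying in $\tau$, and any separator of $\tau$ must meet each of them, forcing size $\geq |S|$. I expect the main obstacle to be the technical verification that these $s_i$-to-$t_i$ segments actually live in $\tau$, i.e.\ that their internal vertices are in $V(\tau)$ and their edges in $E(\tau) = E(\alpha) \setminus (E(\sigma) \cup E(\sigma'))$. The key subtleties are (i) ruling out internal vertices lying in $V(\sigma) \setminus S$ or $V({\sigma'}^T) \setminus T$, which requires using the leftmost property to conclude that every path from $U_\alpha$ to $T$ in $\alpha$ already meets $S$ (so $s_i$ precedes $t_i$ on $P_i$, collapsing to the degenerate case $s_i = t_i \in S \cap T$ when the hits coincide), and (ii) handling edges entirely inside $S$ or $T$, which the definition of $\tau$ deliberately places into $E(\tau)$ rather than $E(\sigma)$ or $E(\sigma')$ and so cause no trouble. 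Once this is done, all three claims follow.
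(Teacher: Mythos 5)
Your proof sketch is sound, though you should be aware that the paper does not actually supply its own proof of this proposition: in \cref{sec: technical_def_and_main_theorem} it restates the claim in more generality and then cites it as ``Proved in Section 6.1 in \cite{BHKKMP16}'' (with a footnote noting the argument there is for ordinary graphs but generalizes to hypergraphs). So what you have produced is a self-contained argument for a fact the authors outsource.

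On the substance, the three ingredients you use are all correct. For claims 1 and 3, the lifting step works exactly as you say: an initial segment of a $U_\alpha$--$V_\alpha$ path up to its first hit of $S$ lies entirely in $\sigma$ (all internal vertices are in $V(\sigma)\setminus S$ by the reachability definition, and every edge has an endpoint outside $S$ so is kept in $E(\sigma)$), hence any separator of $\sigma$ is a separator of $\alpha$; this pins down the minimum size, and the leftmost property of $S$ in $\alpha$ plus the fact that any $v \in V(\sigma)\setminus S$ is reachable from $U_\alpha$ while avoiding $S$ gives uniqueness. For claim 2, the Menger argument is the right move, and the two subtleties you flag are exactly the ones that need care. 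One small point of bookkeeping: the ordering $s_i \preceq t_i$ on $P_i$ follows from the leftmost property of $S$ (every $U_\alpha$-to-$T$ path hits $S$, and $s_i$ is the unique $S$-vertex of $P_i$), but the exclusion of interior vertices of the $s_i$-to-$t_i$ segment from $V(\sigma)\setminus S$ does not itself use leftmost-ness; it uses only that $S$ separates $U_\alpha$ from $V_\alpha$ together with uniqueness of $s_i$ on $P_i$: such a $v$ would be reachable from $U_\alpha$ avoiding $S$ while the tail of $P_i$ from $v$ to $V_\alpha$ also avoids $S$, contradicting $S$ being a separator. The same observation (applied to $T$) shows the endpoints $s_i$, $t_i$ themselves are not in $V({\sigma'}^T)\setminus T$, resp.\ $V(\sigma)\setminus S$, when $s_i \neq t_i$, so the segment edges avoid $E(\sigma)\cup E(\sigma')$ and land in $E(\tau)$. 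The degenerate case $s_i = t_i \in S\cap T \subseteq U_\tau\cap V_\tau$ is handled because any vertex separator must contain $U_\tau \cap V_\tau$. I see no gaps; it would just need to be written out with the details you flagged filled in.
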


\noindent Based on this, we define sets of shapes which can appear as left, middle, or right parts.

\begin{definition}[Left, Middle, and Right Parts] Let $\alpha$ be a shape.
	\begin{enumerate}
		\item We say that $\alpha$ is a left part if $V_{\alpha}$ is the unique minimum vertex separator of $\alpha$, all vertices of $\alpha$ are reachable from $U_{\alpha}$ without passing through $V_{\alpha}$, and $E(\alpha)$ has no edges which are entirely contained in $V_{\alpha}$.
		\item We say that $\alpha$ is a proper middle part if $U_{\alpha}$ is the leftmost minimum vertex separator of $\alpha$ and $V_{\alpha}$ is the rightmost minimum vertex separator of $\alpha$
		\item We say that $\alpha$ is a right part if $U_{\alpha}$ is the unique minimum vertex separator of $\alpha$, all vertices of $\alpha$ are reachable from $V_{\alpha}$ without passing through $U_{\alpha}$, and $E(\alpha)$ has no edges which are entirely contained in $U_{\alpha}$.
	\end{enumerate}
\end{definition}
\begin{remark}
	For technical reasons, later on we will need to consider improper middle parts $\tau$ where $U_{\tau}$ and $V_{\tau}$ are not the leftmost and rightmost minimum vertex separators of $\tau$, which is why we make this distinction here.
\end{remark}

\noindent The following proposition is also straightforward from the definitions.

\begin{proposition}
	A shape $\sigma$ is a left part if and only if $\sigma^{T}$ is a right part
\end{proposition}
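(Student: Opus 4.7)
The proof is essentially definitional unpacking once one observes that vertex separators are a symmetric notion: the definition of a vertex separator of $\alpha$ requires intersecting every path from $U_\alpha$ to $V_\alpha$, and reversing every such path gives a bijection between paths from $U_\alpha$ to $V_\alpha$ in $\alpha$ and paths from $U_{\alpha^T} = V_\alpha$ to $V_{\alpha^T} = U_\alpha$ in $\alpha^T$ (the underlying graph, vertex set, and edge set are unchanged by transposition). The plan, then, is to first establish the following auxiliary fact: for any shape $\alpha$ and any set $S \subseteq V(\alpha)$, $S$ is a vertex separator of $\alpha$ if and only if $S$ is a vertex separator of $\alpha^T$. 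This immediately implies that the collections of minimum vertex separators of $\alpha$ and $\alpha^T$ coincide, and hence that $\alpha$ has a unique minimum vertex separator if and only if $\alpha^T$ does.

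With this in hand, I would then do the two directions of the biconditional in parallel. Suppose $\sigma$ is a left part, so $V_\sigma$ is the unique minimum vertex separator of $\sigma$ and no edge of $E(\sigma)$ lies entirely in $V_\sigma$. By the auxiliary fact, $V_\sigma = U_{\sigma^T}$ is the unique minimum vertex separator of $\sigma^T$. Since $E(\sigma^T) = E(\sigma)$ and $U_{\sigma^T} = V_\sigma$, no edge of $E(\sigma^T)$ lies entirely in $U_{\sigma^T}$. Hence $\sigma^T$ is a right part. The reverse direction is identical using $(\sigma^T)^T = \sigma$. The only mild subtlety to watch is that the definition of vertex separator includes paths of length $0$, so vertices in $U_\alpha \cap V_\alpha$ must lie in every separator; this condition is manifestly symmetric under swapping $U_\alpha$ and $V_\alpha$, so it causes no trouble. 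There is no genuine obstacle here — the whole claim is an observation that the ``left part / right part'' dichotomy is set up so as to be exchanged by shape transposition.
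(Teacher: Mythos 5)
Your proof is correct, and it matches what the paper intends: the paper states this proposition without proof, calling it ``straightforward from the definitions,'' and your argument — observing that the set of vertex separators of $\alpha$ and of $\alpha^T$ coincide because paths from $U_\alpha$ to $V_\alpha$ reverse to paths from $V_\alpha$ to $U_\alpha$, then checking the edge condition under the identifications $U_{\sigma^T} = V_\sigma$ and $E(\sigma^T) = E(\sigma)$ — is precisely the intended unpacking. Nothing is missing.
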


\subsubsection{Products of Graph Matrices}
We now analyze what happens when we take the products of graph matrices. Roughly speaking, we will have that if $\alpha$ can be decomposed into left, middle, and right parts $\sigma$, $\tau$, and ${\sigma'}^{T}$ then $M_{\alpha} \approx M_{\sigma}M_{\tau}M_{{\sigma'}^T}$. However, this is only an approximation rather than an equality, and this will be the source of considerable technical difficulties.

We begin with a concatenation operation on ribbons.
\begin{definition}[Ribbon Concatenation]
	If $R_1$ and $R_2$ are two ribbons such that $V(R_1) \cap V(R_2) = B_{R_1} = A_{R_2}$ and either $R_1$ or $R_2$ contains no edges entirely within $B_{R_1} = A_{R_2}$ then we define $R_1 \circ R_2$ to be the ribbon formed by glueing together $R_1$ and $R_2$ along $B_{R_1} = A_{R_2}$.
	In other words,
	\begin{enumerate}
		\item $V(R_1 \circ R_2) = V(R_1) \cup V(R_2)$
		\item $E(R_1 \circ R_2) = E(R_1) \cup E(R_2)$
		\item $A_{R_1 \circ R_2} = A_{R_1}$ and $B_{R_1 \circ R_2} = B_{R_2}$.
	\end{enumerate}
\end{definition}

\noindent The following proposition is easy to check.

\begin{proposition}
	Whenever $R_1, R_2$ are ribbons such that $R_1 \circ R_2$ is defined, $M_{R_1}M_{R_2} = M_{R_1 \circ R_2}$
\end{proposition}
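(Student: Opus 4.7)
My plan is to verify the equality $M_{R_1} M_{R_2} = M_{R_1 \circ R_2}$ by computing both sides entry-by-entry, leveraging the extreme sparsity of each $M_R$, which has exactly one nonzero entry at position $(A_R, B_R)$. Expanding
\[
(M_{R_1} M_{R_2})(A, B) = \sum_C M_{R_1}(A, C) \, M_{R_2}(C, B),
\]
a nonzero summand forces $A = A_{R_1}$, $C = B_{R_1}$, $C = A_{R_2}$, and $B = B_{R_2}$. The hypothesis that $R_1 \circ R_2$ is defined supplies $B_{R_1} = A_{R_2}$, so the sum collapses to the single term $\chi_{E_{R_1}} \chi_{E_{R_2}}$ at the entry $(A_{R_1}, B_{R_2}) = (A_{R_1 \circ R_2}, B_{R_1 \circ R_2})$, and all other entries of $M_{R_1} M_{R_2}$ vanish, just as for $M_{R_1 \circ R_2}$.

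The one substantive step is to show $\chi_{E_{R_1}} \chi_{E_{R_2}} = \chi_{E_{R_1} \cup E_{R_2}} = \chi_{E_{R_1 \circ R_2}}$, which reduces to verifying that $E(R_1) \cap E(R_2) = \emptyset$, since otherwise $\chi_e^2 = 1$ would introduce cancellations. Here I would use that any common edge must have both endpoints in $V(R_1) \cap V(R_2) = B_{R_1} = A_{R_2}$; but the hypothesis of the concatenation explicitly forbids at least one of $R_1, R_2$ from containing any edge entirely inside this shared vertex set, ruling out common edges and completing the identification of Fourier characters.

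There is no genuine obstacle here — the proposition is a direct bookkeeping check, and the edge-disjointness condition built into the definition of $R_1 \circ R_2$ is precisely what is needed to keep the product of Fourier characters in the character basis. The proof is therefore a two-step verification: sparsity determines which single entry is potentially nonzero and identifies it with the corresponding entry of $M_{R_1 \circ R_2}$, and the edge-disjointness hypothesis ensures the Fourier characters on that entry multiply cleanly to $\chi_{E_{R_1 \circ R_2}}$.
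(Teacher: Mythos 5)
Your proof is correct and is exactly the ``easy to check'' bookkeeping argument the paper has in mind (the paper gives no written proof for this proposition). The two observations you isolate are precisely what is needed: the sparsity of each $M_R$ collapses the matrix product to the single entry $(A_{R_1}, B_{R_2})$, and the clause in the definition of $R_1 \circ R_2$ forbidding one of the ribbons from having edges entirely inside $B_{R_1} = A_{R_2}$, combined with $V(R_1)\cap V(R_2)=B_{R_1}$, rules out shared edges so that $\chi_{E_{R_1}}\chi_{E_{R_2}}=\chi_{E_{R_1}\cup E_{R_2}}$ with no cancellation from $\chi_e^2 = 1$.
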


\noindent We have an analogous definition for concatenating shapes:

\begin{definition}[Shape Concatenation]
	If $\alpha_1$ and $\alpha_2$ are two shapes such that $V(\alpha_1) \cap V(\alpha_2) = V_{\alpha_1} = U_{\alpha_2}$ and either $\alpha_1$ or $\alpha_2$ contains no edges entirely within $V_{\alpha_1} = U_{\alpha_2}$ then we define $\alpha_1 \circ \alpha_2$ to be the shape formed by glueing together $\alpha_1$ and $\alpha_2$ along $V_{\alpha_1} = U_{\alpha_2}$. In other words,
	\begin{enumerate}
		\item $V(\alpha_1 \circ \alpha_2) = V(\alpha_1) \cup V(\alpha_2)$
		\item $E(\alpha_1 \circ \alpha_2) = E(\alpha_1) \cup E(\alpha_2)$
		\item $U_{\alpha_1 \circ \alpha_2} = U_{\alpha_1}$ and $V_{\alpha_1 \circ \alpha_2} = V_{\alpha_2}$.
	\end{enumerate}
\end{definition}

\begin{example}
    \cref{fig: basic_shape_comp} illustrates an example of shape composition. Observe how the shapes $\sig \circ \sig'^T$ and $\sig \circ \tau \circ \sig'^T$ are obtained from the shapes $\sig, \tau$ and $\sig'^T$.
\end{example}

\noindent The next proposition, again easy to check, shows that the shape concatenation operation respects the left/middle/right part decomposition.

\begin{proposition}
	If $\alpha$ can be decomposed into left, middle, and right parts $\sigma,\tau,{\sigma'}^{T}$ then $\alpha = \sigma \circ \tau \circ {\sigma'}^T$.
\end{proposition}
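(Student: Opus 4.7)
The plan is to verify directly that the three-way concatenation ${\sigma \circ \tau \circ {\sigma'}^T}$ reproduces $\alpha$ as a shape, that is, the vertex sets, edge sets, and distinguished tuples all match. Since the concatenation operation was defined as a binary operation, I interpret the three-way expression as $(\sigma \circ \tau) \circ {\sigma'}^T$. At each step one must check (i) the concatenation is well-defined, meaning the two underlying vertex sets intersect in exactly the shared separator and at least one side contains no edges entirely inside that separator, and (ii) after combining all three pieces, the vertex set, edge set, and distinguished tuples agree with those of $\alpha$.

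The most substantive step is proving the intersection identities $V(\sigma) \cap V(\tau) = S$, $V(\tau) \cap V({\sigma'}^T) = T$, and $V(\sigma) \cap V({\sigma'}^T) \subseteq S \cap T$. The forward containments $S \subseteq V(\sigma) \cap V(\tau)$ and $T \subseteq V(\tau) \cap V({\sigma'}^T)$ are immediate from the definitions. For the reverse of the first identity, suppose $v \in V(\sigma) \cap V(\tau) \setminus S$. Then $v$ is either in $T \setminus S$ or outside $S \cup T$. In the first case, leftmost-ness of $S$ says every path from $U_\alpha$ to a vertex of $T$ must intersect $S$, but $v \in V(\sigma)$ supplies a path from $U_\alpha$ to $v$ that meets $S$ only possibly at $v \notin S$, a contradiction. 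In the second case, membership $v \in V(\tau)$ forces $v$ to be incident to some $e \in E(\tau)$; but one checks that extending the $V(\sigma)$-path to $v$ by one edge places the neighbor in $V(\sigma)$ as well, so $e \in E(\sigma)$, contradicting $e \in E(\tau) = E(\alpha) \setminus (E(\sigma) \cup E(\sigma'))$. The identity for $V(\tau) \cap V({\sigma'}^T)$ is symmetric. For $V(\sigma) \cap V({\sigma'}^T) \subseteq S \cap T$, I would concatenate the defining paths for such a $v$ to form a walk from $U_\alpha$ to $V_\alpha$ and derive a contradiction using leftmost-ness of $S$ if $v \notin S$, or rightmost-ness of $T$ if $v \notin T$.

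Once the intersection identities are in hand, the remaining verifications are routine. The ``no edges entirely within the separator'' requirement for forming $\sigma \circ \tau$ holds because $\sigma$ is a left part, and the analogous requirement for $(\sigma \circ \tau) \circ {\sigma'}^T$ holds because ${\sigma'}^T$ is a right part. The distinguished tuples line up as $U_{\sigma \circ \tau \circ {\sigma'}^T} = U_\sigma = U_\alpha$ and $V_{\sigma \circ \tau \circ {\sigma'}^T} = V_{{\sigma'}^T} = V_\alpha$. The edge identity $E(\sigma) \cup E(\tau) \cup E({\sigma'}^T) = E(\alpha)$ is immediate from the definition of $E(\tau)$ as the leftover edges. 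For the vertex identity, each $v \in V(\alpha)$ is either in $U_\alpha \cup V_\alpha$ (and so lies in $V(\sigma)$ or $V({\sigma'}^T)$ via the length-zero path) or is incident to some edge of $E(\alpha)$, which belongs to one of $E(\sigma), E(\tau), E({\sigma'}^T)$, placing $v$ in the corresponding vertex set.

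The hard part is really the single intersection identity $V(\sigma) \cap V(\tau) = S$, and within it the sub-case of an ``interior'' vertex $v \notin S \cup T$, where one must combine the leftmost-ness of $S$ with the structural fact that any edge touching a vertex in $V(\sigma) \setminus S$ automatically lies in $E(\sigma)$. Everything else is either definitional bookkeeping or follows by a left-right symmetric argument.
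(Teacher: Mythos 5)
Your proof is correct, and it fills in what the paper leaves as an unstated check (the paper states this proposition as ``again easy to check'' with no proof at all). Your organization is the natural one: verify that the two binary concatenations $(\sigma\circ\tau)$ and $(\sigma\circ\tau)\circ{\sigma'}^T$ are legal (intersection conditions and the no-edges-inside-the-separator condition), and then match up vertices, edges, and distinguished tuples with $\alpha$. Your handling of the nontrivial intersection identity $V(\sigma)\cap V(\tau)=S$ is right: the $v\in T\setminus S$ case falls to the leftmost-ness of $S$, and the interior case $v\notin S\cup T$ falls to the observation that any edge of $E(\tau)$ touching such a $v$ would, by extending the defining path for $v$ by one step, have both endpoints in $V(\sigma)$ with $v\notin S$, hence lie in $E(\sigma)$---a contradiction. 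The argument for $V(\sigma)\cap V({\sigma'}^T)\subseteq S\cap T$ (concatenating defining paths into a walk from $U_\alpha$ to $V_\alpha$ and invoking leftmost/rightmost-ness) also goes through, provided you note that a walk from $U_\alpha$ to $V_\alpha$ contains a path and hence must cross both $S$ and $T$.

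Two small points worth making explicit. First, the forward containment $S\subseteq V(\sigma)$ is not purely definitional: it follows because a minimum-weight separator is in particular a \emph{minimal} separator, so for each $s\in S$ one can find a path from $U_\alpha$ to $V_\alpha$ avoiding $S\setminus\{s\}$, whose prefix up to $s$ witnesses $s\in V(\sigma)$. Second, your vertex-coverage argument (``every $v$ is in $U_\alpha\cup V_\alpha$ or incident to an edge'') implicitly assumes $\alpha$ has no isolated vertices outside $U_\alpha\cup V_\alpha$, i.e.\ that $\alpha$ is proper. This is the standing assumption in the paper's context (all shapes $\alpha$ with $\lambda_\alpha\neq 0$ are proper), but it is worth flagging, since the decomposition as defined literally drops such isolated vertices and the proposition would fail without this hypothesis.
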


We now discuss why $M_\alpha = M_{\sigma \circ \tau \circ {\sigma'}^T} \approx M_\sigma M_\tau M_{{\sigma'}^T}$ is only an approximation rather than an equality. Consider the difference $M_{\sigma}M_\tau M_{{\sigma'}^T} - M_{\sigma \circ \tau \circ {\sigma'}^T}$. The graph matrix $M_{\sigma \circ \tau \circ {\sigma'}^T}$ decomposes (by definition) into a sum over injective maps $\phi \, : \, V(\sigma \circ \tau \circ {\sigma'}^T) \rightarrow [n]$. Also by expanding definitions, the product $M_{\sigma}M_\tau M_{{\sigma'}^T}$ expands into a sum over triples of injective maps $(\phi_1, \phi_2, \phi_3)$, where $\phi_1 \, : \, V(\sigma) \rightarrow [n], \phi_2 \, : \, V(\tau) \rightarrow [n], \phi_3 \, : \, V(\sigma') \rightarrow [n]$ where $\phi_1$ and $\phi_2$ agree on $V_{\sigma} = U_{\tau}$ and $\phi_2$ and $\phi_3$ agree on $V_{\tau} = U_{{\sigma'}^T}$.

If they are combined into one map $\phi: V(\sigma \cup \tau \cup {\sigma'} ) \to [n]$, the resulting $\phi$ may not be injective because $\phi_1(V(\sigma)), \phi_2(V(\tau)), \phi_3(V({\sigma'}^T))$ may have nontrivial intersection (beyond $\phi_1(V_\sigma)$ and $\phi_2(V_\tau)$).
We call the resulting terms \emph{intersection terms} and handling them properly is a major part of the technical analysis.
\begin{remark}
    Actually, the approximation $M_\alpha = M_{\sigma \circ \tau \circ {\sigma'}^T} \approx M_\sigma M_\tau M_{{\sigma'}^T}$ is also off by a multiplicative constant because there is also a subtle issue involving the automorphism groups of these shapes. For now, we ignore this issue. For details about this issue, see Lemma \ref{lm:morthsimplereexpression}.
\end{remark}


%

\subsection{Shape Coefficient Matrices}
The idea for our analysis is as follows. Given a matrix-valued function $\Lambda$ which is symmetric under permutations of $[n]$, we write $\Lambda = \sum_{\alpha}{\lambda_{\alpha}M_{\alpha}}$. We then break each shape $\alpha$ up into left, middle, and right parts $\sigma$, $\tau$, and ${\sigma'}^{T}$.

For this analysis, we use \emph{shape coefficient matrices} $H_{\tau}$ whose rows and columns are indexed by left shapes and whose entries depend on the coefficients $\lambda_{\alpha}$. We choose these matrices so that
\[
\Lambda = \sum_{\tau}{H_{\tau}(\sigma,\sigma')M_{\sigma \circ \tau \circ {\sigma'}^T}} \approx \sum_{\tau}{H_{\tau}(\sigma,\sigma')M_{\sigma}M_{\tau}M_{{\sigma'}^T}}
\]
To set this up, we separate the possible middle parts $\tau$ into groups based on the size of $U_{\tau}$ and whether or not they are trivial.
\begin{definition}
    We define $\mathcal{I}_{mid}$ to be the set of all possible $U_{\tau}$. Here $\mathcal{I}_{mid}$ is the set of tuples of unspecified vertices of the form $U = (u_1,\ldots,u_k)$ where $0 \leq k \leq d$.
\end{definition}
\begin{definition}
    We say that a proper middle shape $\tau$ is trivial if $E(\tau) = \emptyset$ and $|U_{\tau} \cap V_{\tau}| = |U_{\tau}| = |V_{\tau}|$ (i.e. $V_{\tau}$ is a permutation of $U_{\tau}$).
\end{definition}
For simplicity, the only proper trivial middle parts $\tau$ we consider are shapes $Id_U$ corresponding to identity matrices.
\begin{definition}
    Given a tuple of unspecified vertices $U = (u_1,\ldots,u_{|U|})$ We define $Id_U$ to be the shape where $V(Id_U) = U$, $U_{Id_U} = V_{Id_U} = U$, and $E(Id_U) = \emptyset$.
\end{definition}
We group all of the proper non-trivial middle parts $\tau$ into sets $\mathcal{M}_U$ based on the size of $U_{\tau}$.
\begin{definition}
    Given a tuple of unspecified vertices $U = (u_1,\ldots,u_{|U|})$, we define $\mathcal{M}_U$ to be the set of proper non-trivial middle parts $\tau$ such that $U_{\tau}$ and $V_{\tau}$ have the same size as $U$. Note that $U_{\tau}$ and $V_{\tau}$ may intersect each other arbitrarily.
\end{definition}
With these definitions, we can now define our shape coefficient matrices.
\begin{definition}
    Given $U \in \mathcal{I}_{mid}$, we define $\mathcal{L}_U$ to be the set of left shapes $\sigma$ such that $|V_{\sigma}| = |U|$.
\end{definition}
\begin{definition}
    For each $U \in \mathcal{I}_{mid}$, we define the shape coefficient matrix $H_{Id_U}$ to be the matrix indexed by left shapes $\sigma,\sigma' \in \mathcal{L}_{U}$ with entries $H_{Id_U}(\sigma,\sigma') = \frac{1}{|U|!}\lambda_{\sigma \circ {\sigma'}^T}$
\end{definition}
\begin{definition}
    For each $U \in \mathcal{I}_{mid}$, for each $\tau \in \mathcal{M}_{U}$, we define the shape coefficient matrix $H_{\tau}$ to be the matrix indexed by left shapes $\sigma,\sigma' \in \mathcal{L}_{U}$ with entries $H_{\tau}(\sigma,\sigma') = \frac{1}{(|U|!)^2}\lambda_{\sigma \circ \tau \circ {\sigma'}^T}$
\end{definition}
With these shape coefficient matrices, we have the following decomposition of $\Lambda = \sum_{\alpha}{\lambda_{\alpha}M_{\alpha}}$.
\begin{lemma}
$\Lambda = \sum_{U \in \mathcal{I}_{mid}}{\sum_{\sigma,\sigma' \in \mathcal{L}_U}{H_{Id_U}(\sigma,\sigma')M_{\sigma \circ {\sigma'}^T}}} + \sum_{U \in \mathcal{I}_{mid}}{\sum_{\tau \in \mathcal{M}_{U}}{\sum_{\sigma,\sigma' \in \mathcal{L}_U}{H_{\tau}(\sigma,\sigma')M_{\sigma \circ \tau \circ  {\sigma'}^T}}}}$
\end{lemma}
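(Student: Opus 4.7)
The plan is to start from the decomposition $\Lambda = \sum_{\alpha} \lambda_\alpha M_\alpha$, which holds since the graph matrices $\{M_\alpha\}$ form an orthogonal basis for the $S_n$-symmetric matrix-valued functions and $\Lambda$ is such a function. I then reindex the sum by left/middle/right triples $(\sigma, \tau, \sigma'^T)$ using the combinatorial decomposition established just above. Concretely, for each shape $\alpha$, let $S$ and $T$ denote its leftmost and rightmost minimum vertex separators. For each pair of orderings $(O_S,O_T)$ of $S$ and $T$, the prescription in the excerpt produces a unique triple $(\sigma,\tau,\sigma'^T)$, where $\sigma \in \mathcal{L}_{S}$ is a left part, $\sigma'^T$ is a right part with $\sigma' \in \mathcal{L}_T$, and $\tau$ is either a proper non-trivial middle part in $\mathcal{M}_{S}$ (necessarily with $|U_\tau|=|S|=|T|$) or, when $\tau$ is trivial, the shape $Id_U$ with $U = S = T$ (using the convention in the excerpt that only $\tau = Id_U$ is kept among trivial middles). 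Conversely, concatenation $\sigma \circ \tau \circ \sigma'^T$ recovers $\alpha$ uniquely, and $\lambda_{\sigma \circ \tau \circ \sigma'^T} = \lambda_\alpha$.

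The second step is a multiplicity count. In the non-trivial case, as $(O_S, O_T)$ ranges over $|S|! \cdot |T|! = (|U|!)^2$ pairs, each gives a distinct triple $(\sigma,\tau,\sigma')$ because the left/right part remembers the ordering of its boundary tuple (these are all the triples in $\mathcal{L}_U \times \mathcal{M}_U \times \mathcal{L}_U$ whose concatenation equals $\alpha$). In the trivial case, $S = T$ and by the convention $\tau = Id_U$ we must take $O_T = O_S$, so there are exactly $|U|!$ triples producing $\alpha$. Summing $M_\alpha$ once with weight $\lambda_\alpha$ is therefore the same as summing over triples with weights $\frac{\lambda_\alpha}{(|U|!)^2}$ and $\frac{\lambda_\alpha}{|U|!}$ in the non-trivial and trivial cases respectively — which are exactly the entries $H_\tau(\sigma,\sigma')$ and $H_{Id_U}(\sigma,\sigma')$ defined in the excerpt.

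Putting these two steps together and splitting the sum into the trivial and non-trivial middle cases gives
\[
\Lambda \;=\; \sum_{U \in \mathcal{I}_{mid}}\sum_{\sigma,\sigma' \in \mathcal{L}_U} H_{Id_U}(\sigma,\sigma')\, M_{\sigma \circ {\sigma'}^T}
\;+\; \sum_{U \in \mathcal{I}_{mid}}\sum_{\tau \in \mathcal{M}_U}\sum_{\sigma,\sigma' \in \mathcal{L}_U} H_\tau(\sigma,\sigma')\, M_{\sigma \circ \tau \circ {\sigma'}^T},
\]
as claimed. The only real content is the bookkeeping of orderings, so I expect the main obstacle to be purely notational: carefully distinguishing between $\alpha$ (viewed as an abstract shape), the ordered boundary tuples $U_\alpha, V_\alpha$, and the orderings of the internal separators $S, T$ used to cut $\alpha$ into three pieces, to make sure the bijection between shapes (with ordering data) and triples $(\sigma,\tau,\sigma')$ with matching boundaries is really sharp and that the two multiplicity counts $|U|!$ vs.\ $(|U|!)^2$ are correctly matched to the two factors in the definitions of $H_{Id_U}$ and $H_\tau$.
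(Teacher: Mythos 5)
Your plan matches the paper's: start from $\Lambda = \sum_\alpha \lambda_\alpha M_\alpha$ and reindex by triples $(\sigma,\tau,\sigma')$ obtained by cutting $\alpha$ along orderings of the leftmost/rightmost minimum separators. The gap is in your second step: it is \emph{not} generally true that the $(|U|!)^2$ pairs of orderings $(O_S,O_T)$ produce $(|U|!)^2$ distinct elements of $\mathcal{L}_U \times \mathcal{M}_U \times \mathcal{L}_U$. Whenever $\alpha$ has an automorphism fixing $U_\alpha$ and $V_\alpha$ pointwise but permuting the separator, two orderings yield \emph{equivalent} shapes, so the shape-level multiplicity is strictly smaller than $(|U|!)^2$. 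For a concrete failure: take $U_\alpha = (u_1,u_2,u_3)$, $V_\alpha = (v_1,v_2,v_3)$, $S = T = \{s_1,s_2\}$ the unique minimum separator, with every $u_i$ joined to both $s_1,s_2$ and every $v_i$ joined to both $s_1,s_2$. The swap $s_1 \leftrightarrow s_2$ is an automorphism of $\alpha$, the two orderings of $S$ produce the \emph{same} shape $\sigma \in \mathcal{L}_U$, and your formula then gives coefficient $H_{Id_U}(\sigma,\sigma) = \lambda_\alpha/2$ for $M_\alpha$ rather than $\lambda_\alpha$.

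The missing multiplicity is exactly $|Decomp(\sigma,\tau,\sigma'^T)| = |Aut(\alpha)|/|Aut(\sigma)\times Aut(\tau)\times Aut(\sigma'^T)|$ as defined in Section~\ref{fullcoefficientmatrixsubsection}, and the paper flags precisely this in the remark preceding the lemma ("off by a multiplicative constant because there is also a subtle issue involving the automorphism groups of these shapes"). The deferred proof (Lemma~\ref{lm:determiningcoefficientmatrices}) avoids your gap by verifying coefficients \emph{ribbon-by-ribbon} rather than shape-by-shape: for each fixed ribbon $R$, separator orderings really are in bijection with ribbon decompositions $R_1 \circ R_2 \circ R_3 = R$, with no equivalences to collapse. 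Moreover the exact identity proved there is stated with $M^{orth}$, which by Lemma~\ref{lm:morthsimplereexpression} carries the very $|Decomp(\sigma,\tau,\sigma'^T)|$ factor your shape-level count drops. To repair your argument either lift the bijection to ribbons, or keep the shape-level bookkeeping but carry $|Decomp(\sigma,\tau,\sigma'^T)|$ through — in which case what you prove is the $M^{orth}$ identity, not the displayed one, and the displayed statement should be read as imprecise in the same sense the paper's remark warns about.
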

We defer the proof of this lemma to \cref{lm:determiningcoefficientmatrices}.

For technical reasons, we need to define one more operation to handle intersection terms. We call this operation \emph{the $-\gamma,\gamma$ operation.}

\begin{definition}
Given $U,V \in \mathcal{I}_{mid}$ where $|U| > |V|$, we define $\Gamma_{U,V}$ to be the set of left parts $\gamma$ such that $|U_{\gamma}| = |U|$ and $|V_{\gamma}| = |V|$.
\end{definition}
\begin{definition}
	Given $U,V \in \mathcal{I}_{mid}$ where $|U| > |V|$, a shape coefficient matrix $H_{Id_V}$, and a $\gamma \in \Gamma_{U,V}$, we define the shape coefficient matrix $H^{-\gamma,\gamma}_{Id_V}$ to be the matrix indexed by left shapes $\sigma,\sigma' \in \mathcal{L}_{U}$ with entries $H^{-\gamma,\gamma}_{Id_V}(\sigma,\sigma') = H(\sigma \circ \gamma, \sigma' \circ \gamma)$
\end{definition}

\subsection{Informal Theorem Statement}

We are now ready to state a simplified, qualitative version of our main theorem. For the full, quantitative version of our main theorem, see \cref{simplifiedmaintheorem}.

\begin{theorem}\label{informalmaintheoremstatement}
	There exist functions $f(\tau): \mathcal{M}_U \rightarrow \R$ and $f(\gamma): \Gamma_{U,V} \rightarrow \R$ depending on $n$ and other parameters such that if $\Lambda = \sum_{\alpha}{\lambda_{\alpha}M_{\alpha}}$ and the following conditions hold:
	\begin{enumerate}
		\item \psdmass For all $U \in \mathcal{I}_{mid}$,  $H_{Id_U} \succeq 0$
		\item \middleshapebounds For all $U \in \mathcal{I}_{mid}$ and all $\tau \in \mathcal{M}_{U}$,
		\[
		\left[ {\begin{array}{cc}
				H_{Id_U} & f(\tau)H_{\tau} \\
				f(\tau)H^T_{\tau} & H_{Id_U}
		\end{array}} \right] \succeq 0
		\]
		\item \intersectionbounds For all $U,V \in \mathcal{I}_{mid}$ such that $|U| > |V|$ and all $\gamma \in \Gamma_{U,V}$, $H^{-\gamma,\gamma}_{Id_{V_{\gamma}}} \preceq f(\gamma)H_{Id_{U_{\gamma}}}$
	\end{enumerate}
	then with probability at least $1-o(1)$ over $G \sim \{ \pm 1\}^{n \choose 2}$ it holds that $\Lambda(G) \succeq 0$.
\end{theorem}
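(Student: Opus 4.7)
The plan is to start from the decomposition (stated in the excerpt as Lemma 3.7)
\[
\Lambda \;=\; \sum_{U \in \mathcal{I}_{mid}} \sum_{\sigma,\sigma' \in \mathcal{L}_U} H_{Id_U}(\sigma,\sigma')\, M_{\sigma \circ {\sigma'}^T} \;+\; \sum_{U} \sum_{\tau \in \mathcal{M}_U} \sum_{\sigma,\sigma'} H_\tau(\sigma,\sigma')\, M_{\sigma \circ \tau \circ {\sigma'}^T},
\]
and to exhibit, with probability $1-o(1)$, a PSD factorization of the right-hand side. The anchor is the ``diagonal'' block: whenever $H_{Id_U}\succeq 0$, we may write $H_{Id_U}=\sum_i v_{U,i}v_{U,i}^T$ and then the approximate factorization $M_{\sigma \circ {\sigma'}^T} \approx M_\sigma M_{{\sigma'}^T}$ gives, modulo intersection error,
\[
\sum_{\sigma,\sigma'} H_{Id_U}(\sigma,\sigma')\, M_\sigma M_{{\sigma'}^T} \;=\; \sum_i \Bigl(\sum_\sigma v_{U,i}(\sigma) M_\sigma\Bigr)\Bigl(\sum_\sigma v_{U,i}(\sigma) M_\sigma\Bigr)^{\!T} \;\succeq\; 0.
\]
So Condition~1 alone handles the pure-diagonal contribution modulo the intersection terms generated when $V(\sigma)$ and $V({\sigma'}^T)$ identify vertices beyond $V_\sigma=U_{{\sigma'}^T}$.

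Next, for each non-trivial proper middle shape $\tau\in\mathcal{M}_U$, we upper-bound its contribution by a multiple of the diagonal PSD factorization. The pointwise template is the matrix square inequality
\[
(aM_\sigma - b M_{\sigma'}M_{\tau^T})(aM_\sigma - bM_{\sigma'}M_{\tau^T})^T \succeq 0
\ \Longrightarrow\ ab\bigl(M_\sigma M_\tau M_{{\sigma'}^T}+\text{transp.}\bigr) \preceq a^2 M_\sigma M_{\sigma^T} + b^2\|M_\tau\|^2\, M_{\sigma'}M_{{\sigma'}^T},
\]
which we lift to a matrix statement: choosing $f(\tau)$ at least as large as $\|M_\tau\|$ (times a combinatorial slack growing mildly with the SoS degree) and using the $2\times 2$ block PSD hypothesis of Condition~2, we get
\[
\sum_{\sigma,\sigma'} H_\tau(\sigma,\sigma')\bigl(M_\sigma M_\tau M_{{\sigma'}^T} + \text{transp.}\bigr) \;\preceq\; 2\sum_{\sigma,\sigma'} H_{Id_U}(\sigma,\sigma')\, M_\sigma M_{{\sigma'}^T}.
\]
Summing over $\tau$ costs a factor polynomial in the number of shapes, which will be absorbed into the margin between the two sides of Condition~2.

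The real work is controlling the \emph{intersection terms}. Each time we collapse $M_{\sigma\circ\tau\circ{\sigma'}^T}$ to $M_\sigma M_\tau M_{{\sigma'}^T}$ (or the other way), we incur error terms coming from non-injective joint maps $\phi:V(\sigma)\cup V(\tau)\cup V({\sigma'}^T)\to[n]$. These identifications correspond to left parts $\gamma\in\Gamma_{U,V}$ (which specify how the separator $U$ collapses into a smaller set $V$), and by the structure theory of graph matrices the intersection contribution is itself a sum of graph matrices of the form $M_{\tilde\sigma \circ \tilde\tau \circ \tilde{\sigma'}^T}$ with left/right parts factoring through $\gamma$. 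The coefficient of each such reduced term is exactly what the $-\gamma,\gamma$ operation computes: $H^{-\gamma,\gamma}_{Id_V}(\tilde\sigma,\tilde{\sigma'}) = H_{Id_V}(\tilde\sigma\circ\gamma,\tilde{\sigma'}\circ\gamma)$, and analogously for non-trivial middle parts. Condition~3 then asserts exactly that this collapsed coefficient matrix is dominated (spectrally) by $f(\gamma)\,H_{Id_U}$, so each intersection term is dominated by a PSD block at the \emph{smaller} separator size $|V|<|U|$, with overhead $f(\gamma)$. We thus set up an induction on $|U|$ in decreasing order: at each level we use Conditions~1 and~2 to produce a PSD bound on the contribution of shapes with separator $U$, and we route all intersection error into the level of the smaller separator $V$ via Condition~3. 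The total overhead incurred is a product of $f(\tau)$'s and $f(\gamma)$'s summed over intersection patterns and shapes, which (by the choice of $f$'s) still leaves a strictly positive margin.

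The main obstacle, and where the bulk of the technical work goes, is the intersection bookkeeping: one must formalize an ``intersection pattern'' between $V(\sigma), V(\tau), V({\sigma'}^T)$ as a canonically-chosen $\gamma$ plus a residual graph, prove that the resulting sum is again in the span of graph matrices with coefficients governed by $H^{-\gamma,\gamma}$, and then verify that the combinatorial cost $f(\gamma)$ needed to make Condition~3 provide a PSD upper bound is still small enough that summing over all $\gamma$'s and all shapes does not swamp the positive term from Condition~1. The functions $f(\tau)$ and $f(\gamma)$ must simultaneously (i) dominate $\|M_\tau\|$ and $\|M_\gamma\|$ (which is where the ``high probability'' enters, via norm bounds on graph matrices from \cite{AMP20}), (ii) absorb the multiplicity of shapes at a given size, and (iii) still allow Conditions~2 and~3 to be verifiable in the applications. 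Getting these three requirements to coexist is what drives the quantitative choices in the formal statement \cref{generalmaintheorem} and is the step I would expect to iterate on the most.
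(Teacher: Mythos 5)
Your proposal matches the paper's actual strategy closely: decompose $\Lambda$ via the shape coefficient matrices, obtain a PSD factorization for the diagonal blocks from Condition~1, use a $2\times 2$ block PSD argument (the square-completion trick with $a M_\sigma - b M_{\sigma'}M_{\tau^T}$) to charge the middle-shape contributions against the diagonal using Condition~2, expand the $M^{\mathrm{fact}}$/$M^{\mathrm{orth}}$ discrepancy as a sum over intersection patterns governed by left shapes $\gamma$, and bound the resulting coefficient matrices $H^{-\gamma,\gamma}_{Id_V}$ using Condition~3. The high-probability event being the graph matrix norm bounds from \cite{AMP20} is also correct, as is the observation that the functions $f(\tau),f(\gamma)$ must simultaneously dominate norms and absorb shape-counting overhead. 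This is the route the paper takes in Section~\ref{sec: proof_of_main}.

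There is one directional slip worth fixing before you try to formalize. You write that Condition~3 lets you ``route all intersection error into the level of the \emph{smaller} separator $V$'' and set up ``an induction on $|U|$ in decreasing order.'' In fact the direction is the opposite. If $\gamma \in \Gamma_{U,V}$ with $w(U) > w(V)$ then $H^{-\gamma,\gamma}_{Id_V}$ is indexed by left shapes $\sigma,\sigma' \in \mathcal{L}_U$ (the \emph{larger} separator) with entries $H_{Id_V}(\sigma\circ\gamma, \sigma'\circ\gamma)$; Condition~3 bounds it by $f(\gamma) H_{Id_U}$, i.e.\ by the diagonal block at the \emph{larger} separator. Concretely, in the paper's Corollary~\ref{notauintersectiontermcorollary} the intersection error for $M^{\mathrm{fact}}_{Id_U}(H_{Id_U}) - M^{\mathrm{orth}}_{Id_U}(H_{Id_U})$ is expressed in terms of $M^{\mathrm{fact}}_{Id_{U_\gamma}}(H^{-\gamma,\gamma}_{Id_U})$ with $\gamma \in \Gamma_{*,U}$, where $w(U_\gamma) > w(U)$. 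Intersection identifies vertices, shrinking the shape but (per the intersection tradeoff lemma) never decreasing the separator weight, so the errors climb to larger separators, not smaller ones. Relatedly, the paper does not run an induction on separator weight at all; it fully expands the iterated intersection corrections once (Theorem~\ref{thm:mfactmorthdifference}) and then globally charges all the resulting $M^{\mathrm{fact}}$ terms in one pass (Lemma~\ref{keyboundinglemma}). If you insist on framing it as an induction, the well-founded quantity is the decreasing total vertex count of the intersected shapes, not the separator size. With this correction your outline is sound and follows the paper's approach.
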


\begin{remark}
	Condition $1$ of \cref{informalmaintheoremstatement} will follow from condition $2$ but we state it explicitly since it will correspond to the dominating terms of the approximate PSD decomposition. And in applications, it will be both easy to verify and will shed light on the structure of the coefficients which in turn will be useful for verifying conditions $2$ and $3$.
\end{remark}

\begin{remark}
As we will demonstrate in the remainder of this paper, our machinery works well when the coefficients $\lambda_{\alpha}$ has
some decay for each vertex or edge in the shape. In many settings, this can be done quite easily by adding noise to the distribution, such as resampling part of the input, or by lowering the parameters slightly, such as $m \le n^{k/4 - \eps}$ instead of $m \le n^{k/4}$.
\end{remark}

\subsubsection{Choice of functions $f(\tau)$ and $f(\gam)$}

In a rough sense, $f(\tau)$ measures the blow-up in the norm by using $M_{\tau}$ instead of $M_{Id_U}$ in the corresponding term of the Fourier decomposition. So we choose $f(\tau)$ to be $\norm{M_{\tau}}$, upto lower order terms. Our second condition verifies that the coefficients that arise because of this $\tau$ (which are encoded in $H_{\tau}$) are sufficiently small to overpower this norm blowup.

The fact that $\norm{M_{\tau}}$ is equal to $\tilde{O}(n^{\frac{|V(\tau)|-|U_{\tau}|}{2}})$ has been shown in previous works \cite{BHKKMP16, AMP20}. So, we choose $f(\tau)$ to be $\tilde{O}(n^{\frac{|V(\tau)|-|U_{\tau}|}{2}})$ where the problem instance is on $G_{n, 1/2}$. For problems with Gaussian or other inputs, similar forms of $f(\tau)$ can be used, which have been shown formally in the work of \cite{AMP20}. When we state the main theorem in general, we use a single $f(\tau)$ that incorporates all of these settings.

$f(\gam)$ is a bit trickier to describe. In our analysis, we roughly collect intersection terms from the approximate PSD decomposition and charge them to shapes of the form $\sig \circ \gam \circ \gam^T \circ \sig'^T$. Using the same idea as the previous step, we charge these to shapes with trivial middle shapes. Roughly, $f(\gam)$ for a fixed $\gam$ upper bounds the blowup from the norms of the original shape as compared to the new intersection shape. The third condition argues that the the original coefficients are sufficiently small to compensate for these blowups.

For problems on $G_{n, 1/2}$, we set $f(\gam) = \tilde{O}(n^{|V(\gamma) \setminus U_{\gamma}|})$. For problems with Gaussian inputs, we choose essentially the same function, but they fall under the umbrella of generalized graph matrices, where $V(\gam)$ and $U_{\gam}$ are defined accordingly. Indeed, in our main theorem, we encompass both these settings with a single choice of $f(\gam)$.

\subsection{An informal application to planted clique}
Before we move on to further definitions needed for a more complete statement of the main theorem, we present an informal example.

\begin{example}
	When the pseudo-calibration method is applied to prove an SoS lower bound for the planted clique problem in $n$ node graphs with clique size $k$, as in \cite{BHKKMP16}, the matrix-valued function which results is $\Lambda =  \sum_{\alpha \, : \, |V(\alpha)| \leq t}{\left(\frac{k}{n}\right)^{|V(\alpha)|}M_{\alpha}}$ where $t \approx \log(n)$.
	One may then compute that the matrices $H_{Id_U}$ and $H_{\tau}$ are as follows (at least so long as $|V(\sigma)|,|V(\tau)|,|V(\sigma')| \ll t$; we ignore this detail for now).
	For all $r \in [0,\frac{d}{2}]$,
	\begin{enumerate}
		\item For $U$ with $|U|  = r$, $H_{Id_U}(\sigma,\sigma') = \left(\frac{k}{n}\right)^{|V(\sigma)| + |V(\sigma')| - r}$
		\item For all proper, non-trivial middle shapes $\tau$ such that $|U_{\tau}| = |V_{\tau}| = r$,
		\[
		H_{\tau}(\sigma,\sigma') = \left(\frac{k}{n}\right)^{|V(\sigma)| + |V(\sigma')| + |V(\tau)|- 2r}
		\]
	\end{enumerate}
	Defining $v_r$ to be the vector such that $v_r(\sigma) = \left(\frac{k}{n}\right)^{|V(\sigma)| - \frac{r}{2}}$, we have that
	\begin{enumerate}
		\item For $U$ with $|U|  = r$, $H_{Id_U} = {v_{|U|}}{v^T_{|U|}}$
		\item For all proper, non-trivial middle shapes $\tau$ such that $|U_{\tau}| = |V_{\tau}| = r$, $H_{\tau} = \left(\frac{k}{n}\right)^{|V(\tau)|- r}{v_r}{v^T_r}$
		\item For all left parts $\gamma$, $H^{-\gamma,\gamma}_{Id_{V_{\gamma}}} = \left(\frac{k}{n}\right)^{2|V(\gamma)| - |U_{\gamma}| - |V_{\gamma}|}v_{|U_{\gamma}|}v^{T}_{|U_{\gamma}|}$
	\end{enumerate}
	It turns out in this setting that we can take $f(\tau)$ to be $\tilde{O}(n^{\frac{|V(\tau)|-|U_{\tau}|}{2}})$ and $f(\gamma)$ to be $\tilde{O}(n^{|V(\gamma) \setminus U_{\gamma}|})$. Thus, as long as $k \ll \sqrt{n}$,
	\begin{enumerate}
		\item For any $U$ and all $\tau$ such that $V_{\tau} \neq U_{\tau}$ with $|U_{\tau}| = |V_{\tau}| = |U|$, $f(\tau)H_{\tau} \preceq H_{Id_U}$.
		\item For all non-trivial left parts $\gamma$, $H^{-\gamma,\gamma}_{Id_{V_{\gamma}}} \preceq f(\gamma)H_{Id_{U_{\gamma}}}$
	\end{enumerate}
\end{example}
\begin{remark}\label{rmk: planted_clique_failure}
This does not quite satisfy the conditions of Theorem \ref{informalmaintheoremstatement} because there are $\tau$ such that $V_{\tau} = U_{\tau}$ but which are non-trivial because $E(\tau) \neq \emptyset$. For these $\tau$, condition 2 of Theorem \ref{informalmaintheoremstatement} fails. In order to prove their SoS lower bounds for planted clique, \cite{BHKKMP16} handle this issue by grouping together all of the $\tau$ where $V_{\tau} = U_{\tau}$ into the indicator function for whether $V_{\tau} = U_{\tau}$ is a clique.

Since this issue is specific to planted clique, we don't try to incorporate it into the machinery to avoid losing generality.

\end{remark}

For the sake of exposition, a detailed analysis with figures of all the shapes and all the coefficient matrices that appear for the degree-$4$ SoS lower bound for planted clique is given in \cref{sec: deg_4_planted_clique}. Note that we present these details purely for the interested reader and they are not needed to apply the machinery.

\subsection{Generalizing the machinery}
In this section, we restricted ourselves to the case when the input is from $\{-1,1\}^{\binom{n}{2}}$ for simplicity. However, for our results we will need to handle more general types of inputs. We now briefly describe which kinds of inputs we will need to handle and how we handle them.
\begin{enumerate}
    \item In general, the entries of the input may be labeled by more than $2$ indices. For example, for tensor PCA on order $3$ tensors, the entries of the input are indexed by $3$ indices. To handle this, we will have shapes which have hyperedges rather than edges.
    \item In general, the entries of the input will come from a distribution $\Omega$ rather than being $\pm{1}$. To handle this, we will take an orthonormal basis $\{h_k\}$ for $\Omega$. We will then give each edge/hyperedge a label $l$ to specify which polynomial $h_l$ should be applied to that entry of the input.
    \item In general, there may be $t$ different types of indices rather than just one type of index. In this case, the symmetry group will be $S_{n_1} \times \ldots \times S_{n_t}$ rather than $S_n$. To handle this, we will have shapes with different types of vertices.
\end{enumerate}
We formally make these generalizations in \cref{sec: technical_def_and_main_theorem}.

\subsection{Further definitions needed for our applications}

We will describe some more notations and definitions that will be useful to us to describe the qualitative bounds for our applications. For each of our applications, we will describe the corresponding modifications needed to the definitions already in place and present new definitions where necessary.

\subsubsection{Planted slightly denser subgraph}

Since the input is a graph $G \in \{-1, 1\}^{\binom{[n]}{2}}$, most of what we introduced already apply to this setting. To describe the moment matrix, we need to define the truncation parameter.

\begin{definition}[Truncation parameters]
For integers $D_{sos}, D_V \ge 0$, say that a shape $\al$ satisfies the truncation parameters $D_{sos}, D_V$ if
\begin{itemize}
    \item The degrees of the monomials that $U_{\al}$ and $V_{\al}$ correspond to, are at most $\frac{D_{sos}}{2}$
    \item The left part $\sig$, the middle part $\tau$ and the right part $\sig'$ of $\al$ satisfy $|V(\sig)|, |V(\tau)|, |V(\sig')| \le D_V$
\end{itemize}
\end{definition}

\subsubsection{Tensor PCA}

We consider the input to be a tensor $A \in \RR^{[n]^k}$. The input entries are now sampled from the distribution $\GN(0, 1)$ instead of $\{-1, 1\}$. So, we will work with the Hermite basis of polynomials.
Let the standard unnormalized Hermite polynomials be denoted as $h_0(x) = 1, h_1(x) = x, h_2(x) = x^2 - 1, \ldots$. Then, we work with the basis $h_a(A) \defeq \prod_{e \in [n]^k} h_e(A_e)$ over $a \in \NN^{[n]^k}$. Accordingly, we will modify the graphs that represent ribbons (and by extension, shapes), to have labeled hyperedges of arity $k$. So, an hyperedge $e$ with a label $t$ will correspond to the hermite polynomial $h_t(A_e)$.

\begin{definition}[Hyperedges]
Instead of standard edges, we will have labeled hyperedges of arity $k$ in the underlying graphs for our ribbons as well as shapes. The label for an hyperedge $e$, denoted $l_e$, is an element of $\NN$ which will correspond to the Hermite polynomial being evaluated on that entry.
\end{definition}

Note that our hyperedges are ordered since the tensor $A$ is not necessarily symmetric.

For variables $x_1, \ldots, x_n$, the rows and columns of our moment matrix will now correspond to monomials of the form $\prod_{i \le n} x_i^{p_i}$ for $p_i \ge 0$. To capture this, we use the notion of index shape pieces and index shapes. Informally, we split the above monomial product into groups based on their powers and each such group will form an index shape piece.

\begin{definition}[Index shape piece]
    An index shape piece $U_i= ((U_{i, 1}, \ldots, U_{i, t}), p_i)$ is a tuple of indices $(U_{i, 1}, \ldots, U_{i, t})$ along with a power $p_i \in \NN$. Let $V(U_i)$ be the set $\{U_{i, 1}, \ldots, U_{i, t}\}$ of vertices of this index shape piece. When clear from context, we use $U_i$ instead of $V(U_i)$.
\end{definition}

If we realize $U_{i, 1}, \ldots, U_{i, t}$ to be indices $a_1, \ldots, a_t \in [n]$, then, this realization of this index shape piece corresponds to the monomial $\prod_{j \le t} x_{a_j}^{p_i}$.

\begin{definition}[Index shape]
An index shape $U$ is a set of index shape pieces $U_i$ that have different powers. Let $V(U)$ be the set of vertices $\cup_i V(U_i)$. When clear from context, we use $U$ instead of $V(U)$.
\end{definition}

Observe that each realization of an index shape corresponds to a row or column of the moment matrix.

\begin{definition}
For two index shapes $U, V$, we write $U \equiv V$ if for all powers $p$, the index shape pieces of power $p$ in $U$ and $V$ have the same length.
\end{definition}

\begin{definition}
Define $\calI_{mid}$ to be the set of all index shapes $U$ that contain only index shape pieces of power $1$.
\end{definition}

In the definition of shapes, the distinguished set of vertices should now be replaced by index shapes.

\begin{definition}[Shapes]
Shapes are tuples $\al = (H_{\al}, U_{\al}, V_{\al})$ where $H_{\al}$ is a graph with hyperedges of arity $k$ and $U_{\al}, V_{\al}$ are index shapes such that $U_{\al}, V_{\al} \subseteq V(H_{\al})$.
\end{definition}

\begin{definition}[Proper shape]
A shape $\al$ is proper if it has no isolated vertices outside $U_{\al} \cup V_{\al}$, no multi-edges and all the edges have a nonzero label.
\end{definition}

To define the notion of vertex separators, we modify the notion of paths for hyperedges.

\begin{definition}[Path]
A path is a sequence of vertices $u_1, \ldots, u_t$ such that $u_i, u_{i + 1}$ are in the same hyperedge, for all $i \le t - 1$.
\end{definition}

The notions of vertex separator and decomposition into left, middle and right parts are identically defined with the above notion of hyperedges and paths. In \cref{sec: technical_def_and_main_theorem}, we will show that they are well defined.

In the definition of trivial shape $\tau$, we now require $U_{\tau} \equiv V_{\tau}$. For $U \in \calI_{mid}$, $\calM_U$ will be the set of proper non-trivial middle parts $\tau$ with $U_{\tau} \equiv V_{\tau} \equiv U$ and $\calL_U$ will be the set of left parts $\sig$ such that $V_{\sig} \equiv U$. Similarly, for $U, V \in \calI_{mid}$, $\calL_{U, V}$ will be the set of left parts $\gam$ such that $U_{\gam} \equiv U$ and $V_{\gam} \equiv V$.

In order to define the moment matrix, we need to truncate our shapes based on the number of vertices and the labels on our hyperedges. So, we make the following definition.

\begin{definition}[Truncation parameters]
For integers $D_{sos}, D_V, D_E \ge 0$, say that a shape $\al$ satisfies the truncation parameters $D_{sos}, D_V, D_E$ if
\begin{itemize}
    \item The degrees of the monomials that $U_{\al}$ and $V_{\al}$ correspond to, are at most $\frac{D_{sos}}{2}$
    \item The left part $\sig$, the middle part $\tau$ and the right part $\sig'^T$ of $\al$ satisfy $|V(\sig)|, |V(\tau)|, |V(\sig'^T)| \le D_V$
    \item For each $e \in E(\al)$, $l_e \le D_E$.
\end{itemize}
\end{definition}

\subsubsection{Sparse PCA}

We consider the $m$ vectors $v_1, \ldots, v_m \in \RR^d$ to be the input. Similar to Tensor PCA, we will work with the Hermite basis of polynomials since the entries are sampled from the distribution $\GN(0, 1)$.
In particular, if we denote the unnormalized Hermite polynomials by $h_0(x) = 1, h_1(x) = x, h_2(x) = x^2 - 1, \ldots$, then, we work with the basis $h_a(v) \defeq \prod_{i \in [m], j \in [n]} h_{a_{i, j}}(v_{i, j})$ over $a \in \NN^{m \times n}$. To capture this basis, we will modify the graphs that represent ribbons (and by extension, shapes), to be bipartite graphs with two types of vertices, and have labeled edges that go across vertices of different types. So, an edge $(i, j)$ with label $t$ between a vertex $i$ of type $1$ and a vertex $j$ of type $2$ will correspond to $h_t(v_{i, j})$.

\begin{definition}[Vertices]
We will have two types of vertices, the vertices corresponding to the $m$ input vectors that we call type $1$ vertices and the vertices corresponding to ambient dimension of the space that we call type $2$ vertices.
\end{definition}

\begin{definition}[Edges]
Edges will go across vertices of different types, thereby forming a bipartite graph. An edge between a type $1$ vertex $i$ and a type 2 vertex $j$ corresonds to the input entry $v_{i, j}$. Each edge will have a label in $\NN$ corresponding to the Hermite polynomial evaluated on that entry.
\end{definition}

We will have variables $x_1, \ldots, x_n$ in our SoS program, so we will work with index shape pieces and index shapes as in Tensor PCA, since the rows and columns of our moment matrix will now correspond to monomials of the form $\prod_{i \le n} x_i^{p_i}$ for $p_i \ge 0$. But since in our decompositions into left, right and middle parts, we will have type $2$ vertices as well in the vertex separators, we will define a generalized notion of index shape pieces and index shapes.

\begin{definition}[Index shape piece]
    An index shape piece $U_i= ((U_{i, 1}, \ldots, U_{i, t}), t_i, p_i)$ is a tuple of indices $(U_{i, 1}, \ldots, U_{i, t})$ along a type $t_i \in \{1, 2\}$ with a power $p_i \in \NN$. Let $V(U_i)$ be the set $\{U_{i, 1}, \ldots, U_{i, t}\}$ of vertices of this index shape piece. When clear from context, we use $U_i$ instead of $V(U_i)$.
\end{definition}

For an index shape piece $((U_{i, 1}, \ldots, U_{i, t}), t_i, p_i)$ with type $t_i = 2$, if we realize $U_{i_1}, \ldots, U_{i_t}$ to be indices $a_1, \ldots, a_t \in [n]$, then, this index shape pieces correspond this to the monomial $\prod_{j \le n} x_{a_j}^{p_i}$.

\begin{definition}[Index shape]
An index shape $U$ is a set of index shape pieces $U_i$ that have either have different types or different powers. Let $V(U)$ be the set of vertices $\cup_i V(U_i)$. When clear from context, we use $U$ instead of $V(U)$.
\end{definition}

Observe that each realization of an index shape corresponds to a row or column of the moment matrix. For our moment matrix, the only nonzero rows correspond to index shapes that have only index shape pieces of type $2$, since the only SoS variables are $x_1 \ldots, x_n$, but in order to do our analysis, we need to work with the generalized notion of index shapes that allow index shape pieces of both types.

\begin{definition}
For two index shapes $U, V$, we write $U \equiv V$ if for all types $t$ and all powers $p$, the index shape pieces of type $t$ and power $p$ in $U$ and $V$ have the same length.
\end{definition}

\begin{definition}
Define $\calI_{mid}$ to be the set of all index shapes $U$ that contain only index shape pieces of power $1$.
\end{definition}

Since we are working with standard graphs, the notion of path and vertex separator need no modifications, but we will now use the minimum weight vertex separator instead of the minimum vertex separator where we define the weight as follows.

\begin{definition}[Weight of an index shape]
Suppose we have an index shape $U = \{U_1, U_2\} \in \calI_{mid}$ where $U_1 = ((U_{1, 1}, \ldots, U_{1, |U_1|}), 1, 1)$ is an index shape piece of type $1$ and $U_2 = ((U_{2, 1}, \ldots, U_{2, |U_2|}), 2, 1)$ is an index shape piece of type $2$. Then, define the weight of this index shape to be $w(U) = \sqrt{m}^{|U_1|}\sqrt{n}^{|U_2|}$.
\end{definition}

We now give the modified definition of shapes.

\begin{definition}[Shapes]
Shapes are tuples $\al = (H_{\al}, U_{\al}, V_{\al})$ where $H_{\al}$ is a graph with two types of vertices, has labeled edges only across vertices of different types and $U_{\al}, V_{\al}$ are index shapes such that $U_{\al}, V_{\al} \subseteq V(H_{\al})$.
\end{definition}

\begin{definition}[Proper shape]
A shape $\al$ is proper if it has no isolated vertices outside $U_{\al} \cup V_{\al}$, no multi-edges and all the edges have a nonzero label.
\end{definition}

In \cref{sec: technical_def_and_main_theorem}, we will show that with this new definition of weight and shapes, any shape $\al$ has a unique decomposition into $\sig \circ \tau \circ \sig'^T$ where $\sig, \tau, \sig'^T$ are left, middle and right parts respectively. Here, $\tau$ may possibly be improper.

In the definition of trivial shape $\tau$, we now require $U_{\tau} \equiv V_{\tau}$. For $U \in \calI_{mid}$, $\calM_U$ will be the set of proper non-trivial middle parts $\tau$ with $U_{\tau} \equiv V_{\tau} \equiv U$ and $\calL_U$ will be the set of left parts $\sig$ such that $V_{\sig} \equiv U$. Similarly, for $U, V \in \calI_{mid}$, $\calL_{U, V}$ will be the set of left parts $\gam$ such that $U_{\gam} \equiv U$ and $V_{\gam} \equiv V$.

Finally, in order to define the moment matrix, we need to truncate our shapes based on the number of vertices and the labels on our edges. So, we make the following definition.

\begin{definition}[Truncation parameters]
For integers $D_{sos}, D_V, D_E \ge 0$, say that a shape $\al$ satisfies the truncation parameters $D_{sos}, D_V, D_E$ if
\begin{itemize}
    \item The degrees of the monomials that $U_{\al}$ and $V_{\al}$ correspond to, are at most $\frac{D_{sos}}{2}$
    \item The left part $\sig$, the middle part $\tau$ and the right part $\sig'^T$ of $\al$ satisfy $|V(\sig)|, |V(\tau)|, |V(\sig'^T)| \le D_V$
    \item For each $e \in E(\al)$, $l_e \le D_E$.
\end{itemize}
\end{definition}

\subsubsection{Relaxing the third condition}\label{sec: hgamma_qual}

In \cref{informalmaintheoremstatement}, the third qualitative condition we'd like to show is as follows:
For all $U,V \in \mathcal{I}_{mid}$ such that $|U| > |V|$ and all $\gamma \in \Gamma_{U,V}$, $H^{-\gamma,\gamma}_{Id_{V_{\gamma}}} \preceq f(\gamma)H_{Id_{U_{\gamma}}}$.
For technical reasons, we won't be able to show this directly. To handle this, we instead work with a slight modification of $H_{Id_{U_{\gamma}}}$, a matrix $H_{\gamma}'$ that's very close to $H_{Id_{U_{\gamma}}}$. So, what we will end up showing is:
For all $U,V \in \mathcal{I}_{mid}$ such that $|U| > |V|$ and all $\gamma \in \Gamma_{U,V}$, $H^{-\gamma,\gamma}_{Id_{V_{\gamma}}} \preceq f(\gamma)H'_{\gamma}$.

Let $D_V$ be the truncation parameter. A canonical choice for $H'_{\gamma}$ is to take
\begin{enumerate}
	\item $H'_{\gamma}(\sigma,\sigma') = H_{Id_U}(\sigma, \sigma')$ whenever $|V(\sigma \circ \gamma)| \leq D_V$ and $|V(\sigma' \circ \gamma)| \leq D_V$.
	\item $H'_{\gamma}(\sigma,\sigma') = 0$ whenever $|V(\sigma \circ \gamma)| > D_V$ or $|V(\sigma' \circ \gamma)| > D_V$.
\end{enumerate}

With this choice, $H_{\gamma}'$ is the same as $H_{Id_{U_{\gamma}}}$ upto truncation error. We will formally bound the errors in the quantitative sections after we introduce the full machinery.

	\section{Application: Planted slightly denser subgraph}\label{sec: plds_qual}

	We first pseudocalibrate with respect to the random and planted distributions, enabling us to write the moment matrix in terms of graph matrices. Then, we show the qualitative bounds needed in our machinery.

	\subsection{Pseudo-calibration}

We will pseudo-calibrate with respect the following pair of random and planted distributions which we denote $\nu$ and $\mu$ respectively.

\PLDSdistributions*

We assume that the input is given as $G_{i, j}$ for $i, j \in \binom{[n]}{2}$ where $G_{i, j}$ is $1$ if the edge $(i, j)$ is present in the graph and $-1$ otherwise. We work with the Fourier basis $\chi_E$ defined as $\chi_E(G) \defeq \prod_{(i, j) \in E} G_{i, j}$. For a subset $I \subseteq [n]$, define $x_I \defeq \prod_{i \in I} x_I$.

\begin{lemma}
Let $I \subseteq [n], E \subseteq \binom{[n]}{2}$. Then,
\[\EE_{\mu}[x_I \chi_E(G)] = \left(\frac{k}{n}\right)^{|I \cup V(E)|} (2p - 1)^{|E|}\]
\end{lemma}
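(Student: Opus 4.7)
The plan is a direct conditioning computation, exploiting the conditional independence structure of the planted distribution. First I would interpret $x_I = \prod_{i \in I} x_i$ as the indicator of the event $I \subseteq S$ (this matches the standard convention under which pseudo-calibration is being applied: the hidden variable $x_i$ signals membership in the planted set $S$). Then I rewrite
\[
\EE_{\mu}[x_I \, \chi_E(G)] = \EE_{\mu}\!\left[\mathbf{1}[I \subseteq S] \prod_{(i,j) \in E} G_{i,j}\right],
\]
and condition on the random subset $S$.

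Next I would compute $\EE_{\mu}[G_{i,j} \mid S]$ edge by edge. By construction, an edge $(i,j)$ with $\{i,j\} \subseteq S$ is present independently with probability $p$, so $\EE[G_{i,j} \mid S] = 2p-1$, while an edge with at least one endpoint outside $S$ retains its $G(n,\tfrac12)$ marginal and has $\EE[G_{i,j} \mid S] = 0$. Because the $G_{i,j}$ are conditionally independent given $S$ (the re-randomization acts on each edge separately), the conditional expectation of the product factorizes:
\[
\EE_{\mu}\!\left[\prod_{(i,j) \in E} G_{i,j} \;\middle|\; S\right] \;=\; \mathbf{1}[V(E) \subseteq S] \cdot (2p-1)^{|E|}.
\]

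Combining this with the indicator $\mathbf{1}[I \subseteq S]$ collapses everything to the single event $I \cup V(E) \subseteq S$. Since each vertex lies in $S$ independently with probability $k/n$,
\[
\Pr_{\mu}[I \cup V(E) \subseteq S] \;=\; \left(\frac{k}{n}\right)^{|I \cup V(E)|},
\]
and multiplying by $(2p-1)^{|E|}$ yields the claimed formula.

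There is no real obstacle here; the only thing to be careful about is the edge-case handling of edges with exactly one endpoint in $S$ (which contribute a zero factor and kill the whole expectation unless $V(E) \subseteq S$) and the convention that $x_i$ is interpreted as the $0/1$ indicator of $i \in S$ so that $x_I$ is genuinely an indicator rather than a $\pm 1$ character. Once these are pinned down, the computation is a one-line application of tower law plus independence of the vertex inclusions.
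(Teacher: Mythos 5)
Your proof is correct and takes essentially the same route as the paper: condition on $S$, observe that each edge inside $S$ has conditional expectation $2p-1$ while each edge touching the complement has conditional expectation $0$, and combine with the indicator $\mathbf{1}[I \subseteq S]$ and the independent Bernoulli$(k/n)$ vertex inclusions. The paper phrases the conditioning as a case split on the single event $\{I \cup V(E) \subseteq S\}$ rather than integrating over $S$, but that is only an organizational difference; the underlying computation is identical, and your explicit remark that $x_I$ should be read as a $0/1$ indicator is the right clarification to make.
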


\begin{proof}
When we sample $(G, S)$ from $\mu$, we condition on whether $I \cup V(E) \subseteq S$.
\begin{align*}
\EE_{(G, S)\sim \mu}[x_I \chi_E(G)] &= Pr_{(G, S) \sim \mu}[I \cup V(E) \subseteq S]\EE_{(G, S) \sim \mu}[x_I\chi_E(G)|I \cup V(E) \subseteq S]\\
&\qquad + Pr_{(G, S) \sim \mu}[I \cup V(E) \not\subseteq S]\EE_{(G, S) \sim \mu}[x_I\chi_E(G)|I \cup V(E) \not\subseteq S]
\end{align*}
We claim that the second term is $0$. In particular, $\EE_{(G, S) \sim \mu}[x_I\chi_E(G)|I \cup V(E) \not\subseteq S] = 0$ because when $I \cup V(E) \not\subseteq S$, either $S$ doesn't contain a vertex in $I$ or an edge $(i, j) \in E$ is outside $S$. If $S$ doesn't contain a vertex in $I$, then $x_I = 0$ and hence, the quantity is $0$. And if an edge $(i, j) \in E$ is outside $S$, since this edge is sampled with probability $\frac{1}{2}$, by taking expectations, the quantity $\EE_{(G, S) \sim \mu}[x_I\chi_E(G)|I \cup V(E) \not\subseteq S]$ is $0$.

Finally, note that $Pr_{(G, S) \sim \mu}[I \cup V(E) \subseteq S] = \left(\frac{k}{n}\right)^{|I \cup V(E)|}$ and
\[\EE_{(G, S) \sim \mu}[x_I\chi_E(G)|I \cup V(E) \subseteq S] = \EE_{(G, S) \sim \mu}[\chi_E(G)|V(E) \subseteq S] = (2p - 1)^{|E|}\]
The last equality follows because for each edge $e \in E$, since $e$ is present independently with probability $p$, the expected value of $\chi_e$ is $1\cdot p + (-1) \cdot (1 - p) = 2p - 1$.
\end{proof}


Define the degree of SoS to be $D_{sos} = n^{C_{sos}\eps}$ for some constant $C_{sos} > 0$ that we choose later. And define the truncation parameter to be $D_V = n^{C_V\eps}$ for some constant $C_V > 0$.

\begin{remk}[Choice of parameters]\label{rmk: choice_of_params1}
	We first set $\eps$ to be a sufficiently small constant. Based on this choice, we will set $C_V$ to be a sufficiently small constant to satisfy all the inequalities we use in our proof. Based on these choices, we can choose $C_{sos}$ to be sufficiently small to satisfy the inequalities we use.
\end{remk}

We will now describe the decomposition of the moment matrix $\Lda$.

\begin{definition}\label{def: plds_coeffs}
	If a shape $\alpha$ satisfies the following properties:
	\begin{itemize}
		\item $\alpha$ is proper,
		\item $\alpha$ satisfies the truncation parameter $D_{sos}, D_V$.
	\end{itemize}
	then define \[\lambda_{\alpha} = \left(\frac{k}{n}\right)^{|V(\al)|}  (2p - 1)^{|E(\al)|}\]
\end{definition}

\begin{corollary}
	$\Lambda = \sum \lda_{\al}M_{\al}$.
\end{corollary}

\subsection{Qualitative machinery bounds}

In this section, we will prove the PSD mass condition and the qualitative versions of the middle shape and intersection term bounds.

\begin{restatable}[PSD mass]{lemma}{PLDSone}\label{lem: plds_cond1}
	For all $U \in \calI_{mid}$, $H_{Id_U} \succeq 0$
\end{restatable}

While this is easy to prove directly, we would like to introduce appropriate notation so that this lemma as well as the qualitative bounds to follow are immediate. 
Therefore, we state the qualitative conditions next and then prove them all together. 
Now, we define the following quantities which capture the contribution of the vertices within $\tau, \gam$ to the Fourier coefficients.

\begin{restatable}{definition}{PLDSstau}\label{def: plds_stau}
	For $U \in \calI_{mid}$ and $\tau \in \calM_U$, define
	$S(\tau) = \left(\frac{k}{n}\right)^{|V(\tau)| - |U_{\tau}|}(2p - 1)^{|E(\tau)|}$.
	And for all $U, V \in \calI_{mid}$ where $w(U) > w(V)$ and $\gam \in \Gam_{U, V}$, define
	$S(\gam) = \left(\frac{k}{n}\right)^{|V(\gam)| - \frac{|U_{\gam}| + |V_{\gam}|}{2}}(2p - 1)^{|E(\gam)|}$.
\end{restatable}

We can now state our qualitative bounds, which we prove shortly.

\begin{restatable}[Qualitative middle shape bounds]{lemma}{PLDStwosimplified}\label{lem: plds_cond2_simplified}
	For all $U \in \calI_{mid}$ and $\tau \in \calM_U$,
	\[
	\begin{bmatrix}
		\frac{S(\tau)}{|Aut(U)|}H_{Id_U} & H_{\tau}\\
		H_{\tau}^T & \frac{S(\tau)}{|Aut(U)|}H_{Id_U}
	\end{bmatrix}
	\succeq 0
	\]
\end{restatable}

In the following qualitative intersection term bounds, we use the canonical definition of $H_{\gam}'$ from \cref{sec: hgamma_qual}.

\begin{restatable}[Qualitative intersection term bounds]{lemma}{PLDSthreesimplified}\label{lem: plds_cond3_simplified}
	For all $U, V \in \calI_{mid}$ where $w(U) > w(V)$ and all $\gam \in \Gam_{U, V}$,
	\[\frac{|Aut(V)|}{|Aut(U)|}\cdot\frac{1}{S(\gam)^2}H_{Id_V}^{-\gam, \gam} = H_{\gam}'\]
\end{restatable}

In order to prove these bounds, we define the following quantity to capture the contribution of the vertices within $\sig$ to the Fourier coefficients.

\begin{definition}
	For a shape $\sig\in \calL$, define
	$T(\sig) = \left(\frac{k}{n}\right)^{|V(\sig)| - \frac{|V_{\sig}|}{2}}(2p - 1)^{|E(\sig)|}$.
	For $U \in \calI_{mid}$, define $v_U$ to be the vector indexed by $\sig \in \calL$ such that $v_U(\sig) = T(\sig)$ if $\sig \in \calL_U$ and $0$ otherwise.
\end{definition}

The following propositions are immediate from \cref{def: plds_coeffs}.

\begin{propn}
	For all $U\in \calI_{mid}, \rho \in \calP_U$, $H_{Id_U} = \frac{1}{|Aut(U)|}v_Uv_U^T$.
\end{propn}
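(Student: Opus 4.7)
The plan is to verify the identity entrywise. Fix $U \in \calI_{mid}$ and $\sigma, \sigma' \in \calL_U$. By definition, $(v_U v_U^T)(\sigma, \sigma') = T(\sigma) T(\sigma')$ and $H_{Id_U}(\sigma,\sigma') = \tfrac{1}{|U|!} \lambda_{\sigma \circ {\sigma'}^T}$ (which in this single-type setting equals $\tfrac{1}{|\mathrm{Aut}(U)|} \lambda_{\sigma \circ {\sigma'}^T}$, since $U$ is an unordered tuple of $|U|$ vertices of a single type). So it suffices to show
\[
\lambda_{\sigma \circ {\sigma'}^T} = T(\sigma) \, T(\sigma').
\]

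First I would work out the combinatorics of the concatenation $\sigma \circ {\sigma'}^T$. Since $\sigma$ is a left part with $V_\sigma \equiv U$ and ${\sigma'}^T$ is the transpose of a left part in $\calL_U$, the shapes glue along the separator $V_\sigma = U_{{\sigma'}^T}$, which has $|U|$ vertices. Hence
\[
|V(\sigma \circ {\sigma'}^T)| \;=\; |V(\sigma)| + |V(\sigma')| - |U|.
\]
Because left parts have no edges entirely within their right boundary, no edges are identified in the glueing, giving
\[
|E(\sigma \circ {\sigma'}^T)| \;=\; |E(\sigma)| + |E(\sigma')|.
\]
Plugging these into Definition \ref{def: plds_coeffs} yields
\[
\lambda_{\sigma \circ {\sigma'}^T} = \left(\tfrac{k}{n}\right)^{|V(\sigma)| + |V(\sigma')| - |U|} (2p-1)^{|E(\sigma)| + |E(\sigma')|},
\]
which factors exactly as $T(\sigma)T(\sigma')$ using the definition of $T$ and the fact that $|V_\sigma| = |V_{\sigma'}| = |U|$.

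The last thing I would check is that truncation does not spoil the factorization. The truncation condition bounds $|V(\sigma)|, |V(\tau)|, |V({\sigma'}^T)|$ separately by $D_V$, and for a trivial middle part $\tau = Id_U$ we have $|V(\tau)| = |U| \le D_V$, while the bound on the monomial degree corresponding to $U_\alpha, V_\alpha$ is the same for $\sigma \circ {\sigma'}^T$ as it is for $\sigma$ and $\sigma'$ individually. So $\sigma \circ {\sigma'}^T$ satisfies the truncation parameters iff both $\sigma$ and $\sigma'$ do, meaning the indicator of truncation factors across $\sigma$ and $\sigma'$ as well. I expect the only real subtlety is bookkeeping this truncation compatibility and correctly identifying $|\mathrm{Aut}(U)| = |U|!$ in the simplified setting; once those are settled, the identity follows at once from the entrywise calculation above, and the apparent quantifier over $\rho \in \calP_U$ in the statement plays no role (the claim is uniform in any such auxiliary parameter).
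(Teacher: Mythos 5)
Your proposal is correct and takes the same approach as the paper, whose proof is a one-line remark that the identity follows by verifying the conditions of Definition~\ref{def: plds_coeffs}; you simply fill in the entrywise calculation (vertex and edge counts under concatenation, and the factorization of $\lambda_{\sigma\circ{\sigma'}^T}$ into $T(\sigma)T(\sigma')$) that the paper leaves implicit. One small point you could add for completeness: Definition~\ref{def: plds_coeffs} also requires $\sigma\circ{\sigma'}^T$ to be a proper shape, and this does hold here — multi-edges are ruled out because left shapes have no edges within $V_\sigma$, and any separator vertex isolated in both $\sigma$ and ${\sigma'}^T$ must already lie in $U_\sigma\cap V_{{\sigma'}^T}$ (else $V_\sigma$ would not be a minimum separator), so no isolated vertices arise outside $U\cup V$ — but this is minor and your main computation is sound. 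You are also right that the $\rho\in\calP_U$ in the statement is vestigial (carried over from the tensor/sparse PCA versions, where parity vectors matter) and plays no role in the planted dense subgraph setting.
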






\begin{propn}
	For any $U \in \calI_{mid}$ and $\tau \in \calM_U$, $H_{\tau} = \frac{1}{|Aut(U)|^2} S(\tau) v_Uv_U^T$.
\end{propn}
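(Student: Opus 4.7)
The plan is to verify the claimed identity entry-by-entry, where the main computation is a simple factorization of $\lambda_{\sigma \circ \tau \circ \sigma'^T}$ into a product of contributions from $\sigma$, $\tau$, and $\sigma'$. Fix $\sigma,\sigma' \in \calL_U$. By the definition of the shape coefficient matrix (generalized from $\frac{1}{(|U|!)^2}$ to $\frac{1}{|Aut(U)|^2}$ to match the preceding proposition in the PLDS setting), we have $H_\tau(\sigma,\sigma') = \frac{1}{|Aut(U)|^2}\lambda_{\sigma\circ\tau\circ\sigma'^T}$, provided that $\sigma\circ\tau\circ\sigma'^T$ is proper and satisfies the truncation parameters $D_{sos}, D_V$; if it does not, both the left-hand side and the right-hand side of the claim contribute zero, so this case is trivial.

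Next I will perform the vertex and edge count for the concatenation. Since $\sigma, \tau, \sigma'^T$ are glued along the separators $V_\sigma = U_\tau$ and $V_\tau = U_{\sigma'^T}$, both of size $|U|$, and since the interior vertices of the three parts are otherwise disjoint,
\[
|V(\sigma\circ\tau\circ\sigma'^T)| = |V(\sigma)| + |V(\tau)| + |V(\sigma')| - 2|U|.
\]
Moreover, because $\sigma$ has no edges entirely inside $V_\sigma$ and $\sigma'^T$ has no edges entirely inside $U_{\sigma'^T}$, the edge sets $E(\sigma), E(\tau), E(\sigma')$ are pairwise disjoint as subsets of $E(\sigma\circ\tau\circ\sigma'^T)$, and in fact partition it, so
\[
|E(\sigma\circ\tau\circ\sigma'^T)| = |E(\sigma)| + |E(\tau)| + |E(\sigma')|.
\]

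Substituting these counts into $\lambda_{\sigma\circ\tau\circ\sigma'^T} = (k/n)^{|V|}(2p-1)^{|E|}$ and regrouping the exponents as $\bigl(|V(\sigma)| - |U|/2\bigr) + \bigl(|V(\tau)| - |U|\bigr) + \bigl(|V(\sigma')| - |U|/2\bigr)$ yields the factorization
\[
\lambda_{\sigma\circ\tau\circ\sigma'^T} = T(\sigma)\cdot S(\tau) \cdot T(\sigma').
\]
Therefore $H_\tau(\sigma,\sigma') = \frac{1}{|Aut(U)|^2} S(\tau)\, T(\sigma)T(\sigma') = \frac{1}{|Aut(U)|^2} S(\tau)\, v_U(\sigma)\, v_U(\sigma')$, which is exactly the $(\sigma,\sigma')$ entry of $\frac{1}{|Aut(U)|^2} S(\tau)\, v_U v_U^T$. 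For $\sigma \notin \calL_U$ or $\sigma' \notin \calL_U$ both sides vanish by the definitions of $H_\tau$ and $v_U$, completing the argument.

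The only possible subtlety is checking that $\sigma\circ\tau\circ\sigma'^T$ is genuinely proper and still satisfies the truncation conditions whenever $\sigma,\tau,\sigma'$ individually do, so that no spurious truncation clamps the coefficient to zero. Propriety is immediate: the concatenation has no isolated vertices outside $U_\sigma\cup V_{\sigma'^T}$ (since each piece is proper and the separators are integrated), no multi-edges (the edge sets are disjoint as above), and all edge labels remain nonzero. The truncation on vertex counts is exactly what is built into the definitions of $\calL_U$ and $\calM_U$, since $(\sigma,\tau,\sigma'^T)$ is the L/M/R decomposition of its concatenation by the earlier structural proposition, and the degree condition on $U_\sigma$ and $V_{\sigma'^T}$ carries over directly. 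Thus there is no real obstacle beyond the bookkeeping of the counting step.
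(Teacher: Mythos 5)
Your proof is correct and follows the same route as the paper, which dispatches this proposition with a one-line ``verify the conditions of Definition 4.6.'' You have simply filled in the bookkeeping the paper omits: the vertex count $|V(\sigma\circ\tau\circ\sigma'^T)| = |V(\sigma)|+|V(\tau)|+|V(\sigma')|-2|U|$, the disjointness of the edge sets, and the resulting factorization $\lambda_{\sigma\circ\tau\circ\sigma'^T} = T(\sigma)S(\tau)T(\sigma')$.
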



The first proposition implies that for all $U \in \calI_{mid}$, $H_{Id_U} \succeq 0$, which is the PSD mass condition \cref{lem: plds_cond1}.
\cref{lem: plds_cond2_simplified} and \cref{lem: plds_cond3_simplified} also follow easily.

\begin{proof}[Proof of \cref{lem: plds_cond2_simplified}]
\begin{align*}
    \begin{bmatrix}
		\frac{S(\tau)}{|Aut(U)|}H_{Id_U} & H_{\tau}\\
		H_{\tau}^T & \frac{S(\tau)}{|Aut(U)|}H_{Id_U}
	\end{bmatrix} &= \begin{bmatrix}
		\frac{S(\tau)}{|Aut(U)|}v_Uv_U^T & \frac{S(\tau)}{|Aut(U)|^2}v_Uv_U^T\\
		\frac{S(\tau)}{|Aut(U)|^2}v_Uv_U^T & \frac{S(\tau)}{|Aut(U)|}v_Uv_U^T
	\end{bmatrix} \succeq 0
\end{align*}
\end{proof}




\begin{proof}[Proof of \cref{lem: plds_cond3_simplified}]
    Fix $\sig, \sig' \in \calL_{U}$ such that $|V(\sig \circ \gam)|, |V(\sig' \circ \gam)| \le D_V$. Note that $|V(\sig)| - \frac{|V_{\sig}|}{2} + |V(\sig')| - \frac{|V_{\sig'}|}{2} + 2(|V(\gam)| - \frac{|U_{\gam}| + |V_{\gam}|}{2}) = |V(\sig \circ \gam \circ \gam^T \circ \sig'^T)|$. Using \cref{def: plds_coeffs}, we can easily verify that $\lda_{\sig \circ \gam \circ \gam^T \circ \sig'^T} = T(\sigma)T(\sigma') S(\gam)^2$. Therefore, $H_{Id_V}^{-\gam, \gam}(\sig, \sig') = \frac{|Aut(U)|}{|Aut(V)|} S(\gam)^2 H_{Id_U}(\sig, \sig')$. Since $H'_{\gam}(\sig, \sig') = H_{Id_U}(\sig, \sig')$ whenever $|V(\sig \circ \gam)|, |V(\sig' \circ \gam)| \le D_V$, this completes the proof.
\end{proof}



	\section{Application: Tensor PCA}\label{sec: tpca_qual}

	Just as in planted slightly denser subgraph, we pseudocalibrate, decompose the moment matrix and then show qualitative machinery bounds.

	\subsection{Pseudo-calibration}

\begin{definition}[Slack parameter]
	Define the slack parameter to be $\Delta = n^{-C_{\Del}\eps}$ for a constant $C_{\Del} > 0$.
\end{definition}

We will pseudo-calibrate with respect the following pair of random and planted distributions which we denote $\nu$ and $\mu$ respectively.

\TPCAdistributions*

Let the Hermite polynomials be $h_0(x) = 1, h_1(x) = x, h_2(x) = x^2 - 1, \ldots$. For $a \in \NN^{[n]^k}$ and variables $A_e$ for $e \in [n]^k$, define $h_a(A) \defeq \prod_{e \in [n]^k} h_e(A_e)$. We will work with this Hermite basis.

\begin{lemma}
	Let $I \in \NN^n, a \in \NN^{[n]^k}$. For $i \in [n]$, let $d_i = \sum_{i \in e \in [n]^k} a_e$. Let $c$ be the number of $i$ such that $I_i + d_i$ is nonzero. Then, if $I_i + d_i$ are all even, we have
	\[\EE_{\mu}[u^I h_a(A)] = \Delta^c\left(\frac{1}{\sqrt{\Delta n}}\right)^{|I|} \prod_{e \in [n]^k} \left(\frac{\lda}{(\Del n)^{\frac{k}{2}}}\right)^{a_e}\]
	Else, $\EE_{\mu}[u^I h_a(v)] = 0$.
\end{lemma}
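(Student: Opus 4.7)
The plan is a direct computation based on first conditioning on $u$, then applying a classical Hermite identity entrywise, and finally using the independence of the coordinates of $u$ to factor the remaining moment.

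I would condition on the planted vector $u$. Conditionally, the entries of $A$ are independent Gaussians with $A_e \sim \GN(\lda \prod_{j=1}^{k} u_{e_j},\, 1)$. The key identity I would invoke is that for the (probabilists', unnormalized) Hermite polynomials $h_n$ and $X \sim \GN(0,1)$, one has $\EE[h_n(X+\mu)] = \mu^n$. This follows immediately from the generating function $e^{tx - t^2/2} = \sum_{n \ge 0} h_n(x)\, t^n/n!$ by evaluating $\EE[e^{t(X+\mu) - t^2/2}] = e^{t\mu}$ and reading off coefficients. Applied entrywise and using conditional independence of the $A_e$,
$$
\EE[h_a(A) \mid u] \;=\; \prod_{e \in [n]^k} \bigl(\lda \, u_{e_1} \cdots u_{e_k}\bigr)^{a_e} \;=\; \lda^{|a|} \prod_{i=1}^{n} u_i^{d_i},
$$
where $|a| = \sum_e a_e$ and $d_i$ is as defined in the statement. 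Therefore, by independence of the coordinates of $u$,
$$
\EE_\mu[u^I h_a(A)] \;=\; \lda^{|a|} \prod_{i=1}^{n} \EE[u_i^{I_i + d_i}].
$$

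The second step is to compute each single-coordinate moment. Since $u_i$ is symmetric about zero, $\EE[u_i^m] = 0$ whenever $m$ is odd, which immediately gives the ``otherwise'' case of the lemma. When $I_i + d_i$ is even and positive, the value $u_i = 0$ contributes nothing and
$$
\EE[u_i^{I_i + d_i}] \;=\; \Delta \left(\frac{1}{\sqrt{\Delta n}}\right)^{I_i + d_i},
$$
while coordinates with $I_i + d_i = 0$ contribute $1$. Multiplying over the $c$ active coordinates produces the prefactor $\Delta^c$, and using $\sum_i I_i = |I|$ together with $\sum_i d_i = k|a|$ (since each $e \in [n]^k$ contributes $a_e$ to exactly $k$ of the $d_i$, counted with multiplicity) shows that the total exponent on $(\Delta n)^{-1/2}$ is $|I| + k|a|$.

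Finally, I would repackage $\lda^{|a|}$ together with the $k|a|$ factors of $(\Delta n)^{-1/2}$ as the product $\prod_{e}(\lda/(\Delta n)^{k/2})^{a_e}$, leaving the remaining $|I|$ factors of $(\Delta n)^{-1/2}$ intact, which is exactly the form stated. The only step to be careful about is the bookkeeping for the exponent of each $u_i$: one must confirm that $d_i$ is counted with the correct multiplicity coming from the hyperedge $e$ viewed as a sequence in $[n]^k$, so that $\sum_i d_i = k|a|$ holds; beyond this, no probabilistic or analytic difficulty arises.
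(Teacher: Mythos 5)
Your proposal is correct and follows essentially the same route as the paper: condition on $u$, apply the Hermite identity $\EE[h_n(X+\mu)]=\mu^n$ entrywise, factor the remaining moment over independent coordinates, and then use the degree-count identity to repackage the exponent. You are in fact a bit more careful than the paper on the bookkeeping (the paper's proof states the relation as $\sum_e a_e = k\sum_i d_i$, which is the inverse of the correct $\sum_i d_i = k\sum_e a_e$ that you use, and you correctly flag that $d_i$ must count hyperedge multiplicity for this to hold), but the argument and the final answer match.
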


\begin{proof}
	When $A \sim \mu$, for all $e \in [n]^k$, we have $A_e = B_e + \lda \prod_{i \le k} u_{e_i}$. where $B_e \sim \GN(0, 1)$.
	Let's analyze when the required expectation is nonzero. We can first condition on $u$ and use the fact that for a fixed $t$, $\EE_{g \sim \GN(0, 1)}[h_k(g + t)] = t^k$ to obtain
	\[\EE_{(u_i, w_e) \sim \mu}[u^I h_a(A)] = \EE_{(u_i) \sim \mu}[u^I\prod_{e \in [n]^k}(\lda \prod _{i \le k}u_{e_i})^{a_e}] = \EE_{(u_i) \sim \mu}[\prod_{i \in [n]} u_i^{I_i + d_i}] \prod_{e \in [n]^k} \lda^{a_e}\]

	Observe that this is nonzero precisely when all $I_i + d_i$ are even, in which case \[\EE_{(u_i) \sim \mu}[\prod_{i \in [n]} u_i^{I_i + d_i}] = \Delta^c\left(\frac{1}{\sqrt{\Del n}}\right)^{\sum_{i \le n} I_i + d_i} =  \Delta^c\left(\frac{1}{\sqrt{\Delta n}}\right)^{|I|} \prod_{e \in [n]^k} \left(\frac{1}{(\Del n)^{\frac{k}{2}}}\right)^{a_e}\]
	where we used the fact that $\sum_{e \in [n]^k} a_e = k \sum_{i \in [n]} d_i$.
	This completes the proof.
\end{proof}


Define the degree of SoS to be $D_{sos} = n^{C_{sos}\eps}$ for some constant $C_{sos} > 0$ that we choose later. And define the truncation parameters to be $D_V = n^{C_V\eps}, D_E = n^{C_E\eps}$ for some constants $C_V, C_E > 0$.

\begin{remk}[Choice of parameters]\label{rmk: choice_of_params2}
	We first set $\eps$ to be a sufficiently small constant. Based on the choice of $\eps$, we will set the constant $C_{\Del} > 0$ sufficiently small so that the planted distribution is well defined. Based on these choices, just as in \cref{rmk: choice_of_params1} we choose $C_V, C_E, C_{sos}$ in that order.
\end{remk}

The underlying graphs for the graph matrices have the following structure; There will be $n$ vertices of a single type and the edges will be ordered hyperedges of arity $k$.
For the analysis of Tensor PCA, we will use the following notation.
\begin{itemize}
	\item For an index shape $U$ and a vertex $i$, define $deg^{U}(i)$ as follows: If $i \in V(U)$, then it is the power of the unique index shape piece $A \in U$ such that $i \in V(A)$. Otherwise, it is $0$.
	\item For an index shape $U$, define $deg(U) = \sum_{i \in V(U)} deg^U(i)$. This is also the degree of the monomial that $U$ corresponds to.
	\item For a shape $\alpha$ and vertex $i$ in $\alpha$, let $deg^{\alpha}(i) = \sum_{i \in e \in E(\alpha)} l_e$.
	\item For any shape $\alpha$, let $deg(\alpha) = deg(U_{\al}) + deg(V_{\al})$.
\end{itemize}

We will now describe the decomposition of the moment matrix $\Lda$.

\begin{definition}\label{def: tpca_coeffs}
	If a shape $\alpha$ satisfies the following properties:
	\begin{itemize}
		\item $deg^{\alpha}(i) + deg^{U_{\alpha}}(i) + deg^{V_{\alpha}}(i)$ is even for all $i \in V(\alpha)$,
		\item $\alpha$ is proper,
		\item $\alpha$ satisfies the truncation parameters $D_{sos}, D_V, D_E$.
	\end{itemize}
	then define \[\lambda_{\alpha} = \Delta^{|V(\al)|} \left(\frac{1}{\sqrt{\Delta n}}\right)^{deg(\alpha)}  \prod_{e \in E(\al)} \left(\frac{\lda}{(\Del n)^{\frac{k}{2}}}\right)^{l_e}\]
	Otherwise, define $\lambda_{\alpha} = 0$.
\end{definition}

\begin{corollary}
	$\Lambda = \sum \lda_{\al}M_{\al}$.
\end{corollary}

\subsection{Qualitative machinery bounds}

Just as in planted slightly denser subgraph, we prove the PSD mass condition and the qualitative middle shape and intersection term bounds, by first stating them and then introducing appropriate notation to prove them all in a unified manner.

\begin{restatable}[PSD mass]{lemma}{TPCAone}\label{lem: tpca_cond1}
	For all $U \in \calI_{mid}$, $H_{Id_U} \succeq 0$
\end{restatable}

We define the following quantities to capture the contribution of the vertices within $\tau, \gam$ to the Fourier coefficients.

\begin{restatable}{definition}{TPCAstau}\label{def: tpca_stau}
	For $U \in \calI_{mid}$ and $\tau \in \calM_U$, if $deg^{\tau}(i)$ is even for all vertices $i \in V(\tau) \setminus U_{\tau} \setminus V_{\tau}$, define
	\[S(\tau) = \Delta^{|V(\tau)| - |U_{\tau}|}\prod_{e \in E(\tau)}\left(\frac{\lda}{(\Del n)^{\frac{k}{2}}}\right)^{l_e}\]
	Otherwise, define $S(\tau) = 0$. 
	For all $U, V \in \calI_{mid}$ where $w(U) > w(V)$ and $\gam \in \Gam_{U, V}$, if $deg^{\gam}(i)$ is even for all vertices $i$ in $V(\gam) \setminus U_{\gam} \setminus V_{\gam}$, define
	\[S(\gam) = \Delta^{|V(\gam)| - \frac{|U_{\gam}| + |V_{\gam}|}{2}}\prod_{e \in E(\gam)}\left(\frac{\lda}{(\Del n)^{\frac{k}{2}}}\right)^{l_e}\]
	Otherwise, define $S(\gam) = 0$.
\end{restatable}

We now state the qualitative bounds in terms of these quantities.

\begin{restatable}[Qualitative middle shape bounds]{lemma}{TPCAtwosimplified}\label{lem: tpca_cond2_simplified}
	For all $U \in \calI_{mid}$ and $\tau \in \calM_U$,
	\[
	\begin{bmatrix}
		\frac{S(\tau)}{|Aut(U)|}H_{Id_U} & H_{\tau}\\
		H_{\tau}^T & \frac{S(\tau)}{|Aut(U)|}H_{Id_U}
	\end{bmatrix}
	\succeq 0
	\]
\end{restatable}



We again use the canonical definition of $H_{\gam}'$ from \cref{sec: hgamma_qual}.

\begin{restatable}[Qualitative intersection term bounds]{lemma}{TPCAthreesimplified}\label{lem: tpca_cond3_simplified}
	For all $U, V \in \calI_{mid}$ where $w(U) > w(V)$ and all $\gam \in \Gam_{U, V}$,
	\[\frac{|Aut(V)|}{|Aut(U)|}\cdot\frac{1}{S(\gam)^2}H_{Id_V}^{-\gam, \gam} \preceq H_{\gam}'\]
\end{restatable}

\subsubsection{Proof of PSD mass condition}

We introduce some notation which makes it easy to show the qualitative bounds and which also sheds light on the structure of the coefficient matrices. When we compose shapes $\sig, \sig'$, from \cref{def: tpca_coeffs}, in order for $\lda_{\sig\circ \sig'}$ to be nonzero, observe that all vertices $i$ in $\lda_{\sig \circ \sig'}$ should have $deg^{\sig \circ \sig'}(i) + deg^{U_{\sig \circ \sig'}}(i) + deg^{V_{\sig \circ \sig'}}(i)$ to be even. To partially capture this notion conveniently, we will introduce the notion of parity vectors.

\begin{definition}
	Define a parity vector $\rho$ to be a vector whose entries are in $\{0, 1\}$.
	For $U\in \calI_{mid}$, define $\calP_U$ to be the set of parity vectors $\rho$ whose coordinates are indexed by $U$.
\end{definition}

\begin{definition}
	For a left shape $\sig$, define $\rho_{\sig} \in \calP_{V_{\sig}}$, called the parity vector of $\sig$, to be the parity vector such that for each vertex $i \in V_{\sig}$, the $i$-th entry of $\rho_{\sig}$ is the parity of $deg^{U_{\sig}}(i) + deg^{\sig}(i)$, that is $(\rho_{\sig})_i \equiv deg^{U_{\sig}}(i) + deg^{\sig}(i) \pmod 2$.
	For $U \in \calI_{mid}$ and $\rho \in \calP_U$, let $\calL_{U, \rho}$ be the set of all left shapes $\sig \in \calL_U$ such that $\rho_{\sig} = \rho$, that is, the set of all left shapes with parity vector $\rho$.
\end{definition}

For a shape $\tau$, for a $\tau$ coefficient matrix $H_{\tau}$ and parity vectors $\rho \in \calP_{U_{\tau}}, \rho' \in \calP_{V_{\tau}}$, define the $\tau$-coefficient matrix $H_{\tau, \rho, \rho'}$ as $H_{\tau ,\rho, \rho'}(\sig, \sig') = H_{\tau}(\sig, \sig')$ if $\sig \in \calL_{U_{\tau}, \rho}, \sig' \in \calL_{V_{\tau}, \rho'}$ and $0$ otherwise.
The following proposition is immediate.

\begin{propn}
	For any shape $\tau$ and $\tau$-coefficient matrix $H_{\tau}$, $H_{\tau} = \sum_{\rho \in \calP_{U_{\tau}}, \rho' \in \calP_{V_{\tau}}} H_{\tau, \rho, \rho'}$
\end{propn}

\begin{propn}
	For any $U \in \calI_{mid}$, $H_{Id_U} = \sum_{\rho \in \calP_U} H_{Id_U, \rho, \rho}$
\end{propn}

\begin{proof}
	For any $\sig, \sig' \in \calL_U$, using \cref{def: tpca_coeffs}, note that in order for $H_{Id_U}(\sig, \sig')$ to be nonzero, we must have $\rho_{\sig} = \rho_{\sig'}$.
\end{proof}

We define the following quantity to capture the contribution of the vertices within $\sig$ to the Fourier coefficients.

\begin{definition}
	For a shape $\sig\in \calL$, if $deg^{\sig}(i) + deg^{U_{\sig}}(i)$ is even for all vertices $i \in V(\sig) \setminus V_{\sig}$, define
	\[T(\sig) = \Delta^{|V(\sig)| - \frac{|V_{\sig}|}{2}}\left(\frac{1}{\sqrt{\Delta n}}\right)^{deg(U_{\sig})}\prod_{e \in E(\sig)}\left(\frac{\lda}{(\Del n)^{\frac{k}{2}}}\right)^{l_e}\]
	Otherwise, define $T(\sig) = 0$.
	For $U \in \calI_{mid}$ and $\rho \in \calP_U$, define $v_{\rho}$ to be the vector indexed by $\sig \in \calL$ such that $v_{\rho}(\sig)$ is $T(\sig)$ if $\sig \in \calL_{U, \rho}$ and $0$ otherwise.
\end{definition}

With this notation, the PSD mass condition is easily shown.

\begin{proof}[Proof of the PSD mass condition \cref{lem: tpca_cond1}]
    For all $U\in \calI_{mid}, \rho \in \calP_U$, \cref{def: tpca_coeffs} implies $H_{Id_U, \rho, \rho} = \frac{1}{|Aut(U)|}v_{\rho}v_{\rho}^T$.
	Therefore, $H_{Id_U} = \sum_{\rho \in \calP_U} H_{Id_U, \rho, \rho} = \frac{1}{|Aut(U)|} \sum_{\rho \in \calP_U} v_{\rho}v_{\rho}^T \succeq 0$.
\end{proof}

\subsubsection{Qualitative middle shape bounds}

The next proposition captures the fact that when we compose shapes $\sig, \tau, \sig'^T$, in order for $\lda_{\sig \circ \tau \sig'^T}$ to be nonzero, the parities of the degrees of the merged vertices should add up correspondingly.

\begin{propn}\label{propn: tpca_coeff_2}
	For all $U \in \calI_{mid}$ and $\tau \in \calM_U$, there exist two sets of parity vectors $P_{\tau}, Q_{\tau} \subseteq \calP_{U}$ and a bijection $\pi : P_{\tau} \to Q_{\tau}$ such that $H_{\tau} = \sum_{\rho \in P_{\tau}} H_{\tau, \rho, \pi(\rho)}$.
\end{propn}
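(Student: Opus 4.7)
The plan is to unpack $H_\tau(\sigma, \sigma')$ as a scalar multiple of $\lambda_{\sigma \circ \tau \circ \sigma'^T}$ and to determine, via \cref{def: tpca_coeffs}, precisely when this coefficient is nonzero. By that definition, nonvanishing forces $\deg^{\alpha}(i) + \deg^{U_\alpha}(i) + \deg^{V_\alpha}(i)$ to be even at every vertex $i \in V(\alpha)$ of $\alpha = \sigma \circ \tau \circ \sigma'^T$. Since $U_\alpha = U_\sigma$, $V_\alpha = U_{\sigma'}$, and $E(\alpha) = E(\sigma) \sqcup E(\tau) \sqcup E(\sigma')$ under the identifications $V_\sigma = U_\tau$ and $V_\tau = V_{\sigma'}$, the condition at each vertex splits cleanly into contributions coming from $\sigma$, $\tau$, and $\sigma'$ separately.

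I will then partition $V(\alpha)$ into vertices interior to $\sigma$, interior to $\sigma'$, interior to $\tau$, and vertices lying on the interface $U_\tau \cup V_\tau$, the last further subdivided into $U_\tau \setminus V_\tau$, $V_\tau \setminus U_\tau$, and $U_\tau \cap V_\tau$. At a vertex interior to $\sigma$ the parity constraint is exactly the defining equation of $\rho_\sigma$ and so is automatic from the choice of $\sigma$; analogously for $\sigma'$. At a vertex interior to $\tau$ the constraint reduces to $\deg^\tau(i) \equiv 0 \pmod{2}$, a condition on $\tau$ alone; if it fails for any such $i$ then $H_\tau$ is identically zero and the proposition is vacuous, so we may assume it holds. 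Substituting the defining equations of $\rho_\sigma$ and $\rho_{\sigma'}$ into the remaining interface constraints yields, modulo $2$, the three systems $(\rho_\sigma)_i \equiv \deg^\tau(i)$ for $i \in U_\tau \setminus V_\tau$; $(\rho_{\sigma'})_i \equiv \deg^\tau(i)$ for $i \in V_\tau \setminus U_\tau$; and $(\rho_{\sigma'})_i \equiv (\rho_\sigma)_i + \deg^\tau(i)$ for $i \in U_\tau \cap V_\tau$.

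With these equations in hand I will take $P_\tau \subseteq \calP_U$ to be the set of $\rho$ obeying the first equation (after identifying $U_\tau$ with the reference index shape $U$ via the equivalence $U_\tau \equiv U$), and define $\pi(\rho) \in \calP_U$ coordinatewise by the second and third equations. Each coordinate of $\pi(\rho)$ is an invertible affine function of a single coordinate of $\rho$, so $\pi$ is injective and hence a bijection from $P_\tau$ onto its image $Q_\tau := \pi(P_\tau)$. The derivation also shows that $H_{\tau, \rho, \rho'}$ can be nonzero only when $\rho \in P_\tau$ and $\rho' = \pi(\rho)$, which combined with the partition $H_\tau = \sum_{\rho, \rho'} H_{\tau, \rho, \rho'}$ established just above yields the desired identity $H_\tau = \sum_{\rho \in P_\tau} H_{\tau, \rho, \pi(\rho)}$. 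The main obstacle is bookkeeping rather than conceptual: one must fix compatible identifications of the ordered tuples $U_\tau$ and $V_\tau$ with $U$ and check that the equation on $U_\tau \cap V_\tau$ is consistent under both identifications; once this indexing convention is pinned down the remainder of the argument is a routine parity calculation.
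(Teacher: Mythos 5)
Your proposal is correct and follows essentially the same route as the paper's proof: use \cref{def: tpca_coeffs} to identify when $\lambda_{\sigma\circ\tau\circ\sigma'^T}\neq 0$, read off the forced parity relations at the interface $U_\tau\cup V_\tau$, and observe that these determine $\rho_{\sigma'}$ uniquely from $\rho_\sigma$ and vice versa, yielding the bijection $\pi$. Two small wording slips that do not harm the argument: the parity condition at vertices in $V(\sigma)\setminus V_\sigma$ is not the \emph{defining equation} of $\rho_\sigma$ (which lives on $V_\sigma$) but rather the condition under which $T(\sigma)\neq 0$ --- if it fails, the whole row $H_\tau(\sigma,\cdot)$ vanishes, so the identity is trivially preserved there; and the coordinates of $\pi(\rho)$ indexed by $V_\tau\setminus U_\tau$ are fixed constants rather than invertible affine functions of a coordinate of $\rho$, so the correct reason for injectivity is that membership in $P_\tau$ already pins $\rho$ to constants on $U_\tau\setminus V_\tau$ while the third system carries $\rho|_{U_\tau\cap V_\tau}$ bijectively onto $\pi(\rho)|_{U_\tau\cap V_\tau}$.
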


\begin{proof}
	Using \cref{def: tpca_coeffs}, in order for $H_{\tau}(\sig, \sig')$ to be nonzero, in $\sig \circ \tau \circ \sig'$, we must have that for all $i \in U_{\tau} \cup V_{\tau}$, $deg^{U_{\sig}}(i) + deg^{U_{\sig'}}(i) + deg^{\sigma \circ \tau \circ \sigma'^T}(i)$ must be even. In other words, for any $\rho \in \calP_U$, there is at most one $\rho' \in \calP_U$ such that if we take $\sig \in \calL_{U, \rho}, \sig' \in \calL_U$ with $H_{\tau}(\sig, \sig')$ nonzero, then the parity of $\sig'$ is $\rho'$. Also, observe that $\rho'$ determines $\rho$. We then take $P_{\tau}$ to be the set of $\rho$ such that $\rho'$ exists, $Q_{\tau}$ to be the set of $\rho'$ and in this case, we define $\pi(\rho) = \rho'$.
\end{proof}



A straightforward verification of the conditions of \cref{def: tpca_coeffs} implies the following proposition.

\begin{propn}
	For any $U \in \calI_{mid}$ and $\tau \in \calM_U$, suppose we take $\rho \in P_{\tau}$.  Let $\pi$ be the bijection from \cref{propn: tpca_coeff_2} so that $\pi(\rho) \in Q_{\tau}$. Then, $H_{\tau, \rho, \pi(\rho)} = \frac{1}{|Aut(U)|^2} S(\tau) v_{\rho}v_{\pi(\rho)}^T$.
\end{propn}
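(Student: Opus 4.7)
The claim is really a bookkeeping identity: the coefficient $\lambda_{\sigma\circ\tau\circ\sigma'^{T}}$ factors multiplicatively as $T(\sigma)\,T(\sigma')\,S(\tau)$ once the vertex, degree, and edge statistics of the composition are tracked carefully. The plan is to fix $\sigma\in\calL_{U,\rho}$ and $\sigma'\in\calL_{U,\pi(\rho)}$ and compare, entry by entry, the matrix $H_{\tau,\rho,\pi(\rho)}(\sigma,\sigma')=\frac{1}{|Aut(U)|^{2}}\lambda_{\sigma\circ\tau\circ\sigma'^{T}}$ with the entry $\frac{1}{|Aut(U)|^{2}}S(\tau)\,v_{\rho}(\sigma)\,v_{\pi(\rho)}(\sigma')=\frac{1}{|Aut(U)|^{2}}S(\tau)T(\sigma)T(\sigma')$; outside this support block both sides vanish by definition of the projections onto parity strata.

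\noindent\textbf{Step 1: Vertex, degree and edge count under composition.} Since $V_{\sigma}=U_{\tau}=U$ and $V_{\tau}=U_{\sigma'^{T}}=U$, the glueing sets have size $|U|$ on each of the two seams, so $|V(\sigma\circ\tau\circ\sigma'^{T})|=|V(\sigma)|+|V(\tau)|+|V(\sigma')|-2|U|$. Moreover $U_{\sigma\circ\tau\circ\sigma'^{T}}=U_{\sigma}$ and $V_{\sigma\circ\tau\circ\sigma'^{T}}=U_{\sigma'}$, so $deg(\sigma\circ\tau\circ\sigma'^{T})=deg(U_{\sigma})+deg(U_{\sigma'})$. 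Finally $E(\sigma\circ\tau\circ\sigma'^{T})=E(\sigma)\sqcup E(\tau)\sqcup E(\sigma')$ as the three edge sets are disjoint. Plugging these into \cref{def: tpca_coeffs} and rewriting the prefactor $\Delta^{|V(\sigma)|+|V(\tau)|+|V(\sigma')|-2|U|}$ as $\Delta^{|V(\sigma)|-|V_{\sigma}|/2}\cdot\Delta^{|V(\sigma')|-|V_{\sigma'}|/2}\cdot\Delta^{|V(\tau)|-|U_{\tau}|}$ immediately gives the desired product structure $\lambda_{\sigma\circ\tau\circ\sigma'^{T}}=T(\sigma)T(\sigma')S(\tau)$, provided all three factors are defined (i.e., the non-zero parity conditions of $T$ and $S$ hold).

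\noindent\textbf{Step 2: Reconciling the parity conditions.} The main thing to check is that the parity conditions built into $T(\sigma), T(\sigma'), S(\tau)$ together with the constraints encoded by $\sigma\in\calL_{U,\rho}$ and $\sigma'\in\calL_{U,\pi(\rho)}$ are equivalent to the parity condition built into $\lambda_{\sigma\circ\tau\circ\sigma'^{T}}$. For vertices $i\in V(\sigma)\setminus V_{\sigma}$ we have $deg^{\sigma\circ\tau\circ\sigma'^{T}}(i)=deg^{\sigma}(i)$ and $deg^{U_{\sigma'}}(i)=0$, so the $\lambda$-parity condition reduces to the $T(\sigma)$ parity condition; the symmetric computation handles $V(\sigma')\setminus V_{\sigma'}$ and vertices in $V(\tau)\setminus U_{\tau}\setminus V_{\tau}$ gives the $S(\tau)$ parity condition. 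The remaining vertices lie in $U_{\tau}=V_{\sigma}$ or $V_{\tau}=V_{\sigma'}=U$; for such $i\in U_{\tau}$ the condition on $\sigma\circ\tau\circ\sigma'^{T}$ becomes $(\rho_{\sigma})_{i}+deg^{\tau}(i)\equiv 0\pmod 2$, and analogously $(\rho_{\sigma'})_{i}+deg^{\tau}(i)\equiv 0\pmod 2$ on $V_{\tau}$. These are exactly the relations that define $P_{\tau}$ and $\pi$ in \cref{propn: tpca_coeff_2}, so they are guaranteed by the hypotheses $\rho\in P_{\tau}$ and $\sigma\in\calL_{U,\rho}$, $\sigma'\in\calL_{U,\pi(\rho)}$.

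\noindent\textbf{Step 3: Assembling.} Combining Steps 1 and 2 yields $\lambda_{\sigma\circ\tau\circ\sigma'^{T}}=T(\sigma)T(\sigma')S(\tau)=S(\tau)\,v_{\rho}(\sigma)\,v_{\pi(\rho)}(\sigma')$ on the support stratum $\calL_{U,\rho}\times\calL_{U,\pi(\rho)}$, and both sides of the claimed identity vanish off that stratum. Dividing through by $|Aut(U)|^{2}$ gives the stated matrix equality. I do not anticipate a real obstacle here; the only slightly delicate point is the parity bookkeeping in Step 2, and the combinatorial structure of $P_{\tau},Q_{\tau},\pi$ was set up precisely to make that equivalence automatic.
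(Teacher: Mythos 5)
Your proof is correct and takes essentially the same approach as the paper's (which is just the one-line ``This follows by a straightforward verification of the conditions of \cref{def: tpca_coeffs}''), simply spelled out in detail. One small imprecision: in Step~2, for a vertex $i \in U_{\tau}\cap V_{\tau}$ the parity condition on $\sigma\circ\tau\circ\sigma'^{T}$ is $(\rho_{\sigma})_{i}+deg^{\tau}(i)+(\rho_{\sigma'})_{i}\equiv 0\pmod 2$, combining contributions from both seams, rather than the two separate conditions you wrote; but this is still exactly the relation encoded in the definition of $P_{\tau}$ and $\pi$ from \cref{propn: tpca_coeff_2}, so your conclusion stands.
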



We can now prove the qualitative middle shape bounds.


\begin{proof}[Proof of the qualitative middle shape bounds \cref{lem: tpca_cond2_simplified}]
	Let $P_{\tau}, Q_{\tau}, \pi$ be from \cref{propn: tpca_coeff_2}. For $\rho, \rho' \in \calP_U$, let $W_{\rho, \rho'} = v_{\rho}(v_{\rho'})^T$. Then, $H_{Id_U} = \sum_{\rho \in \calP_U} H_{Id_U, \rho, \rho} = \frac{1}{|Aut(U)|} \sum_{\rho \in \calP_U}W_{\rho, \rho}$ and $H_{\tau} = \sum_{\rho \in P_{\tau}} H_{\tau, \rho, \pi(\rho)} = \frac{1}{|Aut(U)|^2}S(\tau)\sum_{\rho \in P_{\tau}} W_{\rho, \pi(\rho)}$. We have

	\begin{align*}
		\begin{bmatrix}
			\frac{S(\tau)}{|Aut(U)|}H_{Id_U} & H_{\tau}\\
			H_{\tau}^T & \frac{S(\tau)}{|Aut(U)|}H_{Id_U}
		\end{bmatrix}
		&= \frac{S(\tau)}{|Aut(U)|^2}
		\begin{bmatrix}
			\sum_{\rho \in \calP_U} W_{\rho, \rho} & \sum_{\rho \in P_{\tau}} W_{\rho, \pi(\rho)}\\
			\sum_{\rho \in P_{\tau}} W_{\rho, \pi(\rho)}^T & \sum_{\rho \in \calP_U} W_{\rho, \rho}
		\end{bmatrix}
	\end{align*}
	Since $\frac{S(\tau)}{|Aut(U)|^2} \ge 0$, it suffices to prove that $\begin{bmatrix}
		\sum_{\rho \in \calP_U} W_{\rho, \rho} & \sum_{\rho \in P_{\tau}} W_{\rho, \pi(\rho)}\\
		\sum_{\rho \in P_{\tau}} W_{\rho, \pi(\rho)}^T & \sum_{\rho \in \calP_U} W_{\rho, \rho}
	\end{bmatrix}\succeq 0$. Consider
	\begin{align*}
		\begin{bmatrix}
			\sum_{\rho \in \calP_U} W_{\rho, \rho} & \sum_{\rho \in P_{\tau}} W_{\rho, \pi(\rho)}\\
			\sum_{\rho \in P_{\tau}} W_{\rho, \pi(\rho)}^T & \sum_{\rho \in \calP_U} W_{\rho, \rho}
		\end{bmatrix} =& \begin{bmatrix}
			\sum_{\rho \in \calP_U \setminus P_{\tau}} W_{\rho, \rho} & 0\\
			0 & \sum_{\rho \in \calP_U \setminus Q_{\tau}} W_{\rho, \rho}
		\end{bmatrix}\\
		& + \begin{bmatrix}
			\sum_{\rho \in P_{\tau}} W_{\rho, \rho} & \sum_{\rho \in P_{\tau}} W_{\rho, \pi(\rho)}\\
			\sum_{\rho \in P_{\tau}} W_{\rho, \pi(\rho)}^T & \sum_{\rho \in P_{\tau}} W_{\pi(\rho), \pi(\rho)}
		\end{bmatrix}\\
	\end{align*}

	We have $\sum_{\rho \in \calP_U \setminus P_{\tau}} W_{\rho, \rho} = \sum_{\rho \in \calP_U \setminus P_{\tau}} v_{\rho}v_{\rho}^T \succeq 0$. Similarly, $\sum_{\rho \in \calP_U \setminus Q_{\tau}} W_{\rho, \rho} \succeq 0$ and so, the first term in the above expression,
	$\begin{bmatrix}
		\sum_{\rho \in \calP_U \setminus P_{\tau}} W_{\rho, \rho} & 0\\
		0 & \sum_{\rho \in \calP_U \setminus Q_{\tau}} W_{\rho, \rho}
	\end{bmatrix}$ is positive semidefinite. For the second term,
	\begin{align*}
		\begin{bmatrix}
			\sum_{\rho \in P_{\tau}} W_{\rho, \rho} & \sum_{\rho \in P_{\tau}} W_{\rho, \pi(\rho)}\\
			\sum_{\rho \in P_{\tau}} W_{\rho, \pi(\rho)}^T & \sum_{\rho \in P_{\tau}} W_{\pi(\rho), \pi(\rho)}
		\end{bmatrix} &= \sum_{\rho \in P_{\tau}}
		\begin{bmatrix}
			W_{\rho, \rho} & W_{\rho, \pi(\rho)}\\
			W_{\rho, \pi(\rho)}^T & W_{\pi(\rho), \pi(\rho)}
		\end{bmatrix}\\
		&= \sum_{\rho \in P_{\tau}}
		\begin{bmatrix}
			v_{\rho}v_{\rho}^T & v_{\rho}(v_{\pi(\rho)})^T\\
			v_{\pi(\rho)}(v_{\rho})^T & v_{\pi(\rho)}(v_{\pi(\rho)})^T
		\end{bmatrix}\\
		&= \sum_{\rho \in P_{\tau}}
		\begin{bmatrix}
			v_{\rho}\\
			v_{\pi(\rho)}
		\end{bmatrix}
		\begin{bmatrix}
			v_{\rho} &
			v_{\pi(\rho)}
		\end{bmatrix}\\
		& \succeq 0
	\end{align*}
\end{proof}

\subsubsection{Qualitative intersection term bounds}

Similar to \cref{propn: tpca_coeff_2}, the next proposition captures the fact that when we compose shapes $\sig, \gam, \gam^T, \sig'^T$, in order for $\lda_{\sig \circ \gam \circ \gam'^T \circ \sig'^T}$ to be nonzero, the parities of the degrees of the merged vertices should add up correspondingly. 

We use the following notation.
For all $U, V \in \calI_{mid}$ where $w(U) > w(V)$, for $\gam \in \Gam_{U, V}$ and parity vectors $\rho, \rho' \in \calP_U$,  define the $\gam \circ \gam^T$-coefficient matrix $H_{Id_V, \rho, \rho'}^{-\gam, \gam}$ as $H_{Id_V, \rho, \rho'}^{-\gam, \gam}(\sig, \sig') = H_{Id_V}^{-\gam, \gam}(\sig, \sig')$ if $\sig \in \calL_{U, \rho},  \sig' \in \calL_{U, \rho'}$ and $0$ otherwise.

\begin{propn}
	For all $U, V \in \calI_{mid}$ where $w(U) > w(V)$, for all $\gam \in \Gam_{U, V}$, there exists a set of parity vectors $P_{\gam} \subseteq \calP_U$ such that
	$H_{Id_V}^{-\gam, \gam} = \sum_{\rho \in P_{\gam}} H_{Id_V, \rho, \rho}^{-\gam, \gam}$.
\end{propn}

\begin{proof}
	Take any $\rho \in \calP_U$. For $\sig \in \calL_{U, \rho}, \sig' \in \calL_U$,  since $H_{Id_V}^{-\gam, \gam}(\sigma, \sigma') = \frac{\lda_{\sig \circ \gam \circ \gam^T \circ \sig'^T}}{|Aut(V)|}$, $H_{Id_V}^{-\gam, \gam}(\sig, \sig')$ is nonzero precisely when $\lda_{\sig \circ \gam \circ \gam^T \circ \sig'^T}$ is nonzero. For this quantity to be nonzero, using \cref{def: tpca_coeffs}, we get that it is necessary, but not sufficient, that the parity vector of $\sig'$ must also be $\rho$. And also observe that there exists a set $P_{\gam}$ of parity vectors $\rho$ for which $H_{Id_V, \rho, \rho}^{-\gam, \gam}$ is nonzero and their sum is precisely $H_{Id_V}^{-\gam, \gam}$.
\end{proof}

For all $U, V \in \calI_{mid}$ where $w(U) > w(V)$, for all $\gam \in \Gam_{U, V}$ and parity vector $\rho \in \calP_U$, define the matrix $H'_{\gam, \rho, \rho}$ as $H'_{\gam, \rho, \rho}(\sig, \sig') = H'_{\gam}(\sig, \sig')$ if $\sig, \sig' \in \calL_{U, \rho}$ and $0$ otherwise. The following proposition is immediate from the definition.

\begin{propn}
	For all $U, V \in \calI_{mid}$ where $w(U) > w(V)$, for $\gam \in \Gam_{U, V}$, $H_{\gam}' = \sum_{\rho \in P_{\gam}} H_{\gam, \rho, \rho}'$.
\end{propn}



\begin{propn}
	For all $U, V \in \calI_{mid}$ where $w(U) > w(V)$, for all $\gam \in \Gam_{U, V}$ and $\rho \in P_{\gam}$,
	\[H_{Id_V, \rho, \rho}^{-\gam, \gam} = \frac{|Aut(U)|}{|Aut(V)|} S(\gam)^2 H'_{\gam, \rho, \rho}\]
\end{propn}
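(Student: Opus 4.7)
The plan is to mirror the analogous proposition from the planted slightly denser subgraph section (the one proving $H^{-\gam,\gam}_{Id_V} = \tfrac{|Aut(U)|}{|Aut(V)|}S(\gam)^2 H'_{\gam}$), with the additional bookkeeping that the parity projections $\rho$ force everything into compatible diagonal blocks. The key structural fact is that $\lambda_\alpha$ is multiplicative along shape composition, so $\lambda_{\sig \circ \gam \circ \gam^T \circ \sig'^T}$ factors as a product of contributions from $\sig$, $\gam$ (appearing twice), and $\sig'$.

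First, I would unpack both sides pointwise. Fix $\sig, \sig' \in \calL_{U,\rho}$ with $\rho \in P_\gam$, and assume for now that the truncation bounds $|V(\sig \circ \gam)|, |V(\sig' \circ \gam)| \le D_V$ hold. By the definition of $H^{-\gam,\gam}_{Id_V,\rho,\rho}$, its $(\sig,\sig')$ entry equals $H_{Id_V}(\sig \circ \gam, \sig' \circ \gam) = \tfrac{1}{|Aut(V)|}\lambda_{\sig \circ \gam \circ \gam^T \circ \sig'^T}$. On the other hand, $H'_{\gam,\rho,\rho}(\sig,\sig') = H_{Id_U}(\sig,\sig') = \tfrac{1}{|Aut(U)|}T(\sig)T(\sig')$, using the already-established formula $H_{Id_U,\rho,\rho} = \tfrac{1}{|Aut(U)|}v_\rho v_\rho^T$. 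So it suffices to prove the pointwise identity
\[
\lambda_{\sig \circ \gam \circ \gam^T \circ \sig'^T} \;=\; T(\sig)\,T(\sig')\,S(\gam)^2.
\]

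Next, I would verify this factorization by matching each factor appearing in Definition~\ref{def: tpca_coeffs}. Under composition, hyperedge sets are disjoint unions $E(\sig) \sqcup E(\gam) \sqcup E(\gam^T) \sqcup E(\sig'^T)$ with labels preserved, so the hyperedge product contributes $(\prod_{e \in E(\sig)}\cdot)(\prod_{e \in E(\gam)}\cdot)^2(\prod_{e \in E(\sig')}\cdot)$, matching $T(\sig)\cdot S(\gam)^2 \cdot T(\sig')$ on the edge side. For the $\Delta$ exponent, I would compute
\[
|V(\sig \circ \gam \circ \gam^T \circ \sig'^T)| \;=\; \bigl(|V(\sig)| - \tfrac{|V_\sig|}{2}\bigr) + \bigl(|V(\sig')| - \tfrac{|V_{\sig'}|}{2}\bigr) + 2\bigl(|V(\gam)| - \tfrac{|U_\gam| + |V_\gam|}{2}\bigr),
\]
using $|V_\sig|=|V_{\sig'}|=|U_\gam|=|U|$ and $|V_\gam|=|V|$ together with the fact that composition glues along exactly $|U|$ or $|V|$ vertices at each boundary. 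For the $(1/\sqrt{\Delta n})^{deg(\cdot)}$ factor, the composed shape has $U_{\sig \circ \gam \circ \gam^T \circ \sig'^T} = U_\sig$ and $V_{\sig \circ \gam \circ \gam^T \circ \sig'^T} = V_{\sig'^T} = U_{\sig'}$, so $deg(\sig \circ \gam \circ \gam^T \circ \sig'^T) = deg(U_\sig) + deg(U_{\sig'})$, matching the $T(\sig)T(\sig')$ side (and $S(\gam)$ carries no such factor, consistently). Assembling the three factors gives the desired identity, and then the arithmetic $\tfrac{1}{|Aut(V)|}T(\sig)T(\sig')S(\gam)^2 = \tfrac{|Aut(U)|}{|Aut(V)|}S(\gam)^2 \cdot \tfrac{1}{|Aut(U)|}T(\sig)T(\sig')$ closes the interior case.

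Third, I would check parity compatibility. The definition of $\lambda_\alpha$ requires $deg^\alpha(i) + deg^{U_\alpha}(i) + deg^{V_\alpha}(i)$ to be even at every $i$, while $T$ and $S$ carry their own parity conditions. At vertices interior to $\sig$ or $\sig'$ these match $T$'s conditions; at the glue $V_\sig = U_\gam$, the assumption $\rho \in P_\gam$ together with $\sig, \sig' \in \calL_{U,\rho}$ is precisely what guarantees the degrees on the two sides combine to an even number (this is how $P_\gam$ was constructed in the preceding proposition); at interior vertices of $\gam \cup \gam^T$ the contribution of the $\gam^T$ copy doubles that of $\gam$, hence is automatically even; and at $V_\gam = U_{\gam^T}$ the two copies again cancel in parity. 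So both sides vanish under the same parity conditions, and are equal otherwise.

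The main obstacle will be the truncation boundary. When $|V(\sig \circ \gam)| > D_V$, the canonical definition forces $H'_{\gam,\rho,\rho}(\sig,\sig') = 0$, and I must verify that $H^{-\gam,\gam}_{Id_V,\rho,\rho}(\sig,\sig')$ vanishes too. This amounts to showing that the left part of $\sig \circ \gam \circ \gam^T \circ \sig'^T$, in the canonical leftmost-minimum-separator decomposition, contains all of $V(\sig \circ \gam)$ (so the truncation clause in Definition~\ref{def: tpca_coeffs} kicks in and kills $\lambda$). This follows because $V_\gam$ is a vertex separator of $\sig \circ \gam \circ \gam^T \circ \sig'^T$ between $U_\sig$ and $V_{\sig'^T}$ of size $|V| < |U|$, so the leftmost minimum separator sits on or to the right of $V_\gam$; hence the left part of the composed shape has at least $|V(\sig \circ \gam)|$ vertices, and if this exceeds $D_V$ then $\lambda = 0$. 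An analogous argument handles $|V(\sig' \circ \gam)| > D_V$ using the rightmost minimum separator. Once this boundary case is handled, the identity holds everywhere and the proposition follows.
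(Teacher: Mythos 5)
Your proposal is essentially the paper's argument: fix $\sig, \sig' \in \calL_{U,\rho}$ within the truncation regime, verify $\lambda_{\sig \circ \gam \circ \gam^T \circ \sig'^T} = T(\sig)T(\sig')S(\gam)^2$ via the vertex-count identity and the multiplicativity of the coefficient over the disjoint union of edge sets, and then identify both sides using $H_{Id_U,\rho,\rho} = \tfrac{1}{|Aut(U)|}v_\rho v_\rho^T$. The factorization check and the parity bookkeeping match the paper's (terse) ``easily verify'' step.

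The one place you diverge is the final paragraph on the truncation boundary, and it reveals a small misreading. You write that when $|V(\sig \circ \gam)| > D_V$ you ``must verify that $H^{-\gam,\gam}_{Id_V,\rho,\rho}(\sig,\sig')$ vanishes too'' and then argue this by analyzing the leftmost minimum vertex separator of $\sig \circ \gam \circ \gam^T \circ \sig'^T$ to show $\lambda = 0$. But there is nothing to verify: by the very definition of the $-\gamma,\gamma'$ operation, $H^{-\gamma,\gamma'}(\sigma,\sigma')$ is \emph{set to zero} whenever $|V(\sigma \circ \gamma)| > D_V$ or $|V(\sigma' \circ \gamma')| > D_V$, independently of any fact about $\lambda$. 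Both $H^{-\gam,\gam}_{Id_V,\rho,\rho}$ and $H'_{\gam,\rho,\rho}$ are zero at the boundary by fiat, so the identity is $0=0$ there and the paper simply restricts attention to $\sig,\sig'$ satisfying the truncation bounds. Your separator-based argument that $\lambda_{\sig\circ\gam\circ\gam^T\circ\sig'^T}$ would vanish anyway is a valid consistency check (and is indeed used implicitly elsewhere, where the paper writes $H^{-\gam,\gam}_{Id_V}(\sig,\sig') = \lambda_{\sig\circ\gam\circ\gam^T\circ\sig'^T}/|Aut(V)|$ without truncation caveats), but it is not load-bearing for this proposition.
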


\begin{proof}
	Fix $\sig, \sig' \in \calL_{U, \rho}$ such that $|V(\sig \circ \gam)|, |V(\sig' \circ \gam)| \le D_V$. Note that $|V(\sig)| - \frac{|V_{\sig}|}{2} + |V(\sig')| - \frac{|V_{\sig'}|}{2} + 2(|V(\gam)| - \frac{|U_{\gam}| + |V_{\gam}|}{2}) = |V(\sig \circ \gam \circ \gam^T \circ \sig'^T)|$. Using \cref{def: tpca_coeffs}, we can easily verify that $\lda_{\sig \circ \gam \circ \gam^T \circ \sig'^T} = T(\sigma)T(\sigma') S(\gam)^2$. Therefore, $H_{Id_V, \rho, \rho}^{-\gam, \gam}(\sig, \sig') = \frac{|Aut(U)|}{|Aut(V)|} S(\gam)^2 H_{Id_U, \rho, \rho}(\sig, \sig')$. Since $H'_{\gam, \rho, \rho}(\sig, \sig') = H_{Id_U, \rho, \rho}(\sig, \sig')$ whenever $|V(\sig \circ \gam)|, |V(\sig' \circ \gam)| \le D_V$, this completes the proof.
\end{proof}

With this, we can prove the qualitative intersection term bounds.

\begin{proof}[Proof of qualitative intersection term bounds \cref{lem: tpca_cond3_simplified}]
	We have
	\begin{align*}
		\frac{|Aut(V)|}{|Aut(U)|}\cdot\frac{1}{S(\gam)^2}H_{Id_V}^{-\gam, \gam} &= \sum_{\rho \in P_{\gam}} \frac{|Aut(V)|}{|Aut(U)|}\cdot\frac{1}{S(\gam)^2} H_{Id_V, \rho, \rho}^{-\gam, \gam}
		= \sum_{\rho \in P_{\gam}} H'_{\gam, \rho, \rho}
		\preceq \sum_{\rho \in \calP_U} H'_{\gam, \rho, \rho}
		= H'_{\gam}
	\end{align*}
	where we used the fact that for all $\rho \in \calP_U$, we have $H'_{\gam,\rho, \rho} \succeq 0$.
\end{proof}

	\section{Application: Sparse PCA}\label{sec: spca_qual}

	Just as in earlier applications, we pseudocalibrate with respect to the random and planted distributions, enabling us to write the moment matrix in terms of graph matrices and then show the qualitative bounds needed in our machinery.

	\subsection{Pseudo-calibration}

\begin{definition}[Slack parameter]
	Define the slack parameter to be $\Delta = d^{-C_{\Del}\eps}$ for a constant $C_{\Del} > 0$.
\end{definition}

We will pseudo-calibrate with respect the following pair of random and planted distributions which we denote $\nu$ and $\mu$ respectively.

\SPCAdistributions*

We will again work with the Hermite basis of polynomials. For $a \in \NN^{m \times d}$ and variables $v_{i, j}$ for $i \in [m], j \in [n]$, define $h_a(v) \defeq \prod_{i \in [m], j \in [n]} h_{a_{i, j}}(v_{i, j})$.
For a nonnegative integer $t$, define $t!!= \frac{(2t)!}{t!2^t} = 1 \times 3 \times \ldots \times t$ if $t$ is odd and $0$ otherwise.

\begin{lemma}
	Let $I \in \NN^d, a \in \NN^{m \times d}$. For $i \in [m]$, let $e_i = \sum_{j \in [d]} a_{ij}$ and for $j \in [d]$, let $f_j = I_j + \sum_{i \in [m]} a_{ij}$. Let $c_1$ (resp. $c_2$) be the number of $i$ (resp. $j$) such that $e_i > 0$ (resp. $f_j > 0$). Then, if $e_i, f_j$ are all even, we have
	\[\EE_{\mu}[u^I h_a(v)] = \left(\frac{1}{\sqrt{k}}\right)^{|I|} \left(\frac{k}{d}\right)^{c_2} \Delta^{c_1}\prod_{i \in [m]} (e_i - 1)!! \prod_{i, j} \frac{\sqrt{\lambda}^{a_{ij}}}{\sqrt{k}^{a_{ij}}}\]
	Else, $\EE_{\mu}[u^I h_a(v)] = 0$.
\end{lemma}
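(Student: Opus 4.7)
The plan is to condition first on $u$ and exploit independence of the rows $v_1,\dots,v_m$, then compute the per-row Hermite expectation in closed form via the Hermite generating function, and finally take the expectation over the coordinates of $u$ (which are independent). Concretely:

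First, under $\mu$ the rows $v_1,\dots,v_m$ are conditionally independent given $u$, so
\[
\EE_\mu[u^I h_a(v)] = \EE_u\Bigl[u^I \prod_{i\in[m]} \EE_{v_i\mid u}\Bigl[\prod_{j\in[d]} h_{a_{ij}}(v_{ij})\Bigr]\Bigr].
\]
For each row $i$, $v_i$ is a mixture: with probability $1-\Delta$ it is $\GN(0,I_d)$, in which case the product of Hermite polynomials has expectation $1$ if $e_i=0$ and $0$ otherwise (by orthogonality of the $h_k$ to the constant polynomial under the standard Gaussian measure). So only the signal component $\GN(0, I_d + \lambda uu^T)$ can contribute when $e_i>0$.

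For the signal component, the main calculation is to compute $\EE_{v_i|u,\text{signal}}[\prod_j h_{a_{ij}}(v_{ij})]$. The plan is to use the Hermite generating function $\sum_{k\ge 0} \frac{t^k}{k!} h_k(x) = e^{tx - t^2/2}$ and the moment generating function of the Gaussian $\GN(0, I_d + \lambda uu^T)$. Multiplying over $j$ and taking expectation gives
\[
\sum_{(k_j)} \Bigl(\prod_j \tfrac{t_j^{k_j}}{k_j!}\Bigr) \EE\Bigl[\prod_j h_{k_j}(v_{ij})\Bigr] = e^{-\|t\|^2/2}\cdot \EE\bigl[e^{\langle t,v_i\rangle}\bigr] = \exp\bigl(\tfrac{\lambda}{2}(t\cdot u)^2\bigr).
\]
Expanding the right-hand side as $\sum_{s\ge 0} \frac{\lambda^s(t\cdot u)^{2s}}{2^s s!}$ and matching the coefficient of $\prod_j t_j^{a_{ij}}$ via the multinomial theorem forces $2s=e_i$, so $e_i$ must be even, and yields
\[
\EE_{v_i|u,\text{signal}}\Bigl[\prod_j h_{a_{ij}}(v_{ij})\Bigr] = \frac{\lambda^{e_i/2} e_i!}{2^{e_i/2}(e_i/2)!}\prod_j u_j^{a_{ij}} = \lambda^{e_i/2}(e_i-1)!!\prod_j u_j^{a_{ij}},
\]
using the identity $\frac{e_i!}{2^{e_i/2}(e_i/2)!} = (e_i-1)!!$ for even $e_i$. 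Combining the two mixture components, the per-row contribution is $\Delta \cdot \lambda^{e_i/2}(e_i-1)!!\prod_j u_j^{a_{ij}}$ when $e_i>0$ (and $e_i$ even), and $1 = (e_i-1)!!$ when $e_i=0$ (with the convention $(-1)!!=1$). This contributes a factor $\Delta^{c_1}\prod_i (e_i-1)!! \cdot \lambda^{\sum_i e_i/2}\prod_{i,j} u_j^{a_{ij}}$ to the overall expression.

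It remains to integrate over $u$. Setting $b_j = \sum_i a_{ij}$, the remaining expectation is $\EE_u\bigl[\prod_j u_j^{I_j + b_j}\bigr] = \EE_u\bigl[\prod_j u_j^{f_j}\bigr]$, and since the coordinates of $u$ are independent and supported on $\{-1/\sqrt{k},0,1/\sqrt{k}\}$ with $\Pr[u_j\ne 0] = k/d$, each $f_j$ must be even (else the expectation vanishes, giving the second case of the lemma), and
\[
\EE_u\Bigl[\prod_j u_j^{f_j}\Bigr] = \Bigl(\tfrac{k}{d}\Bigr)^{c_2} \prod_{j:f_j>0} \tfrac{1}{k^{f_j/2}} = \Bigl(\tfrac{k}{d}\Bigr)^{c_2}\Bigl(\tfrac{1}{\sqrt{k}}\Bigr)^{|I|+\sum_{ij}a_{ij}}.
\]
Multiplying this against the per-row contributions and regrouping $\lambda^{\sum e_i/2}$ and $k^{-\sum e_i/2}$ as $\prod_{i,j}\sqrt{\lambda}^{a_{ij}}/\sqrt{k}^{a_{ij}}$ gives exactly the claimed formula. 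The only real obstacle is the per-row Hermite computation under the rank-one-perturbed Gaussian; the generating function argument above is what makes it clean, as it reduces the mixed Hermite moment to a single coefficient extraction in a two-variable exponential.
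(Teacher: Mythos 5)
Your proof is correct, and it takes a genuinely different route from the paper's. The paper exploits the explicit additive representation $v_i = g_i + \sqrt{\lambda}\, b_i l_i u$ with $g_i \sim \GN(0,I_d)$, $l_i \sim \GN(0,1)$, $b_i \sim \mathrm{Ber}(\Delta)$, conditions on $(b_i, l_i, u)$, and then applies the one-line identity $\EE_{g\sim\GN(0,1)}[h_k(g+t)] = t^k$ coordinate-by-coordinate to annihilate the Gaussian noise. This collapses everything to elementary scalar moments $\EE[b_i^{e_i}]$, $\EE[l_i^{e_i}]$, $\EE[u_j^{f_j}]$, with no generating-function machinery needed. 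You instead condition only on $u$, treat each row as a Bernoulli mixture, and for the signal component compute the mixed Hermite moment under the rank-one-perturbed covariance $I_d + \lambda uu^T$ directly, via the Hermite generating function $e^{tx - t^2/2}$ and the Gaussian MGF, then extract the coefficient of $\prod_j t_j^{a_{ij}}$. Both computations give the same answer; the reconciliation $\frac{e_i!}{2^{e_i/2}(e_i/2)!} = (e_i-1)!!$ is exactly where your coefficient-extraction meets the paper's $\EE[l_i^{e_i}] = (e_i-1)!!$. What the paper's route buys is avoiding the need to notice that the rank-one perturbation makes the cross term in the MGF factor cleanly; what your route buys is that it is the standard self-contained technique for correlated Hermite moments and does not require first spotting the additive decomposition of the covariance. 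Either is a legitimate proof of the lemma.
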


\begin{proof}
	$v_1, \ldots, v_m \sim \mu$ can be written as $v_i = g_i + \sqrt{\lda} b_i l_i u$ where $g_i \sim \GN(0, I_d), l_i \sim \GN(0, 1), b_i \in \{0, 1\}$ where $b_i = 1$ with probability $\Del$.
	Let's analyze when the required expectation is nonzero. We can first condition on $b_i, l_i, u$ and use the fact that for a fixed $t$, $\EE_{g \sim \GN(0, 1)}[h_k(g + t)] = t^k$ to obtain
	\[\EE_{(u, l_i, b_i, g_i) \sim \mu}[u^I h_a(v)] = \EE_{(u, l_i, b_i) \sim \mu}[u^I\prod_{i, j}(\sqrt{\lda}b_il_iu_j)^{a_{ij}}] = \EE_{(u, l_i, b_i) \sim \mu}[\prod_{i \in [m]} (b_il_i)^{e_i}\prod_{j \in [d]} u_j^{f_j}] \prod_{i, j} \sqrt{\lda}^{a_{ij}}\]
	For this to be nonzero, the set of $c_1$ indices $i$ such that $e_i > 0$, should not have been resampled otherwise $b_i = 0$, each of which happens independently with probability $\Del$. And the set of $c_2$ indices $j$ such that $f_j > 0$ should have been such that $u_j$ is nonzero, each of which happens independently with probability $\frac{k}{d}$. Since $l_i, u_j$ are have zero expectation in $\nu$, we need $e_i, f_j$ to be even. The expectation then becomes
	\[\Del^{c_1} \left(\frac{k}{d}\right)^{c_2}\EE_{(u, l_i) \sim \mu}[\prod_{i \in [m]} l_i^{e_i}\prod_{j \in [d]} u_j^{f_j}] \prod_{i, j} \sqrt{\lda}^{a_{ij}} = \left(\frac{1}{\sqrt{k}}\right)^{|I|} \left(\frac{k}{d}\right)^{c_2} \Del^{c_1}\prod_{i \in [m]} (e_i - 1)!! \prod_{i, j} \frac{\sqrt{\lambda}^{a_{ij}}}{\sqrt{k}^{a_{ij}}}\]
	The last equality follows because, for each $j$ such that $u_j$ is nonzero, we have $u_j^t = (\frac{1}{\sqrt{k}})^t$ and $\EE_{g \sim \GN(0, 1)}[g^t] = (t - 1)!!$ if $t$ is even.
\end{proof}

Define the degree of SoS to be $D_{sos} = d^{C_{sos}\eps}$ for some constant $C_{sos} > 0$ that we choose later.
Define the truncation parameters to be $D_V = d^{C_V\eps}, D_E = d^{C_E\eps}$ for some constants $C_V, C_E > 0$. Regarding the choice of parameters, although we are working with a different problem, \cref{rmk: choice_of_params2} directly applies.

The underlying graphs for the graph matrices have the following structure:
There will be two types of vertices - $d$ type $1$ vertices corresponding to the dimensions of the space and $m$ type $2$ vertices corresponding to the different input vectors. The shapes will correspond to bipartite graphs with edges going between across of different types.
For the analysis of Sparse PCA, we will use the following notation.
\begin{itemize}
	\item For a shape $\al$ and type $t \in \{1, 2\}$, let $V_t(\al)$ denote the vertices of $V(\al)$ that are of type $t$. Let $|\al|_t = |V_t(\al)|$.
	\item For an index shape $U$ and a vertex $i$, define $deg^{U}(i)$ as follows: If $i \in V(U)$, then it is the power of the unique index shape piece $A \in U$ such that $i \in V(A)$. Otherwise, it is $0$.
	\item For an index shape $U$, define $deg(U) = \sum_{i \in V(U)} deg^U(i)$. This is also the degree of the monomial $p_U$.
	\item For a shape $\alpha$ and vertex $i$ in $\alpha$, let $deg^{\alpha}(i) = \sum_{i \in e \in E(\alpha)} l_e$.
	\item For any shape $\alpha$, let $deg(\alpha) = deg(U_{\al}) + deg(V_{\al})$.
	\item For an index shape $U \in \calI_{mid}$ and type $t \in \{1, 2\}$, let $U_t \in U$ denote the index shape piece of type $t$ in $U$ if it exists, otherwise define $U_t$ to be $\emptyset$. Note that this is well defined since for each type $t$, there is at most one index shape piece of type $t$ in $U$ since $U \in \calI_{mid}$. Also, denote by $|U|_t$ the length of the tuple $U_t$.
\end{itemize}

We will now describe the decomposition of the moment matrix $\Lda$.

\begin{definition}\label{def: spca_coeffs}
	If a shape $\alpha$ satisfies the following properties:
	\begin{itemize}
		\item Both $U_{\alpha}$ and $V_{\alpha}$ only contain index shape pieces of type $1$,
		\item $deg^{\alpha}(i) + deg^{U_{\alpha}}(i) + deg^{V_{\alpha}}(i)$ is even for all $i \in V(\alpha)$,
		\item $\alpha$ is proper,
		\item $\alpha$ satisfies the truncation parameters $D_{sos}, D_V, D_E$.
	\end{itemize}
	then define \[\lambda_{\alpha} = \left(\frac{1}{\sqrt{k}}\right)^{deg(\alpha)}\left(\frac{k}{d}\right)^{|\alpha|_1}\Del^{|\alpha|_2} \prod_{j \in V_2(\alpha)} (deg^{\alpha}(j) - 1)!!\prod_{e \in E(\alpha)} \frac{\sqrt{\lambda}^{l_e}}{\sqrt{k}^{l_e}}\]
	Otherwise, define $\lambda_{\alpha} = 0$.
\end{definition}

\begin{corollary}
	$\Lambda = \sum \lda_{\al}M_{\al}$.
\end{corollary}

\subsection{Qualitative machinery bounds}

In this section, we will prove the main PSD mass condition and obtain qualitative bounds of the other two conditions, which we will reuse in the full verification.
As in prior sections, we will state the bounds first, introduce notation and then prove them all in a unified manner.

\begin{restatable}[PSD mass]{lemma}{SPCAone}\label{lem: spca_cond1}
	For all $U \in \calI_{mid}$, $H_{Id_U} \succeq 0$
\end{restatable}

We define the following quantities to capture the contribution of the vertices within $\tau, \gam$ to the Fourier coefficients.

\begin{restatable}{definition}{SPCAstau}\label{def: spca_stau}
	For $U \in \calI_{mid}$ and $\tau \in \calM_U$, if $deg^{\tau}(i)$ is even for all vertices $i \in V(\tau) \setminus U_{\tau} \setminus V_{\tau}$, define
	\[S(\tau) =
	\left(\frac{k}{d}\right)^{|\tau|_1 - |U_{\tau}|_1}\Del^{|\tau|_2 - |U_{\tau}|_2} \prod_{j \in V_2(\tau) \setminus U_{\tau} \setminus V_{\tau}} (deg^{\tau}(j) - 1)!!\prod_{e \in E(\tau)} \frac{\sqrt{\lambda}^{l_e}}{\sqrt{k}^{l_e}}\]
	Otherwise, define $S(\tau) = 0$. 	For all $U, V \in \calI_{mid}$ where $w(U) > w(V)$ and $\gam \in \Gam_{U, V}$, if $deg^{\gam}(i)$ is even for all vertices $i$ in $V(\gam) \setminus U_{\gam} \setminus V_{\gam}$, define
	\[S(\gam) =
	\left(\frac{k}{d}\right)^{|\gamma|_1 - \frac{|U_{\gamma}|_1 + |V_{\gamma}|_1}{2}}\Del^{|\gamma|_2 - \frac{|U_{\gamma}|_2 + |V_{\gamma}|_2}{2}} \prod_{j \in V_2(\gamma) \setminus U_{\gamma} \setminus V_{\gamma}} (deg^{\gamma}(j) - 1)!!\prod_{e \in E(\gamma)} \frac{\sqrt{\lambda}^{l_e}}{\sqrt{k}^{l_e}}\]
	Otherwise, define $S(\gam) = 0$.
\end{restatable}

For getting the best bounds, it will be convenient to discretize the Normal distribution. The following fact follows from standard results on Gaussian quadrature, see for e.g. \cite[Lemma 4.3]{diakonikolas2017statistical}.

\begin{fact}[Discretizing the Normal distribution]\label{fact: quadrature}
	There is an absolute constant $C_{disc}$ such that, for any positive integer $D$, there exists a distribution $\calE$ over the real numbers supported on $D$ points $p_1, \ldots, p_D$, such that $|p_i| \le C_{disc} \sqrt{D}$ for all $i \le D$ and
    $\EE_{g \sim \calE}[g^t] = \EE_{g \sim \GN(0, 1)}[g^t]$ for all $t = 0, 1, \ldots, 2D - 1$.
\end{fact}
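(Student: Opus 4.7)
The plan is to realize the distribution $\calE$ as the $D$-point Gauss--Hermite quadrature rule for the standard Gaussian weight. Let $H_0, H_1, \ldots, H_D$ denote the Hermite polynomials orthonormal with respect to $\GN(0,1)$, and let $p_1 < p_2 < \cdots < p_D$ be the $D$ real roots of $H_D$ (which are known to be real, simple, and symmetric about the origin). Define weights $w_1,\dots,w_D$ by solving the linear system that forces $\sum_i w_i p_i^t = \E_{g \sim \GN(0,1)}[g^t]$ for $t = 0, 1, \ldots, D-1$; equivalently, set $w_i = \E_{g \sim \GN(0,1)}[\ell_i(g)]$ where $\ell_i$ is the degree-$(D-1)$ Lagrange interpolation polynomial at $p_i$ with respect to $\{p_1,\dots,p_D\}$. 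Take $\calE$ to be the distribution placing probability $w_i$ on $p_i$.

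The first key step is to verify that this is indeed a probability distribution, i.e.\ that the weights are nonnegative and sum to $1$. Summing to $1$ follows from plugging in $t=0$. For nonnegativity, I would apply the standard argument: the polynomial $\ell_i(x)^2$ has degree $2(D-1) \le 2D-2$, it equals $1$ at $p_i$ and vanishes at the other $p_j$, so using the degree-$2D-1$ exactness (proved in the next step) gives $w_i = \E_{g}[\ell_i(g)^2] \ge 0$. The second key step is to prove the exactness claim: for any polynomial $q$ of degree at most $2D-1$, writing $q = a \cdot H_D + r$ where $\deg(a), \deg(r) \le D-1$ via polynomial division, the $H_D$-term integrates to zero against $\GN(0,1)$ by orthogonality (since $a$ has degree less than $D$) and also evaluates to zero at every $p_i$ by construction, so $\sum_i w_i q(p_i) = \sum_i w_i r(p_i) = \E_g[r(g)] = \E_g[q(g)]$, the second equality being the defining property of the $w_i$.

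The third and final step is to bound the nodes. I would invoke the classical estimate that the largest root of the degree-$D$ Hermite polynomial satisfies $|p_i| \le C_{disc}\sqrt{D}$ for an absolute constant $C_{disc}$ (this is a standard fact; one proof uses the recurrence $H_{D+1}(x) = x H_D(x) - D\, H_{D-1}(x)$ combined with a Christoffel--Darboux / Sturm-oscillation argument to confine the roots to an interval of length $O(\sqrt{D})$, while a cleaner derivation observes that the nodes are the eigenvalues of the symmetric tridiagonal Jacobi matrix of the Hermite recurrence, whose operator norm is easily bounded by $O(\sqrt{D})$).

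The main technical obstacle is not in any single step individually, since each ingredient (positivity of Gauss weights, $2D-1$ exactness, and $O(\sqrt{D})$ root location) is classical; rather, the care required is in assembling them with explicit constants and verifying that the Hermite-polynomial normalization used in the paper matches the probabilist's convention used here (so that moments of $\GN(0,1)$ are the right target). Alternatively, the entire statement can be deduced as a black box from Gauss--Hermite quadrature theory, as cited in \cite[Lemma 4.3]{diakonikolas2017statistical}.
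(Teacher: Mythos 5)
Your proposal is correct and is the standard Gauss--Hermite quadrature argument: take the nodes to be the zeros of the degree-$D$ Hermite polynomial, define the weights by Lagrange interpolation, show nonnegativity via $w_i = \E[\ell_i^2]$, establish exactness up to degree $2D-1$ by dividing by $H_D$, and bound the nodes by the $O(\sqrt D)$ estimate on Hermite zeros (e.g.\ via the Jacobi matrix). The paper does not give its own proof of this fact; it simply cites \cite[Lemma 4.3]{diakonikolas2017statistical} as a standard result from Gaussian quadrature, which is the same route you indicate as a fallback. So your argument is consistent with, and essentially just fills in, the argument the paper is invoking by reference.

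One small point of care: the positivity argument as you phrased it has a mild circularity in presentation --- you use degree-$2D{-}1$ exactness to conclude $w_i = \E[\ell_i^2] \ge 0$, but you present exactness as the ``next step.'' The usual fix is to first prove degree-$(D{-}1)$ exactness directly from the defining linear system, then upgrade to degree-$2D{-}1$ exactness using orthogonality of $H_D$, and only then apply the $\ell_i^2$ trick (which needs degree $2D{-}2 \le 2D{-}1$). Reordering those two sentences removes the apparent dependency loop; the mathematics is otherwise fine.
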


\begin{definition} For any shape $\tau$, suppose $U' = (U_{\tau})_2, V' = (V_{\tau})_2$ are the type $2$ vertices in $U_{\tau}, V_{\tau}$ respectively. Define
$R(\tau) = (C_{disc}\sqrt{D_E})^{\sum_{j \in U' \cup V'} deg^{\tau}(j)}$.
\end{definition}

We can now state our qualitative bounds.

\begin{restatable}[Qualitative middle shape bounds]{lemma}{SPCAtwosimplified}\label{lem: spca_cond2_simplified}
	For all $U \in\calI_{mid}$ and $\tau \in \calM_U$,
	\[
	\begin{bmatrix}
	\frac{S(\tau)R(\tau)}{|Aut(U)|}H_{Id_U} & H_{\tau}\\
	H_{\tau}^T & \frac{S(\tau)R(\tau)}{|Aut(U)|}H_{Id_U}
	\end{bmatrix}
	\succeq 0\]
\end{restatable}



We again use the canonical definition of $H_{\gam}'$ from \cref{sec: hgamma_qual}.

\begin{restatable}[Qualitative intersection term bounds]{lemma}{SPCAthreesimplified}\label{lem: spca_cond3_simplified}
	For all $U, V \in \calI_{mid}$ where $w(U) > w(V)$ and all $\gam \in \Gam_{U, V}$,
	\[\frac{|Aut(V)|}{|Aut(U)|}\cdot\frac{1}{S(\gam)^2R(\gam)^2}H_{Id_V}^{-\gam, \gam} \preceq H_{\gam}'\]
\end{restatable}

\subsubsection{Proof of the PSD mass condition}

Most of the notation and analysis here are similar to the case of Tensor PCA, we just need to appropriately modify them since there are two types of vertices in the Sparse PCA application.
When we compose shapes $\sig, \sig'$, from \cref{def: spca_coeffs}, in order for $\lda_{\sig\circ \sig'}$ to be nonzero, observe that all vertices $i$ in $\lda_{\sig \circ \sig'}$ should have $deg^{\sig \circ \sig'}(i) + deg^{U_{\sig \circ \sig'}}(i) + deg^{V_{\sig \circ \sig'}}(i)$ to be even. To capture this notion conveniently, we again use the notion of parity vectors.

\begin{definition}
	Define a parity vector $\rho$ to be a vector whose entries are in $\{0, 1\}$.
	For $U\in \calI_{mid}$, define $\calP_U$ to be the set of parity vectors $\rho$ whose coordinates are indexed by $U_1$ followed by $U_2$.
\end{definition}

\begin{definition}
	For a left shape $\sig$, define $\rho_{\sig} \in \calP_{V_{\sig}}$, called the parity vector of $\sig$, to be the parity vector such that for each vertex $i \in V_{\sig}$, the $i$-th entry of $\rho_{\sig}$ is the parity of $deg^{U_{\sig}}(i) + deg^{\sig}(i)$, that is, $(\rho_{\sig})_i \equiv deg^{U_{\sig}}(i) + deg^{\sig}(i) \pmod 2$.
	For $U \in \calI_{mid}$ and $\rho \in \calP_U$, let $\calL_{U, \rho}$ be the set of all left shapes $\sig \in \calL_U$ such that $\rho_{\sig} = \rho$, that is, the set of all left shapes with parity vector $\rho$.
\end{definition}

For a shape $\tau$, for a $\tau$ coefficient matrix $H_{\tau}$ and parity vectors $\rho \in \calP_{U_{\tau}}, \rho' \in \calP_{V_{\tau}}$, define the $\tau$-coefficient matrix $H_{\tau, \rho, \rho'}$ as $H_{\tau ,\rho, \rho'}(\sig, \sig') = H_{\tau}(\sig, \sig')$ if $\sig \in \calL_{U_{\tau}, \rho}, \sig' \in \calL_{V_{\tau}, \rho'}$ and $0$ otherwise. This immediately implies the following proposition.

\begin{propn}
	For any shape $\tau$ and $\tau$-coefficient matrix $H_{\tau}$, $H_{\tau} = \sum_{\rho \in \calP_{U_{\tau}}, \rho' \in \calP_{V_{\tau}}} H_{\tau, \rho, \rho'}$
\end{propn}

\begin{propn}
	For any $U \in \calI_{mid}$, $H_{Id_U} = \sum_{\rho \in \calP_U} H_{Id_U, \rho, \rho}$
\end{propn}

\begin{proof}
	For any $\sig, \sig' \in \calL_U$, using \cref{def: spca_coeffs}, note that in order for $H_{Id_U}(\sig, \sig')$ to be nonzero, we must have $\rho_{\sig} = \rho_{\sig'}$.
\end{proof}

We now discretize the normal distribution while matching the first $2D_E - 1$ moments.

\begin{definition}\label{def: discretized_gaussian}
	Let $\calD$ be a distribution over the real numbers obtained by setting $D = D_E$ in \cref{fact: quadrature}. So, in particular, for any $x$ sampled from $\calD$, we have $|x| \le C_{disc}\sqrt{D_E}$ and for $t \le 2D_E - 1$, $\EE_{x \sim \calD}[x^t] = (t - 1)!!$.
\end{definition}

We define the following quantities to capture the contribution of the vertices within $\sig$ to the Fourier coefficients.

\begin{definition}
	For a shape $\sig\in \calL$, if $deg^{\sig}(i) + deg^{U_{\sig}}(i)$ is even for all vertices $i \in V(\sig) \setminus V_{\sig}$, define
	\[T(\sig) = \left(\frac{1}{\sqrt{k}}\right)^{deg(U_{\sig})}\left(\frac{k}{d}\right)^{|\sig|_1 - \frac{|V_{\sig}|_1}{2}}\Del^{|\sig|_2 - \frac{|V_{\sig}|_2}{2}} \prod_{j \in V_2(\sig) \setminus V_{\sig}} (deg^{\sig}(j) - 1)!!\prod_{e \in E(\sig)} \frac{\sqrt{\lambda}^{l_e}}{\sqrt{k}^{l_e}}\]
	Otherwise, define $T(\sig) = 0$.
\end{definition}

\begin{definition}
	Let $U \in \calI_{mid}$. Let $x_i$ for $i \in U_2$ be variables. Denote them collectively as $x_{U_2}$. For $\rho \in \calP_U$, define $v_{\rho, x_{U_2}}$ to be the vector indexed by left shapes $\sig \in \calL$ such that the $\sig$th entry is $T(\sig) \prod_{i \in {U_2}} x_i^{deg^{\sig}(i)}$ if $\sig \in \calL_{U, \rho}$ and $0$ otherwise.
\end{definition}

The following proposition is obvious and immediately implies the PSD mass condition.

\begin{propn}
	For any $U\in \calI_{mid}, \rho \in \calP_U$, suppose $x_i$ for $i \in U_2$ are random variables sampled from $\calD$. Then,
	$H_{Id_U, \rho, \rho} = \frac{1}{|Aut(U)|}\EE_{x}[v_{\rho, x_{U_2}}v_{\rho, x_{U_2}}^T]$.
\end{propn}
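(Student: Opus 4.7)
The approach is a direct entrywise computation: I would expand $\EE_x[v_{\rho, x_{U_2}} v_{\rho, x_{U_2}}^T]$ and match it against $|Aut(U)| \cdot H_{Id_U, \rho, \rho}$. The $(\sigma, \sigma')$ entry of the outer product vanishes unless both $\sigma, \sigma' \in \calL_{U, \rho}$, which is precisely the support of $H_{Id_U, \rho, \rho}$. For such pairs, independence of the $x_i$ yields
\[
\EE_x\bigl[v_{\rho, x_{U_2}}(\sigma)\, v_{\rho, x_{U_2}}(\sigma')\bigr] \;=\; T(\sigma)\, T(\sigma') \prod_{i \in U_2} \EE_{x_i \sim \calD}\bigl[x_i^{deg^\sigma(i) + deg^{\sigma'}(i)}\bigr],
\]
so the task reduces to identifying this right-hand side with $\lambda_{\sigma \circ \sigma'^T}$.

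The heart of the argument is to verify the multiplicative identity
\[
T(\sigma)\, T(\sigma') \cdot \prod_{i \in U_2} \bigl(deg^\sigma(i) + deg^{\sigma'}(i) - 1\bigr)!! \;=\; \lambda_{\sigma \circ \sigma'^T}
\]
factor by factor against \cref{def: spca_coeffs}. The $(1/\sqrt{k})^{deg(\cdot)}$ factors combine since $deg(U_\sigma) + deg(U_{\sigma'}) = deg(\sigma \circ \sigma'^T)$. The $(k/d)^{|\cdot|_1}$ factors combine since $|\sigma|_1 - |V_\sigma|_1/2 + |\sigma'|_1 - |V_{\sigma'}|_1/2 = |\sigma \circ \sigma'^T|_1$, reflecting that the shared type-1 boundary $U_1$ is counted once in the composition; the $\Delta^{|\cdot|_2}$ factor combines analogously. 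The edge product is trivially multiplicative over $E(\sigma \circ \sigma'^T) = E(\sigma) \sqcup E(\sigma'^T)$. Finally, the double-factorial factor splits into type-2 vertices interior to $\sigma$ or $\sigma'^T$, already incorporated into $T(\sigma), T(\sigma')$, and type-2 vertices in the shared boundary $U_2$, where $deg^{\sigma \circ \sigma'^T}(i) = deg^\sigma(i) + deg^{\sigma'}(i)$ and the corresponding factor is supplied by the explicit product from the moment calculation.

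To produce the boundary double factorials from the moments of $\calD$, I would invoke \cref{fact: quadrature} and \cref{def: discretized_gaussian}: $\EE_{x_i \sim \calD}[x_i^t] = (t-1)!!$ for even $t$ in the range $t \le 2D_E - 1$, and $0$ for odd $t$. The parity check is automatic at the boundary: for $i \in V_\sigma$ we have $(\rho_\sigma)_i \equiv deg^{U_\sigma}(i) + deg^\sigma(i) \pmod{2}$, and combined with the analogous relation for $\sigma'$ and the hypothesis $\rho_\sigma = \rho = \rho_{\sigma'}$, this forces $deg^\sigma(i) + deg^{\sigma'}(i)$ to have the parity required by \cref{def: spca_coeffs} at merged-boundary vertices; interior odd-degree cases are already absorbed into $T(\sigma) = 0$ or $T(\sigma') = 0$, so the two sides vanish together.

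The main subtlety I anticipate is ensuring $deg^\sigma(i) + deg^{\sigma'}(i) \le 2D_E - 1$ at each $i \in U_2$, which is the range in which $\calD$ reproduces Gaussian moments. This should follow from the edge-label truncation $l_e \le D_E$ together with the vertex-count truncation $|V(\sigma)|, |V(\sigma'^T)| \le D_V$ bounding the number of edges incident to any single boundary vertex, and I would verify this quantitative bound directly from the truncation conditions before invoking the moment-matching step.
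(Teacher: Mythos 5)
Your main computation is correct and follows the paper's approach exactly (the paper's own proof is the one-line ``verify the conditions of \cref{def: spca_coeffs} using \cref{def: discretized_gaussian}''): the entrywise reduction to $T(\sigma)T(\sigma')\prod_{i\in U_2}\EE[x_i^{deg^\sigma(i)+deg^{\sigma'}(i)}]=\lambda_{\sigma\circ\sigma'^T}$, the factor-by-factor matching using $|\sigma|_t-|V_\sigma|_t/2+|\sigma'|_t-|V_{\sigma'}|_t/2=|\sigma\circ\sigma'^T|_t$, and the observation that $\rho_\sigma=\rho_{\sigma'}$ enforces even boundary parity are all right.

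The one place where your argument does not close as stated is the last paragraph. You correctly flag that the moment identity $\EE_{x_i\sim\calD}[x_i^t]=(t-1)!!$ only holds for $t\le 2D_E-1$, and you propose to derive $deg^\sigma(i)+deg^{\sigma'}(i)\le 2D_E-1$ from the truncation parameters. That bound does not follow: a type-$2$ boundary vertex $i\in U_2$ can be incident to as many as $D_V$ edges in $\sigma$, each with label up to $D_E$, so $deg^\sigma(i)$ can be as large as $D_V D_E$, giving a total exponent up to $2D_V D_E\gg 2D_E-1$. The resolution is not a sharper degree bound but a larger discretization: \cref{def: discretized_gaussian} should invoke \cref{fact: quadrature} with $D$ of order $D_V D_E$ rather than $D_E$ (this is a small imprecision in the paper; it is harmless downstream since $D_V D_E=d^{O(\eps)}$, so $C_{disc}\sqrt{D}$ stays $d^{O(\eps)}$ and all the quantitative estimates, including those involving $R(\tau)$ and $R(\gamma)$, go through unchanged). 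With that correction your proof is complete.
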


\begin{proof}
	Observe that for $\sig, \sig' \in \calL_{U, \rho}$ and $t \in \{1, 2\}$, $(|\sig|_t - \frac{|V_{\sig}|_t}{2}) + (|\sig'|_t - \frac{|V_{\sig'}|_t}{2}) = |\sig \circ \sig'|_t$. The result follows by verifying the conditions of \cref{def: spca_coeffs} and using \cref{def: discretized_gaussian}.
\end{proof}


\begin{proof}[Proof of the PSD mass condition \cref{lem: spca_cond1}]
	We have $H_{Id_U} = \sum_{\rho \in \calP_U} H_{Id_U, \rho, \rho} \succeq 0$ because of the above proposition.
\end{proof}

\subsubsection{Qualitative middle shape bounds}

The next proposition captures the fact that when we compose shapes $\sig, \tau, \sig'^T$, in order for $\lda_{\sig \circ \tau \circ \sig'^T}$ to be nonzero, the parities of the degrees of the merged vertices should add up correspondingly.

\begin{propn}\label{propn: spca_coeff_2}
	For all $U \in \calI_{mid}$ and $\tau \in \calM_U$, there exist two sets of parity vectors $P_{\tau}, Q_{\tau} \subseteq \calP_{U}$ and a bijection $\pi : P_{\tau} \to Q_{\tau}$ such that $H_{\tau} = \sum_{\rho \in P_{\tau}} H_{\tau, \rho, \pi(\rho)}$.
\end{propn}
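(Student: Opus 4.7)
The plan is to mimic the proof of the analogous proposition in the Tensor PCA section (\cref{propn: tpca_coeff_2}). I would start by fixing $U \in \calI_{mid}$ and $\tau \in \calM_U$, and expanding the definition $H_\tau(\sigma,\sigma') = \frac{1}{(|U|!)^2}\lambda_{\sigma \circ \tau \circ \sigma'^T}$ for $\sigma,\sigma' \in \calL_U$. By \cref{def: spca_coeffs}, this coefficient is nonzero only when at every vertex $i$ of $\sigma \circ \tau \circ \sigma'^T$ the quantity $deg^{\sigma \circ \tau \circ \sigma'^T}(i) + deg^{U_\sigma}(i) + deg^{V_{\sigma'^T}}(i)$ is even. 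The strategy is to show that this constraint, viewed as a condition on the parity vectors $\rho_\sigma$ and $\rho_{\sigma'}$, cleanly decouples into (a) conditions purely internal to $\tau$, (b) a single constraint at each boundary vertex of $U_\tau$ relating $\rho_\sigma$ to $\tau$, and (c) a symmetric constraint at each boundary vertex of $V_\tau$ relating $\rho_{\sigma'}$ to $\tau$.

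The key step is the analysis at the boundary vertices. For any vertex $i \in V_\sigma = U_\tau$ which is not also in $U_\sigma$, the parity of $deg^{\sigma \circ \tau \circ \sigma'^T}(i)$ equals $deg^\sigma(i) + deg^\tau(i) \pmod{2}$. By definition of the parity vector, $(\rho_\sigma)_i \equiv deg^\sigma(i) + deg^{U_\sigma}(i) \pmod{2}$, so the required evenness translates to a fixed value of $(\rho_\sigma)_i$ determined entirely by the local structure of $\tau$ at $i$ (and, in the type $2$ case, by $R(\tau)$ being accounted for in $S(\tau)$). A symmetric statement holds at each $j \in V_\tau = U_{\sigma'^T}$, pinning $(\rho_{\sigma'})_j$. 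Vertices in $U_\sigma$ (or $V_{\sigma'^T}$) that lie outside $U_\tau \cup V_\tau$ correspond to index shape pieces of $U_\alpha = U_\sigma$, and play no role in the matching. Thus the map $\rho \mapsto \rho'$ is fully determined on the support.

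Given this, I would define $P_\tau \subseteq \calP_U$ to be the set of parity vectors $\rho$ such that some $\sigma \in \calL_{U,\rho}$ and some $\sigma' \in \calL_U$ give a nonzero coefficient, and $Q_\tau$ analogously for $\sigma'$. The boundary analysis shows that for each $\rho \in P_\tau$ there is a unique compatible $\rho' \in Q_\tau$, and vice versa, so $\pi: P_\tau \to Q_\tau$ defined by $\pi(\rho) = \rho'$ is a well-defined bijection. Putting this together, whenever $H_\tau(\sigma,\sigma')$ is nonzero we must have $\rho_\sigma \in P_\tau$ and $\rho_{\sigma'} = \pi(\rho_\sigma)$, which yields the decomposition $H_\tau = \sum_{\rho \in P_\tau} H_{\tau,\rho,\pi(\rho)}$.

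The main obstacle I anticipate is bookkeeping around the two types of vertices: type $1$ vertices behave exactly as in Tensor PCA, but type $2$ boundary vertices carry additional double-factorial contributions from $R(\tau)$ and the $\prod_j (deg^\tau(j)-1)!!$ factors. These contributions are always nonnegative and do not affect whether the coefficient vanishes; however, I must verify carefully that the only way for the coefficient to vanish due to parity is via the boundary-vertex constraints on $\rho_\sigma$ and $\rho_{\sigma'}$, and not through some hidden interaction with the double-factorial terms (which would happen only if $deg^\tau(j)$ were odd at an interior type $2$ vertex, which is already excluded by the convention that $S(\tau)=0$ in such cases).
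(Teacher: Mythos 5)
Your strategy is the same as the paper's: expand the parity condition from \cref{def: spca_coeffs}, show it determines a functional relationship $\rho \mapsto \rho'$, and conclude the decomposition. The paper's own proof is essentially this one-liner. However, your more detailed boundary analysis contains a real gap at vertices in $U_\tau \cap V_\tau$.

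Recall that $\tau \in \calM_U$ only requires $U_\tau \equiv V_\tau \equiv U$; the definition explicitly allows $U_\tau$ and $V_\tau$ to intersect arbitrarily. For a vertex $i \in U_\tau \cap V_\tau$, the vertex also lies in $U_{\sigma'^T} = V_\tau$, so $deg^{\sigma \circ \tau \circ \sigma'^T}(i) = deg^\sigma(i) + deg^\tau(i) + deg^{\sigma'^T}(i)$, picking up a term from $\sigma'^T$ that your formula omits. The evenness constraint at such an $i$ therefore becomes $(\rho_\sigma)_i + deg^\tau(i) + (\rho_{\sigma'})_i \equiv 0 \pmod 2$, which \emph{couples} $(\rho_\sigma)_i$ and $(\rho_{\sigma'})_i$ by the affine relation $(\rho_{\sigma'})_i \equiv (\rho_\sigma)_i + deg^\tau(i)$ rather than pinning either to a value determined by $\tau$ alone. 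Your claim that ``the required evenness translates to a fixed value of $(\rho_\sigma)_i$'' is false on $U_\tau \cap V_\tau$; if it were true, $P_\tau$ would always be a singleton, which is not the case. The separate distinction you draw, ``which is not also in $U_\sigma$,'' is not the operative one either — the constraint $(\rho_\sigma)_i \equiv deg^\tau(i) \pmod 2$ holds for every $i \in U_\tau \setminus V_\tau$ regardless of whether $i \in U_\sigma$, since $(\rho_\sigma)_i$ already absorbs $deg^{U_\sigma}(i)$ by definition. The fix is to split $U_\tau \cup V_\tau$ into $U_\tau \setminus V_\tau$ (pins $\rho$), $V_\tau \setminus U_\tau$ (pins $\rho'$), and $U_\tau \cap V_\tau$ (free for $\rho$, with $\rho'$ determined by the affine relation); this still yields a well-defined bijection $\pi$, and your ``support set'' definitions of $P_\tau$, $Q_\tau$ and the final decomposition remain correct. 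Your observation about the double factorials is right — $(deg^\tau(j)-1)!!$ vanishes only for odd arguments, which are already excluded by the parity constraint, so they never create or destroy support.
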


\begin{proof}
	Using \cref{def: spca_coeffs}, in order for $H_{\tau}(\sig, \sig')$ to be nonzero, we must have that, in $\sig \circ \tau \circ \sig'$, for all $i \in U_{\tau} \cup V_{\tau}$, $deg^{U_{\sig}}(i) + deg^{U_{\sig'}}(i) + deg^{\sigma \circ \tau \circ \sigma'^T}(i)$ must be even. In other words, for any $\rho \in \calP_U$, there is at most one $\rho' \in \calP_U$ such that if we take $\sig \in \calL_{U, \rho}, \sig' \in \calL_U$ with $H_{\tau}(\sig, \sig')$ nonzero, then the parity of $\sig'$ is $\rho'$. Also, observe that $\rho'$ determines $\rho$. We then take $P_{\tau}$ to be the set of $\rho$ such that $\rho'$ exists, $Q_{\tau}$ to be the set of $\rho'$ and in this case, we define $\pi(\rho) = \rho'$.
\end{proof}



\begin{propn}
	For any $U \in \calI_{mid}$ and $\tau \in \calM_U$, suppose we take $\rho \in P_{\tau}$.  Let $\pi$ be the bijection from \cref{propn: spca_coeff_2} so that $\pi(\rho) \in Q_{\tau}$. Let $U' = (U_{\tau})_2, V' = (V_{\tau})_2$ be the type $2$ vertices in $U_{\tau}, V_{\tau}$ respectively. Let $x_i$ for $i \in U' \cup V'$ be random variables independently sampled from $\calD$. Define $x_{U'}$ (resp. $x_{V'}$) to be the subset of variables $x_i$ for $i \in U'$ (resp. $i \in V'$). Then,
	\[H_{\tau, \rho, \pi(\rho)} = \frac{1}{|Aut(U)|^2} S(\tau) \EE_x\left[v_{\rho, x_{U'}}\left(\prod_{i \in U' \cup V'} x_i^{deg^{\tau}(i)}\right)v_{\pi(\rho), x_{V'}}^T\right]\]
\end{propn}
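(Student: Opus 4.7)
\begin{proofsketch}
The plan is to verify the identity entry-by-entry. Fix $\sigma \in \calL_{U,\rho}$ and $\sigma' \in \calL_{U,\pi(\rho)}$ with $\sigma \circ \tau \circ \sigma'^T$ satisfying the truncation parameters (otherwise both sides are zero by convention). By definition of the shape coefficient matrix we have $H_{\tau,\rho,\pi(\rho)}(\sigma,\sigma') = \tfrac{1}{|Aut(U)|^2}\lambda_{\sigma\circ\tau\circ\sigma'^T}$, so it suffices to show that
\[
\lambda_{\sigma\circ\tau\circ\sigma'^T} \;=\; S(\tau)\,T(\sigma)\,T(\sigma')\cdot \EE_x\!\left[\prod_{i\in U'}x_i^{deg^{\sigma}(i)}\cdot\!\!\prod_{i\in U'\cup V'}\!\!x_i^{deg^{\tau}(i)}\cdot\prod_{i\in V'}x_i^{deg^{\sigma'}(i)}\right].
\]

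The next step is to expand $\lambda_{\sigma\circ\tau\circ\sigma'^T}$ using \cref{def: spca_coeffs} and partition the type $2$ interior vertices of the composed shape into four disjoint groups: (i) the vertices in $V_2(\sigma)\setminus V_\sigma$, (ii) the vertices in $V_2(\sigma')\setminus V_{\sigma'}$, (iii) the vertices in $V_2(\tau)\setminus U_\tau\setminus V_\tau$, and (iv) the vertices of $U'\cup V'$, which are boundary in the pieces but become interior after composition. The edges $E(\sigma\circ\tau\circ\sigma'^T)$ likewise partition into $E(\sigma)\cup E(\tau)\cup E(\sigma'^T)$, and the counts $|\cdot|_1$, $deg(U_{\cdot})$ decompose analogously using $U_{\sigma\circ\tau\circ\sigma'^T}=U_\sigma$ and $V_{\sigma\circ\tau\circ\sigma'^T}=U_{\sigma'}$. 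Collecting the contributions from groups (i), (ii), (iii) and matching them against the definitions of $T(\sigma)$, $T(\sigma')$, and $S(\tau)$ reproduces exactly those three factors, leaving only the double-factorial factors attached to group (iv).

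For a vertex $i\in U'\cup V'$, the degree of $i$ in $\sigma\circ\tau\circ\sigma'^T$ equals $deg^\sigma(i)+deg^\tau(i)+deg^{\sigma'}(i)$, with the convention that $deg^\sigma(i)=0$ for $i\notin V_\sigma = U_\tau$ and $deg^{\sigma'}(i)=0$ for $i\notin V_{\sigma'}=V_\tau$. Because $\rho\in P_\tau$ and the bijection $\pi$ was defined precisely so that these total degrees are even, each such exponent is an even integer, and by the truncation parameter $D_E$ each is at most $2D_E-1$. By \cref{def: discretized_gaussian}, the discretized distribution $\calD$ matches the Gaussian moments through order $2D_E-1$, so $\EE_{x_i\sim\calD}[x_i^{d_i}]=(d_i-1)!!$ for each such exponent. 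Using independence of the $x_i$, the full expectation over $x_{U'\cup V'}$ factors as $\prod_{i\in U'\cup V'}(d_i-1)!!$, which is exactly the product of double-factorial factors coming from group (iv) in the expansion of $\lambda_{\sigma\circ\tau\circ\sigma'^T}$. Combining this with the first three factors completes the identity.

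The main obstacle I anticipate is the bookkeeping in the decomposition of group (iv): a vertex of $\tau$ can simultaneously belong to $U_\tau$ and $V_\tau$ (so to both $U'$ and $V'$), and the product $\prod_{i\in U'\cup V'}x_i^{deg^\tau(i)}$ is over the union, not the multiset. One must check that in that case the three contributions $deg^\sigma(i)+deg^\tau(i)+deg^{\sigma'}(i)$ correctly reconstruct the degree of $i$ in the composed shape without double-counting, and that the analogous subtraction is performed in $|\sigma\circ\tau\circ\sigma'^T|_1$, $|\sigma\circ\tau\circ\sigma'^T|_2$, and in enumerating $E(\sigma\circ\tau\circ\sigma'^T)$. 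Once this bookkeeping is done carefully, the identity reduces to the clean matching of moments described above.
\end{proofsketch}
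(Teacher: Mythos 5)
Your proof is correct and takes essentially the same route as the paper, which gives only a one-line verification: it records the counting identity $(|\tau|_t - |U_\tau|_t) + (|\sigma|_t - |V_\sigma|_t/2) + (|\sigma'|_t - |V_{\sigma'}|_t/2) = |\sigma\circ\tau\circ\sigma'^T|_t$ for $t\in\{1,2\}$ and then cites \cref{def: spca_coeffs} and \cref{def: discretized_gaussian}. You carry out that verification explicitly, and the $U'\cap V'$ bookkeeping you flag resolves cleanly: $\prod_{i\in U'\cup V'}x_i^{deg^\tau(i)}$ takes each glued vertex's $\tau$-degree exactly once, and $deg^\sigma(i)+deg^\tau(i)+deg^{\sigma'}(i)$ (with the zero convention for $i\notin U'$ or $i\notin V'$) equals $deg^{\sigma\circ\tau\circ\sigma'^T}(i)$ in all three cases $U'\setminus V'$, $V'\setminus U'$, $U'\cap V'$.

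One imprecision worth noting: you assert that each relevant exponent is at most $2D_E-1$, but $D_E$ bounds individual edge labels, not vertex degrees, and $deg^{\sigma\circ\tau\circ\sigma'^T}(i)$ can be on the order of $D_V D_E$. The moment-matching window of $\calD$ from \cref{def: discretized_gaussian} extends only to order $2D_E-1$, so the justification for $\EE_\calD[x_i^{d_i}]=(d_i-1)!!$ should not rest on that bound. This imprecision is inherited from the paper's choice of $D=D_E$ in the Gaussian-quadrature discretization and does not affect the structure of your argument, but the explicit claim as you state it is false.
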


\begin{proof}
	For $\sig \in L_{U, \rho}, \sig' \in \calL_{U, \pi(\rho)}$ and $t \in \{1, 2\}$, we have $(|\tau|_t - |U_{\tau}|_t) + (|\sig|_t - \frac{|V_{\sig}|_t}{2}) + (|\sig'|_t - \frac{|V_{\sig'}|_t}{2}) = |\sig \circ \tau\circ \sig'|_t$.
	The result then follows by a straightforward verification of the conditions of \cref{def: spca_coeffs} using \cref{def: discretized_gaussian}.
\end{proof}


We are ready to show the qualitative middle shape bounds.

\begin{proof}[Proof of the qualitative middle shape bounds \cref{lem: spca_cond2_simplified}]
	Let $P_{\tau}, Q_{\tau}, \pi$ be from \cref{propn: spca_coeff_2}. Let $U' = (U_{\tau})_2, V' = (V_{\tau})_2$ be the type $2$ vertices in $U_{\tau}, V_{\tau}$ respectively. Let $x_i$ for $i \in U' \cup V'$ be random variables independently sampled from $\calD$. Define $x_{U'}$ (resp. $x_{V'}$) to be the subset of variables $x_i$ for $i \in U'$ (resp. $i \in V'$).

	For $\rho \in \calP_U$, define $W_{\rho, \rho} = \EE_{y_{U_2} \sim \calD^{U_2}}[v_{\rho, y_{U_2}}v_{\rho, y_{U_2}}^T]$ so that $H_{Id_U, \rho, \rho} = \frac{1}{|Aut(U)|} W_{\rho, \rho}$. Observe that $W_{\rho, \rho} = \EE[v_{\rho, x_{U'}}v_{\rho, x_{U'}}^T] = \EE[v_{\rho, x_{V'}}v_{\rho, x_{V'}}^T]$ because $x_{U'}$ and $x_{V'}$ are also sets of variables sampled from $\calD$ and, $U'$, $V'$ have the same size as $U_2$ because $U_{\tau} = V_{\tau} = U$.

	For $\rho, \rho' \in \calP_U$, define $Y_{\rho, \rho'} = \EE\left[v_{\rho, x_{U'}}\left(\prod_{i \in U' \cup V'} x_i^{deg^{\tau}(i)}\right)v_{\pi(\rho), x_{V'}}^T\right]$. Then, $H_{\tau} = \sum_{\rho \in P_{\tau}} H_{\tau, \rho, \pi(\rho)} = \frac{1}{|Aut(U)|^2}S(\tau)\sum_{\rho \in P_{\tau}} Y_{\rho, \pi(\rho)}$. We have

	\begin{align*}
	\begin{bmatrix}
	\frac{S(\tau)R(\tau)}{|Aut(U)|}H_{Id_U} & H_{\tau}\\
	H_{\tau}^T & \frac{S(\tau)R(\tau)}{|Aut(U)|}H_{Id_U}
	\end{bmatrix}
	&= \frac{S(\tau)}{|Aut(U)|^2}
	\begin{bmatrix}
	R(\tau)\sum_{\rho \in \calP_U} W_{\rho, \rho} & \sum_{\rho \in P_{\tau}} Y_{\rho, \pi(\rho)}\\
	\sum_{\rho \in P_{\tau}} Y_{\rho, \pi(\rho)}^T & R(\tau)\sum_{\rho \in \calP_U} W_{\rho, \rho}
	\end{bmatrix}
	\end{align*}
	Since $\frac{S(\tau)}{|Aut(U)|^2} \ge 0$, it suffices to prove that $\begin{bmatrix}
	R(\tau)\sum_{\rho \in \calP_U} W_{\rho, \rho} & \sum_{\rho \in P_{\tau}} Y_{\rho, \pi(\rho)}\\
	\sum_{\rho \in P_{\tau}} Y_{\rho, \pi(\rho)}^T & R(\tau)\sum_{\rho \in \calP_U} W_{\rho, \rho}
	\end{bmatrix}\succeq 0$. Consider
	\begin{align*}
		\begin{bmatrix}
			R(\tau)\sum_{\rho \in \calP_U} W_{\rho, \rho} & \sum_{\rho \in P_{\tau}} Y_{\rho, \pi(\rho)}\\
			\sum_{\rho \in P_{\tau}} Y_{\rho, \pi(\rho)}^T & R(\tau)\sum_{\rho \in \calP_U} W_{\rho, \rho}
		\end{bmatrix} =& R(\tau)\begin{bmatrix}
		\sum_{\rho \in \calP_U \setminus P_{\tau}} W_{\rho, \rho} & 0\\
		0 & \sum_{\rho \in \calP_U \setminus Q_{\tau}} W_{\rho, \rho}
		\end{bmatrix}\\
		& + \begin{bmatrix}
		R(\tau)\sum_{\rho \in P_{\tau}} W_{\rho, \rho} & \sum_{\rho \in P_{\tau}} Y_{\rho, \pi(\rho)}\\
		\sum_{\rho \in P_{\tau}} Y_{\rho, \pi(\rho)}^T & R(\tau)\sum_{\rho \in P_{\tau}} W_{\pi(\rho), \pi(\rho)}
		\end{bmatrix}\\
	\end{align*}

	We have $\sum_{\rho \in \calP_U \setminus P_{\tau}} W_{\rho, \rho} = \sum_{\rho \in \calP_U \setminus P_{\tau}} \EE[v_{\rho, x_{U'}}v_{\rho, x_{U'}}^T] \succeq 0$. Similarly, $\sum_{\rho \in \calP_U \setminus Q_{\tau}} W_{\rho, \rho} \succeq 0$. Also, $R(\tau) \ge 0$ and so, the first term in the above expression,
	$R(\tau)\begin{bmatrix}
	\sum_{\rho \in \calP_U \setminus P_{\tau}} W_{\rho, \rho} & 0\\
	0 & \sum_{\rho \in \calP_U \setminus Q_{\tau}} W_{\rho, \rho}
	\end{bmatrix}$ is positive semidefinite. For the second term,
	\begin{align*}
	&\begin{bmatrix}
	R(\tau)\sum_{\rho \in P_{\tau}} W_{\rho, \rho} & \sum_{\rho \in P_{\tau}} Y_{\rho, \pi(\rho)}\\
	\sum_{\rho \in P_{\tau}} Y_{\rho, \pi(\rho)}^T & R(\tau)\sum_{\rho \in P_{\tau}} W_{\pi(\rho), \pi(\rho)}
	\end{bmatrix}\\
	&\qquad= \sum_{\rho \in P_{\tau}}
	\begin{bmatrix}
	R(\tau)\EE[v_{\rho, x_{U'}}v_{\rho, x_{U'}}^T] & \EE\left[v_{\rho, x_{U'}}\left(\prod_{i \in U' \cup V'} x_i^{deg^{\tau}(i)}\right)v_{\pi(\rho), x_{V'}}^T\right]\\
	\EE\left[v_{\rho, x_{U'}}^T\left(\prod_{i \in U' \cup V'} x_i^{deg^{\tau}(i)}\right)v_{\pi(\rho), x_{V'}}\right] & R(\tau)\EE[v_{\pi(\rho), x_{V'}}v_{\pi(\rho), x_{V'}}^T]
	\end{bmatrix}\\
	&\qquad= \sum_{\rho \in P_{\tau}}\EE
	\begin{bmatrix}
	R(\tau)v_{\rho, x_{U'}}v_{\rho, x_{U'}}^T & v_{\rho, x_{U'}}\left(\prod_{i \in U' \cup V'} x_i^{deg^{\tau}(i)}\right)v_{\pi(\rho), x_{V'}}^T\\
	v_{\rho, x_{U'}}^T\left(\prod_{i \in U' \cup V'} x_i^{deg^{\tau}(i)}\right)v_{\pi(\rho), x_{V'}} & R(\tau)v_{\pi(\rho), x_{V'}}v_{\pi(\rho), x_{V'}}^T
	\end{bmatrix}
	\end{align*}

We will prove that the term inside the expectation is positive semidefinite for each $\rho \in P_{\tau}$ and each sampling of the $x_i$ from $\calD$, which will complete the proof. Fix $\rho \in P_{\tau}$ and any sampling of the $x_i$ from $\calD$. Let $w_1 = v_{\rho, X_{U'}}, w_2 = v_{\pi(\rho), x_{V'}}$. Let $E = \prod_{i \in U' \cup V'} x_i^{deg^{\tau}(i)}$. We would like to prove that $\begin{bmatrix}
	R(\tau)w_1w_1^T & Ew_1w_2^T\\
	Ew_1^Tw_2 & R(\tau)w_2w_2^T
\end{bmatrix} \succeq 0$. For all $y$ sampled from $\calD$, $|y| \le C_{disc}\sqrt{D_E}$ and so, $|E| \le (C_{disc}\sqrt{D_E})^{\sum_{j \in U' \cup V'} deg^{\tau}(j)} = R(\tau)$.

If $E \ge 0$, then
\begin{align*}
	\begin{bmatrix}
		R(\tau)w_1w_1^T & Ew_1w_2^T\\
		Ew_1^Tw_2 & R(\tau)w_2w_2^T
	\end{bmatrix} &= (R(\tau) - E)
	\begin{bmatrix}
		w_1w_1^T & 0\\
		0 & w_2w_2^T
	\end{bmatrix}
	+ E\begin{bmatrix}
		w_1w_1^T & w_1w_2^T\\
		w_1^Tw_2 & w_2w_2^T
	\end{bmatrix}\\
	&= (R(\tau) - E)\left(
	\begin{bmatrix}
	w_1\\
	0
	\end{bmatrix}
	\begin{bmatrix}
	w_1 & 0
	\end{bmatrix} +
	\begin{bmatrix}
		0\\
		w_2
	\end{bmatrix}
	\begin{bmatrix}
		0 & w_2
	\end{bmatrix}\right) +
	E\begin{bmatrix}
	w_1\\
	w_2
	\end{bmatrix}
	\begin{bmatrix}
		w_1 & w_2
	\end{bmatrix}\\
& \succeq 0
\end{align*}
since $R(\tau) - E \ge 0$ And if $E < 0$,
\begin{align*}
	\begin{bmatrix}
		R(\tau)w_1w_1^T & Ew_1w_2^T\\
		Ew_1^Tw_2 & R(\tau)w_2w_2^T
	\end{bmatrix} &= (R(\tau) + E)
	\begin{bmatrix}
		w_1w_1^T & 0\\
		0 & w_2w_2^T
	\end{bmatrix}
	- E\begin{bmatrix}
		w_1w_1^T & -w_1w_2^T\\
		-w_1^Tw_2 & w_2w_2^T
	\end{bmatrix}\\
	&= (R(\tau) + E)\left(
	\begin{bmatrix}
		w_1\\
		0
	\end{bmatrix}
	\begin{bmatrix}
		w_1 & 0
	\end{bmatrix} +
	\begin{bmatrix}
		0\\
		w_2
	\end{bmatrix}
	\begin{bmatrix}
		0 & w_2
	\end{bmatrix}\right)
	- E\begin{bmatrix}
		w_1\\
		-w_2
	\end{bmatrix}
	\begin{bmatrix}
		w_1 & -w_2
	\end{bmatrix}\\
	& \succeq 0
\end{align*}
since $R(\tau) + E \ge 0$.
\end{proof}

\subsubsection{Qualitative intersection term bounds}

Just as in \cref{propn: spca_coeff_2}, the next proposition captures the fact that when we compose shapes $\sig, \gam, \gam^T, \sig'^T$, in order for $\lda_{\sig \circ \gam \circ \gam^T \circ \sig'^T}$ to be nonzero, the parities of the degrees of the merged vertices should add up correspondingly.
Just as in the tensor PCA application, we similarly define $H_{Id_V, \rho, \rho'}^{-\gam, \gam}$ and $H'_{\gam, \rho, \rho}$. The following propositions are simple and proved the same way.


\begin{propn}
	For all $U, V \in \calI_{mid}$ where $w(U) > w(V)$, for all $\gam \in \Gam_{U, V}$, there exists a set of parity vectors $P_{\gam} \subseteq \calP_U$ such that
	$H_{Id_V}^{-\gam, \gam} = \sum_{\rho \in P_{\gam}} H_{Id_V, \rho, \rho}^{-\gam, \gam}$.
\end{propn}



\begin{propn}
	For all $U, V \in \calI_{mid}$ where $w(U) > w(V)$, for $\gam \in \Gam_{U, V}$, $H_{\gam}' = \sum_{\rho \in P_{\gam}} H_{\gam, \rho, \rho}'$.
\end{propn}

We will now define vectors which are truncations of $v_{\rho, x_{U_2}}$. This definition and the following proposition are mostly a matter of technicality and they are essentially similar to the PSD mass condition analysis.

\begin{definition}
    Let $U, V \in \calI_{mid}$ where $w(U) > w(V)$, and let $\gam \in \Gam_{U, V}$. Let $x_i$ for $i \in U_2$ be variables. Denote them collectively as $x_{U_2}$. For $\rho \in \calP_U$, define $v_{\rho, x_{U_2}}^{-\gam}$ to be the vector indexed by left shapes $\sig \in \calL$ such that the $\sig$th entry is $v_{\rho, x_{U_2}}(\sig)$ if $|V(\sig \circ \gam)| \le D_V$ and $0$ otherwise.
\end{definition}



With this, we can decompose each slice $H_{Id_V, \rho, \rho}^{-\gam, \gam}$.

\begin{propn}\label{lem: spca_decomp}
	For any $U, V \in \calI_{mid}$ where $w(U) > w(V)$, and for any $\gam \in \Gam_{U, V}$, suppose we take $\rho \in P_{\gam}$. When we compose $\gam$ with $\gam^T$ to get $\gam \circ \gam^T$, let $U' = (U_{\gam \circ \gam^T})_2, V' = (V_{\gam \circ \gam^T})_2$ be the type $2$ vertices in $U_{\gam \circ \gam^T}, V_{\gam \circ \gam^T}$ respectively. And let $W'$ be the set of type $2$ vertices in $\gam \circ \gam^T$ that were identified in the composition when we set $V_{\gam} = U_{\gam}^T$. Let $x_i$ for $i \in U' \cup W' \cup V'$ be random variables independently sampled from $\calD$. Define $x_{U'}$ (resp. $x_{V'}, x_{W'}$) to be the subset of variables $x_i$ for $i \in U'$ (resp. $i \in V', i \in W'$). Then,
	\[H_{Id_V, \rho, \rho}^{-\gam, \gam} = \frac{1}{|Aut(V)|}S(\gam)^2 \EE_x\left[(v_{\rho, x_{U'}}^{-\gam})\left(\prod_{i \in U' \cup W' \cup V'} x_i^{deg^{\gam \circ \gam^T}(i)}\right)(v_{\rho, x_{V'}}^{-\gam})^T\right]\]
\end{propn}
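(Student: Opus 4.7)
The plan is to verify the identity entry-wise. Fix $\sig, \sig' \in \calL_{U, \rho}$; we compare the $(\sig, \sig')$ entries of both sides. If either $|V(\sig \circ \gam)| > D_V$ or $|V(\sig' \circ \gam)| > D_V$, then the truncation in $v^{-\gam}_{\rho,\cdot}$ makes the right side vanish, and the left side is zero as well since $\lda_{\sig \circ \gam \circ \gam^T \circ \sig'^T} = 0$ by the truncation conditions on the middle part in \cref{def: spca_coeffs}. Otherwise, by definition $H^{-\gam,\gam}_{Id_V,\rho,\rho}(\sig,\sig') = \frac{1}{|Aut(V)|}\lda_{\sig \circ \gam \circ \gam^T \circ \sig'^T}$, so I just need to show that the latter coefficient equals $S(\gam)^2$ times the entry of the expectation matrix on the right.

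To do this, I would expand $\lda_{\sig \circ \gam \circ \gam^T \circ \sig'^T}$ using \cref{def: spca_coeffs} and partition $V(\sig \circ \gam \circ \gam^T \circ \sig'^T)$ into five groups: the internal vertices of $\sig$ (outside $V_\sig$), of $\sig'^T$, of $\gam$ (outside $U_\gam \cup V_\gam$), of $\gam^T$, and the boundary vertices, where the type-$2$ boundary vertices are exactly $U' \cup W' \cup V'$. The inclusion-exclusion identity $|\sig \circ \gam \circ \gam^T \circ \sig'^T|_t = (|\sig|_t - \tfrac{|V_\sig|_t}{2}) + (|\sig'|_t - \tfrac{|V_{\sig'}|_t}{2}) + 2\bigl(|\gam|_t - \tfrac{|U_\gam|_t + |V_\gam|_t}{2}\bigr)$ for each $t \in \{1,2\}$, which follows from the gluing $V_\sig \equiv U_\gam \equiv V_{\sig'} \equiv V_{\gam^T}$ and $V_\gam \equiv U_{\gam^T}$, lets the powers of $\frac{1}{\sqrt{k}}$, $\frac{k}{d}$, $\Delta$, and $\frac{\sqrt{\lambda}}{\sqrt{k}}$ split cleanly into the products defining $T(\sig)$, $T(\sig')$, and $S(\gam)^2$.

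The only nontrivial piece is the double-factorial product $\prod_{j \in V_2(\sig \circ \gam \circ \gam^T \circ \sig'^T)}(deg^{\sig \circ \gam \circ \gam^T \circ \sig'^T}(j) - 1)!!$. For internal type-$2$ vertices of $\sig$ or $\sig'^T$ this factor is already absorbed into $T(\sig)$ or $T(\sig')$; for internal type-$2$ vertices of $\gam$ or $\gam^T$ it is absorbed into $S(\gam)^2$ (using $deg^{\gam^T}(j) = deg^\gam(j)$). The boundary vertices in $U' \cup W' \cup V'$ are precisely those excluded from every one of $T(\sig), T(\sig'), S(\gam)^2$, and for these the combined degrees are $deg^\sig(j) + deg^\gam(j)$ on $U'$, $2\deg^\gam(j)$ on $W'$, and $deg^{\sig'}(j) + deg^\gam(j)$ on $V'$. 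The assumption $\rho \in P_\gam$ together with the parity requirements from \cref{def: spca_coeffs} ensures each combined degree is even and each individual degree is at most $D_E$ (via the $D_E$-truncation).

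Finally, by \cref{fact: quadrature} and \cref{def: discretized_gaussian}, $\EE_{x \sim \calD}[x^{2s}] = (2s-1)!!$ whenever $2s \le 2D_E - 1$, and independence across $i \in U' \cup W' \cup V'$ factors the expectation into the required product of double factorials. Rewriting $\EE_x\bigl[\prod_{i \in U' \cup W' \cup V'} x_i^{deg^{\sig \circ \gam \circ \gam^T \circ \sig'^T}(i)}\bigr] = \EE_x\bigl[\prod_{i \in U'} x_i^{deg^\sig(i)} \cdot \prod_{i \in U' \cup W' \cup V'} x_i^{deg^{\gam \circ \gam^T}(i)} \cdot \prod_{i \in V'} x_i^{deg^{\sig'}(i)}\bigr]$ and identifying $T(\sig)\prod_{i \in U'} x_i^{deg^\sig(i)}$ with $v^{-\gam}_{\rho,x_{U'}}(\sig)$ (valid because $|V(\sig \circ \gam)| \le D_V$), and similarly for $\sig'$, yields exactly the right-hand side. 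The main obstacle is the careful bookkeeping of which of $T(\sig), T(\sig'), S(\gam)^2$, and the expectation factor absorbs each vertex, edge-label, and double-factorial contribution, but each such assignment is forced by which shape in the composition the vertex or edge lies in.
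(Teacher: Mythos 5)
Your proposal matches the paper's approach: fix $\sig,\sig'$ with $|V(\sig\circ\gam)|, |V(\sig'\circ\gam)| \le D_V$, invoke the vertex-counting identity $|\sig|_t - \frac{|V_\sig|_t}{2} + |\sig'|_t - \frac{|V_{\sig'}|_t}{2} + 2\bigl(|\gam|_t - \frac{|U_\gam|_t + |V_\gam|_t}{2}\bigr) = |\sig \circ \gam \circ \gam^T \circ \sig'^T|_t$ for each type $t$, and verify the coefficient identity directly from \cref{def: spca_coeffs} together with the moment-matching property of $\calD$. The paper gives only a one-line sketch; your bookkeeping of which vertices' double-factorial factors land in $T(\sig)$, $T(\sig')$, $S(\gam)^2$, or the expectation is the right way to fill it in.

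One small imprecision in your handling of the truncated case: you assert $\lda_{\sig \circ \gam \circ \gam^T \circ \sig'^T} = 0$ ``by the truncation conditions on the middle part,'' but the middle part of $\sig \circ \gam \circ \gam^T \circ \sig'^T$ is the trivial shape on $V_\gam$ and never violates the $D_V$ bound; it is the left part $\sig \circ \gam$ that exceeds $D_V$. The claim about $\lda$ is unnecessary in any case: by the definition of the $-\gamma,\gamma$ operation, $H^{-\gam,\gam}_{Id_V}(\sig,\sig')$ is explicitly set to $0$ whenever $|V(\sig\circ\gam)| > D_V$ or $|V(\sig'\circ\gam)| > D_V$, so the left-hand side vanishes directly from that definition.
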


\begin{proof}
	Fix $\sig, \sig' \in \calL_{U, \rho}$ such that $|V(\sig \circ \gam)|, |V(\sig' \circ \gam)| \le D_V$. Note that for $t \in \{1, 2\}$, $|\sig|_t - \frac{|V_{\sig}|_t}{2} + |\sig'|_t - \frac{|V_{\sig'}|_t}{2} + 2(|\gam|_t - \frac{|U_{\gam}|_t + |V_{\gam}|_t}{2}) = |\sig \circ \gam \circ \gam^T \circ \sig'^T|_t$. We can easily verify the equality using \cref{def: spca_coeffs} and \cref{def: discretized_gaussian}.
\end{proof}

\begin{propn}
	For any $U, V \in \calI_{mid}$ where $w(U) > w(V)$, and for any $\gam \in \Gam_{U, V}$, suppose we take $\rho \in \calP_U$. Then,
	\[H'_{\gam, \rho, \rho} = \frac{1}{|Aut(U)|}\EE_{y_{U_2}\sim \calD^{U_2}}\left[(v_{\rho, y_{U_2}}^{-\gam})(v_{\rho, y_{U_2}}^{-\gam})^T\right]\]
\end{propn}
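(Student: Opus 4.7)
The plan is to reduce this to the analogous identity for $H_{Id_U, \rho, \rho}$ that was already established earlier in this section, and then to match the truncations on the two sides entry by entry. This proposition is really just the truncated counterpart of that earlier one, and no new probabilistic content is needed.

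First, I would unpack the two definitions. By the canonical choice of $H'_{\gam}$ from the subsection on relaxing the third condition, $H'_{\gam}(\sig, \sig')$ equals $H_{Id_U}(\sig, \sig')$ whenever $|V(\sig \circ \gam)| \le D_V$ and $|V(\sig' \circ \gam)| \le D_V$, and vanishes otherwise. Restricting rows and columns to $\calL_{U, \rho}$ then shows that $H'_{\gam, \rho, \rho}(\sig, \sig')$ coincides with $H_{Id_U, \rho, \rho}(\sig, \sig')$ exactly on this truncated window, and is $0$ outside it. On the other side, $v_{\rho, y_{U_2}}^{-\gam}$ is by construction the entrywise truncation of $v_{\rho, y_{U_2}}$ to that same window: it agrees with $v_{\rho, y_{U_2}}$ on $\sig \in \calL_{U, \rho}$ with $|V(\sig \circ \gam)| \le D_V$, and is $0$ otherwise.

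Consequently, the $(\sig, \sig')$ entry of the outer product $(v_{\rho, y_{U_2}}^{-\gam})(v_{\rho, y_{U_2}}^{-\gam})^T$ is $v_{\rho, y_{U_2}}(\sig)\, v_{\rho, y_{U_2}}(\sig')$ precisely when the truncation conditions hold for both $\sig$ and $\sig'$, and is $0$ in every other case. Taking expectations over $y_{U_2} \sim \calD^{U_2}$ and invoking the previously proven identity
\[
H_{Id_U, \rho, \rho} = \frac{1}{|Aut(U)|}\, \EE_{y_{U_2}\sim \calD^{U_2}}\left[v_{\rho, y_{U_2}}\, v_{\rho, y_{U_2}}^T\right]
\]
on the untruncated entries, both sides of the claimed identity agree entry-by-entry, yielding the result.

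The only thing to verify carefully is that the joint truncation built into $H'_{\gam}$ (which looks at the pair $(\sig, \sig')$ simultaneously via the product of two indicators) matches the effect of truncating each factor of the rank-one matrix separately. This is automatic, since the indicator $\mathbf{1}[|V(\sig \circ \gam)| \le D_V] \cdot \mathbf{1}[|V(\sig' \circ \gam)| \le D_V]$ factorizes as a product, exactly as it does in $(v_{\rho, y_{U_2}}^{-\gam})(v_{\rho, y_{U_2}}^{-\gam})^T$. Hence I do not anticipate any real obstacle; the proof is essentially bookkeeping combining the earlier proposition with the definitions of $H'_{\gam, \rho, \rho}$ and $v_{\rho, y_{U_2}}^{-\gam}$.
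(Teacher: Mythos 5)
Your argument is correct and is exactly the bookkeeping reduction the paper leaves implicit (the proposition is stated without proof in the source): unpack the truncation in $H'_{\gamma,\rho,\rho}$ and in $v_{\rho,y_{U_2}}^{-\gamma}$, observe that the joint indicator $\mathbf{1}[|V(\sigma\circ\gamma)|\le D_V]\,\mathbf{1}[|V(\sigma'\circ\gamma)|\le D_V]$ factorizes so both sides zero out the same $(\sigma,\sigma')$ entries, and then invoke the earlier identity $H_{Id_U,\rho,\rho}=\frac{1}{|Aut(U)|}\EE[v_{\rho,y_{U_2}}v_{\rho,y_{U_2}}^T]$ on the surviving window. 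Nothing is missing.
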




We can finally show the qualitative intersection term bounds.

\begin{proof}[Proof of the qualitative intersection term bounds \cref{lem: spca_cond3_simplified}]
    Let $U', V', W'$ be as in \cref{lem: spca_decomp}. We have
	\begin{align*}
	\frac{|Aut(V)|}{|Aut(U)|}\cdot\frac{1}{S(\gam)^2R(\gam)^2}H_{Id_V}^{-\gam, \gam} &= \sum_{\rho \in P_{\gam}} \frac{|Aut(V)|}{|Aut(U)|}\cdot\frac{1}{S(\gam)^2R(\gam)^2} H_{Id_V, \rho, \rho}^{-\gam, \gam}\\
	&= \sum_{\rho \in P_{\gam}} \frac{1}{|Aut(U)|}\cdot\frac{1}{R(\gam)^2} \EE_x\left[(v_{\rho, x_{U'}}^{-\gam})\left(\prod_{i \in U' \cup W' \cup V'} x_i^{deg^{\gam \circ \gam^T}(i)}\right)(v_{\rho, x_{V'}}^{-\gam})^T\right]\\
\end{align*}

We will now prove that, for all $\rho \in P_{\gam}$,
\begin{align*}
    \frac{1}{|Aut(U)|}\cdot \frac{1}{R(\gam)^2} \EE_x\left[(v_{\rho, x_{U'}}^{-\gam})\left(\prod_{i \in U' \cup W' \cup V'} x_i^{deg^{\gam \circ \gam^T}(i)}\right)(v_{\rho, x_{V'}}^{-\gam})^T\right] \preceq H'_{\gam, \rho, \rho}
\end{align*}
which reduces to proving that
\begin{align*}
    \frac{2}{R(\gam)^2} \EE_x\left[(v_{\rho, x_{U'}}^{-\gam})\left(\prod_{i \in U' \cup W' \cup V'} x_i^{deg^{\gam \circ \gam^T}(i)}\right)(v_{\rho, x_{V'}}^{-\gam})^T\right] &\preceq 2\EE_{y_{U_2}\sim \calD^{U_2}}\left[(v_{\rho, y_{U_2}}^{-\gam})(v_{\rho, y_{U_2}}^{-\gam})^T\right]\\
    &= \EE_{x}\left[(v_{\rho, x_{U'}}^{-\gam})(v_{\rho, x_{U'}}^{-\gam})^T + (v_{\rho, x_{V'}}^{-\gam})(v_{\rho, x_{V'}}^{-\gam})^T\right]
\end{align*}
where the last equality followed from linearity of expectation and the fact that $U' \equiv V' \equiv U_2$.

Since $H_{Id_V, \rho, \rho}^{-\gam, \gam}$ is symmetric, we have
\[\EE_x\left[(v_{\rho, x_{U'}}^{-\gam})\left(\prod_{i \in U' \cup W' \cup V'} x_i^{deg^{\gam \circ \gam^T}(i)}\right)(v_{\rho, x_{V'}}^{-\gam})^T\right] = \EE_x\left[(v_{\rho, x_{V'}}^{-\gam})\left(\prod_{i \in U' \cup W' \cup V'} x_i^{deg^{\gam \circ \gam^T}(i)}\right)(v_{\rho, x_{U'}}^{-\gam})^T\right]\]
So, it suffices to prove
\begin{align*}
    \frac{1}{R(\gam)^2}&\EE_x\left[(v_{\rho, x_{U'}}^{-\gam})\left(\prod_{i \in U' \cup W' \cup V'} x_i^{deg^{\gam \circ \gam^T}(i)}\right)(v_{\rho, x_{V'}}^{-\gam})^T + (v_{\rho, x_{V'}}^{-\gam})\left(\prod_{i \in U' \cup W' \cup V'} x_i^{deg^{\gam \circ \gam^T}(i)}\right)(v_{\rho, x_{U'}}^{-\gam})^T\right]\\
    &\preceq \EE_{x}\left[(v_{\rho, x_{U'}}^{-\gam})(v_{\rho, x_{U'}}^{-\gam})^T + (v_{\rho, x_{V'}}^{-\gam})(v_{\rho, x_{V'}}^{-\gam})^T\right]
\end{align*}

We will prove that for every sampling of the $x_i$ from $\calD$, we have
\begin{align*}
    \frac{1}{R(\gam)^2}&\left((v_{\rho, x_{U'}}^{-\gam})\left(\prod_{i \in U' \cup W' \cup V'} x_i^{deg^{\gam \circ \gam^T}(i)}\right)(v_{\rho, x_{V'}}^{-\gam})^T + (v_{\rho, x_{V'}}^{-\gam})\left(\prod_{i \in U' \cup W' \cup V'} x_i^{deg^{\gam \circ \gam^T}(i)}\right)(v_{\rho, x_{U'}}^{-\gam})^T\right) \\
    &\preceq (v_{\rho, x_{U'}}^{-\gam})(v_{\rho, x_{U'}}^{-\gam})^T + (v_{\rho, x_{V'}}^{-\gam})(v_{\rho, x_{V'}}^{-\gam})^T
\end{align*}
Then, taking expectations will give the result. Indeed, fix a sampling of the $x_i$ from $\calD$. Let $E = \prod_{i \in U' \cup W' \cup V'} x_i^{deg^{\gam \circ \gam^T}(i)}$ and let $w_1 = v_{\rho, x_{U'}}^{-\gam}, w_2 = v_{\rho, x_{V'}}^{-\gam}$. Then, the inequality we need to show is
\[\frac{E}{R(\gam)^2}(w_1w_2^T + w_2w_1^T) \preceq w_1w_1^T + w_2w_2^T\]
Now, since $|x_i| \le C_{disc}\sqrt{D_E}$ for all $i$, we have $|E| \le \prod_{i \in U' \cup W' \cup V'} (C_{disc}\sqrt{D_E})^{deg^{\gam \circ \gam^T}(i)} =  R(\gam)^2$.
If $E \ge 0$, using $\frac{E}{R(\gam)^2}(w_1 - w_2)(w_1 - w_2)^T \succeq 0$ gives
\begin{align*}
    \frac{E}{R(\gam)^2} (w_1w_2^T + w_2w_1^T) &\preceq \frac{E}{R(\gam)^2} (w_1w_1^T + w_2w_2^T)
    \preceq w_1w_1^T + w_2w_2^T
\end{align*}
since $0 \le E \le R(\gam)^2$.
And if $E < 0$, using $\frac{-E}{R(\gam)^2}(w_1 + w_2)(w_1 + w_2)^T \succeq 0$ gives
\begin{align*}
    \frac{E}{R(\gam)^2} (w_1w_2^T + w_2w_1^T) &\preceq \frac{-E}{R(\gam)^2} (w_1w_1^T + w_2w_2^T)
    \preceq w_1w_1^T + w_2w_2^T
\end{align*}
since $0 \le -E \le R(\gam)^2$.
Finally, we use the fact that for all $\rho \in \calP_U$, we have $H'_{\gam,\rho, \rho} \succeq 0$ which can be proved the same way as the proof of \cref{lem: spca_cond1}. Therefore,
\begin{align*}
	\frac{|Aut(V)|}{|Aut(U)|}\cdot\frac{1}{S(\gam)^2R(\gam)^2}H_{Id_V}^{-\gam, \gam} &\preceq \sum_{\rho \in P_{\gam}} H'_{\gam, \rho, \rho}
	\preceq \sum_{\rho \in \calP_U} H'_{\gam, \rho, \rho}
	= H'_{\gam}
	\end{align*}
\end{proof}

\subsection{Intuition for quantitative bounds}

In this section, we will give some intuition on the bounds needed for our main theorem \cref{thm: spca_main}, which is formally proved in \cref{sec: spca_quant}. Informally, the theorem states that when $m \le \frac{d}{\lda^2}$ and $m \le \frac{k^2}{\lda^2}$, then $\Lda \succeq 0$ with high probability.

We will try and understand why the inequality $\lda_{\sig \circ \tau \circ \sig'^T}^2\norm{M_{\tau}}^2 \le \lda_{\sig \circ \sig^T}\lda_{\sig' \circ \sig'^T}$ holds. Assume for simplicity that $d < n$ and consider the shapes in \cref{fig: sparse_pca}. The assumption $d < n$ is used in this example since otherwise, if $d > n$, the decomposition differs from what's shown in the figure.

\begin{figure}[!h]
    \centering
    \includegraphics[scale=0.9, trim={2cm 5cm 0 4cm},clip]{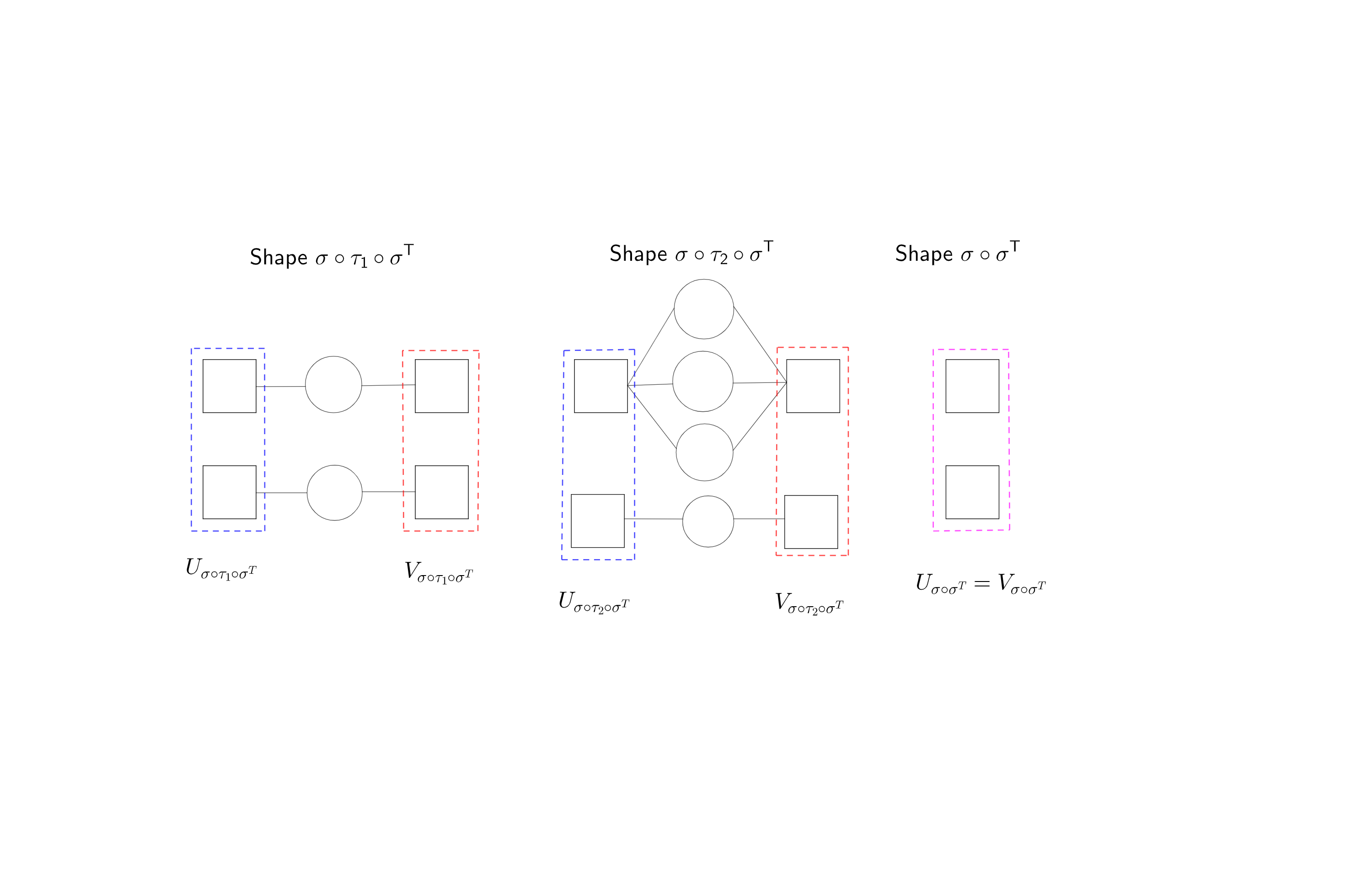}
    \caption{Shapes $\sig \circ \tau_1\circ \sig^T, \sig \circ \tau_2 \circ \sig^T$ and $\sig \circ \sig^T$. All edges have label $1$.}
    \label{fig: sparse_pca}
\end{figure}

Firstly, the shape $\sig \circ \sig^T$ has a coefficient of $\lda_{\sig \circ \sig^T} \approx \left(\frac{1}{\sqrt{k}}\right)^4\left(\frac{k}{d}\right)^2$.
The first shape $\sig \circ \tau_1 \circ \sig^T$ has a coefficient of $\lda_{\sig \circ \tau_1 \circ \sig^T} \approx \left(\frac{1}{\sqrt{k}}\right)^4\left(\frac{k}{d}\right)^4 \left(\frac{\sqrt{\lda}}{\sqrt{k}}\right)^4$ and with high probability, upto lower order terms, $\norm{M_{\tau_1}} \le md$ (these norm bounds follow from \cite{AMP20} and we will formally state them in later sections). So, the inequality $\lda_{\sig \circ \tau_1 \circ \sig^T}^2\norm{M_{\tau_1}}^2 \le \lda_{\sig \circ \sig^T}\lda_{\sig \circ \sig^T}$ rearranges to $m \le \frac{d}{\lda^2}$. But this is precisely one of the assumptions on $m$. Moreover, this also confirms that we need this assumption on $m$ in order for our strategy to go through.

The second shape $\sig \circ \tau_2 \circ \sig^T$ has a coefficient of $\lda_{\sig \circ \tau_2 \circ \sig^T} \approx \left(\frac{1}{\sqrt{k}}\right)^4\left(\frac{k}{d}\right)^4 \left(\frac{\sqrt{\lda}}{\sqrt{k}}\right)^8$ and with high probability, upto lower order terms, $\norm{M_{\tau_2}} \le m^2d$. So, the inequality $\lda_{\sig \circ \tau_2 \circ \sig^T}^2\norm{M_{\tau_2}}^2 \le \lda_{\sig \circ \sig^T}\lda_{\sig \circ \sig^T}$ rearranges to $m^2 \le \frac{k^2d}{\lda^4}$. But this is obtained simply by multiplying our assumptions on $m$, namely $m \le \frac{k^2}{\lda^2}$ and $m \le \frac{d}{\lda^2}$.

Moreover, consider a shape of the form $\sig \circ \tau_3 \circ \sig^T$ where $\tau_3$ is similar to $\tau_2$ except it has $t$ (instead of $3$) different circle vertices that are common neighbors to the top 2 square vertices. Analyzing our required inequality, we get for our strategy to go through, $m$ has to satisfy $m \le \frac{k^2}{\lda^2} \cdot \left(\frac{d}{k^2}\right)^{\frac{2}{t + 1}}$. By taking $t$ arbitrarily large, we can see that the condition $m \le \frac{k^2}{\lda^2}$ is needed.

So, we get that for our analysis to go through, the assumptions $m \le \frac{d}{\lda^2}$ and $m \le \frac{k^2}{\lda^2}$ are necessary. We will prove that in fact, these are sufficient. To do this, we use a charging argument that exploits the special structure of the shapes $\al$ that appear in our decomposition of $\Lda$ and their coefficients $\lda_{\al}$, as we obtained in \cref{def: spca_coeffs}. For details, see \cref{sec: spca_quant}.

	\section{Definitions and Main Theorem Statement}\label{sec: technical_def_and_main_theorem}

	\subsection{Section Introduction}
In this section, we make our definitions and results more precise. We also generalize our definitions and results to handle problems where one or more of the following is true:
\begin{enumerate}
	\item The input entries correspond to hyperedges rather than edges.
	\item We have different types of indices.
	\item $\Omega$ is a more complicated distribution than $\{-1,+1\}$.
	\item We have to consider matrix indices which are not multilinear.
\end{enumerate}

Throughout this section and the remainder of this manuscript, we give the reader a choice for the level of generality of this machinery. In particular, we will first recall our definition for the simpler case when our input is $\{-1,+1\}^{\binom{n}{2}}$ and we only consider multilinear indices. We will then discuss how this simpler definition generalizes. We denote these generalizations with an asterix $*$.
\subsubsection{Additional Parameters for the General Case*}
In the general case we will need a few additional parameters which we define here.
\begin{definition} \
	\begin{enumerate}
		\item We define $k$ to be the arity of the hyperedges corresponding to the input.
		\item We define $t_{max}$ to be the number of different types of indices. We define $n_i$ to be the number of possibilities for indicies of type $i$ and we define $n = \max{\{n_i: i \in [t_{max]}\}}$.
	\end{enumerate}
\end{definition}
\subsection{Indices, Input Entries, Vertices, and Edges}
Note: For this section, we use $X$ to denote the input, we use $x$ to denote entries of the input and we use $y$ to denote solution variables.
\begin{definition}[Vertices: Simplified Case]
	When the input and solution variables are indexed by one type of index which takes values in $[n]$ then we represent the index $i$ by a vertex labeled $i$.

	If we want to leave an index unspecified, we instead represent it by a vertex labeled with a variable (we will generally use $u$, $v$, or $w$ for these variables).
\end{definition}
\begin{definition}[Vertices: General Case*]
	When the input and solution variables are indexed by several types of indices where indices of type $t$ take values in $[n_t]$, we represent an index of type $t$ with value $i$ as a vertex labeled by the tuple $(t,i)$. We say that such a vertex has type $t$.

	If we want to leave an index of type $t$ unspecified, we instead represent it by a vertex labeled with a tuple $(t,?)$ where $?$ is a variable (which will generally be $u$, $v$, or $w$).
\end{definition}
\begin{definition}[Edges: Simplified Case]
	When the input is $X \in \{-1,+1\}^{\binom{n}{2}}$, we represent the entries of the input by the undirected edges $\{(i,j): i < j \in [n]\}$. Given an edge $e = (i,j)$, we take $x_e = x_{ij}$ to be the input entry corresponding to $e$.
\end{definition}
\begin{definition}[Edges: General Case*]
	In general, we represent the entries of the input by hyperedges whose form depends on nature of the input. We still take $x_e$ to be the input entry corresponding to $e$.
\end{definition}
\begin{example}
	If the input is an $n_1 \times n_2$ matrix $X$ then we will have two types of indices, one for the row and one for the column. Thus, we will have the vertices $\{(1,i): i \in [n_1]\} \cup \{(2,j): j \in [n_2]\}$. In this case, we have an edge $((1,i),(2,j))$ for each entry $x_{ij}$ of the input.
\end{example}
\begin{example}
	If the input is an $n \times n$ matrix $X$ which is not symmetric then we only need the indices $[n]$. In this case, we have a directed edge $(i,j)$ for each entry $x_{ij}$ where $i \neq j$. If the entries $x_{ii}$ are also part of the input than we also have loops $(i,i)$ for these entries.
\end{example}
\begin{example}
	If our input is a symmetric $n \times n \times n$ tensor $X$ (i.e. $x_{ijk} = x_{ikj} = x_{jik} = x_{jki} = x_{kij} = x_{kji}$) and $x_{ijk} = 0$ whenever $i,j,k$ are not distinct then we only need the indices $[n]$. In this case, we have an undirected hyperedge $e = (i,j,k)$ for each entry $x_{e} = x_{ijk}$ of the input where $i,j,k$ are distinct.
\end{example}
\begin{example}
	If the input is an $n_1 \times n_2 \times n_3$ tensor $X$ then we will have three types of indices. Thus, we will have the vertices $\{(1,i): i \in [n_1]\} \cup \{(2,j): j \in [n_2]\} \cup \{(3,k): k \in [n_3]\}$. In this case, we have a hyperedge $e = ((1,i),(2,j),(3,k))$ for each entry $x_e = x_{ijk}$ of the input.
\end{example}
\subsection{Matrix Indices and Monomials}
In this subsection, we discuss how our matrices are indexed and how we associate matrix indices with monomials. We also describe the automorphism groups of matrix indices.
\begin{definition}[Matrix Indices: Simplified Case]
	If there is only one type of index and we have the constraints $y^2_i = 1$ or $y^2_i = y_i$ on the solution variables then we define a matrix index $A$ to be a tuple of indices $(a_1,\ldots,a_{|A|})$. We make the following definitions about matrix indices:
	\begin{enumerate}
		\item We associate the monomial $\prod_{j=1}^{|A|}{y_{a_j}}$ to $A$.
		\item We define $V(A)$ to be the set of vertices $\{a_i: i \in [|A|]\}$. For brevity, we will often write $A$ instead of $V(A)$ when it is clear from context that we are referring to $A$ as a set of vertices rather than a matrix index.
		\item We take the automorphism group of $A$ to be $Aut(A) = S_{|A|}$ (the permutations of the elements of $A$)
	\end{enumerate}
\end{definition}
\begin{example}
	The matrix index $A = (4,6,1)$ represents the monomial ${y_4}{y_6}{y_1} = {y_1}{y_4}{y_6}$ and $Aut(A) = S_3$
\end{example}
\begin{remark}
	We take $A$ to be an ordered tuple rather than a set for technical reasons.
\end{remark}
In general, we need a more intricate definition for matrix indices. We start by defining matrix index pieces
\begin{definition}[Matrix Index Piece Definition*]
	We define a matrix index piece $A_i = ((a_{i1},\ldots,a_{i|A_i|}), t_i, p_i)$ to be a tuple of indices $(a_{i1},\ldots,a_{i|A_i|})$ together with a type $t_i$ and a power $p_i$. We make the following definitions about matrix index pieces:
	\begin{enumerate}
		\item We associate the monomial $p_{A_i} = \prod_{j = 1}^{|A_i|}{y^{p_i}_{{t_i}j}}$ with $A_i$.
		\item We define $V(A_i)$ to be the set of vertices $\{(t_i,a_{ij}): j \in [|A_i|]\}$.
		\item We take the automorphism group of $A_i$ to be $Aut(A_i) = S_{|A_i|}$
		\item We say that $A_i$ and $A_j$ are disjoint if $V(A_i) \cap V(A_j) = \emptyset$ (i.e. $t_i \neq t_j$ or $\{a_{i1},\ldots,a_{i|A_i|}\} \cap \{a_{j1},\ldots,a_{j|A_j|}\} = \emptyset$)
	\end{enumerate}
\end{definition}
\begin{definition}[General Matrix Index Definition*]
	We define a matrix index $A = \{A_i\}$ to be a set of disjoint matrix index pieces. We make the following definitions about matrix indices:
	\begin{enumerate}
		\item We associate the monomial $p_{A} = \prod_{A_i \in A}{p(A_i)}$ with $A$.
		\item We define $V(A)$ to be the set of vertices $\cup_{A_i \in A}{V(A_i)}$. For brevity, we will often write $A$ instead of $V(A)$ when it is clear from context that we are referring to $A$ as a set of vertices rather than a matrix index.
		\item We take the automorphism group of $A$ to be $Aut(A) = \prod_{A_i \in A}{Aut(A_i)}$
	\end{enumerate}
\end{definition}
\begin{example}[*]
	If $A_1 = ((2),1,1)$, $A_2 = ((3,1),1,2)$, and $A_3 = ((1,2,3),2,1)$ then $A = \{A_1,A_2,A_3\}$ represesents the monomial $p = {y_{12}}{y^2_{13}}{y^2_{11}}{y_{21}}{y_{22}}{y_{23}}$ and we have $Aut(A) = S_1 \times S_2 \times S_3$
\end{example}
\subsection{Fourier Characters and Ribbons}
A key idea is to analyze Fourier characters of the input.
\begin{definition}[Simplified Fourier Characters]
	If the input distribution is $\Omega = \{-1,1\}$ then given a multi-set of edges $E$, we define $\chi_{E}(X) = \prod_{e \in E}{x_e}$.
\end{definition}
\begin{example}
	If the input is a graph $G \in \{-1,1\}^{\binom{n}{2}}$ and $E$ is a set of potential edges of $G$ (with no multiple edges) then $\chi_E(G) = (-1)^{|E \setminus E(G)|}$.
\end{example}
In general, the Fourier characters are somewhat more complicated.
\begin{definition}[Orthonormal Basis for ${\Omega}$*]
	We define the polynomials $\{h_i: i \in \mathbb{Z} \cap [0,|supp(\Omega)|-1]\}$ to be the unique polynomials (which can be found through the Gram-Schmidt process) such that
	\begin{enumerate}
		\item $\forall i, E_{\Omega}[h^2_i(x)] = 1$
		\item $\forall i \neq j, E_{\Omega}[h_i(x)h_j(x)] = 0$
		\item For all $i$, the leading coefficient of $h_i(x)$ is positive.
	\end{enumerate}
\end{definition}
\begin{example}\label{example: hermite_basis}
	If $\Omega$ is the normal distribution then the polynomials $\{h_i\}$ are the Hermite polynomials with the appropriate normalization so that for all $i$, $E_{\Omega}[h^2_i(x)] = 1$. In particular, $h_0(x) = 1$, $h_1(x) = x$, $h_2(x) = \frac{x^2 - 1}{\sqrt{2!}}$, $h_3(x) = \frac{x^3 - 3x}{\sqrt{3!}}$, etc.
\end{example}
\begin{definition}[General Fourier Characters*]
	Given a multi-set of hyperedges $E$, each of which has a label $l(e) \in [|support(\Omega)|-1]$ (or $\mathbb{N}$ if $\Omega$ has infinite support), we define $\chi_E = \prod_{e \in E}{h_{l(e)}{(X_e)}}$.

	We say that such a multi-set of hyperedges $E$ is proper if it contains no duplicate hyperedges, i.e. it is a set (though the labels on the hyperedges can be arbitrary non-negative integers). Otherwise, we say that $E$ is improper.
\end{definition}
\begin{remark}
	The Fourier characters are $\{\chi_{E}: E \text{ is proper}\}$. For improper $E$, $\chi_{E}$ can be decomposed as a linear combination of $\chi_{E_j}$ where each $E_j$ is proper. We allow improper $E$ because it is sometimes more convenient to have improper $E$ in the middle of the analysis and then do this decomposition at the end.
\end{remark}
\begin{definition}[Ribbons]\label{def: ribbons}
	A ribbon $R$ is a tuple $(H_R,A_R,B_R)$ where $H_R$ is a multi-graph (*or multi-hypergraph with labeled edges in the general case) whose vertices are indices of the input and $A_R$ and $B_R$ are matrix indices such that $V(A_R) \subseteq V(H_R)$ and $V(B_R) \subseteq V(H_R)$. We make the following definitions about ribbons:
	\begin{enumerate}
		\item We define $V(R) = V(H_R)$ and $E(R) = E(H_R)$
		\item We define $\chi_R = \chi_{E(R)}$.
		\item We define $M_R$ to be the matrix such that $(M_R)_{{A_R}{B_R}} = \chi_R$ and $M_{AB} = 0$ whenever $A \neq A_R$ or $B \neq B_R$.
	\end{enumerate}
	We say that $R$ is a proper ribbon if $H_R$ contains no isolated vertices outside of $A_R \cup B_R$ and $E(R)$ is proper. If there is an isolated vertex in $(V(R) \setminus A_R) \setminus B_R$ or $E(R)$ is improper then we say that $R$ is an improper ribbon.
\end{definition}
Proper ribbons are useful because they give an orthonormal basis for the space of matrix valued functions.
\begin{definition}[Inner products of matrix functions]
	For a pair of real matrices $M_1,M_2$ of the same dimension, we write $\langle{M_1,M_2}\rangle = tr({M_1}{M_2}^T)$ (i.e. $\langle{M_1,M_2}\rangle$ is the entrywise dot product of $M_1$ and $M_2$). For a pair of matrix-valued functions $M_1, M_2$ (of the same dimensions), we define
	\[
	\langle{M_1,M_2}\rangle = E_{X}\left[\langle{M_1(X),M_2(X)}\rangle\right]
	\]
\end{definition}
\begin{proposition}
	If $R$ and $R'$ are two proper ribbons then $\langle{M_R,M_{R'}}\rangle = 1$ if $R = R'$ and is $0$ otherwise.
\end{proposition}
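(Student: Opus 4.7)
The plan is to unpack both sides of the inner product using the definitions and then reduce to the orthonormality of the basis $\{h_i\}$ on the input distribution $\Omega$.

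First I would observe that $M_R$ is a rank-one matrix with a single nonzero entry $\chi_R$ at position $(A_R,B_R)$, and similarly for $M_{R'}$. Therefore $\langle M_R(X), M_{R'}(X)\rangle = \mathrm{tr}(M_R(X)M_{R'}(X)^T)$ is identically zero unless $A_R = A_{R'}$ and $B_R = B_{R'}$, in which case it equals $\chi_R(X)\chi_{R'}(X)$. So we reduce to showing that, assuming matching row/column indices, $\E_X[\chi_R(X)\chi_{R'}(X)] = 1$ if $R = R'$ and $0$ otherwise.

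Next I would expand $\chi_R(X)\chi_{R'}(X) = \prod_{e \in E(R)} h_{l_R(e)}(X_e) \cdot \prod_{e' \in E(R')} h_{l_{R'}(e')}(X_{e'})$ and group the factors by hyperedge. Since the coordinates of $X$ are independent samples from $\Omega$, the expectation factors as a product over hyperedges $e \in E(R) \cup E(R')$ of expectations of $h_i h_j$ or $h_i$ type terms. Using that $E(R)$ and $E(R')$ are proper (hence contain no duplicate hyperedges), each factor is exactly one of: $\E_\Omega[h_{l_R(e)}(x)h_{l_{R'}(e)}(x)] = \delta_{l_R(e), l_{R'}(e)}$ when $e \in E(R)\cap E(R')$; $\E_\Omega[h_{l_R(e)}(x)] = \delta_{l_R(e),0}$ when $e \in E(R)\setminus E(R')$; and symmetrically for $E(R')\setminus E(R)$. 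These hold by the orthonormality of the basis and the fact that $h_0 \equiv 1$ (so $\E[h_i] = \E[h_ih_0] = \delta_{i,0}$).

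From this it follows that $\E_X[\chi_R\chi_{R'}]$ vanishes unless every hyperedge appearing in exactly one of $R,R'$ carries label $0$ and every hyperedge appearing in both carries matching labels; in that case the expectation equals $1$. I would then argue that together with $A_R = A_{R'}$ and $B_R = B_{R'}$, this forces $R = R'$: matching labels on the common hyperedges together with label $0$ (i.e.\ $h_0 = 1$, a trivial factor) on the symmetric difference is equivalent to $E(R) = E(R')$ as labeled (hyper)edge multi-sets once we adopt the convention implicit in ``proper'' that labels are nonzero, and finally $V(R) = V(R')$ follows because properness forbids isolated vertices outside $A_R \cup B_R = A_{R'}\cup B_{R'}$, so the vertex sets are determined by the edge sets and the matrix indices. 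The main (minor) obstacle is bookkeeping the interaction between the properness hypothesis and the possibility of $0$-labels; apart from that, the proof is a direct application of independence and the orthonormality of $\{h_i\}$.
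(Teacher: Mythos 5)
Your proof is correct and follows what the paper clearly intends (the paper itself leaves the proof to the reader): first observe that $M_R$ and $M_{R'}$ are each supported on a single $(A,B)$ entry so the trace vanishes unless the matrix indices agree, then reduce to $\E_X[\chi_R\chi_{R'}]$, factor by independence over hyperedges, and invoke orthonormality of $\{h_i\}$ together with $h_0\equiv 1$. Your caveat about label-$0$ edges is well taken — the paper's definition of a proper ribbon only forbids duplicate hyperedges but, unlike the corresponding definition for proper shapes, does not explicitly require nonzero labels; as you note, one must read that convention in for the orthonormality claim (and in particular for distinctness of the $\chi_R$) to hold as stated, and with that reading your argument closes cleanly.
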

\subsection{Shapes}
In this subsection, we describe a basis for $S$-invariant matrix valued functions where each matrix in this basis can be described by a relatively small \emph{shape} $\alpha$. The fundamental idea behind shapes is that we keep the structure of the objects we are working with but leave the elements of the object unspecified.
\subsubsection{Simplified Index Shapes}
\begin{definition}[Simplified Index shapes]
	With our simplifying assumptions, an index shape $U$ is a tuple of unspecified indices $(u_1,\cdots,u_{|U|})$. We make the following definitions about index shapes:
	\begin{enumerate}
		\item We define $V(U)$ to be the set of vertices $\{u_i: i \in [|U|]\}$. For brevity, we will often write $U$ instead of $V(U)$ when it is clear from context that we are referring to $U$ as a set of vertices rather than an index shape.
		\item We define the weight of $U$ to be $w(U) = |U|$.
		\item We take the automorphism group of $U$ to be $Aut(U) = S_{|U|}$ (the permutations of the elements of $U$)
	\end{enumerate}
\end{definition}
\begin{definition}
	We say that a matrix index $A = (a_1,\ldots,a_{|A|})$ has index shape $U = (u_1,\ldots,u_{|U|})$ if $|U| = |A|$. Note that in this case, if we take the map $\phi: \{u_j: j \in [|U|]\} \to [n]$ where $\phi(u_j) = a_j$ then $\phi(U) = (\phi(u_1),\ldots,\phi(u_{|U|})) = (a_1,\ldots,a_{|A|}) = A$
\end{definition}
\begin{definition}
	We say that index shapes $U = (u_1,\ldots,u_{|U|})$ and $V = (v_1,\ldots,v_{|V|})$ are equivalent (which we write as $U \equiv V$) if $|U| = |V|$. If $U \equiv V$ then we can set $U = V$ by setting $v_j = u_j$ for all $j \in [|U|]$.
\end{definition}
\begin{example}
	The matrix index $A = \{4,6,1\}$ has shape $U = \{u_1,u_2,u_3\}$ which has weight $3$.
\end{example}
\subsubsection{General Index Shapes*}
In general, we define general index shapes in the same way that we defined general matrix indices (just with unspecified indices)
\begin{definition}[Index Shape Piece Definition]
	We define a index shape piece $U_i = ((u_{i1},\ldots,u_{i|U_i|}), t_i, p_i)$ to be a tuple of indices $(u_{i1},\ldots,u_{i|A_i|})$ together with a type $t_i$ and a power $p_i$. We make the following definitions about index shape pieces:
	\begin{enumerate}
		\item We define $V(U_i)$ to be the set of vertices $\{(t_i,u_{ij}): j \in [|U_i|]\}$.
		\item We define $w(U_i) = |U_i|log_{n}(n_{t_i})$
		\item We take the automorphism group of $U_i$ to be $Aut(U_i) = S_{|U_i|}$
	\end{enumerate}
\end{definition}
\begin{definition}[General Index Shape Definition]
	We define an index shape $U = \{U_i\}$ to be a set of index shape pieces such that for all $i' \neq i$, either $t_{i'} \neq t_i$ or $p_{i'} \neq p_i$. We make the following definitions about index shapes:
	\begin{enumerate}
		\item We define $V(U)$ to be the set of vertices $\cup_{U_i \in U}{V(U_i)}$. For brevity, we will often write $U$ instead of $V(U)$ when it is clear from context that we are referring to $U$ as a set of vertices rather than an index shape.
		\item We define $w(U)$ to be $w(U) = \sum_{U_i \in U}{w(U_i)}$
		\item We take the automorphism group of $U$ to be $Aut(U) = \prod_{U_i \in U}{Aut(U_i)}$
	\end{enumerate}
\end{definition}
\begin{remark}
	For technical reasons, we want to ensure that if two index shapes $U$ and $U'$ have the same weight then $U$ and $U'$ have the same number of each type of vertex. To ensure this, we add an infinitesimal perturbation to each $n_i$ if necessary.
\end{remark}
\begin{definition}
	We say that a matrix index $A$ has index shape $U$ if there is an assignment of values to the unspecified indices of $U$ which results in $A$. More precisely, we say that $A$ has index shape $U$ if there is a map $\phi: \{u_{ij}\} \to \mathbb{N}$ such that if we define $\phi(U_i)$ to be $\phi(U_i) = ((\phi(u_{i1}),\ldots,\phi(u_{i|U_i|})),t_i,p_i)$ then $\phi(U) = \{\phi(U_i)\} = \{A_i\} = A$.
\end{definition}
\begin{definition}
	If $U$ and $V$ are two index shapes, we say that $U$ is equivalent to $V$ (which we write as $U \equiv V$) if $U$ and $V$ have the same number of index shape pieces and we can order the index shape pieces of $U$ and $V$ so that writing $U = \{U_i\}$ and $V = \{V_i\}$ where $U_i = ((u_{i1},\ldots,u_{i|U_i|}), t_i, p_i)$ and $V_{i} = ((v_{i1},\ldots,v_{i|V_i|}), t'_i, p'_i)$, we have that for all $i$, $|V_i| = |U_i|$, $t'_i = t_i$, and $p'_i = p_i$. If $U \equiv V$ then we can set $U = V$ by setting $u_{ij} = v_{ij}$ for all $i$ and all $j \in [|U_i|]$.
\end{definition}
\subsubsection{Ribbon Shapes}
With these definitions, we are now ready to define shapes and the matrices associated to them.
\begin{definition}[Shapes]\label{def: shapes}
	A ribbon shape $\alpha$ (which we call a shape for brevity) is a tuple $\alpha = (H_{\alpha},U_{\alpha},V_{\alpha})$ where $H_{\alpha}$ is a multi-graph (*or multi-hypergraph with labeled edges in the general case) whose vertices are unspecified distinct indices of the input (*whose type is specified in the general case) and $U_{\alpha}$ and $V_{\alpha}$ are index shapes such that $V(U_{\alpha}) \subseteq V(H_\alpha)$ and $V(V_{\alpha}) \subseteq V(H_\alpha)$. We make the following definitions about shapes:
	\begin{enumerate}
		\item We define $V(\alpha) = V(H_{\alpha})$ (note that $V(\alpha)$ and $V_{\alpha}$ are not the same thing) and we define $E(\alpha) = E(H_{\alpha})$.
		\item We say that a shape $\alpha$ is proper if it contains no isolated vertices outside of $V(U_{\alpha}) \cup V(V_{\alpha})$, $E(\alpha)$ has no multiple edges/hyperedges and edges in $E(\al)$ do not have label $0$. If there is an isolated vertex in $V(\alpha) \setminus V(U_{\alpha}) \setminus V(V_{\alpha})$ or $E(\alpha)$ has a multiple edge/hyperedge then we say that $\alpha$ is an improper shape.
	\end{enumerate}
	Note: For brevity, we will often write $U_{\alpha}$ and $V_{\alpha}$ instead of $V(U_{\alpha})$ and $V(V_{\alpha})$ when it is clear from context that we are referring to $U_{\alpha}$ and $V_{\alpha}$ as sets of vertices rather than index shapes.
\end{definition}
\begin{definition}[Trivial shapes]
	We say that a shape $\alpha$ is trivial if $V(\alpha) = V(U_{\alpha}) = V(V_{\alpha})$ and $E(\alpha) = \emptyset$. Otherwise, we say that $\alpha$ is non-trivial.
\end{definition}
\begin{remark}
	Note that all trivial shapes can do is permute the order of the vertices in $V(U_{\alpha}) = V(V_{\alpha})$.
\end{remark}
\begin{definition}
	Informally, we say that a ribbon $R$ has shape $\alpha$ if replacing the indices in $R$ with unspecified labels results in $\alpha$. Formally, we say that $R$ has shape $\alpha$ if there is an injective mapping $\phi:V(\alpha) \to [n]$ (*or $[t_{max}] \times [n]$ in the general case) such that $\phi(\alpha) = R$, i.e. $\phi(H_{\alpha}) = H_R$, $\phi(U_{\alpha}) = A_R$, and $\phi(V_{\alpha}) = B_R$
\end{definition}
\begin{definition}
	We say that two shapes $\alpha$ and $\beta$ are equivalent (which we write as $\alpha \equiv \beta$) if they are the same up to renaming their indices. More precisely, we say that $\alpha \equiv \beta$ if there is a bijective map $\pi: V(H_\alpha) \to V(H_\beta)$ such that $\pi(H_\alpha) = H_{\beta}$, $\pi(U_{\alpha}) = U_{\beta}$, and $\pi(V_{\alpha}) = V_{\beta}$.
\end{definition}
\begin{definition}
	Given a shape $\alpha$ and matrix indices $A,B$ of shapes $U_\alpha$ and $V_\alpha$ respectively, we define $\mathcal{R}(\alpha,A,B)$ to be the set of ribbons $R$ such that $R$ \emph{has shape $\alpha$}, $A_R = A$, and $B_R = B$.
\end{definition}
\begin{definition}
	For a shape $\alpha$, we define the matrix-valued function $M_\alpha$ to have entries $M_{\alpha}(A,B)$ given by
	\[
	(M_\alpha)_{A,B}(X) = \sum_{R \in \mathcal{R}(\alpha, A,B )} \chi_R(X)
	\]
\end{definition}
For examples of $M_{\al}$, see \cite{AMP20}.
\begin{proposition}
	The $M_\alpha$'s for proper shapes $\alpha$ are an orthogonal basis for the $S$-invariant functions.\footnote{
		Because of orthogonality of the underlying Fourier characters, it is not hard to check that when $\alpha \neq \alpha'$ and $M_\alpha, M_{\alpha'}$ have the same dimensions, $\langle{M_\alpha, M_{\alpha'}}\rangle = 0$.}
\end{proposition}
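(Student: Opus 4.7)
The plan is to decompose the proof into two parts, orthogonality and spanning, and to reduce both to the corresponding properties of the underlying proper-ribbon orthonormal basis $\{M_R\}$ that has already been established.

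For orthogonality, I would expand $M_\alpha = \sum_R M_R$ where the sum runs over proper ribbons $R$ of shape $\alpha$ (since $\alpha$ is proper, every injective labeling of $V(\alpha)$ produces a proper ribbon). By bilinearity together with the orthonormality of $\{M_R : R \text{ proper}\}$, the inner product $\langle M_\alpha, M_{\alpha'} \rangle$ collapses to a count of pairs $(R,R')$ with $R = R'$. Any ribbon has a unique shape up to equivalence, so when $\alpha \not\equiv \alpha'$ we get $\langle M_\alpha, M_{\alpha'} \rangle = 0$. In the case $U_\alpha \not\equiv U_{\alpha'}$ or $V_\alpha \not\equiv V_{\alpha'}$, the two functions are supported on disjoint row/column blocks, so orthogonality is automatic. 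This is exactly the content of the footnote after the proposition.

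For spanning, fix a pair of index shapes $U,V$ and let $\mathcal{V}_{U,V}$ be the space of matrix-valued functions whose rows are indexed by matrix indices of shape $U$ and columns by matrix indices of shape $V$. The group $S = S_{n_1} \times \cdots \times S_{n_{t_{\max}}}$ acts on ribbons by simultaneously relabeling input indices and matrix-index labels, and under this action $\pi \cdot M_R = M_{\pi(R)}$. Let $f \in \mathcal{V}_{U,V}$ be $S$-invariant. Since $\{M_R : R \text{ proper with matrix indices of the right shape}\}$ is an orthonormal basis for $\mathcal{V}_{U,V}$, expand $f = \sum_R c_R M_R$ with $c_R = \langle f, M_R \rangle$. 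The invariance condition $\pi \cdot f = f$ forces $c_{\pi(R)} = c_R$ for every $\pi \in S$, so $c_R$ depends only on the shape-equivalence class $\alpha(R)$. Grouping the sum by shape yields
\[
f = \sum_\alpha c_\alpha \sum_{R:\, \text{shape}(R)\, =\, \alpha} M_R = \sum_\alpha c_\alpha M_\alpha,
\]
which is the desired expansion.

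The main subtlety to handle carefully is the precise formulation of the $S$-action on matrix-valued functions and the verification that $\pi \cdot M_R = M_{\pi(R)}$: the correct convention is that $\pi$ acts on the input by permuting indices (and hence acts on each hyperedge, preserving its label), and simultaneously acts compatibly on row and column matrix indices. A secondary technicality is that a shape $\alpha$ with non-trivial automorphisms will give $\langle M_\alpha, M_\alpha \rangle$ a magnitude depending on $|\text{Aut}(\alpha)|$ and on the number of injective labelings of $V(\alpha)$ into the ground set, so the basis is orthogonal but not orthonormal. This normalization is precisely what produces the multiplicative constants foreshadowed in the excerpt's remark that the approximation $M_\alpha \approx M_\sigma M_\tau M_{{\sigma'}^T}$ is off by factors coming from automorphism groups, and it should be tracked carefully but does not affect the basis property itself.
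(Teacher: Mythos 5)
Your proof is correct. The paper itself does not prove this proposition but leaves it to the reader, with a footnote sketching only the orthogonality direction; your argument there (expand $M_\alpha$ in the proper-ribbon basis $\{M_R\}$, use orthonormality of that basis, and note a ribbon determines its shape up to equivalence) matches the footnote exactly. Your spanning argument --- expand an $S$-invariant $f$ in the ribbon basis, use $\pi\cdot M_R = M_{\pi(R)}$ together with $S$-invariance of both $f$ and the inner product to conclude $c_{\pi(R)} = c_R$, then collect ribbons into shape orbits --- is the standard argument and is sound. Your two flagged subtleties (the precise form of the $S$-action on rows/columns, and the fact that $\langle M_\alpha, M_\alpha\rangle$ is a shape-dependent positive constant so the basis is orthogonal but not orthonormal) are exactly the points that need care, and you have handled both correctly.
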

\begin{remark}
	Conceptually, one may think of forming an orthonormal basis for this space with the functions $M_\alpha / \sqrt{\langle{M_\alpha, M_\alpha}\rangle}$, but for technical reasons it is easiest to work with these functions without normalizing them to $1$.
	By orthogonality and the fact that every Boolean function is a polynomial, any $S$-invariant matrix-valued function $\Lambda$ is expressible as
	\[
	\Lambda = \sum_{\alpha} \frac{\langle{\Lambda, M_\alpha}\rangle}{\langle{M_\alpha, M_\alpha}\rangle} \cdot M_\alpha
	\]
\end{remark}

In the proof of our main theorem, we encounter improper shapes. We can handle them by decomposing them into proper shapes using basic Fourier analysis. For now, we will illustrate how this can be done via an example.

\begin{example}[Reducing improper shape to proper shapes*]
Consider the case when the input distribution is Gaussian and there is only one type of vertex. Consider the improper shape $\alpha$ where $U_{\alpha} = (u_1,u_2,u_3)$, $V_{\alpha} = (v_1,v_2,v_3)$, and $V(\alpha) = U_{\alpha} \cup V_{\alpha} \cup \{w_1,w_2,w_3,w_4\}$ with edges
\begin{align*} E(\alpha) = &\{(u_1,w_1),(u_2,w_1),(u_3,w_1),(u_1,w_2),(u_2,w_2),(u_3,w_2)\} \\
&\cup \{(w_1,w_3), (w_1,w_3), (w_2,w_4), (w_2,w_4)\} \\
&\cup \{(w_3,v_1), (w_3,v_2), (w_3,v_3), (w_4,v_1), (w_4,v_2), (w_4,v_3)\}
\end{align*}
where all edges have label $1$.
$M_{\alpha}$ can be decomposed into a linear combination $M_{\alpha_1}$, $M_{\alpha_2}$, and $M_{\alpha_3}$ for the following proper shapes $\alpha_1$, $\alpha_2$, and $\alpha_3$
\begin{enumerate}
\item $\alpha_1$ is the same as $\alpha$ except that the edges $\{(w_1,w_3), (w_1,w_3), (w_2,w_4), (w_2,w_4)\}$ are replaced by $\{(w_1,w_3)_2, (w_2,w_4)_2\}$. The subscript notation means that we replaced the multiedge by a single edge with label $2$ so the edge now corresponds to the Hermite polynomial $h_2(x)$ as in \cref{example: hermite_basis}.
\item $\alpha_2$ is the same as $\alpha$ except that the edges $\{(w_1,w_3), (w_1,w_3), (w_2,w_4), (w_2,w_4)\}$ are replaced by $\{(w_1,w_3)_2\}$.
\item $\alpha_3$ is the same as $\alpha$ except that the edges $\{(w_1,w_3), (w_1,w_3), (w_2,w_4), (w_2,w_4)\}$ are deleted.
\end{enumerate}
With our definition of graph matrices
\[
M_{\alpha} = 2M_{\alpha_1} + \sqrt{2}M_{\alpha_2} + 2M_{\alpha_3}
\]
\end{example}

\subsection{Composing Ribbons and Shapes}
\begin{definition}[Composing Ribbons]
	We say that ribbons $R_1$ and $R_2$ are composable if $B_{R_1} = A_{R_2}$. Note that this definition is not symmetric so we may have that $R_1$ and $R_2$ are composable but $R_2$ and $R_1$ are not composable.

	We say that $R_1$ and $R_2$ are properly composable if we also have that $V(R_1) \cap V(R_2) = V(B_{R_1}) = V(A_{R_2})$ (there are no unexpected intersections between $R_1$ and $R_2$).

	If $R_1$ and $R_2$ are composable ribbons then we define the composition of $R_1$ and $R_2$ to be the ribbon $R_1 \circ R_2$ such that
	\begin{enumerate}
		\item $A_{R_1 \circ R_2} = A_{R_1}$ and $B_{R_1 \circ R_2} = B_{R_2}$
		\item $V(R_1 \circ R_2) = V(R_1) \cup V(R_2)$
		\item $E(R_1 \circ R_2) = E(R_1) \cup E(R_2)$ (and thus $\chi_{R_1 \circ R_2} = \chi_{R_1}\chi_{R_2}$)
	\end{enumerate}
	We say that ribbons $R_1,\ldots,R_k$ are composable/properly composable if for all $j \in [k-1]$, $R_1 \circ \ldots \circ R_j$ and $R_{j+1}$ are composable/properly composable. If $R_1,\ldots,R_k$ are composable then we define $R_1 \circ \ldots \circ R_k$ to be
	$R_1 \circ \ldots \circ R_k = (R_1 \circ \ldots \circ R_{k-1}) \circ R_k$
\end{definition}
\begin{proposition}
	Ribbon composition is associative, i.e. if $R_1,R_2,R_3$ are composable/properly composable ribbons then $R_2, R_3$ are composable/properly composable, $R_1, (R_2 \circ R_3)$ are composable/properly composable, and $R_1 \circ (R_2 \circ R_3) = (R_1 \circ R_2) \circ R_3$
\end{proposition}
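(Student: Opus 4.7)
The plan is to prove this purely by unfolding the definition of ribbon composition. Recall that a ribbon is a tuple $(H_R, A_R, B_R)$, composability of $R_1, R_2$ requires only $B_{R_1} = A_{R_2}$, and the composed ribbon $R_1 \circ R_2$ takes vertex set $V(R_1) \cup V(R_2)$, edge multiset $E(R_1) \cup E(R_2)$, and boundary indices $A_{R_1}, B_{R_2}$. So the proof is essentially a bookkeeping exercise in two parts: (i) checking the composability/proper-composability hypotheses for the two re-bracketings, and (ii) verifying equality of the resulting tuples.

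For composability, I would first observe that by the definition of composability of $R_1, R_2, R_3$, we have $B_{R_1} = A_{R_2}$ and $B_{R_1 \circ R_2} = A_{R_3}$, and since $B_{R_1 \circ R_2} = B_{R_2}$ the latter gives $B_{R_2} = A_{R_3}$. This immediately shows $R_2, R_3$ are composable, and then $A_{R_2 \circ R_3} = A_{R_2} = B_{R_1}$ shows $R_1, R_2 \circ R_3$ are composable. For the tuple equality $R_1 \circ (R_2 \circ R_3) = (R_1 \circ R_2) \circ R_3$, both sides have left index $A_{R_1}$, right index $B_{R_3}$, vertex set $V(R_1) \cup V(R_2) \cup V(R_3)$, and edge multiset $E(R_1) \cup E(R_2) \cup E(R_3)$, by associativity of set union and multiset union.

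The only step with any content is upgrading composability to proper composability, so I would handle it separately. Assume $R_1, R_2, R_3$ are properly composable, which by definition means $V(R_1) \cap V(R_2) = V(B_{R_1}) = V(A_{R_2})$ and $V(R_1 \circ R_2) \cap V(R_3) = V(B_{R_2}) = V(A_{R_3})$. Expanding the second condition gives
\[
\bigl(V(R_1) \cap V(R_3)\bigr) \cup \bigl(V(R_2) \cap V(R_3)\bigr) = V(A_{R_3}) = V(B_{R_2}).
\]
Since $V(B_{R_2}) \subseteq V(R_2) \cap V(R_3)$ is automatic, both halves of the union lie inside $V(B_{R_2})$, yielding $V(R_2) \cap V(R_3) = V(B_{R_2})$, i.e.\ proper composability of $R_2, R_3$. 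Moreover $V(R_1) \cap V(R_3) \subseteq V(B_{R_2}) \subseteq V(R_2)$, so $V(R_1) \cap V(R_3) \subseteq V(R_1) \cap V(R_2) = V(B_{R_1}) = V(A_{R_2})$. Then
\[
V(R_1) \cap V(R_2 \circ R_3) = \bigl(V(R_1) \cap V(R_2)\bigr) \cup \bigl(V(R_1) \cap V(R_3)\bigr) = V(B_{R_1}),
\]
giving proper composability of $R_1, R_2 \circ R_3$.

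There is no real obstacle here; the one subtle point is making sure that the vertex-intersection condition in proper composability transports across the re-bracketing, which is the containment $V(R_1) \cap V(R_3) \subseteq V(A_{R_2})$ derived above. Everything else is symbol pushing with set operations and projection onto the left/right indices of the tuples.
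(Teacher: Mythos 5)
The paper states this proposition without proof, treating it as routine verification, so there is no "paper's approach" to compare against; your job was to supply the omitted details. Your proof is correct. The composability bookkeeping and tuple equality are immediate, and the one genuinely nontrivial step—transporting the vertex-intersection condition of proper composability across the re-bracketing—is handled cleanly: you correctly observe that $V(B_{R_2})\subseteq V(R_2)\cap V(R_3)$ holds automatically (since $V(B_{R_2})\subseteq V(R_2)$ by the ribbon definition and $B_{R_2}=A_{R_3}$ gives $V(B_{R_2})\subseteq V(R_3)$), and from the hypothesis
\[
\bigl(V(R_1)\cap V(R_3)\bigr)\cup\bigl(V(R_2)\cap V(R_3)\bigr)=V(B_{R_2})
\]
you extract both $V(R_2)\cap V(R_3)=V(B_{R_2})$ and the containment $V(R_1)\cap V(R_3)\subseteq V(B_{R_2})\subseteq V(R_2)$, which is exactly what is needed to collapse $V(R_1)\cap V(R_2\circ R_3)$ down to $V(B_{R_1})$. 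This is the right argument and I see no gaps.
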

\begin{proposition}
	If $R_1$ and $R_2$ are composable ribbons then $M_{R_1 \cup R_2} = M_{R_1}M_{R_2}$.
\end{proposition}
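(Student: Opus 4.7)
The plan is to verify the equality entrywise, which reduces to a short definition-chase. Fix matrix indices $A$ and $B$ of the appropriate shapes and expand
\[
(M_{R_1}M_{R_2})_{A,B} \;=\; \sum_{C}(M_{R_1})_{A,C}\,(M_{R_2})_{C,B}.
\]
By the definition of $M_R$, the factor $(M_{R_1})_{A,C}$ vanishes unless $A = A_{R_1}$ and $C = B_{R_1}$, and likewise $(M_{R_2})_{C,B}$ vanishes unless $C = A_{R_2}$ and $B = B_{R_2}$. Composability gives $B_{R_1} = A_{R_2}$, so the sum collapses to a single potentially nonzero term, at $C = B_{R_1} = A_{R_2}$, which contributes precisely when $A = A_{R_1} = A_{R_1 \circ R_2}$ and $B = B_{R_2} = B_{R_1 \circ R_2}$.

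In this surviving case the product of the two entries is $\chi_{R_1}\chi_{R_2}$, which by the defining property of ribbon composition (namely $E(R_1 \circ R_2) = E(R_1) \cup E(R_2)$, giving $\chi_{R_1 \circ R_2} = \chi_{R_1}\chi_{R_2}$) equals $\chi_{R_1 \circ R_2}$. On the other side, $M_{R_1 \circ R_2}$ is by definition supported only at the entry $(A_{R_1 \circ R_2}, B_{R_1 \circ R_2})$, where its value is $\chi_{R_1 \circ R_2}$. So the two matrix-valued functions agree at every entry.

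There is no real obstacle here; the content is essentially bookkeeping around the definitions of $M_R$, of $\chi_R$, and of ribbon composition. The one conceptual point worth isolating is that composability ($B_{R_1} = A_{R_2}$) is exactly the combinatorial hypothesis that permits the unique candidate term of the matrix product to survive: without it, every term in the sum over $C$ would force $(M_{R_1})_{A,C}$ or $(M_{R_2})_{C,B}$ to vanish, so the product would be identically zero. I would also read the statement as $M_{R_1 \circ R_2} = M_{R_1}M_{R_2}$, treating the displayed $M_{R_1 \cup R_2}$ as a typo, since $R_1 \cup R_2$ has not been given any meaning as a ribbon operation, whereas the composition $R_1 \circ R_2$ has been defined just above.
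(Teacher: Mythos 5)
Your proof is correct and is exactly the definition-chase the paper implicitly expects: the paper states this proposition without proof, treating it as an immediate consequence of the definitions of $M_R$, $\chi_R$, and ribbon composition (and the earlier simplified version of the same proposition is stated with the same $M_{R_1 \circ R_2}$ form you note). You are also right that $M_{R_1 \cup R_2}$ is a typo for $M_{R_1 \circ R_2}$, and the rest of your argument---collapsing the inner sum at $C = B_{R_1} = A_{R_2}$ via composability, then invoking $\chi_{R_1 \circ R_2} = \chi_{R_1}\chi_{R_2}$ from $E(R_1 \circ R_2) = E(R_1) \cup E(R_2)$---is airtight.
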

We have similar definitions for composing shapes.
\begin{definition}[Composing Shapes]
	We say that shapes $\alpha$ and $\beta$ are composable if $U_{\beta} \equiv V_{\alpha}$. Note that this definition is not symmetric so we may have that $\alpha$ and $\beta$ are composable but $\beta$ and $\alpha$ are not composable.

	If $\alpha$ and $\beta$ are composable shapes then we define the composition of $\alpha$ and $\beta$ to be the shape $\alpha \circ \beta$ such that
	\begin{enumerate}
		\item $U_{\alpha \circ \beta} = U_{\alpha}$ and $V_{\alpha \circ \beta} = V_{\beta}$
		\item After setting $U_{\beta} = V_{\alpha}$, we take $V(\alpha \circ \beta) = V(\alpha) \cup V(\beta)$
		\item $E(\alpha \circ \beta) = E(\alpha) \cup E(\beta)$
	\end{enumerate}
	We say that shapes $\alpha_1,\ldots,\alpha_k$ are composable if for all $j \in [k-1]$, $\alpha_1 \circ \ldots \circ \alpha_j$ and $\alpha_{j+1}$ are composable. If $\alpha_1,\ldots,\alpha_k$ are composable then we define the shape $\alpha_1 \circ \ldots \circ \alpha_k$ to be
	$\alpha_1 \circ \ldots \circ \alpha_k = (\alpha_1 \circ \ldots \circ \alpha_{k-1}) \circ \alpha_k$
\end{definition}
\begin{proposition}
	Shape composition is associative, i.e. if $\alpha_1,\alpha_2,\alpha_3$ are composable shapes then $\alpha_2, \alpha_3$ are composable, $\alpha_1, (\alpha_2 \circ \alpha_3)$ are composable, and $\alpha_1 \circ (\alpha_2 \circ \alpha_3) = (\alpha_1 \circ \alpha_2) \circ \alpha_3$
\end{proposition}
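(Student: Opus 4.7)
The plan is to verify the three claims of the proposition directly from the definitions of composability and shape composition, since no deeper machinery is needed here. Recall that by definition $\alpha$ and $\beta$ are composable iff $U_{\beta} \equiv V_{\alpha}$, and the composition $\alpha \circ \beta$ satisfies $U_{\alpha\circ\beta} = U_{\alpha}$, $V_{\alpha\circ\beta} = V_{\beta}$, $V(\alpha\circ\beta) = V(\alpha) \cup V(\beta)$ (after identifying $U_{\beta}$ with $V_{\alpha}$ via the equivalence $U_\beta \equiv V_\alpha$), and $E(\alpha\circ\beta) = E(\alpha) \cup E(\beta)$.

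First I would handle the composability assertions. By hypothesis $\alpha_1, \alpha_2$ are composable, so $U_{\alpha_2} \equiv V_{\alpha_1}$, and $\alpha_1 \circ \alpha_2, \alpha_3$ are composable, so $U_{\alpha_3} \equiv V_{\alpha_1 \circ \alpha_2}$. But from the definition $V_{\alpha_1 \circ \alpha_2} = V_{\alpha_2}$, hence $U_{\alpha_3} \equiv V_{\alpha_2}$, which is exactly the statement that $\alpha_2, \alpha_3$ are composable. This in turn lets us form $\alpha_2 \circ \alpha_3$, and by definition $U_{\alpha_2 \circ \alpha_3} = U_{\alpha_2} \equiv V_{\alpha_1}$, so $\alpha_1$ and $\alpha_2 \circ \alpha_3$ are composable.

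Next I would verify the equality $\alpha_1 \circ (\alpha_2 \circ \alpha_3) = (\alpha_1 \circ \alpha_2) \circ \alpha_3$ by checking the four constituents of a shape. The index shape pieces match on the nose: both sides have $U = U_{\alpha_1}$ and $V = V_{\alpha_3}$. For the edge multiset, iterating the definition twice gives $E(\alpha_1) \cup E(\alpha_2) \cup E(\alpha_3)$ on both sides, independent of the order of grouping, since set (or multiset) union is associative. The only genuinely non-trivial point is to confirm that the vertex sets are identified consistently. On the left we first identify $U_{\alpha_3}$ with $V_{\alpha_2}$ and then identify $U_{\alpha_2}$ with $V_{\alpha_1}$; on the right we first identify $U_{\alpha_2}$ with $V_{\alpha_1}$ and then identify $U_{\alpha_3}$ with the image of $V_{\alpha_2}$ inside $\alpha_1 \circ \alpha_2$. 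Since the two identifications involve disjoint pairs of index shapes (namely $U_{\alpha_3} \leftrightarrow V_{\alpha_2}$ and $U_{\alpha_2} \leftrightarrow V_{\alpha_1}$), the order in which they are performed does not matter, and both associations yield the same vertex set $V(\alpha_1) \cup V(\alpha_2) \cup V(\alpha_3)$ with the same gluing data.

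The main (minor) obstacle is pedantic bookkeeping: the composition rule is defined by ``setting $U_\beta = V_\alpha$,'' which is an identification up to the equivalence $\equiv$, not literal equality, so one must be slightly careful that the two sequential identifications commute. This reduces to the fact that the two equivalences used, $U_{\alpha_2} \equiv V_{\alpha_1}$ and $U_{\alpha_3} \equiv V_{\alpha_2}$, act on disjoint sets of vertices of $\alpha_2$ (namely the $U_{\alpha_2}$-vertices and the $V_{\alpha_2}$-vertices respectively), so they may be applied in either order to produce the same labeled shape. Once that is observed, associativity follows immediately, in direct parallel to the analogous (already stated) proposition for ribbons.
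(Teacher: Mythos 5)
The paper states this proposition without proof, treating it as immediate from the definitions, and your direct verification is exactly the expected route. Your handling of the two composability claims is correct and matches what the paper would implicitly rely on: $V_{\alpha_1 \circ \alpha_2} = V_{\alpha_2}$ gives $U_{\alpha_3} \equiv V_{\alpha_2}$, and $U_{\alpha_2 \circ \alpha_3} = U_{\alpha_2} \equiv V_{\alpha_1}$.

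One small imprecision in the last paragraph: you justify that the two identifications commute by asserting that $U_{\alpha_2} \equiv V_{\alpha_1}$ and $U_{\alpha_3} \equiv V_{\alpha_2}$ ``act on disjoint sets of vertices of $\alpha_2$,'' but in general $V(U_{\alpha_2})$ and $V(V_{\alpha_2})$ need not be disjoint (indeed $U_\alpha \cap V_\alpha$ is allowed to be nonempty throughout the paper, e.g.\ in the definition of vertex separators). The correct reason the two identifications commute is not disjointness inside $\alpha_2$, but rather that the labels of $V(\alpha_1)$ and $V(\alpha_3)$ are a priori disjoint, so one identification only couples $\alpha_1$--$\alpha_2$ vertices and the other only couples $\alpha_2$--$\alpha_3$ vertices; the equivalence relation generated on $V(\alpha_1) \sqcup V(\alpha_2) \sqcup V(\alpha_3)$ is therefore the same regardless of the order in which the two relations are imposed, even when a vertex of $\alpha_2$ lies in both $U_{\alpha_2}$ and $V_{\alpha_2}$ (in which case a single class contains a vertex from each of $\alpha_1$, $\alpha_2$, $\alpha_3$). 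With that correction the argument goes through.
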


\begin{example}
    \cref{fig: shape_comp} illustrates an example of shape composition. We have two types of vertices that we diagrammaticaly represent by squares and circles. Observe how the shapes $\sig \circ \sig'^T$ and $\sig \circ \tau \circ \sig'^T$ are obtained from the shapes $\sig, \tau$ and $\sig'^T$.
\end{example}

\begin{figure}[!h]
    \centering
    \includegraphics[scale=0.38, trim={8cm 2cm 0 2cm},clip]{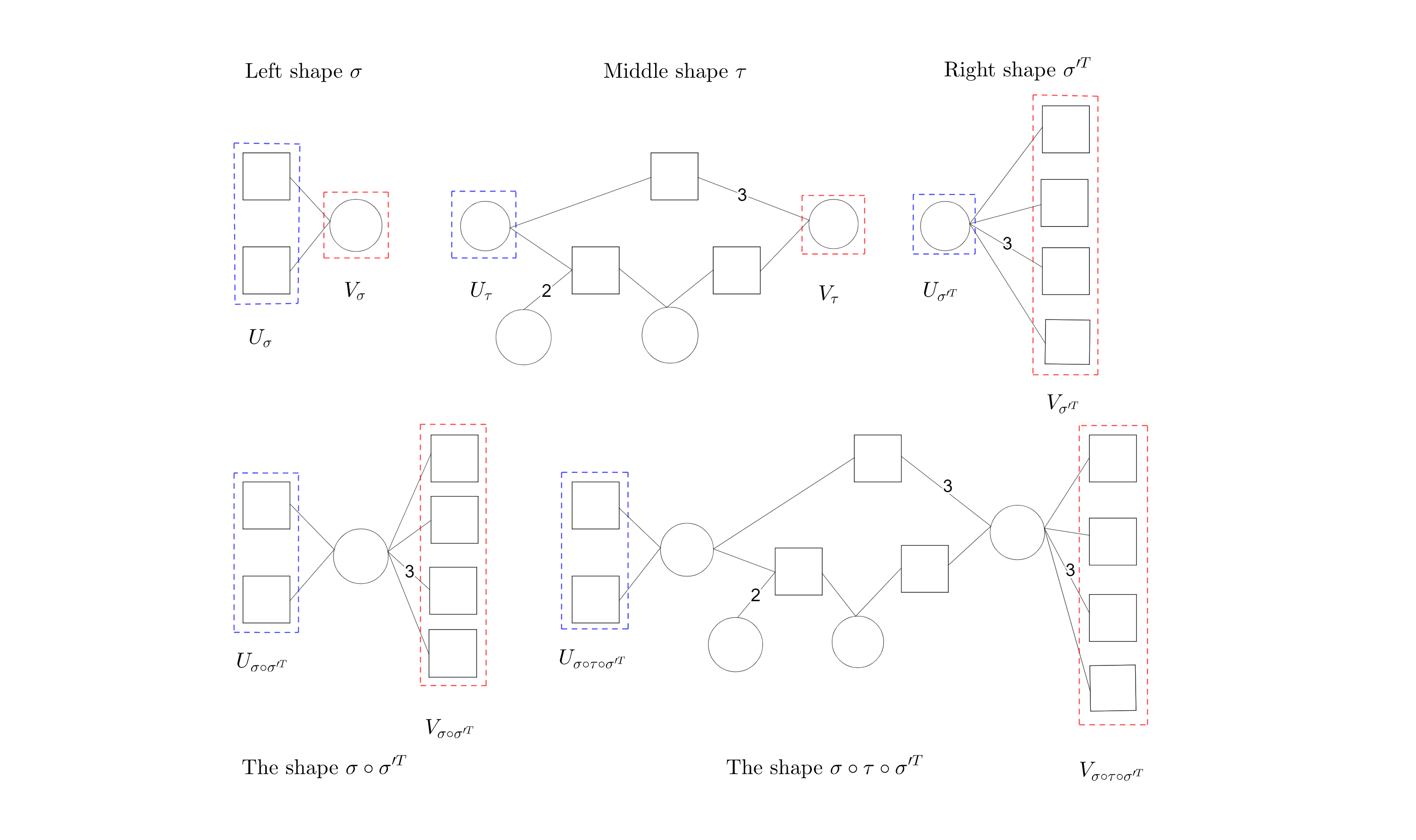}
    \caption{Illustration of shape composition and decomposition.}
    \label{fig: shape_comp}
\end{figure}

\subsection{Decomposition of Shapes into Left, Middle, and Right parts}
In this subsection, we describe how shapes can be decomposed into left, middle, and right parts based on the leftmost and rightmost \emph{minimum vertex separators}, which is a crucial idea for our analysis.
\begin{definition}[Paths]
	A \emph{path} in a shape $\alpha$ is a sequence of vertices $v_1,\ldots,v_t$ such that $v_i, v_{i+1}$ are in some edge/hyperedge together.
	A pair of paths is vertex-disjoint if the corresponding sequences of vertices are disjoint.
\end{definition}
\begin{definition}[Vertex separators]
	Let $\alpha$ be a shape and let $U$ and $V$ be sets of vertices in $\alpha$. We say that a set of vertices $S \subseteq V(\alpha)$ is a \emph{vertex separator} of $U$ and $V$ if every path in $\alpha$ from $U$ to $V$ contains at least one vertex in $S$. Note that any vertex separator $S$ of $U$ and $V$ must contain all of the vertices in $U \cap V$.

	As a special case, we say that $S$ is a vertex separator of $\alpha$ if $S$ is a vertex separator of $U_{\alpha}$ and $V_{\alpha}$
\end{definition}
We define the weight of a set of vertices $S \subseteq V(\alpha)$ in the same way that weight is defined for index shapes.
\begin{definition}[Simplified Weight]
	When there is only one type of index, the weight of a set of vertices $S \subseteq V(\alpha)$ is simply $|S|$.
\end{definition}
\begin{definition}[General Weight*]
	In general, given a set of vertices $S \subseteq V(\alpha)$, writing $S = \cup_{t}{S_t}$ where $S_t$ is the set of vertices of type $t$ in $S$, we define the weight of $S$ to be $w(S) = \sum_{t}{|S_t|log_{n}(n_t)}$
\end{definition}
\begin{remark}[*]
	Again, if necessary, we add an infinitesimal perturbation to $n_1,n_2,\ldots,n_{t_{max}}$ so that if two separators $S$ and $S'$ have the same weight then $S$ and $S'$ have the same number of each type of vertex.
\end{remark}
\begin{definition}[Leftmost and rightmost minimum vertex separators]
	The \emph{leftmost} minimum vertex separator is the vertex separator $S$ of minimum weight such that for every other minimum-weight vertex separator $S'$, $S$ is a separator of $U_\alpha$ and $S'$.
	The \emph{rightmost} minimum vertex separator is the vertex separator $T$ of minimum weight such that for every other minimum-weight vertex separator $T'$, $T$ is a separator of $T'$ and $V_{\alpha}$
\end{definition}
The work \cite{BHKKMP16} showed that under our simplifying assumptions, leftmost and rightmost minimum vertex separators are well defined. For a general proof that leftmost and rightmost minimum vertex separators are well defined, see Appendix \ref{separatorswelldefinedsection}.

We now have the following crucial idea. Every shape $\alpha$ can be decomposed into the composition of three composable shapes $\sigma,\tau,{\sigma'}^T$ based on the leftmost and rightmost minimum vertex separators $S,T$ of $\alpha$ together with orderings of $S$ and $T$.
\begin{definition}[Simplified Separators With Orderings]
	Under our simplifying assumptions, given a set of vertices $S \subseteq V(\alpha)$ and an ordering $O_S = s_1,\ldots,s_{|S|}$ of the vertices of $S$, we define the index shape $(S, O_S)$ to be $(S, O_S) = (s_1,\ldots,s_{|S|})$.
\end{definition}
\begin{definition}[General Separators With Orderings*]
	In the general case, we need to give an ordering for each type of vertex. Let $S \subseteq V(\alpha)$ be a subset of the vertices of $\alpha$ and write $S = \cup_{t}{S_t}$ where $S_t$ is the set of vertices in $S$ of type $t$. Given $O_S = \{O_t\}$ where $O_t = s_{t1},\ldots,s_{t|S_t|}$ is an ordering of the vertices of $S_t$, we define the index shape piece $(S_t, O_t)$ to be $(S_t, O_t) = ((s_{t1},\ldots,s_{t|S_t|}), t, 1)$ and we define the index shape $(S,O_S)$ to be $(S,O_S) = \{(S_t,O_t)\}$.
\end{definition}
\begin{proposition}
	The number of possible orderings $O$ for $S$ is equal to $|Aut((S,O_S))|$
\end{proposition}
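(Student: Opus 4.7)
The plan is a direct counting argument using the definitions of orderings and automorphism groups set out in the preceding subsections. In both the simplified and general settings, the claim reduces to observing that the set of orderings of $S$ and the automorphism group of the associated index shape $(S,O_S)$ are counted by exactly the same product of factorials, so one just needs to check that the definitions line up.

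First I would treat the simplified case. Here an ordering $O_S$ is by definition a linear ordering $s_1,\ldots,s_{|S|}$ of the elements of $S$, so the number of such orderings equals $|S|!$. On the other hand, $(S,O_S)$ is the single index shape $(s_1,\ldots,s_{|S|})$, and by the definition of the automorphism group of a simplified index shape, $\mathrm{Aut}((S,O_S)) = S_{|S|}$, which has cardinality $|S|!$. The two counts agree.

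Next I would handle the general case. Write $S = \bigcup_t S_t$ where $S_t$ is the set of vertices of type $t$ in $S$. By the definition of orderings in this setting, $O_S = \{O_t\}$ is a tuple of orderings $O_t$ of the vertex set $S_t$ chosen independently for each type $t$, so the total number of orderings is $\prod_t |S_t|!$. The associated index shape is $(S,O_S) = \{(S_t,O_t)\}$, and by the definition of the automorphism group of a general index shape,
\[
\mathrm{Aut}((S,O_S)) \;=\; \prod_t \mathrm{Aut}((S_t,O_t)) \;=\; \prod_t S_{|S_t|},
\]
which again has cardinality $\prod_t |S_t|!$. Combining these two counts yields the proposition.

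I do not anticipate any real obstacle: the proposition is essentially a bookkeeping statement that the definition of $(S,O_S)$ was set up precisely so that reordering its vertices within each type class is an automorphism and that every permutation of a type class arises from such a reordering. The only thing to be careful about is that, in the general case, orderings of vertices of different types are chosen independently, which is built into the definition of $(S,O_S)$ as a set of index shape pieces indexed by type. With that check in hand, the two factorial products match term by term and the equality follows.
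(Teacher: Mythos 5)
Your proof is correct and gives exactly the direct counting argument that the proposition expects; the paper itself states this without proof, so there is nothing to compare against, but your verification in both the simplified case ($|S|!$ on both sides) and the general case ($\prod_t |S_t|!$ on both sides, using the product structure of $\mathrm{Aut}$ over index shape pieces) is complete and matches the definitions as laid out in the surrounding text.
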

\begin{definition}[Shape transposes]
	Given a shape $\alpha$, we define $\alpha^{T}$ to be the shape $\alpha$ with $U_{\alpha}$ and $V_{\alpha}$ swapped i.e. $U_{\sigma^{T}} = V_{\sigma}$ and $V_{\sigma^{T}} = U_{\sigma}$.
\end{definition}
\begin{definition}[Left, middle, and right parts]
	Let $\alpha$ be a shape. Let $S$ and $T$ be the leftmost and rightmost minimal vertex separators of $\alpha$ together with orderings $O_S,O_T$ of $S$ and $T$.
	\begin{itemize}
		\item We define the \emph{left part} $\sigma_{\alpha}$ of $\alpha$ to be the shape such that
		\begin{enumerate}
			\item $H_{\sigma_{\alpha}}$ is the induced subgraph of $H_{\alpha}$ on all of the vertices of $\alpha$ reachable from $U_{\alpha}$ without passing through $S$ (note that $H_{\sigma_{\alpha}}$ includes the vertices of $S$) except that we remove any edges/hyperedges which are contained entirely within $S$.
			\item $U_{\sigma_{\alpha}} = U_{\alpha}$ and $V_{\sigma_{\alpha}} = (S,O_S)$
		\end{enumerate}
		\item We define the \emph{right part} ${\sigma'}^T_{\alpha}$ of $\alpha$ to be the shape such that
		\begin{enumerate}
			\item $H_{{\sigma'}^T_{\alpha}}$ is the induced subgraph of $H_{\alpha}$ on all of the vertices of $\alpha$ reachable from $V_{\alpha}$ without passing through $T$ (note that $H_{{\sigma'}^T_{\alpha}}$ includes the vertices of $T$) except that we remove any edges/hyperedges which are contained entirely within $T$.
			\item $V_{{\sigma'}^T_{\alpha}} = V_{\alpha}$ and $U_{{\sigma'}^T_{\alpha}} = (T,O_T)$
		\end{enumerate}
		\item We define the \emph{middle part} $\tau_{\alpha}$ of $\alpha$ to be the shape such that
		\begin{enumerate}
			\item $H_{\tau_{\alpha}}$ is the induced subgraph of $H_{\alpha}$ on all of the vertices of $\alpha$ which are not reachable from $U_{\alpha}$ and $V_{\alpha}$ without touching $S$ and $T$ (note that $H_{\tau_{\alpha}}$ includes the vertices of $S$ and $T$). $H_{\tau_{\alpha}}$ also includes the hyperedges entirely within $S$ and the hyperedges entirely within $T$.
			\item $U_{\tau_{\alpha}} = (S,O_S)$ and $V_{\tau_{\alpha}} = (T,O_T)$
		\end{enumerate}
	\end{itemize}.
\end{definition}

\begin{example}
    \cref{fig: shape_comp} illustrates an example decomposition. We have two types of vertices that we diagrammatically represent by squares and circles. In this example, we assume that the set containing a single circle vertex has a lower weight compared to a set of two square vertices.
    \begin{enumerate}
        \item If we start with the shape $\sig \circ \sig'^T$, then it can be decomposed uniquely in to the composition of the left shape $\sig$, the right shape $\sig'^T$. In this case, the middle shape (not shown in this figure) is trivial.
        \item If we start with the shape $\sig \circ \tau \circ \sig'^T$, then it can be decomposed uniquely into the composition of the left shape $\sig$, the middle shape $\tau$ and the right shape $\sig'^T$, which are all shown in this figure.
    \end{enumerate}
\end{example}

\begin{proposition}
	If $\sigma,\tau,{\sigma'}^{T}$ are the left, middle, and rights parts for $\alpha$ for given orderings $O_S,O_T$ of $S$ and $T$ then $\alpha = \sigma \circ \tau \circ {\sigma'}^T$.
\end{proposition}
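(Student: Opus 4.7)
The plan is to verify the identity $\alpha = \sigma \circ \tau \circ {\sigma'}^T$ by a careful case analysis showing that the vertex sets, edge sets, and distinguished index shapes match on both sides. There is no single clever trick; the content is in organizing the bookkeeping so every vertex and every (hyper)edge of $\alpha$ is accounted for exactly once.

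First, I would establish a ``sandwich'' property of the leftmost and rightmost minimum vertex separators: namely that $S$ is a separator of $U_\alpha$ and $T$, and that $T$ is a separator of $S$ and $V_\alpha$. Both are immediate from the defining properties of leftmost/rightmost minimum separators applied to the minimum separator $T$ (respectively $S$). A useful consequence is that every path in $\alpha$ from $U_\alpha$ to $V_\alpha$ must hit $S$ before hitting $T$, and any vertex of $\alpha$ either lies on the $U_\alpha$-side of $S$, strictly between $S$ and $T$, or on the $V_\alpha$-side of $T$ (with $S$ and $T$ themselves potentially overlapping).

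Next, I would check the vertex partition. Let $L = V(\sigma)$, $M = V(\tau)$, $R = V({\sigma'}^T)$. By construction $L$ is the set of vertices reachable from $U_\alpha$ without passing through $S$ (together with $S$), and symmetrically for $R$; $M$ is the complement together with $S \cup T$. Using the sandwich property, no vertex outside $S \cap T$ can simultaneously be reachable from $U_\alpha$ avoiding $S$ and from $V_\alpha$ avoiding $T$, which yields $L \cap R \subseteq S \cap T$, $L \cap M = S$, $M \cap R = T$, and $L \cup M \cup R = V(\alpha)$.

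Third, I would verify the edge partition. For any (hyper)edge $e \in E(\alpha)$, all its endpoints lie in $L \cup M \cup R$, and using the separator properties one shows $e$ has all endpoints in $L$, all endpoints in $M$, or all endpoints in $R$ (otherwise $e$ would give a path from $U_\alpha$ to $V_\alpha$ bypassing $S$ or $T$). The one subtle point is an edge whose endpoints lie entirely in $S$ or entirely in $T$: by definition such an edge is removed from $\sigma$ and from ${\sigma'}^T$, but is retained in $\tau$, so it is still counted exactly once. All remaining edges lie in precisely one of $E(\sigma), E(\tau), E({\sigma'}^T)$. Composability is then immediate since $V_\sigma = (S,O_S) = U_\tau$ and $V_\tau = (T,O_T) = U_{{\sigma'}^T}$ by construction, and the compositions $\sigma \circ \tau$ and $(\sigma \circ \tau) \circ {\sigma'}^T$ agree with $\alpha$ on vertices, edges, $U_\alpha$, and $V_\alpha$.

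The main obstacle is the bookkeeping around edges sitting entirely inside $S$ or $T$ and around the possibility that $S \cap T$ is nonempty. Once the sandwich property is in hand these become routine, but the definitions must be read precisely, since the ``induced subgraph'' specification for $\sigma$ and ${\sigma'}^T$ has an exception for edges inside $S$ and $T$, while $\tau$ is the one that absorbs them.
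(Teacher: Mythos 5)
Your proposal is correct. The paper states this proposition without proof, treating it as immediate from the definitions (in the informal section it is even introduced as ``again easy to check''), so there is no paper argument to compare against; your plan is the natural way to unwind the definitions. The sandwich property you isolate---that $S$ separates $U_\alpha$ from $T$ and $T$ separates $S$ from $V_\alpha$, each a direct consequence of leftmost/rightmost-ness applied to the other minimum separator---is indeed the key structural input. From it one gets $V({\sigma'}^T) \cap S \subseteq T$ and $V(\sigma) \cap T \subseteq S$, which are exactly what make the intersections $V(\sigma)\cap V(\tau) = S$, $V(\tau)\cap V({\sigma'}^T) = T$, and $V(\sigma)\cap V({\sigma'}^T) \subseteq S\cap T$ come out right, so that the two applications of shape composition glue without unintended identifications. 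Your edge analysis (every hyperedge lies wholly in one of $V(\sigma)$, $V(\tau)$, $V({\sigma'}^T)$, with the convention that edges internal to $S$ or to $T$ are excluded from $E(\sigma)$ and $E({\sigma'}^T)$ and absorbed by $E(\tau)$) correctly yields a disjoint partition of $E(\alpha)$, and together with $U_{\sigma}=U_\alpha$, $V_{{\sigma'}^T}=V_\alpha$ this gives $\sigma\circ\tau\circ{\sigma'}^T=\alpha$.
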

\begin{remark}
	One may ask which ordering(s) we should take of $S$ and $T$. The answer is that we will take all of the possible orderings of $S$ and $T$ simultaneously, giving equal weight to each.
\end{remark}
Based on this decomposition and the following claim, we make the following definitions for what it means for a shape to be a left, middle, or right part.
\begin{claim}[Proved in Section 6.1 in \cite{BHKKMP16}]
	\footnote{The proof in \cite{BHKKMP16} only explicitly treats the case when the shapes $\alpha$ are graphs, but the proof easily generalizes to the case when the $\alpha$ are hypergraphs.}
	\begin{itemize}
		\item Every shape $\sigma$ which is the left part of some other shape $\alpha$ has that $V_\sigma$ is its left-most and right-most minimum-weight separator.
		\item Every shape ${\sigma}^T$ which is the right part of some other shape $\alpha$ has that $U_{{\sigma}^T}$ is its left-most and right-most minimum-weight separator.
		\item Every shape $\tau$ which is the middle part of some other shape $\alpha$ has $U_\tau$ as its left-most minimum size separator and $V_{\tau}$ as its right-most minimum-weight separator.
	\end{itemize}
\end{claim}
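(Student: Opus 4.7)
The claim has three parts. By symmetry under shape transposition (which swaps $U_\alpha \leftrightarrow V_\alpha$ and the leftmost/rightmost separators $S \leftrightarrow T$), parts 1 and 2 are equivalent, so my plan is to handle the left-part case and the middle-part case separately.

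For the left part $\sigma$, where $U_\sigma = U_\alpha$ and $V_\sigma = S$, I will first observe that $V_\sigma = S$ is trivially a separator of $\sigma$, and trivially the rightmost one since it equals $V_\sigma$. The main step is to show $S$ has minimum weight in $\sigma$: given any separator $S''$ of $\sigma$, I will argue that $S''$ is also a separator of $\alpha$. To do this, I will take any path $P$ from $U_\alpha$ to $V_\alpha$ in $\alpha$ and consider its initial segment up to the first entry into $S$. By the definition of $V(\sigma)$ as the set of vertices reachable from $U_\alpha$ without crossing $S$, this segment lies in $V(\sigma)$, making it a $U_\sigma$-to-$V_\sigma$ path in $\sigma$, which must cross $S''$. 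Hence $w(S'') \geq w(S)$ by minimality of $S$ in $\alpha$. The leftmost property then follows: any min-weight separator $S''$ of $\sigma$ is automatically a min-weight separator of $\alpha$ by the above, so by leftmost-ness of $S$ in $\alpha$, $S$ separates $U_\alpha$ from $S''$ in $\alpha$, and since the relevant paths live in the substructure $\sigma$, the same separation holds in $\sigma$.

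For the middle part $\tau$, where $U_\tau = S$ and $V_\tau = T$, I note that leftmost-ness of $U_\tau$ and rightmost-ness of $V_\tau$ inside $\tau$ are automatic once $S$ and $T$ are both known to be min-weight separators of $\tau$, since any path out of $U_\tau = S$ already starts in $S$ (and analogously for $V_\tau$). The crux is therefore to show that $S$ (and symmetrically $T$) is a minimum-weight separator of $\tau$. My approach is: suppose $S''$ is any separator of $\tau$, and show that $S''$ is then a separator of $\alpha$, which forces $w(S'') \geq w(S)$ by minimality of $S$ in $\alpha$. Given a path $P$ from $U_\alpha$ to $V_\alpha$ in $\alpha$, let $s^*$ be its last visit to $S$ and $t^*$ its first visit to $T$ after $s^*$ (both exist because $S$ and $T$ are separators of $\alpha$). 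The key lemma will be that the sub-path of $P$ between $s^*$ and $t^*$ lies entirely in $V(\tau)$; this produces a $U_\tau$-to-$V_\tau$ path in $\tau$, which must cross $S''$. To prove the lemma, I will suppose for contradiction that an interior vertex $v$ of this sub-path lies outside $V(\tau)$, so either $v \in V(\sigma) \setminus S$ or $v \in V({\sigma'}^T) \setminus T$. In the first case, the tail of $P$ from $v$ onward gives a path from $v$ to $V_\alpha$ not crossing $S$ (since $s^*$ was the last $S$-visit), and concatenating with the path from $U_\alpha$ to $v$ witnessing $v \in V(\sigma) \setminus S$ yields a walk, hence a simple path, from $U_\alpha$ to $V_\alpha$ avoiding $S$ entirely, contradicting that $S$ is a separator of $\alpha$. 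The second case is symmetric, using $T$ in place of $S$.

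The main obstacle I foresee is precisely this path-analysis argument for the middle part: one must track visits to $S$ and $T$ carefully and leverage the maximality built into the definitions of $V(\sigma)$ and $V({\sigma'}^T)$. A secondary subtlety is verifying that the weighted, multi-type, hypergraph setting introduced in this paper doesn't spoil the arguments, in particular that the reachability-and-concatenation manipulations still go through when paths traverse hyperedges and when the minimality criterion is in terms of $w(\cdot)$ rather than cardinality. Once the sub-path lemma is established in this generality, all remaining deductions (minimality transfers between $\sigma, \tau, {\sigma'}^T$ and $\alpha$, and the trivial leftmost/rightmost statements) are essentially formal.
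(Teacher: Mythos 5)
Your overall strategy is sound, and the left-part argument (and its mirror for right parts) is essentially airtight: a separator $S''$ of $\sigma$ must hit the initial segment of every $U_\alpha$-to-$V_\alpha$ path up to its first $S$-visit, and that segment lies in $\sigma$ by the definition of $V(\sigma)$ and $E(\sigma)$, so $S''$ is also a separator of $\alpha$ and $w(S'')\geq w(S)$; leftmost-ness then transfers because every $U_\sigma$-to-$S''$ path in $\sigma$ is literally a path in $\alpha$, and rightmost-ness is trivial since $S=V_\sigma$. Note that the paper itself offers no proof here but cites Section~6.1 of \cite{BHKKMP16} (with a footnote that the hypergraph/weighted case is a routine generalization), so there is no in-paper argument to compare against; on its own terms your reconstruction is the natural one.

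There is, however, a genuine gap in the middle-part argument, located exactly where you flag the crux. Having fixed $s^*$ as the last $S$-visit of $P$ and $t^*$ as the first $T$-visit after $s^*$, the two cases are \emph{not} symmetric. Case~1 works: for $v\in V(\sigma)\setminus S$ interior to $P[s^*,t^*]$, the tail $P[v,\mathrm{end}]$ avoids $S$, so prepending a witness path from $U_\alpha$ to $v$ that avoids $S$ gives a $U_\alpha$-to-$V_\alpha$ walk avoiding $S$, a contradiction. But the ``symmetric'' version for $v\in V({\sigma'}^T)\setminus T$ would require the \emph{head} $P[U_\alpha,v]$ to avoid $T$, which is false in general: $P$ may visit $T$ before $s^*$. (Indeed, even the existence of $t^*$ does not follow from $T$ merely being a separator of $\alpha$; you need the rightmost property that $T$ separates $S$ from $V_\alpha$, so that the tail $P[s^*,\mathrm{end}]$, which starts in $S$, must hit $T$.) The fix is to use $P[s^*,v]$ rather than $P[U_\alpha,v]$: this segment sits strictly between the last $S$-visit and the first subsequent $T$-visit, hence avoids $T$, and appending the witness path from $v$ to $V_\alpha$ that avoids $T$ gives a walk from $s^*\in S$ to $V_\alpha$ avoiding $T$, contradicting that $T$ separates $S$ from $V_\alpha$. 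So Case~2 must appeal to the relative position of $S$ and $T$ (rightmost-ness of $T$), not just to $T$ separating $U_\alpha$ from $V_\alpha$. With this repair the argument closes; the hyperedge and weight generalizations you worry about are indeed routine, since every step is a reachability-and-concatenation argument insensitive to edge arity, and every minimality comparison transfers verbatim from cardinality to $w(\cdot)$.
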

\begin{definition}\label{leftmiddlerightshapedefinitions} \
	\begin{enumerate}
		\item We say that a shape $\sigma$ is a left shape if $\sigma$ is a proper shape, $V_{\sigma}$ is the left-most and right-most minimum-weight separator of $\sigma$, every vertex in $V(\sigma) \setminus V_{\sigma}$ is reachable from $U_{\sigma}$ without touching $V_{\sigma}$, and $\sigma$ has no hyperedges entirely within $V_{\sigma}$.
		\item We say that a shape $\tau$ is a proper middle shape if $\tau$ is a proper shape, $U_{\tau}$ is the left-most minimum-weight separator of $\tau$, and $V_{\tau}$ is the right most minimum-weight separator of $\tau$. In the analysis, we will also need to consider improper middle shapes $\tau$ which may not be proper shapes and which may have smaller separators between $U_{\tau}$ and $V_{\tau}$.
		\item We say that a shape ${\sigma}^{T}$ is a right shape if ${\sigma}^{T}$ is a proper shape, $U_{{\sigma}^{T}}$ is the left-most and right-most minimum-weight separator of ${\sigma}^{T}$, every vertex in $V({\sigma}^{T}) \setminus U_{{\sigma}^{T}}$ is reachable from $V_{{\sigma}^{T}}$ without touching $U_{{\sigma}^{T}}$, and ${\sigma}^{T}$ has no hyperedges entirely within $U_{{\sigma}^{T}}$.
	\end{enumerate}
\end{definition}
\begin{proposition}
	For all shapes $\sigma$, $\sigma$ is a left shape if and only if $\sigma^{T}$ is a right shape.
\end{proposition}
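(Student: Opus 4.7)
The plan is to prove this proposition by direct inspection, checking that each defining condition of ``left shape'' for $\sigma$ translates, term by term, into the corresponding defining condition of ``right shape'' for $\sigma^T$. Since the definitions are nearly verbatim duals of each other under swapping the roles of $U$ and $V$, and the transpose operation is exactly this swap (with $V(\sigma^T) = V(\sigma)$ and $E(\sigma^T) = E(\sigma)$), essentially all the work is bookkeeping.

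First I would observe that properness is preserved: $\sigma$ is proper iff $\sigma^T$ is proper, because properness depends only on $V(\sigma)$, $E(\sigma)$, and $U_\sigma \cup V_\sigma$, all of which are invariant under transposition. Next, I would argue that the set of vertex separators of $\sigma$ (separating $U_\sigma$ from $V_\sigma$) is literally the same as the set of vertex separators of $\sigma^T$ (separating $U_{\sigma^T} = V_\sigma$ from $V_{\sigma^T} = U_\sigma$), since a path from $U_\sigma$ to $V_\sigma$ in $\sigma$ is the same thing as a path from $V_{\sigma^T}$ to $U_{\sigma^T}$ in $\sigma^T$. Consequently, the minimum-weight separators coincide. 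Moreover, by the definitions of leftmost and rightmost, the leftmost minimum-weight separator of $\sigma^T$ equals the rightmost minimum-weight separator of $\sigma$, and vice versa. Hence $V_\sigma$ is simultaneously the leftmost and rightmost minimum-weight separator of $\sigma$ iff $U_{\sigma^T} = V_\sigma$ is simultaneously the leftmost and rightmost minimum-weight separator of $\sigma^T$.

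The reachability condition translates directly: ``every vertex in $V(\sigma) \setminus V_\sigma$ is reachable from $U_\sigma$ without touching $V_\sigma$'' is, after substituting $V(\sigma^T) = V(\sigma)$, $U_{\sigma^T} = V_\sigma$, and $V_{\sigma^T} = U_\sigma$, the statement ``every vertex in $V(\sigma^T) \setminus U_{\sigma^T}$ is reachable from $V_{\sigma^T}$ without touching $U_{\sigma^T}$,'' which is exactly the reachability condition in the definition of right shape. Finally, the condition that $\sigma$ has no hyperedges entirely within $V_\sigma$ is identical to the condition that $\sigma^T$ has no hyperedges entirely within $U_{\sigma^T}$, since $E(\sigma^T) = E(\sigma)$ and $U_{\sigma^T} = V_\sigma$.

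There is no genuine obstacle here; this is a definition-unpacking lemma whose purpose is to let us freely move between left and right parts via transposition in later arguments. The only place where one has to be mildly careful is to invoke the fact (established in Appendix \ref{separatorswelldefinedsection}) that the leftmost and rightmost minimum-weight separators are well-defined, so that the statement ``$V_\sigma$ is both the leftmost and rightmost minimum-weight separator'' is unambiguous and symmetric under reversing the roles of $U$ and $V$.
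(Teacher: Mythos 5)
Your proof is correct and is exactly the routine definition-unpacking argument intended by the paper, which states this proposition without proof. Each of the four defining conditions of ``left shape'' for $\sigma$ (properness, $V_\sigma$ being both the leftmost and rightmost minimum-weight separator, reachability of $V(\sigma)\setminus V_\sigma$ from $U_\sigma$, and no hyperedges inside $V_\sigma$) is mapped term-by-term to the corresponding condition of ``right shape'' for $\sigma^T$ under the substitution $U_{\sigma^T}=V_\sigma$, $V_{\sigma^T}=U_\sigma$, $V(\sigma^T)=V(\sigma)$, $E(\sigma^T)=E(\sigma)$, and your observation that the leftmost/rightmost minimum separators swap under transposition (using the well-definedness established in the appendix) is precisely the point one needs to be careful about.
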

\begin{remark}
	As the reader has likely guessed, throughout this section we use $\sigma$ to denote left parts and $\tau$ to denote middle parts. Instead of having a separate letter for right parts, we express right parts as the transpose of a left part.
\end{remark}
\subsection{Coefficient matrices}\label{fullcoefficientmatrixsubsection}
We will have that $\Lambda = \sum_{\alpha}{\lambda_{\alpha}M_{\alpha}}$. To analyze $\Lambda$, it is extremely useful to express these coefficients in terms of matrices. To do this, we will need a few more definitions. We start by defining the sets of index shapes that can appear when analyzing $\Lambda$.
\begin{definition}
	Given a moment matrix $\Lambda$, we define the following sets of index shapes.
	\begin{enumerate}
		\item We define $\mathcal{I}(\Lambda) = \{U: \exists \text{ matrix index } A: A \text{ is a row index of } \Lambda, A \text{ has shape } U\}$ to be the set of index shapes which describe row and column indices of $\Lambda$.
		\item We define $w_{max}$ to be $w_{max} = \max{\{w(U):U \in \mathcal{I}(\Lambda)\}}$.
		\item With our simplifying assumptions, we define $\mathcal{I}_{mid}$ to be $\mathcal{I}_{mid} = \{U: |U| \leq w_{max}\}$
		\item[3*.] In general, we define $\mathcal{I}_{mid}$ to be $\mathcal{I}_{mid} = \{U: w(U) \leq w_{max}, \forall U_i \in U, p_i = 1\}$
	\end{enumerate}
\end{definition}
We also need to define the sets of shapes which can appear when analyzing $\Lambda$.
\begin{definition}[Truncation Parameters]
	Given a moment matrix $\Lambda = \sum_{\alpha}{\lambda_{\alpha}M_{\alpha}}$, we define $D_V,D_E$ to be the smallest natural numbers such that for all shapes $\alpha$ such that $\lambda_{\alpha} \neq 0$, decomposing $\alpha$ as $\alpha = \sigma \circ \tau \circ {\sigma'}^T$,
	\begin{enumerate}
		\item $|V(\sigma)| \leq D_V$, $|V(\tau)| \leq D_V$, and $|V(\sigma')| \leq D_V$.
		\item[2.*] For all edges $e \in E(\sigma) \cup E(\tau) \cup E(\sigma')$, $l_e \leq D_E$.
	\end{enumerate}
\end{definition}
\begin{remark}
	Under our simplifying assumptions, all edges have label $1$ so we will take $D_E = 1$ and ignore conditions involving $D_E$.
\end{remark}
\begin{definition}
	Given a moment matrix $\Lambda$, we define the following sets of shapes:
	\begin{enumerate}
		\item $\mathcal{L} = \{\sigma: \sigma \text{ is a left shape}, U_{\sigma} \in \mathcal{I}(\Lambda), V_{\sigma} \in \mathcal{I}_{mid}, |V(\sigma)| \leq D_V, \forall e \in E(\sigma), l_e \leq D_E\}$
		\item Given $V \in \mathcal{I}_{mid}$, we define $\mathcal{L}_V = \{\sigma \in \mathcal{L}: V_{\sigma} \equiv V\}$
		\item Given $U \in \mathcal{I}_{mid}$, we define $\mathcal{M}_U = \{\tau: \tau \text{ is a non-trivial proper middle shape}, U_{\tau} \equiv V_{\tau} \equiv U,
		|V(\tau)| \leq D_V, \forall e \in E(\tau), l_e \leq D_E\}$
	\end{enumerate}
\end{definition}
\begin{definition}
	Given a moment matrix $\Lambda$, we define a $\Lambda$-coefficient matrix (which we call a coefficient matrix for brevity) to be a matrix whose rows and columns are indexed by left shapes $\sigma,\sigma' \in \mathcal{L}$.

	We say that a coefficient matrix $H$ is SOS-symmetric if $H(\sigma,\sigma')$ is invariant under permuting the vertices of $U_{\sigma}$ and permuting the vertices of $U_{\sigma'}$ (*more precisely, for the general case we permute the vertices within each index shape piece of $U_{\sigma}$ and permute the vertices within each index shape piece of $U_{\sigma'}$).
\end{definition}
\begin{definition}
	Given a shape $\tau$, we say that a coefficient matrix $H$ is a $\tau$-coefficient matrix if $H(\sigma,\sigma') = 0$ whenever $V_{\sigma} \not\equiv U_{\tau}$ or $V_{\tau} \not\equiv U_{{\sigma'}^T}$.
\end{definition}
\begin{definition}
	Given an index shape $U$, we define $Id_{U}$ to be the shape with $U_{Id_{U}} = V_{Id_{U}} = U$, no other vertices, and no edges.
\end{definition}
Given a shape $\tau$ and a $\tau$-coefficient matrix $H$, we create two different matrix-valued functions, $M^{fact}_{\tau}(H)$ and $M^{orth}_{\tau}(H)$. As we will see, we can express $\Lambda$ in terms of $M^{orth}$ but to show PSDness we will need to shift to $M^{fact}$. We analyze the difference betweem $M^{fact}$ and $M^{orth}$ in subsections \ref{qualitativeintersectionpatternssection}, \ref{intersectiontermanalysissection}, and \ref{boundingdifferencesection}.
\begin{definition}
	Given a shape $\tau$ and a $\tau$-coefficient matrix $H$, define
	\[
	M^{fact}_{\tau}(H) = \sum_{\sigma \in \mathcal{L}_{U_{\tau}},\sigma' \in \mathcal{L}_{V_{\tau}}}{H(\sigma,\sigma')M_{\sigma}M_{\tau}M_{\sigma'}^T}
	\]
\end{definition}
\begin{proposition}
	For all $A$ and $B$ with shapes in $\mathcal{I}(\Lambda)$,
	\begin{align*}
		&\left(M^{fact}_{\tau}(H)\right)(A,B) = \\
		&\sum_{\sigma \in \mathcal{L}_{U_{\tau}},\sigma' \in \mathcal{L}_{V_{\tau}}}{H(\sigma,\sigma')\sum_{A',B'}{
				\sum_{R_1 \in \mathcal{R}(\sigma,A,A'), R_2 \in \mathcal{R}(\tau,A',B'), \atop R_3 \in \mathcal{R}({\sigma'}^T,B',B)}M_{R_1}(A,A')M_{R_2}(A',B')M_{R_3}(B',B)}}
	\end{align*}
\end{proposition}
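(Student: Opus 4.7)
The plan is to prove this by unfolding the definition of $M^{fact}_\tau(H)$ entrywise and substituting in the ribbon-sum expansion of each graph matrix. No clever ideas are required; the work is purely bookkeeping and making sure the intermediate matrix indices are summed over the correct sets.

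First I would evaluate both sides at the entry $(A,B)$. By linearity of the matrix operations with respect to the scalar coefficients $H(\sigma,\sigma')$, the left-hand side equals
\[
\left(M^{fact}_\tau(H)\right)(A,B) \;=\; \sum_{\sigma \in \mathcal{L}_{U_\tau},\, \sigma' \in \mathcal{L}_{V_\tau}} H(\sigma,\sigma')\cdot \bigl(M_\sigma M_\tau M_{\sigma'}^{T}\bigr)(A,B).
\]
Next, I would apply the standard definition of matrix multiplication to the inner product of three matrix-valued functions, obtaining
\[
\bigl(M_\sigma M_\tau M_{\sigma'}^T\bigr)(A,B) \;=\; \sum_{A',B'} M_\sigma(A,A')\, M_\tau(A',B')\, M_{\sigma'}^T(B',B),
\]
where the sum ranges over matrix indices $A'$ whose shape is $V_\sigma \equiv U_\tau$ and $B'$ whose shape is $V_\tau \equiv U_{{\sigma'}^T}$ (other choices of $A',B'$ contribute zero by the definition of $M_\alpha$).

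Now I would plug in the definition $M_\alpha(X,Y) = \sum_{R \in \mathcal{R}(\alpha,X,Y)} \chi_R$ for each of the three graph matrices. For the factor $M_{\sigma'}^T(B',B)$, I would use that matrix transposition of the matrix-valued function $M_{\sigma'}$ coincides with the graph matrix of the transposed shape, i.e.\ $M_{\sigma'}^T = M_{{\sigma'}^T}$, so that $M_{\sigma'}^T(B',B)$ expands as a sum over $R_3 \in \mathcal{R}({\sigma'}^T,B',B)$. Substituting all three ribbon-sum expansions and writing each $\chi_{R_i}$ as $M_{R_i}$ evaluated at the appropriate entry yields exactly
\[
\sum_{A',B'}\ \sum_{\substack{R_1 \in \mathcal{R}(\sigma,A,A'),\, R_2 \in \mathcal{R}(\tau,A',B'),\\ R_3 \in \mathcal{R}({\sigma'}^T,B',B)}} M_{R_1}(A,A')\,M_{R_2}(A',B')\,M_{R_3}(B',B).
\]
Combining this with the outer sum over $\sigma,\sigma'$ and the coefficients $H(\sigma,\sigma')$ gives the claimed identity.

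The only point where one has to be a little careful is the identification of $M_{\sigma'}^T$ with $M_{{\sigma'}^T}$ and the corresponding bijection between ribbons $R' \in \mathcal{R}(\sigma',B,B')$ and their transposes $R_3 = (R')^T \in \mathcal{R}({\sigma'}^T,B',B)$, which preserves the Fourier character $\chi_{R'} = \chi_{R_3}$. Once that is noted, the rest is a direct expansion, so I do not expect any genuine obstacle.
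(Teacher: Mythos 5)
Your proof is correct and takes exactly the route the paper intends: the paper states this proposition without proof because it is a direct unfolding of the definitions of $M^{fact}_{\tau}(H)$, matrix multiplication, and the ribbon-sum expansion $M_{\alpha} = \sum_{R}M_{R}$. Your remark about the identification $M_{\sigma'}^{T} = M_{{\sigma'}^{T}}$ is the one place that requires care, and you handle it correctly.
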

If $R_1,R_2,R_3$ are properly composable then $R = R_1 \circ R_2 \circ R_3$ has the expected shape $\sigma \circ \tau \circ {\sigma'}^T$. Otherwise, $R_1 \circ R_2 \circ R_3$ will have a different shape. We define $M^{orth}_{\tau}(H)$ to be the same sum as $M^{fact}_{\tau}(H)$ except that it is restricted to properly composable ribbons $R_1,R_2,R_3$.
\begin{definition}
	We define $M^{orth}_{\tau}(H)$ so that for all $A$ and $B$ with shapes in $\mathcal{I}(\Lambda)$,
	\begin{align*}
		&\left(M^{orth}_{\tau}(H)\right)(A,B) \\
		&= \sum_{\sigma \in \mathcal{L}_{U_{\tau}},\sigma' \in \mathcal{L}_{V_{\tau}}}{H(\sigma,\sigma')\sum_{A',B'}{
				\sum_{R_1 \in \mathcal{R}(\sigma,A,A'), R_2 \in \mathcal{R}(\tau,A',B'), \atop {R_3 \in \mathcal{R}({\sigma'}^T,B',B), R_1,R_2,R_3 \text{ are properly composable}}}M_{R_1}(A,A')M_{R_2}(A',B')M_{R_3}(B',B)}} \\
		&=  \sum_{\sigma \in \mathcal{L}_{U_{\tau}},\sigma' \in \mathcal{L}_{V_{\tau}}}{H(\sigma,\sigma')\sum_{A',B'}{
				\sum_{R_1 \in \mathcal{R}(\sigma,A,A'), R_2 \in \mathcal{R}(\tau,A',B'), \atop {R_3 \in \mathcal{R}({\sigma'}^T,B',B), R_1,R_2,R_3 \text{ are properly composable}}}M_{R_1 \circ R_2 \circ R_3}(A,B)}}
	\end{align*}
\end{definition}
It would be nice if we had that $M^{orth}_{\tau}(H) = \sum_{\sigma \in \mathcal{R}_{U_{\tau}},\sigma' \in \mathcal{R}_{V_{\tau}}}{H(\sigma,\sigma')M_{\sigma \circ \tau \circ {\sigma'}^T}}$. However, this is not quite correct because there is an additional term related to automorphism groups.
\begin{definition}
	Given a shape $\alpha$, define $Aut(\alpha)$ to be the set of mappings from $\alpha$ to itself which keep $U_{\alpha}$ and $V_{\alpha}$ fixed.
\end{definition}

\begin{example}
Consider the shape $\sigma$ where $U_{\sigma} = (u_1,u_2,u_3)$, $V_{\sigma} = (v_1,v_2,v_3)$, and $V(\sigma) = U_{\sigma} \cup V_{\sigma} \cup \{w_1,w_2,w_3\}$ with edges
\begin{align*} E(\alpha) = &\{(u_1,w_1),(u_2,w_1),(u_3,w_1),(u_1,w_2),(u_2,w_2),(u_3,w_2),(u_1,w_3),(u_2,w_3),(u_3,w_3)\} \\
&\cup \{(w_1,v_1), (w_1,v_2), (w_2,v_1), (w_2,v_2),(w_3,v_1), (w_3,v_2)\}
\end{align*}
where all edges have label $1$. Then, $Aut(\sigma) = Aut(\sigma^{T}) = S_3$ and $Aut(\sigma \circ \sigma^{T}) = S_3 \times S_2 \times S_3$. Note that in this case $Aut(\sigma \circ \sigma^{T})/(Aut(\sigma) \times Aut(\sigma^{T})) = S_2$. The last computation will be useful for the definition that follows.
\end{example}

\begin{definition}
	Given composable shapes $\sigma,\tau,{\sigma'}^T$, we define
	\[
	Decomp(\sigma,\tau,\sigma') = Aut(\sigma \circ \tau \circ {\sigma'})/(Aut(\sigma) \times Aut(\tau) \times Aut({\sigma'}^T))
	\]
\end{definition}
\begin{remark}
	Each element $\pi \in Decomp(\sigma,\tau,\sigma')$ decomposes $\sigma \circ \tau \circ {\sigma'}^T$ into $\sigma$, $\tau$, and ${\sigma'}^T$ by specifying copies $\pi(\sigma)$, $\pi(\tau)$, $\pi({\sigma'}^T)$ of $\sigma$, $\tau$, and ${\sigma'}^T$ such that $\pi(\sigma) \circ \pi(\tau) \circ \pi({\sigma'}^T) = \pi(\sigma \circ \tau \circ {\sigma'}^T) = \sigma \circ \tau \circ {\sigma'}^T$. Thus, $|Decomp(\sigma,\tau,\sigma')|$ is the number of ways to decompose $\sigma \circ \tau \circ {\sigma'}^T$ into $\sigma$, $\tau$, and ${\sigma'}^T$.
\end{remark}
\begin{lemma}\label{lm:morthsimplereexpression}
	\[
	M^{orth}_{\tau}(H) = \sum_{\sigma \in \mathcal{L}_{U_{\tau}},\sigma' \in \mathcal{L}_{V_{\tau}}}{H(\sigma,\sigma')|Decomp(\sigma,\tau,{\sigma'}^T)|M_{\sigma \circ \tau \circ {\sigma'}^T}}
	\]
\end{lemma}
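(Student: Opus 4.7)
The plan is to start from the definition of $M^{orth}_\tau(H)(A,B)$ and regroup the sum over properly composable triples $(R_1,R_2,R_3)$ according to the composite ribbon $R = R_1 \circ R_2 \circ R_3$. Because the triples are properly composable, we have $M_{R_1}(A,A')M_{R_2}(A',B')M_{R_3}(B',B) = M_{R_1 \circ R_2 \circ R_3}(A,B)$, and the composite $R$ has shape $\sigma \circ \tau \circ {\sigma'}^T$ with $A_R = A$, $B_R = B$. Conversely, every ribbon $R \in \mathcal{R}(\sigma \circ \tau \circ {\sigma'}^T, A, B)$ arises this way, since any realizing injection $\phi: V(\sigma \circ \tau \circ {\sigma'}^T) \to [n]$ of $R$ restricts to realizations of each shape piece, and the resulting $(R_1,R_2,R_3)$ are automatically properly composable because $\phi$ is injective on the whole vertex set (the intersections are forced to be exactly the shared separators).

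Next I would fix such an $R$ and count the number of ordered triples $(R_1,R_2,R_3)$ of shapes $(\sigma,\tau,{\sigma'}^T)$ whose composition is $R$. I claim this count equals $|Decomp(\sigma,\tau,{\sigma'}^T)|$ and does not depend on $R$. To see this, observe that each such triple is determined by an ordered partition of $V(R)$ and $E(R)$ into left/middle/right pieces (with the separators shared according to the boundary conventions), and via $\phi^{-1}$ this corresponds exactly to an ordered decomposition of the shape $\sigma \circ \tau \circ {\sigma'}^T$ into pieces isomorphic to $\sigma$, $\tau$, and ${\sigma'}^T$ respectively. The group $Aut(\sigma \circ \tau \circ {\sigma'}^T)$ acts transitively on such decompositions (any two decompositions differ by an automorphism of the composite shape that fixes $U,V$), while the stabilizer of a given decomposition is $Aut(\sigma) \times Aut(\tau) \times Aut({\sigma'}^T)$ (internal relabelings of each piece leave the ordered decomposition fixed). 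Hence the number of decompositions is precisely the index $|Aut(\sigma \circ \tau \circ {\sigma'}^T)/(Aut(\sigma) \times Aut(\tau) \times Aut({\sigma'}^T))| = |Decomp(\sigma,\tau,{\sigma'}^T)|$.

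Combining these steps and exchanging the order of summation gives
\[
M^{orth}_\tau(H)(A,B) = \sum_{\sigma \in \mathcal{L}_{U_\tau},\, \sigma' \in \mathcal{L}_{V_\tau}} H(\sigma,\sigma') \cdot |Decomp(\sigma,\tau,{\sigma'}^T)| \cdot \!\!\!\!\!\!\sum_{R \in \mathcal{R}(\sigma \circ \tau \circ {\sigma'}^T, A, B)}\!\!\!\!\!\!\chi_R(X),
\]
and the inner sum is exactly $M_{\sigma \circ \tau \circ {\sigma'}^T}(A,B)$, which yields the lemma.

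The main obstacle I anticipate is the bookkeeping around the orderings $O_S, O_T$ implicit in the left/middle/right decomposition: one must check that restricting $\phi$ to each piece produces shapes of the correct ordered type $\sigma$, $\tau$, ${\sigma'}^T$ (rather than permutations of them) and that all $|Decomp|$ such orderings are genuinely realized as distinct properly composable triples when $\phi$ is injective. Once this bijective correspondence between triples of ribbons and (decomposition, composite ribbon) pairs is nailed down, the orbit-stabilizer count is essentially forced.
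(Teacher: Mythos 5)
Your proof is correct and takes essentially the same approach as the paper's: both regroup the properly composable triples by their composite ribbon $R$ and then count the triples with a fixed composite as $|Decomp(\sigma,\tau,{\sigma'}^T)|$. The paper phrases the count as an explicit bijection between triples and pairs $(R,\pi)$ via canonical realization maps (Appendix B), while you phrase it as orbit-stabilizer on decompositions of $\sigma \circ \tau \circ {\sigma'}^T$; these are the same count, and the separator-ordering concern you flag is handled automatically because a decomposition records the ordered separator tuples and the stabilizer $Aut(\sigma)\times Aut(\tau)\times Aut({\sigma'}^T)$ fixes them pointwise.
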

\begin{proof}[Proof sketch]
	Observe that there is a bijection between ribbons $R$ with shape $\sigma \circ \tau \circ {\sigma'}^T$ together with an element $\pi \in Decomp(\sigma,\tau,\sigma')$ and triples of ribbons $(R_1,R_2,R_3)$ such that
	\begin{enumerate}
		\item $R_1,R_2,R_3$ have shapes $\sigma$, $\tau$, and ${\sigma'}^T$, respectively.
		\item $V(R_1) \cap V(R_2) = A_{R_2} = B_{R_1}$,  $V(R_2) \cap V(R_3) = A_{R_3} = B_{R_2}$, and $V(R_1) \cap V(R_3) = A_{R_2} \cap B_{R_2}$
	\end{enumerate}
	To see this, note that given such ribbons $R_1,R_2,R_3$, the ribbon $R = R_1 \circ R_2 \circ R_3$ has shape $\sigma \circ \tau \circ {\sigma'}^T$ and the ribbons $R_1,R_2,R_3$ specify a decomposition of $\sigma \circ \tau \circ {\sigma'}^T$ into $\sigma$, $\tau$, and ${\sigma'}^T$.

	Conversely, given $R$ and an element $\pi \in Decomp(\sigma,\tau,\sigma')$, $\pi$ specifies how to decompose $R$ into ribbons $R_1,R_2,R_3$ of shapes $\sigma$, $\tau$, and ${\sigma'}^T$.

	For a more rigorous proof, see Appendix \ref{canonicalmapsection}.
\end{proof}
\begin{remark}
	As this lemma shows, we have to be very careful about symmetry groups in our analysis. For accuracy, it is safest to check that the coefficients for each individual ribbon match.
\end{remark}
Given a matrix-valued function $\Lambda$, we can associate coefficient matrices to $\Lambda$ as follows:
\begin{definition}
	Given a matrix-valued function $\Lambda = \sum_{\alpha: \alpha \text{ is proper}}{\lambda_{\alpha}M_{\alpha}}$,
	\begin{enumerate}
		\item For each index shape $U \in \mathcal{I}_{mid}$ and every $\sigma,\sigma' \in \mathcal{L}_{U}$, we take $H_{Id_U}(\sigma,\sigma') = \frac{1}{|Aut(U)|}\lambda_{\sigma \circ {\sigma'}^T}$
		\item For each $U \in \mathcal{I}_{mid}$, $\tau \in \mathcal{M}_U$ and $\sigma, \sigma' \in \mathcal{L}_{U}$, we take
		$H_{\tau}(\sigma,\sigma') = \frac{1}{|Aut(U_{\tau})|\cdot|Aut(V_{\tau})|}\lambda_{\sigma \circ \tau \circ {\sigma'}^T}$
	\end{enumerate}
\end{definition}
\begin{lemma}\label{lm:determiningcoefficientmatrices}
	$\Lambda = \sum_{U \in \mathcal{I}_{mid}}{M^{orth}_{Id_U}(H_{Id_U})} + \sum_{U \in \mathcal{I}_{mid}}{\sum_{\tau \in \mathcal{M}_U}{M^{orth}_{\tau}(H_{\tau})}}$
\end{lemma}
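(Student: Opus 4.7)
The plan is to expand the right-hand side using Lemma \ref{lm:morthsimplereexpression} and then match coefficients of $M_\alpha$ with the left-hand side for each proper shape $\alpha$. First I would fix a proper shape $\alpha$ and note that by the discussion of leftmost/rightmost minimum vertex separators, $\alpha$ admits a decomposition $\alpha = \sigma \circ \tau \circ {\sigma'}^T$ into a left shape, a proper middle shape, and a right shape. The middle part $\tau$ is either trivial, in which case (after restricting the orderings $O_S,O_T$ of the leftmost and rightmost separators to agree) it has the form $Id_U$ for some $U \in \mathcal{I}_{mid}$, or it lies in $\mathcal{M}_{U}$ for some $U$. In the first case $\alpha$ contributes only to the $M^{orth}_{Id_U}$ summand, and in the second case only to the $M^{orth}_{\tau}$ summand.

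Next I would apply Lemma \ref{lm:morthsimplereexpression} to obtain
\[
M^{orth}_{Id_U}(H_{Id_U}) = \sum_{\sigma,\sigma' \in \mathcal{L}_U} \frac{|\mathrm{Decomp}(\sigma,Id_U,{\sigma'}^T)|}{|Aut(U)|}\,\lambda_{\sigma \circ {\sigma'}^T}\,M_{\sigma \circ {\sigma'}^T},
\]
\[
M^{orth}_{\tau}(H_{\tau}) = \sum_{\sigma \in \mathcal{L}_{U_\tau},\,\sigma' \in \mathcal{L}_{V_\tau}} \frac{|\mathrm{Decomp}(\sigma,\tau,{\sigma'}^T)|}{|Aut(U_\tau)|\cdot |Aut(V_\tau)|}\,\lambda_{\sigma \circ \tau \circ {\sigma'}^T}\,M_{\sigma \circ \tau \circ {\sigma'}^T}.
\]
Then for a fixed proper shape $\alpha$ with middle part $\tau_\alpha$, I would collect all pairs $(\sigma,\sigma')$ of left shapes (together with $\tau_\alpha$ when nontrivial) whose composition reproduces $\alpha$, and show the total coefficient of $M_\alpha$ reduces to $\lambda_\alpha$.

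The crux of the argument is the combinatorial identity that says exactly one unit of $\lambda_\alpha$ survives after this bookkeeping. The key observation is that distinct pairs $(\sigma,\sigma')$ (and, in the trivial case, distinct orderings $O_S$) which give $\sigma \circ \tau \circ {\sigma'}^T = \alpha$ are in bijection with the cosets in $\mathrm{Decomp}(\sigma,\tau,{\sigma'}^T)$ together with the $Aut$ action on the separators, so the counts in the numerators and the normalizers $|Aut(U)|$, $|Aut(U_\tau)|\cdot|Aut(V_\tau)|$ cancel. More concretely, the orbit-stabilizer theorem applied to the action of $Aut(U_\tau) \times Aut(V_\tau)$ (resp.\ $Aut(U)$ in the trivial case) on pairs of left shapes gluing to $\alpha$ along $\tau_\alpha$ should give an orbit count that, multiplied by $|\mathrm{Decomp}(\sigma,\tau,{\sigma'}^T)|$, exactly equals $|Aut(U_\tau)|\cdot|Aut(V_\tau)|$ (resp.\ $|Aut(U)|$). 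This is the main obstacle: one must carefully keep track of how the automorphism groups of $\sigma$, $\tau$, ${\sigma'}^T$, and $\alpha$ interact so that the decomposition factor in Lemma \ref{lm:morthsimplereexpression} is precisely what is needed to cancel the normalization in the definitions of $H_{Id_U}$ and $H_\tau$.

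Finally, I would confirm that every proper $\alpha$ is accounted for exactly once (uniqueness of the leftmost/rightmost minimum separator decomposition, proved in \cref{separatorswelldefinedsection}), and that improper shapes do not appear because the $\lambda_\alpha$ with $\alpha$ improper were zero by assumption while the $M^{orth}_\tau(H_\tau)$ only involve properly composable ribbons and hence produce only proper $M_\alpha$. Combining these pieces yields the claimed equality.
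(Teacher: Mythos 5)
Your overall strategy—expand the right-hand side via Lemma \ref{lm:morthsimplereexpression} and match coefficients of $M_\alpha$—can be made to work, but it is genuinely different from the paper's proof, which checks coefficients ribbon by ribbon rather than shape by shape, and as written your sketch has a gap in the key combinatorial step. You assert that a fixed proper shape $\alpha$ has a single middle part $\tau_\alpha$ and hence contributes ``only to the $M^{orth}_{\tau_\alpha}$ summand,'' and you then only collect pairs $(\sigma,\sigma')$ with $\tau = \tau_\alpha$ fixed. This is not correct: the middle part extracted from $\alpha$ depends on the orderings $O_S,O_T$ of the separators, and different orderings can produce inequivalent shapes $\tau$. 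For a concrete example, take $\alpha$ with $U_\alpha = (u_1,u_2)$, $V_\alpha = (v_1,v_2)$ and the matching edges $(u_1,v_1),(u_2,v_2)$: depending on the orderings one obtains either the ``parallel'' or the ``crossing'' middle shape, which are not equivalent, and both lie in $\mathcal{M}_U$ and both contribute to the coefficient of $M_\alpha$. In that example there are in fact four triples $(\sigma,\tau,\sigma')$ composing to $\alpha$ (using the two inequivalent $\tau$'s and the two trivial-but-inequivalent left shapes $Id_U$ and its swap), each contributing $\lambda_\alpha/4$ with $|\mathrm{Decomp}| = 1$; the total is $\lambda_\alpha$ only after summing over all four, not over pairs $(\sigma,\sigma')$ for a single $\tau$. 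Your orbit-stabilizer formulation is also muddled: the distinct triples are not ``in bijection with the cosets in $\mathrm{Decomp}(\sigma,\tau,{\sigma'}^T)$'' (that quotient counts decompositions of a fixed ribbon into ribbons of a fixed triple of shapes); what you actually need is the identity $\sum_{(\sigma,\tau,\sigma') : \sigma\circ\tau\circ{\sigma'}^T\equiv\alpha}|\mathrm{Decomp}(\sigma,\tau,{\sigma'}^T)| = |Aut(U_\tau)|\cdot|Aut(V_\tau)|$, where the sum ranges over all triples (not just pairs with $\tau$ held fixed).

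The paper sidesteps all of this by working one level lower: for each ribbon $R$ of shape $\alpha$ it exhibits a direct bijection between orderings $O_S,O_T$ of the (uniquely determined) leftmost and rightmost minimum vertex separators and properly composable triples $(R_1,R_2,R_3)$ with $R_1\circ R_2\circ R_3 = R$. There are $|Aut(U_\tau)|\cdot|Aut(V_\tau)|$ such orderings, each decomposition contributes $H_\tau(\sigma,\sigma') = \lambda_\alpha / (|Aut(U_\tau)|\cdot|Aut(V_\tau)|)$, and so the coefficient of $M_R$ on both sides equals $\lambda_\alpha$. This ribbon-level check never needs to know how many inequivalent shape-triples exist or what $|\mathrm{Decomp}|$ is, because both of those quantities are artifacts of passing from ribbons to shapes. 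If you want to keep your shape-level route, you can—the combinatorics does close—but you must (i) sum over all $\tau\in\mathcal{M}_U$ composing to $\alpha$, not just one, and (ii) prove the displayed summation identity, which is essentially a re-derivation of the ribbon bijection; at that point the ribbon-level argument is simpler to state and verify.
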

\begin{proof}
	We check that the coefficients for each individual ribbon $R$ match. There are two cases to consider.

	If $R$ has shape $\alpha$ where $\alpha$ has a unique minimum vertex separator $S$, then there is a bijection between orderings $O_S$ for $S$ and pairs of ribbons $R_1,R_2$ such that $R_1 \circ R_2 = R$ and the shapes $\sigma,{\sigma'}^T$ of $R_1,R_2$ are left and right shapes respectively.

	To see this, observe that when we concatenate $R_1$ and $R_2$, this assigns the matrix index $B_{R_1} = A_{R_2}$ to $S$, which is equivalent to specifying an ordering $O_S$ for $S$. Conversely, given an ordering $O_S$ for $S$, we take $R_1$ to be the part of $R$ between $A_R$ and $(S,O_S)$ and we take $R_2$ to be the part of $R$ between $(S,O_S)$ and $B_R$.

	From this bijection, it follows that the coefficient of $M_R$ is $\lambda_{\alpha}$ on both sides of the equation.

	Similarly, if $R$ has shape $\alpha$ where $\alpha$ does not have a unique minimal vertex separator, then there is a bijection between orderings $O_S,O_T$ for the leftmost and rightmost minimum vertex separators $S,T$ of $R$ and triples of ribbons $R_1,R_2,R_3$ such that $R_1 \circ R_2 \circ R_3 = R$ and the shapes $\sigma,\tau,{\sigma'}^T$ of $R_1,R_2,R_3$ are left, proper middle, and right shapes respectively.

	To see this, observe that when we concatenate $R_1$, $R_2$, and $R_3$, this assigns the matrix index $B_{R_1} = A_{R_2}$ to $S$ and assigns the matrix index
	$B_{R_2} = A_{R_3}$ to $T$, which is equivalent to specifying orderings $O_S,O_T$ for $S,T$. Conversely, given orderings $O_S,O_T$ for $S,T$, we take $R_1$ to be the part of $R$ between $A_R$ and $(S,O_S)$, we take $R_2$ to be the part of $R$ between $(S,O_S)$ and $(T,O_T)$, and we take $R_2$ to be the part of $R$ between $(T,O_T)$ and $B_R$.

	From this bijection, it again follows that the coefficient of $M_R$ is $\lambda_{\alpha}$ on both sides of the equation.
\end{proof}
\subsection{The $-\gamma,-\gamma$ operation and qualitative theorem statement}
In the intersection term analysis (see subsections \ref{qualitativeintersectionpatternssection}, \ref{intersectiontermanalysissection}, and \ref{boundingdifferencesection}), we will need to further decompose left shapes $\sigma$ as $\sigma = \sigma_2 \circ \gamma$ where $\sigma_2$ and $\gamma$ are themselves left shapes. Accordingly, we make the following definitions
\begin{definition} Given a moment matrix $\Lambda$, we define the following sets of left shapes:
	\begin{enumerate}
		\item $\Gamma = \{\gamma: \gamma \text{ is a non-trivial left shape}, U_{\gamma}, V_{\gamma} \in \mathcal{I}_{mid}, |V(\gamma)| \leq D_V, \forall e \in E(\gamma), l_e \leq D_E\}$
		\item Given $U,V \in \mathcal{I}_{mid}$ such that $w(U) > w(V)$, define $\Gamma_{U,V} = \{\gamma \in \Gamma: U_{\gamma} \equiv U, V_{\gamma} \equiv V\}$.
		\item Given $U \in \mathcal{I}_{mid}$, define $\Gamma_{U,*} = \{\gamma \in \Gamma: U_{\gamma} \equiv U\}$
		\item Given $V \in \mathcal{I}_{mid}$, define $\Gamma_{*,V} = \{\gamma \in \Gamma: V_{\gamma} \equiv V\}$
	\end{enumerate}
	\begin{remark}
		Under our simplifying assumptions, $\Gamma$ is the same as $\mathcal{L}$ except that $\Gamma$ excludes the trivial shapes. In general, while $\mathcal{L}$ requires that $U_{\sigma} \in \mathcal{I}(\Lambda)$, $\Gamma$ requires that $U_{\gamma} \in \mathcal{I}_{mid}$. Note that $\mathcal{I}(\Lambda)$ and $\mathcal{I}_{mid}$ may be incomparable because
		\begin{enumerate}
			\item There may be index shapes $U \in \mathcal{I}_{mid}$ such that no matrix index of $\Lambda$ has shape $U$.
			\item All index shape pieces $U_i$ for index shapes $U \in \mathcal{I}_{mid}$ must have $p_i = 1$ while this is not the case for $\mathcal{I}(\Lambda)$.
		\end{enumerate}
	\end{remark}
\end{definition}
We now state our theorem qualitatively after giving one more definition.
\begin{definition}
	Given a shape $\tau$, left shapes $\gamma \in {\Gamma}_{*,U_{\tau}}$ and $\gamma' \in {\Gamma}_{*,V_{\tau}}$, and a $\tau$-coefficient matrix $H$, define $H^{-\gamma,\gamma'}$ to be the $(\gamma \circ \tau \circ {\gamma'}^T)$-coefficient matrix with entries
	\begin{enumerate}
		\item $H^{-\gamma,\gamma'}(\sigma,\sigma') = H(\sigma \circ \gamma,\sigma' \circ \gamma')$ if $|V(\sigma \circ \gamma)| \leq D_V$ and $|V(\sigma' \circ \gamma')| \leq D_V$.
		\item $H^{-\gamma,\gamma'}(\sigma,\sigma') = 0$ if $|V(\sigma \circ \gamma)| > D_V$ or $|V(\sigma' \circ \gamma')| > D_V$.
	\end{enumerate}
\end{definition}
\begin{remark}
	For the theorem, we will only need the case when $\gamma' = \gamma$
\end{remark}
Our qualitative theorem statement is as follows:
\begin{theorem}\label{thm:mainqualitative}
	Let $\Lambda = \sum_{U \in \mathcal{I}_{mid}}{M^{orth}_{Id_U}(H_{Id_U})} + \sum_{U \in \mathcal{I}_{mid}}{\sum_{\tau \in \mathcal{M}_U}{M^{orth}_{\tau}(H_{\tau})}}$ be an SOS-symmetric matrix valued function.

	There exist functions $f(\tau)$ and $f(\gamma)$ depending on $n$ and other parameters such that if the following conditions hold:
	\begin{enumerate}
		\item For all $U \in \mathcal{I}_{mid}$,  $H_{Id_{U}} \succeq 0$
		\item For all $U \in \mathcal{I}_{mid}$ and all $\tau \in \mathcal{M}_{U}$,
		\[
		\left[ {\begin{array}{cc}
				H_{Id_{U}} & f(\tau)H_{\tau} \\
				f(\tau)H^T_{\tau} & H_{Id_{U}}
		\end{array}} \right] \succeq 0
		\]
		\item For all $U,V \in \mathcal{I}_{mid}$ where $w(U) > w(V)$ and all $\gamma \in \Gamma_{U,V}$, $H^{-\gamma,\gamma}_{Id_{V}} \preceq f(\gamma)H_{Id_{U}}$
	\end{enumerate}
	then with high probability $\Lambda \succeq 0$
\end{theorem}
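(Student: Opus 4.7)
The plan is to establish $\Lambda \succeq 0$ in three stages: first, pass from the $M^{orth}$ representation to a factored $M^{fact}$ representation, collecting the difference as a sum of ``intersection terms''; second, use Conditions 1 and 2 to exhibit the main $M^{fact}$ contribution as manifestly PSD; third, use Condition 3 to recursively dominate the intersection error by a controlled multiple of the main contribution.

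For the first stage, I would start from the identity that $M_\sigma M_\tau M_{\sigma'^T}$ differs from $M_{\sigma \circ \tau \circ \sigma'^T}$ precisely by a sum over vertex-collision patterns in which some vertices of $V(\sigma)\setminus V_\sigma$, $V(\tau)\setminus (U_\tau\cup V_\tau)$, or $V(\sigma'^T)\setminus U_{\sigma'^T}$ coincide with each other beyond the forced overlaps. Each collision cleanly ``cuts off'' a nontrivial left sub-shape $\gamma$ from $\sigma$ and a left sub-shape $\gamma'$ from $\sigma'$, and the remaining term can be rewritten as $M_{\sigma_2} M_{\tau_2} M_{\sigma_2'^T}$ for strictly smaller $\sigma_2,\sigma_2'$ and a new proper middle shape $\tau_2$ obtained from $\gamma \circ \tau \circ {\gamma'}^T$ by identifying collided vertices. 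Grouping over the collision pattern and tracking the automorphism-group multiplicities of Lemma~\ref{lm:morthsimplereexpression}, I expect this re-expresses the total error as a sum $\sum M^{fact}_{\tau_2}\bigl(H_\tau^{-\gamma,\gamma'}\bigr)$, i.e.\ in the same shape as the original decomposition but with reduced complexity.

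For the second stage, Condition 1 together with a factorization $H_{Id_U} = \sum_i v_{U,i} v_{U,i}^T$ immediately gives
\[
M^{fact}_{Id_U}(H_{Id_U}) = \sum_i \Bigl(\sum_\sigma v_{U,i}(\sigma) M_\sigma\Bigr)\Bigl(\sum_\sigma v_{U,i}(\sigma) M_\sigma\Bigr)^T \succeq 0.
\]
For a nontrivial middle $\tau$, I would invoke the operator inequality $(aX - bYM_\tau)(aX - bYM_\tau)^T \succeq 0$ together with the graph-matrix norm bound $\|M_\tau\| \leq f(\tau)$ coming from \cite{AMP20} (this dictates the choice of $f(\tau)$). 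Condition~2, read as a Schur complement, is precisely the uniform matrix-level strengthening of this Cauchy--Schwarz estimate that simultaneously handles all $\sigma,\sigma'$, yielding $M^{fact}_\tau(H_\tau) + M^{fact}_\tau(H_\tau)^T \preceq 2 M^{fact}_{Id_U}(H_{Id_U})$ on the high-probability event that the norm bound holds.

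Stage three absorbs the intersection error produced in stage one. The coefficient matrices appearing there are of the form $H_{Id_V}^{-\gamma,\gamma}$ (or their middle-part analogues), and Condition~3 bounds these by $f(\gamma) H_{Id_U}$ where $f(\gamma)$ is chosen to dominate both the graph-matrix norm of the collision subgraph and the combinatorial count of realizations of the collision pattern. Since each reduction strictly decreases the total vertex count, the recursion terminates after at most $O(D_V)$ steps, and a union bound over the polynomially-many shapes whose norms we invoke yields $\Lambda \succeq 0$ with high probability. The hard part will be stage one: the intersection enumeration must track automorphism multiplicities \emph{exactly}, so that the accumulated error is expressible as the $-\gamma,\gamma'$ operation on the original coefficient matrices without spurious terms, and one must show the recursion does not lose a factor at each level. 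Formalizing ``intersection patterns'' and proving a generalized intersection-tradeoff lemma to package this bookkeeping will, I expect, be where the bulk of the technical effort lies.
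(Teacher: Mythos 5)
Your proposal follows essentially the same route the paper takes: a Cauchy--Schwarz/Schur-complement argument reducing the nontrivial-$\tau$ contributions to the trivial-$\tau$ ones (Conditions 1 and 2, via Lemma~\ref{lm:rankonetosquares} and Corollary~\ref{cor:factorizedmatrixbound}), combined with a recursive expansion of the $M^{orth}$-versus-$M^{fact}$ discrepancy via intersection patterns, absorbed by Condition 3 and an intersection-tradeoff lemma. The one imprecision to tighten when formalizing stage one is that a single expansion step produces an $M^{orth}_{\tau_P}$ term, not an $M^{fact}_{\tau_P}$ term, so the recursion must be fully unrolled (Theorem~\ref{thm:mfactmorthdifference} gives the iterated identity, with alternating signs controlled by a summability condition on $N(\gamma)$) before Condition 3 can be applied and everything gets repackaged as $H^{-\gamma,\gamma}_{Id_V}$; you correctly anticipate that the exact automorphism bookkeeping and the generalized tradeoff lemma are where the technical effort lands.
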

\begin{remark}
	Roughly speaking, conditions 1 and 2 give us an approximate PSD decomposition for the moment matrix $M$. Condition 3 comes from the intersection term analysis, which is the most technically intensive part of the proof.
\end{remark}
\subsection{Quantitative theorem statement}\label{quantitativetheoremstatementsection}
To state our theorem quantitatively, we will need a few more things. First, the conditions of the theorem will involve functions $B_{norm}(\alpha)$, $B(\gamma)$, $N(\gamma)$, and $c(\alpha)$. Roughly speaking, these functions will be used as follows in the analysis:
\begin{enumerate}
	\item $B_{norm}(\alpha)$ will bound the norms of the matrices $M_{\alpha}$
	\item $B(\gamma)$ and $N(\gamma)$ will help us bound the intersection terms (see Section \ref{boundingdifferencesection}).
	\item $c(\alpha)$ will help us sum over the possible $\gamma$ and $\tau$.
\end{enumerate}
Second, for technical reasons it turns out that comparing $H^{-\gamma,\gamma}_{Id_{V_{\gamma}}}$ to $H_{Id_{U_{\gamma}}}$ doesn't quite work. Instead, we compare $H^{-\gamma,\gamma}_{Id_{V_{\gamma}}}$ to a matrix $H'_{\gamma}$ of our choice where $H'_{\gamma}$ is very close to $H_{Id_{U_{\gamma}}}$ ($H'_{\gamma}$ will be the same as $H_{Id_{U_{\gamma}}}$ up to truncation error).
\begin{definition}
	Given a function $B_{norm}(\alpha)$, we define the distance $d_{\tau}(H_{\tau},H'_{\tau})$ between two $\tau$-coefficient matrices $H_{\tau}$ and $H'_{\tau}$ to be
	\[
	d_{\tau}(H_{\tau},H'_{\tau}) = \sum_{\sigma \in \mathcal{L}_{U_{\tau}},\sigma' \in \mathcal{L}_{V_{\tau}}}{|H'_{\tau}(\sigma,\sigma') - H_{\tau}(\sigma,\sigma')|B_{norm}(\sigma)B_{norm}(\tau)B_{norm}(\sigma')}
	\]
\end{definition}
Third, we need an SOS-symmetric analogue of the identity matrix.
\begin{definition}
	We define $Id_{Sym}$ to be the matrix such that
	\begin{enumerate}
		\item The rows and columns of $Id_{Sym}$ are indexed by the matrix indices $A,B$ whose index shape is in $\mathcal{I}(\Lambda)$.
		\item $Id_{Sym}(A,B) = 1$ if $p_A = p_B$ and $Id_{Sym}(A, B) = 0$ if $p_A \neq p_B$.
	\end{enumerate}
\end{definition}
\begin{proposition}
	If $M$ has SOS-symmetry and the rows and columns of $Id_{Sym}$ are indexed by matrix indices $A,B$ whose index shape is in $\mathcal{I}(\Lambda)$ then $M \preceq \norm{M}Id_{Sym}$
\end{proposition}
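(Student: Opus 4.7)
The plan is to exploit SOS-symmetry to reduce $M$ to a matrix with a simple block structure, then deduce the inequality by a spectral change-of-basis. First I would observe that SOS-symmetry of $M$, meaning invariance under permutations acting within each index shape piece of the row and column indices, implies that $M(A,B)$ depends only on the associated monomials $p_A$ and $p_B$. (Two matrix indices with the same index shape determine the same monomial iff one can be obtained from the other by permuting within each index shape piece, which follows directly from the definition of $p_A$ as $\prod_i p_{A_i}$.) Grouping matrix indices by their monomial, let $\pi_p$ denote the $\{0,1\}$-valued indicator vector for the set of matrix indices $A$ with $p_A = p$, and let $n_p = \|\pi_p\|^2$ be the (necessarily positive) number of such indices in the row/column space. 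Then one can write
\[
M \;=\; \sum_{p,q}{c_{p,q}\,\pi_p\pi_q^T}, \qquad Id_{Sym} \;=\; \sum_{p}{\pi_p\pi_p^T},
\]
where the first identity follows from SOS-symmetry and the second directly from the definition of $Id_{Sym}$.

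Next I would pass to the subspace $V = \mathrm{span}\{\pi_p\}$. Both $M$ and $Id_{Sym}$ annihilate $V^\perp$, so it suffices to compare them on $V$. Using the orthonormal basis $v_p := \pi_p/\sqrt{n_p}$ of $V$, $M$ is represented by the matrix $\tilde M$ with entries $\tilde M_{p,q} = c_{p,q}\sqrt{n_p n_q}$, while $Id_{Sym}$ is represented by the diagonal matrix $D$ with $D_{p,p} = n_p$. The operator norm of $M$ equals $\|\tilde M\|$ since $V^\perp \subseteq \ker M$, and on $V$ we have the trivial bound $\tilde M \preceq \|\tilde M\|\,I$. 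Since each $n_p \geq 1$, we have $D \succeq I$, so
\[
\tilde M \;\preceq\; \|\tilde M\|\,I \;\preceq\; \|\tilde M\|\,D,
\]
which lifts back to $M \preceq \|M\|\,Id_{Sym}$ on the full space (both sides vanishing on $V^\perp$).

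I do not expect any real obstacle here: the only point that deserves care is the verification that $M$ is block-constant on monomial classes, which is exactly the content of SOS-symmetry. Once that reduction is in place, the inequality becomes the entirely elementary observation that $D \succeq I$ whenever every nonempty monomial class has at least one representative in the index set.
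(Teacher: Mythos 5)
The paper states this proposition without proof, so there is no argument of the authors' to compare against; you are filling in an omitted verification, and your proof is correct. The essential point is exactly what you identify: SOS-symmetry forces $M$ to be block-constant over monomial classes, while $Id_{Sym}$ is the block all-ones matrix over those same classes; passing to the orthonormal basis $\pi_p/\sqrt{n_p}$ of the span $V$ of the indicators then reduces the inequality to the trivial chain $\mathrm{diag}(n_p) \succeq I \succeq \tilde M/\norm{\tilde M}$, with both sides vanishing on $V^\perp$. Two small points worth making explicit in a polished write-up: (i) the index $p$ should range only over monomials actually realized by some matrix index in the row/column set, so that $n_p \ge 1$ is guaranteed; and (ii) you implicitly use that two matrix indices sharing a monomial automatically share an index shape. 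This second fact does hold — an index shape cannot contain two pieces with the same type and power, so the decomposition of the indices of $A$ into pieces is recoverable from $p_A$ alone — but it is what lets your parenthetical remark, which is phrased for indices of the same index shape, cover the full reduction to block-constancy over monomial classes.
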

\begin{corollary}\label{distanceboundingcorollary}
	For all $\tau$ and all SOS-symmetric $\tau$-coefficient matrices $H_{\tau}$ and $H'_{\tau}$,
	\[
	M^{fact}_{\tau}(H'_{\tau}) + M^{fact}_{{\tau}^T}(H'_{{\tau}^T}) - M^{fact}_{\tau}(H_{\tau}) - M^{fact}_{{\tau}^T}(H_{{\tau}^T}) \preceq 2d_{\tau}(H_{\tau},H'_{\tau})Id_{Sym}
	\]
	Note that if $\tau$, $H_{\tau}$ and $H'_{\tau}$ are all symmetric then
	\[
	M^{fact}_{\tau}(H'_{\tau}) - M^{fact}_{\tau}(H_{\tau}) \preceq d_{\tau}(H_{\tau},H'_{\tau})Id_{Sym}
	\]
\end{corollary}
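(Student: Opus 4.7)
The plan is to reduce the problem to bounding each individual term $\Delta(\sigma,\sigma')\bigl(M_\sigma M_\tau M_{\sigma'}^T + M_{\sigma'} M_{\tau^T} M_\sigma^T\bigr)$ in PSD order, where $\Delta = H'_\tau - H_\tau$. By linearity of $M^{fact}_\tau$ in its coefficient matrix argument, the left-hand side is exactly $M^{fact}_\tau(\Delta) + M^{fact}_{\tau^T}(\Delta^T)$, which expands as $\sum_{\sigma \in \mathcal{L}_{U_\tau}, \sigma' \in \mathcal{L}_{V_\tau}} \Delta(\sigma,\sigma')\bigl(M_\sigma M_\tau M_{\sigma'}^T + M_{\sigma'} M_{\tau^T} M_\sigma^T\bigr)$. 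The sign-symmetrized grouping of the $\tau$ and $\tau^T$ contributions is essential because it produces a manifestly self-adjoint quantity.

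The next step is a standard AM-GM style inequality in the PSD cone. For any real scalar $c$, any matrices $A, B, C$ of compatible dimensions, and any $t > 0$, expanding $\bigl(\sqrt{t}A - \mathrm{sgn}(c)\tfrac{1}{\sqrt{t}}BC^T\bigr)\bigl(\sqrt{t}A - \mathrm{sgn}(c)\tfrac{1}{\sqrt{t}}BC^T\bigr)^T \succeq 0$ yields
\[
c\bigl(ACB^T + BC^T A^T\bigr) \preceq |c|\bigl(tAA^T + \tfrac{1}{t}BC^TCB^T\bigr).
\]
Setting $A = M_\sigma$, $B = M_{\sigma'}$, $C = M_\tau$, $c = \Delta(\sigma,\sigma')$, and optimizing by choosing $t = B_{norm}(\sigma')B_{norm}(\tau)/B_{norm}(\sigma)$ reduces the bound to $2|c|\bigl(B_{norm}(\sigma)^2 \cdot \tfrac{1}{t}\cdot \ldots\bigr)$; more precisely, after substitution one obtains that the right-hand side of the PSD inequality is $\preceq |c|\bigl(t\,B_{norm}(\sigma)^2\,\mathrm{Id}_{Sym} + \tfrac{1}{t} B_{norm}(\sigma')^2 B_{norm}(\tau)^2\,\mathrm{Id}_{Sym}\bigr)$ using that $AA^T$ and $BC^TCB^T$ are SOS-symmetric PSD matrices (hence dominated by their operator norms times $\mathrm{Id}_{Sym}$), and their norms are at most $B_{norm}(\sigma)^2$ and $B_{norm}(\sigma')^2 B_{norm}(\tau)^2$ respectively. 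The optimized choice of $t$ gives exactly $2|c|\,B_{norm}(\sigma)B_{norm}(\tau)B_{norm}(\sigma')\,\mathrm{Id}_{Sym}$.

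Summing this PSD bound over $\sigma, \sigma'$ and recognizing the definition of $d_\tau(H_\tau, H'_\tau)$ yields the first inequality of the corollary with the factor of $2$. For the symmetric case $\tau = \tau^T$ with $H_\tau, H'_\tau$ symmetric, the two contributions $M^{fact}_\tau(\Delta)$ and $M^{fact}_{\tau^T}(\Delta^T)$ coincide, so the left-hand side of the general bound is twice $M^{fact}_\tau(H'_\tau) - M^{fact}_\tau(H_\tau)$, canceling the factor of $2$ on the right and yielding the second stated inequality.

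The main technical point to verify carefully is that the PSD matrices $M_\sigma M_\sigma^T$ and $M_{\sigma'}M_\tau^T M_\tau M_{\sigma'}^T$ are SOS-symmetric, so that the proposition $M \preceq \|M\|\,\mathrm{Id}_{Sym}$ applies; this follows because they are built as products of graph matrices averaged over all injective realizations, so they respect the required permutation invariance on row/column index pieces. Modulo this observation, the argument is a textbook AM-GM trick combined with the norm bounds on graph matrices.
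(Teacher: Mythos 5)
There is a genuine gap at the step you yourself flag as ``the main technical point to verify carefully.'' Your plan requires that $M_\sigma M_\sigma^T$ and $M_{\sigma'}M_{\tau}^T M_{\tau}M_{\sigma'}^T$ are SOS-symmetric so that the proposition $M \preceq \|M\|\,Id_{Sym}$ can be applied term by term after the AM-GM step, and you justify this by saying the graph matrices ``respect the required permutation invariance on row/column index pieces.'' This conflates two different symmetries. Graph matrices are $S_n$-invariant (they are averaged over injective realizations $\phi:V(\alpha)\to[n]$), but SOS-symmetry is a different condition: it asks that $M(A,B)$ depend only on the monomials $p_A, p_B$, i.e.\ that $M(\pi(A),\pi'(B)) = M(A,B)$ for permutations $\pi,\pi'$ acting \emph{within} a matrix index. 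Since $\phi(U_\sigma) = A_R$ is an equality of ordered tuples, $M_\sigma(\pi(A),B) = M_{\pi^{-1}\sigma}(A,B)$, which differs from $M_\sigma(A,B)$ unless $\pi$ is an automorphism of $\sigma$. Hence $M_\sigma M_\sigma^T$ generically has a nonzero component on the kernel of $Id_{Sym}$ (the subspace of vectors $v$ with $\sum_{A: p_A = p} v(A) = 0$), and on that kernel $v^T(M_\sigma M_\sigma^T)v = \|M_\sigma^T v\|^2 > 0$ while $v^T Id_{Sym} v = 0$. So $M_\sigma M_\sigma^T \preceq c\,Id_{Sym}$ fails for \emph{every} $c$, and the individual-term bound you need is false. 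Summing the AM-GM right-hand sides over $\sigma,\sigma'$ does not rescue this: even though $\Delta$ is SOS-symmetric, $\sum_\sigma c_\sigma M_\sigma M_\sigma^T$ transforms to $\sum_\sigma c_\sigma M_{\pi^{-1}\sigma} M_{\pi'^{-1}\sigma}^T$ under index permutations, which is not the same matrix.

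The fix is to \emph{not} decouple the two graph-matrix factors via AM-GM but to keep the full expression $X := M^{fact}_{\tau}(\Delta) + M^{fact}_{\tau^T}(\Delta^T)$, where $\Delta = H'_\tau - H_\tau$, intact. Because $\Delta$ is an SOS-symmetric coefficient matrix, a change of summation variable $\sigma \mapsto \pi\sigma$, $\sigma' \mapsto \pi'\sigma'$ shows that $X(A,B)$ depends only on $p_A, p_B$, i.e.\ $X$ \emph{is} SOS-symmetric (and it is symmetric by construction). Then bound $\|X\|$ by the triangle inequality and submultiplicativity of the operator norm directly: $\|X\| \le \sum_{\sigma,\sigma'} |\Delta(\sigma,\sigma')|\cdot 2\,\|M_\sigma\|\,\|M_\tau\|\,\|M_{\sigma'}\| \le 2\,d_\tau(H_\tau,H'_\tau)$ once the norm bounds $\|M_\alpha\|\le B_{norm}(\alpha)$ hold. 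Applying the proposition to the single SOS-symmetric matrix $X$ gives $X \preceq \|X\|\,Id_{Sym} \preceq 2\,d_\tau\,Id_{Sym}$. The AM-GM manipulation is not needed to get the factor of $2$ (the triangle inequality already supplies it from the $\tau$ and $\tau^T$ contributions), and it is exactly the step that destroys the SOS-symmetry you need at the end.
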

Finally, we need a few more definitions about shapes $\alpha$.
\begin{definition}[$\mathcal{M}'$]
	We define $\mathcal{M}'$ to be the set of all shapes $\alpha$ such that
	\begin{enumerate}
		\item[1.] $|V(\alpha)| \leq 3D_V$
		\item[2.*] $\forall e \in E(\alpha), l_e \leq D_E$
		\item[3.*] All edges $e \in E(\alpha)$ have multiplicity at most $3D_V$.
	\end{enumerate}
\end{definition}
\begin{definition}[$S_{\alpha}$]
	Given a shape $\alpha$, define $S_{\alpha}$ to be the leftmost minimum vertex separator of $\alpha$
\end{definition}
\begin{definition}[$I_{\alpha}$]
	Given a shape $\alpha$, define $I_{\alpha}$ to be the set of vertices in $V(\alpha) \setminus (U_{\alpha} \cup V_{\alpha})$ which are isolated.
\end{definition}
Our main theorem will require the choice of several functions and parameters $q, B_{vertex}, B_{edge}(e), B_{norm}(\al), B(\gam), N(\gam), c(\al)$ satisfying certain conditions. $B_{edge}$ is not needed in the simplified case. For simplicity, we defer the formal conditions to the next section.

\begin{definition}[$\eps$-feasible parameters]
    For $\eps > 0$, define $q, B_{vertex}, B_{edge}(e), B_{norm}(\al), B(\gam), N(\gam), c(\al)$ to be $\eps$-feasible parameters if they satisfy the conditions in \cref{maintheoremviaproperties}.
\end{definition}

For our applications, we can work with the parameters as given by the following lemma, justified in \cref{sec: choosing_funcs}.

\begin{lemma}
    For all $\epsilon > 0$, the parameters
	\begin{enumerate}
		\item $q = 3\left\lceil{{D_V}ln(n) + \frac{ln(\frac{1}{\epsilon})}{3} + {D_V}ln(5) + 3{D^2_V}ln(2)}\right\rceil$
		\item $B_{vertex} = 6{D_V}\sqrt[4]{2eq}$
		\item $B_{norm}(\alpha) = {B_{vertex}^{|V(\alpha) \setminus U_{\alpha}| + |V(\alpha) \setminus V_{\alpha}|}}n^{\frac{w(V(\alpha)) + w(I_{\alpha}) - w(S_{\alpha})}{2}}$
		\item $B(\gamma) = B_{vertex}^{|V(\gamma) \setminus U_{\gamma}| + |V(\gamma) \setminus V_{\gamma}|}n^{\frac{w(V(\gamma) \setminus U_{\gamma})}{2}}$
		\item $N(\gamma) = (3D_V)^{2|V(\gamma) \setminus V_{\gamma}| + |V(\gamma) \setminus U_{\gamma}|}$
		\item $c(\alpha) = 100(3D_V)^{|U_{\alpha} \setminus V_{\alpha}| + |V_{\alpha} \setminus U_{\alpha}| + 2|E(\alpha)|}2^{|V(\alpha) \setminus (U_{\alpha} \cup V_{\alpha})|}$
	\end{enumerate}
	are $\eps$-feasible.
\end{lemma}

\begin{remk}\label{remk: understanding_the_parameters}
    In our applications, we show SoS lower bounds for $n^{\eps}$ degrees of SoS, where input size is $n^{O(1)}$. In this setting, we take $D_V, D_E$ to be of the order of $n^{O(\eps)}$. Therefore, for simplicity, we can interpret the parameters as
    \[q = n^{O(\eps)}, B_{vertex} = n^{O(\eps)}, B_{norm}(\alpha) =n^{O(\eps)|V(\al)|}n^{\frac{w(V(\alpha)) + w(I_{\alpha}) - w(S_{\alpha})}{2}}\]
    \[B(\gamma) = n^{O(\eps)|V(\gam)|}n^{\frac{w(V(\gamma) \setminus U_{\gamma})}{2}}, N(\gamma) = n^{O(\eps)|V(\gam)|}, c(\alpha) = n^{O(\eps)|V(\al)|}\]
\end{remk}

We can now state our main theorem.
\begin{theorem}\label{simplifiedmaintheorem}
	Given the moment matrix $\Lambda = \sum_{U \in \mathcal{I}_{mid}}{M^{orth}_{Id_U}(H_{Id_U})} + \sum_{U \in \mathcal{I}_{mid}}{\sum_{\tau \in \mathcal{M}_U}{M^{orth}_{\tau}(H_{\tau})}}$,
	for all $\epsilon > 0$, if we take $\eps$-feasible parameters, and we have SOS-symmetric coefficient matrices $\{H'_{\gamma}: \gamma \in \Gamma\}$ such that the following conditions hold:
	\begin{enumerate}
		\item \psdmass For all $U \in \mathcal{I}_{mid}$,  $H_{Id_{U}} \succeq 0$
		\item \middleshapebounds For all $U \in \mathcal{I}_{mid}$ and $\tau \in \mathcal{M}_U$,
		\[
		\left[ {\begin{array}{cc}
				\frac{1}{|Aut(U)|c(\tau)}H_{Id_{U}} & B_{norm}(\tau)H_{\tau} \\
				B_{norm}(\tau)H^T_{\tau} & \frac{1}{|Aut(U)|c(\tau)}H_{Id_{U}}
		\end{array}} \right] \succeq 0
		\]
		\item \intersectionbounds For all $U,V \in \mathcal{I}_{mid}$ where $w(U) > w(V)$ and all $\gamma \in \Gamma_{U,V}$,
		\[
		c(\gamma)^2{N(\gamma)}^2{B(\gamma)^2}H^{-\gamma,\gamma}_{Id_{V}} \preceq H'_{\gamma}
		\]
	\end{enumerate}
	then with probability at least $1 - \epsilon$,
	\[
	\Lambda \succeq \frac{1}{2}\left(\sum_{U \in \mathcal{I}_{mid}}{M^{fact}_{Id_U}{(H_{Id_U})}}\right) - 3\left(\sum_{U \in \mathcal{I}}{\sum_{\gamma \in \Gamma_{U,*}}{\frac{d_{Id_{U}}(H'_{\gamma},H_{Id_{U}})}{|Aut(U)|c(\gamma)}}}\right)Id_{sym}
	\]
    \truncationbounds If it is also true that whenever $\norm{M_{\alpha}} \leq B_{norm}(\alpha)$ for all $\alpha \in \mathcal{M}'$,
	\[
	\sum_{U \in \mathcal{I}_{mid}}{M^{fact}_{Id_U}{(H_{Id_U})}} \succeq 6\left(\sum_{U \in \mathcal{I}}{\sum_{\gamma \in \Gamma_{U,*}}{\frac{d_{Id_{U}}(H'_{\gamma},H_{Id_{U}})}{|Aut(U)|c(\gamma)}}}\right)Id_{sym}
	\]
	then with probability at least $1 - \epsilon$, $\Lambda \succeq 0$.
\end{theorem}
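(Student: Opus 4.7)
The plan is to combine three ingredients: (i) high-probability norm bounds on graph matrices $M_\alpha$ with $\alpha\in\mathcal{M}'$, (ii) a ``factorization'' step that rewrites $\sum_U M^{orth}_{Id_U}(H_{Id_U}) + \sum_{U,\tau}M^{orth}_\tau(H_\tau)$ as a PSD square plus lower order terms, and (iii) an intersection-term analysis that charges the non-factorizable pieces back to the square. First I would invoke the graph-matrix norm bounds of \cite{AMP20}: with the choice of $q$ and $B_{vertex}$ given in the theorem, the trace-moment method yields that $\|M_\alpha\|\le B_{norm}(\alpha)$ simultaneously for every $\alpha\in\mathcal{M}'$ with probability at least $1-\epsilon$. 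From here the argument is deterministic, conditional on this event.

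Next I would pass from $M^{orth}$ to $M^{fact}$. By Lemma~\ref{lm:morthsimplereexpression} and the definition of $M^{fact}_\tau(H)$, the two differ by ``intersection terms'' corresponding to triples $(R_1,R_2,R_3)$ which fail to compose properly — i.e. whose combined vertex set collapses because of unintended coincidences. I would organize these coincidences by \emph{intersection patterns}: each pattern peels a left sub-shape $\gamma$ off of $\sigma$ (and symmetrically $\gamma'$ off of $\sigma'$), reducing the intersection term to a copy of $M^{fact}_{\tau'}(H^{-\gamma,\gamma'}_\tau)$ for some smaller middle shape $\tau'$. Iterating this decomposition and using Corollary~\ref{distanceboundingcorollary} together with the bookkeeping factors $N(\gamma), B(\gamma), c(\alpha)$ (which respectively count patterns, bound norms contributed by $\gamma$, and absorb the combinatorial sums over $\gamma$ and $\tau$), I would produce the inequality
\[
\Lambda \;\succeq\; \sum_{U}M^{fact}_{Id_U}(H_{Id_U}) + \sum_{U,\tau}M^{fact}_\tau(H_\tau) - \mathcal{E}\cdot Id_{sym},
\]
where $\mathcal{E}$ is controlled by the $d_{Id_U}(H'_\gamma,H_{Id_U})$ distances after Condition~3 is applied to dominate every $\gamma$-intersection term by a piece of the diagonal blocks.

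The final step is the PSD decomposition of $\sum_U M^{fact}_{Id_U}(H_{Id_U})$ minus the $\tau\neq Id$ pieces. Condition~1 gives $M^{fact}_{Id_U}(H_{Id_U})=\bigl(\sum_\sigma H_{Id_U}^{1/2}(\sigma,\cdot)M_\sigma\bigr)\bigl(\sum_\sigma H_{Id_U}^{1/2}(\sigma,\cdot)M_\sigma\bigr)^T\succeq 0$. For each proper middle $\tau$, the block-PSD Condition~2 is exactly what the ``$ab$-trick'' $(aM_\sigma\pm bM_{\sigma'}M_{\tau^T})(aM_\sigma\pm bM_{\sigma'}M_{\tau^T})^T\succeq 0$ needs, after substituting $\|M_\tau\|\le B_{norm}(\tau)$ from step (i); this yields $M^{fact}_\tau(H_\tau)+M^{fact}_{\tau^T}(H_{\tau^T})\succeq -\frac{2}{c(\tau)}\sum_U \frac{1}{|Aut(U)|}M^{fact}_{Id_U}(H_{Id_U})$. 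Summing over $\tau$, the constants $c(\tau)$ are designed so the geometric series leaves at least half of $\sum_U M^{fact}_{Id_U}(H_{Id_U})$ intact, yielding the displayed bound $\Lambda\succeq \tfrac12\sum_U M^{fact}_{Id_U}(H_{Id_U})-3(\cdots)Id_{sym}$. The ``if it is also true that\ldots'' clause then immediately upgrades this to $\Lambda\succeq 0$.

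The hard part will be the intersection-term accounting: one must simultaneously (a) verify that every improper composition can be uniquely reindexed as a $(-\gamma,\gamma')$ deformation of a smaller proper middle-part term, (b) check that the combinatorial multiplicity of intersection patterns is absorbed by $N(\gamma)$ and $c(\alpha)$, and (c) propagate the truncation cutoffs $D_V,D_E$ correctly, which is precisely where the auxiliary matrices $H'_\gamma$ (differing from $H_{Id_{U_\gamma}}$ only at truncated shapes) become necessary, giving rise to the $d_{Id_U}(H'_\gamma,H_{Id_U})$ error term.
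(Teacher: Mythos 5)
Your proposal follows essentially the same route as the paper: condition on the graph-matrix norm bounds from \cite{AMP20} holding (this accounts for the $1-\epsilon$ probability), then argue deterministically by (a) writing $\Lambda = \sum M^{fact} + (\sum M^{orth} - \sum M^{fact})$, (b) using the ``$ab$-trick'' and Condition~2 to show $\sum_U M^{fact}_{Id_U}(H_{Id_U}) + \sum_{U,\tau} M^{fact}_\tau(H_\tau) \succeq (1-2\epsilon')\sum_U M^{fact}_{Id_U}(H_{Id_U})$, (c) organizing the improper compositions by intersection patterns, peeling $\gamma,\gamma'$ off the left/right parts and iterating, with $N(\gamma), B(\gamma), c(\alpha)$ absorbing the combinatorial multiplicities, and (d) invoking Condition~3 together with $d_{Id_U}(H'_\gamma, H_{Id_U})$ to control the truncation mismatch. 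The only small imprecisions are that the intersection term produces $M^{fact}_{\tau_P}(v^{-\gamma}(w^{-\gamma'})^T)$ for a \emph{larger} shape $\tau_P$ (not a smaller middle shape), and that the $\tau\neq Id$ pieces are not subtracted but rather bounded from below by the $ab$-trick and shown to eat at most a $2\epsilon'$ fraction of $\sum_U M^{fact}_{Id_U}$; but neither affects the overall argument.
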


\subsubsection{General Main Theorem}\label{generalmaintheoremstatementsection}
Before stating the general main theorem, we need to modify a few definitions for $\alpha$ and give a few definitions for $\Omega$
\begin{definition}[$S_{\alpha,min}$ and $S_{\alpha,max}$]
	Given a shape $\alpha \in \mathcal{M}'$, define $S_{\alpha,min}$ to be the leftmost minimum vertex separator of $\alpha$ if all edges with multiplicity at least $2$ are deleted and define $S_{\alpha,max}$ to be the leftmost minimum vertex separator of $\alpha$ if all edges with multiplicity at least $2$ are present.
\end{definition}
\begin{definition}[General $I_{\alpha}$]
	Given a shape $\alpha$, define $I_{\alpha}$ to be the set of vertices in $V(\alpha) \setminus (U_{\alpha} \cup V_{\alpha})$ such that all edges incident with that vertex have multplicity at least $2$.
\end{definition}
\begin{definition}[$B_{\Omega}$]
	We take $B_{\Omega}(j)$ to be a non-decreasing function such that for all $j \in \mathbb{N}$, $E_{\Omega}[x^{j}] \leq B_{\Omega}(j)^{j}$
\end{definition}
\begin{definition}[$h^{+}_j$]
	For all $j$, we define $h^{+}_j$ to be the polynomial $h_j$ where we make all of the coefficients have positive sign.
\end{definition}
\begin{lemma}
If $\Omega = N(0,1)$ then we can take $B_{\Omega}(j) = \sqrt{j}$ and we have that
	\[
	h^{+}_j(x) \leq \frac{1}{\sqrt{j!}}(x^2 + j)^{\frac{j}{2}} \leq \left(\frac{e}{j}(x^2 + j)\right)^{\frac{j}{2}}
	\]
\end{lemma}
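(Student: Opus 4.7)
The plan is to treat this lemma as two essentially independent claims: first the moment bound $\E_{\Omega}[x^j] \le j^{j/2}$, and second the chain of inequalities on $h^+_j(x)$. The moment bound is classical: for $\Omega = \mathcal{N}(0,1)$ we have $\E[x^j] = 0$ when $j$ is odd and $\E[x^j] = (j-1)!!$ when $j$ is even, and in the even case $(j-1)!!$ is a product of $j/2$ factors each at most $j$, so it is bounded by $j^{j/2} = (\sqrt{j})^j$. Thus $B_{\Omega}(j) = \sqrt{j}$ works, and moreover $B_{\Omega}$ is non-decreasing, as required.

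For the first inequality $h^+_j(x) \le \frac{1}{\sqrt{j!}}(x^2+j)^{j/2}$, the approach is to expand both sides explicitly and compare coefficients of $x^{j-2k}$. Recall that the probabilists' Hermite polynomial is $H_j(x) = \sum_{k=0}^{\lfloor j/2 \rfloor} \frac{(-1)^k j!}{k!\,(j-2k)!\,2^k}\, x^{j-2k}$, so $h_j = H_j/\sqrt{j!}$ and the positivized version is
\[
h^+_j(x) = \frac{1}{\sqrt{j!}} \sum_{k=0}^{\lfloor j/2 \rfloor} \frac{j!}{k!\,(j-2k)!\,2^k}\, x^{j-2k}.
\]
I would first handle the case $j$ even, where $(x^2+j)^{j/2}$ expands as $\sum_{k=0}^{j/2} \binom{j/2}{k} j^k x^{j-2k}$, so it suffices to verify the term-wise inequality $\frac{j!}{k!(j-2k)!2^k} \le \binom{j/2}{k} j^k$. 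After cancelling the $k!$'s and rearranging, this reduces to $\prod_{i=0}^{k-1}(j-2i-1) \le j^k$, which is immediate since each factor is at most $j$. For odd $j$ the target $(x^2+j)^{j/2}$ is no longer polynomial, but one can either dominate $h^+_j$ by $h^+_{j+1}$ and invoke the even case, or more directly lower bound $(x^2+j)^{j/2} \ge x^{j-2k}\cdot j^{(2k+1)/2}\cdot\binom{(j-1)/2}{k}^{1/2}$-type expressions term by term using the generalized binomial series for $(1+y)^{j/2}$, whose coefficients stay positive. I expect the odd case to be the main technical nuisance — making sure the term-wise domination still holds when $j/2$ is a half-integer — but once done for even $j$ the extension is mechanical.

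For the second inequality $\frac{1}{\sqrt{j!}}(x^2+j)^{j/2} \le \bigl(\frac{e}{j}(x^2+j)\bigr)^{j/2}$, after cancelling the common factor $(x^2+j)^{j/2}$ this is the assertion $(j/e)^{j/2} \le \sqrt{j!}$, equivalently $(j/e)^j \le j!$. This is an immediate consequence of Stirling's bound $j! \ge \sqrt{2\pi j}\,(j/e)^j \ge (j/e)^j$, which handles $j \ge 1$; the trivial case $j=0$ reads $1 \le 1$. Combining the three steps — Gaussian moment formula, coefficient-wise comparison via binomial identities, and Stirling — yields the full lemma, with the only genuine subtlety being the odd-$j$ part of the middle inequality.
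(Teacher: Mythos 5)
The moment bound, the even-$j$ coefficient comparison (your reduction to $\prod_{i=0}^{k-1}(j-2i-1)\le j^k$ is exactly right), and the Stirling step are all correct. The gap is in the odd-$j$ handling of the middle inequality. Your first route, ``dominate $h^+_j$ by $h^+_{j+1}$,'' cannot work: $h^+_j$ and $h^+_{j+1}$ have opposite parities, so there is no term-by-term domination, and even if one had $h^+_j\le h^+_{j+1}$ pointwise, the even-$j$ bound applied to $j+1$ yields $\frac{1}{\sqrt{(j+1)!}}(x^2+j+1)^{(j+1)/2}$, which exceeds $\frac{1}{\sqrt{j!}}(x^2+j)^{j/2}$ for large $x$. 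Your second route, ``the generalized binomial series for $(1+y)^{j/2}$,'' also does not go through as described: for half-integer exponent the series does not terminate, converges only for $|y|\le 1$ (and here $y=x^2/j$ can be large), and its coefficients turn negative once $k>(j+1)/2$. The expression ``$j^{(2k+1)/2}\binom{(j-1)/2}{k}^{1/2}$'' does not arise from that expansion.

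The fix is in the spirit of what you did for even $j$, not the routes you sketched. For odd $j$ and $x<0$ every monomial in $h^+_j(x)$ is negative, so the claim is trivial; for $x\ge 0$ write
\[
(x^2+j)^{j/2}=\sqrt{x^2+j}\cdot(x^2+j)^{\frac{j-1}{2}}\ge x\cdot(x^2+j)^{\frac{j-1}{2}}=\sum_{k=0}^{(j-1)/2}\binom{(j-1)/2}{k}j^{k}\,x^{j-2k},
\]
so it suffices to show $\frac{j!}{k!(j-2k)!\,2^k}\le\binom{(j-1)/2}{k}j^k$. After cancelling $k!$ and using that for odd $j$ each $j-2i-1$ is even, so $\prod_{i=0}^{k-1}(j-2i-1)=2^k\,\frac{((j-1)/2)!}{((j-1)/2-k)!}$, this collapses to $\prod_{i=0}^{k-1}(j-2i)\le j^k$, which is immediate. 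Also note that the paper itself gives no argument for this lemma and only cites \cite[Lemma 8.15]{AMP20}, so there is no in-paper proof to compare against; but with the above correction your coefficient-comparison route is sound and elementary.
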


For a proof, see \cite[Lemma 8.15]{AMP20}.
We again give a choice of $\eps$-feasible parameters used in our applications, justified in \cref{sec: choosing_funcs}.

\begin{lemma}
    For all $\epsilon > 0$, the parameters
	\begin{enumerate}
		\item $q = \left\lceil{3{D_V}ln(n) + ln(\frac{1}{\epsilon}) + {(3D_V)^k}ln(D_E + 1) + 3{D_V}ln(5)}\right\rceil$
		\item $B_{vertex} = 6q{D_V}$
		\item $B_{edge}(e) = 2h^{+}_{l_e}(B_{\Omega}(6{D_V}D_E))
		\max_{j \in [0,3{D_V}D_E]}{\left\{\left(h^{+}_{j}(B_{\Omega}(2qj))\right)^{\frac{l_e}{\max{\{j,l_e\}}}}\right\}}$

		As a special case, if $\Omega = N(0,1)$ then we can take $B_{edge}(e) = \left(400{D^2_V}{D^2_E}q\right)^{l_e}$
		\item $B_{norm}(\alpha) =
		2e{B_{vertex}^{|V(\alpha) \setminus U_{\alpha}| + |V(\alpha) \setminus V_{\alpha}|}}\left(\prod_{e \in E(\alpha)}{B_{edge}(e)}\right)n^{\frac{w(V(\alpha)) + w(I_{\alpha}) - w(S_{\alpha,min})}{2}}$
		\item $B(\gamma) = B_{vertex}^{|V(\gamma) \setminus U_{\gamma}| + |V(\gamma) \setminus V_{\gamma}|}\left(\prod_{e \in E(\gamma)}{B_{edge}(e)}\right)n^{\frac{w(V(\gamma) \setminus U_{\gamma})}{2}}$
		\item $N(\gamma) = (3D_V)^{2|V(\gamma) \setminus V_{\gamma}| + |V(\gamma) \setminus U_{\gamma}|}$
		\item $c(\alpha) = 100(3{t_{max}}D_V)^{|U_{\alpha} \setminus V_{\alpha}| + |V_{\alpha} \setminus U_{\alpha}| + k|E(\alpha)|}(2t_{max})^{|V(\alpha) \setminus (U_{\alpha} \cup V_{\alpha})|}$
	\end{enumerate}
	are $\eps$-feasible.
\end{lemma}

Similar to \cref{remk: understanding_the_parameters}, in our applications, we can interpret the above parameters in a much simpler manner. Just as in all our applications, assume we work with the Gaussian distribution $\Omega = N(0, 1)$, $k$ is a constant and we work with SoS degree $n^{\eps}$. Then, we think of each vertex or edge of the shape $\al$ or $\gam$ essentially contributing a factor of $n^{\eps}$. Therefore, we can interpret
\[q = n^{O(\eps)}, B_{vertex} = n^{O(\eps)}, B_{edge} = n^{O(\eps)|E(\al)|}\]
\[B_{norm}(\al) = n^{O(\eps)(|V(\al)| + |E(\al)|)}n^{\frac{w(V(\alpha)) + w(I_{\alpha}) - w(S_{\alpha,min})}{2}}, B(\gamma) = n^{O(\eps)(|V(\gam)| + |E(\gam)|)}n^{\frac{w(V(\gamma) \setminus U_{\gamma})}{2}}\]
\[N(\gamma) = n^{O(\eps)|V(\gam)|}, c(\alpha) = n^{O(\eps)(|V(\al)| + |E(\al)|)}\]

\begin{theorem}\label{generalmaintheorem}
	Given the moment matrix $\Lambda = \sum_{U \in \mathcal{I}_{mid}}{M^{orth}_{Id_U}(H_{Id_U})} + \sum_{U \in \mathcal{I}_{mid}}{\sum_{\tau \in \mathcal{M}_U}{M^{orth}_{\tau}(H_{\tau})}}$,
	for all $\epsilon > 0$, if we take $\eps$-feasible parameters
	and we have SOS-symmetric coefficient matrices $\{H'_{\gamma}: \gamma \in \Gamma\}$ such that the following conditions hold:
	\begin{enumerate}
		\item \psdmass For all $U \in \mathcal{I}_{mid}$,  $H_{Id_{U}} \succeq 0$
		\item \middleshapebounds For all $U \in \mathcal{I}_{mid}$ and $\tau \in \mathcal{M}_U$,
		\[
		\left[ {\begin{array}{cc}
				\frac{1}{|Aut(U)|c(\tau)}H_{Id_{U}} & B_{norm}(\tau)H_{\tau} \\
				B_{norm}(\tau)H^T_{\tau} & \frac{1}{|Aut(U)|c(\tau)}H_{Id_{U}}
		\end{array}} \right] \succeq 0
		\]
		\item \intersectionbounds For all $U,V \in \mathcal{I}_{mid}$ where $w(U) > w(V)$ and all $\gamma \in \Gamma_{U,V}$,
		\[
		c(\gamma)^2{N(\gamma)}^2{B(\gamma)^2}H^{-\gamma,\gamma}_{Id_{V}} \preceq H'_{\gamma}
		\]
	\end{enumerate}
	then with probability at least $1 - \epsilon$,
	\[
	\Lambda \succeq \frac{1}{2}\left(\sum_{U \in \mathcal{I}_{mid}}{M^{fact}_{Id_U}{(H_{Id_U})}}\right) - 3\left(\sum_{U \in \mathcal{I}}{\sum_{\gamma \in \Gamma_{U,*}}{\frac{d_{Id_{U}}(H'_{\gamma},H_{Id_{U}})}{|Aut(U)|c(\gamma)}}}\right)Id_{sym}
	\]
	\truncationbounds If it is also true that whenever $\norm{M_{\alpha}} \leq B_{norm}(\alpha)$ for all $\alpha \in \mathcal{M}'$,
	\[
	\sum_{U \in \mathcal{I}_{mid}}{M^{fact}_{Id_U}{(H_{Id_U})}} \succeq 6\left(\sum_{U \in \calI_{mid}}{\sum_{\gamma \in \Gamma_{U,*}}{\frac{d_{Id_{U}}(H'_{\gamma},H_{Id_{U}})}{|Aut(U)|c(\gamma)}}}\right)Id_{sym}
	\]
	then with probability at least $1 - \epsilon$, $\Lambda \succeq 0$.
\end{theorem}
\subsection{Choosing $H'_{\gamma}$ and Truncation Error}\label{sec: choosing_hgamma}
A canonical choice for $H'_{\gamma}$ is to take
\begin{enumerate}
	\item $H'_{\gamma}(\sigma,\sigma') = H_{Id_U}(\sigma, \sigma')$ whenever $|V(\sigma \circ \gamma)| \leq D_V$ and $|V(\sigma' \circ \gamma)| \leq D_V$.
	\item $H'_{\gamma}(\sigma,\sigma') = 0$ whenever $|V(\sigma \circ \gamma)| > D_V$ or $|V(\sigma' \circ \gamma)| > D_V$.
\end{enumerate}
With this choice, the truncation error is
\[
d_{Id_{U_{\gamma}}}(H_{Id_{U_{\gamma}}},H'_{\gamma}) = \sum_{\sigma,\sigma' \in \mathcal{L}_{U_{\gamma}}: V(\sigma) \leq D_V, V(\sigma') \leq D_V,
	\atop |V(\sigma \circ \gamma)| > D_V \text{ or } |V(\sigma' \circ \gamma)| > D_V}{B_{norm}(\sigma)B_{norm}(\sigma')H_{Id_{U_{\gamma}}}(\sigma,\sigma')}
\]

	\section{Proof of the Main Theorem}\label{sec: proof_of_main}

	In this section, we prove the main theorem under the assumption that the functions $B_{norm}(\alpha)$, $B(\gamma)$, $N(\gamma)$, and $c(\alpha)$ have certain properties. More precisely, we prove the following theorem. 
\begin{theorem}\label{maintheoremviaproperties}
	For all $\epsilon > 0$ and all $\epsilon' \in (0,\frac{1}{20}]$, for any moment matrix 
	\[
	\Lambda = \sum_{U \in \mathcal{I}_{mid}}{M^{orth}_{Id_U}(H_{Id_U})} + \sum_{U \in \mathcal{I}_{mid}}{\sum_{\tau \in \mathcal{M}_U}{M^{orth}_{\tau}(H_{\tau})}},
	\]
	if $B_{norm}(\alpha)$, $B(\gamma)$, $N(\gamma)$, and $c(\alpha)$ are functions such that 
	\begin{enumerate}
		\item With probability at least $(1-\epsilon)$, for all shapes $\alpha \in \mathcal{M}', ||M_{\alpha}|| \leq B_{norm}(\alpha)$.
		\item For all $\tau \in \mathcal{M}'$, $\gamma \in \Gamma_{*,U_{\tau}}$, $\gamma' \in \Gamma_{*,V_{\tau}}$, and all intersection patterns $P \in \mathcal{P}_{\gamma,\tau,\gamma'}$, 
		\[
		B_{norm}(\tau_{P}) \leq B(\gamma)B(\gamma')B_{norm}(\tau)
		\]
		Note: Intersection patterns and $\mathcal{P}_{\gamma,\tau,\gamma'}$ will be defined later, see Definitions \ref{intersectionpatternroughdef} and \ref{setPdefinition}.
		\item For all composable $\gamma_1,\gamma_2$, $B(\gamma_1)B(\gamma_2) = B(\gamma_1 \circ \gamma_2)$.
		\item $\forall U \in \mathcal{I}_{mid}, \sum_{\gamma \in \Gamma_{U,*}}{\frac{1}{|Aut(U)|c(\gamma)}} < \epsilon'$ 
		\item $\forall V \in \mathcal{I}_{mid}, \sum_{\gamma \in \Gamma_{*,V}}{\frac{1}{|Aut(U_{\gamma})|c(\gamma)}} < \epsilon'$ 
		\item $\forall U \in \mathcal{I}_{mid}, \sum_{\tau \in \mathcal{M}_{U}}{\frac{1}{|Aut(U)|c(\tau)}} < \epsilon'$
		\item For all $\tau \in \mathcal{M}'$, $\gamma \in \Gamma_{*,U_{\tau}} \cup \{Id_{U_{\tau}}\}$, and $\gamma' \in \Gamma_{*,V_{\tau}} \cup \{Id_{V_{\tau}}\}$,
		\begin{align*}
			&\sum_{j>0}{\sum_{\gamma_1,\gamma'_1,\cdots,\gamma_j,\gamma'_j \in \Gamma_{\gamma,\gamma',j}}{\prod_{i:\gamma_i \text{ is non-trivial}}{\frac{1}{|Aut(U_{\gamma_i})|}}
					\prod_{i:\gamma'_i \text{ is non-trivial}}{\frac{1}{|Aut(U_{\gamma'_i})|}}}}\sum_{P_1,\cdots,P_j:P_i \in \mathcal{P}_{\gamma_i,\tau_{P_{i-1}},{\gamma'_i}^T}}{\left(\prod_{i=1}^{j}{N(P_i)}\right)} \\
			&\leq \frac{N(\gamma)N(\gamma')}
			{(|Aut(U_{\gamma})|)^{1_{\gamma \text{ is non-trivial}}}(|Aut(U_{\gamma'})|)^{1_{\gamma' \text{ is non-trivial}}}}
		\end{align*}
		Note: $\Gamma_{\gamma,\gamma',j}$ will be defined later, see Definition \ref{multigammadefinition}.
	\end{enumerate}
	and we have SOS-symmetric coefficient matrices $\{H'_{\gamma}: \gamma \in \Gamma\}$ such that the following conditions hold:
	\begin{enumerate}
		\item For all $U \in \mathcal{I}_{mid}$,  $H_{Id_{U}} \succeq 0$
		\item For all $U \in \mathcal{I}_{mid}$ and $\tau \in \mathcal{M}_U$,
		\[
		\left[ {\begin{array}{cc}
				\frac{1}{|Aut(U)|c(\tau)}H_{Id_{U}} & B_{norm}(\tau)H_{\tau} \\
				B_{norm}(\tau)H^T_{\tau} & \frac{1}{|Aut(U)|c(\tau)}H_{Id_{U}}
		\end{array}} \right] \succeq 0
		\]
		\item For all $U,V \in \mathcal{I}_{mid}$ where $w(U) > w(V)$ and all $\gamma \in \Gamma_{U,V}$, 
		\[
		c(\gamma)^2{N(\gamma)}^2{B(\gamma)^2}H^{-\gamma,\gamma}_{Id_{V}} \preceq H'_{\gamma}
		\]
	\end{enumerate}
	then with probability at least $1 - \epsilon$, 
	\[
	\Lambda \succeq \frac{1}{2}\left(\sum_{U \in \mathcal{I}_{mid}}{M^{fact}_{Id_U}{(H_{Id_U})}}\right) - 3\left(\sum_{U \in \mathcal{I}_{mid}}{\sum_{\gamma \in \Gamma_{U,*}}{\frac{d_{Id_{U}}(H'_{\gamma},H_{Id_{U}})}{|Aut(U)|c(\gamma)}}}\right)Id_{sym}
	\]
	If it is also true that whenever $||M_{\alpha}|| \leq B_{norm}(\alpha)$ for all $\alpha \in \mathcal{M}'$, 
	\[
	\sum_{U \in \mathcal{I}_{mid}}{M^{fact}_{Id_U}{(H_{Id_U})}} \succeq 6\left(\sum_{U \in \mathcal{I}_{mid}}{\sum_{\gamma \in \Gamma_{U,*}}{\frac{d_{Id_{U}}(H'_{\gamma},H_{Id_{U}})}{|Aut(U)|c(\gamma)}}}\right)Id_{sym}
	\]
	then with probability at least $1 - \epsilon$, $\Lambda \succeq 0$.
\end{theorem}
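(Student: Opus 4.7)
The plan is to convert the given decomposition of $\Lambda$ in terms of $M^{orth}$ into an equivalent decomposition in terms of $M^{fact}$ (which admits a PSD factorization), show that the resulting ``main'' $M^{fact}$ terms dominate, and then absorb the accumulated intersection and truncation errors. Concretely, I would first write
\[
\Lambda = \sum_{U \in \mathcal{I}_{mid}} M^{fact}_{Id_U}(H_{Id_U}) + \sum_{U \in \mathcal{I}_{mid}} \sum_{\tau \in \mathcal{M}_U} M^{fact}_{\tau}(H_{\tau}) + E,
\]
where $E$ is the intersection-term error $E = \sum_{\tau}\bigl(M^{orth}_\tau(H_\tau) - M^{fact}_\tau(H_\tau)\bigr)$. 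Expanding $M^{fact}_\tau(H_\tau) = \sum_{\sigma,\sigma'} H_\tau(\sigma,\sigma')M_\sigma M_\tau M_{\sigma'^T}$ and tracking which triples of ribbons $(R_1,R_2,R_3)$ fail to be properly composable, each contribution to $E$ corresponds to an \emph{intersection pattern} $P$; after collecting like terms, each such contribution is itself a graph matrix, which I then re-express as $M_{\sigma_2} M_{\tau_P} M_{\sigma_2'^T}$ for new left/middle/right shapes obtained by gluing along the intersection. This will let me apply the same machinery recursively.

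Next, I would use Condition 2 on coefficient matrices together with the operator inequality
\[
\left[\begin{array}{cc} A & B \\ B^T & A \end{array}\right] \succeq 0 \;\Longrightarrow\; B + B^T \preceq 2A
\]
to show that for each $U \in \mathcal{I}_{mid}$ and each $\tau \in \mathcal{M}_U$,
\[
M^{fact}_\tau(H_\tau) + M^{fact}_{\tau^T}(H_{\tau^T}) \preceq \tfrac{2}{|Aut(U)|\,c(\tau)}\,M^{fact}_{Id_U}(H_{Id_U})
\]
(using $\|M_\tau\| \le B_{norm}(\tau)$ on the bad-event complement from hypothesis 1 on $B_{norm}$). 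Summing over $\tau \in \mathcal{M}_U$ and using hypothesis 6 ($\sum_\tau \frac{1}{|Aut(U)|c(\tau)} < \epsilon' \le \tfrac{1}{20}$) gives that the non-trivial-middle part of the main sum is at most (say) $\tfrac14$ of $\sum_U M^{fact}_{Id_U}(H_{Id_U})$, so the leading $M^{fact}_{Id_U}$ terms retain a positive fraction (the $\tfrac12$ in the conclusion).

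For the intersection-term error $E$, the plan is to apply the generalized intersection tradeoff: an intersection pattern $P \in \mathcal{P}_{\gamma,\tau,\gamma'}$ produces a contribution which, after charging via Condition 3 on $H^{-\gamma,\gamma}_{Id_V}$ and the norm bound $B_{norm}(\tau_P) \le B(\gamma)B(\gamma')B_{norm}(\tau)$ from hypothesis 2, can be bounded above (in $\preceq$) by something of the form $\tfrac{1}{c(\gamma)c(\tau)c(\gamma')} \cdot (\text{base term}) + (\text{truncation error})$. The truncation error comes from replacing $H_{Id_{U_\gamma}}$ by $H'_\gamma$ in Condition 3, and is controlled via Corollary \ref{distanceboundingcorollary}, yielding the $d_{Id_U}(H'_\gamma,H_{Id_U})\,Id_{sym}$ terms in the conclusion. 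Iterating this charging step (once the intersection reveals a new nested triple $(\sigma_2,\tau_2,\sigma_2'^T)$ one re-enters the intersection expansion), the combinatorial factors $N(\gamma_i)$ accumulated from each level are precisely what hypothesis 7 is engineered to sum geometrically, so the recursion is summable and the net charge to the base term is a small constant multiple, amortized using hypotheses 4--5.

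The main obstacle will be Step 3: carefully bookkeeping the recursive intersection expansion so that (i) each iterated intersection-term graph matrix is identified with a well-defined $(\sigma_i,\tau_{P_i},\sigma'^T_i)$ of the proper combinatorial type needed to reapply Condition 3, (ii) the automorphism factors $|Aut(U_{\gamma_i})|$ and $|Aut(U)|$ match precisely so that nothing is over- or under-counted, and (iii) the geometric sum of hypothesis 7 closes the recursion without losing more than the $\tfrac14 + \tfrac14$ budget relative to the leading $\tfrac12 \sum_U M^{fact}_{Id_U}(H_{Id_U})$. Once these are in place, combining the lower bound $\tfrac12 \sum_U M^{fact}_{Id_U}(H_{Id_U})$ with the $-3 \sum_{U,\gamma} \tfrac{d_{Id_U}(H'_\gamma,H_{Id_U})}{|Aut(U)|c(\gamma)}\,Id_{sym}$ correction proves the first conclusion, and the final ``if it is also true'' hypothesis immediately upgrades this to $\Lambda \succeq 0$ by substituting into the previous display.
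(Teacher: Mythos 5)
Your outline tracks the paper's proof essentially step by step: decompose $\Lambda$ into the main $M^{fact}$ terms plus an iterated intersection error, control the middle-shape terms via the block-PSD condition on coefficient matrices combined with the norm bound on $M_\tau$, close the intersection recursion using the $B(\gamma), N(\gamma), c(\gamma)$ bookkeeping and condition on $H^{-\gamma,\gamma}_{Id_V}$, and absorb the $H'_\gamma$-versus-$H_{Id_U}$ discrepancy into $d_{Id_U}(\cdot,\cdot)\,Id_{sym}$ terms. The only place your sketch compresses a step that is genuinely nontrivial is the passage from the block-PSD condition on coefficient matrices to the operator inequality on $M^{fact}$: the implication $B + B^T \preceq 2A$ applies to the coefficient matrices, not to the $M^{fact}$ operators directly, so one must first write the PSD block matrix as $\sum_i (v_i,w_i)(v_i,w_i)^T$ and then prove, for each rank-one piece, that
\[
M^{fact}_{\tau}(v_i w_i^T) + M^{fact}_{\tau^T}(w_i v_i^T) \preceq B_{norm}(\tau)\left(M^{fact}_{Id_{U_{\tau}}}(v_i v_i^T) + M^{fact}_{Id_{V_{\tau}}}(w_i w_i^T)\right)
\]
by a Cauchy--Schwarz-type factorization and $M_\tau M_{\tau}^{T} \preceq B_{norm}(\tau)^2 Id$ (this is the paper's Lemma \ref{lm:rankonetosquares} and Corollary \ref{cor:factorizedmatrixbound}), which is exactly where the multiplicative $B_{norm}(\tau)$ in condition 2 is used.
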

Throughout this section, we assume that we have functions $B_{norm}(\alpha)$, $B(\gamma)$, $N(\gamma)$, and $c(\alpha)$. If $\forall \alpha \in \mathcal{M}', ||M_{\alpha}|| \leq B_{norm}(\alpha)$ then we say that the norm bounds hold. For the other properties of these functions, we will either restate these properties in our intermediate results to highlight where these properties are needed or just state that the conditions on these functions are satisfied for brevity.
\subsection{Warm-up: Analysis with no intersection terms}
In this subsection, we show how the analysis works if we ignore the difference between $M^{fact}$ and $M^{orth}$
\begin{theorem}\label{thm:nointersectionanalysis}
	For all $\epsilon' \in (0,\frac{1}{2}]$, if the norm bounds hold and the following conditions hold
	\begin{enumerate}
		\item For all $U \in \mathcal{I}_{mid}$,  $H_{Id_{U}} \succeq 0$
		\item For all $U \in \mathcal{I}_{mid}$ and all $\tau \in \mathcal{M}_U$
		\[
		\left[ {\begin{array}{cc}
				\frac{1}{|Aut(U)|c(\tau)}H_{Id_{U}} & B_{norm}(\tau)H_{\tau} \\
				B_{norm}(\tau)H^T_{\tau} & \frac{1}{|Aut(U)|c(\tau)}H_{Id_{U}}
		\end{array}} \right] \succeq 0
		\]
		\item $\forall U \in \mathcal{I}_{mid}, \sum_{\tau \in \mathcal{M}_U}{\frac{1}{|Aut(U)|c(\tau)}} \leq \epsilon'$.
	\end{enumerate}
	then
	\[
	\sum_{U \in \mathcal{I}_{mid}}{M^{fact}_{Id_U}(H_{Id_U})} + \sum_{U \in \mathcal{I}_{mid}}{\sum_{\tau \in \mathcal{M}_U}{M^{fact}_{\tau}(H_{\tau})} \succeq (1-2\epsilon')\sum_{U \in \mathcal{I}_{mid}}{M^{fact}_{Id_U}(H_{Id_U})}} \succeq 0
	\]
\end{theorem}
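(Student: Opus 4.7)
The plan is to treat the two sums $\sum_{U} M^{fact}_{Id_U}(H_{Id_U})$ and $\sum_{U} \sum_{\tau \in \mathcal{M}_U} M^{fact}_\tau(H_\tau)$ separately. The first I will show is PSD using condition $1$, and the second I will lower bound by $-\epsilon'$ times the first, using condition $2$ combined with the norm bounds, and then condition $3$ to sum up. Combining these gives the claimed $(1-2\epsilon')\sum_{U} M^{fact}_{Id_U}(H_{Id_U}) \succeq 0$, with the last inequality using $\epsilon' \leq 1/2$.

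The key algebraic manoeuvre is to view both $M^{fact}$ functions as conjugations by a single stacked matrix. Define $\mathbf{M}_U$ as the block matrix whose columns are indexed by pairs $(\sigma,A')$ with $\sigma \in \mathcal{L}_U$ and $A'$ a matrix index of shape $U$, and whose entries satisfy $\mathbf{M}_U[A,(\sigma,A')] = M_\sigma[A,A']$. Then a direct unfolding gives $M^{fact}_{Id_U}(H) = \mathbf{M}_U (H \otimes I) \mathbf{M}_U^T$ and $M^{fact}_\tau(H) = \mathbf{M}_U (H \otimes M_\tau) \mathbf{M}_U^T$ for any coefficient matrices $H$ of the appropriate type. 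Consequently condition $1$ immediately yields $M^{fact}_{Id_U}(H_{Id_U}) \succeq 0$, and any PSD inequality on coefficient matrices (tensored appropriately) lifts to a PSD inequality on the matrix-valued functions.

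For the main step, fix $U \in \mathcal{I}_{mid}$ and $\tau \in \mathcal{M}_U$. Since $U_\tau \equiv V_\tau \equiv U$, the graph matrix $M_\tau$ is square and the norm bound $\|M_\tau\| \le B_{norm}(\tau)$ is equivalent to
\[
\begin{bmatrix} I & M_\tau / B_{norm}(\tau) \\ M_\tau^T / B_{norm}(\tau) & I \end{bmatrix} \succeq 0.
\]
I take the Kronecker product of this with the matrix from condition $2$. The result is PSD, and the principal submatrix obtained by keeping only the $(1,1)$ and $(2,2)$ outer blocks in both factors is precisely
\[
\begin{bmatrix} \frac{1}{|Aut(U)| c(\tau)} H_{Id_U} \otimes I & H_\tau \otimes M_\tau \\ H_\tau^T \otimes M_\tau^T & \frac{1}{|Aut(U)| c(\tau)} H_{Id_U} \otimes I \end{bmatrix} \succeq 0.
\]
A standard test with vectors of the form $(v,v)^T$ and $(v,-v)^T$ then gives the lower bound $H_\tau \otimes M_\tau + H_\tau^T \otimes M_\tau^T \succeq -\frac{2}{|Aut(U)| c(\tau)}(H_{Id_U} \otimes I)$. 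Conjugating by $\mathbf{M}_U$ and using $H_{\tau^T} = H_\tau^T$ (which follows from symmetry of the coefficients of $\Lambda$, itself a consequence of the SOS-symmetry hypothesis) transforms this into $M^{fact}_\tau(H_\tau) + M^{fact}_{\tau^T}(H_{\tau^T}) \succeq -\frac{2}{|Aut(U)| c(\tau)} M^{fact}_{Id_U}(H_{Id_U})$.

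Finally, the map $\tau \mapsto \tau^T$ is an involution on $\mathcal{M}_U$ (since both $U_\tau$ and $V_\tau$ equal $U$), so summing the previous inequality over $\tau \in \mathcal{M}_U$ doubles the left-hand side, yielding $2\sum_\tau M^{fact}_\tau(H_\tau) \succeq -\sum_\tau \frac{2}{|Aut(U)| c(\tau)} M^{fact}_{Id_U}(H_{Id_U}) \succeq -2\epsilon' M^{fact}_{Id_U}(H_{Id_U})$ by condition $3$. Summing over $U$ and adding $\sum_U M^{fact}_{Id_U}(H_{Id_U})$ to both sides gives the first inequality in the theorem; the second inequality follows from step one. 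The argument is entirely algebraic with no real obstacle, but care is required in two places: the Kronecker-product/principal-submatrix extraction in the key step, and the $\tau \leftrightarrow \tau^T$ pairing, both of which crucially rely on the defining property $U_\tau \equiv V_\tau$ of $\mathcal{M}_U$.
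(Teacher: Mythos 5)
Your proof is correct, and it takes a genuinely different technical route from the paper's. The paper proves a rank-one lemma (there Lemma~\ref{lm:rankonetosquares}) by expanding a square of the form $\bigl(\sum_\sigma v_\sigma M_\sigma \mp \tfrac{1}{B_{norm}(\tau)} w_\sigma M_\sigma M_{\tau^T}\bigr)(\cdots)^T \succeq 0$, and then decomposes the block PSD matrix from condition~2 into rank-one summands (Corollary~\ref{cor:factorizedmatrixbound}) to lift the rank-one inequality to arbitrary coefficient matrices. You instead observe that $M^{fact}_{\tau}(H) = \mathbf{M}_U (H \otimes M_\tau) \mathbf{M}_U^T$, rewrite the norm bound $\norm{M_\tau} \le B_{norm}(\tau)$ as a $2\times 2$ block PSD condition, take the Kronecker product of the two PSD block matrices, and extract a principal submatrix. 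This is essentially the Schur product theorem (Hadamard product of PSD matrices is PSD) applied blockwise, and it lets you avoid the rank-one decomposition entirely while making the algebraic structure of $M^{fact}$ as a conjugation more transparent. Both routes ultimately need the same two inputs --- the block PSD condition on coefficient matrices and the spectral norm bound on $M_\tau$ --- and the same $\tau \leftrightarrow \tau^T$ involution on $\mathcal{M}_U$ to close the sum, so the ingredients match even though the bookkeeping differs. One consequence of your cleaner accounting: you actually obtain the constant $1-\epsilon'$, strictly better than the paper's stated $1-2\epsilon'$ (the paper's final summation step appears to not fully exploit the pairing, which is why it loses the extra factor of~$2$; this is harmless since only the qualitative form is used later).

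One small misattribution: the relation $H_{\tau^T} = H_\tau^T$ that you invoke does \emph{not} follow from SOS-symmetry (which is about invariance under permuting vertices within $U_\sigma$, $U_{\sigma'}$); it follows from symmetry of the underlying moment matrix $\Lambda$, since $M_{\alpha^T} = M_\alpha^T$ forces $\lambda_{\alpha^T} = \lambda_\alpha$ and hence $H_{\tau^T}(\sigma',\sigma) = H_\tau(\sigma,\sigma')$. Strictly speaking, this relation is an implicit hypothesis not written into Theorem~\ref{thm:nointersectionanalysis} as stated, but the paper's own proof (via Corollary~\ref{cor:factorizedmatrixbound}, whose conclusion silently identifies $H_{\tau^T}$ with $H_\tau^T$) relies on it in exactly the same way, and it holds in every intended application. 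So this is a shared imprecision rather than a flaw in your argument.
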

\begin{proof}
	We first show how a single term $M_{\sigma}M_{\tau}M_{{\sigma'}^T}$ plus its transpose $M_{\sigma'}M_{{\tau}^T}M_{{\sigma}^T}$ can be bounded.
	\begin{lemma}
		If the norm bounds hold then for all $\tau \in \mathcal{M}'$ and shapes $\sigma,\sigma'$ such that $\sigma,\tau,{\sigma'}^T$ are composable, for all $a,b$ such that $a > 0$, $b > 0$, and $ab = B_{norm}(\tau)^2$,
		\[
		M_{\sigma}M_{\tau}M_{{\sigma'}^T} + M_{\sigma'}M_{{\tau}^T}M_{{\sigma}^T} \preceq aM_{\sigma}M_{\sigma^T} + bM_{\sigma'}M_{{\sigma'}^T}
		\]
	\end{lemma}
	\begin{proof}
		Observe that 
		\begin{align*}
			0 \preceq &\left(\sqrt{a}M_{\sigma} - \frac{\sqrt{b}}{B_{norm}(\tau)}M_{{\sigma'}}M_{\tau^T}\right)\left(\sqrt{a}M_{\sigma} - \frac{\sqrt{b}}{B_{norm}(\tau)}M_{{\sigma'}}M_{\tau^T}\right)^T = \\
			&\left(\sqrt{a}M_{\sigma} - \frac{\sqrt{b}}{B_{norm}(\tau)}M_{{\sigma'}}M_{\tau}\right)\left(\sqrt{a}M_{{\sigma}^T} - \frac{\sqrt{b}}{B_{norm}(\tau)}M_{{\tau}}M_{{\sigma'}^T}\right) = \\
			&aM_{\sigma}M_{\sigma^T} - M_{\sigma}M_{\tau}M_{{\sigma'}^T} - M_{\sigma'}M_{{\tau}^T}M_{{\sigma}^T} + \frac{b}{B_{norm}(\tau)^2}M_{\sigma'}M_{\tau^{T}}M_{\tau}M_{{\sigma'}^T} \preceq \\
			&aM_{\sigma}M_{\sigma^T} - M_{\sigma}M_{\tau}M_{{\sigma'}^T} - M_{\sigma'}M_{{\tau}^T}M_{{\sigma}^T} + \frac{b}{B_{norm}(\tau)^2}M_{\sigma'}(B_{norm}(\tau)^2{Id})M_{{\sigma'}^T}
		\end{align*}
		Thus, $M_{\sigma}M_{\tau}M_{{\sigma'}^T} + M_{\sigma'}M_{{\tau}^T}M_{{\sigma}^T} \preceq aM_{\sigma}M_{\sigma^T} + bM_{\sigma'}M_{{\sigma'}^T}$, as needed.
	\end{proof}
	Unfortunately, if we try to bound everything term by term, there may be too many terms to bound. Instead, we generalize this argument for vectors and coefficient matrices.
	\begin{definition}
		Let $\tau$ be a shape. We say that a vector $v$ is a left $\tau$-vector if the coordinates of $v$ are indexed by left shapes $\sigma \in \mathcal{L}_{U_{\tau}}$. We say that a vector $w$ is a right $\tau$-vector if the coordinates of $w$ are indexed by left shapes $\sigma' \in \mathcal{L}_{V_{\tau}}$.
	\end{definition}
	\begin{lemma}\label{lm:rankonetosquares}
		For all $\tau \in \mathcal{M}'$, if the norm bounds hold, $v$ is a left $\tau$-vector, and $w$ is a right $\tau$-vector then 
		\[
		M^{fact}_{\tau}(vw^T) + M^{fact}_{{\tau}^T}(wv^T) \preceq B_{norm}(\tau)\left(M^{fact}_{Id_{U_{\tau}}}(vv^T) + M^{fact}_{Id_{V_{\tau}}}(ww^T)\right)
		\]
		and 
		\[
		-M^{fact}_{\tau}(vw^T) - M^{fact}_{{\tau}^T}(wv^T) \preceq B_{norm}(\tau)\left(M^{fact}_{Id_{U_{\tau}}}(vv^T) + M^{fact}_{Id_{V_{\tau}}}(ww^T)\right)
		\]
	\end{lemma}
	\begin{proof}
		Observe that 
		\begin{align*}
			0 \preceq &\left(\sum_{\sigma}{v_{\sigma}M_{\sigma} \mp \frac{w_{\sigma}M_{\sigma}M_{{\tau}^T}}{B_{norm}(\tau)}}\right)
			\left(\sum_{\sigma'}{v_{\sigma'}M_{\sigma'} \mp \frac{w_{\sigma'}M_{\sigma'}M_{{\tau}^T}}{B_{norm}(\tau)}}\right)^T = \\
			&\left(\sum_{\sigma}{v_{\sigma}M_{\sigma} \mp \frac{w_{\sigma}M_{\sigma}M_{{\tau}^T}}{B_{norm}(\tau)}}\right)\left(\sum_{\sigma'}{v_{\sigma'}M_{{\sigma'}^T} \mp \frac{w_{{\sigma'}}M_{\tau}M_{{\sigma'}^T}}{B_{norm}(\tau)}}\right) = \\
			&\sum_{\sigma,\sigma'}{\left(v_{\sigma}v_{\sigma'}\right)M_{\sigma}M_{{\sigma'}^T}} \mp \sum_{\sigma,\sigma'}{\frac{\left(v_{\sigma}w_{\sigma'}\right)}{B_{norm}(\tau)}M_{\sigma}M_{\tau}M_{\sigma'}} \\
			&\mp \sum_{\sigma,\sigma'}{\frac{\left(w_{\sigma}v_{\sigma'}\right)}{B_{norm}(\tau)}M_{\sigma}M_{{\tau}^T}M_{\sigma'}} + 
			\frac{1}{B_{norm}(\tau)^2}\sum_{\sigma,\sigma'}{\left(v_{\sigma}v_{\sigma'}\right)M_{\sigma}M_{\tau}M_{{\tau}^T}M_{{\sigma'}^T}}
		\end{align*}
		Further observe that 
		\begin{enumerate}
			\item $\sum_{\sigma,\sigma'}{\left(v_{\sigma}v_{\sigma'}\right)M_{\sigma}M_{{\sigma'}^T}} = M^{fact}_{Id_{U_{\tau}}}(vv^T)$
			\item $\sum_{\sigma,\sigma'}{\left(v_{\sigma}w_{\sigma'}\right)M_{\sigma}M_{\tau}M_{{\sigma'}^T}} = M^{fact}_{\tau}(vw^T)$
			\item $\sum_{\sigma,\sigma'}{\left(w_{\sigma}v_{\sigma'}\right)M_{\sigma}M_{{\tau}^T}M_{{\sigma'}^T}} = M^{fact}_{{\tau}^T}(wv^T)$
			\item 
			\begin{align*}
				\sum_{\sigma,\sigma'}{\left(w_{\sigma}w_{\sigma'}\right)M_{\sigma}M_{\tau}M_{{\tau}^T}M_{{\sigma'}^T}} 
				&= \left(\sum_{\sigma}{w_{\sigma}M_{\sigma}}\right)M_{\tau}M_{{\tau}^T}\left(\sum_{\sigma}{w_{\sigma}M_{\sigma}}\right)^T \\
				&\preceq \left(\sum_{\sigma}{w_{\sigma}M_{\sigma}}\right)B_{norm}(\tau)^2{Id}\left(\sum_{\sigma}{w_{\sigma}M_{\sigma}}\right)^T \\
				&= B_{norm}(\tau)^2\sum_{\sigma,\sigma'}{\left(w_{\sigma}w_{\sigma'}\right)M_{\sigma}M_{{\sigma'}^T}} \\
				&= {B_{norm}(\tau)^2}M^{fact}_{Id_{V_{\tau}}}(ww^T)
			\end{align*}
		\end{enumerate}
		Putting everything together, 
		\[
		\frac{M^{fact}_{\tau}(vw^T) + M^{fact}_{{\tau}^T}(wv^T)}{B_{norm}(\tau)} \preceq M^{fact}_{Id_{U_{\tau}}}(vv^T) + M^{fact}_{Id_{V_{\tau}}}(ww^T)
		\] 
		and 
		\[
		-\frac{M^{fact}_{\tau}(vw^T) + M^{fact}_{{\tau}^T}(wv^T)}{B_{norm}(\tau)} \preceq M^{fact}_{Id_{U_{\tau}}}(vv^T) + M^{fact}_{Id_{V_{\tau}}}(ww^T)
		\] 
		as needed.
	\end{proof}
	\begin{corollary}\label{cor:factorizedmatrixbound}
		For all $\tau \in \mathcal{M}'$, if the norm bounds hold and $H_U$ and $H_V$ are matrices such that 
		\[
		\left[ {\begin{array}{cc}
				H_{U} & B_{norm}(\tau)H_{\tau} \\
				B_{norm}(\tau)H^T_{\tau} & H_{V}
		\end{array}} \right] \succeq 0
		\]
		then $M^{fact}_{\tau}(H_{\tau}) + M^{fact}_{{\tau}^T}(H_{\tau^T}) \preceq M^{fact}_{Id_{U_{\tau}}}(H_{U}) + M^{fact}_{Id_{V_{\tau}}}(H_{V})$
	\end{corollary}
	\begin{proof}
		If $            \left[ {\begin{array}{cc}
				H_{U} & B_{norm}(\tau)H_{\tau} \\
				B_{norm}(\tau)H^T_{\tau} & H_{V}
		\end{array}} \right] \succeq 0$ then we can write 
		\[            \left[ {\begin{array}{cc}
				H_{U} & B_{norm}(\tau)H_{\tau} \\
				B_{norm}(\tau)H^T_{\tau} & H_{V}
		\end{array}} \right] = \sum_{i}{(v_i,w_i)(v_i,w_i)^T}
		\]
		Since the $M^{fact}$ operations are linear, the result now follows by summing the equation
		\[
		M^{fact}_{\tau}({v_i}w_i^T) + M^{fact}_{{\tau}^T}({w_i}v_i^T) \preceq B_{norm}(\tau)\left(M^{fact}_{Id_{U_{\tau}}}({v_i}v_i^T) + M^{fact}_{Id_{V_{\tau}}}({w_i}w_i^T)\right)
		\]
		over all $i$.
	\end{proof}
	Theorem \ref{thm:nointersectionanalysis} now follows directly. For all $U \in \mathcal{I}_{mid}$ and all $\tau \in \mathcal{M}_U$, using Corollary \ref{cor:factorizedmatrixbound} with $H_U = H_V = \frac{1}{|Aut(U)|c(\tau)}H_{Id_{U}}$,
	\[
	M^{fact}_{\tau}(H_{\tau}) + M^{fact}_{{\tau}^T}(H_{\tau^T}) \preceq \frac{1}{|Aut(U)|c(\tau)}M^{fact}_{Id_{U}}(H_{Id_{U}}) + \frac{1}{|Aut(U)|c(\tau)}M^{fact}_{Id_{U}}(H_{Id_{U}})
	\]
	Summing this equation over all $U \in \mathcal{I}_{mid}$ and all $\tau \in \mathcal{M}_U$, we obtain that 
	\[
	\sum_{U \in \mathcal{I}_{mid}}{\sum_{\tau \in \mathcal{M}_U}{M^{fact}_{\tau}(H_{\tau})}} \preceq 2\epsilon'\sum_{U \in \mathcal{I}_{mid}}{M^{fact}_{Id_U}(H_{Id_U})}
	\] as needed.
\end{proof}
\subsection{Intersection Term Analysis Strategy}\label{qualitativeintersectionpatternssection}
As we saw in the previous subsection, the analysis works out nicely if we work with $M^{fact}$. Unfortunately, our matrices are expressed in terms of $M^{orth}$. In this subsection, we describe our strategy for analyzing the difference between $M^{fact}$ and $M^{orth}$.

Recall the following expressions for $\left(M^{fact}_{\tau}(H)\right)(A,B)$ and $\left(M^{orth}_{\tau}(H)\right)(A,B)$ where $A$ has shape $U_{\tau}$ and $B$ has shape $V_{\tau}$:
\[
\left(M^{fact}_{\tau}(H)\right)(A,B) = \sum_{\sigma \in \mathcal{L}_{U_{\tau}}, \sigma' \in \mathcal{L}_{V_{\tau}}}{H(\sigma,\sigma')\sum_{A',B'}{
		\sum_{R_1 \in \mathcal{R}(\sigma,A,A'), R_2 \in \mathcal{R}(\tau,A',B'), \atop R_3 \in \mathcal{R}({\sigma'}^T,B',B)}M_{R_1}(A,A')M_{R_2}(A',B')M_{R_3}(B',B)}}
\]
\begin{align*}
	&\left(M^{orth}_{\tau}(H)\right)(A,B) \\
	&= \sum_{\sigma \in \mathcal{L}_{U_{\tau}}, \sigma' \in \mathcal{L}_{V_{\tau}}}{H(\sigma,\sigma')\sum_{A',B'}{
			\sum_{R_1 \in \mathcal{R}(\sigma,A,A'), R_2 \in \mathcal{R}(\tau,A',B'), \atop {R_3 \in \mathcal{R}({\sigma'}^T,B',B), R_1,R_2,R_3 \text{ are properly composable}}}M_{R_1}(A,A')M_{R_2}(A',B')M_{R_3}(B',B)}} 
\end{align*}
This implies that $\left(M^{fact}_{\tau}(H)\right)(A,B) - \left(M^{orth}_{\tau}(H)\right)(A,B)$ is equal to
\[
\sum_{\sigma \in \mathcal{L}_{U_{\tau}}, \sigma' \in \mathcal{L}_{V_{\tau}}}{H(\sigma,\sigma')\sum_{A',B'}{
		\sum_{R_1 \in \mathcal{R}(\sigma,A,A'), R_2 \in \mathcal{R}(\tau,A',B'), \text{ and } R_3 \in \mathcal{R}({\sigma'}^T,B',B) \atop R_1,R_2,R_3 \text{ are not properly composable}}M_{R_1}(A,A')M_{R_2}(A',B')M_{R_3}(B',B)}} 
\]
Thus, to understand the difference between $M^{fact}$ and $M^{orth}$, we need to analyze the terms $\chi_{R_1}\chi_{R_2}\chi_{R_3} = \chi_{R_1 \circ R_2 \circ R_3}$ for ribbons $R_1,R_2,R_3$ which are composable but not properly composable. These terms, which we call intersection terms, are not negligible and must be analyzed carefully. In particular, we decompose each resulting ribbon $R = R_1 \circ R_2 \circ R_3$ into new left, middle, and right parts. We do this as follows:
\begin{enumerate}
	\item Let $V_{*}$ be the set of vertices which appear more than once in $V(R_1 \circ R_2 \circ R_3)$. In other words, $V_{*}$ is the set of vertices involved in the intersections between $R_1$, $R_2$, and $R_3$ (not counting the facts that $B_{R_1} = A_{R_2}$ and $B_{R_2} = A_{R_3}$ because we expect these intersections).
	\item Let $A'$ be the leftmost minimum vertex separator of $A_{R_1}$ and $B_{R_1} \cup V_{*}$ in $R_1$. We turn $A'$ into a matrix index by specifying an ordering $O_{A'}$ for the vertices in $A'$.
	\item Let $B'$ be the leftmost minimum vertex separator of $A_{R_3} \cup V_{*}$ and $B_{R_3}$ in $R_2$. We turn $B'$ into a matrix index by specifying an ordering $O_{B'}$ for the vertices in $B'$.
	\item Decompose $R_1$ as $R_1 = {R'}_1 \cup R_4$ where ${R'}_1$ is the part of $R_1$ between $A_{R_1}$ and $A'$ and $R_4$ is the part of $R_1$ between $B'$ and $B_{R_1} = A_{R_2}$. Similarly, decompose 
	$R_3$ as $R_3 = R_5 \cup {R'}_3 $ where $R_5$ is the part of $R_3$ between $B_{R_1} = A_{R_2}$ and $B'$ and ${R'}_3$ is the part of $R_3$ between $B'$ and $B_{R_3}$.
	\item Take $R'_2 = R_4 \circ R_2 \circ R_5$ and note that $R'_1 \circ R'_2 \circ R'_3 = R_1 \circ R_2 \circ R_3$. We view $R'_1,R'_2,R'_3$ as the left, middle, and right parts of $R = R_1 \circ R_2 \circ R_3$
\end{enumerate}
While we will verify our analysis by checking the coefficients of the ribbons, we want to express everything in terms of shapes. We use the following conventions for the names of the shapes:
\begin{enumerate}
	\item As usual, we let $\sigma$, $\tau$, and ${\sigma'}^T$ be the shapes of $R_1$, $R_2$, and $R_3$.
	\item We let $\gamma$ and ${\gamma'}^T$ be the shapes of $R_4$ and $R_5$.
	\item We let $\sigma_2$, $\tau_{P}$, and ${\sigma'_2}^T$ be the shapes of $R'_1$, $R'_2$, and $R'_3$. Here $P$ is the intersection pattern induced by $R_4$, $R_2$, and $R_5$ which we define in the next subsection.
\end{enumerate}
\begin{remark}
	A key feature of our analysis is that it will work the same way regardless of the shapes $\sigma_2,{\sigma'_2}^T$ of $R'_1$ and $R'_3$. In other words, if we replace $\sigma_2$ by $\sigma_{2a}$ and $\sigma'_2$ by $\sigma'_{2a}$ for a given intersection term, this just replaces $\sigma = \sigma_2 \cup \gamma$ with  $\sigma_{a} = \sigma_{2a} \cup \gamma$ and $\sigma' = \sigma'_2 \cup \gamma'$ with  $\sigma'_{a} = \sigma'_{2a} \cup \gamma'$. This allows us to focus on the shapes $\gamma$, $\tau$, and ${\gamma'}^T$ and is the reason why the $-\gamma,\gamma$ operation appears in our results.
\end{remark}
\subsection{Intersection Term Analysis}\label{intersectiontermanalysissection}
In this section, we implement our strategy for analyzing intersection terms. For simplicity, we only give rough definitions and proof sketches here. For a more rigorous treatment, see Apendix \ref{canonicalmapsection}.

We begin by defining intersection patterns which describe how the ribbons $R_1$, $R_2$, and $R_3$ intersect. 
\begin{definition}[Rough Definition of Intersection Patterns]\label{intersectionpatternroughdef}
	Given $\tau \in \mathcal{M}'$, $\gamma \in \Gamma_{*,U_{\tau}} \cup \{Id_{U_{\tau}}\}$, $\gamma' \in \Gamma_{*,V_{\tau}} \cup \{Id_{V_{\tau}}\}$, and ribbons $R_1$, $R_2$, and $R_3$ of shapes $\gamma$, $\tau$, and ${\gamma'}^T$ which are composable but not properly composable, we define the intersection pattern $P$ induced by $R_1$, $R_2$, and $R_3$ and the resulting shape $\tau_P$ as follows:
	\begin{enumerate}
		\item We take $V(P) = V(\gamma \circ \tau \circ {\gamma'}^T)$.
		\item We take $E(P)$ to be the set of edges $(u,v)$ such that $u,v$ are distinct vertices in $V(\sigma \circ \tau \circ {\sigma'}^T)$ but $u$ and $v$ correspond to the same vertex in $R_1 \circ R_2 \circ R_3$
		\item We define $\tau_{P}$ to be the shape of the ribbon $R = R_1 \circ R_2 \circ R_3$
	\end{enumerate}
\end{definition}
\begin{definition}\label{setPdefinition}
	Given $\tau \in \mathcal{M}'$, $\gamma \in \Gamma_{*,U_{\tau}} \cup \{Id_{U_{\tau}}\}$, and $\gamma' \in \Gamma_{*,V_{\tau}} \cup \{Id_{V_{\tau}}\}$, we define $\mathcal{P}_{\gamma,\tau,{\gamma'}^T}$ to be the set of all possible intersection patterns $P$ which can be induced by ribbons $R_1$, $R_2$, and $R_3$ of shapes $\gamma$, $\tau$, and ${\gamma'}^T$.
\end{definition}
\begin{remark}
	Note that if $\gamma = Id_{U_{\tau}}$ and $\gamma' = Id_{V_{\tau}}$ then $\mathcal{P}_{\gamma,\tau,{\gamma'}^T} = \emptyset$ as every intersection pattern must have an unexpected intersection so either $\gamma$ or $\gamma'$ must be non-trivial.
\end{remark}
It would be nice if the intersection pattern $P$ together with the ribbon $R$ allowed us to recover the original ribbons $R_1$, $R_2$, and $R_3$. Unfortunately, it is possible for different triples of ribbons to result in the same intersection pattern $P$ and ribbon $R$. That said, the number of such triples cannot be too large, and this is sufficient for our purposes.
\begin{definition}
	Given an intersection pattern $P \in \mathcal{P}_{\gamma,\tau,{\gamma'}^T}$, let $R$ be a ribbon of shape $\tau_{P}$. We define $N(P)$ to be the number of different triples of ribbons $R_1,R_2,R_3$ such that $R_1 \circ R_2 \circ R_3 = R$ and $R_1,R_2,R_3$ induce the intersection pattern $P$.
\end{definition}
\begin{lemma}
	For all intersection patterns $P \in \mathcal{P}_{\gamma,\tau,{\gamma'}^T}$, $N(P) \leq |V(\tau_{P})|^{|V(\gamma) \setminus U_{\gamma}| + |V(\gamma') \setminus U_{\gamma'}|}$
\end{lemma}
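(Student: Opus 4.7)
The plan is to bound the number of triples $(R_1,R_2,R_3)$ by showing that each such triple is specified by choosing, for each non-fixed vertex of $R_1$ and $R_3$, an image in $V(R)$, and then observing that $R_2$ is forced.

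First, I would note that in any valid triple, $A_{R_1}=A_R$ and $B_{R_3}=B_R$ are fixed: indeed $A_{R_1\circ R_2\circ R_3}=A_{R_1}$ and $B_{R_1\circ R_2\circ R_3}=B_{R_3}$, so these matrix indices are determined by $R$. Because $R_1$ has shape $\gamma$, the tuple $A_{R_1}$ corresponds to the vertices $U_\gamma$; thus the map $\phi_1:V(\gamma)\to V(R)$ defining $R_1$ is already specified on $U_\gamma$. Similarly, because $R_3$ has shape ${\gamma'}^T$ with $V_{{\gamma'}^T}=U_{\gamma'}$, the map $\phi_3:V({\gamma'}^T)\to V(R)$ defining $R_3$ is already specified on $U_{\gamma'}$.

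Next I would argue that the remaining freedom is at most $|V(\tau_P)|$ choices per undetermined vertex. Since $|V(R)|=|V(\tau_P)|$, the map $\phi_1$ on $V(\gamma)\setminus U_\gamma$ can take at most $|V(\tau_P)|^{|V(\gamma)\setminus U_\gamma|}$ values, and likewise $\phi_3$ on $V({\gamma'}^T)\setminus V_{{\gamma'}^T}=V(\gamma')\setminus U_{\gamma'}$ can take at most $|V(\tau_P)|^{|V(\gamma')\setminus U_{\gamma'}|}$ values. Once $R_1$ and $R_3$ are fixed, the ribbon $R_2$ is forced, since its matrix indices $A_{R_2}=B_{R_1}$ and $B_{R_2}=A_{R_3}$ are now determined, and its edge set must be $E(R)\setminus(E(R_1)\cup E(R_3))$ in order that $R_1\circ R_2\circ R_3=R$; its vertex set is whatever remains of $V(R)$ plus the separator vertices. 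Multiplying the two counts yields the claimed bound.

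Only an upper bound is required, so I do not need to verify that every such choice of $(\phi_1,\phi_3)$ actually yields a triple of the correct shapes that realizes the intersection pattern $P$; the count over all $(\phi_1,\phi_3)$ is an overcount of the valid triples. The main obstacle, and the only place that needs care, is the bookkeeping that $R_2$ is genuinely determined from $R_1$, $R_3$, and $R$: this follows because composition is defined by taking the union of vertex and edge sets and gluing along matching matrix indices, so once we delete from $R$ the parts attributable to $R_1$ and $R_3$ (and keep the separator vertices $B_{R_1}=A_{R_2}$ and $A_{R_3}=B_{R_2}$), the rest is $R_2$. A fully rigorous version of this bookkeeping is what is provided in Appendix~\ref{canonicalmapsection} and can be invoked if desired.
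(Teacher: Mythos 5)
Your proposal is correct and takes essentially the same approach as the paper's proof sketch: fix $A_{R_1}=A_R$ and $B_{R_3}=B_R$, bound the images of the remaining vertices of $R_1$ and $R_3$ by $|V(\tau_P)|$ each, and observe that $R_2$ is then forced. You add helpful detail — identifying $V({\gamma'}^T)\setminus V_{{\gamma'}^T}=V(\gamma')\setminus U_{\gamma'}$ and explaining why $R_2$ is recoverable from $R,R_1,R_3$ — but the argument is the paper's.
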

\begin{proof}[Proof sketch]
	This can be proved by making the following observations:
	\begin{enumerate}
		\item $A_{R_1} = A_{R}$ and $B_{R_3} = B_{R}$.
		\item All of the remining vertices in $V(R_1)$ and $V(R_3)$ must be equal to some vertex in $V(R)$.
		\item Once $R_1$ and $R_3$ are determined, there is at most one ribbon $R_2$ such that $R_1,R_2,R_3$ are composable, $R = R_1 \circ R_2 \circ R_3$, and $R_1,R_2,R_3$ induce the intersection pattern $P$.
	\end{enumerate}
\end{proof}
With these definitions, we can now analyze the intersection terms.
\begin{definition}
	Given a left shape $\sigma$, define $e_{\sigma}$ to be the vector which has a $1$ in coordinate $\sigma$ and has a $0$ in all other coordinates.
\end{definition}
\begin{lemma}\label{lm:singleshapeintersections}
	For all $\tau \in \mathcal{M}'$, $\sigma \in \mathcal{L}_{U_{\tau}}$, and $\sigma' \in \mathcal{L}_{V_{\tau}}$, 
	\begin{align*}
		&M^{fact}_{\tau}(e_{\sigma}e^T_{\sigma'}) - M^{orth}_{\tau}(e_{\sigma}e^T_{\sigma'}) = \sum_{\sigma_2 \in \mathcal{L}, \gamma \in \Gamma: \sigma_2 \circ \gamma = \sigma}{\frac{1}{|Aut(U_{\gamma})|}\sum_{P \in \mathcal{P}_{\gamma,\tau,Id_{V_{\tau}}}}N(P)M^{orth}_{\tau_P}(e_{\sigma_2}e^T_{\sigma'})} \\
		&+ \sum_{\sigma'_2 \in \mathcal{L}, \gamma' \in \Gamma: \sigma'_2 \circ \gamma' = \sigma'}{\frac{1}{|Aut(U_{\gamma'})|}\sum_{P \in \mathcal{P}_{Id_{U_{\tau}},\tau,{\gamma'}^T}}N(P)M^{orth}_{\tau_P}(e_{\sigma}e^T_{\sigma'_2})} \\
		&+ \sum_{\sigma_2 \in \mathcal{L}, \gamma \in \Gamma: \sigma_2 \circ \gamma = \sigma}{\sum_{\sigma'_2 \in \mathcal{L}, \gamma' \in \Gamma: \sigma'_2 \circ \gamma' = \sigma'}{
				\frac{1}{|Aut(U_{\gamma})|\cdot|Aut(U_{\gamma'})|}\sum_{P \in \mathcal{P}_{\gamma,\tau,{\gamma'}^T}}N(P)M^{orth}_{\tau_P}(e_{\sigma_2}e^T_{\sigma'_2})}}
	\end{align*}
\end{lemma}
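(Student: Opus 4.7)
\begin{proofsketch}
The plan is to expand both sides in terms of ribbons and match coefficients. By definition,
\[
M^{fact}_{\tau}(e_{\sigma}e^T_{\sigma'}) - M^{orth}_{\tau}(e_{\sigma}e^T_{\sigma'}) = \sum_{(R_1,R_2,R_3)}{M_{R_1 \circ R_2 \circ R_3}},
\]
where the sum is over triples with $R_1,R_2,R_3$ of shapes $\sigma,\tau,{\sigma'}^T$ that are composable but not properly composable. For each such triple I will apply the canonical splitting procedure described in Section~\ref{qualitativeintersectionpatternssection}: let $V_{*}$ be the unexpected intersections, let $A'$ be the leftmost minimum vertex separator of $A_{R_1}$ from $B_{R_1} \cup V_{*}$ inside $R_1$ (with a chosen ordering $O_{A'}$), and symmetrically let $B'$ be the leftmost minimum vertex separator of $A_{R_3} \cup V_{*}$ from $B_{R_3}$ inside $R_3$. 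This splits $R_1 = R'_1 \circ R_4$ and $R_3 = R_5 \circ R'_3$, where $R'_1,R'_3$ are ``clean'' and $R_4,R_5$ carry all the intersections.

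The next step is to identify the corresponding shape data. Let $\sigma_2,\gamma$ be the shapes of $R'_1,R_4$, so $\sigma = \sigma_2 \circ \gamma$ and $\gamma \in \Gamma \cup \{Id_{V_\sigma}\}$; similarly $\sigma' = \sigma'_2 \circ \gamma'$. By the minimality properties of the leftmost separators, $\sigma_2$ and ${\sigma'_2}^T$ are genuine left/right shapes, so the splits give valid decompositions in $\mathcal{L} \times \Gamma$. The triple $(R_4,R_2,R_5)$ together with how it collapses determines the intersection pattern $P \in \mathcal{P}_{\gamma,\tau,{\gamma'}^T}$ as in Definition~\ref{intersectionpatternroughdef}, and by construction the glued ribbon $R := R_1 \circ R_2 \circ R_3 = R'_1 \circ R'_2 \circ R'_3$ has shape $\tau_P$, with $R'_1,R'_2,R'_3$ properly composable. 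Thus $M_R$ appears as one of the summands of $M^{orth}_{\tau_P}(e_{\sigma_2}e^T_{\sigma'_2})$.

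The key bookkeeping step is to account for multiplicities. Two different triples $(R_1,R_2,R_3)$ can produce the same pair $(R,P)$, and by Definition of $N(P)$ the number of such preimages above a fixed $(R,P)$ is exactly $N(P)$; so $N(P)$ appears as a coefficient on the right-hand side. Additionally, the canonical splitting depends on a choice of ordering $O_{A'}$ for the separator realizing $U_\gamma$ (and $O_{B'}$ for $U_{\gamma'}$). Each of the $|Aut(U_\gamma)|$ orderings produces the same underlying ribbon decomposition but assigns a different matrix index to $B_{R'_1} = A_{R_4}$, so summing over orderings over-counts by a factor $|Aut(U_\gamma)| \cdot |Aut(U_{\gamma'})|$, which explains the $\frac{1}{|Aut(U_\gamma)|}$ and $\frac{1}{|Aut(U_{\gamma'})|}$ prefactors. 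Finally, splitting cases according to whether the unexpected intersections lie only on the left side ($\gamma$ non-trivial, $\gamma' = Id_{V_\tau}$), only on the right side, or on both sides yields exactly the three sums in the statement.

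I expect the main obstacle to be the coefficient bookkeeping in the last paragraph: verifying that the canonical splitting is well-defined (so each triple maps into exactly one of the three cases), that the automorphism factors match precisely (rather than being off by a symmetry one has overlooked, as is easy to do in this setting), and that the fiber count $N(P)$ is indeed the right multiplier. I would check everything ribbon-by-ribbon, as recommended in the remark following Lemma~\ref{lm:morthsimplereexpression}, by fixing a target ribbon $R$ of shape $\tau_P$ and counting on both sides the contributions to $M_R$. The rigorous bijection, which handles the subtleties of orderings and the identification of $\sigma_2,\gamma,\sigma'_2,\gamma'$ from $P$, is precisely what is deferred to Appendix~\ref{canonicalmapsection}.
\end{proofsketch}
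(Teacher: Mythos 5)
You are following the same route as the paper: expand $M^{fact}_{\tau} - M^{orth}_{\tau}$ into a sum of $M_{R_1 \circ R_2 \circ R_3}$ over composable but not properly composable triples, apply the canonical split at separators cut off by $V_{*}$ to obtain $\sigma_2,\gamma,\sigma'_2,\gamma'$ and the intersection pattern $P$, and match coefficients ribbon by ribbon via the bijection that the paper likewise only sketches at this point and rigorizes in Appendix~\ref{canonicalmapsection}. One conceptual slip is worth fixing before the bookkeeping will close. The glued ribbon $R = R_1 \circ R_2 \circ R_3 = R'_1 \circ R'_2 \circ R'_3$ has shape $\sigma_2 \circ \tau_P \circ {\sigma'_2}^T$, not $\tau_P$; only the new \emph{middle} ribbon $R'_2 = R_4 \circ R_2 \circ R_5$ has shape $\tau_P$. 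Consequently $N(P)$ is, by definition, the number of triples $(R_4,R_2,R_5)$ of shapes $(\gamma,\tau,{\gamma'}^T)$ composing to a fixed ribbon of shape $\tau_P$ inducing $P$ (that fixed ribbon being $R'_2$), not the number of preimages of $(R,P)$ among triples of shapes $(\sigma,\tau,{\sigma'}^T)$; your phrasing would be off by the count of decompositions of $R$ into pieces of shapes $\sigma_2,\tau_P,{\sigma'_2}^T$. Relatedly, the $\frac{1}{|Aut(U_{\gamma})|}\frac{1}{|Aut(U_{\gamma'})|}$ prefactors and the $N(P)$ multiplier are not two independent corrections layered on a count over $(R,P)$: in the paper's bijection the orderings $O_{A'},O_{B'}$ are \emph{extra data} on the $(R_1,R_2,R_3)$ side that are matched against $(R'_1,R'_2,R'_3,j\in[N(P)])$ on the other side, and dividing by $|Aut(U_{\gamma})|\cdot|Aut(U_{\gamma'})|$ removes the per-ordering over-counting of each actual triple. (Also minor: $B'$ should be the \emph{rightmost} minimum separator of $A_{R_3}\cup V_{*}$ and $B_{R_3}$, taken inside $R_3$; the ``leftmost $\ldots$ in $R_2$'' wording in the strategy section is a typo, as the proof sketch and appendix both use the rightmost.) Once the roles of $R$ versus $R'_2$ are straightened out, your accounting is exactly the paper's.
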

\begin{proof}[Proof sketch]
	This lemma follows from the following bijection. Consider the third term
	\[
	\sum_{\sigma_2 \in \mathcal{L}, \gamma \in \Gamma: \sigma_2 \circ \gamma = \sigma}{\sum_{\sigma'_2 \in \mathcal{L}, \gamma' \in \Gamma: \sigma'_2 \circ \gamma' = \sigma'}{
			\frac{1}{|Aut(U_{\gamma})|\cdot|Aut(U_{\gamma'})|}\sum_{P \in \mathcal{P}_{\gamma,\tau,{\gamma'}^T}}N(P)M^{orth}_{\tau_P}(e_{\sigma_2}e^T_{\sigma'_2})}}
	\]
	On one side, we have the following data:
	\begin{enumerate}
		\item Ribbons $R_1$, $R_2$, and $R_3$ of shapes $\gamma,\tau,{\gamma'}^T$ such that $R_1,R_2,R_3$ are composable but $R_1$ and $R_2 \circ R_3$ are not properly composable (i.e. $R_1$ has an unexpected intersection with $R_2$ and/or $R_3$) and $R_1 \circ R_2$ and $R_3$ are not properly composable (i.e. $R_3$ has an unexpected intersection with $R_1$ and/or $R_2$).
		\item An ordering $O_{A'}$ on the leftmost minimum vertex separator $A'$ of $A_{R_1}$ and $V_{*} \cup B_{R_1}$ (recall that $V_{*}$ is the set of vertices which appear more than once in $V(R_1 \circ R_2 \circ R_3)$).
		\item An ordering $O_{B'}$ on the rightmost minimum vertex separator $B'$ of $V_{*} \cup A_{R_3}$ and $B_{R_3}$.
	\end{enumerate}
	On the other side, we have the following data
	\begin{enumerate}
		\item An intersection pattern $P \in \mathcal{P}_{\gamma,\tau,{\gamma'}^T}$ where $\gamma$ and ${\gamma'}^T$ are non-trivial.
		\item Ribbons $R'_1$, $R'_2$, $R'_3$ of shapes $\sigma_2$, $\tau_P$, ${\sigma'_2}^T$ which are properly composable
		\item A number in $[N(P)]$ describing which possible triple of ribbons resulted in the intersection pattern $P$ and the ribbon $R'_2$.
	\end{enumerate}
	To see this bijection, note that given the data on the first side, we can recover the ribbons $R'_1$, $R'_2$, and $R'_3$ as follows:
	\begin{enumerate}
		\item We decompose $R_1$ as $R_1 = R'_1 \circ R_4$ where $B_{R'_1} = A_{R_4} = A'$ with the ordering $O_{A'}$.
		\item We decompose $R_3$ as $R_3 = R_5 \circ R'_3$ where  where $B_{R_5} = A_{R'_3} = B'$ with the ordering $O_{B'}$.
		\item We take $R'_2 = R_4 \circ R_2 \circ R_5$.
	\end{enumerate}
	The intersection pattern $P$ and the number in $[N(P)]$ can be obtained from $R_1$, $R_2$, and $R_3$.
	
	Conversely, with the data on the other side, we can recover the data on the first side as follows:
	\begin{enumerate}
		\item $R'_2$ gives an ordering $O_{A'}$ for $A' = A_{R'_2}$ and an ordering $O_{B'}$ for $B' = B_{R'_2}$.
		\item The ribbon $R'_2$, intersection pattern $P$, and number in $[N(P)]$ allow us to recover $R_4$, $R_2$, and $R_5$.
		\item We take $R_1 = R'_1 \circ R_4$ and $R_3 = R_5 \circ R'_3$.
	\end{enumerate}
	Thus, both sides have the same coefficient for each ribbon.
	
	The analysis for the the first term is the same except that when $\gamma'$ is trivial, we always take $\gamma' = Id_{V_{\tau}}$. Thus, we always have that $B' = B_{R'_2} = B_{R_2}$ (with the same ordering) and $R'_3 = R_3 = Id_{B'}$. Because of this, there is no need to specify $R_3$, $R'_3$, $R_5$, or an ordering on $B'$.
	
	Similarly, the analysis for the the second term is the same except that when $\gamma$ is trivial, we always take $\gamma = Id_{U_{\tau}}$. Thus, we always have that $A' = A_{R'_2} = A_{R_2}$ (with the same ordering) and $R'_1 = R_1 = Id_{A'}$. Because of this, there is no need to specify $R_1$, $R'_1$, $R_4$, or an ordering on $A'$.
\end{proof}
Applying Lemma \ref{lm:singleshapeintersections} for all $\sigma$ and $\sigma'$ simultaneously, we obtain the following corollary.
\begin{definition}
	For all $U,V \in \mathcal{I}_{mid}$, given a $\gamma \in \Gamma_{U,V}$ and a vector $v$ indexed by left shapes $\sigma \in \mathcal{L}_V$, define $v^{-\gamma}$ to be the vector indexed by left shapes $\sigma_2 \in \mathcal{L}_{U}$ such that $v^{-\gamma}(\sigma_2) = v(\sigma_2 \circ \gamma)$ if $\sigma_2 \circ \gamma \in \mathcal{L}_V$ and $v^{-\gamma}(\sigma_2) = 0$ otherwise.
\end{definition}
\begin{proposition}
	For all composable $\gamma_2,\gamma_1 \in \Gamma$ and all vectors $v$ indexed by left shapes in $\mathcal{L}_{V_{\gamma_1}}$, $(v^{-\gamma_1})^{-\gamma_2} = v^{-\gamma_2 \circ \gamma_1}$
\end{proposition}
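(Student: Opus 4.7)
\begin{proofsketch}
The plan is to prove this proposition by direct unfolding of the definition of the $-\gamma$ operation, reducing everything to the already-established associativity of shape composition.

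First, I would fix an arbitrary left shape $\sigma_3 \in \mathcal{L}_{U_{\gamma_2}}$ (the index set for entries of both vectors under comparison) and compute the $\sigma_3$-entry of each side. Applying the definition of $v \mapsto v^{-\gamma}$ twice on the left-hand side, one obtains
\[
(v^{-\gamma_1})^{-\gamma_2}(\sigma_3) \;=\;
\begin{cases}
v\bigl((\sigma_3 \circ \gamma_2) \circ \gamma_1\bigr) & \text{if } \sigma_3 \circ \gamma_2 \in \mathcal{L}_{U_{\gamma_1}} \text{ and } (\sigma_3 \circ \gamma_2) \circ \gamma_1 \in \mathcal{L}_{V_{\gamma_1}}, \\
0 & \text{otherwise},
\end{cases}
\]
while applying the definition once on the right-hand side gives
\[
v^{-\gamma_2 \circ \gamma_1}(\sigma_3) \;=\;
\begin{cases}
v\bigl(\sigma_3 \circ (\gamma_2 \circ \gamma_1)\bigr) & \text{if } \sigma_3 \circ (\gamma_2 \circ \gamma_1) \in \mathcal{L}_{V_{\gamma_1}}, \\
0 & \text{otherwise}.
\end{cases}
\]
By the associativity of shape composition (established earlier as a proposition, using that $\gamma_2$ and $\gamma_1$ are composable, and that the shape $\sigma_3 \circ \gamma_2$ is composable with $\gamma_1$ since $V_{\sigma_3 \circ \gamma_2} \equiv V_{\gamma_2} \equiv U_{\gamma_1}$), the argument $(\sigma_3 \circ \gamma_2) \circ \gamma_1$ equals $\sigma_3 \circ (\gamma_2 \circ \gamma_1)$, so the values of $v$ agree whenever both are defined.

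It remains to check that the conditions under which each side is nonzero coincide. The forward direction is easy: if $\sigma_3 \circ \gamma_2 \in \mathcal{L}_{U_{\gamma_1}}$ and $\sigma_3 \circ \gamma_2 \circ \gamma_1 \in \mathcal{L}_{V_{\gamma_1}}$, then $\sigma_3 \circ (\gamma_2 \circ \gamma_1) \in \mathcal{L}_{V_{\gamma_1}}$ by associativity. The reverse direction is the main (though still quite mild) obstacle: assuming $\sigma_3 \circ (\gamma_2 \circ \gamma_1) \in \mathcal{L}_{V_{\gamma_1}}$, we must confirm that its sub-composition $\sigma_3 \circ \gamma_2$ already lies in $\mathcal{L}_{U_{\gamma_1}}$. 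This amounts to verifying three things: (i) $V_{\sigma_3 \circ \gamma_2} \equiv U_{\gamma_1}$, which is immediate from $V_{\gamma_2} \equiv U_{\gamma_1}$; (ii) the truncation parameter $|V(\sigma_3 \circ \gamma_2)| \le D_V$ and edge-label bound, which follow because $V(\sigma_3 \circ \gamma_2) \subseteq V(\sigma_3 \circ \gamma_2 \circ \gamma_1)$ and every edge of $\sigma_3 \circ \gamma_2$ is an edge of $\sigma_3 \circ \gamma_2 \circ \gamma_1$; and (iii) the structural conditions defining a left shape (unique minimum vertex separator at $V_{\sigma_3 \circ \gamma_2}$, every non-separator vertex reachable from $U_{\sigma_3 \circ \gamma_2}$ without touching $V_{\sigma_3 \circ \gamma_2}$, no edges entirely inside $V_{\sigma_3 \circ \gamma_2}$). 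Conditions (iii) transfer down from $\sigma_3 \circ \gamma_2 \circ \gamma_1$ because $\gamma_2$ itself is a left shape whose right boundary is $V_{\gamma_2}$: any vertex separator of $\sigma_3 \circ \gamma_2 \circ \gamma_1$ strictly to the left of $V_{\gamma_2}$ would yield a strictly smaller separator of $\sigma_3 \circ \gamma_2$, contradicting the left-shape property of $\gamma_2 \circ \gamma_1$ inherited from $\sigma_3 \circ (\gamma_2 \circ \gamma_1)$.

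Combining the agreement of the two arguments via associativity with the equivalence of the membership conditions shows that the $\sigma_3$-entries match in every case, yielding the claimed identity $(v^{-\gamma_1})^{-\gamma_2} = v^{-\gamma_2 \circ \gamma_1}$. The whole argument is essentially bookkeeping; the only non-trivial point is verifying the reverse direction of the membership equivalence, which reduces to a basic property of how leftmost vertex separators behave under concatenation of left shapes.
\end{proofsketch}
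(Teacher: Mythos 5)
Your overall strategy---unfold the definition of $v^{-\gamma}$ twice, invoke associativity of shape composition, and check that the nonzero conditions match entry by entry---is exactly what this proposition reduces to; the paper gives no proof and evidently treats the statement as immediate. The unfolding, the forward direction of the membership equivalence, and the truncation checks in your item (ii) are all fine. The problem is your argument for item (iii), the claim that $\sigma_3 \circ \gamma_2$ is a left shape whenever $\sigma_3 \circ \gamma_2 \circ \gamma_1 \in \mathcal{L}_{V_{\gamma_1}}$.

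You argue that a vertex separator of $\sigma_3 \circ \gamma_2$ strictly smaller than $V_{\gamma_2}$ would contradict the left-shape property of the larger composition. It would not: the unique minimum-weight vertex separator of $\sigma_3 \circ \gamma_2 \circ \gamma_1$ is $V_{\gamma_1}$, and $w(V_{\gamma_1}) < w(V_{\gamma_2})$ since $\gamma_1 \in \Gamma$ has $U_{\gamma_1} \equiv V_{\gamma_2}$ of strictly larger weight than $V_{\gamma_1}$. A hypothetical separator of $\sigma_3 \circ \gamma_2$ of weight strictly between $w(V_{\gamma_1})$ and $w(V_{\gamma_2})$ does give a separator of the larger composition, but of weight still at least $w(V_{\gamma_1})$, so nothing is contradicted; your argument simply does not rule it out. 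What actually makes item (iii) work is that it is unconditionally true and has nothing to do with $\gamma_1$: the composition of two left shapes glued along $V_{\sigma_3} = U_{\gamma_2}$ is always a left shape with $V_{\gamma_2}$ as its unique minimum-weight vertex separator (a max-flow through the bottleneck $V_{\sigma_3}$ saturates $V_{\gamma_2}$, while $V_{\gamma_2}$ is a cut of that weight, and uniqueness propagates from the uniqueness in $\sigma_3$ and in $\gamma_2$). This is the same structural fact that makes $\sigma_2 \circ \gamma$ a candidate element of $\mathcal{L}_V$ in the definition of $v^{-\gamma}$ in the first place, so it should be cited, not rederived from the hypothesis on the triple composition. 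Once it is granted, the only condition left to check in the reverse direction is the truncation bound, which your item (ii) disposes of correctly.
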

\begin{corollary}\label{cor:singlestepintersections}
	For all $\tau \in \mathcal{M}'$, for all left $\tau$-vectors $v$ and all right $\tau$-vectors $w$, 
	\begin{align*}
		&M^{orth}_{\tau}(vw^T) = M^{fact}_{\tau}(vw^T) - \sum_{\gamma \in \Gamma_{*,U_{\tau}}}{\frac{1}{|Aut(U_{\gamma})|}\sum_{P \in \mathcal{P}_{\gamma,\tau,Id_{V_{\tau}}}}N(P)M^{orth}_{\tau_P}(v^{-\gamma}w^T)} \\
		&- \sum_{\gamma' \in \Gamma_{*,V_{\tau}}}{\frac{1}{|Aut(U_{\gamma'})|}\sum_{P \in \mathcal{P}_{Id_{U_{\tau}},\tau,{\gamma'}^T}}N(P)M^{orth}_{\tau_P}(v(w^{-\gamma})^T)} \\
		&- \sum_{\gamma \in \Gamma_{*,U_{\tau}}}{\sum_{\gamma' \in \Gamma_{*,V_{\tau}}}{
				\frac{1}{|Aut(U_{\gamma})|\cdot|Aut(U_{\gamma'})|}\sum_{P \in \mathcal{P}_{\gamma,\tau,{\gamma'}^T}}N(P)M^{orth}_{\tau_P}(v^{-\gamma}(w^{-\gamma'})^T)}}
	\end{align*}
\end{corollary}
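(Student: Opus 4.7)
The plan is to derive this corollary from Lemma \ref{lm:singleshapeintersections} by expanding $vw^T$ in the rank one basis $\{e_\sigma e_{\sigma'}^T\}$ and then using linearity and a rearrangement of the summations. Concretely, I would first write
\[
vw^T \;=\; \sum_{\sigma \in \mathcal{L}_{U_{\tau}}, \sigma' \in \mathcal{L}_{V_{\tau}}} v_\sigma\, w_{\sigma'}\, e_\sigma e_{\sigma'}^T,
\]
and use the fact that the operators $M^{fact}_\tau(\cdot)$ and $M^{orth}_\tau(\cdot)$ are linear in their coefficient-matrix argument. Applying Lemma \ref{lm:singleshapeintersections} term by term and collecting, I obtain three double sums whose inner summand is indexed by pairs $(\sigma_2, \gamma)$ with $\sigma_2 \circ \gamma = \sigma$ (and symmetrically for $\sigma'_2, \gamma'$).

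The main step is then to swap the order of summation. For the first correction term the combined sum is
\[
\sum_\sigma v_\sigma \sum_{\sigma_2, \gamma: \sigma_2 \circ \gamma = \sigma} \frac{1}{|Aut(U_\gamma)|}\sum_{P \in \mathcal{P}_{\gamma,\tau,Id_{V_\tau}}} N(P) M^{orth}_{\tau_P}(e_{\sigma_2} e_{\sigma'}^T).
\]
I would reindex by letting $\gamma$ and $\sigma_2$ be the outer sum, which forces $\sigma = \sigma_2 \circ \gamma$. Summing the $v_\sigma e_{\sigma_2}$ factors against $w_{\sigma'} e_{\sigma'}^T$ and using the defining identity $v^{-\gamma}(\sigma_2) = v(\sigma_2 \circ \gamma)$ (with value $0$ when $\sigma_2 \circ \gamma \notin \mathcal{L}_{V_\tau}$, which matches the domain convention of the lemma), this collapses to
\[
\sum_{\gamma \in \Gamma_{*,U_\tau}} \frac{1}{|Aut(U_\gamma)|}\sum_{P \in \mathcal{P}_{\gamma,\tau,Id_{V_\tau}}} N(P)\, M^{orth}_{\tau_P}\bigl(v^{-\gamma} w^T\bigr),
\]
by one more application of linearity. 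The second correction term is handled identically with the roles of $(v,\sigma,\gamma)$ and $(w,\sigma',\gamma')$ swapped, and the third term is a product-of-two-swaps that yields the $v^{-\gamma}(w^{-\gamma'})^T$ expression.

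The main subtlety, which I expect to be the only nontrivial bookkeeping issue, is verifying that the bijection used in the proof of Lemma \ref{lm:singleshapeintersections} (between triples $(R_1,R_2,R_3)$ and $(P, R'_1 \circ R'_2 \circ R'_3, \text{choice in }[N(P)])$) is compatible with the truncation convention built into $v^{-\gamma}$; that is, ribbons whose composed shape lies outside the truncation window $|V(\sigma_2 \circ \gamma)| \leq D_V$ must not produce spurious contributions on either side of the equality. This is automatic because such $\sigma_2 \circ \gamma$ are absent from $\mathcal{L}_{V_\tau}$ so $v^{-\gamma}(\sigma_2) = 0$ by definition, and the original $v_\sigma$ coefficient is already zero for such $\sigma$. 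After confirming this compatibility, the rearrangement of sums is purely formal and the corollary follows, finishing the proof.
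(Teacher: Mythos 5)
Your proposal is correct and is essentially the paper's own proof: the paper obtains Corollary \ref{cor:singlestepintersections} by ``applying Lemma \ref{lm:singleshapeintersections} for all $\sigma$ and $\sigma'$ simultaneously,'' which is precisely the expansion of $vw^T$ in the rank-one basis, linearity of $M^{fact}_\tau$ and $M^{orth}_{\tau_P}$, and the reindexing $\sigma = \sigma_2\circ\gamma$ absorbed into the definition of $v^{-\gamma}$ (with the truncation convention handling shapes outside $\mathcal{L}_{V_\tau}$ exactly as you note). You have simply spelled out the bookkeeping that the paper leaves implicit.
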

Applying Corollary \ref{cor:singlestepintersections} iteratively, we obtain the following theorem:
\begin{definition}\label{multigammadefinition}
	Given $\gamma,\gamma' \in \Gamma \cup \{Id_U:U \in \mathcal{I}_{mid}\}$ and $j > 0$, let $\Gamma_{\gamma,\gamma',j}$ be the set of all $\gamma_1,\gamma'_1,\cdots,\gamma_j,\gamma'_j \in \Gamma \cup \{Id_U:U \in \mathcal{I}_{mid}\}$ such that:
	\begin{enumerate}
		\item $\gamma_j,\ldots,\gamma_1$ are composable and $\gamma_j \circ \ldots \circ \gamma_1 = \gamma$
		\item $\gamma'_j,\ldots,\gamma'_1$ are composable and $\gamma'_j \circ \ldots \circ \gamma'_1 = \gamma'$
		\item For all $i \in [1,j]$, $\gamma_i$ or $\gamma'_i$ is non-trivial (i.e. $\gamma_i \neq Id_{U_{\gamma_i}}$ or $\gamma'_i \neq Id_{U_{\gamma'_i}}$).
	\end{enumerate}
\end{definition}
\begin{remark}
	Note that if $\gamma = Id_{U}$ and $\gamma' = Id_{V}$ then for all $j > 0$, $\Gamma_{\gamma,\gamma',j} = \emptyset$.
\end{remark}
\begin{theorem}\label{thm:mfactmorthdifference}
	For all $\tau \in \mathcal{M}'$, left $\tau$-vectors $v$, and right $\tau$-vectors $w$, 
	\begin{align*}
		&M^{orth}_{\tau}(v{w^T}) = M^{fact}_{\tau}(v{w^T}) + \\
		&\sum_{\gamma \in \Gamma_{*,U_{\tau}} \cup \{Id_{U_{\tau}}\},\gamma' \in \Gamma_{*,V_{\tau}} \cup \{Id_{V_{\tau}}\}: \atop \gamma \text{ or } \gamma' \text{ is non-trivial }}\sum_{j>0}{(-1)^{j}\sum_{\gamma_1,\gamma'_1,\cdots,\gamma_j,\gamma'_j \in \Gamma_{\gamma,\gamma',j}}{\prod_{i:\gamma_i \text{ is non-trivial}}{\frac{1}{|Aut(U_{\gamma_i})|}}
				\prod_{i:\gamma'_i \text{ is non-trivial}}{\frac{1}{|Aut(U_{\gamma'_i})|}}}} \\
		&\sum_{P_1,\cdots,P_j:P_i \in \mathcal{P}_{\gamma_i,\tau_{P_{i-1}},{\gamma'_i}^T}}{\left(\prod_{i=1}^{j}{N(P_i)}\right)M^{fact}_{\tau_{P_j}}(v^{-\gamma}{(w^{-\gamma'})^T})}
	\end{align*}
	where we take $\tau_{P_0} = \tau$.
\end{theorem}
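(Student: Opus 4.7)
The plan is to prove Theorem \ref{thm:mfactmorthdifference} by iterating Corollary \ref{cor:singlestepintersections}. The corollary already gives the claim for the ``one-step'' expansion: it expresses $M^{orth}_{\tau}(vw^T)$ as $M^{fact}_{\tau}(vw^T)$ minus a sum over pairs $(\gamma_1,\gamma'_1) \in (\Gamma_{*,U_\tau}\cup\{Id_{U_\tau}\}) \times (\Gamma_{*,V_\tau}\cup\{Id_{V_\tau}\})$ (not both trivial) of terms of the form $\frac{N(P_1)}{|Aut(U_{\gamma_1})|^{1_{\gamma_1\text{ nontriv}}}|Aut(U_{\gamma'_1})|^{1_{\gamma'_1\text{ nontriv}}}}\,M^{orth}_{\tau_{P_1}}\!\bigl(v^{-\gamma_1}(w^{-\gamma'_1})^T\bigr)$, with $P_1\in\mathcal{P}_{\gamma_1,\tau,{\gamma'_1}^T}$. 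The idea is then straightforward: apply the same corollary to each surviving $M^{orth}$ term on the right, and repeat. Each such application introduces an additional factor of $-1$, together with a new pair $(\gamma_i,\gamma'_i)$ (not both trivial) and a new intersection pattern $P_i \in \mathcal{P}_{\gamma_i,\tau_{P_{i-1}},{\gamma'_i}^T}$. Using the identity $(v^{-\gamma_{i-1}})^{-\gamma_i} = v^{-\gamma_i\circ\gamma_{i-1}}$ (and its analogue for $w$) given just before Corollary \ref{cor:singlestepintersections}, after $j$ expansions the inner argument is exactly $v^{-\gamma_j\circ\cdots\circ\gamma_1}(w^{-\gamma'_j\circ\cdots\circ\gamma'_1})^T$, which is precisely the form required by the theorem once we set $\gamma = \gamma_j\circ\cdots\circ\gamma_1$ and $\gamma' = \gamma'_j\circ\cdots\circ\gamma'_1$, i.e.\ once we regroup the resulting sum by the composed pair $(\gamma,\gamma')$ and sum over all decompositions in $\Gamma_{\gamma,\gamma',j}$.

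The execution order I have in mind is: (i) formally restate Corollary \ref{cor:singlestepintersections} in the unified form above that treats the trivial and non-trivial $\gamma,\gamma'$ cases uniformly (using the convention $v^{-Id_U}=v$ and $|Aut(U_{Id})|^0 = 1$); (ii) set up an induction on $j \geq 0$ showing that after $j$ rounds of substitution,
\begin{align*}
M^{orth}_\tau(vw^T) = \sum_{i=0}^{j}(-1)^i\,T_i(v,w;\tau) \;+\; (-1)^{j+1} R_{j+1}(v,w;\tau),
\end{align*}
where $T_i$ is the $i$-th term in the target sum (with $T_0 = M^{fact}_\tau(vw^T)$) and $R_{j+1}$ is a remainder consisting of $M^{orth}$ terms whose shapes $\tau_{P_{j+1}}$ have strictly more vertices than $\tau_{P_j}$; (iii) argue termination: since every $\gamma_i$ or $\gamma'_i$ in the recursion is non-trivial, the composed shape $\tau_{P_i}$ strictly gains vertices at each step, so after finitely many rounds (bounded by the truncation $D_V$) the remainder is empty and the expansion closes out.

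Two points will require care. First, when verifying the inductive step I will have to match coefficients \emph{per ribbon} rather than at the shape level, because of the subtle automorphism factors -- the approach is the same ribbon-level bookkeeping used in the proof sketch of Lemma \ref{lm:singleshapeintersections}, extended compositionally. Concretely, one shows that triples of composable-but-not-properly-composable ribbon decompositions $(R_1,\ldots,R_{2j+1})$ of $R$ at depth $j$ are in bijection with tuples $((\gamma_1,\gamma'_1),\ldots,(\gamma_j,\gamma'_j)) \in \Gamma_{\gamma,\gamma',j}$ together with compatible intersection patterns $P_1,\ldots,P_j$ and the final properly composable triple corresponding to $M^{fact}_{\tau_{P_j}}$; the automorphism quotients cancel exactly as in the single-step case.

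The main obstacle, and the one most likely to hide subtle errors, is the bookkeeping in step (ii): making sure that the same ribbon decomposition is not counted under two different tuples $(\gamma_i,\gamma'_i)_{i\le j}$, and that the $N(P_i)$ multiplicities compose correctly as the shape $\tau_{P_i}$ grows. The definition of $\Gamma_{\gamma,\gamma',j}$ (requiring that $\gamma_i$ or $\gamma'_i$ be non-trivial at every level) is exactly what prevents double-counting of ``do-nothing'' iterations, so the combinatorial invariant to maintain throughout the induction is: \emph{each composed pair $(\gamma,\gamma')$ arises from a unique sequence of non-trivial one-step expansions, and the product of $N(P_i)$ factors counts exactly the number of ribbon-level realisations}. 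Once this invariant is verified at the base case ($j=1$, which is Corollary \ref{cor:singlestepintersections} itself), the inductive step reduces to applying Corollary \ref{cor:singlestepintersections} to the remainder $R_{j+1}$ and reindexing, which is mechanical.
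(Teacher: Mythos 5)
Your approach is exactly the one the paper intends: Theorem \ref{thm:mfactmorthdifference} is stated immediately after the remark ``Applying Corollary \ref{cor:singlestepintersections} iteratively, we obtain the following theorem,'' and your proposal is a careful elaboration of that iteration, including the unification of trivial/non-trivial cases, the sign $(-1)^j$ from alternating substitution, the reindexing via $(v^{-\gamma_{i-1}})^{-\gamma_i}=v^{-\gamma_i\circ\gamma_{i-1}}$, and the regrouping over $\Gamma_{\gamma,\gamma',j}$. One small imprecision: the intermediate claim that ``$\tau_{P_i}$ strictly gains vertices at each step'' is not quite the right invariant, since vertices of $\gamma_i,\gamma'_i$ may all be identified under $P_i$; the termination that actually matters is that $|V(\gamma_j\circ\cdots\circ\gamma_1)|+|V(\gamma'_j\circ\cdots\circ\gamma'_1)|$ strictly increases (each round introduces at least one non-trivial left shape), so $v^{-\gamma}$ and $w^{-\gamma'}$ eventually vanish under the $D_V$ truncation baked into $\mathcal{L}$.
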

\subsection{Bounding the difference between $M^{fact}$ and $M^{orth}$}\label{boundingdifferencesection}
In this subsection, we bound the difference between $M^{fact}_{\tau}(H_{\tau})$ and $M^{orth}_{\tau}(H_{\tau})$. We recall the following conditions on $B(\gamma)$, $N(\gamma)$, and $c(\gamma)$:
\begin{enumerate}
	\item For all $\tau \in \mathcal{M}'$, $\gamma \in \Gamma_{*,U_{\tau}} \cup \{Id_{U_{\tau}}\}$, and $\gamma' \in \Gamma_{*,V_{\tau}} \cup \{Id_{V_{\tau}}\}$,
	\begin{align*}
		&\sum_{j>0}{\sum_{\gamma_1,\gamma'_1,\cdots,\gamma_j,\gamma'_j \in \Gamma_{\gamma,\gamma',j}}{\left(\prod_{i:\gamma_i \text{ is non-trivial}}{\frac{1}{|Aut(U_{\gamma_i})|}}\right)
				\left(\prod_{i:\gamma'_i \text{ is non-trivial}}{\frac{1}{|Aut(U_{\gamma'_i})|}}\right)}}\\
		&\sum_{P_1,\cdots,P_j:P_i \in \mathcal{P}_{\gamma_i,\tau_{P_{i-1}},{\gamma'_i}^T}}{\left(\prod_{i=1}^{j}{N(P_i)}\right)} \leq \frac{N(\gamma)N(\gamma')}{(|Aut(U_{\gamma})|)^{1_{\gamma \text{ is non-trivial}}}(|Aut(U_{\gamma'})|)^{1_{\gamma' \text{ is non-trivial}}}}
	\end{align*}
	\item For all $\tau \in \mathcal{M}'$, $\gamma \in \Gamma_{*,U_{\tau}}$, and $\gamma' \in \Gamma_{*,V_{\tau}}$, for all $P \in \mathcal{P}_{\gamma,\tau,{\gamma'}^T}$, 
	$B_{norm}(\tau_{P}) \leq B(\gamma)B(\gamma')B_{norm}(\tau)$
	\item $\forall V \in \mathcal{I}_{mid}, \sum_{\gamma \in \Gamma_{*,V}}{\frac{1}{|Aut(U_{\gamma})|c(\gamma)}} \leq \epsilon' \leq \frac{1}{20}$
\end{enumerate}
With these conditions, we can now bound the difference between $M^{fact}$ and $M^{orth}$.
\begin{lemma}\label{keyboundinglemma}
	If the norm bounds and the conditions on $B(\gamma)$, $N(\gamma)$, and $c(\gamma)$ hold then for all $\tau \in \mathcal{M}'$, left $\tau$-vectors $v$, and right $\tau$-vectors $w$, 
	\begin{align*}
		&\left(M^{fact}_{\tau}(v{w^T}) + M^{fact}_{{\tau}^T}(w{v^T})\right) - \left(M^{orth}_{\tau}(v{w^T}) + M^{orth}_{{\tau}^T}(w{v^T})\right) \preceq \\
		&\\
		&{\epsilon'}B_{norm}(\tau)M^{fact}_{Id_{U_{\tau}}}(vv^{T}) + 
		2\sum_{\gamma \in \Gamma_{*,U_{\tau}}}{\frac{B(\gamma)^{2}N(\gamma)^{2}B_{norm}(\tau){c(\gamma)}}{|Aut(U_{\gamma})|}M^{fact}_{Id_{U_{\gamma}}}(v^{-\gamma}(v^{-\gamma})^T)} + \\
		&{\epsilon'}B_{norm}(\tau)M^{fact}_{Id_{V_{\tau}}}(ww^{T}) + 
		2\sum_{\gamma' \in \Gamma_{*,V_{\tau}}}{\frac{B(\gamma')^{2}N(\gamma')^{2}B_{norm}(\tau){c(\gamma')}}{|Aut(U_{\gamma'})|}M^{fact}_{Id_{U_{\gamma'}}}(w^{-\gamma'}(w^{-\gamma'})^T)}
	\end{align*}
\end{lemma}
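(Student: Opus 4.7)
The plan is to combine Theorem~\ref{thm:mfactmorthdifference} with a parameterized version of Lemma~\ref{lm:rankonetosquares}, then sum using the hypotheses on $B(\gamma)$, $N(\gamma)$, $c(\gamma)$. Concretely, I would first apply Theorem~\ref{thm:mfactmorthdifference} to both $M^{orth}_{\tau}(vw^T)$ and $M^{orth}_{\tau^T}(wv^T)$ to write
\[
\bigl(M^{fact}_\tau(vw^T)+M^{fact}_{\tau^T}(wv^T)\bigr)-\bigl(M^{orth}_\tau(vw^T)+M^{orth}_{\tau^T}(wv^T)\bigr)
\]
as a signed sum over $(\gamma,\gamma')\in(\Gamma_{*,U_\tau}\cup\{Id_{U_\tau}\})\times(\Gamma_{*,V_\tau}\cup\{Id_{V_\tau}\})$ (not both trivial), a length $j>0$, factorizations $(\gamma_i,\gamma'_i)\in\Gamma_{\gamma,\gamma',j}$ and intersection patterns $(P_i)$, of terms $M^{fact}_{\tau_{P_j}}(v^{-\gamma}(w^{-\gamma'})^T)$ (plus its transpose) with weight $\prod_i|Aut(U_{\gamma_i})|^{-1}|Aut(U_{\gamma'_i})|^{-1}\prod_i N(P_i)$.

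Next, I would discard signs and apply Lemma~\ref{lm:rankonetosquares} to each symmetric pair $M^{fact}_{\tau_{P_j}}(v^{-\gamma}(w^{-\gamma'})^T)+M^{fact}_{\tau_{P_j}^T}(w^{-\gamma'}(v^{-\gamma})^T)$. The proof of that lemma actually yields, for any $a,b>0$ with $ab\ge B_{norm}(\tau_{P_j})^2$, the bound $\preceq aM^{fact}_{Id_{U_\gamma}}(v^{-\gamma}(v^{-\gamma})^T)+bM^{fact}_{Id_{U_{\gamma'}}}(w^{-\gamma'}(w^{-\gamma'})^T)$. Then I would use $B_{norm}(\tau_{P_j})\le B(\gamma)B(\gamma')B_{norm}(\tau)$, which comes from iterating condition~2 along the chain and collapsing the product via condition~3, $\prod_i B(\gamma_i)=B(\gamma)$ and $\prod_i B(\gamma'_i)=B(\gamma')$. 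Finally, for each fixed $(\gamma,\gamma')$, condition~7 bounds the full combinatorial sum $\sum_{j,\gamma_i,\gamma'_i,P_i}\prod|Aut|^{-1}\prod N(P_i)$ by $N(\gamma)N(\gamma')/(|Aut(U_\gamma)|^{\mathbf 1_{\gamma\neq Id}}|Aut(U_{\gamma'})|^{\mathbf 1_{\gamma'\neq Id}})$.

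It remains to choose $a=a_{\gamma,\gamma'}$ and $b=b_{\gamma,\gamma'}$ so that summing matches the target. I would split into three cases. When $\gamma$ is nontrivial and $\gamma'=Id_{V_\tau}$, take $a=c(\gamma)B(\gamma)^2 N(\gamma)B_{norm}(\tau)$ and $b=B_{norm}(\tau)/(c(\gamma)N(\gamma))$, so $ab=B(\gamma)^2B_{norm}(\tau)^2$; after multiplying by the count $N(\gamma)/|Aut(U_\gamma)|$ this produces exactly $c(\gamma)B(\gamma)^2N(\gamma)^2B_{norm}(\tau)/|Aut(U_\gamma)|$ on the $v^{-\gamma}$ side and coefficient $(c(\gamma)|Aut(U_\gamma)|)^{-1}$ on $M^{fact}_{Id_{V_\tau}}(ww^T)$, which by condition~5 sums over $\gamma$ to at most $\epsilon'$ times $B_{norm}(\tau)$. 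The mirror case $\gamma=Id_{U_\tau}$, $\gamma'$ nontrivial is symmetric (using condition~4 or~5 in the appropriate direction). For the bilinear case, both nontrivial, I would make a symmetric split of the form $a=c(\gamma)B(\gamma)^2N(\gamma)|Aut(U_{\gamma'})|B_{norm}(\tau)/N(\gamma')$ scaled by an additional factor that makes $ab$ equal $B(\gamma)^2B(\gamma')^2B_{norm}(\tau)^2$; after applying the condition~7 weight, the $v^{-\gamma}$ contribution acquires a factor $1/(c(\gamma')|Aut(U_{\gamma'})|)$, and summing over $\gamma'$ using condition~5 yields an extra $\epsilon'\le 1$, so the total is at most the same $c(\gamma)B(\gamma)^2N(\gamma)^2B_{norm}(\tau)/|Aut(U_\gamma)|$ already produced by the boundary case, giving the factor of $2$ in the claimed bound.

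The main obstacle is choosing the weights $(a_{\gamma,\gamma'},b_{\gamma,\gamma'})$ in the bilinear case so that (i) $ab\ge B(\gamma)^2B(\gamma')^2B_{norm}(\tau)^2$, (ii) summing $a_{\gamma,\gamma'}$ against the condition~7 weight and then over $\gamma'$ is absorbed into the $v^{-\gamma}$ target (factor $2$), and (iii) the dual statement for $b_{\gamma,\gamma'}$ summed over $\gamma$ is absorbed into the $w^{-\gamma'}$ target. The signed/alternating structure from Theorem~\ref{thm:mfactmorthdifference} is handled automatically by passing to the two-sided Lemma~\ref{lm:rankonetosquares}, which bounds both $\pm$ signs simultaneously, so sign cancellation is not needed, only a careful bookkeeping of $c(\gamma)$, $N(\gamma)$ and $|Aut(U_\gamma)|$ factors on each side of the split.
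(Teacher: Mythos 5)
Your overall plan is the same as the paper's: expand via Theorem~\ref{thm:mfactmorthdifference}, symmetrize, discard signs using the two-sided Lemma~\ref{lm:rankonetosquares} with a parameterized split $(a,b)$ subject to $ab \ge B_{norm}(\tau_{P_j})^2$, bound $B_{norm}(\tau_{P_j})\le B(\gamma)B(\gamma')B_{norm}(\tau)$ via conditions~2--3, and finish by condition~7 plus conditions~4--5. Your treatment of the two boundary cases ($\gamma$ or $\gamma'$ trivial) is correct and matches what the paper's calculation produces.

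The gap is in the bilinear case. Your proposed weight
$a=c(\gamma)B(\gamma)^2N(\gamma)|Aut(U_{\gamma'})|B_{norm}(\tau)/N(\gamma')$,
after multiplying by the condition-7 weight $N(\gamma)N(\gamma')/(|Aut(U_\gamma)||Aut(U_{\gamma'})|)$, yields exactly $c(\gamma)B(\gamma)^2N(\gamma)^2B_{norm}(\tau)/|Aut(U_\gamma)|$ with \emph{no} $\gamma'$-dependence at all: the $|Aut(U_{\gamma'})|$ in your numerator cancels the one from condition~7, and nothing introduces a $1/c(\gamma')$. The ``extra factor $1/(c(\gamma')|Aut(U_{\gamma'})|)$'' you invoke simply is not there, so the subsequent sum over $\gamma'\in\Gamma_{*,V_\tau}$ diverges. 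The unspecified ``additional factor'' you mention to repair $ab$ cannot fix this either: if it is applied symmetrically to $a$ and $b$ it destroys the clean $1/|Aut(U_\gamma)|$ form; if applied to only one side it leaves the other sum uncontrolled. The paper avoids the case split entirely by using the single choice
$a=\frac{N(\gamma)B(\gamma)c(\gamma)}{N(\gamma')B(\gamma')c(\gamma')}B_{norm}(\tau_{P_j})$, $b$ its reciprocal times $B_{norm}(\tau_{P_j})$, so $ab=B_{norm}(\tau_{P_j})^2$. After bounding $B_{norm}(\tau_{P_j})\le B(\gamma)B(\gamma')B_{norm}(\tau)$ and multiplying by the condition-7 weight, this $a$ automatically carries the needed $\frac{1}{c(\gamma')|Aut(U_{\gamma'})|}$ on the $v$-side (and its mirror on the $w$-side), and specializes correctly to both boundary cases with $B(Id)=N(Id)=1$. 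If you replace your bilinear $a$ by this one, your three-case argument collapses to the paper's uniform computation and the proof goes through.
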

\begin{proof}
	By Theorem \ref{thm:mfactmorthdifference}, taking $\tau_{P_0} = \tau$, 
	\begin{align*}
		&M^{orth}_{\tau}(v{w^T}) = M^{fact}_{\tau}(v{w^T}) + \\
		&\sum_{\gamma \in \Gamma_{*,U_{\tau}} \cup \{Id_{U_{\tau}}\},\gamma' \in \Gamma_{*,V_{\tau}} \cup \{Id_{V_{\tau}}\}: \atop \gamma \text{ or } \gamma' \text{ is non-trivial }}\sum_{j>0}{(-1)^{j}\sum_{\gamma_1,\gamma'_1,\cdots,\gamma_j,\gamma'_j \in \Gamma_{\gamma,\gamma',j}}{\prod_{i:\gamma_i \text{ is non-trivial}}{\frac{1}{|Aut(U_{\gamma_i})|}}
				\prod_{i:\gamma'_i \text{ is non-trivial}}{\frac{1}{|Aut(U_{\gamma'_i})|}}}} \\
		&\sum_{P_1,\cdots,P_j:P_i \in \mathcal{P}_{\gamma_i,\tau_{P_{i-1}},{\gamma'_i}^T}}{\left(\prod_{i=1}^{j}{N(P_i)}\right)M^{fact}_{\tau_{P_j}}(v^{-\gamma}{(w^{-\gamma'})^T})}
	\end{align*}
	Taking the transpose of this equation gives
	\begin{align*}
		&M^{orth}_{{\tau}^T}(w{v^T}) = M^{fact}_{{\tau}^T}(w{v^T}) + \\
		&\sum_{\gamma \in \Gamma_{*,U_{\tau}} \cup \{Id_{U_{\tau}}\},\gamma' \in \Gamma_{*,V_{\tau}} \cup \{Id_{V_{\tau}}\}: \atop \gamma \text{ or } \gamma' \text{ is non-trivial }}\sum_{j>0}{(-1)^{j}\sum_{\gamma_1,\gamma'_1,\cdots,\gamma_j,\gamma'_j \in \Gamma_{\gamma,\gamma',j}}{\prod_{i:\gamma_i \text{ is non-trivial}}{\frac{1}{|Aut(U_{\gamma_i})|}}
				\prod_{i:\gamma'_i \text{ is non-trivial}}{\frac{1}{|Aut(U_{\gamma'_i})|}}}} \\
		&\sum_{P_1,\cdots,P_j:P_i \in \mathcal{P}_{\gamma_i,\tau_{P_{i-1}},{\gamma'_i}^T}}{\left(\prod_{i=1}^{j}{N(P_i)}\right)M^{fact}_{\tau^{T}_{P_j}}({w^{-\gamma'}(v^{-\gamma})^T})}
	\end{align*}
	Now observe that by Lemma \ref{lm:rankonetosquares}, if the norm bounds hold, 
	\begin{align*}
		&{\pm}\left(M^{fact}_{\tau_{P_j}}(v^{-\gamma}{(w^{-\gamma'})^T}) + M^{fact}_{\tau^{T}_{P_j}}({w^{-\gamma'}(v^{-\gamma})^T})\right) = \\
		&{\pm}M^{fact}_{\tau_{P_j}}\left(\left(\sqrt{\frac{N(\gamma)B(\gamma){c(\gamma)}}{N(\gamma')B(\gamma'){c(\gamma')}}}v^{-\gamma}\right)
		\left(\sqrt{\frac{N(\gamma')B(\gamma'){c(\gamma')}}{N(\gamma)B(\gamma){c(\gamma)}}}{(w^{-\gamma'})^T}\right)\right) \pm \\
		&M^{fact}_{\tau^{T}_{P_j}}\left(\left(\sqrt{\frac{N(\gamma')B(\gamma'){c(\gamma')}}{N(\gamma)B(\gamma){c(\gamma)}}}w^{-\gamma'}\right)
		\left(\sqrt{\frac{N(\gamma)B(\gamma){c(\gamma)}}{N(\gamma')B(\gamma'){c(\gamma')}}}{(v^{-\gamma})^T}\right)\right) \preceq \\
		&B_{norm}(\tau_{P_j})\left(\frac{N(\gamma)B(\gamma){c(\gamma)}}{N(\gamma')B(\gamma'){c(\gamma')}}M^{fact}_{Id_{U_{\gamma}}}(v^{-\gamma}(v^{-\gamma})^T) + 
		\frac{N(\gamma')B(\gamma'){c(\gamma')}}{N(\gamma)B(\gamma){c(\gamma)}}M^{fact}_{Id_{U_{\gamma'}}}(w^{-\gamma'}(w^{-\gamma'})^T)\right)
	\end{align*}
	Combining these equations,
	\begin{align*}
		&\left(M^{fact}_{\tau}(v{w^T}) + M^{fact}_{{\tau}^T}(w{v^T})\right) - \left(M^{orth}_{\tau}(v{w^T}) + M^{orth}_{{\tau}^T}(w{v^T})\right) \preceq \\
		&\sum_{\gamma \in \Gamma_{*,U_{\tau}} \cup \{Id_{U_{\tau}}\},\gamma' \in \Gamma_{*,V_{\tau}} \cup \{Id_{V_{\tau}}\}: \atop \gamma \text{ or } \gamma' \text{ is non-trivial }}\sum_{j>0}{\sum_{\gamma_1,\gamma'_1,\cdots,\gamma_j,\gamma'_j \in \Gamma_{\gamma,\gamma',j}}{\prod_{i:\gamma_i \text{ is non-trivial}}{\frac{1}{|Aut(U_{\gamma_i})|}}
				\prod_{i:\gamma'_i \text{ is non-trivial}}{\frac{1}{|Aut(U_{\gamma'_i})|}}}} \\
		&\sum_{P_1,\cdots,P_j:P_i \in \mathcal{P}_{\gamma_i,\tau_{P_{i-1}},{\gamma'_i}^T}}{\left(\prod_{i=1}^{j}{N(P_i)}\right)
			B_{norm}(\tau_{P_j})} \\
		&\left(\frac{N(\gamma)B(\gamma){c(\gamma)}}{N(\gamma')B(\gamma'){c(\gamma')}}M^{fact}_{Id_{U_{\gamma}}}(v^{-\gamma}(v^{-\gamma})^T) + 
		\frac{N(\gamma')B(\gamma'){c(\gamma')}}{N(\gamma)B(\gamma){c(\gamma)}}M^{fact}_{Id_{U_{\gamma'}}}(w^{-\gamma'}(w^{-\gamma'})^T)\right)
	\end{align*}
	From the conditions on $B(\gamma)$ and $N(\gamma)$, 
	\begin{enumerate}
		\item $B_{norm}(\tau_{P_j}) \leq B(\gamma)B(\gamma')B_{norm}(\tau)$
		\item
		\begin{align*}
			&\sum_{j>0}{\sum_{\gamma_1,\gamma'_1,\cdots,\gamma_j,\gamma'_j \in \Gamma_{\gamma,\gamma',j}}{\left(\prod_{i:\gamma_i \text{ is non-trivial}}{\frac{1}{|Aut(U_{\gamma_i})|}}\right)
					\left(\prod_{i:\gamma'_i \text{ is non-trivial}}{\frac{1}{|Aut(U_{\gamma'_i})|}}\right)}}\\
			&\sum_{P_1,\cdots,P_j:P_i \in \mathcal{P}_{\gamma_i,\tau_{P_{i-1}},{\gamma'_i}^T}}{\left(\prod_{i=1}^{j}{N(P_i)}\right)} \leq \frac{N(\gamma)N(\gamma')}{(|Aut(U_{\gamma})|)^{1_{\gamma \text{ is non-trivial}}}(|Aut(U_{\gamma'})|)^{1_{\gamma' \text{ is non-trivial}}}}
		\end{align*}
	\end{enumerate}
	Putting these equations together, 
	\begin{align*}
		&\left(M^{fact}_{\tau}(v{w^T}) + M^{fact}_{{\tau}^T}(w{v^T})\right) - \left(M^{orth}_{\tau}(v{w^T}) + M^{orth}_{{\tau}^T}(w{v^T})\right) \preceq \\
		&\sum_{\gamma \in \Gamma_{*,U_{\tau}} \cup \{Id_{U_{\tau}}\},\gamma' \in \Gamma_{*,V_{\tau}} \cup \{Id_{V_{\tau}}\}: \atop \gamma \text{ or } \gamma' \text{ is non-trivial }}{\frac{B(\gamma)^{2}N(\gamma)^{2}B_{norm}(\tau){c(\gamma)}}{(|Aut(U_{\gamma})|)^{1_{\gamma \text{ is non-trivial}}}(|Aut(U_{\gamma'})|)^{1_{\gamma' \text{ is non-trivial}}}{c(\gamma')}}M^{fact}_{Id_{U_{\gamma}}}(v^{-\gamma}(v^{-\gamma})^T)} + \\
		&\sum_{\gamma \in \Gamma_{*,U_{\tau}} \cup \{Id_{U_{\tau}}\},\gamma' \in \Gamma_{*,V_{\tau}} \cup \{Id_{V_{\tau}}\}: \atop \gamma \text{ or } \gamma' \text{ is non-trivial }}{\frac{B(\gamma')^{2}N(\gamma')^{2}B_{norm}(\tau){c(\gamma')}}{(|Aut(U_{\gamma})|)^{1_{\gamma \text{ is non-trivial}}}(|Aut(U_{\gamma'})|)^{1_{\gamma' \text{ is non-trivial}}}{c(\gamma)}}M^{fact}_{Id_{U_{\gamma'}}}(w^{-\gamma'}(w^{-\gamma'})^T)}
	\end{align*}
	Now observe that
	\begin{align*}
		&\sum_{\gamma \in \Gamma_{*,U_{\tau}} \cup \{Id_{U_{\tau}}\},\gamma' \in \Gamma_{*,V_{\tau}} \cup \{Id_{V_{\tau}}\}: \atop \gamma \text{ or } \gamma' \text{ is non-trivial }}{\frac{B(\gamma)^{2}N(\gamma)^{2}B_{norm}(\tau){c(\gamma)}}{(|Aut(U_{\gamma})|)^{1_{\gamma \text{ is non-trivial}}}(|Aut(U_{\gamma'})|)^{1_{\gamma' \text{ is non-trivial}}}{c(\gamma')}}M^{fact}_{Id_{U_{\gamma}}}(v^{-\gamma}(v^{-\gamma})^T)} \preceq \\
		&\left(\sum_{\gamma' \in \Gamma_{*,V_{\tau}}}{\frac{1}{|Aut(U_{\gamma'})|{c(\gamma')}}}\right)B_{norm}(\tau)M^{fact}_{Id_{U_{\tau}}}(vv^{T}) + \\
		&\sum_{\gamma \in \Gamma_{*,U_{\tau}}}{\left(\sum_{\gamma' \in \Gamma_{*,V_{\tau}} \cup \{Id_{V_{\tau}}\}}{\frac{1}{(|Aut(U_{\gamma'})|)^{1_{\gamma' \text{ is non-trivial}}}{c(\gamma')}}}\right)\frac{B(\gamma)^{2}N(\gamma)^{2}B_{norm}(\tau){c(\gamma)}}{(|Aut(U_{\gamma})|)^{1_{\gamma \text{ is non-trivial}}}}M^{fact}_{Id_{U_{\gamma}}}(v^{-\gamma}(v^{-\gamma})^T)}  \preceq \\
		&{\epsilon'}B_{norm}(\tau)M^{fact}_{Id_{U_{\tau}}}(vv^{T}) + 
		2\sum_{\gamma \in \Gamma_{*,U_{\tau}}}{\frac{B(\gamma)^{2}N(\gamma)^{2}B_{norm}(\tau){c(\gamma)}}{|Aut(U_{\gamma})|}M^{fact}_{Id_{U_{\gamma}}}(v^{-\gamma}(v^{-\gamma})^T)}
	\end{align*}
	Following similar logic, 
	\begin{align*}
		&\sum_{\gamma \in \Gamma_{*,U_{\tau}} \cup \{Id_{U_{\tau}}\},\gamma' \in \Gamma_{*,V_{\tau}} \cup \{Id_{V_{\tau}}\}: \atop \gamma \text{ or } \gamma' \text{ is non-trivial }}{\frac{B(\gamma')^{2}N(\gamma')^{2}B_{norm}(\tau){c(\gamma')}}{(|Aut(U_{\gamma})|)^{1_{\gamma \text{ is non-trivial}}}(|Aut(U_{\gamma'})|)^{1_{\gamma' \text{ is non-trivial}}}{c(\gamma)}}M^{fact}_{Id_{U_{\gamma'}}}(w^{-\gamma'}(w^{-\gamma'})^T)} \preceq \\
		&{\epsilon'}B_{norm}(\tau)M^{fact}_{Id_{V_{\tau}}}(ww^{T}) + 
		2\sum_{\gamma' \in \Gamma_{*,V_{\tau}}}{\frac{B(\gamma')^{2}N(\gamma')^{2}B_{norm}(\tau){c(\gamma')}}{|Aut(U_{\gamma'})|}M^{fact}_{Id_{U_{\gamma'}}}(w^{-\gamma'}(w^{-\gamma'})^T)}
	\end{align*}
	Putting everything together,
	\begin{align*}
		&\left(M^{fact}_{\tau}(v{w^T}) + M^{fact}_{{\tau}^T}(w{v^T})\right) - \left(M^{orth}_{\tau}(v{w^T}) + M^{orth}_{{\tau}^T}(w{v^T})\right) \preceq \\
		&\\
		&{\epsilon'}B_{norm}(\tau)M^{fact}_{Id_{U_{\tau}}}(vv^{T}) + 
		2\sum_{\gamma \in \Gamma_{*,U_{\tau}}}{\frac{B(\gamma)^{2}N(\gamma)^{2}B_{norm}(\tau){c(\gamma)}}{|Aut(U_{\gamma})|}M^{fact}_{Id_{U_{\gamma}}}(v^{-\gamma}(v^{-\gamma})^T)} + \\
		&{\epsilon'}B_{norm}(\tau)M^{fact}_{Id_{V_{\tau}}}(ww^{T}) + 
		2\sum_{\gamma' \in \Gamma_{*,V_{\tau}}}{\frac{B(\gamma')^{2}N(\gamma')^{2}B_{norm}(\tau){c(\gamma')}}{|Aut(U_{\gamma'})|}M^{fact}_{Id_{U_{\gamma'}}}(w^{-\gamma'}(w^{-\gamma'})^T)}
	\end{align*}
	as needed.
\end{proof}
Using Lemma \ref{keyboundinglemma} we have the following corollaries:
\begin{corollary}\label{notauintersectiontermcorollary}
	For all $U \in \mathcal{I}_{mid}$, if the norm bounds and the conditions on $B(\gamma)$, $N(\gamma)$, and $c(\gamma)$ hold and $H_{Id_U} \succeq 0$ then  
	\[
	M^{fact}_{Id_{U}}(H_{Id_U}) - M^{orth}_{Id_{U}}(H_{Id_U}) \preceq {\epsilon'}M^{fact}_{Id_{U}}(H_{Id_U}) + 
	2\sum_{\gamma \in \Gamma_{*,U}}{\frac{B(\gamma)^{2}N(\gamma)^{2}{c(\gamma)}}{|Aut(U_{\gamma})|}M^{fact}_{Id_{U_{\gamma}}}(H^{-\gamma,\gamma}_{Id_{U}})}
	\]
\end{corollary}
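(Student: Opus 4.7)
The plan is to derive this corollary as a direct specialization of Lemma \ref{keyboundinglemma} applied to the trivial middle shape $\tau = Id_U$, combined with the linearity of $M^{fact}$ and $M^{orth}$. Since the hypotheses assume $H_{Id_U} \succeq 0$, I would first write a PSD decomposition $H_{Id_U} = \sum_i v_i v_i^T$, where each $v_i$ is a vector indexed by left shapes $\sigma \in \mathcal{L}_U$. Because $M^{fact}_{Id_U}(\cdot)$ and $M^{orth}_{Id_U}(\cdot)$ are both linear in the coefficient matrix argument, proving the bound reduces to establishing the rank-one analogue
\[
M^{fact}_{Id_U}(v_i v_i^T) - M^{orth}_{Id_U}(v_i v_i^T) \preceq \epsilon' M^{fact}_{Id_U}(v_i v_i^T) + 2\sum_{\gamma \in \Gamma_{*,U}} \frac{B(\gamma)^{2}N(\gamma)^{2}c(\gamma)}{|Aut(U_\gamma)|}M^{fact}_{Id_{U_\gamma}}(v_i^{-\gamma}(v_i^{-\gamma})^T)
\]
for each $i$, then summing in $i$.

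To obtain this rank-one inequality I would invoke Lemma \ref{keyboundinglemma} with $\tau = Id_U$ and the choice $v = w = v_i$. Since $Id_U^T = Id_U$, the left-hand side of that lemma collapses to $2(M^{fact}_{Id_U}(v_i v_i^T) - M^{orth}_{Id_U}(v_i v_i^T))$, while on the right-hand side the two sums (over $\gamma \in \Gamma_{*,U_{\tau}}$ and over $\gamma' \in \Gamma_{*,V_{\tau}}$) coincide because $U_\tau = V_\tau = U$. Dividing by $2$ then yields the desired rank-one inequality, provided one also observes that $B_{norm}(Id_U) = 1$: indeed, $Id_U$ has $V(Id_U) = U = U_{Id_U} = V_{Id_U}$, no edges, empty isolated-vertex set $I_{Id_U}$, and $S_{Id_U,min} = U$, so every factor in the definition of $B_{norm}$ is trivial.

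The final step is to sum the rank-one inequalities over $i$. The only point requiring verification is that $\sum_i v_i^{-\gamma}(v_i^{-\gamma})^T = H^{-\gamma,\gamma}_{Id_U}$; this follows immediately from the definition of the $-\gamma,\gamma$ operation together with the definition of $v_i^{-\gamma}$: for any $\sigma,\sigma' \in \mathcal{L}_{U_\gamma}$ with $|V(\sigma\circ\gamma)|, |V(\sigma'\circ\gamma)| \le D_V$, we have $\sum_i v_i^{-\gamma}(\sigma) v_i^{-\gamma}(\sigma') = \sum_i v_i(\sigma\circ\gamma) v_i(\sigma'\circ\gamma) = H_{Id_U}(\sigma\circ\gamma, \sigma'\circ\gamma) = H^{-\gamma,\gamma}_{Id_U}(\sigma,\sigma')$, and both sides vanish when the truncation condition fails.

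I do not anticipate significant obstacles: all the heavy lifting (intersection-pattern bookkeeping, bounds on $N(P)$, and the PSD-tradeoff arguments) has already been absorbed into Lemma \ref{keyboundinglemma}. The corollary is essentially the identity-middle-shape instance, and the only small subtleties are (i) confirming $B_{norm}(Id_U) = 1$, (ii) noting the symmetry $\tau = \tau^T$ that collapses the left-hand side, and (iii) checking that the truncation in the definition of $H^{-\gamma,\gamma}$ is consistent with the truncation implicit in the vectors $v_i^{-\gamma}$. None of these is more than a definitional check.
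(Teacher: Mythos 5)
Your argument is correct and is exactly the specialization the paper intends: decompose $H_{Id_U}\succeq 0$ as $\sum_i v_iv_i^T$, apply Lemma \ref{keyboundinglemma} with $\tau = Id_U$ and $v=w=v_i$, use $Id_U = Id_U^T$ and $U_\tau = V_\tau = U$ to collapse both sides, divide by $2$, and sum over $i$, with the identity $\sum_i v_i^{-\gamma}(v_i^{-\gamma})^T = H^{-\gamma,\gamma}_{Id_U}$ following directly from the definitions of $v^{-\gamma}$ and the $-\gamma,\gamma$ operation. The only caveat is that $B_{norm}(Id_U)=1$ holds exactly under the simplifying assumptions, whereas the general $B_{norm}$ of Theorem \ref{generalmaintheorem} carries an extra constant factor $2e$ even for $Id_U$; this is a constant that the paper's statement of the corollary itself elides, so your proof matches the paper's intent.
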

\begin{corollary}\label{yestauintersectiontermcorollary}
	For all $U \in \mathcal{I}_{mid}$ and all $\tau \in \mathcal{M}_{U}$, if the norm bounds and the conditions on $B(\gamma)$, $N(\gamma)$, and $c(\gamma)$ hold and 
	\[
	\left[ {\begin{array}{cc}
			\frac{1}{|Aut(U)|c(\tau)}H_{Id_U} & B_{norm}(\tau)H_{\tau} \\
			B_{norm}(\tau)H^T_{\tau} & \frac{1}{|Aut(U)|c(\tau)}H_{Id_U}
	\end{array}} \right] \succeq 0
	\]
	then
	\begin{align*}
		&\left(M^{fact}_{\tau}(H_{\tau}) + M^{fact}_{{\tau}^T}(H^{T}_{\tau})\right) - \left(M^{orth}_{\tau}(H_{\tau}) + M^{orth}_{{\tau}^T}(H^{T}_{\tau})\right) \preceq \\
		&\\
		&2{\epsilon'}\frac{1}{|Aut(U)|c(\tau)}M^{fact}_{Id_{U}}(H_{Id_U}) + 
		4\sum_{\gamma \in \Gamma_{*,U}}{\frac{B(\gamma)^{2}N(\gamma)^{2}{c(\gamma)}}{|Aut(U_{\gamma})|\cdot|Aut(U)|c(\tau)}M^{fact}_{Id_{U_{\gamma}}}(H_{Id_{U}}^{-\gamma,\gamma}})
	\end{align*}
\end{corollary}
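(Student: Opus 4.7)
The plan is to derive this corollary by combining the rank-one bound from Lemma \ref{keyboundinglemma} with the PSD decomposition of the block matrix in the hypothesis. Since $\tau \in \mathcal{M}_U$ we have $U_{\tau} \equiv V_{\tau} \equiv U$, so left and right $\tau$-vectors live in the same space, and the whole argument is symmetric between the two block coordinates.

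The first step is to write
\[
\left[ \begin{array}{cc}
\frac{1}{|Aut(U)|c(\tau)}H_{Id_U} & B_{norm}(\tau)H_{\tau} \\
B_{norm}(\tau)H^T_{\tau} & \frac{1}{|Aut(U)|c(\tau)}H_{Id_U}
\end{array} \right] = \sum_i \begin{pmatrix} v_i \\ w_i \end{pmatrix} \begin{pmatrix} v_i^T & w_i^T \end{pmatrix},
\]
which is possible since the left-hand side is PSD by hypothesis. Reading off the blocks gives $\sum_i v_i v_i^T = \sum_i w_i w_i^T = \frac{1}{|Aut(U)|c(\tau)}H_{Id_U}$ and $\sum_i v_i w_i^T = B_{norm}(\tau)H_{\tau}$.

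Next, apply Lemma \ref{keyboundinglemma} to each pair $(v_i,w_i)$ and sum over $i$. By linearity of $M^{fact}_\tau,M^{fact}_{\tau^T},M^{orth}_\tau,M^{orth}_{\tau^T}$ in their coefficient-matrix argument, the left-hand side becomes
\[
B_{norm}(\tau)\bigl[M^{fact}_{\tau}(H_{\tau}) + M^{fact}_{\tau^T}(H^T_{\tau}) - M^{orth}_{\tau}(H_{\tau}) - M^{orth}_{\tau^T}(H^T_{\tau})\bigr].
\]
On the right-hand side, the $\epsilon' B_{norm}(\tau) M^{fact}_{Id_U}(v_i v_i^T)$ terms and the corresponding $w_i$ terms sum to $2\epsilon' B_{norm}(\tau) \cdot \tfrac{1}{|Aut(U)|c(\tau)} M^{fact}_{Id_U}(H_{Id_U})$. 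The key observation for the $-\gamma,\gamma$ terms is that by definition of the $-\gamma,\gamma$ operation, for each $\gamma \in \Gamma_{*,U}$,
\[
\sum_i v_i^{-\gamma}(v_i^{-\gamma})^T(\sigma_2,\sigma_2') = \Bigl(\sum_i v_i v_i^T\Bigr)(\sigma_2 \circ \gamma, \sigma_2' \circ \gamma) = \frac{1}{|Aut(U)|c(\tau)}H^{-\gamma,\gamma}_{Id_U}(\sigma_2,\sigma_2'),
\]
whenever $\sigma_2\circ\gamma,\sigma_2'\circ\gamma\in\mathcal{L}_U$ (and both sides are zero otherwise). Summing the corresponding $M^{fact}_{Id_{U_\gamma}}(v_i^{-\gamma}(v_i^{-\gamma})^T)$ terms, and doing the same for the $w_i^{-\gamma}$ terms (which give an identical contribution since $V_\tau=U$), yields
\[
4\sum_{\gamma \in \Gamma_{*,U}}\frac{B(\gamma)^2 N(\gamma)^2 B_{norm}(\tau)c(\gamma)}{|Aut(U_\gamma)|\cdot|Aut(U)|c(\tau)} M^{fact}_{Id_{U_\gamma}}(H^{-\gamma,\gamma}_{Id_U}).
\]

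Finally, dividing both sides through by $B_{norm}(\tau) > 0$ gives exactly the stated bound. There is essentially no obstacle here: the only point to be careful about is that $v_i^{-\gamma}(v_i^{-\gamma})^T$ really does sum to (a rescaling of) $H^{-\gamma,\gamma}_{Id_U}$, including the truncation convention that entries with $|V(\sigma_2\circ\gamma)|>D_V$ are zeroed out, which is automatic because $v_i$ is supported on $\mathcal{L}_U$.
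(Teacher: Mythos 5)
Your proof is correct and follows exactly the intended route: decompose the PSD block matrix into rank-one pieces $(v_i,w_i)(v_i,w_i)^T$, apply Lemma~\ref{keyboundinglemma} to each pair, sum by linearity, use the identity $\sum_i v_i^{-\gamma}(v_i^{-\gamma})^T = \frac{1}{|Aut(U)|c(\tau)}H^{-\gamma,\gamma}_{Id_U}$ (with the truncation handled automatically as you note), and divide out $B_{norm}(\tau)$. This is the same pattern used in Corollary~\ref{cor:factorizedmatrixbound}, and you have verified the key bookkeeping (the factor of $2$ from $v$ and $w$, the $U_\tau=V_\tau=U$ identification, and the divide-by-$B_{norm}(\tau)$ step) cleanly.
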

\subsection{Proof of the Main Theorem}
We now prove the following theorem which is a slight modification of Theorem \ref{maintheoremviaproperties} and which implies Theorem \ref{maintheoremviaproperties}.
\begin{theorem}\label{maintheoremviapropertiescopy}
	For all $\epsilon > 0$ and all $\epsilon' \in (0,\frac{1}{20}]$, for any moment matrix 
	\[
	\Lambda = \sum_{U \in \mathcal{I}_{mid}}{M^{orth}_{Id_U}(H_{Id_U})} + \sum_{U \in \mathcal{I}_{mid}}{\sum_{\tau \in \mathcal{M}_U}{M^{orth}_{\tau}(H_{\tau})}},
	\]
	if we have that for all $\alpha \in \mathcal{M}', ||M_{\alpha}|| \leq B_{norm}(\alpha)$ and $B(\gamma)$, $N(\gamma)$, and $c(\alpha)$ are functions such that 
	\begin{enumerate}
		\item For all $\tau \in \mathcal{M}'$, $\gamma \in \Gamma_{*,U_{\tau}}$, $\gamma' \in \Gamma_{*,V_{\tau}}$, and all intersection patterns $P \in \mathcal{P}_{\gamma,\tau,\gamma'}$, 
		\[
		B_{norm}(\tau_{P}) \leq B(\gamma)B(\gamma')B_{norm}(\tau)
		\]
		\item For all composable $\gamma_1,\gamma_2$, $B(\gamma_1)B(\gamma_2) = B(\gamma_1 \circ \gamma_2)$.
		\item $\forall U \in \mathcal{I}_{mid}, \sum_{\gamma \in \Gamma_{U,*}}{\frac{1}{|Aut(U)|c(\gamma)}} < \epsilon'$ 
		\item $\forall V \in \mathcal{I}_{mid}, \sum_{\gamma \in \Gamma_{*,V}}{\frac{1}{|Aut(U_{\gamma})|c(\gamma)}} < \epsilon'$ 
		\item $\forall U \in \mathcal{I}_{mid}, \sum_{\tau \in \mathcal{M}_{U}}{\frac{1}{|Aut(U)|c(\tau)}} < \epsilon'$
		\item For all $\tau \in \mathcal{M}'$, $\gamma \in \Gamma_{*,U_{\tau}} \cup \{Id_{U_{\tau}}\}$, and $\gamma' \in \Gamma_{*,V_{\tau}} \cup \{Id_{V_{\tau}}\}$,
		\begin{align*}
			&\sum_{j>0}{\sum_{\gamma_1,\gamma'_1,\cdots,\gamma_j,\gamma'_j \in \Gamma_{\gamma,\gamma',j}}{\prod_{i:\gamma_i \text{ is non-trivial}}{\frac{1}{|Aut(U_{\gamma_i})|}}
					\prod_{i:\gamma'_i \text{ is non-trivial}}{\frac{1}{|Aut(U_{\gamma'_i})|}}}}\sum_{P_1,\cdots,P_j:P_i \in \mathcal{P}_{\gamma_i,\tau_{P_{i-1}},{\gamma'_i}^T}}{\left(\prod_{i=1}^{j}{N(P_i)}\right)} \\
			&\leq \frac{N(\gamma)N(\gamma')}
			{(|Aut(U_{\gamma})|)^{1_{\gamma \text{ is non-trivial}}}(|Aut(U_{\gamma'})|)^{1_{\gamma' \text{ is non-trivial}}}}
		\end{align*}
	\end{enumerate}
	and we have SOS-symmetric coefficient matrices $\{H'_{\gamma}: \gamma \in \Gamma\}$ such that the following conditions hold:
	\begin{enumerate}
		\item For all $U \in \mathcal{I}_{mid}$,  $H_{Id_{U}} \succeq 0$
		\item For all $U \in \mathcal{I}_{mid}$ and $\tau \in \mathcal{M}_U$,
		\[
		\left[ {\begin{array}{cc}
				\frac{1}{|Aut(U)|c(\tau)}H_{Id_{U}} & B_{norm}(\tau)H_{\tau} \\
				B_{norm}(\tau)H^T_{\tau} & \frac{1}{|Aut(U)|c(\tau)}H_{Id_{U}}
		\end{array}} \right] \succeq 0
		\]
		\item For all $U,V \in \mathcal{I}_{mid}$ where $w(U) > w(V)$ and all $\gamma \in \Gamma_{U,V}$, 
		\[
		c(\gamma)^2{N(\gamma)}^2{B(\gamma)^2}H^{-\gamma,\gamma}_{Id_{V}} \preceq H'_{\gamma}
		\]
	\end{enumerate}
	then 
	\[
	\Lambda \succeq \frac{1}{2}\left(\sum_{U \in \mathcal{I}_{mid}}{M^{fact}_{Id_U}{(H_{Id_U})}}\right) - 3\left(\sum_{U \in \mathcal{I}_{mid}}{\sum_{\gamma \in \Gamma_{U,*}}{\frac{d_{Id_{U}}(H'_{\gamma},H_{Id_{U}})}{|Aut(U)|c(\gamma)}}}\right)Id_{sym}
	\]
	If it is also true that 
	\[
	\sum_{U \in \mathcal{I}_{mid}}{M^{fact}_{Id_U}{(H_{Id_U})}} \succeq 6\left(\sum_{U \in \mathcal{I}_{mid}}{\sum_{\gamma \in \Gamma_{U,*}}{\frac{d_{Id_{U}}(H'_{\gamma},H_{Id_{U}})}{|Aut(U)|c(\gamma)}}}\right)Id_{sym}
	\]
	then $\Lambda \succeq 0$.
\end{theorem}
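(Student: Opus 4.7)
The strategy is a four-step reduction that trades $M^{orth}$ for $M^{fact}$ via the intersection analysis of Section \ref{intersectiontermanalysissection}, dominates the non-trivial middle factorized terms by the trivial-middle ones using condition 2, compares the intersection remainders against the matrices $H'_\gamma$ using condition 3, and finally absorbs the resulting truncation error into an $Id_{sym}$ term via Corollary \ref{distanceboundingcorollary}. To set up, I would symmetrize the sum over middle shapes as
\[
\Lambda = \sum_U M^{orth}_{Id_U}(H_{Id_U}) + \tfrac{1}{2}\sum_U \sum_{\tau \in \mathcal{M}_U}\bigl(M^{orth}_\tau(H_\tau) + M^{orth}_{\tau^T}(H_{\tau^T})\bigr),
\]
which is valid because $\tau \in \mathcal{M}_U$ iff $\tau^T \in \mathcal{M}_U$ and, by SOS-symmetry, $H_{\tau^T} = H_\tau^T$. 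Then apply Corollary \ref{notauintersectiontermcorollary} to each identity term and Corollary \ref{yestauintersectiontermcorollary} to each symmetric pair, so that after this step $\Lambda$ is lower-bounded by a combination of $M^{fact}$ terms minus a sum of \emph{intersection remainders} of the form $\frac{B(\gamma)^2 N(\gamma)^2 c(\gamma)}{|Aut(U_\gamma)|}M^{fact}_{Id_{U_\gamma}}(H^{-\gamma,\gamma}_{Id_U})$ indexed by $U$ and $\gamma \in \Gamma_{*,U}$, plus an $O(\epsilon')$ share of each leading $M^{fact}_{Id_U}(H_{Id_U})$ coming from the $\epsilon'$-terms of the two corollaries.

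\textbf{Step 2: middle $M^{fact}$ terms.} Condition 2 is exactly the hypothesis of Corollary \ref{cor:factorizedmatrixbound}. Applying it once with $H_\tau$ and once with $-H_\tau$ (the latter preserves positive semidefiniteness of the block matrix by conjugation with $\operatorname{diag}(I,-I)$) yields the two-sided bound $\pm(M^{fact}_\tau(H_\tau) + M^{fact}_{\tau^T}(H_{\tau^T})) \preceq \frac{2}{|Aut(U)|c(\tau)}M^{fact}_{Id_U}(H_{Id_U})$. Summing the lower bound over all $\tau \in \mathcal{M}_U$ and all $U$, and invoking condition 5 ($\sum_\tau \frac{1}{|Aut(U)|c(\tau)} < \epsilon'$), the total mass borrowed from $\sum_U M^{fact}_{Id_U}(H_{Id_U})$ is $O(\epsilon')$ of that leading term.

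\textbf{Steps 3--4: intersection remainders and truncation.} For each remainder, condition 3 asserts $c(\gamma)^2 N(\gamma)^2 B(\gamma)^2 H^{-\gamma,\gamma}_{Id_{V_\gamma}} \preceq H'_\gamma$. Since $M^{fact}_{Id_{U_\gamma}}$ has a trivial middle it preserves PSD order (writing $H = \sum_i v_i v_i^T$ gives $M^{fact}_{Id_{U_\gamma}}(H) = \sum_i (\sum_\sigma (v_i)_\sigma M_\sigma)(\sum_\sigma (v_i)_\sigma M_\sigma)^T \succeq 0$), so
\[
\tfrac{B(\gamma)^2 N(\gamma)^2 c(\gamma)}{|Aut(U_\gamma)|}\,M^{fact}_{Id_{U_\gamma}}(H^{-\gamma,\gamma}_{Id_{V_\gamma}}) \;\preceq\; \tfrac{1}{|Aut(U_\gamma)|c(\gamma)}\,M^{fact}_{Id_{U_\gamma}}(H'_\gamma).
\]
Writing $H'_\gamma = H_{Id_{U_\gamma}} + (H'_\gamma - H_{Id_{U_\gamma}})$ and invoking Corollary \ref{distanceboundingcorollary} for the perturbation gives $M^{fact}_{Id_{U_\gamma}}(H'_\gamma) \preceq M^{fact}_{Id_{U_\gamma}}(H_{Id_{U_\gamma}}) + d_{Id_{U_\gamma}}(H'_\gamma, H_{Id_{U_\gamma}})\,Id_{sym}$. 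Reindexing the outer sum by $U' := U_\gamma$ (so $\gamma$ now ranges over $\Gamma_{U',*}$) and invoking condition 4 ($\sum_{\gamma \in \Gamma_{U',*}} \frac{1}{|Aut(U')|c(\gamma)} < \epsilon'$), the $M^{fact}_{Id_{U'}}(H_{Id_{U'}})$ pieces fold back into the leading sum with an additional $O(\epsilon')$ loss, while the $Id_{sym}$ pieces aggregate exactly to a multiple of $\sum_{U'}\sum_{\gamma}\frac{d_{Id_{U'}}(H'_\gamma,H_{Id_{U'}})}{|Aut(U')|c(\gamma)}Id_{sym}$.

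\textbf{Main obstacle and finish.} The real technical content is the bookkeeping: the intersection remainders naturally live on $U$ via $\gamma \in \Gamma_{*,U}$, whereas condition 3 and condition 4 are phrased in terms of $U' = U_\gamma$ via $\gamma \in \Gamma_{U',*}$, so the sum exchange must be carried out carefully. Explicit constant tracking shows that for $\epsilon' \leq \tfrac{1}{20}$ the cumulative loss leaves the coefficient of $\sum_U M^{fact}_{Id_U}(H_{Id_U})$ at least $1 - O(\epsilon') > \tfrac{1}{2}$ and keeps the $Id_{sym}$ prefactor at most $2(1+\epsilon') < 3$, yielding the first claimed bound. The second claim ($\Lambda \succeq 0$) then follows immediately by substituting the additional hypothesis $\sum_U M^{fact}_{Id_U}(H_{Id_U}) \succeq 6 \sum_U \sum_\gamma \frac{d_{Id_U}(H'_\gamma,H_{Id_U})}{|Aut(U)|c(\gamma)} Id_{sym}$ into the first bound.
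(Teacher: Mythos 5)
Your proposal is correct and follows essentially the same route as the paper: combine Theorem~\ref{thm:nointersectionanalysis} (equivalently, your two-sided application of Corollary~\ref{cor:factorizedmatrixbound} via the $\operatorname{diag}(I,-I)$ conjugation) with Corollaries~\ref{notauintersectiontermcorollary} and~\ref{yestauintersectiontermcorollary}, pass from $H^{-\gamma,\gamma}_{Id_V}$ to $H'_\gamma$ using condition~3 of the second list and monotonicity of $M^{fact}_{Id_{U_\gamma}}$, split $H'_\gamma = H_{Id_{U_\gamma}} + (H'_\gamma - H_{Id_{U_\gamma}})$, and absorb the two pieces via Corollary~\ref{distanceboundingcorollary} and the reindexed sum bound. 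The only small slips are cosmetic: the inequality you quote as ``condition~4'' in steps~3--4 is actually the paper's condition~3 (the $\Gamma_{U',*}$ sum), and your constant ``$2(1+\epsilon')$'' for the $Id_{sym}$ prefactor should be $2+4\epsilon'$ as in the paper, though both are below $3$ for $\epsilon'\le\tfrac{1}{20}$.
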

\begin{proof}
	We make the following observations:
	\begin{enumerate}
		\item By Theorem \ref{thm:nointersectionanalysis},
		\[
		\sum_{U \in \mathcal{I}_{mid}}{M^{fact}_{Id_U}(H_{Id_U})} + \sum_{U \in \mathcal{I}_{mid}}{\sum_{\tau \in \mathcal{M}_U}{M^{fact}_{\tau}(H_{\tau})} \succeq (1-2\epsilon')\sum_{U \in \mathcal{I}_{mid}}{M^{fact}_{Id_U}(H_{Id_U})}}
		\]
		\item By Corollary \ref{notauintersectiontermcorollary}, 
		\[
		\sum_{U \in \mathcal{I}_{mid}}{\left(M^{fact}_{Id_{U}}(H_{Id_U}) - M^{orth}_{Id_{U}}(H_{Id_U})\right)} \preceq {\epsilon'}\sum_{U \in \mathcal{I}_{mid}}{M^{fact}_{Id_{U}}(H_{Id_U})} + 
		2\sum_{U \in \mathcal{I}_{mid}}{\sum_{\gamma \in \Gamma_{*,U}}{\frac{M^{fact}_{Id_{U_{\gamma}}}(H'_{\gamma})}{c(\gamma)|Aut(U_{\gamma})|}}}
		\]
		\item By Corollary \ref{yestauintersectiontermcorollary}, 
		\begin{align*}
			&\sum_{U \in \mathcal{I}_{mid}}{\sum_{\tau \in \mathcal{M}_U}{\left(M^{fact}_{\tau}(H_{\tau}) - M^{orth}_{\tau}(H_{\tau})\right)}} \preceq \\
			&\\
			&\sum_{U \in \mathcal{I}_{mid}}{\sum_{\tau \in \mathcal{M}_U}{\left(\frac{2{\epsilon'}}{|Aut(U)|c(\tau)}M^{fact}_{Id_{U}}(H_{Id_U}) + 
					4\sum_{\gamma \in \Gamma_{*,U}}{\frac{B(\gamma)^{2}N(\gamma)^{2}{c(\gamma)}}{|Aut(U_{\gamma})|\cdot|Aut(U)|c(\tau)}M^{fact}_{Id_{U_{\gamma}}}(H_{Id_{U}}^{-\gamma,\gamma})}\right)}} \preceq \\
			&2{\epsilon'}^2\sum_{U \in \mathcal{I}_{mid}}{M^{fact}_{Id_{U}}(H_{Id_U})} + 
			4{\epsilon'}\sum_{U \in \mathcal{I}_{mid}}{\sum_{\gamma \in \Gamma_{*,U}}{\frac{M^{fact}_{Id_{U_{\gamma}}}(H'_{\gamma})}{c(\gamma)|Aut(U_{\gamma})|}}}
		\end{align*}
		\item
		\begin{align*}
			&\sum_{U \in \mathcal{I}_{mid}}{\sum_{\gamma \in \Gamma_{*,U}}{\frac{M^{fact}_{Id_{U_{\gamma}}}(H'_{\gamma})}{c(\gamma)|Aut(U_{\gamma})|}}} = 
			\sum_{U \in \mathcal{I}_{mid}}{\sum_{\gamma \in \Gamma_{*,U}}{\frac{M^{fact}_{Id_{U_{\gamma}}}(H_{Id_{U_\gamma}}) + \left(M^{fact}_{Id_{U_{\gamma}}}(H'_{\gamma}) - M^{fact}_{Id_{U_{\gamma}}}(H_{Id_{U_\gamma}})\right)}{c(\gamma)|Aut(U_{\gamma})|}}} \preceq \\
			&\sum_{U \in \mathcal{I}_{mid}}{\sum_{\gamma \in \Gamma_{*,U}}{\frac{M^{fact}_{Id_{U_{\gamma}}}(H_{Id_{U_{\gamma}}})}{c(\gamma)|Aut(U_{\gamma})|}}} + 
			\left(\sum_{U \in \mathcal{I}_{mid}}{\sum_{\gamma \in \Gamma_{U,*}}{\frac{d_{Id_{U_{\gamma}}}(H'_{\gamma},H_{Id_{U_{\gamma}}})}{|Aut(U_{\gamma})|c(\gamma)}}}\right)Id_{sym} \preceq \\
			&{\epsilon'}\sum_{U \in \mathcal{I}_{mid}}{M^{fact}_{U}(H_{Id_{U}})} + 
			\left(\sum_{U \in \mathcal{I}_{mid}}{\sum_{\gamma \in \Gamma_{U,*}}{\frac{d_{Id_{U_{\gamma}}}(H'_{\gamma},H_{Id_{U_{\gamma}}})}{|Aut(U_{\gamma})|c(\gamma)}}}\right)Id_{sym}
		\end{align*}
	\end{enumerate}
	Putting everything together, 
	\begin{align*}
		&\Lambda = \sum_{U \in \mathcal{I}_{mid}}{M^{orth}_{Id_U}(H_{Id_U})} + \sum_{U \in \mathcal{I}_{mid}}{\sum_{\tau \in \mathcal{M}_U}{M^{orth}_{\tau}(H_{\tau})}} = \\
		&\sum_{U \in \mathcal{I}_{mid}}{M^{fact}_{Id_U}(H_{Id_U})} + \sum_{U \in \mathcal{I}_{mid}}{\sum_{\tau \in \mathcal{M}_U}{M^{fact}_{\tau}(H_{\tau})}} + 
		\sum_{U \in \mathcal{I}_{mid}}{\left(M^{fact}_{Id_{U}}(H_{Id_U}) - M^{orth}_{Id_{U}}(H_{Id_U})\right)} + \\
		&\sum_{U \in \mathcal{I}_{mid}}{\sum_{\tau \in \mathcal{M}_U}{\left(M^{fact}_{\tau}(H_{\tau}) - M^{orth}_{\tau}(H_{\tau}) \right)}} \succeq \\
		&(1 - 3{\epsilon'} - 2{\epsilon'}^2)\sum_{U \in \mathcal{I}_{mid}}{M^{fact}_{Id_U}(H_{Id_U})} - 
		(2 + 4\epsilon')\sum_{U \in \mathcal{I}_{mid}}{\sum_{\gamma \in \Gamma_{*,U}}{\frac{M^{fact}_{Id_{U_{\gamma}}}(H'_{\gamma})}{c(\gamma)|Aut(U_{\gamma})|}}} \succeq \\
		&(1 - 5{\epsilon'} - 6{\epsilon'}^2)\sum_{U \in \mathcal{I}_{mid}}{M^{fact}_{Id_U}(H_{Id_U})} - 
		(2 + 4\epsilon')\left(\sum_{U \in \mathcal{I}_{mid}}{\sum_{\gamma \in \Gamma_{U,*}}{\frac{d_{Id_{U_{\gamma}}}(H'_{\gamma},H_{Id_{U_{\gamma}}})}{|Aut(U_{\gamma})|c(\gamma)}}}\right)Id_{sym} \succeq \\
		&\frac{1}{2}\sum_{U \in \mathcal{I}_{mid}}{M^{fact}_{Id_U}(H_{Id_U})} - 
		3\left(\sum_{U \in \mathcal{I}_{mid}}{\sum_{\gamma \in \Gamma_{U,*}}{\frac{d_{Id_{U_{\gamma}}}(H'_{\gamma},H_{Id_{U_{\gamma}}})}{|Aut(U_{\gamma})|c(\gamma)}}}\right)Id_{sym}
	\end{align*}
\end{proof}

	\section{Choosing the functions $B_{norm}(\alpha)$, $B(\gamma)$, $N(\gamma)$, and $c(\alpha)$}\label{sec: choosing_funcs}

	In this subsection, we give functions $B_{norm}(\alpha)$, $B(\gamma)$, $N(\gamma)$, and $c(\alpha)$ which satisfy the conditions needed for our machinery.
\subsection{Theorem Statements}
Recall the following definitions from Section \ref{quantitativetheoremstatementsection}.
\begin{definition}
	We define $S_{\alpha}$ to be the leftmost minimum vertex separator of $\alpha$
\end{definition}
\begin{definition}[Simplified Isolated Vertices]
	Under our simplifying assumptions, we define 
	\[
	I_{\alpha} = \{v \in W_{\alpha}: v \text{ is not incident to any edges in } E(\alpha)\}
	\]
\end{definition}
\begin{theorem}[Simplified $B_{norm}(\alpha)$, $B(\gamma)$, $N(\gamma)$, and $c(\alpha)$]\label{simplifiedfunctionstheorem}
	Under our simplifying assumptions, for all $\epsilon, \epsilon' > 0$ and all $D_V \in \mathbb{N}$, if we take
	\begin{enumerate}
		\item $q = 3\left\lceil{{D_V}ln(n) + \frac{ln(\frac{1}{\epsilon})}{3} + {D_V}ln(5) + 3{D^2_V}ln(2)}\right\rceil$
		\item $B_{vertex} = 6{D_V}\sqrt[4]{2eq}$
		\item $B_{norm}(\alpha) = {B_{vertex}^{|V(\alpha) \setminus U_{\alpha}| + |V(\alpha) \setminus V_{\alpha}|}}n^{\frac{w(V(\alpha)) + w(I_{\alpha}) - w(S_{\alpha})}{2}}$
		\item $B(\gamma) = B_{vertex}^{|V(\gamma) \setminus U_{\gamma}| + |V(\gamma) \setminus V_{\gamma}|}n^{\frac{w(V(\gamma) \setminus U_{\gamma})}{2}}$
		\item $N(\gamma) = (3D_V)^{2|V(\gamma) \setminus V_{\gamma}| + |V(\gamma) \setminus U_{\gamma}|}$
		\item $c(\alpha) = \frac{5(3D_V)^{|U_{\alpha} \setminus V_{\alpha}| + |V_{\alpha} \setminus U_{\alpha}| + 2|E(\alpha)|}2^{|V(\alpha) \setminus (U_{\alpha} \cup V_{\alpha})|}}{\epsilon'}$
	\end{enumerate}
	then the following conditions hold:
	\begin{enumerate}
		\item With probability at least $(1-\epsilon)$, $\forall \alpha \in \mathcal{M}'$, $||M_{\alpha}|| \leq B_{norm}(\alpha)$
		\item For all $\tau \in \mathcal{M}'$, $\gamma \in \Gamma_{*,U_{\tau}} \cup \{Id_{U_{\tau}}\}$, $\gamma' \in \Gamma_{*,V_{\tau}} \cup \{Id_{V_{\tau}}\}$, and intersection patterns $P \in \mathcal{P}_{\gamma,\tau,\gamma'}$,  
		\[
		B_{norm}(\tau_{P}) \leq B(\gamma)B(\gamma')B_{norm}(\tau)
		\]
		\item For all composable $\gamma_1,\gamma_2$, $B(\gamma_1)B(\gamma_2) = B(\gamma_1 \circ \gamma_2)$.
		\item $\forall U \in \mathcal{I}_{mid}, \sum_{\gamma \in \Gamma_{U,*}}{\frac{1}{|Aut(U)|c(\gamma)}} < \epsilon'$ 
		\item $\forall V \in \mathcal{I}_{mid}, \sum_{\gamma \in \Gamma_{*,V}}{\frac{1}{|Aut(U_{\gamma})|c(\gamma)}} < \epsilon'$ 
		\item $\forall U \in \mathcal{I}_{mid}, \sum_{\tau \in \mathcal{M}_{U}}{\frac{1}{|Aut(U)|c(\tau)}} < \epsilon'$
		\item For all $\tau \in \mathcal{M}'$, $\gamma \in \Gamma_{*,U_{\tau}} \cup \{Id_{U_{\tau}}\}$, and $\gamma' \in \Gamma_{*,V_{\tau}} \cup \{Id_{V_{\tau}}\}$,
		\begin{align*}
			&\sum_{j>0}{\sum_{\gamma_1,\gamma'_1,\cdots,\gamma_j,\gamma'_j \in \Gamma_{\gamma,\gamma',j}}{\prod_{i:\gamma_i \text{ is non-trivial}}{\frac{1}{|Aut(U_{\gamma_i})|}}
					\prod_{i:\gamma'_i \text{ is non-trivial}}{\frac{1}{|Aut(U_{\gamma'_i})|}}}}\sum_{P_1,\cdots,P_j:P_i \in \mathcal{P}_{\gamma_i,\tau_{P_{i-1}},{\gamma'_i}^T}}{\left(\prod_{i=1}^{j}{N(P_i)}\right)} \\
			&\leq \frac{N(\gamma)N(\gamma')}
			{(|Aut(U_{\gamma})|)^{1_{\gamma \text{ is non-trivial}}}(|Aut(U_{\gamma'})|)^{1_{\gamma' \text{ is non-trivial}}}}
		\end{align*}
	\end{enumerate}
\end{theorem}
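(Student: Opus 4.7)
\begin{proofsketch}
My plan is to verify the seven conditions essentially one at a time, reusing standard graph-matrix norm bounds for (1) and treating (2) and (7) as the two technically serious points.

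For condition (1), I would invoke the trace power method norm bound for graph matrices from \cite{AMP20}: for a fixed shape $\alpha$ and even integer $q$, one has $\E\Tr((M_\alpha M_\alpha^T)^q) \le n^{O(qD_V)} (B_{vertex})^{2q(|V(\alpha)\setminus U_\alpha|+|V(\alpha)\setminus V_\alpha|)} n^{q(w(V(\alpha))+w(I_\alpha)-w(S_\alpha))}$ provided $B_{vertex}$ absorbs the combinatorial factor $\sqrt{|V(\alpha)|!}\cdot\sqrt[4]{q}$. Then Markov plus a union bound over $\alpha\in\mathcal M'$ (whose cardinality is at most $n^{3D_V} 5^{3D_V} 2^{9D_V^2}$ under the simplifying assumption that edges are unlabeled and of arity $2$) with the stated choice of $q$ turns this into the high-probability bound.

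Conditions (3), (4), (5), (6) are essentially bookkeeping. For (3), composition gives $V(\gamma_1\circ\gamma_2)\setminus U_{\gamma_1\circ\gamma_2} = (V(\gamma_1)\setminus U_{\gamma_1})\sqcup (V(\gamma_2)\setminus U_{\gamma_2})$ and similarly for $V\setminus V_\alpha$, so the exponents in $B$ simply add. For (4)--(6), I would group shapes $\gamma\in\Gamma_{U,*}$ (resp.\ $\Gamma_{*,V}$, $\mathcal M_U$) by the quantities $a=|V(\gamma)\setminus(U_\gamma\cup V_\gamma)|$, $b=|V_\gamma\setminus U_\gamma|$ (or $|U_\gamma\setminus V_\gamma|$) and $e=|E(\gamma)|$; there are at most $(3D_V)^{|U_\gamma\setminus V_\gamma|+|V_\gamma\setminus U_\gamma|+2e}\cdot 2^{a}$ shapes with given boundary data on a fixed vertex set and fixed edge count, which is exactly the factor that $c(\alpha)/\epsilon'$ is designed to cancel, so the geometric sums over $a,b,e$ are bounded by a constant times $\epsilon'$.

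Condition (2) requires checking what happens to vertex counts, isolated-vertex counts, and leftmost minimum separators under an intersection pattern $P$. The ribbon $\tau_P=R_1'\circ R_2'\circ R_3'$ coming from $\gamma,\tau,\gamma'$ satisfies $|V(\tau_P)\setminus U_{\tau_P}| \le |V(\gamma)\setminus U_\gamma| + |V(\tau)\setminus U_\tau| + |V(\gamma')\setminus V_{\gamma'}|$ and similarly on the right, while $S_{\tau_P}$ is no smaller than $S_\tau$ (the decomposition via leftmost/rightmost separators ensures the separator in $\tau_P$ cannot be lighter than that in $\tau$). Plugging in and identifying the extra factors with $B(\gamma)B(\gamma')$ yields exactly the claimed inequality; the one subtle point is the isolated-vertex term $w(I_{\tau_P})-w(I_\tau)$, which is nonpositive since no new vertices become isolated when merging.

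Condition (7) is where I expect the real work to lie. Here I would proceed by induction on $j$. The key inequality to prove, for each single step, is
\[
\sum_{P\in\mathcal P_{\gamma,\tau,{\gamma'}^T}} N(P) \cdot \frac{N(\tau_P)_{\text{suffix}}}{\cdots} \;\le\; N(\gamma)N(\gamma')
\]
after one iteration, where the right-hand side absorbs the step. By the bound $N(P)\le |V(\tau_P)|^{|V(\gamma)\setminus U_\gamma|+|V(\gamma')\setminus U_{\gamma'}|}\le (3D_V)^{|V(\gamma)\setminus U_\gamma|+|V(\gamma')\setminus U_{\gamma'}|}$ and the fact that the number of intersection patterns for fixed $\gamma,\tau,\gamma'$ is at most $(3D_V)^{|V(\gamma)\setminus U_\gamma|+|V(\gamma')\setminus U_{\gamma'}|}$ (each non-boundary vertex of $\gamma\cup\gamma'$ either creates a new vertex or is identified with one of at most $3D_V$ vertices), the per-step factor is $(3D_V)^{2(|V(\gamma)\setminus U_\gamma|+|V(\gamma')\setminus U_{\gamma'}|)}$, which matches the definition of $N$. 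Summing over the decompositions $\gamma=\gamma_j\circ\cdots\circ\gamma_1$, using the automorphism normalizations to kill orderings on the internal vertices and using that each $\gamma_i$ contributes strictly positive vertex count on its non-trivial side, turns the $j$-th level sum into a geometric series in $1/|Aut|$ which telescopes to the desired bound $N(\gamma)N(\gamma')/(|Aut(U_\gamma)|^{\mathbf 1_{\gamma\ne Id}}|Aut(U_{\gamma'})|^{\mathbf 1_{\gamma'\ne Id}})$. Controlling this iterated sum carefully, and checking that the constants absorb into $(3D_V)^{\cdot}$ rather than blowing up with $j$, is the main obstacle; I expect to need the inductive statement to carry the stronger invariant that the partial sums up to level $j$ are bounded by $N(\gamma_{\le j})N(\gamma'_{\le j})$ times an appropriate automorphism factor, so the composition step with level $j+1$ goes through.
\end{proofsketch}
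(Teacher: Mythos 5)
The bookkeeping parts of your sketch (conditions (3)--(6), and the union-bound reduction of condition (1) to the graph-matrix norm bound of \cite{AMP20}) track the paper's proof closely and are fine. The serious problem is condition (2), whose heart is the \emph{intersection tradeoff lemma}
\[
w(V(\tau_P)) + w(I_{\tau_P}) - w(S_{\tau_P,\min}) \;\leq\; w(V(\tau)) + w(I_{\tau}) - w(S_{\tau,\min}) + w(V(\gamma)\setminus U_\gamma) + w(V(\gamma')\setminus U_{\gamma'}),
\]
and both of the monotonicity claims you use to dispense with it are false. The separator $S_{\tau_P}$ is a minimum separator between $U_{\tau_P}=U_\gamma$ and $V_{\tau_P}=V_{\gamma'}$, not between $U_\tau$ and $V_\tau$, and after merges it can be strictly \emph{lighter} than $S_\tau$; there is no ``$S_{\tau_P}$ is no smaller than $S_\tau$.'' Likewise $I_{\tau_P}$ can gain weight: since $E(\tau_P)=E(\gamma)\cup E(\tau)\cup E(\gamma')$ as a multiset, intersections create double edges (which vanish in the $\pm1$ input, and in the general definition still count toward $I_{\tau_P}$), so interior vertices that were non-isolated in $\tau$ can become isolated in $\tau_P$. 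The whole point of the lemma is that the losses in $|V(\tau_P)|$ from merging, the possible gains in $w(I_{\tau_P})$, and the changes in $w(S_{\tau_P})$ trade off against each other, and establishing this requires a real argument --- the paper's generalized proof runs a max-flow/min-cut analysis on a modified flow graph, carefully accounting per-vertex for $f_{\text{excess}}$ and $f_{\text{imbalance}}$, and in the simplified setting it cites \cite[Lemma~7.12]{BHKKMP16}. Your sketch replaces that with two inequalities that do not hold.

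On condition (7), your per-step induction differs from the paper's argument, which is a single \emph{global} count: it encodes each intersection-pattern sequence by recording, for each vertex $v\in(V(\gamma)\setminus V_\gamma)\cup(V(\gamma')\setminus V_{\gamma'})$, the first step $i\in[j]$ at which $v$ appears (at most $j$ choices) and the vertex it merges with (at most $3D_V$ choices), getting $(3jD_V)^{\cdots}$, and then sums over $j$ using that $j^{j_{\max}}(2/3)^{j_{\max}}<1$ to swallow the $j$-dependence into an extra $(3D_V)^{\cdots}$. An induction can in principle reproduce this, but your ``per-step factor $(3D_V)^{2(|V(\gamma)\setminus U_\gamma|+|V(\gamma')\setminus U_{\gamma'}|)}$'' does not match $N(\gamma)N(\gamma')=(3D_V)^{2(|V(\gamma)\setminus V_\gamma|+|V(\gamma')\setminus V_{\gamma'}|)+|V(\gamma)\setminus U_\gamma|+|V(\gamma')\setminus U_{\gamma'}|}$ (you have $U_\gamma$ where $V_\gamma$ belongs and the factor $2$ on the wrong term), and more importantly the inductive step needs to carry the $j$-dependence of the ``first-step'' count, which your invariant does not explicitly track. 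The direct encoding argument avoids these pitfalls and I would recommend it.
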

\subsubsection{General functions $B_{norm}(\alpha)$, $B(\gamma)$, $N(\gamma)$, and $c(\alpha)$*}
Recall the following definitions from Section \ref{generalmaintheoremstatementsection}.
\begin{definition}[$S_{\alpha,min}$ and $S_{\alpha,max}$]
	Given a shape $\alpha \in \mathcal{M}'$, define $S_{\alpha,min}$ to be the leftmost minimum vertex separator of $\alpha$ if all edges with multiplicity at least $2$ are deleted and define $S_{\alpha,max}$ to be the leftmost minimum vertex separator of $\alpha$ if all edges with multiplicity at least $2$ are present.
\end{definition}
\begin{definition}[General $I_{\alpha}$]
	Given a shape $\alpha$, define $I_{\alpha}$ to be the set of vertices in $V(\alpha) \setminus (U_{\alpha} \cup V_{\alpha})$ such that all edges incident with that vertex have multplicity at least $2$.
\end{definition}
\begin{definition}[$B_{\Omega}$]
	We take $B_{\Omega}(j)$ to be a non-decreasing function such that for all $j \in \mathbb{N}$, $E_{\Omega}[x^{j}] \leq B_{\Omega}(j)^{j}$ 
\end{definition}
\begin{definition}
	For all $i$, we define $h^{+}_i$ to be the polynomial $h_i$ where we make all of the coefficients have positive sign.
\end{definition}
\begin{lemma}
	If $\Omega = N(0,1)$ then we can take $B_{\Omega}(j) = \sqrt{j}$ and we have that 
\end{lemma}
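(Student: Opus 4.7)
The lemma statement is truncated in the excerpt after ``we have that,'' but the identically-worded lemma that appears earlier in the paper states the conclusion $h^{+}_j(x) \leq \tfrac{1}{\sqrt{j!}}(x^2+j)^{j/2} \leq \bigl(\tfrac{e}{j}(x^2+j)\bigr)^{j/2}$, and I will plan a proof of that completion. The argument splits naturally into (i) the claim that $B_{\Omega}(j)=\sqrt{j}$ is a valid choice, (ii) the first Hermite inequality, and (iii) the second Hermite inequality. Part (i) is a one-line moment computation: odd Gaussian moments vanish, and for even $j=2m$, $\mathbb{E}_{x\sim N(0,1)}[x^{2m}]=(2m-1)!!=\prod_{i=1}^{m}(2i-1)\leq(2m)^{m}=j^{j/2}=(\sqrt{j})^{j}$, which is exactly the definitional requirement on $B_{\Omega}$.

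For the first Hermite inequality, I would start from the explicit expansion $h_j^{+}(x)\sqrt{j!}=\sum_{k=0}^{\lfloor j/2\rfloor}\tfrac{j!}{(j-2k)!\,2^{k}k!}x^{j-2k}$, obtained by taking absolute values of the coefficients of the probabilists' Hermite polynomial $He_j$, and then compare with $(x^{2}+j)^{j/2}$ coefficient-by-coefficient for $x\geq 0$. When $j=2m$ is even, $(x^{2}+j)^{j/2}=\sum_{k}\binom{m}{k}j^{k}x^{j-2k}$ is a polynomial and the key termwise inequality $\tfrac{j!}{(j-2k)!\,2^{k}k!}\leq\binom{m}{k}j^{k}$ follows by pairing the $2k$ consecutive factors in $\tfrac{(2m)!}{(2m-2k)!}$ into even-indexed and odd-indexed groups: the even group produces $2^{k}\cdot\tfrac{m!}{(m-k)!}$ exactly, and the $k$ odd-indexed factors are each at most $2m$, giving the bound $(2m)^{k}$. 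The case $x<0$ reduces to $x\geq 0$ because $h_j^{+}$ has uniform parity of powers.

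When $j$ is odd, $(x^{2}+j)^{j/2}$ is no longer a polynomial in $x$, so I would instead dominate $h_j^{+}(x)\sqrt{j!}$ by the \emph{polynomial} $x\cdot(x^{2}+j)^{(j-1)/2}$, applying the same factorial-pairing argument (now with $j=2m+1$: pair the factors $(2m+1)(2m)(2m-1)\cdots(2m-2k+2)$ again into odd-indexed and even-indexed groups bounded by $(2m+1)^k$ and $2^k\cdot m!/(m-k)!$ respectively). The inequality $\sqrt{x^{2}+j}\geq x$ for $x\geq 0$ then yields $x(x^{2}+j)^{(j-1)/2}\leq(x^{2}+j)^{j/2}$, closing the odd case; and for $x<0$ the bound is vacuous because $h_j^{+}(x)\leq 0$ while the right-hand side is nonnegative.

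The second Hermite inequality is an immediate consequence of Stirling's lower bound $j!\geq (j/e)^{j}$, which rearranges to $\tfrac{1}{\sqrt{j!}}\leq(e/j)^{j/2}$. The only genuinely technical point in the whole argument is the odd-$j$ case, where $(x^{2}+j)^{j/2}$ is not polynomial; the trick of passing through $x\cdot(x^{2}+j)^{(j-1)/2}$ reduces that case to the same combinatorial pairing used in the even case, which I expect to be the main obstacle but a mild one.
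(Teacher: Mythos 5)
Your completion of the truncated statement is the right one: the same lemma appears earlier in the paper with the conclusion $h^{+}_j(x) \leq \tfrac{1}{\sqrt{j!}}(x^2+j)^{j/2} \leq (\tfrac{e}{j}(x^2+j))^{j/2}$. Note, however, that the paper does not give its own proof of this fact; it simply defers to \cite[Lemma 8.15]{AMP20}, so there is no internal argument to compare against. Your self-contained proof is correct. Part (i) is the standard Gaussian moment bound $(2m-1)!! \le (2m)^m$. Part (ii) is a clean coefficient-by-coefficient comparison: for even $j=2m$ you expand $(x^2+j)^{m}$ binomially and verify $\frac{j!}{(j-2k)!\,2^k k!} \le \binom{m}{k}(2m)^k$ by splitting the $2k$ consecutive integers in $\frac{(2m)!}{(2m-2k)!}$ into the even ones (giving exactly $2^k m!/(m-k)!$) and the odd ones (each at most $2m$); the odd-$j$ case is handled correctly by passing through the intermediate polynomial $x(x^2+j)^{(j-1)/2}$ and using $x \le \sqrt{x^2+j}$ on $x\ge 0$, with the sign observation disposing of $x<0$. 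Part (iii) is the elementary Stirling bound $j! \ge (j/e)^j$. This is a perfectly acceptable elementary replacement for the external citation, and I see no gaps.
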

\begin{theorem}[General $B_{norm}(\alpha)$, $B(\gamma)$, $N(\gamma)$, and $c(\alpha)$]\label{generalfunctionstheorem}
	For all $\epsilon, \epsilon' > 0$ and all $D_V,D_E \in \mathbb{N}$, if we take
	\begin{enumerate}
		\item $q = \left\lceil{3{D_V}ln(n) + ln(\frac{1}{\epsilon}) + {(3D_V)^k}ln(D_E + 1) + 3{D_V}ln(5)}\right\rceil$
		\item $B_{vertex} = 6q{D_V}$
		\item $B_{edge}(e) = 2h^{+}_{l_e}(B_{\Omega}(6{D_V}D_E))
		\max_{j \in [0,3{D_V}D_E]}{\left\{\left(h^{+}_{j}(B_{\Omega}(2qj))\right)^{\frac{l_e}{\max{\{j,l_e\}}}}\right\}}$
		\item $B_{norm}(\alpha) = 
		2e{B_{vertex}^{|V(\alpha) \setminus U_{\alpha}| + |V(\alpha) \setminus V_{\alpha}|}}\left(\prod_{e \in E(\alpha)}{B_{edge}(e)}\right)n^{\frac{w(V(\alpha)) + w(I_{\alpha}) - w(S_{\alpha})}{2}}$
		\item $B(\gamma) = B_{vertex}^{|V(\gamma) \setminus U_{\gamma}| + |V(\gamma) \setminus V_{\gamma}|}\left(\prod_{e \in E(\gamma)}{B_{edge}(e)}\right)n^{\frac{w(V(\gamma) \setminus U_{\gamma})}{2}}$
		\item $N(\gamma) = (3D_V)^{2|V(\gamma) \setminus V_{\gamma}| + |V(\gamma) \setminus U_{\gamma}|}$
		\item $c(\alpha) = \frac{5(3{t_{max}}D_V)^{|U_{\alpha} \setminus V_{\alpha}| + |V_{\alpha} \setminus U_{\alpha}| + k|E(\alpha)|}(2t_{max})^{|V(\alpha) \setminus (U_{\alpha} \cup V_{\alpha})|}}{\epsilon'}$
	\end{enumerate}
	then the following conditions hold:
	\begin{enumerate}
		\item With probability at least $(1-\epsilon)$, $\forall \alpha \in \mathcal{M}'$, $||M_{\alpha}|| \leq B_{norm}(\alpha)$
		\item For all $\tau \in \mathcal{M}'$, $\gamma \in \Gamma_{*,U_{\tau}} \cup \{Id_{U_{\tau}}\}$, $\gamma' \in \Gamma_{*,V_{\tau}} \cup \{Id_{V_{\tau}}\}$, and intersection patterns $P \in \mathcal{P}_{\gamma,\tau,\gamma'}$,  
		\[
		B_{norm}(\tau_{P}) \leq B(\gamma)B(\gamma')B_{norm}(\tau)
		\]
		\item For all composable $\gamma_1,\gamma_2$, $B(\gamma_1)B(\gamma_2) = B(\gamma_1 \circ \gamma_2)$.
		\item $\forall U \in \mathcal{I}_{mid}, \sum_{\gamma \in \Gamma_{U,*}}{\frac{1}{|Aut(U)|c(\gamma)}} < \epsilon'$ 
		\item $\forall V \in \mathcal{I}_{mid}, \sum_{\gamma \in \Gamma_{*,V}}{\frac{1}{|Aut(U_{\gamma})|c(\gamma)}} < \epsilon'$ 
		\item $\forall U \in \mathcal{I}_{mid}, \sum_{\tau \in \mathcal{M}_{U}}{\frac{1}{|Aut(U)|c(\tau)}} < \epsilon'$
		\item For all $\tau \in \mathcal{M}'$, $\gamma \in \Gamma_{*,U_{\tau}} \cup \{Id_{U_{\tau}}\}$, and $\gamma' \in \Gamma_{*,V_{\tau}} \cup \{Id_{V_{\tau}}\}$,
		\begin{align*}
			&\sum_{j>0}{\sum_{\gamma_1,\gamma'_1,\cdots,\gamma_j,\gamma'_j \in \Gamma_{\gamma,\gamma',j}}{\prod_{i:\gamma_i \text{ is non-trivial}}{\frac{1}{|Aut(U_{\gamma_i})|}}
					\prod_{i:\gamma'_i \text{ is non-trivial}}{\frac{1}{|Aut(U_{\gamma'_i})|}}}}\sum_{P_1,\cdots,P_j:P_i \in \mathcal{P}_{\gamma_i,\tau_{P_{i-1}},{\gamma'_i}^T}}{\left(\prod_{i=1}^{j}{N(P_i)}\right)} \\
			&\leq \frac{N(\gamma)N(\gamma')}
			{(|Aut(U_{\gamma})|)^{1_{\gamma \text{ is non-trivial}}}(|Aut(U_{\gamma'})|)^{1_{\gamma' \text{ is non-trivial}}}}
		\end{align*}
	\end{enumerate}
\end{theorem}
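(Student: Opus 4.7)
The plan is to verify each of the seven listed conditions separately, since they concern different aspects of the chosen functions. Conditions 1 (norm bounds), 2 (norm bounds under intersection), and 7 (recursive sum over intersection patterns) are the substantive ones; conditions 3--6 are essentially counting and algebraic identities baked into the very definitions of $B(\gamma)$ and $c(\alpha)$.

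For condition 1, I would invoke the graph matrix trace power method developed in \cite{AMP20}. Concretely, for $\alpha \in \mathcal{M}'$ one computes $\E[\Tr((M_\alpha M_\alpha^T)^q)]$ by expanding over $2q$-tuples of ribbons that glue head-to-tail, and bounds each gluing by an encoding argument that gives $\|M_\alpha\| \le 2e \cdot B_{vertex}^{|V(\alpha)\setminus U_\alpha|+|V(\alpha)\setminus V_\alpha|}\bigl(\prod_{e\in E(\alpha)} B_{edge}(e)\bigr) n^{(w(V(\alpha))+w(I_\alpha)-w(S_{\alpha,min}))/2}$. Here the $n$-exponent comes from the minimum vertex separator of the glued graph (which yields $S_{\alpha,min}$ rather than $S_{\alpha,max}$ because only edges of multiplicity $\ge 2$ survive in the glued Fourier character expectation), the $I_\alpha$ correction handles vertices all of whose incident edges have multiplicity $\ge 2$, and the factor $B_{edge}(e)$ is calibrated to absorb $\E_{\Omega}[h_{l_e}(x)^j]$ for all $j\le 2q$ and $j\le 3D_V D_E$. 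The choice $q = \lceil 3D_V\ln n + \ln(1/\epsilon) + (3D_V)^k\ln(D_E+1) + 3D_V\ln 5\rceil$ is calibrated so that, after a Markov-to-norm amplification and a union bound over the at most $n^{3D_V}(D_E+1)^{(3D_V)^k}\cdot 5^{3D_V}$ shapes in $\mathcal{M}'$, the failure probability is $\le \epsilon$.

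Conditions 3, 4, 5, 6 follow by direct calculation. Condition 3 holds because the exponents defining $B(\gamma)$ are additive under composition: $|V(\gamma_1\circ\gamma_2)\setminus U_{\gamma_1\circ\gamma_2}| = |V(\gamma_1)\setminus U_{\gamma_1}| + |V(\gamma_2)\setminus U_{\gamma_2}|$ since the shared copy of $V_{\gamma_1}=U_{\gamma_2}$ is subtracted identically on both sides, and an identical identity holds for the $V$-side and for the edge product. For conditions 4--6, the number of proper left or middle shapes with a fixed $U$ (and, for middles, fixed $V$), at most $D_V$ vertices, and edge labels at most $D_E$ is controlled by counting: the number of non-$U,V$ vertices and their types, followed by the number of edges and their labels, followed by the permutations attaching $U,V$ into the shape. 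Each of these combinatorial choices is exactly weighted by the matching factor in $c(\alpha) = 5(3t_{max}D_V)^{|U_\alpha\triangle V_\alpha|+k|E(\alpha)|}(2t_{max})^{|V(\alpha)\setminus(U_\alpha\cup V_\alpha)|}/\epsilon'$, so dividing by $c(\alpha)$ and summing yields a geometric series bounded by $\epsilon'$ by the $1/\epsilon'$ factor. The only care needed is the $|Aut(U)|$ denominator, which is absorbed by the fact that we are counting shapes with ordered $U_\alpha$ and $V_\alpha$.

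Condition 2 is a shape-surgery computation on $\tau_P$. The shape $\tau_P$ arises by taking $\gamma\circ\tau\circ\gamma'^T$ and performing the identifications prescribed by $P$; its vertex/edge/separator data decomposes cleanly into the $\gamma$-, $\tau$-, and $\gamma'$-contributions, and the $B_{vertex}$ and $B_{edge}$ exponents on the two sides of the claimed inequality match termwise. What remains is the $n^{(\cdot)/2}$ factor, which reduces to showing $w(V(\tau_P))+w(I_{\tau_P})-w(S_{\tau_P,min}) \le 2w(V(\gamma)\setminus U_\gamma) + 2w(V(\gamma')\setminus U_{\gamma'}) + w(V(\tau)) + w(I_\tau) - w(S_{\tau,min})$; this follows because the minimum separator of $\tau_P$ dominates the induced minimum separator inside the $\tau$-sub-piece plus the full $U$-sides of $\gamma$ and $\gamma'$. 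Condition 7 is the main obstacle. This is a recursive bound on sums over chains of decompositions, each decorated with an intersection pattern, and I would prove it by induction on $j$, using at each step the bound $N(P_i)\le|V(\tau_{P_{i-1}})|^{|V(\gamma_i)\setminus U_{\gamma_i}|+|V(\gamma'_i)\setminus U_{\gamma'_i}|}$ together with the definition of $N(\gamma)$ to collapse one level of the recursion. The counting of admissible $(\gamma_i,\gamma'_i,P_i)$ triples is controlled by the same vertex/edge counts as in conditions 4--6, but now weighted by $1/|Aut(U_{\gamma_i})|$ and by $N(P_i)$ rather than $c(\gamma_i)$. The hard part will be the bookkeeping of the $|Aut|$ factors across all levels so that the final bound reads exactly $N(\gamma)N(\gamma')$ divided by the correct automorphism factor; this is where the specific polynomial form of $N(\gamma)$ and the exponent $2|V(\gamma)\setminus V_\gamma|+|V(\gamma)\setminus U_\gamma|$ (rather than a simpler exponent) is forced by the recursion closing on itself.
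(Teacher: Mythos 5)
Your high-level structure is right: you correctly identify that conditions 3--6 are counting/algebraic identities, that condition 1 traces back to the \cite{AMP20} trace-method bounds plus a union bound with $q$ calibrated to $\mathcal{M}'$'s cardinality, and that condition 7 is the recursive intersection-pattern count. For condition 7 you propose induction on $j$, while the paper gives a direct encoding argument (each vertex of $V(\gamma)\setminus V_\gamma$ records the first level $i$ at which it disappears and, if it intersects, which of the at most $3D_V$ vertices it merges with, yielding $(3jD_V)^{j_{max}}$ and then telescoping the $N(P_i)$ factors); your inductive route is plausible but strictly harder to close. For condition 1, you do not address the fact that $\mathcal{M}'$ contains improper shapes with multi-edges of multiplicity up to $3D_V$, while the norm bound of \cite{AMP20} applies only to proper shapes; the paper inserts an explicit multi-edge decomposition $M_\alpha = \sum_i c_i M_{\alpha_i}$ and a Hermite-coefficient $\ell_1$ bound $\sum_i |c_i| \leq \prod_{e} 2h^+_{l_e}(B_\Omega(2m\,l_{max}))$, which is precisely the first factor of $B_{edge}(e)$; this step is essential and is missing.

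The genuine gap is at condition 2. You reduce the inequality to the claim
\[
w(V(\tau_P))+w(I_{\tau_P})-w(S_{\tau_P,min}) \leq w(V(\gamma)\setminus U_\gamma) + w(V(\gamma')\setminus U_{\gamma'}) + w(V(\tau)) + w(I_\tau) - w(S_{\tau,min})
\]
(note your version carries extraneous factors of $2$, which would in any case be too weak to recover the product $B(\gamma)B(\gamma')B_{norm}(\tau)$), and then assert it ``follows because the minimum separator of $\tau_P$ dominates the induced minimum separator inside the $\tau$-sub-piece plus the full $U$-sides of $\gamma$ and $\gamma'$.'' This is not an argument but a paraphrase of the conclusion, and it is not even heuristically adequate: an intersection can create \emph{new} isolated vertices (increasing $w(I_{\tau_P})$ beyond $w(I_\tau)$) and can also destroy separator structure so that $S_{\tau_P,min}$ drops below the induced-separator size you appeal to, with both effects pushing the left side \emph{up}. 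The paper isolates this as a standalone Generalized Intersection Tradeoff Lemma and proves it with a dedicated max-flow/min-cut construction: it builds flow graphs $H^{I_L,I_R}_\gamma$, $H^{I_L,I_R}_\tau$, $H^{I_L,I_R}_{\gamma'^T}$ in which intersected vertices are promoted to sources or sinks, shows each admits flow of value $w(U_\gamma)$, $\geq w(S_{\tau,min})$, and $w(V_{\gamma'^T})$ respectively, and then repairs the superposition $F'_1+F'_2+F'_3$ into a legal flow from $U$-side to $V$-side by charging each vertex's imbalance and excess against a per-vertex budget --- one unit of weight for vertices in $I_{LM}, I_{MR}, I_{LR}$ and two for $I_{LMR}$, with exactly one unit refunded when the vertex becomes isolated (accounting for the $w(I_{\tau_P})-w(I_\tau)$ term). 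Without some version of this accounting there is no way to certify the inequality, and your one-sentence gloss does not suggest you have identified where the credit for newly-isolated vertices comes from. This is the largest missing piece and it is where essentially all the technical difficulty of the theorem lives.
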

\begin{remark}
	Recall that if $\Omega = N(0,1)$ then we may take $B_{\Omega}(j) = \sqrt{j}$ and we have that 
	\[
	h^{+}_j(x) \leq \frac{1}{\sqrt{j!}}(x^2 + j)^{\frac{j}{2}} \leq \left(\frac{e}{j}(x^2 + j)\right)^{\frac{j}{2}}
	\]
	Thus, when $\Omega = N(0,1)$ we can take 
	\[
	B_{edge}(e) = 2\left(\frac{e}{l_e}(6{D_V}D_E + l_e)\right)^{l_e}\left(e(6{D_V}{D_E}q + 1)\right)^{l_e} \leq \left(400{D^2_V}{D^2_E}q\right)^{l_e}
	\]
\end{remark}
\subsection{Choosing $B_{norm}(\alpha)$}
We need matrix norm bounds which hold for all $\alpha \in \mathcal{M}'$. For convenience, we recall the definition of $\mathcal{M}'$ below.
\begin{definition}[$\mathcal{M}'$]
	We define $\mathcal{M}'$ to be the set of all shapes $\alpha$ such that
	\begin{enumerate}
		\item[1.] $|V(\alpha)| \leq 3D_V$
		\item[2.*] $\forall e \in E(\alpha), l_e \leq D_E$
		\item[3.*] All edges $e \in E(\alpha)$ have multiplicity at most $3D_V$.
	\end{enumerate}
\end{definition}
To obtain such norm bounds, we start with the norm bounds in the graph matrix norm bound paper. We then modify these bounds as follows:
\begin{enumerate}
	\item We make the bounds more compatible with the conditions of our machinery. To do this, we upper bound many of the terms in the norm bound by $B_{vertex}^{|V(\alpha) \setminus U_{\alpha}| + |V(\alpha) \setminus V_{\alpha}|}$ where $B_{vertex}$ is a function of our parameters. In general, we will also need to upper bound some of the terms by $\prod_{e \in E(\alpha)}(B_{edge}(e))$ where $B_{edge}(e)$ is a function of $l_e$, $\Omega$, and our parameters.
	\item We generalize the bounds so that they apply to improper shapes as well as proper shapes. Under our simplifying assumptions, all we need to do here is to take isolated vertices into account. In general, we also need to handle multi-edges. 
\end{enumerate}
\subsubsection{Simplified $B_{norm}(\alpha)$}
Under our simplifying assumptions, we start with the following norm bound from the updated graph matrix norm bound paper \cite{AMP20}:
\begin{theorem}[Simplified Graph Matrix Norm Bounds]\label{originalsimplifiednormbounds}
	Under our simplifying assumptions, for all $\epsilon > 0$ and all proper shapes $\alpha$, taking $c_{\alpha} = |V(\alpha) \setminus (U_{\alpha} \cup V_{\alpha})| + |S_{\alpha} \setminus (U_{\alpha} \cap V_{\alpha})|$,
	\[
	Pr\left(||M_{\alpha}|| > (2|V_{\alpha} \setminus (U_{\alpha} \cap V_{\alpha})|)^{|V(\alpha) \setminus (U_{\alpha} \cap V_{\alpha})|}(2eq)^{\frac{c_{\alpha}}{2}}n^{\frac{w(V(\alpha)) - w(S_{\alpha})}{2}}\right) < \epsilon
	\]
	where $q = 3\left\lceil\frac{ln(\frac{n^{w(S_{\alpha})}}{\epsilon})}{3c_{\alpha}}\right\rceil$
\end{theorem}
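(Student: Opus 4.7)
The plan is to apply the trace moment method. For any positive integer $q$, we have $\norm{M_{\alpha}}^{2q} \leq \Tr((M_{\alpha}M_{\alpha}^T)^q)$, so Markov's inequality gives $\Pr(\norm{M_{\alpha}} > t) \leq t^{-2q}\E[\Tr((M_{\alpha}M_{\alpha}^T)^q)]$. I would expand this trace expectation into a sum over closed ribbon walks, bound the number of surviving walks, then pick $t$ matching the $2q$-th root of the resulting expression while choosing $q$ as in the statement so that the Markov factor absorbs an $n^{w(S_{\alpha})}$ overhead coming from the cyclic boundary of the trace, leaving failure probability below $\epsilon$.

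The heart of the argument is combinatorial: $\E[\Tr((M_{\alpha}M_{\alpha}^T)^q)]$ expands as a sum over $2q$-tuples of ribbons $(R_1, R_1', \ldots, R_q, R_q')$ with $R_i$ of shape $\alpha$, $R_i'$ of shape $\alpha^T$, consecutive ribbons composable, and the cycle closing. Since the input is $\pm 1$-valued, only tuples in which every edge of the combined multi-graph appears with even multiplicity survive. To count such closed walks I would use a canonical encoding: traverse the $2q$ blocks in order, classifying each vertex appearance as either \emph{fresh} (a new label in $[n]$) or a \emph{return} (equal to a previously seen vertex). The minimum-separator property of $S_{\alpha}$ ensures that all but roughly $w(V(\alpha)) - w(S_{\alpha})$ labels per block can be recovered once the returns are specified, giving an aggregate fresh-label count of $n^{q(w(V(\alpha)) - w(S_{\alpha}))}$; the number of ways to place returns and match each one to a prior vertex contributes the $(2eq)^{q c_{\alpha}}$ factor, together with the polynomial prefactor $(2|V_{\alpha} \setminus (U_{\alpha} \cap V_{\alpha})|)^{2q|V(\alpha) \setminus (U_{\alpha} \cap V_{\alpha})|}$. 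The extra $n^{w(S_{\alpha})}$ overhead comes from selecting the matrix index at which the cyclic trace begins.

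The principal obstacle is making the encoding precise enough to extract the sharp $(2eq)^{c_{\alpha}/2}$ exponent per block: one must argue that every \emph{surprise} (a step in which a nominally fresh vertex coincides with an earlier one) corresponds to a vertex lying on (or crossing) the leftmost minimum separator, and that the total number of surprises across all $2q$ blocks is at most $q \cdot c_{\alpha}$. This matching between walk structure and separator structure, together with the spider/trapezoid decomposition that cleanly assigns each return to a separator vertex, is exactly the technical content of the graph matrix norm bound in \cite{AMP20}, and for that reason I would invoke their theorem directly rather than redo the combinatorial bookkeeping from scratch.
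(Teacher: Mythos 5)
The paper itself does not prove this statement: the sentence immediately preceding it reads ``we start with the following norm bound from the updated graph matrix norm bound paper \cite{AMP20},'' i.e.\ Theorem~\ref{originalsimplifiednormbounds} is imported as a black box. Your proposal ends by doing exactly the same thing---invoking \cite{AMP20} rather than re-deriving the combinatorics---so in the only sense that matters here, your approach coincides with the paper's. Your preceding sketch of the trace moment method (expanding $\E\bigl[\Tr((M_{\alpha}M_{\alpha}^{T})^{q})\bigr]$ over $2q$-tuples of ribbons, the fresh/return encoding, the separator controlling the number of surprises, Markov with $q$ tuned to absorb the $n^{w(S_{\alpha})}$ trace overhead) is a reasonable informal account of what drives the bound in \cite{AMP20}, but it is not offered as, and does not constitute, a self-contained derivation of the specific prefactor $(2|V_{\alpha}\setminus(U_{\alpha}\cap V_{\alpha})|)^{|V(\alpha)\setminus(U_{\alpha}\cap V_{\alpha})|}(2eq)^{c_{\alpha}/2}$; since the paper doesn't derive it either, there is nothing further to compare against.
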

\begin{corollary}\label{firsttweakedsimplifiednormbound}
	For all shapes $\alpha$ and all $\epsilon > 0$, 
	\[
	Pr\left(||M_{\alpha}|| > \left(2|V_{\alpha}|\sqrt[4]{2eq}\right)^{|V(\alpha) \setminus U_{\alpha}| + |V(\alpha) \setminus V_{\alpha}|}n^{\frac{w(V(\alpha)) + w(I_{\alpha}) - w(S_{\alpha})}{2}}\right) < \epsilon
	\]
	where $q = 3\left\lceil\frac{ln(\frac{n^{w(S_{\alpha})}}{\epsilon})}{3c_{\alpha}}\right\rceil$.
\end{corollary}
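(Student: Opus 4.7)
The plan is to reduce Corollary \ref{firsttweakedsimplifiednormbound} to Theorem \ref{originalsimplifiednormbounds} in two steps: first reduce to the proper-shape case by peeling off isolated vertices, and then rewrite the norm bound for proper shapes into the cleaner form advertised in the corollary.

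First, I would handle improper shapes. Let $\alpha'$ be the shape obtained from $\alpha$ by deleting all vertices in $I_{\alpha}$. Since the vertices of $I_{\alpha}$ lie outside $U_{\alpha} \cup V_{\alpha}$ and are incident to no edges, we have $U_{\alpha'} = U_{\alpha}$, $V_{\alpha'} = V_{\alpha}$, $E(\alpha') = E(\alpha)$, $S_{\alpha'} = S_{\alpha}$, and $w(V(\alpha')) = w(V(\alpha)) - w(I_{\alpha})$. Moreover, expanding $M_\alpha(A,B)$ as a sum over injective maps $\phi \colon V(\alpha) \to [n]$, one may first choose $\phi$ on $V(\alpha')$, and then independently choose labels for the vertices in $I_{\alpha}$ among the remaining indices. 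This yields the pointwise identity $M_{\alpha} = \left(\prod_{v \in I_{\alpha}}(n_{t_v} - O(|V(\alpha)|))\right) M_{\alpha'}$, and in particular the norm bound $\|M_{\alpha}\| \leq n^{w(I_{\alpha})}\|M_{\alpha'}\|$. (If $\alpha$ itself is already proper we just take $\alpha' = \alpha$ and skip this step.)

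Second, I would massage the bound from Theorem \ref{originalsimplifiednormbounds} applied to the proper shape $\alpha'$. Two inequalities drive the rewriting. On the vertex-factor side, by De Morgan $V(\alpha') \setminus (U_{\alpha'} \cap V_{\alpha'}) = (V(\alpha') \setminus U_{\alpha'}) \cup (V(\alpha') \setminus V_{\alpha'})$ and $V_{\alpha'} \setminus (U_{\alpha'} \cap V_{\alpha'}) \subseteq V_{\alpha'}$, so
\[
(2|V_{\alpha'} \setminus (U_{\alpha'} \cap V_{\alpha'})|)^{|V(\alpha') \setminus (U_{\alpha'} \cap V_{\alpha'})|} \leq (2|V_{\alpha'}|)^{|V(\alpha') \setminus U_{\alpha'}| + |V(\alpha') \setminus V_{\alpha'}|}.
\]
On the $q$-factor side, since both $U_{\alpha'}$ and $V_{\alpha'}$ are themselves vertex separators, $|S_{\alpha'}| \leq \min(|U_{\alpha'}|, |V_{\alpha'}|)$, and therefore
\[
c_{\alpha'} \leq |V(\alpha')| - \max(|U_{\alpha'}|, |V_{\alpha'}|) \leq |V(\alpha')| - \tfrac{|U_{\alpha'}| + |V_{\alpha'}|}{2} = \tfrac{|V(\alpha') \setminus U_{\alpha'}| + |V(\alpha') \setminus V_{\alpha'}|}{2},
\]
so $(2eq)^{c_{\alpha'}/2} \leq (\sqrt[4]{2eq})^{|V(\alpha') \setminus U_{\alpha'}| + |V(\alpha') \setminus V_{\alpha'}|}$. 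Combining these two inequalities gives
\[
\|M_{\alpha'}\| \leq \left(2|V_{\alpha'}|\sqrt[4]{2eq}\right)^{|V(\alpha') \setminus U_{\alpha'}| + |V(\alpha') \setminus V_{\alpha'}|} n^{\tfrac{w(V(\alpha')) - w(S_{\alpha'})}{2}}
\]
with failure probability at most $\epsilon$.

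Finally, I would substitute $V(\alpha') = V(\alpha) \setminus I_{\alpha}$, $U_{\alpha'} = U_{\alpha}$, $V_{\alpha'} = V_{\alpha}$, $S_{\alpha'} = S_{\alpha}$ and multiply through by $n^{w(I_{\alpha})}$. The exponent of $n$ becomes $w(I_{\alpha}) + \tfrac{w(V(\alpha)) - w(I_{\alpha}) - w(S_{\alpha})}{2} = \tfrac{w(V(\alpha)) + w(I_{\alpha}) - w(S_{\alpha})}{2}$, which matches the corollary. Meanwhile, the vertex-factor exponent decreases by exactly $2|I_{\alpha}|$ when passing from $\alpha$ to $\alpha'$, so the factor $\left(2|V_{\alpha}|\sqrt[4]{2eq}\right)^{|V(\alpha) \setminus U_{\alpha}| + |V(\alpha) \setminus V_{\alpha}|}$ in the target bound is (slightly) larger than what we obtain, and the desired inequality follows. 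There is no real obstacle here: all of the work is bookkeeping around the combinatorial identities $|V(\alpha) \setminus U_{\alpha}| + |V(\alpha) \setminus V_{\alpha}| = 2|V(\alpha)| - |U_{\alpha}| - |V_{\alpha}|$ and $S_{\alpha'} = S_{\alpha}$, and the mild inequality $|S_{\alpha}| \leq \min(|U_{\alpha}|, |V_{\alpha}|)$. The genuine content, namely the probabilistic norm bound for proper shapes, is imported as a black box from Theorem \ref{originalsimplifiednormbounds}.
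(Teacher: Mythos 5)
Your proof follows the same two-step reduction as the paper: first peel off the isolated vertices by noting that $M_\alpha$ is a scalar multiple (roughly $n^{|I_\alpha|}$) of $M_{\alpha'}$ for the proper shape $\alpha'$, then rewrite the bound from Theorem~\ref{originalsimplifiednormbounds} via the combinatorial inequalities $c_\alpha \leq \tfrac{|V(\alpha)\setminus U_\alpha| + |V(\alpha)\setminus V_\alpha|}{2}$ and $|V(\alpha)\setminus(U_\alpha\cap V_\alpha)| \leq |V(\alpha)\setminus U_\alpha| + |V(\alpha)\setminus V_\alpha|$. The bookkeeping and the level of rigor (both proofs elide the mild mismatch between the $q$ computed from $c_{\alpha'}$ and the one from $c_\alpha$) match the paper's proof, so the proposal is essentially identical.
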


\begin{proof}
	Observe that adding an isolated vertex to $\alpha$ is equivalent to multiplying $M_{\alpha}$ by $n - |V(\alpha)|$. Thus, if the bound holds for all proper $\alpha$ then it will hold for improper $\alpha$ as well. 
	
	We now make the following observations:
	\begin{enumerate}
		\item $|S_{\alpha} \setminus (U_{\alpha} \cap V_{\alpha})| \leq |U_{\alpha} \setminus V_{\alpha}|$, so $c_{\alpha} = |W_{\alpha}| + |S_{\alpha} \setminus (U_{\alpha} \cap V_{\alpha})| \leq |V(\alpha) \setminus V_{\alpha}|$. Similarly, $|S_{\alpha} \setminus (U_{\alpha} \cap V_{\alpha})| \leq |V_{\alpha} \setminus U_{\alpha}|$, so $c_{\alpha} \leq |V(\alpha) \setminus U_{\alpha}|$. Thus, $c_{\alpha} \leq \frac{|V(\alpha) \setminus U_{\alpha}| + |V(\alpha) \setminus V_{\alpha}|}{2}$.
		\item $|V(\alpha) \setminus (U_{\alpha} \cap V_{\alpha})| \leq |V(\alpha) \setminus U_{\alpha}| + |V(\alpha) \setminus V_{\alpha}|$
	\end{enumerate}
	Thus, by Theorem \ref{originalsimplifiednormbounds}, for all proper shapes $\alpha$ and all $\epsilon > 0$, 
	\[
	Pr\left(||M_{\alpha}|| > \left(2|V_{\alpha}|\sqrt[4]{2eq}\right)^{|V(\alpha) \setminus U_{\alpha}| + |V(\alpha) \setminus V_{\alpha}|}n^{\frac{w(V(\alpha)) + w(I_{\alpha}) - w(S_{\alpha})}{2}}\right) < \epsilon''
	\]
	where $q = 3\left\lceil\frac{ln(\frac{n^{w(S_{\alpha})}}{\epsilon})}{3c_{\alpha}}\right\rceil$.
\end{proof}
\begin{corollary}\label{tweakedsimplifiednormbound}
	For all $z \in \mathbb{N}$ and all $\epsilon > 0$, taking $\epsilon'' = \frac{\epsilon}{5^{z}2^{z^2}}$, with probability at least $1-\epsilon$ we have that for all shapes $\alpha$ such that $|V(\alpha)| \leq z$, 
	\[
	||M_{\alpha}|| \leq \left(2|V_{\alpha}|\sqrt[4]{2eq}\right)^{|V(\alpha) \setminus U_{\alpha}| + |V(\alpha) \setminus V_{\alpha}|}n^{\frac{w(V(\alpha)) + w(I_{\alpha}) - w(S_{\alpha})}{2}}
	\]
	where $q = 3\left\lceil\frac{ln(\frac{n^{w(S_{\alpha})}}{\epsilon''})}{3c_{\alpha}}\right\rceil$.
\end{corollary}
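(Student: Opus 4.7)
The plan is to obtain the uniform bound by applying Corollary \ref{firsttweakedsimplifiednormbound} shape-by-shape with a small enough failure probability, and then taking a union bound. The single-shape statement already gives, for each fixed $\alpha$, a failure probability of $\epsilon''$ (with $q$ chosen as in the statement). So the main task reduces to counting the number of shapes with $|V(\alpha)| \le z$, and then verifying that $5^{z} 2^{z^2} \cdot \epsilon'' = \epsilon$.

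For the count, I would use the fact that $M_\alpha$ depends only on the isomorphism class of $\alpha$, so we may canonically label the vertices of $V(\alpha)$ as $\{1,2,\ldots,v\}$ for some $v \le z$. Once the labels are fixed, the edge set is an arbitrary subset of $\binom{[v]}{2}$, contributing at most $2^{\binom{v}{2}} \le 2^{z^2}$ possibilities. For the tuples $U_\alpha$ and $V_\alpha$, each vertex of $V(\alpha)$ is classified by its role with respect to $(U_\alpha, V_\alpha)$ -- e.g.\ in $U_\alpha \setminus V_\alpha$, in $V_\alpha \setminus U_\alpha$, in $U_\alpha \cap V_\alpha$, an isolated middle vertex, or an interior middle vertex incident to some edge. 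Using the freedom in the canonical labeling to absorb the intra-tuple orderings of $U_\alpha$ and $V_\alpha$, each vertex contributes at most $5$ such roles, giving at most $5^v \le 5^z$ possibilities. Hence there are at most $5^{z} 2^{z^2}$ shapes in the class.

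With this count established, set $\epsilon'' = \epsilon / (5^{z} 2^{z^2})$ and apply Corollary \ref{firsttweakedsimplifiednormbound} to each such shape. The failure probability of the stated norm bound for each $\alpha$ is at most $\epsilon''$, so a union bound over all shapes yields total failure probability at most $5^{z} 2^{z^2} \cdot \epsilon'' = \epsilon$, which is exactly the claim.

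The main obstacle is getting the vertex count correct, since $U_\alpha$ and $V_\alpha$ are ordered tuples and a naive enumeration would produce factorial factors in $|U_\alpha|, |V_\alpha|$. The key observation is that an automorphism of $V(\alpha)$ that simultaneously permutes the graph structure and $U_\alpha, V_\alpha$ produces an equivalent shape, so we may fix the canonical labeling of $V(\alpha)$ to make the orderings of $U_\alpha$ and $V_\alpha$ compatible with the vertex labels; this reduces the tuple-choice to the per-vertex role classification and is what makes the $5^z$ factor (rather than something factorial) suffice.
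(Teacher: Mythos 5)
Your proposal is correct and follows essentially the same path as the paper's proof: apply the single-shape bound from Corollary \ref{firsttweakedsimplifiednormbound} with failure probability $\epsilon''$, count the shapes $\alpha$ with $|V(\alpha)|\le z$ by a per-vertex role classification (giving $5^z$) together with the $2^{z^2}$ choices of edge set, and union bound. The one cosmetic difference is in the five-way vertex taxonomy: the paper's fifth role is ``$v_i\notin V(\alpha)$'' (which automatically absorbs the choice of $|V(\alpha)|$), whereas your split of middle vertices into ``isolated'' versus ``incident to an edge'' is redundant given the edge set and leaves an implicit sum over $|V(\alpha)|\le z$ — the count still works out, but the paper's bookkeeping is slightly tidier.
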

\begin{proof}
	This result can be proved from Corollary \ref{firsttweakedsimplifiednormbound} using a union bound and the following proposition:
	\begin{proposition}\label{simplifiedcountingalpha}
		Under our simplifying assumptions, for all $z \in \mathbb{N}$, there are at most $5^{z}2^{z^2}$ proper shapes $\alpha$ such that $V(\alpha) \leq z$.
	\end{proposition}
	\begin{proof}
		Observe that we can construct any proper shape $\alpha$ with at most $m$ vertices as follows:
		\begin{enumerate}
			\item Start with $z$ vertices $v_1,\ldots,v_z$.
			\item For each vertex $v_i$, choose whether $v_i \in V(\alpha) \setminus U_{\alpha} \setminus V_{\alpha}$, $v_i \in U_{\alpha} \setminus V_{\alpha}$, $v_i \in V_{\alpha} \setminus U_{\alpha}$, $v_i \in U_{\alpha} \cap V_{\alpha}$, or $v_i \notin V(\alpha)$.
			\item For each pair of vertices $v_i,v_j \in V(\alpha)$, choose whether or not $(v_i,v_j) \in E(\alpha)$
		\end{enumerate}
	\end{proof}
\end{proof}
\begin{corollary}\label{finalsimplifiednormbound}
	For all $D_V \in \mathbb{N}$ and all $\epsilon > 0$, taking 
	\[
	q = 3\left\lceil\frac{ln(\frac{5^{3D_V}2^{9D^2_V}n^{3D_V}}{\epsilon})}{3}\right\rceil = 
	3\left\lceil{{D_V}ln(n) + \frac{ln(\frac{1}{\epsilon})}{3} + {D_V}ln(5) + 3{D^2_V}ln(2)}\right\rceil,
	\] 
	$B_{vertex} = 6{D_V}\sqrt[4]{2eq}$, and
	\[
	B_{norm}(\alpha) = {B_{vertex}^{|V(\alpha) \setminus U_{\alpha}| + |V(\alpha) \setminus V_{\alpha}|}}n^{\frac{w(V(\alpha)) + w(I_{\alpha}) - w(S_{\alpha})}{2}},
	\] 
	with probability at least $(1-\epsilon)$ we have that for all shapes $\alpha \in \mathcal{M}'$, $||M_{\alpha}|| \leq B_{norm}(\alpha)$
\end{corollary}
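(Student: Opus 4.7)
The plan is to derive Corollary \ref{finalsimplifiednormbound} directly from the previous Corollary \ref{tweakedsimplifiednormbound} by specializing the parameter $z$ and checking that the chosen $q$ and $B_{norm}(\alpha)$ dominate what Corollary \ref{tweakedsimplifiednormbound} requires. Since every $\alpha \in \mathcal{M}'$ satisfies $|V(\alpha)| \leq 3D_V$ (by definition of $\mathcal{M}'$), I would apply Corollary \ref{tweakedsimplifiednormbound} with $z = 3D_V$, which yields the union-bound failure probability $\epsilon'' = \epsilon/(5^{3D_V} 2^{9D_V^2})$.

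Next I would verify that the $q$ value stated in the corollary is at least as large as the $q = 3\lceil \ln(n^{w(S_\alpha)}/\epsilon'')/(3c_\alpha)\rceil$ required by Corollary \ref{tweakedsimplifiednormbound} for every individual $\alpha \in \mathcal{M}'$. Using $w(S_\alpha) \leq w(V(\alpha)) \leq |V(\alpha)| \leq 3D_V$ and $c_\alpha \geq 1$ whenever $M_\alpha$ is non-trivial, the worst case upper bound is
\[
3\left\lceil \frac{\ln(n^{3D_V}/\epsilon'')}{3}\right\rceil = 3\left\lceil D_V \ln n + \frac{\ln(1/\epsilon'')}{3}\right\rceil.
\]
Substituting $\ln(1/\epsilon'') = \ln(1/\epsilon) + 3D_V \ln 5 + 9D_V^2 \ln 2$ gives exactly the $q$ in the statement. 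Since the bound from Corollary \ref{tweakedsimplifiednormbound} is monotone non-decreasing in $q$, using this larger $q$ is only stronger.

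Finally I would check that the stated $B_{norm}(\alpha)$ dominates the bound from Corollary \ref{tweakedsimplifiednormbound}. The only discrepancy is the base $2|V_\alpha|\sqrt[4]{2eq}$ versus $B_{vertex} = 6D_V \sqrt[4]{2eq}$. Since $|V_\alpha| \leq |V(\alpha)| \leq 3D_V$ for $\alpha \in \mathcal{M}'$, we have $2|V_\alpha| \leq 6D_V$, so
\[
\left(2|V_\alpha|\sqrt[4]{2eq}\right)^{|V(\alpha)\setminus U_\alpha| + |V(\alpha)\setminus V_\alpha|} \leq B_{vertex}^{|V(\alpha)\setminus U_\alpha| + |V(\alpha)\setminus V_\alpha|},
\]
and the $n^{(w(V(\alpha)) + w(I_\alpha) - w(S_\alpha))/2}$ factor matches exactly. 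Combining these observations, Corollary \ref{tweakedsimplifiednormbound} gives that with probability at least $1 - \epsilon$, every $\alpha$ with $|V(\alpha)| \leq 3D_V$ satisfies $\|M_\alpha\| \leq B_{norm}(\alpha)$, which is the desired conclusion.

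I do not expect a genuine obstacle here, as the result is essentially a bookkeeping repackaging: the parameter $B_{vertex}$ is designed to be a uniform-in-$\alpha$ upper bound on the $\alpha$-dependent vertex factor, and the stated $q$ is designed to absorb both the union-bound loss (a factor of $5^{3D_V}2^{9D_V^2}$ from Proposition \ref{simplifiedcountingalpha}) and the worst-case $w(S_\alpha) \leq 3D_V$. The only mildly delicate point is ensuring the $c_\alpha \geq 1$ inequality used to simplify the ceiling is legitimate; this holds in all non-trivial cases, and trivial shapes $\alpha$ (where $M_\alpha$ is the identity or similar) satisfy the bound directly.
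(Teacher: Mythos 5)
Your proof is correct and is exactly the route the paper takes: the paper's own proof is the single sentence ``This follows from Corollary \ref{tweakedsimplifiednormbound} and the fact that for all $\alpha \in \mathcal{M}'$, $w(S_{\alpha}) \leq |V(\alpha)| \leq 3D_V$.'' Your argument simply fills in the bookkeeping (setting $z = 3D_V$, identifying $\epsilon''$, using monotonicity in $q$, and $|V_\alpha| \leq 3D_V$ to dominate the per-$\alpha$ vertex factor by $B_{vertex}$), which is implicit in the paper's terse phrasing.
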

\begin{proof}
	This follows from Corollary \ref{tweakedsimplifiednormbound} and the fact that for all $\alpha \in \mathcal{M}'$, $w(S_{\alpha}) \leq |V(\alpha)| \leq3D_V$
\end{proof}
\subsubsection{General $B_{norm}(\alpha)$}
In general, we start with the following norm bound from the updated graph matrix norm bound paper \cite{AMP20}:
\begin{theorem}[General Graph Matrix Norm Bounds]\label{originalgeneralnormbounds}
	For all $\epsilon > 0$ and all proper shapes $\alpha$, taking $q = \lceil{ln(\frac{n^{w(S_{\alpha})}}{\epsilon})}\rceil$
	\[
	P\left(||M_{\alpha}|| > 2e(2q|V(\alpha)|)^{|V(\alpha) \setminus (U_{\alpha} \cap V_{\alpha})|}\left(\prod_{e \in E(\alpha)}{h^{+}_{l_{e}}(B_{\Omega}(2q{l_e}))}\right)
	n^{\frac{(w(V(\alpha)) - w(S_{\alpha}))}{2}}\right) < \epsilon
	\]
\end{theorem}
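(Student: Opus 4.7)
The plan is to bound $\|M_\alpha\|$ using the trace power method. Since $M_\alpha M_\alpha^\top$ is PSD, for any even integer $q$ we have $\|M_\alpha\|^{2q} \leq \Tr((M_\alpha M_\alpha^\top)^q)$, so by Markov's inequality it suffices to compute $\E[\Tr((M_\alpha M_\alpha^\top)^q)]$ and then choose $q$ of the prescribed order to drive the failure probability below $\epsilon$. Expanding the trace, one gets a sum
\[
\E[\Tr((M_\alpha M_\alpha^\top)^q)] = \sum_{R_1,\ldots,R_{2q}} \E\!\left[\prod_{i=1}^{2q}\chi_{R_i}\right],
\]
where the $R_i$ are ribbons alternating between shape $\alpha$ and $\alpha^\top$, arranged so that they form a closed walk in the sense $B_{R_i} = A_{R_{i+1}}$ and $B_{R_{2q}} = A_{R_1}$. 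Because $\{h_k\}$ is orthonormal under $\Omega$, the only surviving terms are those where every labeled hyperedge of the combined multigraph $R_1 \circ R_2 \circ \cdots \circ R_{2q}$ has total multiplicity compatible with the Hermite moments; this is where the edge factor $\prod_e h^+_{l_e}(B_\Omega(2q l_e))$ will enter, using the bound $\E_\Omega[h_{l_e}(x)^{m_e}] \leq h^+_{l_e}(B_\Omega(2q l_e))^{m_e}$ when $m_e \leq 2q$ and the fact that every edge must appear at least twice in a surviving term.

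The next step is the combinatorial encoding. Each surviving closed walk $(R_1,\ldots,R_{2q})$ is encoded by following the walk edge by edge and recording, at each step, whether the new hyperedge is a ``tree edge'' introducing fresh vertices or a ``return edge'' revisiting previously named vertices. The key structural observation is that at the ``midpoint'' of any traversal of a single ribbon, one must cross a vertex separator between $U_\alpha$ and $V_\alpha$, so the minimum-weight separator $S_\alpha$ controls the number of vertex indices that must be specified across all $q$ crossings; fresh vertices beyond the separator contribute a factor of $n^{w(V(\alpha)) - w(S_\alpha)}$, while the remaining choices for return edges and for specifying which previously-named vertex is being revisited contribute the polynomial factor $(2q|V(\alpha)|)^{|V(\alpha) \setminus (U_\alpha \cap V_\alpha)|}$ per traversal. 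Carefully tracking the encoding over all $2q$ traversals yields
\[
\E[\Tr((M_\alpha M_\alpha^\top)^q)] \leq \left(2e(2q|V(\alpha)|)^{|V(\alpha) \setminus (U_\alpha \cap V_\alpha)|}\!\prod_{e \in E(\alpha)}\! h^+_{l_e}(B_\Omega(2q l_e))\, n^{(w(V(\alpha))-w(S_\alpha))/2}\right)^{\!2q}\! n^{w(S_\alpha)},
\]
where the trailing $n^{w(S_\alpha)}$ counts choices for the shared separator index across the cyclic closure of the walk.

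Taking $2q$-th roots and applying Markov's inequality with $q = \lceil \ln(n^{w(S_\alpha)}/\epsilon)\rceil$, the extra $n^{w(S_\alpha)}$ is absorbed at the cost of an $e^{1/(2q)}$ factor, yielding the stated bound with probability at least $1-\epsilon$. I expect the main obstacle to be the combinatorial encoding argument: one must define the encoding so carefully that every return edge and every revisited vertex can be charged to one of the $|V(\alpha) \setminus (U_\alpha \cap V_\alpha)|$ ``degrees of freedom'' at each crossing, and that the separator-based $n$-power is tight. Handling the Fourier moment bound in the hypergraph setting with arbitrary labels $l_e$, and ensuring that the product over edges factorizes correctly even when multiple walk steps contribute to the same underlying hyperedge, is the technically delicate part; this is exactly what $h^+_{l_e}(B_\Omega(\cdot))$ is designed to absorb, and the details follow the treatment in \cite{AMP20}.
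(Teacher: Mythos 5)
The paper does not actually prove this theorem: it is imported verbatim from \cite{AMP20} (``we start with the following norm bound from the updated graph matrix norm bound paper''), so there is no internal proof to compare against. Your sketch does reproduce the standard approach in \cite{AMP20} — trace power method, expansion into closed walks of ribbons, killing terms via orthogonality of the Fourier basis, and a vertex-separator-based encoding — so at the level of strategy you have identified the right proof.

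Two substantive remarks. First, the placement of the $2e$ in your trace bound is wrong: you write
$\E[\Tr((M_\alpha M_\alpha^\top)^q)] \leq (2e\,X)^{2q}\,n^{w(S_\alpha)}$
where $X$ is the norm bound without the $2e$. If that were the trace bound, Markov with the prescribed $q = \lceil\ln(n^{w(S_\alpha)}/\epsilon)\rceil$ would give you a threshold of roughly $2e\,X\cdot(n^{w(S_\alpha)}/\epsilon)^{1/(2q)} \approx 2e\,e^{1/2}\,X$, which overshoots the stated bound. The correct structure is $\E[\Tr] \lesssim X^{2q}\,n^{w(S_\alpha)}$ with \emph{no} $2e$ inside; the $2e$ prefactor then arises when you divide $t^{2q}$ against the trace bound and choose $q$ so that $(2e)^{2q} \geq n^{w(S_\alpha)}/\epsilon$, which holds since $2q\ln(2e) > 2q \geq \ln(n^{w(S_\alpha)}/\epsilon)$. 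Second — and more importantly — you explicitly flag the combinatorial encoding as the ``main obstacle'' and leave it as a sketch, but this is essentially the entire content of the theorem: one must show that in any surviving closed walk, the vertices that must be ``paid for'' at rate $n$ are exactly those outside a minimum-weight separator at each crossing, that the number of choices for return-edge destinations is controlled by $(2q|V(\alpha)|)^{|V(\alpha)\setminus(U_\alpha\cap V_\alpha)|}$ per traversal, and that the hyperedge Hermite moments factorize cleanly into $\prod_e h^+_{l_e}(B_\Omega(2ql_e))$ even when multiple steps of the walk hit the same underlying hyperedge with different multiplicities. Without those arguments the proposal is an accurate roadmap to the proof in \cite{AMP20}, not a proof.
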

\begin{corollary}
	For all $\epsilon > 0$, for all $z,l_{max},m \in \mathbb{N}$, taking $\epsilon'' = \frac{\epsilon}{5^{z}(l_{max} + 1)^{z^k}}$, with probability at least $1-\epsilon$, for all shapes $\alpha$ such that 
	\begin{enumerate}
		\item $|V(\alpha)| \leq z$.
		\item All edges in $E(\alpha)$ have label at most $l_{max}$.
		\item All edges in $E(\alpha)$ have multiplicity at most $m$.
	\end{enumerate}, 
	\begin{align*}
		||M_{\alpha}|| \leq &2e(2q|V(\alpha)|)^{|V(\alpha) \setminus U_{\alpha}| + |V(\alpha) \setminus V_{\alpha}|}\left(\prod_{e \in E(\alpha)}{2h^{+}_{l_e}(B_{\Omega}(2ml_{max}))
			\max_{j \in [0,ml_{max}]}{\left\{\left(h^{+}_{j}(B_{\Omega}(2qj))\right)^{\frac{l_e}{\max{\{j,l_e\}}}}\right\}}}\right)\\
		&n^{\frac{w(V(\alpha)) + w(I_{\alpha}) - w(S_{\alpha,min})}{2}}
	\end{align*}
	where $q = \left\lceil{ln\left(\frac{n^{w(S_{\alpha,max})}}{\epsilon''}\right)}\right\rceil$
\end{corollary}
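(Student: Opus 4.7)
The plan is to derive this corollary from Theorem \ref{originalgeneralnormbounds} via two ingredients: a union bound over all shapes meeting the stated constraints, and a reduction of improper shapes (those containing isolated vertices in $I_\alpha$ or hyperedges of multiplicity $\geq 2$) to bounded linear combinations of proper shapes. Each ingredient contributes one factor in the final norm bound.

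For the union bound, I would count the structurally distinct shapes with at most $z$ vertices: there are at most $5^z$ choices for the role of each vertex (membership in $U_\alpha \setminus V_\alpha$, $V_\alpha \setminus U_\alpha$, $U_\alpha \cap V_\alpha$, $V(\alpha) \setminus (U_\alpha \cup V_\alpha)$, or outside $V(\alpha)$) and, since there are at most $z^k$ possible hyperedge slots each carrying a label in $[0, l_{max}]$, at most $(l_{max}+1)^{z^k}$ ways to place labelled edges. Combining gives at most $5^z(l_{max}+1)^{z^k}$ proper shapes. Setting $\epsilon'' = \epsilon/(5^z(l_{max}+1)^{z^k})$ and applying Theorem \ref{originalgeneralnormbounds} to each proper shape with failure probability $\epsilon''$, a union bound yields that with probability at least $1-\epsilon$, every proper shape with the stated constraints simultaneously satisfies its individual norm bound.

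To extend the bound to improper shapes $\alpha \in \mathcal{M}'$, I would treat isolated vertices and multi-edges separately. Each isolated vertex in $I_\alpha$ contributes a multiplicative factor of at most $n$ to $M_\alpha$ (because summing over its label gives $n-O(|V(\alpha)|)$ identical entries), and these factors are precisely absorbed into the $w(I_\alpha)$ inside the exponent $\tfrac{1}{2}(w(V(\alpha)) + w(I_\alpha) - w(S_{\alpha,\min}))$, which exceeds the bare $\tfrac{1}{2}(w(V(\alpha))-w(S_{\alpha,\min}))$ appearing in the proper case by exactly $w(I_\alpha)/2$. For multi-edges, I would use the orthonormal-basis identity that for any edge $e$ of label $l_e$ with multiplicity $a$, the product $h_{l_e}(X_e)^a$ expands as a linear combination of $\{h_j(X_e): 0 \leq j \leq a l_e\}$, whose coefficients are termwise dominated by those of $h^+_{l_e}$ evaluated at $B_\Omega(2ml_{max})$ times a factor governed by $h^+_j(B_\Omega(2qj))$. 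This writes $M_\alpha$ as a linear combination of matrices $M_\beta$ for proper $\beta$ with coefficients that, when distributed per edge, produce exactly the advertised factor $2h^+_{l_e}(B_\Omega(2ml_{max}))\max_j (h^+_j(B_\Omega(2qj)))^{l_e/\max\{j,l_e\}}$.

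The main technical obstacle is the interplay between $S_{\alpha,\min}$ and $S_{\alpha,\max}$ that appears in the statement: after expanding a multi-edge of label $l_e$ via the Hermite identity, some resulting terms have that edge survive (with some label $j \geq 1$) while others have it drop out ($j=0$), and the minimum vertex separator of the resulting proper $\beta$ ranges between $S_{\alpha,\min}$ and $S_{\alpha,\max}$ depending on which scenario occurs. The union bound must be taken with $q = \lceil \ln(n^{w(S_{\alpha,\max})}/\epsilon'')\rceil$ to dominate the worst case that can arise, but the final $n$-exponent in the aggregated bound is $\tfrac{1}{2}(w(V(\alpha))+w(I_\alpha)-w(S_{\alpha,\min}))$ because this is the largest (dominant) contribution when summing over the decomposition. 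Threading this tension — showing that the sum of individual proper-shape bounds for all $\beta$ arising from a given improper $\alpha$ reorganizes cleanly into the stated product form, without losing further polynomial factors — is the main bookkeeping challenge, but is mechanical once the Hermite coefficients are expanded in the right form.
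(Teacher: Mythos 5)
Your proposal takes essentially the same route as the paper: reduce improper shapes to linear combinations of shapes without multi-edges via Hermite expansion, bound the $\ell^1$-mass of the coefficients by $\prod_e 2h^+_{l_e}(B_\Omega(2ml_{\max}))$, bound the resulting $h^+_j$ factors by the stated $\max_j$ expression, count shapes to get $\epsilon''$, and union-bound. This all matches.

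There is one imprecision worth flagging. You write that ``each isolated vertex in $I_\alpha$ contributes a multiplicative factor of at most $n$ to $M_\alpha$ because summing over its label gives $n - O(|V(\alpha)|)$ identical entries.'' That reasoning is valid only for vertices that are genuinely isolated in $\alpha$ itself, which is the \emph{simplified} definition of $I_\alpha$. In the general setting relevant here, $I_\alpha$ is defined as the set of vertices in $V(\alpha) \setminus (U_\alpha \cup V_\alpha)$ \emph{all of whose incident edges have multiplicity at least $2$}; such a vertex need not be isolated in $\alpha$, so the entries of $M_\alpha$ do depend on the input values at those edges, and the direct ``summing over its label'' argument does not apply. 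These vertices only become isolated in \emph{some} of the expansion terms $\alpha_i$ (namely those where all incident edges receive label $0$), and the paper absorbs this into a single clean inequality: for every $\alpha_i$ in the multi-edge expansion, $w(V(\alpha_i)) + w(I_{\alpha_i}) - w(S_{\alpha_i}) \leq w(V(\alpha)) + w(I_\alpha) - w(S_{\alpha,\min})$, which simultaneously tracks the separator shrinking and the new isolated vertices. You do gesture at this interplay in your last paragraph via the $S_{\alpha,\min}$ vs.\ $S_{\alpha,\max}$ tension, so your overall mechanism would likely still reach the right bound if you drop the separate ``isolated vertex contributes $n$'' step and route everything through the expansion; but as written, that step as a standalone argument for general $I_\alpha$ is not correct.
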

\begin{proof}
	Observe that for each $\alpha$ which has multi-edges, we can write $M_{\alpha} = \sum_{i}{{c_i}M_{\alpha_i}}$ where each $\alpha_i$ has no multiple edges. We first upper bound $\sum_{i}{|c_i|}$.
	\begin{lemma}
		For any $a_1,\ldots,a_m \in \mathbb{N} \cup \{0\}$, taking $p_{max} = \sum_{i=1}^{m}{a_i}$ and writing $\prod_{i=1}^{m}{h_{a_i}} = \sum_{k=0}^{p_{max}}{{c_k}h_k}$,
		\[
		\sum_{k=0}^{p_{max}}{|c_k|} \leq (p_{max}+1)\prod_{i=1}^{m}{h^{+}_{a_i}(B_{\Omega}(2p_{max}))} \leq \prod_{i=1}^{m}{2h^{+}_{a_i}(B_{\Omega}(2p_{max}))}
		\]
	\end{lemma}
	\begin{proof}
		Suppose $\prod_{i = 1}^m (h_{a_i}(x))^2 = \sum_{k = 0}^{2p_{max}} u_kx^k$ and $\prod_{i = 1}^m (h^+_{a_i}(x))^2 = \sum_{k = 0}^{p_{max}} v_kx^k$. Then, note that $|u_k| \le v_k$ and so,	\[E_{\Omega}[\prod_{i = 1}^{m}(h_{a_i}(x))^2] = \sum_{k = 0}^{2p_{max}} u_k E_{\Omega}[x^k] \le \sum_{k = 0}^{2p_{max}} v_k|E_{\Omega}[x^k]| \le \sum_{k = 0}^{2p_{max}} v_k(B_{\Omega}(2p_{max}))^k = \prod_{i = 1}^m (h_{a_i}^+(B_{\Omega}(2p_{max}))^2\]
		Therefore, using the fact that $h_k$ form an orthonormal basis,
		\[\sum_{k = 0}^{p_{max}} c_k^2 = E_{\Omega}[(\sum_{k = 0}^{p_{max}} c_kh_k(x))^2] = E_{\Omega}[\prod_{i = 1}^m (h_{a_i}(x))^2] \le  \prod_{i = 1}^m (h_{a_i}^+(B_{\Omega}(2p_{max}))^2\]
		
		This implies
		\[(\sum_{k = 0}^{p_{max}}|c_k|)^2 \le (p_{max} + 1)(\sum_{k = 0}^{p_{max}} c_k^2) \le  (p_{max} + 1)  \prod_{i = 1}^m (h_{a_i}^+(B_{\Omega}(2p_{max}))^2\]
		Taking square roots gives the inequality.
	\end{proof}
	\begin{corollary}
		For any shape $\alpha$ such that every edge of $\alpha$ has multiplicity at most $m$ and label at most $l_{max}$, if we write 
		$M_{\alpha} = \sum_{i}{{c_i}M_{\alpha_i}}$ where each $\alpha_i$ has no multi-edges then $\sum_{i}{|c_i|} \leq \prod_{e \in E(\alpha)}{2h^{+}_{l_e}(B_{\Omega}(2ml_{max}))}$
	\end{corollary}
	The result now follows from Theorem \ref{originalgeneralnormbounds} and the following observations:
	\begin{enumerate}
		\item $|V(\alpha) \setminus (U_{\alpha} \cap V_{\alpha})| \leq |V(\alpha) \setminus U_{\alpha}| + |V(\alpha) \setminus V_{\alpha}|$.
		\item For any $\alpha$, writing $M_{\alpha} = \sum_{i}{{c_i}M_{\alpha_i}}$ where each $\alpha_i$ has no multi-edges, for all $\alpha_i$,
		\[
		w(V(\alpha_i)) + w(I_{\alpha_i}) - w(S_{\alpha_i}) \leq w(V(\alpha)) + w(I_{\alpha}) - w(S_{\alpha,min})
		\]
		\item For any $a_1,\ldots,a_m \in \mathbb{N} \cup \{0\}$ such that $\forall i' \in [m], a_{i'} \leq l_{max}$, for all $j \in [0,ml_{max}]$
		\[
		h^{+}_{j}(B_{\Omega}(2qj)) \leq \prod_{i'=1}^{m}{\left(h^{+}_{j}(B_{\Omega}(2qj))\right)^{\frac{a_{i'}}{\max{\{j,a_{i'}\}}}}} \leq 
		\prod_{i'=1}^{m}{\max_{j' \in [0,ml_{max}]}{\left\{\left(h^{+}_{j'}(B_{\Omega}(2qj'))\right)^{\frac{a_{i'}}{\max{\{j',a_{i'}\}}}}\right\}}}
		\]
	\end{enumerate}
	\begin{proposition}
		For all $z,l_{max} \in \mathbb{N}$, there are at most $5^{z}(l_{max} + 1)^{z^k}$ proper shapes $\alpha$ such that $|V(\alpha)| \leq z$ and every edge in $E(\alpha)$.
	\end{proposition}
	\begin{proof}
		This can be proved in the same way as before. Observe that we can construct any proper shape $\alpha$ with at most $z$ vertices as follows:
		\begin{enumerate}
			\item Start with $z$ vertices $v_1,\ldots,v_z$.
			\item For each vertex $v_i$, choose whether $v_i \in V(\alpha) \setminus U_{\alpha} \setminus V_{\alpha}$, $v_i \in U_{\alpha} \setminus V_{\alpha}$, $v_i \in V_{\alpha} \setminus U_{\alpha}$, $v_i \in U_{\alpha} \cap V_{\alpha}$, or $v_i \notin V(\alpha)$.
			\item For each $k$ tuple of vertices in $V(\alpha)$, choose the label of the hyperedge between these vertices (or $0$ if the hyperedge is not in $E(\alpha)$).
		\end{enumerate}
	\end{proof}
\end{proof}
\begin{corollary}\label{finalgeneralnormbounds}
	For all $D_V,D_E \in \mathbb{N}$ and all $\epsilon > 0$, taking 
	\[
	B_{norm}(\alpha) = 2e{B_{vertex}^{|V(\alpha) \setminus U_{\alpha}| + |V(\alpha) \setminus V_{\alpha}|}}\left(\prod_{e \in E(\alpha)}{B_{edge}(e)}\right)n^{\frac{w(V(\alpha)) + w(I_{\alpha}) - w(S_{\alpha})}{2}}
	\] where 
	\begin{enumerate}
		\item $q = \left\lceil{ln\left(\frac{n^{3D_V}}{\epsilon''}\right)}\right\rceil = \left\lceil{3{D_V}ln(n) + ln(\frac{1}{\epsilon}) + {(3D_V)^k}ln(D_E + 1) + 3{D_V}ln(5)}\right\rceil$
		\item $B_{vertex} = 6q{D_V}$
		\item $B_{edge}(e) = 2h^{+}_{l_e}(B_{\Omega}(6{D_V}D_E))
		\max_{j \in [0,3{D_V}D_E]}{\left\{\left(h^{+}_{j}(B_{\Omega}(2qj))\right)^{\frac{l_e}{\max{\{j,l_e\}}}}\right\}}$
	\end{enumerate}
	with probability at least $(1-\epsilon)$, for all shapes $\alpha \in \mathcal{M}'$, $||M_{\alpha}|| \leq B_{norm}(\alpha)$.
\end{corollary}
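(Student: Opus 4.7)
The plan is to derive this corollary directly from the preceding (just-stated) norm bound by specializing the parameters $z$, $l_{max}$, and $m$ to match the truncation parameters defining $\mathcal{M}'$. By the definition of $\mathcal{M}'$, every $\alpha \in \mathcal{M}'$ satisfies $|V(\alpha)| \leq 3D_V$, every edge has label at most $D_E$, and every edge has multiplicity at most $3D_V$. So I would invoke the preceding corollary with $z = 3D_V$, $l_{max} = D_E$, and $m = 3D_V$, and check that the resulting bound is uniformly implied by the cleaner expression claimed here.

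The first step is to uniformize $q$. In the preceding statement $q$ depends on $\alpha$ through $w(S_{\alpha,max})$, but since $S_{\alpha,max} \subseteq V(\alpha)$ and each vertex of $S_{\alpha,max}$ contributes weight $\log_n n_t \leq 1$, we have $w(S_{\alpha,max}) \leq |V(\alpha)| \leq 3D_V$ for every $\alpha \in \mathcal{M}'$. Plugging in $\epsilon'' = \epsilon/(5^{3D_V}(D_E+1)^{(3D_V)^k})$ and expanding $\ln(n^{3D_V}/\epsilon'')$ gives exactly $3D_V\ln n + \ln(1/\epsilon) + (3D_V)^k\ln(D_E+1) + 3D_V\ln 5$, matching the $q$ chosen in the corollary. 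Because $q$ is only used as an upper bound inside monotone expressions ($B_{\Omega}$ is non-decreasing and $h_j^+$ has nonnegative coefficients), we may use this uniform $q$ in place of the $\alpha$-dependent one.

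Next I would absorb the vertex-count factor into $B_{vertex}$: the base $(2q|V(\alpha)|)$ of the vertex-exponent in the preceding bound is at most $6qD_V = B_{vertex}$, so the vertex factor is dominated by $B_{vertex}^{|V(\alpha) \setminus U_\alpha| + |V(\alpha) \setminus V_\alpha|}$. The per-edge product in the preceding bound matches the definition of $B_{edge}(e)$ term-by-term once the substitutions $ml_{max} = 3D_V D_E$ and $j \in [0, 3D_V D_E]$ are made. The $n$-exponent $\tfrac12(w(V(\alpha)) + w(I_\alpha) - w(S_{\alpha,min}))$ and the prefactor $2e$ carry over verbatim, yielding exactly the claimed $B_{norm}(\alpha)$.

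There is essentially no hard step here: the argument is entirely a matter of routing the preceding high-probability bound through the parameter choices. The only bookkeeping issue to verify is that the quantity written $S_{\alpha}$ in the corollary's expression for $B_{norm}(\alpha)$ is to be read consistently with the $S_{\alpha,min}$ appearing in the upstream bound (as in the general main theorem statement of Section~\ref{generalmaintheoremstatementsection}); under that reading the two expressions agree on the nose and no additional work is required.
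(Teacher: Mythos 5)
Your proposal is correct and matches the paper's (implicit) approach: the corollary is a direct specialization of the preceding union-bound corollary with $z = 3D_V$, $l_{max} = D_E$, $m = 3D_V$, using $w(S_{\alpha,max}) \leq |V(\alpha)| \leq 3D_V$ to uniformize $q$ and monotonicity of $B_{\Omega}$ and $h_j^{+}$ to justify replacing the $\alpha$-dependent $q$ with the uniform one. You also correctly flag the notational slip that $S_\alpha$ in the corollary should be read as $S_{\alpha,min}$, consistent with Theorem~\ref{generalmaintheorem}.
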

\subsection{Choosing $B(\gamma)$}
We now describe how to choose the function $B(\gamma)$. Recall that we want the following conditions to hold:
\begin{enumerate}
	\item For all $\gamma, \tau, \gamma'$ and all intersection patterns $P \in \mathcal{P}_{\gamma,\tau,\gamma'}$, 
	\[
	B_{norm}(\tau_{P}) \leq B(\gamma)B(\gamma')B_{norm}(\tau)
	\]
	\item For all composable $\gamma_1,\gamma_2$, $B(\gamma_1)B(\gamma_2) = B(\gamma_1 \circ \gamma_2)$.
\end{enumerate}
The most important part of choosing $B(\gamma)$ is to make sure that the factors of $n$ are controlled. For this, we use the following intersection tradeoff lemma. Under our simplifying assumptions, this lemma follows from \cite[Lemma 7.12]{BHKKMP16}. We defer the general proof of this lemma to the end of this section.
\begin{lemma}[Intersection Tradeoff Lemma]
	For all $\gamma, \tau, \gamma'$ and all intersection patterns $P \in \mathcal{P}_{\gamma,\tau,\gamma'}$, 
	\[
	w(V(\tau_P)) + w(I_{\tau_P}) - w(S_{\tau_P,min}) \leq w(V(\tau)) + w(I_{\tau}) - w(S_{\tau,min})+ w(V(\gamma) \setminus U_{\gamma}) + w(V(\gamma') \setminus U_{\gamma'})
	\]
\end{lemma}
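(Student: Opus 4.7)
The plan is to prove this by reducing it to a max-flow/min-cut statement via Menger's theorem, together with a careful bookkeeping of vertices and of the multi-edge structure induced by the intersection pattern $P$. Recall that $\tau_P$ is obtained from the formal composition $\gamma \circ \tau \circ \gamma'^T$ by identifying certain pairs of vertices (and merging edges) as specified by $P$, where composition itself already glues $V_\gamma = U_\tau$ and $V_\tau = V_{\gamma'}$. With this in mind I will rearrange the desired inequality into the form
\[
w(S_{\tau, min}) - w(S_{\tau_P, min}) \leq \bigl(w(V(\gamma) \setminus U_\gamma) + w(V(\gamma') \setminus U_{\gamma'})\bigr) - \bigl(w(V(\tau_P)) - w(V(\tau))\bigr) - \bigl(w(I_{\tau_P}) - w(I_\tau)\bigr),
\]
and prove a controlled upper bound on $w(S_{\tau, min})$ in terms of $w(S_{\tau_P, min})$.

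The core step is path surgery. By Menger's theorem applied to the simple (multiplicity one) subgraph of $\tau_P$, the quantity $w(S_{\tau_P, min})$ equals the maximum number of weighted vertex-disjoint paths from $U_{\tau_P} = U_\gamma$ to $V_{\tau_P} = U_{\gamma'}$ in that simple subgraph. Given such a maximum family of paths, I would partition them into those that use a vertex of $V(\gamma)\setminus U_\gamma$ or of $V(\gamma')\setminus U_{\gamma'}$ and those that do not. The first class has total weight at most $w(V(\gamma)\setminus U_\gamma) + w(V(\gamma')\setminus U_{\gamma'})$ by vertex-disjointness. Each path in the second class lies entirely inside the image of $\tau$ in $\tau_P$ and so, after pulling back along the quotient map, yields a vertex-disjoint path from $U_\tau$ to $V_\tau$ inside $\tau$. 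Applying Menger's theorem to $\tau$ bounds the number of these paths by $w(S_{\tau, min})$, and combining the two bounds gives the clean separator inequality we want.

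The remaining task is to reconcile the vertex-count and isolated-vertex terms with the merges and multi-edges produced by $P$. Counting formal vertices, the composition $\gamma \circ \tau \circ \gamma'^T$ before applying $P$ has vertex weight exactly $w(V(\tau)) + w(V(\gamma) \setminus V_\gamma) + w(V(\gamma') \setminus V_{\gamma'})$, and each identification performed by $P$ reduces this by the weight of one vertex; this immediately gives $w(V(\tau_P)) \leq w(V(\tau)) + w(V(\gamma)\setminus U_\gamma) + w(V(\gamma')\setminus U_{\gamma'})$, absorbing the vertex term. For the isolated-vertex correction, note that a vertex $v \in I_{\tau_P} \setminus I_\tau$ either originates in $V(\gamma)\setminus U_\gamma \cup V(\gamma')\setminus U_{\gamma'}$, in which case its weight is already paid for, or else it was incident to a single edge in $\tau$ that became a multi-edge in $\tau_P$ because $P$ folded it onto another edge. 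Each such edge-folding, however, also identifies two vertices, so it simultaneously gives a compensating reduction to $w(V(\tau_P))$; tracking this compensation edge by edge will close the argument.

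The main obstacle will be the path-surgery step, particularly in the general hypergraph setting where hyperedges can glue together in intricate ways under $P$, and where Menger's theorem must be applied only to the simple subgraph while accounting for the fact that the intersection pattern may transform simple edges into multi-edges and vice versa. Handling this cleanly will likely require introducing auxiliary graphs or carefully defining a flow network that tracks both the pre- and post-identification edge multiplicities, which is exactly the kind of bookkeeping that makes the general version strictly harder than the simplified graph case handled in \cite{BHKKMP16}.
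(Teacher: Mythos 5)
Your path-surgery step goes in the wrong direction, and this is a fundamental gap. After rearranging, the lemma demands a \emph{lower} bound on $w(S_{\tau_P,min})$:
\[
w(S_{\tau_P,min}) \geq w(V(\tau_P)) + w(I_{\tau_P}) + w(S_{\tau,min}) - w(V(\tau)) - w(I_{\tau}) - w(V(\gamma)\setminus U_\gamma) - w(V(\gamma')\setminus U_{\gamma'}),
\]
and since $w(V(\tau_P))$ is typically much larger than $w(V(\tau))$ (it contains the internal vertices of $\gamma$ and $\gamma'$), the right-hand side is generally positive, so this is a genuine lower bound that must be established. Your argument takes a maximum family of vertex-disjoint paths in $\tau_P$ realizing $w(S_{\tau_P,min})$, partitions it, and \emph{upper}-bounds the weight of each class, concluding $w(S_{\tau_P,min}) \leq w(S_{\tau,min}) + w(V(\gamma)\setminus U_\gamma) + w(V(\gamma')\setminus U_{\gamma'})$. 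That is an upper bound on $w(S_{\tau_P,min})$, which cannot yield the lemma. The paper's proof runs in the opposite direction: it takes separate max flows inside $\gamma$, inside $\tau$, and inside ${\gamma'}^T$ (each certifying the corresponding minimum separator), welds them into a single flow on a flow-graph version of $\tau_P$, and quantifies the flow lost at the intersected vertices via quantities $f_{excess}(v)$ and $f_{imbalance}(v)$; this produces exactly the needed lower bound on $w(S_{\tau_P,min})$. Decomposing a max flow in $\tau_P$ by which region each path visits cannot recover that lower bound.

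The vertex bookkeeping you propose is also incorrect as stated. The un-quotiented composition has weight exactly $w(V(\tau)) + w(V(\gamma)\setminus V_\gamma) + w(V(\gamma')\setminus V_{\gamma'})$, and since $V_\gamma$ is the unique minimum separator of the left shape $\gamma$, one has $w(V_\gamma) \leq w(U_\gamma)$, hence $w(V(\gamma)\setminus V_\gamma) \geq w(V(\gamma)\setminus U_\gamma)$. Your claimed bound $w(V(\tau_P)) \leq w(V(\tau)) + w(V(\gamma)\setminus U_\gamma) + w(V(\gamma')\setminus U_{\gamma'})$ therefore requires the identifications in $P$ to remove weight at least $(w(U_\gamma)-w(V_\gamma)) + (w(U_{\gamma'})-w(V_{\gamma'}))$, which is not automatic; this deficit is precisely what the separator term must absorb. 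The lemma cannot be split into a vertex inequality, an isolated-vertex inequality, and a separator inequality proved independently. The paper instead derives an exact identity relating $w(V(\tau_P))$ to $w(V(\tau))+w(V(\gamma))+w(V(\gamma'))$ minus the weights of the intersection sets $I_{LM},I_{MR},I_{LR},I_{LMR}$, and substitutes it into the flow-based lower bound on $w(S_{\tau_P,min})$, proving the inequality holistically.
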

Based on this intersection tradeoff lemma, we can choose the function $B(\gamma)$ as follows.
\begin{corollary}
	If we take 
	\[
	B_{norm}(\alpha) = C \cdot B_{vertex}^{|V(\alpha) \setminus U_{\alpha}| + |V(\alpha) \setminus V_{\alpha}|}\left(\prod_{e \in E(\alpha)}{B_{edge}(e)}\right)n^{\frac{w(V(\alpha)) + w(I_{\alpha}) - w(S_{\alpha})}{2}}
	\] 
	for some constant $C > 0$ and take
	\[
	B(\gamma) = B_{vertex}^{|V(\gamma) \setminus U_{\gamma}| + |V(\gamma) \setminus V_{\gamma}|}\left(\prod_{e \in E(\gamma)}{B_{edge}(e)}\right)n^{\frac{w(V(\gamma) \setminus U_{\gamma})}{2}}
	\]
	then the following conditions hold:
	\begin{enumerate}
		\item For all $\gamma, \tau, \gamma'$ and all intersection patterns $P \in \mathcal{P}_{\gamma,\tau,\gamma'}$, 
		\[
		B_{norm}(\tau_{P}) \leq B(\gamma)B(\gamma')B_{norm}(\tau)
		\]
		\item For all composable $\gamma_1,\gamma_2$, $B(\gamma_1)B(\gamma_2) = B(\gamma_1 \circ \gamma_2)$.
	\end{enumerate}
\end{corollary}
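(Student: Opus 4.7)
My strategy is to split both $B(\gamma)$ and $B_{norm}(\alpha)$ into three independent factors---the $B_{vertex}$-power, the product over $B_{edge}$, and the $n$-power---and verify each condition factor by factor. Once this is set up, the Intersection Tradeoff Lemma stated immediately above does essentially all of the work for the $n$-exponent in condition 1; everything else reduces to definitional bookkeeping.

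For condition 2 (multiplicativity of $B$ under composable $\gamma_1,\gamma_2$), I would use the defining identities $V(\gamma_1)\cap V(\gamma_2)=V_{\gamma_1}=U_{\gamma_2}$, $U_{\gamma_1\circ\gamma_2}=U_{\gamma_1}$, and $V_{\gamma_1\circ\gamma_2}=V_{\gamma_2}$. A short set-theoretic calculation, using that $V(\gamma_2)\setminus V_{\gamma_1}$ is disjoint from $V(\gamma_1)$, yields the additive identity
\[
|V(\gamma_1\circ\gamma_2)\setminus U_{\gamma_1\circ\gamma_2}|+|V(\gamma_1\circ\gamma_2)\setminus V_{\gamma_1\circ\gamma_2}| = \sum_{i=1}^{2}\bigl(|V(\gamma_i)\setminus U_{\gamma_i}|+|V(\gamma_i)\setminus V_{\gamma_i}|\bigr),
\]
and the analogous identity with weights $w(\cdot)$, giving multiplicativity of the $B_{vertex}$ and $n$-factors. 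Since left shapes by definition have no edges entirely within their $V$-part, $E(\gamma_1)$ and $E(\gamma_2)$ are disjoint inside $E(\gamma_1\circ\gamma_2)$, so the edge products multiply as well.

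For condition 1, I would exploit the fact that $\tau_P$ is, by construction, the shape of $R_1\circ R_2\circ R_3$ after the identifications prescribed by the intersection pattern $P$; equivalently, $\tau_P$ is obtained from $\gamma\circ\tau\circ{\gamma'}^{T}$ by gluing some non-boundary vertices together. Hence $U_{\tau_P}=U_{\gamma}$, $V_{\tau_P}=V_{{\gamma'}^T}=U_{\gamma'}$, and $E(\tau_P)=E(\gamma)\sqcup E(\tau)\sqcup E(\gamma')$ as a labeled multiset, so the $B_{edge}$ product matches exactly on the two sides of the desired inequality. Identifying vertices can only decrease vertex counts, so the additivity established in the previous paragraph applied to $\gamma\circ\tau\circ{\gamma'}^T$ yields
\[
|V(\tau_P)\setminus U_{\tau_P}|+|V(\tau_P)\setminus V_{\tau_P}| \le \sum_{\alpha\in\{\gamma,\tau,\gamma'\}}\bigl(|V(\alpha)\setminus U_\alpha|+|V(\alpha)\setminus V_\alpha|\bigr),
\]
which is exactly the $B_{vertex}$-exponent inequality needed. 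All that remains is the $n$-exponent inequality, and this is precisely the content of the Intersection Tradeoff Lemma applied to $\tau_P,\tau,\gamma,\gamma'$.

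The main obstacle is the Intersection Tradeoff Lemma itself, which I would treat as a black box here (it is the genuinely nontrivial combinatorial input). A minor subtlety in the general multi-edge setting is that the $n$-power in $B_{norm}$ must be read with $S_\alpha$ interpreted as $S_{\alpha,\min}$ in order to line up with the form of the lemma's conclusion, and one must check that the isolated-vertex term $w(I_\alpha)$ behaves correctly under identification; both of these are definitional checks rather than genuine difficulties.
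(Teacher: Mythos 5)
Your proposal is correct and follows essentially the same route as the paper: split $B_{norm}$ and $B$ into the $B_{vertex}$-power, the $B_{edge}$-product, and the $n$-power, match them factor by factor, and discharge the $n$-exponent via the Intersection Tradeoff Lemma. The only difference is that you spell out the set-theoretic bookkeeping behind the multiplicativity of $B$, which the paper compresses into the sentence ``follows from the form of $B(\gamma)$.''
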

\begin{proof}
	We have that 
	\[
	B_{norm}(\tau_{P}) = B_{vertex}^{|V(\tau_P) \setminus U_{\tau_P}| + |V(\tau_P) \setminus V_{\tau_P}|}\left(\prod_{e \in E(\tau_P)}{B_{edge}(e)}\right)n^{\frac{w(V(\tau_P)) + w(I_{\tau_P}) - w(S_{\tau_P})}{2}}
	\]
	and 
	\begin{align*}
		B(\gamma)B(\gamma')B_{norm}(\tau) &= B_{vertex}^{|V(\gamma) \setminus U_{\gamma}| + |V(\gamma) \setminus V_{\gamma}| + |V(\gamma') \setminus U_{\gamma'}| + |V(\gamma') \setminus V_{\gamma'}| + |V(\tau) \setminus U_{\tau}| + |V(\tau) \setminus V_{\tau}|} \\
		&\left(\prod_{e \in E(\gamma) \cup E(\gamma') \cup E(\tau)}{B_{edge}(e)}\right)n^{\frac{w(V(\gamma) \setminus U_{\gamma}) + w(V(\gamma') \setminus U_{\gamma'}) + 
				w(V(\tau)) + w(I_{\tau}) - w(S_{\tau})}{2}}
	\end{align*}
	The first condition now follows immediately from the following observations:
	\begin{enumerate}
		\item \begin{align*}
			&|V(\gamma) \setminus U_{\gamma}| + |V(\gamma) \setminus V_{\gamma}| + |V(\gamma') \setminus U_{\gamma'}| + |V(\gamma') \setminus V_{\gamma'}| + |V(\tau) \setminus U_{\tau}| + |V(\tau) \setminus V_{\tau}| \\
			&= |V(\gamma \circ \tau \circ {\gamma'}^T) \setminus U_{\gamma \circ \tau \circ {\gamma'}^T}| + |V(\gamma \circ \tau \circ {\gamma'}^T) \setminus V_{\gamma \circ \tau \circ {\gamma'}^T}| 
			\geq |V(\tau_P) \setminus U_{\tau_P}| + |V(\tau_P) \setminus V_{\tau_P}|
		\end{align*}
		\item $E(\tau_P) = E(\gamma) \cup E(\tau) \cup E({\gamma'}^T)$ so 
		$\prod_{e \in E(\tau_P)}{B_{edge}(e)} = \prod_{e \in E(\gamma) \cup E(\gamma') \cup E(\tau)}{B_{edge}(e)}$.
		\item By the intersection tradeoff lemma, 
		\[
		w(V(\tau_P)) + w(I_{\tau_P}) - w(S_{\tau_P}) \leq w(V(\tau)) + w(I_{\tau}) - w(S_{\tau}) + w(V(\gamma) \setminus U_{\gamma}) + w(V(\gamma') \setminus U_{\gamma'})
		\]
	\end{enumerate}
	The second condition follows  from the form of $B(\gamma)$.
\end{proof}
\subsection{Choosing $N(\gamma)$}
To choose $N(\gamma)$, we use the following lemma:
\begin{lemma}
	For all $D_V \in \mathbb{N}$, for all composable $\gamma,\tau,{\gamma'}^T$ such that $|V(\gamma)| \leq D_V$, $|V(\tau)| \leq D_V$, and $|V(\gamma')| \leq D_V$, 
	\begin{align*}
		&\sum_{j>0}{\sum_{\gamma_1,\gamma'_1,\cdots,\gamma_j,\gamma'_j \in \Gamma_{\gamma,\gamma',j}}{\prod_{i:\gamma_i \text{ is non-trivial}}{\frac{1}{|Aut(U_{\gamma_i})|}}
				\prod_{i:\gamma'_i \text{ is non-trivial}}{\frac{1}{|Aut(U_{\gamma'_i})|}}}}\sum_{P_1,\cdots,P_j:P_i \in \mathcal{P}_{\gamma_i,\tau_{P_{i-1}},{\gamma'_i}^T}}{\left(\prod_{i=1}^{j}{N(P_i)}\right)} \\
		&\leq \frac{{(3D_V)}^{2(|V(\gamma) \setminus V_{\gamma}| + |V(\gamma') \setminus V_{\gamma'}|) + (|V(\gamma) \setminus U_{\gamma}| + |V(\gamma') \setminus U_{\gamma'}|)}}
		{(|Aut(U_{\gamma})|)^{1_{\gamma \text{ is non-trivial}}}(|Aut(U_{\gamma'})|)^{1_{\gamma' \text{ is non-trivial}}}}
	\end{align*}
\end{lemma}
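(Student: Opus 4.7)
My approach would be a direct counting argument, organized in three steps. The first step bounds each individual $N(P_i)$ using the vertex bound for intermediate shapes. Since $\tau_{P_{i-1}}$ is built inductively from $\tau$ together with sub-compositions of $\gamma$ and $\gamma'$, its vertex set embeds into $V(\gamma) \cup V(\tau) \cup V(\gamma')$, giving $|V(\tau_{P_{i-1}})| \leq 3D_V$ uniformly. Combined with the earlier lemma that $N(P) \leq |V(\tau_P)|^{|V(\gamma_i)\setminus U_{\gamma_i}| + |V(\gamma'_i)\setminus U_{\gamma'_i}|}$, this yields $N(P_i) \leq (3D_V)^{|V(\gamma_i)\setminus U_{\gamma_i}| + |V(\gamma'_i)\setminus U_{\gamma'_i}|}$. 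Because $\gamma = \gamma_j \circ \cdots \circ \gamma_1$ with $V_{\gamma_{i-1}} = U_{\gamma_i}$, the exponents telescope: $\sum_i |V(\gamma_i)\setminus U_{\gamma_i}| = |V(\gamma)\setminus U_\gamma|$, and analogously for $\gamma'$.

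The second step bounds the inner sum $\sum_{P_1,\ldots,P_j} \prod_i N(P_i)$ for a fixed decomposition. The set $\mathcal{P}_{\gamma_i,\tau_{P_{i-1}},{\gamma'_i}^T}$ consists of partitions specifying which ``new'' vertices of $\gamma_i$ or $\gamma'_i$ (those outside $U_{\gamma_i}$, respectively $U_{\gamma'_i}$) get identified with other vertices; each such vertex has at most $|V(\tau_{P_{i-1}})| \leq 3D_V$ options, so $|\mathcal{P}_{\gamma_i,\tau_{P_{i-1}},{\gamma'_i}^T}| \leq (3D_V)^{|V(\gamma_i)\setminus U_{\gamma_i}| + |V(\gamma'_i)\setminus U_{\gamma'_i}|}$. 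Applied sequentially (to accommodate the dependence of $P_i$ on $\tau_{P_{i-1}}$) and telescoped, the conditional sum over $(P_1,\ldots,P_j)$ of $\prod_i N(P_i)$ is at most $(3D_V)^{2(|V(\gamma)\setminus U_\gamma| + |V(\gamma')\setminus U_{\gamma'}|)}$, which already fits the exponent $|V(\gamma)\setminus U_\gamma| + |V(\gamma')\setminus U_{\gamma'}|$ part of the target RHS with room to spare.

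The third step handles the outer sums over $j$ and over decompositions $(\gamma_i)_{i=1}^j$, $(\gamma'_i)_{i=1}^j$, with the automorphism factors $\prod_{i: \gamma_i \text{ non-trivial}} |Aut(U_{\gamma_i})|^{-1}$. Each decomposition of $\gamma$ is determined by a chain of intermediate separators $U_\gamma = V_{\gamma_0}, V_{\gamma_1}, \ldots, V_{\gamma_j} = V_\gamma$ together with an ordering on each. The $|Aut(U_{\gamma_i})|^{-1}$ factors cancel the orderings at every non-trivial step except the outermost one, where $U_{\gamma_k} = U_\gamma$ for the first non-trivial index $k$; this leaves exactly one surviving $|Aut(U_\gamma)|^{-1}$ in the denominator (when $\gamma$ is non-trivial). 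The residual count of unordered separator chains is bounded by a product telescoping to $(3D_V)^{|V(\gamma)\setminus V_\gamma|}$ (each separator is a subset of $V(\gamma)\setminus U_\gamma$ of bounded size), and combining with the analogous factor for $\gamma'$ absorbs into the remaining $(3D_V)^{2(|V(\gamma)\setminus V_\gamma|+|V(\gamma')\setminus V_{\gamma'}|)}$ of the RHS budget. The inequality $|V(\gamma)\setminus U_\gamma| \leq |V(\gamma)\setminus V_\gamma|$, which holds because $|V_\gamma| \leq |U_\gamma|$ for a left shape (since $U_\gamma$ is itself a separator of $\gamma$ whose minimum separator is $V_\gamma$), ensures the exponents line up correctly.

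The main obstacle will be the bookkeeping in the third step: tracking how the orderings on intermediate separators interact with the $|Aut(U_{\gamma_i})|^{-1}$ factors so that exactly one such factor survives in the denominator per side, while neither over- nor under-counting decompositions. A related subtlety is that $\mathcal{P}_{\gamma_i,\tau_{P_{i-1}},{\gamma'_i}^T}$ depends on $\tau_{P_{i-1}}$, so the second step cannot be carried out as a product of independent sums; rather one must argue that the bound $|V(\tau_{P_{i-1}})| \leq 3D_V$ is \emph{uniform} across all choices of previous patterns, which is what enables the sequential multiplication.
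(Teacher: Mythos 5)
Your overall strategy is the same proof-sketch the paper gives: bound $\prod_i N(P_i)$ by telescoping (step 1), then bound the number of admissible sequences $(\gamma_1,\gamma_1',\dots,\gamma_j,\gamma_j')$ and intersection patterns $(P_1,\dots,P_j)$ by a per-vertex encoding (your steps 2--3, done jointly in the paper). The telescoping in step 1 is exactly right and matches the paper: $\sum_i |V(\gamma_i)\setminus U_{\gamma_i}| = |V(\gamma)\setminus U_\gamma|$, giving $\prod_i N(P_i) \le (3D_V)^{|V(\gamma)\setminus U_\gamma| + |V(\gamma')\setminus U_{\gamma'}|}$, and your handling of the automorphism factors (one surviving $|Aut(U_\gamma)|^{-1}$ per non-trivial side) captures the paper's remark that orderings are ``canceled out.''

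There is, however, a concrete error in step 2. You parameterize the intersection-pattern choices by the vertices of $\gamma_i$ \emph{outside $U_{\gamma_i}$}, but that is the wrong set. At step $i$ the shape $\gamma_i$ is glued to $\tau_{P_{i-1}}$ along its inner boundary $V_{\gamma_i} = U_{\tau_{P_{i-1}}}$, and the vertices that can acquire constraint edges are those in $V(\gamma_i)\setminus V_{\gamma_i}$ --- which \emph{includes} $U_{\gamma_i}\setminus V_{\gamma_i}$, the new outer separator, since the definition of an intersection pattern only requires $U_{\gamma_i}$ to remain the unique minimum separator between $U_{\gamma_i}$ and $V_*(\gamma_i)\cup V_{\gamma_i}$, and vertices of $U_{\gamma_i}$ are permitted in $V_*$. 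The same reversal occurs on the $\gamma'$ side: the attaching boundary of ${\gamma_i'}^T$ is $U_{{\gamma_i'}^T} = V_{\gamma_i'}$, so the free vertices are $V(\gamma_i')\setminus V_{\gamma_i'}$, not $V(\gamma_i')\setminus U_{\gamma_i'}$. Since for a left shape $|V_{\gamma_i}| \le |U_{\gamma_i}|$, your set is strictly smaller in general, so your intermediate claim $|\mathcal{P}_{\gamma_i,\tau_{P_{i-1}},{\gamma_i'}^T}| \le (3D_V)^{|V(\gamma_i)\setminus U_{\gamma_i}|+|V(\gamma_i')\setminus U_{\gamma_i'}|}$ undercounts and is false as stated. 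Once corrected to $V(\gamma_i)\setminus V_{\gamma_i}$, the step-2 bound telescopes to $(3D_V)^{|V(\gamma)\setminus V_\gamma|+|V(\gamma')\setminus V_{\gamma'}|}$, your step-3 decomposition count contributes another $(3D_V)^{|V(\gamma)\setminus V_\gamma|+|V(\gamma')\setminus V_{\gamma'}|}$, and together with step 1 the exponents assemble to exactly $2(|V(\gamma)\setminus V_\gamma|+|V(\gamma')\setminus V_{\gamma'}|) + (|V(\gamma)\setminus U_\gamma|+|V(\gamma')\setminus U_{\gamma'}|)$, which is the target RHS. You should also make your step-3 ``telescoping product'' precise; the clean way (and what the paper does) is to encode, for each $v \in V(\gamma)\setminus V_\gamma$, the first index $i$ with $v \in V(\gamma_i)\setminus V_{\gamma_i}$, giving at most $j \le |V(\gamma)\setminus V_\gamma| + |V(\gamma')\setminus V_{\gamma'}| \le 2D_V$ choices per vertex.
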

\begin{proof}[Proof sketch]
	Observe that aside from the orderings (which are canceled out by the $|Aut(U_{\gamma_i})|$ and $|Aut(U_{\gamma'_i})|$ factors), the intersection patterns $\{P_i: i \in [j]\}$ are determined by the following data on each vertex $v \in (V(\gamma) \setminus V_{\gamma}) \cup (V({\gamma'}^T) \setminus V_{{\gamma'}^T})$:
	\begin{enumerate}
		\item The first $i \in [j]$ such that $v \in (V(\gamma_i) \setminus V_{\gamma_i}) \cup (V({\gamma'_i}^T) \setminus V_{{\gamma'_i}^T})$. There are at most $j$ possibilities for this.
		\item A vertex $u$ (if one exists) in $V(\gamma_{i-1} \circ \ldots \circ \gamma_1 \circ \tau \circ {\gamma'_1}^T \ldots \circ {\gamma'_{i-1}}^T)$ such that $u$ and $v$ are equal. There are at most $3D_V$ possibilities for this.
	\end{enumerate}
	Using these observations and taking $j_{max} = |V(\gamma) \setminus V_{\gamma}| + |V(\gamma') \setminus V_{\gamma'}|$,
	\begin{align*}
		&\sum_{j > 0}{\sum_{\gamma_1,\gamma'_1,\cdots,\gamma_j,\gamma'_j \in \Gamma_{\gamma,\gamma',j}}{\prod_{i:\gamma_i \text{ is non-trivial}}{\frac{1}{|Aut(U_{\gamma_i})|}}
				\prod_{i:\gamma'_i \text{ is non-trivial}}{\frac{1}{|Aut(U_{\gamma'_i})|}}}}\sum_{P_1,\cdots,P_j:P_i \in \mathcal{P}_{\gamma_i,\tau_{P_{i-1}},{\gamma'_i}^T}}{1} \\
		&\leq \sum_{j =1}^{j_{max}}{\frac{{(3jD_V)}^{|V(\gamma) \setminus V_{\gamma}| + |V(\gamma') \setminus V_{\gamma'}|}}
			{(|Aut(U_{\gamma})|)^{1_{\gamma \text{ is non-trivial}}}(|Aut(U_{\gamma'})|)^{1_{\gamma' \text{ is non-trivial}}}}} \\
		&\leq j_{max}\left(\frac{2}{3}\right)^{j_{max}}\frac{{(3D_V)}^{2(|V(\gamma) \setminus V_{\gamma}| + |V(\gamma') \setminus V_{\gamma'}|)}}
		{(|Aut(U_{\gamma})|)^{1_{\gamma \text{ is non-trivial}}}(|Aut(U_{\gamma'})|)^{1_{\gamma' \text{ is non-trivial}}}} \\
		&< \frac{{(3D_V)}^{2(|V(\gamma) \setminus V_{\gamma}| + |V(\gamma') \setminus V_{\gamma'}|)}}
		{(|Aut(U_{\gamma})|)^{1_{\gamma \text{ is non-trivial}}}(|Aut(U_{\gamma'})|)^{1_{\gamma' \text{ is non-trivial}}}}
	\end{align*}
	Now recall that by Lemma ,
	for any $\gamma_i,\tau_{P_{i-1}},{\gamma'_i}^{T}$ and any intersection pattern $P_i \in \mathcal{P}_{\gamma_i,\tau_{P_{i-1}},{\gamma'_i}^T}$, 
	\[
	N(P_i) \leq |V(\tau_{P_i})|^{|V(\gamma_i) \setminus U_{\gamma_i}| + |V(\gamma'_i) \setminus U_{\gamma'_i}|}
	\]
	Thus, for any $P_1,\cdots,P_j:P_i \in \mathcal{P}_{\gamma_i,\tau_{P_{i-1}},{\gamma'_i}^T}$, $\prod_{i=1}^{j}{N(P_i)} \leq (3D_V)^{|V(\gamma) \setminus U_{\gamma}| + |V(\gamma') \setminus U_{\gamma'}|}$.
	Putting everything together, the result follows.
\end{proof}
\begin{corollary}
	For all $D_V \in \mathbb{N}$, if we take $N(\gamma) = (3D_V)^{2|V(\gamma) \setminus V_{\gamma}| + |V(\gamma) \setminus U_{\gamma}|}$ then for all composable $\gamma, \tau, {\gamma'}^T$ such that $|V(\gamma)| \leq D_V$, $|V(\tau)| \leq D_V$, and $|V(\gamma')| \leq D_V$, 
	\begin{align*}
		&\sum_{j>0}{\sum_{\gamma_1,\gamma'_1,\cdots,\gamma_j,\gamma'_j \in \Gamma_{\gamma,\gamma',j}}{\prod_{i:\gamma_i \text{ is non-trivial}}{\frac{1}{|Aut(U_{\gamma_i})|}}
				\prod_{i:\gamma'_i \text{ is non-trivial}}{\frac{1}{|Aut(U_{\gamma'_i})|}}}}\sum_{P_1,\cdots,P_j:P_i \in \mathcal{P}_{\gamma_i,\tau_{P_{i-1}},{\gamma'_i}^T}}{\left(\prod_{i=1}^{j}{N(P_i)}\right)} \\
		&\leq \frac{N(\gamma)N(\gamma')}
		{(|Aut(U_{\gamma})|)^{1_{\gamma \text{ is non-trivial}}}(|Aut(U_{\gamma'})|)^{1_{\gamma' \text{ is non-trivial}}}}
	\end{align*}
\end{corollary}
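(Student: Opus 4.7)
The plan is essentially to observe that this corollary is an immediate consequence of the preceding lemma combined with the definition of $N(\gamma)$. The preceding lemma already established the bound
\[
\sum_{j>0}{\sum_{\gamma_1,\gamma'_1,\cdots,\gamma_j,\gamma'_j \in \Gamma_{\gamma,\gamma',j}}{\prod_{i:\gamma_i \text{ non-triv.}}{\tfrac{1}{|Aut(U_{\gamma_i})|}} \prod_{i:\gamma'_i \text{ non-triv.}}{\tfrac{1}{|Aut(U_{\gamma'_i})|}}}}\sum_{P_1,\cdots,P_j}{\prod_{i=1}^{j}{N(P_i)}} \leq \frac{(3D_V)^{E(\gamma,\gamma')}}{(|Aut(U_{\gamma})|)^{1_{\gamma \text{ non-triv.}}}(|Aut(U_{\gamma'})|)^{1_{\gamma' \text{ non-triv.}}}},
\]
where the exponent is $E(\gamma,\gamma') = 2(|V(\gamma) \setminus V_{\gamma}| + |V(\gamma') \setminus V_{\gamma'}|) + (|V(\gamma) \setminus U_{\gamma}| + |V(\gamma') \setminus U_{\gamma'}|)$.

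First I would simply expand $N(\gamma)N(\gamma')$ using the proposed definition $N(\gamma) = (3D_V)^{2|V(\gamma) \setminus V_{\gamma}| + |V(\gamma) \setminus U_{\gamma}|}$ and the analogous definition for $N(\gamma')$. Multiplying exponents gives
\[
N(\gamma)N(\gamma') = (3D_V)^{2|V(\gamma) \setminus V_{\gamma}| + |V(\gamma) \setminus U_{\gamma}| + 2|V(\gamma') \setminus V_{\gamma'}| + |V(\gamma') \setminus U_{\gamma'}|} = (3D_V)^{E(\gamma,\gamma')},
\]
which matches the numerator of the upper bound in the lemma exactly. Substituting this identity into the right-hand side of the lemma yields the claimed corollary statement.

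There is essentially no obstacle here since the corollary is a definitional rewriting of the preceding lemma; the real work lies in the lemma itself, whose proof sketches the key combinatorial fact that an intersection pattern sequence $\{P_i\}$ is determined (up to the orderings canceled by the automorphism factors) by specifying, for each vertex $v \in (V(\gamma) \setminus V_{\gamma}) \cup (V({\gamma'}^T) \setminus V_{{\gamma'}^T})$, both the step $i$ at which $v$ first becomes part of the intersection, giving at most $|V(\gamma) \setminus V_{\gamma}| + |V(\gamma') \setminus V_{\gamma'}|$ choices, and the vertex in the accumulated shape it is identified with, giving at most $3D_V$ choices; this yields the factor $(3D_V)^{2(|V(\gamma) \setminus V_{\gamma}| + |V(\gamma') \setminus V_{\gamma'}|)}$, and then the trivial norm bound $N(P_i) \leq |V(\tau_{P_i})|^{|V(\gamma_i) \setminus U_{\gamma_i}| + |V(\gamma'_i) \setminus U_{\gamma'_i}|}$ contributes the remaining factor $(3D_V)^{|V(\gamma) \setminus U_{\gamma}| + |V(\gamma') \setminus U_{\gamma'}|}$. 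Once that lemma is in hand, the corollary follows in a single line.
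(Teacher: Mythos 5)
Your proposal is correct and matches the paper's approach exactly: the corollary is an immediate consequence of the preceding lemma, obtained by observing that with the stated definition $N(\gamma)N(\gamma') = (3D_V)^{2(|V(\gamma) \setminus V_{\gamma}| + |V(\gamma') \setminus V_{\gamma'}|) + (|V(\gamma) \setminus U_{\gamma}| + |V(\gamma') \setminus U_{\gamma'}|)}$, which equals the numerator of the lemma's bound. Your accurate account of how the lemma itself is proved is welcome context, but the corollary itself is, as you note, a one-line definitional rewrite, and the paper treats it that way.
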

\subsection{Choosing $c(\alpha)$}
In this section, we describe how to choose $c(\alpha)$. For simplicity, we first describe how to choose $c(\alpha)$ under our simplifying assumptions. We then describe the minor adjustments that are needed when we have hyperedges and multiple types of vertices.
\begin{lemma}\label{calphalemma}
	Under our simplifying assumptons, for all $U \in \mathcal{I}_{mid}$,
	\[
	\sum_{\alpha: U_{\alpha} \equiv U, \alpha \text{ is proper and non-trivial}}{\frac{1}{|Aut(U_{\alpha} \cap V_{\alpha})|(3D_V)^{|U_{\alpha} \setminus V_{\alpha}| + |V_{\alpha} \setminus U_{\alpha}| + 2|E(\alpha)|}2^{|V(\alpha) \setminus (U_{\alpha} \cup V_{\alpha})|}}} < 5
	\]
\end{lemma}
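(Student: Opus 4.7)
I would parametrize each equivalence class of shape $\alpha$ in the sum by the integers $c = |U_\alpha \cap V_\alpha|$, $b = |V_\alpha \setminus U_\alpha|$, $m = |V(\alpha) \setminus (U_\alpha \cup V_\alpha)|$, and $e = |E(\alpha)|$; note $|U_\alpha \setminus V_\alpha| = k - c$ where $k = |U|$, and that the summand $w(c,b,m,e) = 1/(c!\,(3D_V)^{k-c+b+2e}\,2^m)$ depends only on these four numbers. The plan is to bound the number of classes with given parameters, multiply by $w(c,b,m,e)$, and sum, arranging the result as a product of convergent geometric series. The first piece to control is the choice of $V_\alpha$: there are $\binom{k}{c}$ ways to select which vertices of $U_\alpha$ also lie in $V_\alpha$, and $(b+c)!/b!$ ways to order the $V_\alpha$-tuple (the $c$ chosen old vertices occupy $c$ of the $b+c$ positions, ordered; the $b$ remaining positions are filled by interchangeable new-vertex placeholders). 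Combined with the $1/c!$ from $|\mathrm{Aut}(U_\alpha \cap V_\alpha)|$ and the $(3D_V)^{-(k-c+b)}$ weight, this factor telescopes to $\frac{k!}{(k-c)!\,(c!)^2}\cdot\frac{(b+c)!}{b!}\cdot (3D_V)^{c-k-b}$, which after crude bounds $\binom{b+c}{c}\leq 2^{b+c}$ sums geometrically in both $b$ (using $\sum_b (2/(3D_V))^b \leq 3$ for $D_V \geq 2$) and $c$ (using $\binom{k}{c}(2/(3D_V))^{k-c}$ summed over $c$).

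Next I would bound the contribution of edges and middle vertices, holding $X := k+b$ fixed. Split $E(\alpha)$ into edges lying entirely inside $U_\alpha \cup V_\alpha$ and edges incident to at least one middle vertex. For the former, summing over subsets of $\binom{X}{2}$ potential edges with each contributing $(3D_V)^{-2}$ gives the factor $(1+(3D_V)^{-2})^{\binom{X}{2}} \leq \exp(X^2/(2(3D_V)^2))$. For the latter I would use an \emph{anchor trick}: since each of the $m$ middle vertices has degree $\geq 1$ by properness, assigning to each middle one of its incident edges gives an overcounted surjection; at most $(X+m-1)^m$ anchor choices multiplied by $(3D_V)^{-m}$ for the minimum number of distinct anchor-edges, times $(1+(3D_V)^{-2})^{\binom{X+m}{2}}$ for the remaining free edges. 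Combined with the $1/(2^m m!)$ coming from middle-vertex weight and the permutations of middle-vertex labels, one obtains an $m$-summand of the form $\frac{1}{m!}\!\left(\frac{X+m}{2\cdot 3D_V}\right)^{\!m}\exp\!\left(\frac{(X+m)^2}{2(3D_V)^2}\right)$, which when summed over $m$ converges because Stirling's estimate $m!\geq (m/e)^m$ makes the quartic-in-$m$ competition $m\log m$ vs.\ $m^2/D_V^2$ win for large $m$, while the small-$m$ tail is controlled by the factor $\exp(X^2/(2(3D_V)^2))$ that we already separated. Summing this bound in $c$ and $b$ and subtracting $1$ for the excluded trivial shape (which contributes exactly the term $c=k,\ b=m=e=0$) would then leave a product of geometric series that evaluates to less than $5$.

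The main obstacle is the $m!$ versus $|\mathrm{Aut}(\alpha)|$ issue. The above bounds implicitly treat middle vertices as if they were labeled and then divide by $m!$, but in reality $\#\{\text{classes}\} = (\#\{\text{labeled}\})/m! \cdot \mathrm{avg}|\mathrm{Aut}|$, and $|\mathrm{Aut}|$ can be as large as $m!$ for highly symmetric shapes like a star on $m$ leaves attached to one vertex of $U_\alpha$. The cleanest way to sidestep this is to fix, for each class, a canonical linear ordering of the middle vertices (say BFS ordering outward from $U_\alpha \cup V_\alpha$, breaking ties by lexicographic order of neighborhoods to a canonically ordered prefix) and then count \emph{canonical ordered configurations}, of which there is exactly one per class. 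In this encoding the $i$-th middle is specified by its subset of neighbors among the $X+i-1$ earlier vertices, giving a count of $\prod_{i=1}^m 2^{X+i-1}$; incorporating the edge weight and properness constraint precisely via an inductive bound on the $i$-th middle's marginal contribution yields the same $m$-summand $\frac{((X+m)/(6D_V))^m}{m!}\exp(\ldots)$ but now rigorously bounds the \emph{number of equivalence classes} rather than labeled ones. Separately, the very symmetric shapes (stars, cliques of middles, matchings) can be enumerated directly and shown to contribute $O(1/D_V^2)$ each, well within the slack budget of the final bound $<5$.
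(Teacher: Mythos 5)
Your plan diverges substantially from the paper's proof, which is a short injective-encoding argument: to specify a nontrivial proper shape $\alpha$ with $U_\alpha \equiv U$ (up to equivalence), pick $(j_1,j_2,j_3,j_4)$, a map in $\mathrm{Aut}(U_\alpha \cap V_\alpha)$ aligning the intersection vertices, a position in $U_\alpha$ (at most $D_V$ choices) for each of the $j_1$ vertices of $U_\alpha \setminus V_\alpha$, a position in $V_\alpha$ for each of the $j_2$ vertices of $V_\alpha \setminus U_\alpha$, and the two endpoints of each of the $j_4$ edges (at most $D_V$ choices each). Middle vertices need no separate data: properness forces each to lie on an edge, and since shapes are only taken up to relabeling, giving them arbitrary labels and letting the edge endpoints determine them is a harmless overcount. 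This shows there are at most $|\mathrm{Aut}(U_\alpha \cap V_\alpha)|\,D_V^{j_1+j_2+2j_4}$ classes with fixed sizes, so the weighted sum is at most $\sum_{j_1+j_2+j_3+j_4 \geq 1} 3^{-j_1}3^{-j_2}2^{-j_3}9^{-j_4} = \tfrac{81}{16}-1 < 5$; the $2^{j_3}$ in the denominator is pure slack, never used to control the count.

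The central gap in your route is the $m!$ step, which you correctly flag as your main obstacle but then over-engineer a fix for. You divide the labeled count by $m!$ and worry, rightly, that this can undercount classes with large automorphism groups; but the division was unwarranted. For an upper bound on the number of equivalence classes you only need $\#\{\text{classes}\} \leq \#\{\text{labeled configurations}\}$, and the identity you write, $\#\{\text{classes}\} = (\#\{\text{labeled}\})/m! \cdot \mathrm{avg}|\mathrm{Aut}|$, is not a correct identity: orbit--stabilizer gives $\#\{\text{labeled}\} = \sum_{\text{classes}} m!/|\mathrm{Aut}(\alpha)|$, from which $\#\{\text{classes}\} \leq \#\{\text{labeled}\} \leq m!\cdot\#\{\text{classes}\}$. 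Once the spurious $1/m!$ is dropped, the canonical-ordering detour, the anchor trick, and the separate treatment of highly symmetric shapes all become unnecessary. Beyond this, the plan never carries the exponential and Stirling-based estimates to a verified numerical conclusion below $5$ --- the factors like $\bigl(1+(3D_V)^{-2}\bigr)^{\binom{X+m}{2}}$ are much coarser than the paper's per-edge $D_V^2$ encoding and it is not evident they actually fit under the budget --- so even as a plan the argument is incomplete where it matters.
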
 
\begin{proof}
	In order to choose $\alpha$, it is sufficient to choose the following:
	\begin{enumerate}
		\item The number $j_1$ of vertices in $U_{\alpha} \setminus V_{\alpha}$, the number $j_2$ of vertices in $V_{\alpha} \setminus U_{\alpha}$, and the number $j_3$ of vertices in $V(\alpha) \setminus (U_{\alpha} \cup V_{\alpha})$.
		\item A mapping in $Aut(U_{\alpha} \cap V_{\alpha})$ determining how the vertices in $U_{\alpha} \cap V_{\alpha}$ match up with each other.
		\item The position of each vertex $u \in U_{\alpha} \setminus V_{\alpha}$ within $U_{\alpha}$ (there are at most $|U_{\alpha}| \leq D_V$ choices for this).
		\item The position of each vertex $v \in V_{\alpha} \setminus U_{\alpha}$ within $V_{\alpha}$ (there are at most $|U_{\alpha}| \leq D_V$ choices for this).
		\item The number $j_4$ of edges in $E(\alpha)$.
		\item The endpoints of each edge in $E(\alpha)$.
	\end{enumerate}
	This implies that for all $j_1,j_2,j_3,j_4 \geq 0$
	\[
	\sum_{\alpha: U_{\alpha} \equiv U, |U_{\alpha} \setminus V_{\alpha}| = j_1, |V_{\alpha} \setminus U_{\alpha}| = j_2 \atop
		|V(\alpha) \setminus (U_{\alpha} \cup V_{\alpha})| = j_3, |E(\alpha)| = j_4}{\frac{1}{|Aut(U_{\alpha} \cap V_{\alpha})|(D_V)^{j_1 + j_2}
			(D_V)^{2j_4}}} \leq 1
	\]
	Using this, we have that
	\begin{align*}
		&\sum_{\alpha: U_{\alpha} \equiv U, \alpha \text{ is proper and non-trivial}}{\frac{1}{|Aut(U_{\alpha} \cap V_{\alpha})|(3D_V)^{|U_{\alpha} \setminus V_{\alpha}| + |V_{\alpha} \setminus U_{\alpha}| + 2|E(\alpha)|}2^{|V(\alpha) \setminus (U_{\alpha} \cup V_{\alpha})|}}} \\
		&\leq \sum_{j_1,j_2,j_3,j_4 \in \mathbb{N} \cup \{0\}: j_1 + j_2 + j_3 + j_4 \geq 1}{\frac{1}{3^{j_1 + j_2}9^{j_4}2^{j_3}}} \leq 2\left(\frac{3}{2}\right)^2\frac{9}{8} - 1 < 5
	\end{align*}
\end{proof}
\begin{corollary}
	For all $\epsilon' > 0$, if we take 
	\[
	c(\alpha) = \frac{5(3D_V)^{|U_{\alpha} \setminus V_{\alpha}| + |V_{\alpha} \setminus U_{\alpha}| + 2|E(\alpha)|}2^{|V(\alpha) \setminus (U_{\alpha} \cup V_{\alpha})|}}{\epsilon'}
	\]
	then
	\begin{enumerate}
		\item $\forall U \in \mathcal{I}_{mid}, \sum_{\gamma \in \Gamma_{U,*}}{\frac{1}{|Aut(U)|c(\gamma)}} < \epsilon'$ 
		\item $\forall V \in \mathcal{I}_{mid}, \sum_{\gamma \in \Gamma_{*,V}}{\frac{1}{|Aut(U_{\gamma})|c(\gamma)}} < \epsilon'$ 
		\item $\forall U \in \mathcal{I}_{mid}, \sum_{\tau \in \mathcal{M}_{U}}{\frac{1}{|Aut(U)|c(\tau)}} < \epsilon'$
	\end{enumerate}
\end{corollary}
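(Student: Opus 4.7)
\begin{proofsketch}
The plan is to reduce all three inequalities to the bound already established in Lemma \ref{calphalemma} by (i) replacing $|Aut(U)|$ (or $|Aut(U_{\gamma})|$) with the smaller quantity $|Aut(U_{\alpha}\cap V_{\alpha})|$ in the denominator, and (ii) substituting the definition of $c(\alpha)$ so that the factor of $5/\epsilon'$ cancels against the constant $5$ in Lemma \ref{calphalemma}.

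The key observation is a monotonicity of automorphism groups: whenever $U$ is an index shape and $\alpha$ is a shape with $U_{\alpha}\equiv U$, the vertex set $U_{\alpha}\cap V_{\alpha}$ is a subset of $U_{\alpha}$, and the automorphism group $Aut(U_{\alpha}\cap V_{\alpha})$ is a subgroup of $Aut(U_{\alpha})=Aut(U)$ (in the simplified case this is just $|U_{\alpha}\cap V_{\alpha}|!\le |U|!$, and in the general case one applies this piecewise to each index shape piece). Hence $\tfrac{1}{|Aut(U)|}\le \tfrac{1}{|Aut(U_{\alpha}\cap V_{\alpha})|}$, and the same argument applies to $\gamma\in\Gamma_{*,V}$ after noting $U_{\gamma}\cap V_{\gamma}\subseteq U_{\gamma}$.

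With this in hand, for (1) we write
\[
\sum_{\gamma\in\Gamma_{U,*}}\frac{1}{|Aut(U)|c(\gamma)}
=\frac{\epsilon'}{5}\sum_{\gamma\in\Gamma_{U,*}}\frac{1}{|Aut(U)|(3D_V)^{|U_{\gamma}\setminus V_{\gamma}|+|V_{\gamma}\setminus U_{\gamma}|+2|E(\gamma)|}2^{|V(\gamma)\setminus(U_{\gamma}\cup V_{\gamma})|}},
\]
apply the monotonicity above to replace $|Aut(U)|$ by $|Aut(U_{\gamma}\cap V_{\gamma})|$, and then note that $\Gamma_{U,*}$ is a subset of the proper non-trivial shapes $\alpha$ with $U_{\alpha}\equiv U$ indexed in Lemma \ref{calphalemma}. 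Thus the sum is at most $\tfrac{\epsilon'}{5}\cdot 5=\epsilon'$. Case (2) is analogous, using instead that $\gamma\in\Gamma_{*,V}$ has $V_{\gamma}\equiv V$ and $U_{\gamma}\cap V_{\gamma}\subseteq U_{\gamma}$ to bound $|Aut(U_{\gamma})|\ge |Aut(U_{\gamma}\cap V_{\gamma})|$. For (3), $\tau\in\mathcal{M}_U$ has $U_{\tau}\equiv V_{\tau}\equiv U$, so $U_{\tau}\cap V_{\tau}$ (as a vertex set) is contained in $U_{\tau}$ and the same inequality holds; moreover $\mathcal{M}_U$ is a subfamily of the non-trivial proper shapes in Lemma \ref{calphalemma}, so the same $\tfrac{\epsilon'}{5}\cdot 5$ calculation concludes.

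There is no real obstacle beyond being careful about the difference between the index shape $U$ and the vertex set $U_{\alpha}\cap V_{\alpha}$ (these are different objects in general, especially in the typed/general case where one must apply the monotonicity piece-by-piece). The calculation is otherwise purely a bookkeeping application of Lemma \ref{calphalemma}, with the constants in $c(\alpha)$ chosen precisely so the geometric series summed in the proof of that lemma yields the target bound $\epsilon'$.
\end{proofsketch}
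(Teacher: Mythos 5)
Your proof is correct and fills in exactly the omitted derivation the paper intends: the paper states this corollary without proof immediately after Lemma~\ref{calphalemma}, and the required argument is precisely the chain you give, namely (i) the monotonicity $|Aut(U_{\alpha}\cap V_{\alpha})|\le |Aut(U_{\alpha})|$ (coming from $U_{\alpha}\cap V_{\alpha}\subseteq U_{\alpha}$, applied piecewise in the typed case) lets one pass from $|Aut(U)|$ or $|Aut(U_{\gamma})|$ in the denominator to $|Aut(U_{\alpha}\cap V_{\alpha})|$, and (ii) the $5/\epsilon'$ factor in the definition of $c(\alpha)$ cancels the bound of $5$ in Lemma~\ref{calphalemma}. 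One small remark: your wording calls $Aut(U_{\alpha}\cap V_{\alpha})$ a ``subgroup'' of $Aut(U_{\alpha})$, which is a slight abuse (it is a symmetric group on a subset, not literally a subgroup until embedded), but the cardinality inequality you actually use is correct. For part~(2) you also implicitly need the version of Lemma~\ref{calphalemma} that sums over $\alpha$ with $V_{\alpha}\equiv V$ rather than $U_{\alpha}\equiv U$; this follows from the evident $U\leftrightarrow V$ symmetry of the counting argument together with the symmetry of $c(\alpha)$ in $U_{\alpha}\setminus V_{\alpha}$ and $V_{\alpha}\setminus U_{\alpha}$, so ``analogous'' is an adequate summary, though it is worth flagging explicitly since the lemma as stated is one-sided.
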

\subsubsection{Choosing $c(\alpha)$ in general*}
When we have multiple types of vertices and hyperedges of arity $k$, Lemma \ref{calphalemma} can be generalized as follows:
\begin{lemma}
	Under our simplifying assumptons, for all $U \in \mathcal{I}_{mid}$,
	\[
	\sum_{\alpha: U_{\alpha} \equiv U, \alpha \text{ is proper and non-trivial}}{\frac{1}{|Aut(U_{\alpha} \cap V_{\alpha})|(3D_V{t_{max}})^{|U_{\alpha} \setminus V_{\alpha}| + |V_{\alpha} \setminus U_{\alpha}| + k|E(\alpha)|}(2t_{max})^{|V(\alpha) \setminus (U_{\alpha} \cup V_{\alpha})|}}} < 5
	\]
\end{lemma}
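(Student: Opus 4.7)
The plan is to adapt the proof of the simplified Lemma \ref{calphalemma}, generalizing its counting step to accommodate multiple vertex types and arity-$k$ hyperedges. The argument proceeds in three stages: a per-parameter counting bound, cancellation against the denominator, and summation of the resulting multi-geometric series.

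First, I will parameterize the sum by the quadruple
\[
(j_1, j_2, j_3, j_4) \;:=\; (|U_\alpha \setminus V_\alpha|,\ |V_\alpha \setminus U_\alpha|,\ |V(\alpha) \setminus (U_\alpha \cup V_\alpha)|,\ |E(\alpha)|).
\]
For a fixed tuple, a shape $\alpha$ with $U_\alpha \equiv U$ is constructed (with overcount factor $|Aut(U_\alpha \cap V_\alpha)|$) by specifying the following data: (i) a permutation in $Aut(U_\alpha \cap V_\alpha)$ identifying the overlap inside $U_\alpha$; (ii) for each vertex of $U_\alpha \setminus V_\alpha$, its type and its coordinate within the appropriate index-shape piece of $U_\alpha$, giving at most $t_{max} D_V$ options per vertex; (iii) the analogous data for each vertex of $V_\alpha \setminus U_\alpha$, again at most $t_{max} D_V$ options per vertex; (iv) a type for each of the $j_3$ middle vertices, at most $t_{max}$ options; (v) for each hyperedge, an ordered $k$-tuple of endpoints drawn from $V(\alpha)$, at most $D_V^{\,k}$ options per edge (the same tightness as in the simplified case, since the relevant shapes satisfy $|V(\alpha)| \le D_V$). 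This yields
\[
\#\{\alpha : \text{tuple fixed}\} \;\le\; |Aut(U_\alpha \cap V_\alpha)| \cdot (t_{max} D_V)^{j_1+j_2} \cdot t_{max}^{\,j_3} \cdot D_V^{\,k j_4}.
\]

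Second, dividing each term by the denominator $|Aut(U_\alpha \cap V_\alpha)| \cdot (3 t_{max} D_V)^{j_1+j_2+k j_4} \cdot (2 t_{max})^{j_3}$ cancels the $|Aut|$ factor and leaves a per-tuple weight of at most $3^{-(j_1+j_2)} \cdot 2^{-j_3} \cdot (3 t_{max})^{-k j_4}$. Summing over all nonnegative tuples other than the all-zero one (excluded by non-triviality) gives
\[
\sum_{(j_1,j_2,j_3,j_4)\neq(0,0,0,0)} \frac{1}{3^{j_1+j_2}\,2^{j_3}\,(3t_{max})^{kj_4}} \;=\; \left(\frac{3}{2}\right)^{\!2} \cdot 2 \cdot \frac{1}{1 - (3t_{max})^{-k}} \,-\, 1 \;<\; 5,
\]
where the final inequality holds because $(3t_{max})^{k} \ge 9$ in all regimes of interest: in the applications $k \ge 2$, so the bound holds even when $t_{max} = 1$.

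The main obstacle is justifying the $D_V^{\,k}$ bound per hyperedge in step (v); one must ensure that the middle vertices are fully determined by the combination of type data (step (iv)) and edge-incidence data (step (v)), rather than requiring a separate positional choice. As in the simplified case, this relies on properness: every middle vertex is incident to at least one hyperedge, and middle vertices are interchangeable up to automorphism, so once the types and endpoint tuples are fixed, the middle-vertex structure is pinned down. The rest is a routine geometric-series computation, and the choice of the constants $3 t_{max}$ and $2 t_{max}$ in the denominator was tuned precisely to give the per-edge, per-middle-vertex convergent ratios established above.
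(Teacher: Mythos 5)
Your proposal is correct and matches the paper's intended argument: the paper only gives a two-line proof sketch (choose types in addition to numbers; choose $k$ endpoints per hyperedge instead of $2$), and you have fleshed out exactly that counting with the right per-parameter factors $(t_{max} D_V)^{j_1+j_2}\, t_{max}^{j_3}\, D_V^{kj_4}$ cancelling against $(3D_V t_{max})^{j_1+j_2+kj_4}(2t_{max})^{j_3}$, then summed the resulting multi-geometric series. The only thing worth flagging is that the final inequality does rely on $(3t_{max})^k \ge 9$, which you correctly observe holds since $k \ge 2$ and $t_{max} \ge 1$; the paper implicitly assumes the same.
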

\begin{proof}[Proof sketch]
	This can be proved in the same way as Lemma \ref{calphalemma} with the following modifications:
	\begin{enumerate}
		\item In addition to choosing the number of vertices in $U_{\alpha} \setminus V_{\alpha}$, $V_{\alpha} \setminus U_{\alpha}$, and $V(\alpha) \setminus (U_{\alpha} \cap V_{\alpha})$, we also have to choose the types of these vertices.
		\item For each hyperedge, we have to choose $k$ endpoints rather than $2$ endpoints.
	\end{enumerate}
\end{proof}

\begin{corollary}\label{cor: calphachoice}
	For all $\eps' > 0$, if we take
	\[c(\alpha) = \frac{5(3{t_{max}}D_V)^{|U_{\alpha} \setminus V_{\alpha}| + |V_{\alpha} \setminus U_{\alpha}| + k|E(\alpha)|}(2t_{max})^{|V(\alpha) \setminus (U_{\alpha} \cup V_{\alpha})|}}{\epsilon'}\]
	then
	\begin{enumerate}
		\item $\forall U \in \mathcal{I}_{mid}, \sum_{\gamma \in \Gamma_{U,*}}{\frac{1}{|Aut(U)|c(\gamma)}} < \epsilon'$ 
		\item $\forall V \in \mathcal{I}_{mid}, \sum_{\gamma \in \Gamma_{*,V}}{\frac{1}{|Aut(U_{\gamma})|c(\gamma)}} < \epsilon'$ 
		\item $\forall U \in \mathcal{I}_{mid}, \sum_{\tau \in \mathcal{M}_{U}}{\frac{1}{|Aut(U)|c(\tau)}} < \epsilon'$
	\end{enumerate}
\end{corollary}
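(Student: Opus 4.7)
The plan is to derive Corollary \ref{cor: calphachoice} directly from the preceding lemma by substituting the definition of $c(\alpha)$ into each of the three sums, cleaning up the automorphism factors, and then applying the lemma. Since the lemma already provides the bound of $5$ with the denominator $|Aut(U_{\alpha} \cap V_{\alpha})|(3D_V t_{max})^{|U_{\alpha} \setminus V_{\alpha}| + |V_{\alpha} \setminus U_{\alpha}| + k|E(\alpha)|}(2t_{max})^{|V(\alpha) \setminus (U_{\alpha} \cup V_{\alpha})|}$, the numerical constant $5$ in the definition of $c(\alpha)$ is chosen precisely so that after substitution we land on a sum of this form.

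I would handle the first sum as follows. Fix $U \in \mathcal{I}_{mid}$. Substituting the definition of $c(\gamma)$,
\[
\sum_{\gamma \in \Gamma_{U,*}}\frac{1}{|Aut(U)|c(\gamma)}
=\frac{\epsilon'}{5}\sum_{\gamma \in \Gamma_{U,*}}\frac{1}{|Aut(U)|(3t_{max}D_V)^{|U_{\gamma} \setminus V_{\gamma}| + |V_{\gamma} \setminus U_{\gamma}| + k|E(\gamma)|}(2t_{max})^{|V(\gamma) \setminus (U_{\gamma} \cup V_{\gamma})|}}.
\]
For every $\gamma \in \Gamma_{U,*}$ we have $U_\gamma \equiv U$, and since $Aut(U_\gamma \cap V_\gamma)$ is obtained from $Aut(U_\gamma) = Aut(U)$ by restricting to permutations within the subtuple $U_\gamma \cap V_\gamma$ of each index shape piece, $|Aut(U_\gamma \cap V_\gamma)| \le |Aut(U)|$. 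Replacing $|Aut(U)|$ by $|Aut(U_\gamma \cap V_\gamma)|$ in the denominator only enlarges each term, and since $\Gamma_{U,*}$ is a subset of the proper non-trivial shapes with $U_\alpha \equiv U$, the lemma gives the bound $5$, yielding an overall bound of $\epsilon'$.

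The third sum is handled identically: $\mathcal{M}_U \subseteq \{\alpha : U_\alpha \equiv U, \alpha \text{ proper non-trivial}\}$, and the same automorphism comparison $|Aut(U_\tau \cap V_\tau)| \le |Aut(U)|$ applies. For the second sum, indexed by $\gamma \in \Gamma_{*,V}$ with the denominator $|Aut(U_\gamma)|$, I will use the transpose symmetry $\gamma \mapsto \gamma^T$: the map is a bijection between $\Gamma_{*,V}$ and $\{\alpha : U_\alpha \equiv V, \alpha \text{ proper non-trivial}\}$ (or the corresponding right-shape family) that preserves all vertex counts, edge counts, and the factor $|Aut(U_\gamma)| = |Aut(V_{\gamma^T})|$. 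After applying this bijection, the second sum reduces to an instance of the first, bounded by $\epsilon'$.

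There is no real obstacle; the only mildly non-trivial step is the automorphism inequality $|Aut(U_\alpha \cap V_\alpha)| \le |Aut(U)|$ when $U_\alpha \equiv U$, which is immediate from the product structure $Aut(U) = \prod_i Aut(U_i) = \prod_i S_{|U_i|}$ and the fact that intersecting with $V_\alpha$ can only shrink the tuple within each index shape piece. Everything else is a direct substitution followed by an invocation of the preceding lemma.
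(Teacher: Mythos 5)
Your proof is correct and matches the argument the paper leaves implicit: substitute the definition of $c(\alpha)$, use the containments $\Gamma_{U,*},\,\mathcal{M}_U \subseteq \{\alpha : U_\alpha \equiv U,\ \alpha \text{ proper and non-trivial}\}$ together with the monotonicity $|Aut(U_\alpha \cap V_\alpha)| \le |Aut(U_\alpha)|$ of automorphism sizes under passing to a sub-index-shape, and then invoke the preceding lemma (with transpose symmetry $\gamma \mapsto \gamma^T$ to handle $\Gamma_{*,V}$, since the lemma fixes $U_\alpha$ rather than $V_\alpha$). The one minor imprecision is the phrase ``reduces to an instance of the first'': after transposing, the denominator is $|Aut(V_{\gamma^T})|$ rather than $|Aut(U_{\gamma^T})|$ and the sum ranges over right shapes, so it is not literally item~1 but rather the same two ingredients — the bound $|Aut(V_{\gamma^T})| \ge |Aut(U_{\gamma^T} \cap V_{\gamma^T})|$ and the lemma applied with $U$ replaced by $V$ — which close the argument.
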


For technical reasons, we will need a more refined bound when the sum is over all shapes $\gamma$ of at least a prescribed size.

\begin{lemma}\label{lem: tpca_refined_bound}
	For all $\eps' > 0$, for the same choice of $c(\al)$ as in $\cref{cor: calphachoice}$, for any $U \in \calI_{mid}$ and integer $m \ge 1$, we have \[\sum_{\gam \in \Gam_{U, *}: |V(\gam)| \ge |U| + m} \frac{1}{|Aut(U)|c(\gam)} \le \frac{\eps'}{5\cdot 2^{m - 1}}\]
\end{lemma}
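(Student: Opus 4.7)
The plan is to mirror the combinatorial enumeration behind Lemma \ref{calphalemma} and its multi-type generalization (used to establish Corollary \ref{cor: calphachoice}), but restrict the sum to left shapes $\gamma \in \Gamma_{U,*}$ satisfying $|V(\gamma)| \ge |U|+m$. Writing $j_1 = |U_\gamma \setminus V_\gamma|$, $j_2 = |V_\gamma \setminus U_\gamma|$, $j_3 = |V(\gamma) \setminus (U_\gamma \cup V_\gamma)|$, and $j_4 = |E(\gamma)|$, the condition $U_\gamma \equiv U$ yields $|U_\gamma| = |U|$, so $|V(\gamma)| - |U| = j_2 + j_3$, and the restriction becomes simply $j_2 + j_3 \ge m$.

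First I would repeat the enumeration argument. The number of left shapes $\gamma$ with parameters $(j_1, j_2, j_3, j_4)$, weighted by $\frac{1}{|Aut(U)|}$, is at most
\[
\binom{|U|}{j_1}(t_{max} D_V)^{j_2}(t_{max})^{j_3}(t_{max}^{k} D_V^{\,2k})^{j_4},
\]
where $\binom{|U|}{j_1}$ records the slots of $U$ assigned to $U_\gamma \setminus V_\gamma$ and converts the $|Aut(U_\gamma \cap V_\gamma)|$ appearing in Lemma \ref{calphalemma} to $|Aut(U)|$. Combining this with the lower bound on $c(\gamma)$ from Corollary \ref{cor: calphachoice} produces, after canceling $\binom{|U|}{j_1}$ against part of $(3 t_{max} D_V)^{j_1}$, a contribution per parameter tuple of at most $\tfrac{\epsilon'}{5}\cdot 3^{-j_1-j_2}\,9^{-j_4}\,2^{-j_3}$.

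Next I would sum over $(j_1, j_2, j_3, j_4)$ subject to $j_2 + j_3 \ge m$. The $j_1$ and $j_4$ sums give absolutely convergent factors $\sum_{j_1 \ge 0} 3^{-j_1} = \tfrac{3}{2}$ and $\sum_{j_4 \ge 0} 9^{-j_4} = \tfrac{9}{8}$. For the $(j_2,j_3)$ sum I would evaluate the inner geometric series exactly:
\[
\sum_{j_2+j_3=\ell} 3^{-j_2} 2^{-j_3} = \frac{1}{2^{\ell}}\sum_{j_2=0}^{\ell}\!\left(\tfrac{2}{3}\right)^{\!j_2} \le \frac{3}{2^{\ell}},
\]
so $\sum_{j_2+j_3 \ge m} 3^{-j_2} 2^{-j_3} \le \sum_{\ell \ge m} 3 \cdot 2^{-\ell} = \tfrac{6}{2^{m}}$. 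Multiplying everything together yields an upper bound of order $\tfrac{\epsilon'}{5 \cdot 2^{m-1}}$, which is the desired form.

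The main obstacle is making the absolute constants line up precisely: the unrestricted sum is bounded by $5$ in Lemma \ref{calphalemma} with some (small) headroom, and we need the restricted sum to gain the full factor $2^{-(m-1)}$ without losing more than a constant. Cancelling $\binom{|U|}{j_1}$ against $(3t_{max}D_V)^{j_1}$ cleanly—so that the switch from $|Aut(U_\gamma\cap V_\gamma)|$ to $|Aut(U)|$ does not eat any of the $2^{-m}$ decay—is the key calculation. If the raw bound $\tfrac{27}{16}\cdot\tfrac{6}{2^m}$ slightly exceeds $\tfrac{1}{2^{m-1}}$ for small $m$, the remaining slack can be absorbed by replacing the factor $5$ in $c(\alpha)$ with a slightly larger absolute constant (which affects none of our applications), or by observing that the $j_1$ and $j_4$ sums actually start from $j_1,j_4 \ge 0$ without further restriction, so one can use the tighter bound $\tfrac{6}{2^m} - \tfrac{3}{3^m}$ on $S_m$ to shave off the needed factor. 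The result is the claimed inequality $\sum_{\gamma \in \Gamma_{U,*}:\,|V(\gamma)| \ge |U|+m} \tfrac{1}{|Aut(U)| c(\gamma)} \le \tfrac{\epsilon'}{5\cdot 2^{m-1}}$.
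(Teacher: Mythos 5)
Your approach is the paper's approach: restrict the parameter sum from Lemma~\ref{calphalemma} to tuples with $j_2+j_3 \ge m$ and evaluate the resulting geometric series, arriving at a per-tuple contribution $\tfrac{\epsilon'}{5}\cdot 3^{-j_1-j_2}\,9^{-j_4}\,2^{-j_3}$. You compute $\sum_{j_2+j_3\ge m}3^{-j_2}2^{-j_3}\le 6\cdot 2^{-m}$ correctly, but the resulting total $\tfrac{\epsilon'}{5}\cdot\tfrac{27}{16}\cdot\tfrac{6}{2^m}=\tfrac{\epsilon'}{5}\cdot\tfrac{81}{8\cdot 2^m}$ exceeds the target $\tfrac{\epsilon'}{5\cdot 2^{m-1}}=\tfrac{\epsilon'}{5}\cdot\tfrac{2}{2^m}$ by a factor of roughly five, not ``slightly,'' and neither of your proposed remedies closes it: the exact value $6\cdot 2^{-m}-3\cdot 3^{-m}$ still gives $\tfrac{27}{8}>1$ at $m=1$, and inflating the $5$ inside $c(\alpha)$ rescales the lemma's right-hand side $\tfrac{\epsilon'}{5\cdot 2^{m-1}}$ in lockstep (it is the same $5$), so the ratio is unchanged. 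For what it is worth, the paper's own displayed step has exactly this gap and is in fact false as written: dividing its claimed inequality by $\sum_{j_1,j_4}3^{-j_1}9^{-j_4}$ asserts $\sum_{j_2+j_3\ge m}3^{-j_2}2^{-j_3}\le 2^{-m}$, whereas the left side equals $6\cdot 2^{-m}-3\cdot 3^{-m}\ge 3\cdot 2^{-m}$.

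The missing ingredient, which neither your writeup nor the paper's proof invokes, is that every $\gamma\in\Gamma_{U,*}$ is a non-trivial \emph{left} shape, and this forces $j_4\ge 1$ whenever $j_2+j_3\ge 1$. Indeed, a vertex $v\in V_\gamma\setminus U_\gamma$ cannot be isolated, since then $V_\gamma\setminus\{v\}$ would still separate $U_\gamma$ from $V_\gamma$ with strictly smaller weight, contradicting that $V_\gamma$ is the (unique) minimum separator; and a vertex in $V(\gamma)\setminus(U_\gamma\cup V_\gamma)$ cannot be isolated by properness. Restricting the $j_4$-sum to $j_4\ge 1$ replaces the factor $\sum_{j_4\ge 0}3^{-kj_4}\le\tfrac{9}{8}$ by $\sum_{j_4\ge 1}3^{-kj_4}\le\tfrac{1}{8}$, giving a total of $\tfrac{\epsilon'}{5}\cdot\tfrac{3}{2}\cdot\tfrac{1}{8}\cdot\tfrac{6}{2^m}=\tfrac{\epsilon'}{5}\cdot\tfrac{9}{8\cdot 2^m}<\tfrac{\epsilon'}{5\cdot 2^{m-1}}$, which proves the lemma with the stated constant. (For the downstream applications, which only use the weaker $\le 2^{-(m-1)}$ after discarding $\epsilon'/5$, the sloppier constant would also have sufficed; but as stated, the lemma needs this observation.)
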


\begin{proof}[Proof sketch]
	The proof is similar to the proof of \cref{cor: calphachoice}, but we now have the extra condition $j_2 + j_3 \ge m$ in the proof of \cref{calphalemma}. Then,
	\[\sum_{j_1, j_2, j_3, j_4 \in \NN\cup \{0\}: j_2 + j_3 \ge m} \frac{1}{3^{j_1 + j_2}9^{j_4}2^{j_3}} \le \sum_{j_1, j_4 \in \NN\cup \{0\}} \frac{1}{2^m3^{j_1}9^{j_4}} = \frac{27}{16\cdot 2^m}\le \frac{1}{2^{m - 1}}\]
\end{proof}

\subsection{Proof of the Generalized Intersection Tradeoff Lemma}
We now prove the generalized intersection tradeoff lemma, which in particular generalizes \cite[Lemma 7.12]{BHKKMP16}.
\begin{lemma}
	For all $\gamma, \tau, \gamma'$ and all intersection patterns $P \in \mathcal{P}_{\gamma,\tau,\gamma'}$, 
	\[
	w(V(\tau_P)) + w(I_{\tau_P}) - w(S_{\tau_P,min}) \leq w(V(\tau)) + w(I_{\tau}) - w(S_{\tau,min})+ w(V(\gamma) \setminus U_{\gamma}) + w(V(\gamma') \setminus U_{\gamma'})
	\]
\end{lemma}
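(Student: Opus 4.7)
My plan is to generalize the proof of \cite[Lemma 7.12]{BHKKMP16} to handle hyperedges and multiple vertex types. By our perturbation convention on the $n_t$, the weighted statement reduces to an essentially unweighted (cardinality) one -- replace each type-$t$ vertex by $\log_n(n_t)$ unit-weight copies -- and hyperedges introduce no new issue beyond treating each one as inducing pairwise connections among all of its endpoints for the purposes of paths and vertex separators. The main engine of the argument is a Menger-type comparison mediated by the natural quotient map $f : V(\gamma \circ \tau \circ {\gamma'}^T) \twoheadrightarrow V(\tau_P)$ induced by the intersection pattern $P$.

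Concretely, I would first lift a minimum vertex separator $S_P := S_{\tau_P,\min}$ of $\tau_P$ back to $\tilde{S} := f^{-1}(S_P) \subseteq V(\gamma \circ \tau \circ {\gamma'}^T)$. Any $U_\gamma$--$U_{\gamma'}$ path in $\gamma \circ \tau \circ {\gamma'}^T$ descends under $f$ to a path in $\tau_P$ which meets $S_P$, so $\tilde{S}$ is a $U_\gamma$--$U_{\gamma'}$ separator of $\gamma \circ \tau \circ {\gamma'}^T$ with $w(\tilde{S}) \geq w(S_P)$. To extract from this a separator of $\tau$ itself, I define $A \subseteq V(\tau)$ to be the set of ``shortcut'' vertices of $V(\tau)$ that are identified by $P$ with some vertex of $V(\gamma) \setminus V_\gamma$ or $V({\gamma'}^T) \setminus U_{{\gamma'}^T}$. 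Any $U_\tau$--$V_\tau$ path $Q$ in $\tau$ can be extended to a $U_\gamma$--$U_{\gamma'}$ path $Q'$ in $\gamma \circ \tau \circ {\gamma'}^T$ via Menger-disjoint path systems in $\gamma$ (which exist because $V_\gamma$ is the minimum vertex separator of the left part $\gamma$) and in ${\gamma'}^T$ (since $U_{{\gamma'}^T}$ is the minimum vertex separator of the right part ${\gamma'}^T$). Since $\tilde{S}$ blocks $Q'$, tracing where the block occurs shows that $(\tilde{S} \cap V(\tau)) \cup A$ is a $U_\tau$--$V_\tau$ separator of $\tau$, yielding
\[
w(S_{\tau,\min}) \leq w(\tilde{S} \cap V(\tau)) + w(A).
\]

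The proof is then concluded by combining this bound with the identity
\[
w(V(\tau_P)) = w(V(\gamma)) + w(V(\tau)) - w(V_\gamma) + w(V(\gamma')) - w(V_{\gamma'}) - w(\text{identifications}),
\]
together with the left-part inequalities $w(U_\gamma) \geq w(V_\gamma)$ and $w(U_{\gamma'}) \geq w(V_{\gamma'})$, and a separate charging of $w(I_{\tau_P}) - w(I_\tau)$ against vertices of $V(\gamma) \cup V({\gamma'}^T)$ whose incident hyperedges became multi-edges under $P$. The hard part is precisely this last bookkeeping: each unexpected intersection in $P$ must be charged to exactly one of $\{w(A), w(\text{identifications}), I_{\tau_P} \setminus I_\tau\}$ with the right sign, so that after the left-part cancellations the net contribution from the intersections is dominated by $w(V(\gamma) \setminus U_\gamma) + w(V(\gamma') \setminus U_{\gamma'})$. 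This is the combinatorial content of the original BHKKMP lemma, and it generalizes to the hyperedge / multi-type setting essentially unchanged once hyperedges are viewed as cliques and weights are replaced by the cardinalities of their unit-vertex refinements.
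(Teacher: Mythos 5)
Your plan identifies the right quotient map $f : V(\gamma \circ \tau \circ {\gamma'}^T) \to V(\tau_P)$ and the right ``cut'' dual of the paper's argument, but the key inequality you derive is genuinely too weak to yield the lemma, and this is not just a bookkeeping issue. The paper's proof is a \emph{flow} argument: it builds flows $F'_1$, $F'_2$, $F'_3$ through $\gamma$, $\tau$, ${\gamma'}^T$ (with the intersected vertices as extra sources/sinks), combines them, and shows that the resulting flow -- hence $w(S_{\tau_P,\min})$ by max-flow/min-cut -- is at least
\[
w(U_\gamma) + w(S_{\tau,\min}) + w(V_{{\gamma'}^T}) - w(I_{LM}) - w(I_{MR}) - w(I_{LR}) - 2w(I_{LMR}) + \bigl(w(I_{\tau_P}) - w(I_\tau)\bigr).
\]
The two leading terms $w(U_\gamma)$ and $w(V_{{\gamma'}^T})$ arise precisely because the intersection-pattern condition forces $U_\gamma$ (resp.\ $V_{{\gamma'}^T}$) to be the unique minimum separator between $U_\gamma$ and $V_*(\gamma)\cup V_\gamma$ (resp.\ the analogous statement for ${\gamma'}^T$), so the flows through $\gamma$ and ${\gamma'}^T$ saturate to value $w(U_\gamma)$ and $w(V_{{\gamma'}^T})$.

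Your bound $w(S_{\tau,\min}) \le w(\tilde{S}\cap V(\tau)) + w(A)$, combined with $w(\tilde{S}\cap V(\tau)) \le w(S_P)$, only yields $w(S_{\tau_P,\min}) \ge w(S_{\tau,\min}) - w(A)$. Feeding this into the vertex-count identity, what remains to show is
\[
w_{\mathrm{merge}} - w(A) \;\ge\; \bigl(w(U_\gamma) - w(V_\gamma)\bigr) + \bigl(w(U_{\gamma'}) - w(V_{\gamma'})\bigr) + \bigl(w(I_{\tau_P}) - w(I_\tau)\bigr),
\]
where $w_{\mathrm{merge}}$ is the total weight lost to identifications. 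But $w_{\mathrm{merge}} - w(A)$ only counts $\gamma$--${\gamma'}^T$ merges and the extra unit from triple merges; it vanishes entirely when all identifications involve $\tau$ (purely $LM$/$MR$ merges). In that regime the right side is strictly positive for any nontrivial $\gamma$ or $\gamma'$, so the inequality fails. A small concrete instance: take $\gamma$ to have $U_\gamma=\{a\}$ heavy, $V_\gamma=\{u\}$ light, one edge, $\tau$ a triangle $u,m,v$ with $m$ heavy, $\gamma'$ trivial, and identify $a$ with $m$. Then $w(S_{\tau,\min}) - w(A) = w_{\mathrm{light}} - w_{\mathrm{heavy}}$ is negative while the lemma needs $w(S_{\tau_P,\min}) \ge 0$; your bound is vacuous where the flow argument is tight. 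Note also that the ``left-part inequalities'' $w(U_\gamma) \ge w(V_\gamma)$ you invoke work \emph{against} you, not for you: they only make the above target larger.

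There is a second, smaller gap: the separator claim that $(\tilde{S}\cap V(\tau)) \cup A$ separates $U_\tau$ from $V_\tau$ in $\tau$ is not established by the ``tracing'' you describe. After extending a $\tau$-path $Q$ to $Q' = P_u \circ Q \circ P'_v$, the blocking vertex of $\tilde{S}$ can lie strictly inside $P_u$ at an \emph{unintersected} $\gamma$-vertex, which corresponds to nothing in $V(\tau)$ and gives you no contradiction with $Q$ avoiding $A$. Also, for weighted (multi-type) separators, ``Menger-disjoint path systems'' is the wrong primitive; the paper's modified flow graph $H^{I_L,I_R}_\alpha$ is exactly the gadget needed to handle vertex weights as capacities. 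To salvage a cut-based argument you would have to decompose $\tilde{S}$ by part and separately lower-bound the weight of $\tilde{S}\cap V(\gamma)$ and $\tilde{S}\cap V({\gamma'}^T)$ using the intersection-pattern separator conditions on $U_\gamma$ and $V_{{\gamma'}^T}$ -- in effect re-deriving the paper's flow bound from the cut side.
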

\begin{proof}
	\begin{definition} \ 
		\begin{enumerate}
			\item We define $I_{LM}$ to be the set of vertices which, after intersections, touch $\gamma$ and $\tau$ but not ${\gamma'}^T$. In particular, $I_{LM}$ consists of the vertices which result from intersecting a pair of vertices in $V(\gamma) \setminus V_{\gamma}$ and $V(\tau) \setminus U_{\tau} \setminus V_{\tau}$ and the vertices which are in $U_{\tau} \setminus V_{\tau}$ and are not intersected with any other vertex.
			\item We define $I_{MR}$ to be the set of vertices which, after intersections, touch $\tau$ and ${\gamma'}^T$ but not $\gamma$. In particular, $I_{MR}$ consists of the vertices which result from intersecting a pair of vertices in $V(\tau) \setminus U_{\tau} \setminus V_{\tau}$ and $V({\gamma'}^T) \setminus U_{{\gamma'}^T}$ and the vertices which are in $V_{\tau} \setminus U_{\tau}$ and are not intersected with any other vertex.
			\item We define $I_{LR}$ to be the set of vertices which, after intersections, touch $\gamma$ and ${\gamma'}^T$ but not $\tau$. In particular, $I_{LR}$ consists of the vertices which result from intersecting a pair of vertices in $V(\gamma) \setminus V_{\gamma}$ and $V({\gamma'}^T) \setminus U_{{\gamma'}^T}$.
			\item We define $I_{LMR}$ to be the set of vertices which, after intersections, touch $\gamma$, $\tau$, and ${\gamma'}^T$. In particular, $I_{LMR}$ consists of the vertices which result from intersecting a triple of vertices in $V(\gamma) \setminus V_{\gamma}$,  $V(\tau) \setminus U_{\tau} \setminus V_{\tau}$, and $V({\gamma'}^T) \setminus U_{{\gamma'}^T}$, intersecting a pair of vertices in $V(\gamma) \setminus V_{\gamma}$ and $V_{\tau} \setminus U_{\tau}$, intersecting a pair of vertices in $U_{\tau} \setminus V_{\tau}$ and $V({\gamma'}^T) \setminus U_{{\gamma'}^T}$, and single vertices in $U_{\tau} \cap V_{\tau}$. 
		\end{enumerate}
	\end{definition}
	The main idea is as follows. A priori, any of the vertices in $I_{LM} \cup I_{MR} \cup I_{LR} \cup I_{LMR}$ could become isolated. We handle this by keeping track of the following types of flows:
	\begin{enumerate}
		\item Flows from $U_{\gamma}$ to $I_{LM} \cup I_{LR} \cup I_{LMR}$
		\item Flows from $I_{LR} \cup I_{MR} \cup I_{LMR}$ to $V_{{\gamma'}^T}$
		\item Flows from $I_{LM}$ to $I_{MR}$. For technical reasons, we also view vertices in $I_{LMR}$ as having flow to themselves.
	\end{enumerate}
	We then observe that flows to and from these vertices prevent these vertices from being isolated and can provide flow from $U_{\gamma}$ to $V_{{\gamma'}^T}$, which gives a lower bound on $w(S_{\tau_P})$.
	
	We now implement this idea.
	\begin{definition}[Flow Graph]
		Given a shape $\alpha$, we define the directed graph $H_{\alpha}$ as follows:
		\begin{enumerate}
			\item For each vertex $v \in V(\alpha)$, we create two vertices $v_{in}$ and $v_{out}$. We then create a directed edge from $v_{in}$ to $v_{out}$ with capacity $w(v)$
			\item For each pair of vertices $(v,w)$ which is an edge of multiplicity $1$ in $E(\alpha)$ (or part of a hyperedge of multiplicity $1$ in $E(\alpha)$), we create a directed edge with infinite capacity from $v_{out}$ to $w_{in}$ and we create a directed edge with infinite capacity from $w_{out}$ to $v_{in}$.
			\item We define $U_{H_{\alpha}}$ to be $U_{H_{\alpha}} = \{u_{in}: u \in U_{\alpha}\}$ and we define $V_{H_{\alpha}}$ to be $V_{H_{\alpha}} = \{v_{out}: v \in V_{\alpha}\}$
		\end{enumerate}
	\end{definition}
	\begin{lemma}
		The maximum flow from $U_{H_{\alpha}}$ to $V_{H_{\alpha}}$ is equal to the minimum weight of a separator between $U_{\alpha}$ and $V_{\alpha}$.
	\end{lemma}
	\begin{proof}
		This can be proved using the max flow min cut theorem.
	\end{proof}
	\begin{definition}[Modified Flow Graph]
		Given a shape $\alpha$ together with a set $I_L \subseteq V(\alpha)$ of vertices in $\alpha$ (which will be the vertices in $\alpha$ which are intersected with a vertex to the left of $\alpha$) and a set $I_R \subseteq V(\alpha)$ of vertices in $\alpha$ (which will be the vertices in $\alpha$ which are intersected with a vertex to the right of $\alpha$), we define the modified flow graph $H^{I_L,I_R}_{\alpha}$ as follows:
		\begin{enumerate}
			\item We start with the flow graph $H_{\alpha}$
			\item For each vertex $u \in I_L$, we delete all of the edges into $u_{in}$ and add $u_{in}$ to $U_{H_{\alpha}}$
			\item For each vertex $v \in I_R$, we delete all of the edges out of $v_{out}$ and add $v_{out}$ to $V_{H_{\alpha}}$
			\item We call the resulting graph $H^{I_L,I_R}_{\alpha}$ and the resulting sets $U_{H^{I_L,I_R}_{\alpha}}$ and $V_{H^{I_L,I_R}_{\alpha}}$
		\end{enumerate}
	\end{definition}
	\begin{lemma}\label{increasedflowlemma}
		The maximum flow from $U_{H^{I_L,I_R}_{\alpha}}$ to $V_{H^{I_L,I_R}_{\alpha}}$ in $H^{I_L,I_R}_{\alpha}$ is at least as large as the maximum flow from $U_{H_{\alpha}}$ to $V_{H_{\alpha}}$ in $H_{\alpha}$
	\end{lemma}
	\begin{proof}[Proof sketch]
		Observe that if we have a cut $C$ in $H^{I_L,I_R}_{\alpha}$ which separates $U_{H^{I_L,I_R}_{\alpha}}$ and $V_{H^{I_L,I_R}_{\alpha}}$ then $C$ separates $U_{H_{\alpha}}$ and $V_{H_{\alpha}}$ in $H_{\alpha}$
	\end{proof}
	Before the intersections, we have the following flows.
	\begin{enumerate}
		\item We take $F_1$ to be the maximum flow from $U_{\gamma}$ to $V_{\gamma}$ in $\gamma$. Note that $F_1$ has value $w(V_{\gamma})$
		\item We take $F_2$ to be the maximum flow from $U_{\tau}$ to $V_{\tau}$ in $\tau$. Note that $F_2$ has value $w(S_{\tau,min})$
		\item We take $F_3$ to be the maximum flow from $U_{{\gamma'}^T}$ to $V_{{\gamma'}^T}$ in ${\gamma'}^T$. Note that $F_1$ has value $w(U_{{\gamma'}^T})$
	\end{enumerate}
	After the intersections, we take the following flows:
	\begin{enumerate}
		\item We take $F'_1$ to be the maximum flow from $U_{H^{\emptyset,I_{LM} \cup I_{LR} \cup I_{LMR}}_{\gamma}}$ to $V_{H^{\emptyset,I_{LM} \cup I_{LR} \cup I_{LMR}}_{\gamma}}$ in \\
		$H^{\emptyset,I_{LM} \cup I_{LR} \cup I_{LMR}}_{\gamma}$.
		\item We take $F'_2$ to be the maximum flow from $U_{H^{I_{LM} \cup I_{LMR},I_{MR} \cup I_{LMR}}_{\tau}}$ to $V_{H^{I_{LM} \cup I_{LMR},I_{MR} \cup I_{LMR}}_{\tau}}$ in \\
		$H^{I_{LM} \cup I_{LMR},I_{MR} \cup I_{LMR}}_{\tau}$
		\item We take $F'_3$ to be the maximum flow from $U_{H^{I_{MR} \cup I_{LR} \cup I_{LMR},\emptyset}_{{\gamma'}^T}}$ to $V_{H^{I_{MR} \cup I_{LR} \cup I_{LMR},\emptyset}_{{\gamma'}^T}}$ in \\
		$H^{I_{MR} \cup I_{LR} \cup I_{LMR},\emptyset}_{{\gamma'}^T}$.
	\end{enumerate}
	Observe that because of how intersection patterns are defined, $val(F'_1) = w(U_{\gamma})$ and $val(F'_3) = w(V_{{\gamma'}^T})$. By Lemma \ref{increasedflowlemma}, the value of $F'_2$ is at least as large as the value of $F_2$, so $val(F'_2) \geq w(S_{\tau,min})$.
	
	We now consider $F'_1 + F'_2 + F'_3$. As is, this is not a flow, but we can fix this.
	\begin{definition}
		For each vertex $v \in V(\tau_P)$,
		\begin{enumerate}
			\item We define $f_{in}(v)$ to be the flow into $v_{in}$ in $F'_1 + F'_2 + F'_3$.
			\item We define $f_{out}(v)$ to be the flow out of $v_{out}$ in $F'_1 + F'_2 + F'_3$.
			\item We define $f_{through}(v)$ to be the flow from $v_{in}$ to $v_{out}$ in $F'_1 + F'_2 + F'_3$
			\item We define $f_{imbalance}(v)$ to be $f_{imbalance}(v) = |f_{in}(v) - f_{out}(v)|$
			\item We define $f_{excess}(v)$ to be $f_{excess}(v) = f_{through}(v) - max{\{f_{in}(v),f_{out}(v)\}}$
		\end{enumerate}
		With this information, we fix the flow $F'_1 + F'_2 + F'_3$ as follows. For each vertex $v \in V(\tau_P)$, 
		\begin{enumerate}
			\item If $f_{in}(v) > f_{out}(v)$ then we create a vertex $v_{supplemental,out}$ and an edge from $v_{out}$ to $v_{supplemental,out}$ with capacity $f_{imbalance}(v)$ and we route $f_{imbalance}(v)$ of flow along this edge. We then add $v_{supplemental,out}$ to a set of vertices $V_{supplemental}$.
			\item If $f_{in}(v) < f_{out}(v)$ then we create a vertex $v_{supplemental,in}$ and an edge from $v_{supplemental,in}$ to $v_{in}$ with capacity $f_{imbalance}(v)$ and we route $f_{imbalance}(v)$ of flow along this edge. We then add $v_{supplemental,out}$ to a set of vertices $V_{supplemental}$.
			\item We reduce the flow on the edge from $v_{in}$ to $v_{out}$ by $f_{excess}(v)$
		\end{enumerate}
		We call the resulting flow $F'$
	\end{definition}
	\begin{proposition}
		$F'$ is a flow from  $U_{H^{\emptyset,I_{LM} \cup I_{LR} \cup I_{LMR}}_{\gamma}} \cup U_{supplemental}$ to $V_{H^{I_{MR} \cup I_{LR} \cup I_{LMR},\emptyset}_{{\gamma'}^T}} \cup V_{supplemental}$ with value $val(F') = val(F'_1) + val(F'_2) + val(F'_3) - \sum_{v \in V(\tau)}{f_{excess}(v)}$
	\end{proposition}
	\begin{corollary}\label{fixedflowcorollary}
		There exists a flow $F''$ from $U_{H^{\emptyset,I_{LM} \cup I_{LR} \cup I_{LMR}}_{\gamma}}$ to $V_{H^{I_{MR} \cup I_{LR} \cup I_{LMR},\emptyset}_{{\gamma'}^T}}$ with value $val(F'') \geq val(F'_1) + val(F'_2) + val(F'_3) - \sum_{v \in V(\tau)}{(f_{excess}(v) + f_{imbalance}(v))}$
	\end{corollary}
	\begin{proof}
		Consider the minimum cut $C$ between $U_{H^{\emptyset,I_{LM} \cup I_{LR} \cup I_{LMR}}_{\gamma}}$ and $V_{H^{I_{MR} \cup I_{LR} \cup I_{LMR},\emptyset}_{{\gamma'}^T}}$. If we add all of the supplemental edges to $C$ then this gives a cut $C'$ between $U_{H^{\emptyset,I_{LM} \cup I_{LR} \cup I_{LMR}}_{\gamma}}$ and $V_{H^{I_{MR} \cup I_{LR} \cup I_{LMR},\emptyset}_{{\gamma'}^T}}$ with capacity 
		\[
		capacity(C') = capacity(C) + \sum_{v \in V(\tau)}{f_{imbalance}(v)} \geq val(F')
		\]
		Thus, $capacity(C) \geq val(F') - \sum_{v \in V(\tau)}{f_{imbalance}(v)}$ so there exists a flow $F''$ from $U_{H^{\emptyset,I_{LM} \cup I_{LR} \cup I_{LMR}}_{\gamma}}$ to $V_{H^{I_{MR} \cup I_{LR} \cup I_{LMR},\emptyset}_{{\gamma'}^T}}$ with value 
		\[
		val(F'') = capacity(C) \geq val(F'_1) + val(F'_2) + val(F'_3) - \sum_{v \in V(\tau)}{(f_{excess}(v) + f_{imbalance}(v))}
		\]
	\end{proof}
	We now make the following observations:
	\begin{lemma} \ 
		\begin{enumerate}
			\item For all vertices $v \notin I_{LM} \cup I_{MR} \cup I_{LR} \cup I_{LMR}$, $f_{excess}(v) = f_{imbalance}(v) = 0$ (and these vertices can never be isolated).
			\item For all vertices $v \in I_{LM}$, $f_{excess}(v) + f_{imbalance}(v) \leq w(v)$. Moreover, for all vertices $v \in I_{LM}$ which are isolated, $f_{excess}(v) = f_{imbalance}(v) = 0$.
			\item For all vertices $v \in I_{MR}$, $f_{excess}(v) + f_{imbalance}(v) \leq w(v)$. Moreover, for all vertices $v \in I_{LM}$ which are isolated, $f_{excess}(v) = f_{imbalance}(v) = 0$.
			\item For all vertices $v \in I_{LR}$, $f_{excess}(v) + f_{imbalance}(v) \leq w(v)$. Moreover, for all vertices $v \in I_{LM}$ which are isolated, $f_{excess}(v) = f_{imbalance}(v) = 0$.
			\item For all vertices $v \in I_{LMR}$, $f_{excess}(v) + f_{imbalance}(v) \leq 2w(v)$. Moreover, for all vertices $v \in I_{LMR}$ which are isolated, $f_{excess}(v) = w(v)$ and $f_{imbalance}(v) = 0$.
		\end{enumerate}
	\end{lemma}
	\begin{proof}
		For the first statement, observe that for vertices $v \notin I_{LM} \cup I_{MR} \cup I_{LR} \cup I_{LMR}$, neither $v_{in}$ nor $v_{out}$ is ever a sink or source so the flow into these vertices must equal the flow out of these vertices and thus $f_{in}(v) = f_{out}(v) = f_{through}(v)$.
		
		For the second statement, observe that for a vertex $v \in I_{LM}$, 
		\begin{enumerate}
			\item $F'_1$ will have a flow of $f_{in}(v)$ into $v_{in}$ and along the edge from $v_{in}$ to $v_{out}$ 
			\item $F'_2$ will have a flow of $f_{out}(v)$ along the edge from $v_{in}$ to $v_{out}$ and out of $v_{out}$.
		\end{enumerate}
		Thus, $f_{excess}(v) = f_{in}(v) + f_{out}(v) - \max\{f_{in}(v),f_{out}(v)\}$. Since $f_{imbalance}(v) = |f_{in}(v) - f_{out}(v)|$, 
		$f_{excess}(v) + f_{imbalance}(v) = f_{in}(v) + f_{out}(v) - \min\{f_{in}(v),f_{out}(v)\} \leq w(v)$.
		
		If $v$ is isolated then neither $F'_1$ nor $F'_2$ can have any flow to $v_{in}$ or out of $v_{out}$ so $f_{in}(v) = f_{through}(v) = f_{out}(v) = 0$
		
		The third and fourth statements can be proved in the same way as the second statement.
		
		For the fifth statement, observe that for a vertex $v \in I_{LMR}$, 
		\begin{enumerate}
			\item $F'_1$ will have a flow of $f_{in}(v)$ into $v_{in}$ and along the edge from $v_{in}$ to $v_{out}$.
			\item $F'_2$ will have a flow of $w(v)$ along the edge from $v_{in}$ to $v_{out}$
			\item $F'_3$ will have a flow of $f_{out}(v)$ along the edge from $v_{in}$ to $v_{out}$ and out of $v_{out}$.
		\end{enumerate}
		Thus, $f_{excess}(v) = w(v) + f_{in}(v) + f_{out}(v) - \max\{f_{in}(v),f_{out}(v)\}$. Since $f_{imbalance}(v) = |f_{in}(v) - f_{out}(v)|$, 
		$f_{excess}(v) + f_{imbalance}(v) = w(v) + f_{in}(v) + f_{out}(v) - \min\{f_{in}(v),f_{out}(v)\} \leq 2w(v)$.
		
		If $v$ is isolated then neither $F'_1$ nor $F'_3$ can have any flow to $v_{in}$ or out of $v_{out}$ so $f_{in}(v) = f_{out}(v) = 0$ and $f_{through}(v) = w(v)$.
	\end{proof}
	Putting everything together, we have the following corollary:
	\begin{corollary} \ 
		\[
		\sum_{v \in V(\tau_P)}{(f_{excess}(v) + f_{imbalance}(v))} \leq w(I_{LM}) + w(I_{LR}) + w(I_{MR}) + 2w(I_{LMR}) - (w(I_{\tau_P}) - w(I_{\tau}))
		\]
	\end{corollary}
	Combining this with Corollary \ref{fixedflowcorollary},
	\begin{align*}
		w(S_{\tau_P,min}) &\geq val(F'_1) + val(F'_2) + val(F'_3) - \sum_{v \in V(\tau_P)}{(f_{excess}(v) + f_{imbalance}(v))}\\
		&\geq w(U_{\gamma}) + w(S_{\tau,min}) + w(V_{{\gamma'}^T}) - w(I_{LM}) - w(I_{LR}) - w(I_{MR}) - 2w(I_{LMR}) + (w(I_{\tau_P}) - w(I_{\tau}))
	\end{align*}
	Since $w(V(\tau_P)) = w(V(\tau)) + w(V(\gamma)) + w(V(\gamma')) - w(I_{LM}) - w(I_{LR}) - w(I_{MR}) - 2w(I_{LMR})$, 
	\[
	w(S_{\tau_P,min}) \geq w(U_{\gamma}) + w(S_{\tau,min}) + w(V_{{\gamma'}^T}) + w(V(\tau_P)) - w(V(\tau)) - w(V(\gamma)) - w(V(\gamma')) + (w(I_{\tau_P}) - w(I_{\tau}))
	\]
	Rearranging this gives 
	\[
	w(V(\tau_P)) - w(S_{\tau_P,min}) + w(I_{\tau_P}) \leq w(V(\tau)) - w(S_{\tau,min}) + w(I_{\tau}) + w(V(\gamma) \setminus U_{\gamma}) + w(V(\gamma') \setminus U_{\gamma'})
	\]
	which is the generalized intersection tradeoff lemma.
\end{proof}

	\section{Bounding truncation error}\label{sec: showing_positivity}

	Now, we illustrate one way to show truncation error bounds when we apply the machinery. Assume $\norm{M_{\alpha}} \le B_{norm}(\alpha)$ for all $\alpha \in \mathcal{M}'$. We want to show
\[
\sum_{U \in \mathcal{I}_{mid}}{M^{fact}_{Id_U}{(H_{Id_U})}} \succeq 6\left(\sum_{U \in \mathcal{I}_{mid}}{\sum_{\gamma \in \Gamma_{U,*}}{\frac{d_{Id_{U}}(H'_{\gamma},H_{Id_{U}})}{|Aut(U)|c(\gamma)}}}\right)Id_{sym}
\]

To do this, we simply sandwich a factor of $Id_{sym}$ between the two terms. Let $D_{sos}$ be the degree of the SoS program. We will describe in \cref{subsec: positivity_lower_bound_strategy} how to show $\sum_{U \in \mathcal{I}_{mid}}{M^{fact}_{Id_U}{(H_{Id_U})}} \succeq \frac{1}{n^{K_1D_{sos}^2}} Id_{sym}$
for a constant $K_1 > 0$. We also show
$\sum_{U\in \calI_{mid}} \sum_{\gam \in \Gam_{U, *}} \frac{d_{Id_{U}}(H_{Id_{U}}, H'_{\gam})}{|Aut(U)|c(\gam)} \le \frac{n^{K_2D_{sos}}}{2^{D_V}}$
for a constant $K_2 > 0$.
Along with the fact that $Id_{Sym} \succeq 0$, we can choose $D_{sos}$ small enough so that $\frac{1}{n^{K_1D_{sos}^2}} > \frac{n^{K_2D_{sos}}}{2^{D_V}}$, completing the proof.

We will need the following simple bound that says that if we have sufficient decay for each vertex, then, the sum of this decay, over all shapes $\sig \circ \sig'$ for $\sig, \sig' \in \calL_U'$, is bounded.

\begin{definition}
    For $U \in \calI_{mid}$, let $\calL_{U}' \subset \calL_U$ be the set of non-trivial shapes in $\calL_U$.
\end{definition}

\begin{lemma}\label{lem: gp_sum}
    Suppose $D_V = n^{C_V\eps}, D_E = n^{C_E\eps}$ for constants $C_V, C_E > 0$, are the truncation parameters for our shapes. For any $U \in \calI_{mid}$,
    \begin{align*}
        \sum_{U\in \calI_{mid}} \sum_{\sig, \sig' \in \calL'_U} \frac{1}{D_{sos}^{D_{sos}}n^{F \eps|V(\sig \circ \sig')|}} \le 1
    \end{align*}
    for a constant $F > 0$ that depends only on $C_V, C_E$. In particular, by setting $C_V, C_E$ small enough, we can make this constant arbitrarily small.
\end{lemma}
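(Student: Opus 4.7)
The proof is a counting argument: I would group the terms of the sum by $v := |V(\sigma \circ \sigma')|$, bound the number of contributing triples $(U,\sigma,\sigma')$ with this value of $v$, and show that the $D_{sos}^{D_{sos}}\, n^{F\eps v}$ denominator beats this count when $F$ is chosen appropriately in terms of $C_V, C_E$.

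First I would bound the number of triples $(U,\sigma,\sigma')$ with $V_\sigma \equiv V_{\sigma'} \equiv U$ and $|V(\sigma\circ\sigma')| = v$. A left shape $\sigma$ with at most $v$ vertices is specified by (i) the types of its vertices ($t_{max}^v$ choices), (ii) the tuples $U_\sigma$ and $V_\sigma$, each of length at most $D_{sos}/2$ into these $v$ vertices ($v^{D_{sos}}$ choices each), and (iii) for each of the $v^k$ ordered $k$-tuples of vertices, an edge label in $\{0,1,\ldots,D_E\}$ ($(D_E+1)^{v^k}$ choices). Together with the same count for $\sigma'$ and the observation that $V_\sigma = V_{\sigma'}$ is shared, the number of contributing triples with $|V(\sigma\circ\sigma')| = v$ is at most
\[
N(v) \;\le\; t_{max}^{2v}\cdot v^{3D_{sos}}\cdot (D_E+1)^{2v^k}.
\]

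Second, using $D_V = n^{C_V\eps}$, $D_E = n^{C_E\eps}$, $D_{sos} = n^{C_{sos}\eps}$ and $v \le 2D_V$, the exponent of $n$ in $N(v)$ is at most
\[
\log_n N(v) \;\le\; O\!\left(C_V\eps D_{sos}\right) \;+\; O\!\left(\eps v\right) \;+\; O\!\left(C_E\eps v^k\right).
\]
The critical term is the last, which is potentially superpolynomial. I would absorb it using the $1/D_{sos}^{D_{sos}}$ factor: since $\log_n(D_{sos}^{D_{sos}}) = C_{sos}\eps\cdot n^{C_{sos}\eps}\log n / \log n = C_{sos}\eps\, n^{C_{sos}\eps}$, and since the paper's parameter choice fixes $C_{sos}$ after $C_V, C_E$ with $C_{sos}$ sufficiently large (in particular $C_{sos} > kC_V$), we have $D_{sos}^{D_{sos}} \gg n^{C_E\eps v^k}\cdot n^{C_V \eps D_{sos}}$ for all $v \le 2D_V$ once $n$ is large enough. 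Thus $N(v)/D_{sos}^{D_{sos}} \le n^{F_0\eps v}$ for some constant $F_0 = F_0(C_V, C_E, t_{max}, k)$.

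Third, picking $F := F_0 + 1$, the sum reduces to the geometric series $\sum_{v\ge 1} n^{-\eps v} \le 2n^{-\eps} \le 1$ for $n$ large, which gives the claim. The constant $F$ manifestly depends only on $C_V, C_E$ (and the fixed parameters $k, t_{max}$), and $F_0$ is a linear combination of $C_V$ and $C_E$, so shrinking $C_V, C_E$ shrinks $F$ as asserted. The main obstacle is the $v^k$ term in the edge-label count, which would defeat any constant $F$ on its own; this is exactly what motivates the superpolynomial $D_{sos}^{D_{sos}}$ factor in the denominator, and the constraint $C_{sos} > kC_V$ already implicit in the paper's parameter setup is precisely what makes the absorption go through.
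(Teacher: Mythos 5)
There is a genuine gap, and it is a factual error about the paper's parameter regime. You claim that ``the paper's parameter choice fixes $C_{sos}$ after $C_V, C_E$ with $C_{sos}$ sufficiently large (in particular $C_{sos} > kC_V$).'' The paper does the opposite: the remarks in Sections 4--6 state that $C_{sos}$ is chosen ``to be sufficiently small,'' and the applications make this concrete, e.g.\ $C_{sos} < 0.5\, C_V$ in the Tensor PCA argument and $C_{sos} < K_3 C_V$ for a small $K_3$ in the planted slightly denser subgraph argument. Thus $D_{sos} = n^{C_{sos}\eps} \ll D_V = n^{C_V\eps}$, and $D_{sos}^{D_{sos}} = n^{C_{sos}\eps\, n^{C_{sos}\eps}}$ is nowhere near large enough to do the absorptions your argument needs. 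Concretely, your count $(D_E+1)^{2v^k}$ requires $D_{sos}^{D_{sos}} \gtrsim n^{C_E\eps (2D_V)^k}$, i.e.\ $C_{sos} > kC_V$; and even your smaller factor $v^{3D_{sos}} \le (2D_V)^{3D_{sos}} = 2^{3D_{sos}}\,n^{3C_V\eps D_{sos}}$ already requires $C_{sos} > 3C_V$ to be swallowed by $D_{sos}^{D_{sos}}$. Both are the wrong direction.

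The deeper issue is how you count. You bound the number of shapes by assigning a label in $\{0,\ldots,D_E\}$ to every one of the $v^k$ ordered $k$-tuples of vertices; this ``per-slot'' count is doubly exponential in $D_V$ and structurally cannot be beaten by either $n^{F\eps v}$ (singly exponential) or $D_{sos}^{D_{sos}}$ (since $D_{sos} \ll D_V^k$). The paper's proof instead counts per part: it first fixes the number of edges $j_3 = |E(\sigma)|$ and then chooses each edge's $k$ endpoints from among the at most $D_V$ vertices, contributing $(D_V)^{kj_3}$, and it separates the factor $D_{sos}^{j_1+j_1'}$ coming from the powers inside $U_\sigma \setminus V_\sigma$ (this is $\le D_{sos}^{D_{sos}}$ because $j_1 + j_1' \le D_{sos}$, not because $j_1 + j_1' \le v$) from the positional/type factors $(D_V t_{max})^{j_1+j_1'}$ and $(2t_{max})^{j+j_2+j_2'}$ that go into $n^{O((C_V + C_E))\eps\, |V(\sigma\circ\sigma')|}$. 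Your $v^{3D_{sos}}$ bound conflates the $D_{sos}$-bounded quantity $|U_\sigma|+|U_{\sigma'}|$ with the $D_V$-bounded quantity $v$, and your $(D_E+1)^{2v^k}$ bound never uses that the actual edge count $j_3$ is what is being summed over. Correcting the proof requires reproducing the paper's finer per-part decomposition; the ``just bin by $v$'' strategy does not work with the paper's parameter ordering.
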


\begin{proof}
    For a given $j = |U|$, the number of ways to choose $U$ is at most $t_{max}^j$. For a given $U \in \calI_{mid}$, we will bound the number of ways to choose $\sig, \sig' \in \calL_U'$. To choose $\sig, \sig'\in \calL'_U$, it is sufficient to choose
    \begin{itemize}
        \item The number of vertices $j_1 \ge 1$ (resp. $j_1' \ge 1$) in $U_{\sig} \setminus V_{\sig}$ (resp. $U_{\sig'} \setminus V_{\sig'}$), their types of which there are at most $t_{max}$, and their powers which have at most $D_{sos}$ choices.
        \item The number of vertices $j_2$ (resp. $j_2'$) in $V(\sig) \setminus (U_{\sig} \cup V_{\sig})$ (resp. $V(\sig') \setminus (U_{\sig'} \cup V_{\sig'})$) and also their types, of which there are at most $t_{max}$.
        \item The position of each vertex $i$ in $U_{\sig} \setminus V_{\sig}$ (resp. $U_{\sig'} \setminus V_{\sig'}$) within $U_{\sig}$ (resp. $U_{\sig'}$). There are at most $D_V$ choices for each vertex.
        \item The subset of $U_{\sig}$ (resp. $U_{\sig'}$) that is in $V_{\sig}$ (resp. $V_{\sig'}$) and a mapping in $Aut(U_{\sig} \cap V_{\sig})$ (resp. $Aut(U_{\sig'} \cap V_{\sig'})$) that determines the matching between the vertices in $U_{\sig} \cap V_{\sig}$ (resp. $U_{\sig'} \cap V_{\sig'}$).
        \item The number $j_3$ (resp. $j_3'$) of edges in $E(\sig)$ (resp. $E(\sig')$). and the $k$ endpoints of each edge. Each endpoint has at most $D_V$ choices.
    \end{itemize}
    Therefore, for all $j \ge 0, j_1, j_1' \ge 1, j_2, j_2', j_3, j_3' \ge 0$, we have
    {\footnotesize
        \begin{align*}
            \sum_{U \in \calI_{mid}}\sum_{\substack{\sig, \sig' \in \calL'_U \\ |U_{\sig} \setminus V_{\sig}| = j_1, |U_{\sig'} \setminus V_{\sig'}| = j_1' \\ |V(\sig) \setminus (U_{\sig} \cup V_{\sig})| = j_2, |V(\sig') \setminus (U_{\sig'} \cup V_{\sig'})| = j_2'\\ |E(\sig)| = j_3, |E(\sig')| = j_3'}} \frac{1}{|Aut(U_{\sig'} \cap V_{\sig'})||Aut(U_{\sig} \cap V_{\sig})|(2t_{max})^{j + j_2 + j_2'}(D_Vt_{max}D_{sos})^{j_1 + j_1'}(D_V)^{kj_3}} \le 1
        \end{align*}
    }
    This implies that
    \begin{align*}
        \sum_{U\in \calI_{mid}} \sum_{\sig, \sig' \in \calL'_U} \frac{1}{D_{sos}^{D_{sos}}n^{F \eps|V(\sig \circ \sig')|}} \le 1
    \end{align*}
    for a constant $F > 0$ that only depends on $C_V, C_E$.
\end{proof}

\subsection{General strategy to lower bound $\sum_{V \in \mathcal{I}_{mid}}{M^{fact}(H_{Id_V})}$}\label{subsec: positivity_lower_bound_strategy}
In this section, we describe how to show that $\sum_{V \in \mathcal{I}_{mid}}{M^{fact}(H_{Id_V})} \succeq {\delta}Id_{Sym}$ for some $\delta > 0$ where $\delta$ will depend on $n$ and other parameters. For this, we use a similar strategy as \cite{jones2021sum}. For each $V \in \mathcal{I}_{mid}$, we choose a weight $w_V \in (0,1]$. We then observe that since each coefficient matrix $H_{Id_{V}}$ is PSD,
\[
\sum_{V \in \mathcal{I}_{mid}}{M^{fact}(H_{Id_V})} \succeq \sum_{V \in \mathcal{I}_{mid}}{{w_V}M^{fact}(H_{Id_V})}
\]
By choosing the weights $w_V$ appropriately, we can bound the off-diagonal parts by the diagonal parts, giving us ${\delta}Id_{Sym}$.
\begin{definition}
	For all $V \in \mathcal{I}_{mid}$ we define $Id_{Sym,V}$ to be the matrix such that
	\begin{enumerate}
		\item $Id_{Sym,V}(A,B) = 1$ if $A$ and $B$ both have index shape $V$.
		\item Otherwise, $Id_{Sym,V}(A,B) = 0$.
	\end{enumerate}
\end{definition}
\begin{proposition}
	$Id_{Sym} = \sum_{V \in \mathcal{I}_{mid}}{Id_{Sym,V}}$
\end{proposition}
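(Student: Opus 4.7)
The plan is to verify the claimed matrix identity entrywise. Both sides are matrices whose rows and columns are indexed by matrix indices $A,B$ whose index shapes lie in $\mathcal{I}(\Lambda) \subseteq \mathcal{I}_{mid}$ (since we are in the multilinear regime assumed at the start of \cref{subsec: positivity_lower_bound_strategy}). Fix an arbitrary pair $(A,B)$; I want to show that $Id_{Sym}(A,B) = \sum_{V \in \mathcal{I}_{mid}} Id_{Sym,V}(A,B)$.

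First I would note that in the multilinear setting, each matrix index $A$ admits a unique index shape $V_A \in \mathcal{I}_{mid}$, determined up to equivalence by the collection of lengths of its index-shape pieces: indeed, $A = \{A_i\}$ is a disjoint union of index-shape pieces $A_i$, each of type $t_i$ and power $p_i=1$, and replacing the specific indices by unspecified ones yields the shape $V_A$. Consequently, $Id_{Sym,V}(A,B) = 0$ for every $V \not\equiv V_A$ (and similarly for $B$), so at most one summand on the right-hand side is nonzero, namely the one with $V = V_A = V_B$.

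Next I would split into two cases. If $V_A \not\equiv V_B$, then the right-hand side is $0$, and also $p_A \neq p_B$ because the underlying partition of variables by power or type piece differs, so $Id_{Sym}(A,B) = 0$ as well. If on the other hand $V_A \equiv V_B = V_0$, then the right-hand side reduces to $Id_{Sym,V_0}(A,B)$, and one checks directly from the definitions that this coincides with $[p_A = p_B]$ inside the $V_0$-block; this is the content of the block-decomposition of $Id_{Sym}$ by index shape. Summing over all $V$ thus reassembles $Id_{Sym}$.

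I do not expect any real obstacle. The proposition is essentially a bookkeeping statement that organizes $Id_{Sym}$ into blocks indexed by $V \in \mathcal{I}_{mid}$ so that the subsequent lower-bound argument for $\sum_V M^{fact}_{Id_V}(H_{Id_V})$ can be carried out one index shape at a time. The only point that merits care is the appeal to uniqueness of the index shape of a matrix index, which requires the multilinearity hypothesis; once that is in hand the identity follows from unwinding definitions.
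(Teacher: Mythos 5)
Your proposal is correct in substance, and since the paper states this proposition without any proof (it is presented as immediate from the definitions), your unwinding argument is essentially the only natural proof: each matrix index $A$ has a unique index shape up to equivalence, so at most one $Id_{Sym,V}$ can be nonzero at $(A,B)$, and one then matches the two sides on the surviving block.

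One caution, because it is the only place the argument could actually fail: the step you dismiss as "one checks directly from the definitions" is the entire content, and the multilinear-case wording of the definition of $Id_{Sym,V}$ in that subsection is genuinely ambiguous. Read literally — $Id_{Sym,V}(A,B)=1$ whenever $A$ and $B$ both have index shape $V$ — this would make $Id_{Sym,V}$ the all-ones matrix on the block of indices of shape $V$, whereas $Id_{Sym}(A,B)=1$ requires $p_A=p_B$; under that reading the identity would be \emph{false} whenever a block contains two indices that are not permutations of one another. The intended reading, spelled out only in the general (non-multilinear) version of the definition two paragraphs later ("Note that $B$ may be a permutation of $A$"), is that $Id_{Sym,V}(A,B)=1$ additionally requires $A$ and $B$ to determine the same monomial $p_A=p_B$. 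With that reading your two-case analysis is exactly right and the identity is immediate. I would recommend making the $p_A=p_B$ condition explicit in the final step rather than leaving it as an unexamined appeal to the definition, since a literal reading of the definition as printed does not support the claim.
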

\begin{definition}
    For each $V \in \mathcal{I}_{mid}$, we define $\lambda_V = |Aut(V)|H_{Id_V}(Id_V,Id_V)$.
\end{definition}
\begin{theorem}\label{thm: main_positivity}
    If $\{w_V: V \in \mathcal{I}_{mid}\}$ are weights such that for all $V \in \mathcal{I}_{mid}$ and all left shapes $\sigma \in \mathcal{L}_{V}$, $w_{V} \leq \frac{w_{U_{\sigma}}\lambda_{U_{\sigma}}}{|\mathcal{I}_{mid}|B_{norm}(\sigma)^2{c(\sigma)^2}{H_{Id_V}(\sigma,\sigma)}}$
    then
    \[
        \sum_{V \in \mathcal{I}_{mid}}{M^{fact}(H_{Id_V})}  \succeq \frac{1}{2}\sum_{V \in \mathcal{I}_{mid}}{{w_V}{\lambda_V}{Id_{Sym,V}}}
\succeq \frac{1}{2}\min_{V \in \mathcal{I}_{mid}}{\{{w_V}{\lambda_V}\}}Id_{Sym}
    \]
\end{theorem}
\begin{proof}
Observe that for each $V \in \mathcal{I}_{mid}$,
\[
{w_V}\sum_{\sigma,\sigma' \in \mathcal{L}_V}{H_{Id_V}(\sigma,\sigma')M_{\sigma}M_{\sigma'}^T} = {w_V}{\lambda_V}{Id_{Sym,V}} + {w_V}\sum_{\sigma,\sigma' \in \mathcal{L}_V: \sigma \neq Id_V \text{ or } \sigma' \neq Id_V}{H_{Id_V}(\sigma,\sigma')\left(\frac{M_{\sigma}M_{\sigma'}^T + M_{\sigma'}M_{\sigma}^T}{2}\right)}
\]
The first part of the right hand side is a diagonal part that we want to extract. We now show that we can bound the second part in terms of the diagonal parts.
\begin{proposition}\label{prop:boundingoffdiagonal}
For all $V \in \mathcal{I}_{mid}$ and all shapes $\sigma,\sigma' \in \mathcal{L}_V$, for all $a,b > 0$ such that $ab \geq B_{norm}(\sigma)^{2}B_{norm}(\sigma')^{2}$, if $\norm{M_{\sigma}} \leq B_{norm}(\sigma)$ and $\norm{M_{\sigma'}} \leq B_{norm}(\sigma')$ then
\[
M_{\sigma}M_{\sigma'}^T + M_{\sigma'}M_{\sigma}^T \succeq -{a}Id_{Sym,U_{\sigma}} -{b}Id_{Sym,U_{\sigma'}}
\]
\end{proposition}
\begin{corollary}
    If $H_{Id_V} \succeq 0$ then For any shapes $\sigma,\sigma' \in \mathcal{L}_V$,
    \begin{align*}
    w_{V}H_{Id_V}(\sigma,\sigma')\left(M_{\sigma}M_{\sigma'}^T + M_{\sigma'}M_{\sigma}^T\right) &\succeq -\frac{c(\sigma)}{c(\sigma')}{w_V}H_{Id_V}(\sigma,\sigma)B_{norm}(\sigma)^2{Id_{Sym,U_{\sigma}}} \\
    &- \frac{c(\sigma')}{c(\sigma)}{w_V}H_{Id_V}(\sigma',\sigma')B_{norm}(\sigma')^2{Id_{Sym,U_{\sigma'}}}
    \end{align*}
\end{corollary}
\begin{proof}
    This follows from Proposition \ref{prop:boundingoffdiagonal} and the observation that since $H_{Id_V} \succeq 0$, for all $\sigma,\sigma' \in \mathcal{L}_V$,  $H_{Id_V}(\sigma,\sigma')^2 \leq H_{Id_V}(\sigma,\sigma)H_{Id_V}(\sigma',\sigma')$
\end{proof}
Since $w_{V} \leq \frac{w_{U_{\sigma}}\lambda_{U_{\sigma}}}{|\mathcal{I}_{mid}|B_{norm}(\sigma)^2{c(\sigma)^2}{H_{Id_V}(\sigma,\sigma)}}$ and $w_{V} \leq \frac{w_{U_{\sigma'}}\lambda_{U_{\sigma'}}}{|\mathcal{I}_{mid}|B_{norm}(\sigma')^2{c(\sigma')^2}{H_{Id_V}(\sigma',\sigma')}}$, we have that
\begin{align*}
\sum_{\sigma,\sigma' \in \mathcal{L}_V: \sigma \neq Id_V \text{ or } \sigma' \neq Id_V}{w_{V}H_{Id_V}(\sigma,\sigma')\left(\frac{M_{\sigma}M_{\sigma'}^T + M_{\sigma'}M_{\sigma}^T}{2}\right)}
&\succeq -2\sum_{\sigma \in \mathcal{L}_V}{\frac{w_{U_{\sigma}}\lambda_{U_{\sigma}}Id_{Sym,U_{\sigma}}}{|\mathcal{I}_{mid}|c(\sigma)}\left(\sum_{\sigma' \in \mathcal{L}_V: \sigma' \neq Id_V}{\frac{1}{c(\sigma')}}\right)}\\
&\succeq -\frac{1}{4|\mathcal{I}_{mid}|}\sum_{U \in \mathcal{I}_{mid}}{w_{U}\lambda_{U}Id_{Sym,U}\left(\sum_{\sigma \in \mathcal{L}_{U,V}}{\frac{1}{c(\sigma)}}\right)}\\
&\succeq -\frac{1}{2|\mathcal{I}_{mid}|}\sum_{U \in \mathcal{I}_{mid}}{w_{U}\lambda_{U}Id_{Sym,U}}
\end{align*}
Thus, for each $V \in \mathcal{I}_{mid}$,
\[
{w_V}M_{fact}(H_{Id_V}) \succeq {w_V}{\lambda_V}{Id_{Sym,V}} - \frac{1}{2|\mathcal{I}_{mid}|}\sum_{U \in \mathcal{I}_{mid}}{w_{U}\lambda_{U}Id_{Sym,U}}
\]
Summing this equation over all $V \in \mathcal{V}$, we have that
\[
\sum_{V \in \mathcal{I}_{mid}}{M^{fact}(H_{Id_V})} \succeq \sum_{V \in \mathcal{I}_{mid}}{{w_V}M^{fact}(H_{Id_V})} \succeq \frac{1}{2}\sum_{V \in \mathcal{I}_{mid}}{{w_V}{\lambda_V}{Id_{Sym,V}}}
\succeq \frac{1}{2}\min_{V \in \mathcal{I}_{mid}}{\{{w_V}{\lambda_V}\}}Id_{Sym}
\]
as needed.
\end{proof}
\subsubsection{Handling Non-multilinear Matrix Indices*}
If there are multilinear matrix indices, then Theorem \ref{thm: main_positivity} still holds and it can be shown in a similar way, but we need to make a few adjustments.
\begin{enumerate}
    \item We modify the definition of $Id_{Sym,V}$ as follows. For all $V \in \mathcal{I}_{mid}$ we define $Id_{Sym,V}$ to be the matrix such that
	\begin{enumerate}
		\item $Id_{Sym,V}(A,B) = 1$ if $A$ and $B$ have the same index shape $U$ and $U$ has the same number of each type of vertex as $V$. Note that $B$ may be a permutation of $A$ and $U$ may have different powers than $V$.
		\item Otherwise, $Id_{Sym,V}(A,B) = 0$.
	\end{enumerate}
    Observe that with this modified definition, we will still have $Id_{Sym} = \sum_{V \in \mathcal{I}_{mid}}{Id_{Sym,V}}$.
    \item Instead of taking $\lambda_V = |Aut(V)|H_{Id_V}(Id_V,Id_V)$, we define $\lambda_V$ as follows. Letting $H_{Id_V,\text{no expansion}}$ be the diagonal submatrix of $H_{Id_V}$ indexed by left shapes $\sigma$ such that $U_{\sigma}$ has the same number of each type of vertex as $V$ (though the powers may be different), we take
    \[
    \lambda_V = |Aut(V)|min{\{\lambda: H_{Id_V,\text{no expansion}} \succeq {\lambda}Id_{Sym,V}\}}
    \]
    \item We similarly extend the definition of $c$ to left shapes $\sigma$ with multilinear indices in $U_{\sigma}$ so that we still have $\sum_{\sigma \in \mathcal{L}_V: (U_{\sigma})_{reduced} \neq v}{\frac{1}{c(\sigma)}} \leq \frac{1}{10}$
\end{enumerate}

    \section{Planted slightly denser subgraph: Full verification}\label{sec: plds_quant}

	In this section, we will prove our main theorem on Planted slightly denser subgraph, \cref{thm: plds_main}.

\PLDSmain*

We will apply the simplified machinery, in particular \cref{simplifiedmaintheorem}. Here, we choose $\eps$ in the theorem, not to be confused with the $\eps$ in \cref{thm: plds_main}, to be an arbitrarily small constant.
We build on the qualitative bounds (and use the same notation) from \cref{sec: plds_qual}.
The result will follow once we verify the main conditions and apply the machinery.

\subsection{\middleshapeboundstwo}

\begin{lemma}\label{lem: plds_charging}
    Suppose $k \le n^{1/2 - \eps}$. For all $U \in \calI_{mid}$ and $\tau \in \calM_U$,
	\[\sqrt{n}^{|V(\tau)| - |U_{\tau}|}S(\tau) \le \frac{1}{n^{C_p\eps|E(\tau)|}}\]
\end{lemma}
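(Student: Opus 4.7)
The plan is a direct substitution and a single inequality. Recall that $S(\tau) = \left(\frac{k}{n}\right)^{|V(\tau)| - |U_{\tau}|}(2p - 1)^{|E(\tau)|}$ by \cref{def: plds_stau}, and that the problem parameters give $2p - 1 = n^{-C_p\eps}$. Substituting, the left-hand side becomes
\[
\sqrt{n}^{|V(\tau)| - |U_{\tau}|}\cdot \left(\frac{k}{n}\right)^{|V(\tau)| - |U_{\tau}|}\cdot (2p-1)^{|E(\tau)|}
= \left(\frac{k}{\sqrt{n}}\right)^{|V(\tau)| - |U_{\tau}|}\cdot n^{-C_p\eps |E(\tau)|}.
\]
So the claim reduces to showing $\left(k/\sqrt{n}\right)^{|V(\tau)| - |U_{\tau}|} \le 1$.

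First I would verify that $|V(\tau)| - |U_{\tau}| \ge 0$. Since $\tau \in \calM_U$ is a proper middle part with $U_{\tau} \equiv V_{\tau} \equiv U$, we have $U_\tau \subseteq V(\tau)$, so the exponent is a nonnegative integer. Next, using the hypothesis $k \le n^{1/2 - \eps}$, we get $k/\sqrt{n} \le n^{-\eps} \le 1$. Combined with the previous point, $(k/\sqrt{n})^{|V(\tau)| - |U_{\tau}|} \le 1$, and the lemma follows immediately.

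I do not expect any real obstacle here: the lemma is essentially a bookkeeping statement recording the fact that, in the pseudo-calibration coefficients for planted slightly denser subgraph, the vertex decay $k/n$ dominates $1/\sqrt{n}$ when $k \le n^{1/2-\eps}$, leaving only the edge decay $n^{-C_p\eps}$ per edge of $\tau$ to be charged. This is precisely the feature that lets our machinery go through here but not for planted clique (cf.\ \cref{rmk: planted_clique_failure}), where $p = 1$ removes the edge decay entirely.
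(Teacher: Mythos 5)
Your proof is correct and takes essentially the same route as the paper's: plug in $S(\tau)$, observe that $2p-1 = n^{-C_p\eps}$, and bound $(k/\sqrt{n})^{|V(\tau)|-|U_\tau|}\le 1$ using $k\le n^{1/2-\eps}$ together with nonnegativity of the exponent. You have simply spelled out intermediate steps the paper leaves implicit.
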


\begin{proof}
    This result follows by plugging in the value of $S(\tau)$. Using $k \le n^{1/2 - \eps}$,
	\begin{align*}
	\sqrt{n}^{|V(\tau)| - |U_{\tau}|}S(\tau) &= \sqrt{n}^{|V(\tau)| - |U_{\tau}|} \left(\frac{k}{n}\right)^{|V(\tau)| - |U_{\tau}|}(2(\frac{1}{2} + \frac{1}{2n^{C_p\eps}}) -1 )^{|E(\tau)|}
	\le \frac{1}{n^{C_p\eps|E(\tau)|}}
	\end{align*}
\end{proof}

\begin{corollary}\label{cor: plds_norm_decay}
	For all $U \in \calI_{mid}$ and $\tau \in \calM_U$, we have \[c(\tau)B_{norm}(\tau)S(\tau) \le 1\]
\end{corollary}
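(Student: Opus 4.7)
The plan is to substitute the definitions of $c(\tau)$, $B_{norm}(\tau)$, and $S(\tau)$ and reduce to Lemma~12.4. Because $\tau$ is a proper shape, $I_\tau = \emptyset$, and because $\tau$ is a proper middle shape with $U_\tau \equiv V_\tau \equiv U$ both forming minimum vertex separators, we have $|S_\tau| = |U_\tau|$. Hence $B_{norm}(\tau) = B_{vertex}^{|V(\tau) \setminus U_\tau| + |V(\tau) \setminus V_\tau|} \sqrt{n}^{|V(\tau)| - |U_\tau|}$. Combining this with the definition of $c(\tau)$ and Lemma~12.4, and writing $m = |E(\tau)|$, I obtain
\[
c(\tau) B_{norm}(\tau) S(\tau) \le 100 \cdot (3D_V)^{|U_\tau \triangle V_\tau| + 2m} \cdot 2^{|V(\tau) \setminus (U_\tau \cup V_\tau)|} \cdot B_{vertex}^{|V(\tau) \setminus U_\tau| + |V(\tau) \setminus V_\tau|} \cdot n^{-C_p \eps m}.
\]

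Next I would establish the structural claim that, since $\tau$ is non-trivial (so $m \ge 1$), all the combinatorial exponents above are $O(m)$. The bound $|V(\tau) \setminus (U_\tau \cup V_\tau)| \le 2m$ is immediate from properness: each internal vertex is incident to some edge, and each edge has two endpoints. The crucial claim is $|V_\tau \setminus U_\tau| \le 2m$ and symmetrically $|U_\tau \setminus V_\tau| \le 2m$. I argue that every vertex of $V_\tau \setminus U_\tau$ must be non-isolated in $\tau$: if $v \in V_\tau \setminus U_\tau$ were isolated then no path from $U_\tau$ to $v$ would exist in $\tau$, so $v$ would not need to be blocked and the minimum vertex separator between $U_\tau$ and $V_\tau$ would have size at most $|V_\tau| - 1$, contradicting the hypothesis that $V_\tau$ itself is a minimum vertex separator of $\tau$. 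Since every vertex of $V_\tau \setminus U_\tau$ is thus incident to at least one edge, $|V_\tau \setminus U_\tau| \le 2m$, and similarly for $U_\tau \setminus V_\tau$. Combining these yields $|V(\tau) \setminus U_\tau|, |V(\tau) \setminus V_\tau| \le 4m$ and $|U_\tau \triangle V_\tau| \le 4m$.

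Finally, I would plug in the parameter choices $D_V = n^{C_V \eps}$ and $B_{vertex} = O(D_V (\log n)^{1/4})$ from the statement of Theorem~\ref{simplifiedmaintheorem}. Each polynomial-in-$D_V$ factor is then $n^{O(C_V \eps)}$, and raising everything to the $O(m)$-th power using the structural bounds gives
\[
c(\tau) B_{norm}(\tau) S(\tau) \le n^{(K C_V - C_p)\eps m}
\]
for some absolute constant $K > 0$ independent of $C_p$, $C_V$. Since $m \ge 1$, choosing $C_V$ sufficiently small relative to $C_p$ (as permitted by the parameter conventions of Section~\ref{sec: plds_qual}) makes the exponent non-positive, giving the desired bound $\le 1$. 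The main obstacle is the isolated-vertex argument in the second paragraph; once that structural fact is in hand the remainder is a routine exponent comparison.
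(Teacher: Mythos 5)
Your proposal is correct and follows essentially the same route as the paper's proof: simplify $B_{norm}(\tau)$ using $w(I_\tau)=0$ and $w(S_\tau)=w(U_\tau)$, apply Lemma~\ref{lem: plds_charging} to absorb the $\sqrt{n}^{|V(\tau)|-|U_\tau|}$ factor into an $n^{-C_p\eps|E(\tau)|}$ decay, bound all combinatorial exponents by $O(|E(\tau)|)$ via degree arguments, and close by taking $C_V$ small. The isolated-vertex argument you flag as the main obstacle is indeed the same one the paper uses (there, for $U_\tau\setminus V_\tau$: a degree-$0$ vertex in a minimum separator could be dropped, contradicting minimality); the paper states the resulting bound as $|V(\tau)\setminus U_\tau|+|V(\tau)\setminus V_\tau|\le 4|E(\tau)|$ directly while you split it into $|V_\tau\setminus U_\tau|,|U_\tau\setminus V_\tau|\le 2m$ and $|V(\tau)\setminus(U_\tau\cup V_\tau)|\le 2m$ — same content, slightly different packaging. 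One thing worth making explicit (the paper also leaves it implicit) is that $\tau\in\mathcal{M}_U$ being a non-trivial proper middle shape forces $|E(\tau)|\ge 1$: otherwise either $U_\tau\ne V_\tau$ as sets and $U_\tau\cap V_\tau$ would be a smaller separator, or $U_\tau=V_\tau$ and non-triviality would produce an isolated vertex outside $U_\tau\cup V_\tau$ — and $m\ge 1$ is needed so that the constant $100$ and the polynomial-in-$D_V$ factors can actually be absorbed into the $n^{-C_p\eps m}$ decay.
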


\begin{proof}
	Since $\tau$ is a proper middle shape, we have $w(I_{\tau}) = 0$ and $w(S_{\tau}) = w(U_{\tau})$. This implies
	$n^{\frac{w(V(\tau)) + w(I_{\tau}) - w(S_{\tau})}{2}} = \sqrt{n}^{|V(\tau)| - |U_{\tau}|}$.
	Since $\tau$ is proper, every vertex $i \in V(\tau) \setminus U_{\tau}$ or $i \in V(\tau) \setminus V_{\tau}$ has $deg^{\tau}(i) \ge 1$ and hence, $|V(\tau)\setminus U_{\tau}| + |V(\tau)\setminus V_{\tau}| \le 4|E(\tau)|$. Also, $q = n^{O(1) \cdot \eps C_V}$. We can set $C_V$ sufficiently small so that, using \cref{lem: plds_charging},
	{\footnotesize
	\begin{align*}
	c(\tau)B_{norm}(\tau)S(\tau)
	&= 100(3D_V)^{|U_{\tau}\setminus V_{\tau}| + |V_{\tau}\setminus U_{\tau}| + 2|E(\tau)|}2^{|V(\tau)\setminus (U_{\tau}\cup V_{\tau})|}
	\cdot (6D_V\sqrt[4]{2eq})^{|V(\tau)\setminus U_{\tau}| + |V(\tau)\setminus V_{\tau}|}\sqrt{n}^{|V(\tau)| - |U_{\tau}|}S(\tau)\\
	&\le n^{O(1) \cdot \eps C_V \cdot |E(\tau)|} \cdot \sqrt{n}^{|V(\tau)| - |U_{\tau}|}S(\tau)\\
	&\le n^{O(1) \cdot \eps C_V \cdot |E(\tau)|} \cdot \frac{1}{n^{C_p\eps|E(\tau)|}}\\
	&\le 1
	\end{align*}
	}
\end{proof}

We can now obtain middle shape bounds.

\begin{lemma}
    For all $U \in \calI_{mid}$ and $\tau \in \calM_U$,
    \[
\begin{bmatrix}
    \frac{1}{|Aut(U)|c(\tau)}H_{Id_U} & B_{norm}(\tau) H_{\tau}\\
    B_{norm}(\tau) H_{\tau}^T & \frac{1}{|Aut(U)|c(\tau)}H_{Id_U}
\end{bmatrix}
\succeq 0
\]
\end{lemma}

\begin{proof}
	We have
	\begin{align*}
	\begin{bmatrix}
	\frac{1}{|Aut(U)|c(\tau)}H_{Id_U} & B_{norm}(\tau)H_{\tau}\\
	B_{norm}(\tau)H_{\tau}^T & \frac{1}{|Aut(U)|c(\tau)}H_{Id_U}
	\end{bmatrix}
	=& \begin{bmatrix}
	\left(\frac{1}{|Aut(U)|c(\tau)} - \frac{S(\tau)B_{norm}(\tau)}{|Aut(U)|}\right)H_{Id_U} & 0\\
	0 & \left(\frac{1}{|Aut(U)|c(\tau)} - \frac{S(\tau)B_{norm}(\tau)}{|Aut(U)|}\right)H_{Id_U}
	\end{bmatrix}\\
	&+ B_{norm}(\tau)\begin{bmatrix}
	\frac{S(\tau)}{|Aut(U)|}H_{Id_U} & H_{\tau}\\
	H_{\tau}^T & \frac{S(\tau)}{|Aut(U)|}H_{Id_U}
	\end{bmatrix}
	\end{align*}
	By \cref{lem: plds_cond2_simplified}, $\begin{bmatrix}
	\frac{S(\tau)}{|Aut(U)|}H_{Id_U} & H_{\tau}\\
	H_{\tau}^T & \frac{S(\tau)}{|Aut(U)|}H_{Id_U}
	\end{bmatrix}
	\succeq 0$, so the second term above is positive semidefinite. For the first term, by \cref{lem: plds_cond1}, $H_{Id_U} \succeq 0$ and by \cref{cor: plds_norm_decay}, $\frac{1}{|Aut(U)|c(\tau)} - \frac{S(\tau)B_{norm}(\tau)}{|Aut(U)|} \ge 0$, which proves that the first term is also positive semidefinite.
\end{proof}

\subsection{\intersectionboundstwo}

\begin{lemma}\label{lem: plds_charging2}
	Suppose $k \le n^{1/2 - \eps}$. For all $U, V \in \calI_{mid}$ where $w(U) > w(V)$ and for all $\gam \in \Gam_{U, V}$,
	\[n^{w(V(\gam)\setminus U_{\gam})} S(\gam)^2 \le \frac{1}{n^{B\eps (|V(\gam) \setminus (U_{\gam} \cap V_{\gam})| + |E(\gam)|)}}\]
	for some constant $B$ that depends only on $C_p$. In particular, it is independent of $C_V$.
\end{lemma}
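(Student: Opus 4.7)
The plan is to unpack the definition of $S(\gam)$ from \cref{def: plds_sgam}, substitute the bounds on $k$ and $p$, and then reduce the claim to an elementary inequality about the quantities $u := |U_{\gam}|$, $v := |V_{\gam}|$, $a := |V(\gam)|$, and $|E(\gam)|$. Since we are in the planted-slightly-denser-subgraph setting, $w(\cdot)$ is just cardinality, so $w(V(\gam)\setminus U_{\gam}) = a - u$. Plugging in $S(\gam) = (k/n)^{a - (u+v)/2}(2p-1)^{|E(\gam)|}$ and using $k/n \le n^{-1/2-\eps}$ together with $2p-1 = n^{-C_p\eps}$, the expression $n^{w(V(\gam)\setminus U_{\gam})}S(\gam)^2$ becomes at most $n^{E}$ where
\[
E \;=\; (a - u) \;-\; (1+2\eps)(2a - u - v)/2 \;-\; 2C_p\eps|E(\gam)| \;=\; -(u-v)/2 \;-\; \eps(2a - u - v) \;-\; 2C_p\eps|E(\gam)|.
\]

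The goal is then to show $E \le -B\eps(|V(\gam)\setminus (U_{\gam}\cap V_{\gam})| + |E(\gam)|)$ for some constant $B$ depending only on $C_p$. The key combinatorial observation is that since $U_{\gam} \cup V_{\gam} \subseteq V(\gam)$, we have $|U_{\gam}\cap V_{\gam}| \ge u + v - a$, hence
\[
|V(\gam)\setminus(U_{\gam}\cap V_{\gam})| \;=\; a - |U_{\gam}\cap V_{\gam}| \;\le\; 2a - u - v.
\]
Thus $\eps(2a-u-v) \ge \eps(a - |U_{\gam}\cap V_{\gam}|)$, which already absorbs the vertex contribution on the right-hand side when $B \le 1$. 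The edge term $2C_p\eps|E(\gam)|$ absorbs the corresponding edge contribution when $B \le 2C_p$. Finally, the ``free'' positive term $(u-v)/2$ (strictly positive because $w(U) > w(V)$ forces $u > v$ in the simplified setting, and $u,v$ are integers) gives us any remaining slack needed for the inequality.

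Therefore it suffices to take $B = \min(1, 2C_p)$, which depends only on $C_p$ (and in particular not on $C_V$). The main step is the combinatorial inequality $|U_{\gam}\cap V_{\gam}| \ge u + v - a$, which is the only nontrivial ingredient; everything else is algebraic manipulation and substitution of the parameter bounds. There is no real obstacle in the proof, and in particular no issue analogous to \cref{cor: plds_norm_decay} where one must absorb the norm-bound prefactors into the edge decay, because here the surplus $(u-v)/2 \ge 1/2$ already provides a positive gap independent of the truncation parameters.
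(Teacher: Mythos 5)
Your proof is correct and follows essentially the same route as the paper: unpack $S(\gam)$, plug in $k/n\le n^{-1/2-\eps}$ and $2p-1=n^{-C_p\eps}$, use the combinatorial bound $|V(\gam)\setminus(U_{\gam}\cap V_{\gam})|\le 2|V(\gam)|-|U_{\gam}|-|V_{\gam}|$ (the paper writes it as $|U_{\gam}\setminus V_{\gam}|+|V_{\gam}\setminus U_{\gam}|+2|V(\gam)\setminus(U_{\gam}\cup V_{\gam})|$, which is the same identity), and use $|U_{\gam}|\ge|V_{\gam}|$ to discard the surplus; the paper justifies the latter via the left-shape structure of $\gam$ rather than $w(U)>w(V)$, but both are valid. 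One small correction to your commentary: the surplus $(u-v)/2$ is simply dropped, not used as ``slack'' — the $n^{B\eps(\cdot)}$ decay carried by the vertex and edge terms is what later absorbs the $c,N,B$ prefactors in Lemma~\ref{lem: plds_cond3}, not the $(u-v)/2$ gap.
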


\begin{proof}
	Since $\gam$ is a left shape, we have $|U_{\gam}| \ge |V_{\gam}|$ as $V_{\gam}$ is the unique minimum vertex separator of $\gam$ and so, $n^{w(V(\gam) \setminus U_{\gam})} = n^{|V(\gam)| - |U_{\gam}|} \le n^{|V(\gam)| - \frac{|U_{\gam}| + |V_{\gam}|}{2}}$. Also, note that $2|V(\gam)| - |U_{\gam}| - |V_{\gam}| = |U_{\gam} \setminus V_{\gam}| + |V_{\gam} \setminus U_{\gam}| + 2|V(\gam) \setminus U_{\gam} \setminus V_{\gam}| \ge |V(\gam) \setminus (U_{\gam} \cap V_{\gam})|$. Therefore,
	\begin{align*}
	n^{w(V(\gam)\setminus U_{\gam})} S(\gam)^2 &= n^{|V(\gam)\setminus U_{\gam})|} \left(\frac{k}{n}\right)^{2|V(\gam)| - |U_{\gam}| - |V_{\gam}|} (2(\frac{1}{2} + \frac{1}{2n^{C_p\eps}}) - 1)^{2|E(\gam)|}\\
	&\le n^{|V(\gam)| - \frac{|U_{\gam}| + |V_{\gam}|}{2}}\left(\frac{1}{n^{1/2 + \eps}}\right)^{2|V(\gam)| - |U_{\gam}| - |V_{\gam}|}\left(\frac{1}{n^{2C_p\eps}}\right)^{|E(\gam)|}\\
	&\le  \left(\frac{1}{n^{\eps}}\right)^{2|V(\gam)| - |U_{\gam}| - |V_{\gam}|}\left(\frac{1}{n^{2C_p\eps}}\right)^{|E(\gam)|}\\
	&\le \frac{1}{n^{B\eps (|V(\gam) \setminus (U_{\gam} \cap V_{\gam})| + \sum_{e \in E(\gam)} l_e)}}
	\end{align*}
for a constant $B$ that depends only on $C_p$.
\end{proof}

We obtain intersection term bounds.

\begin{lemma}
    For all $U, V \in \calI_{mid}$ where $w(U) > w(V)$ and all $\gam \in \Gam_{U, V}$, \[c(\gam)^2N(\gam)^2B(\gam)^2H_{Id_V}^{-\gam, \gam} \preceq H_{\gam}'\]
\end{lemma}

\begin{proof}
	By \cref{lem: plds_cond3_simplified}, we have
	\begin{align*}
	c(\gam)^2N(\gam)^2B(\gam)^2H_{Id_V}^{-\gam, \gam} &= c(\gam)^2N(\gam)^2B(\gam)^2 S(\gam)^2 \frac{|Aut(U)|}{|Aut(V)|} H'_{\gam}
	\end{align*}
	Using the same proof as in \cref{lem: plds_cond1}, we can see that $H'_{\gam} \succeq 0$. Therefore, it suffices to prove that $c(\gam)^2N(\gam)^2B(\gam)^2 S(\gam)^2 \frac{|Aut(U)|}{|Aut(V)|} \le 1$.
	Since $U, V \in \calI_{mid}$, $|Aut(U)| = |U|!,|Aut(V)| = |V|!$. Therefore, $\frac{|Aut(U)|}{|Aut(V)|} = \frac{|U|!}{|V|!} \le D_V^{|U_{\gam} \setminus V_{\gam}|}$. Also, $q = n^{O(1) \cdot \eps C_V}$. Let $B$ be the constant from \cref{lem: plds_charging2}. We can set $C_V$ sufficiently small so that, using \cref{lem: plds_charging2},
	\begin{align*}
	c(\gam)^2N(\gam)^2B(\gam)^2S(\gam)^2 \frac{|Aut(U)|}{|Aut(V)|} &\le 100^2 (3D_V)^{2|U_{\gam}\setminus V_{\gam}| + 2|V_{\gam}\setminus U_{\gam}| + 4|E(\al)|}4^{|V(\gam) \setminus (U_{\gam} \cup V_{\gam})|}\\
	&\quad\cdot (3D_V)^{4|V(\gam)\setminus V_{\gam}| + 2|V(\gam)\setminus U_{\gam}|} (6D_V\sqrt[4]{2eq})^{2|V(\gam)\setminus U_{\gam}| + 2|V(\gam)\setminus V_{\gam}|}\\
	&\quad\cdot n^{w(V(\gam)\setminus U_{\gam})} S(\gam)^2 \cdot D_V^{|U_\gam \setminus V_{\gam}|} \\
	&\le n^{O(1) \cdot \eps C_V \cdot (|V(\gam) \setminus (U_{\gam} \cap V_{\gam})| + \sum_{e \in E(\gam)} l_e)} \cdot n^{w(V(\gam)\setminus U_{\gam})} S(\gam)^2\\
	&\le n^{O(1) \cdot \eps C_V \cdot (|V(\gam) \setminus (U_{\gam} \cap V_{\gam})| + \sum_{e \in E(\gam)} l_e)} \cdot \frac{1}{n^{B\eps (|V(\gam) \setminus (U_{\gam} \cap V_{\gam})| + \sum_{e \in E(\gam)} l_e)}}\\
	&\le 1
	\end{align*}
\end{proof}

\subsection{\truncationboundstwo}

In this section, we will prove truncation error bounds.
We use the strategy and notation from \cref{sec: showing_positivity}.
First, we will need a bound on $B_{norm}(\sig) B_{norm}(\sig') H_{Id_U}(\sig, \sig')$ that is obtained below.

\begin{lemma}\label{lem: plds_charging3}
	Suppose $k \le n^{1/2 - \eps}$. For all $U \in \calI_{mid}$ and $\sig, \sig' \in \calL_U$,
	\[B_{norm}(\sig) B_{norm}(\sig') H_{Id_U}(\sig, \sig') \le \frac{1}{n^{0.5\eps|V(\al)| + C_p\eps|E(\al)|}} \left(\frac{k}{n}\right)^{|U|}
	\]
\end{lemma}

\begin{proof}
	Let $\al = \sig \circ \sig'$. Observe that $|V(\sig)| + |V(\sig')| = |V(\al)| + |U|$. By choosing $C_V$ sufficiently small,
	\begin{align*}
	B_{norm}(\sig) B_{norm}(\sig') H_{Id_U}(\sig, \sig') &= (6D_V\sqrt[4]{2eq})^{|V(\sig)\setminus U_{\sig}| + |V(\sig)\setminus V_{\sig}|} n^{\frac{w(V(\sig)) - w(U)}{2}}\\
	&\quad\cdot (6D_V\sqrt[4]{2eq})^{|V(\sig')\setminus U_{\sig'}| + |V(\sig')\setminus V_{\sig'}|} n^{\frac{w(V(\sig')) - w(U)}{2}}\\
	&\quad\cdot \frac{1}{|Aut(U)|} \left(\frac{k}{n}\right)^{|V(\al)|} (2(\frac{1}{2} + \frac{1}{2n^{C_p\eps}}) - 1)^{|E(\al)|}\\
	&\le n^{O(1) \cdot \eps C_V \cdot |V(\al)|} \sqrt{n}^{|V(\sig)| - |U|}\sqrt{n}^{|V(\sig')| - |U|} \left(\frac{k}{n}\right)^{|V(\al)|}\frac{1}{n^{C_p\eps|E(\al)|}}\\
	&\le \frac{1}{n^{0.5\eps|V(\al)| + C_p\eps|E(\al)|}} \left(\frac{k}{n}\right)^{|U|}
	\end{align*}
\end{proof}

Now, we are ready to apply the strategy.

\begin{restatable}{lemma}{PLDSfive}\label{lem: plds_cond5}
	Whenever $\norm{M_{\al}} \le B_{norm}(\al)$ for all $\al \in \calM'$,
	\[
	\sum_{U \in \mathcal{I}_{mid}}{M^{fact}_{Id_U}{(H_{Id_U})}} \succeq \frac{1}{n^{K_1D_{sos}^2}} Id_{sym}
	\]
	for a constant $K_1 > 0$.
\end{restatable}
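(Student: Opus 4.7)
The plan is to apply verbatim the general strategy described in Section \ref{subsec: positivity_lower_bound_strategy}. Concretely, we will use the theorem at the end of that subsection which bounds each $Id_{Sym,V}$ in terms of the $M^{fact}(H_{Id_U})$'s, weighted by path products in an auxiliary directed graph $G$ whose edge weights are $w((V,U)) = 2W(U,V)/\lambda_V$. It suffices to exhibit constants $\lambda_V \ge n^{-O(D_{sos})}$ from below and $W(U,V) \le n^{O(D_{sos})}$ from above, since then each path product over a path of length at most $|\mathcal{I}_{mid}| \le D_{sos}$ contributes a factor of only $n^{O(D_{sos}^2)}$, giving the desired $1/n^{K_1 D_{sos}^2}$.

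The first step is to compute $\lambda_V$. Under our simplifying assumptions (the indices of $\Lambda$ are multilinear), the only $\sigma \in \mathcal{T}_V$ is $\sigma = Id_V$ and $T(Id_V) = (k/n)^{|V|/2}$, so by the formulas of Section \ref{sec: plds_qual} we have $H_{Id_V}(Id_V,Id_V) = \frac{1}{|Aut(V)|}(k/n)^{|V|}$, whence $\lambda_V = (k/n)^{|V|} \ge n^{-(1/2+\eps)|V|} \ge n^{-O(D_{sos})}$ using $k \le n^{1/2-\eps}$ and $|V| \le D_{sos}/2$.

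The second step is to bound $W(U,V)$. Because $H_{Id_V} = \frac{1}{|Aut(V)|} v_V v_V^T$ is rank-one, we factor
\[
W(U,V) \;=\; \frac{1}{|Aut(U)||Aut(V)|}\Bigl(\sum_{\sigma\in\mathcal{L}_V:\, U_\sigma=U} B_{norm}(\sigma)T(\sigma)\Bigr)\Bigl(\sum_{\sigma'\in\mathcal{L}_V:\, U_{\sigma'}\ne V} B_{norm}(\sigma')T(\sigma')\Bigr).
\]
A direct computation using that $\sigma$ is a left shape (so $S_\sigma = V_\sigma$ and $I_\sigma = \emptyset$) shows
\[
B_{norm}(\sigma)T(\sigma) \;=\; B_{vertex}^{|V(\sigma)\setminus U_\sigma|+|V(\sigma)\setminus V_\sigma|}\,k^{-|V_\sigma|/2}\,(k/\sqrt{n})^{|V(\sigma)|}\,(2p-1)^{|E(\sigma)|}.
\]
Since $k \le n^{1/2-\eps}$, the factor $(k/\sqrt{n})^{|V(\sigma)|}$ yields decay $n^{-\eps|V(\sigma)|}$; combined with $(2p-1)^{|E(\sigma)|} = n^{-C_p\eps|E(\sigma)|}$ this dominates the polynomial $B_{vertex}^{O(|V(\sigma)|)}$ factor once $C_V$ is chosen small enough (the same kind of balancing that gave Corollary \ref{cor: plds_norm_decay} and Lemma \ref{lem: plds_charging2}). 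Applying Lemma \ref{lem: gp_sum} with an appropriate small exponent absorbs the sum over $\sigma$, and the analogous sum over $\sigma'$ is handled identically. This gives $W(U,V) \le n^{O(D_{sos})}$.

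The third and final step is to assemble these bounds. Each edge weight in $G$ is at most $w((V,U)) \le n^{O(D_{sos})}/n^{-O(D_{sos})} = n^{O(D_{sos})}$, and a path in $G$ from $V$ to $U$ has length at most $|\mathcal{I}_{mid}| = O(D_{sos})$, so the product of edge weights along any path is at most $n^{O(D_{sos}^2)}$. The number of paths and index shapes is at most $2^{O(D_{sos})}$, which is absorbed into the $n^{O(D_{sos}^2)}$ factor. Dividing by $\lambda_U \ge n^{-O(D_{sos})}$ and summing over $V,U$ concludes that $Id_{sym} \preceq n^{K_1 D_{sos}^2} \sum_U M^{fact}_{Id_U}(H_{Id_U})$ for a sufficiently large constant $K_1 > 0$, which is exactly the claim.

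The main technical obstacle is step two: precisely verifying that, for the non-trivial summand $\sigma$ with $U_\sigma \neq V_\sigma = V$, the geometric-series-type bound of Lemma \ref{lem: gp_sum} applies with decay rate strictly larger than the polynomial factor $B_{vertex}^{O(\cdot)}$. This requires choosing $C_V$ (and $C_{sos}$) small enough relative to the constant $\eps$ used in the hypothesis $k \le n^{1/2-\eps}$, in the same spirit as the choices made in the proofs of Lemmas \ref{lem: plds_cond2} and \ref{lem: plds_cond3}.
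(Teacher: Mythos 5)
Your proposal follows essentially the same route as the paper: compute $\lambda_V = (k/n)^{|V|}$, bound the edge weights $w(e)=2W(U,V)/\lambda_V$ using a per-shape decay plus Lemma \ref{lem: gp_sum}, and assemble the estimate from path products in the auxiliary graph $G$. The only organizational difference is that you exploit the rank-one structure $H_{Id_V}=\frac{1}{|Aut(V)|}v_V v_V^{T}$ to factor $W(U,V)$ into a product of two single sums, whereas the paper bounds the double sum $\sum_{\sigma,\sigma'}B_{norm}(\sigma)B_{norm}(\sigma')H_{Id_V}(\sigma,\sigma')$ directly via Lemma \ref{lem: plds_charging3}; since Lemma \ref{lem: gp_sum} is stated for pairs $(\sigma,\sigma')$, your factored version needs a small extra step (e.g.\ writing $\bigl(\sum_\sigma a_\sigma\bigr)^2=\sum_{\sigma,\sigma'}a_\sigma a_{\sigma'}$ and using $|V(\sigma)|+|V(\sigma')|=|V(\sigma\circ\sigma')|+|V|$) that you leave implicit. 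One small slip in Step 1: ``$\lambda_V=(k/n)^{|V|}\ge n^{-(1/2+\eps)|V|}$ using $k\le n^{1/2-\eps}$'' has the inequality reversed, since the hypothesis gives an upper bound $k/n\le n^{-1/2-\eps}$, not a lower bound; the lower bound $\lambda_V\ge n^{-O(D_{sos})}$ that you actually need follows simply from $k\ge1$ and $|V|\le D_{sos}$, so the conclusion is unaffected.
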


\begin{proof}
    For $V \in \calI_{mid}$, we have $\lda_V = \left(\frac{k}{n}\right)^{|V|}$. Now, we choose $w_V = \left(\frac{k}{n}\right)^{D_{sos} - |V|}$. Then, for all $\sig \in \calL_{V}$, we have $w_{V} \leq \frac{w_{U_{\sigma}}\lambda_{U_{\sigma}}}{|\mathcal{I}_{mid}|B_{norm}(\sigma)^2{c(\sigma)^2}{H_{Id_V}(\sigma,\sigma)}}$ which is easily verified using \cref{lem: plds_charging3}. The result follows from \cref{thm: main_positivity}.
\end{proof}

\begin{restatable}{lemma}{PLDSsix}\label{lem: plds_cond6}
	\[\sum_{U\in \calI_{mid}} \sum_{\gam \in \Gam_{U, *}} \frac{d_{Id_{U}}(H_{Id_{U}}, H'_{\gam})}{|Aut(U)|c(\gam)} \le \frac{n^{K_2 D_{sos}}}{2^{D_V}}\]
	for a constant $K_2 > 0$.
\end{restatable}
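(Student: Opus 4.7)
\begin{proofsketch}
The plan is to start by unpacking the distance $d_{Id_U}(H_{Id_U}, H'_\gamma)$ using the canonical choice of $H'_\gamma$ from Section \ref{sec: choosing_hgamma}. Since $H'_\gamma(\sigma,\sigma') = H_{Id_U}(\sigma,\sigma')$ when $|V(\sigma \circ \gamma)| \le D_V$ and $|V(\sigma' \circ \gamma)| \le D_V$, and $H'_\gamma(\sigma,\sigma') = 0$ otherwise, the distance is supported precisely on pairs $(\sigma,\sigma')$ whose composition with $\gamma$ violates the vertex truncation. A union bound over the two violating cases and the symmetry $\sigma \leftrightarrow \sigma'$ reduces us to bounding
\[
\sum_{U \in \calI_{mid}} \sum_{\gamma \in \Gamma_{U,*}} \frac{1}{|Aut(U)|c(\gamma)} \sum_{\substack{\sigma,\sigma' \in \mathcal{L}_U \\ |V(\sigma \circ \gamma)| > D_V}} H_{Id_U}(\sigma,\sigma')\, B_{norm}(\sigma) B_{norm}(\sigma').
\]

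Next, I would plug in the explicit forms derived in Section \ref{sec: plds_qual}: $H_{Id_U}(\sigma,\sigma') = \frac{1}{|Aut(U)|}T(\sigma)T(\sigma')$, where $T(\sigma) = (k/n)^{|V(\sigma)| - |V_\sigma|/2}(2p-1)^{|E(\sigma)|}$, and the explicit $B_{norm}(\sigma) = B_{vertex}^{|V(\sigma)\setminus U_\sigma| + |V(\sigma)\setminus V_\sigma|} n^{(|V(\sigma)| - |V_\sigma|)/2}$ (using $I_\sigma = \emptyset$ and $S_\sigma = V_\sigma$ for left shapes). Multiplying these out and applying $k \le n^{1/2 - \epsilon}$ yields, up to a $\mathrm{poly}(D_V,q)^{|V(\sigma)|}$ factor, the per-vertex decay
\[
T(\sigma) B_{norm}(\sigma) \le (\text{poly}(D_V,q))^{|V(\sigma)|}\, n^{-\epsilon|V(\sigma)|}\, k^{-|U|/2}.
\]
The crucial point is that the truncation condition $|V(\sigma \circ \gamma)| > D_V$ translates to $|V(\sigma)| + |V(\gamma) \setminus U_\gamma| \ge D_V + 1$, producing at least $D_V+1$ vertices that each contribute a factor of $n^{-\epsilon}$ (combining the vertex contributions from $\sigma$ with those from $\gamma$; the $\gamma$-vertices yield further $n^{-\epsilon}$ decay through an analogue of \cref{lem: plds_charging2}, and $c(\gamma)$ contributes additional polynomial decay in $|V(\gamma)|$).

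The remaining step is a counting argument. I would sum over $U \in \calI_{mid}$, $\gamma \in \Gamma_{U,*}$, and $\sigma,\sigma' \in \mathcal{L}_U$ subject to the constraint $|V(\sigma)| + |V(\gamma) \setminus U_\gamma| > D_V$ using an analogue of \cref{lem: gp_sum}. By choosing $C_V$ small enough relative to $C_p$, the per-vertex $n^{-\epsilon}$ decay absorbs both the enumeration of shapes and the $\mathrm{poly}(D_V,q)^{O(|V|)}$ factors, giving a net bound of $n^{-\Omega(\epsilon D_V)} \cdot n^{O(D_{sos})}$. Since $D_V = n^{C_V \epsilon}$ and $D_{sos} = n^{C_{sos}\epsilon}$, taking $C_{sos}$ sufficiently small relative to $C_V$ ensures this is at most $n^{K_2 D_{sos}}/2^{D_V}$.

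The main obstacle I anticipate is handling the $k^{-|U|/2}$ factor that arises from each pair $(\sigma,\sigma')$ sharing $V_\sigma = V_{\sigma'} = U$: it interacts nontrivially with the sum over $U \in \calI_{mid}$ and with the $1/|Aut(U)|$ factor, and the interaction with $|Aut(U_\gamma)|$ in $c(\gamma)$ must be carefully accounted for so that the summation over $\gamma$ remains convergent. A secondary technicality is that the combinatorial enumeration of left shapes $\sigma$ subject to the lower bound $|V(\sigma)| > D_V - |V(\gamma)| + |U_\gamma|$ grows like $\mathrm{poly}(D_V)^{|V(\sigma)|}$, and the $n^{-\epsilon}$ per-vertex decay must beat this uniformly, which drives the requirement that $C_V$ be chosen sufficiently small.
\end{proofsketch}
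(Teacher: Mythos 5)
Your overall strategy matches the paper's: unpack $d_{Id_U}(H_{Id_U}, H'_\gamma)$ via the canonical truncation, extract per-vertex decay for $\sigma,\sigma'$ from Lemma~\ref{lem: plds_charging3}, exploit the constraint $|V(\sigma)| + |V(\gamma)\setminus U_\gamma| > D_V$ so that enough vertices carry decay, and finish with a shape-counting argument of the type in Lemma~\ref{lem: gp_sum}. However, there is a concrete error in your claimed sources of decay.

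You assert that ``the $\gamma$-vertices yield further $n^{-\epsilon}$ decay through an analogue of \cref{lem: plds_charging2}.'' This cannot work. The distance
\[
d_{Id_{U}}(H_{Id_{U}},H'_{\gamma}) = \sum_{\substack{\sigma,\sigma' \in \mathcal{L}_{U}: |V(\sigma)|, |V(\sigma')| \leq D_V,\\ |V(\sigma \circ \gamma)| > D_V \text{ or } |V(\sigma' \circ \gamma)| > D_V}}{B_{norm}(\sigma)B_{norm}(\sigma')H_{Id_{U}}(\sigma,\sigma')}
\]
contains \emph{no} factor depending on $\gamma$'s internal structure (no $S(\gamma)$, no $T(\gamma)$, no $(k/n)^{|V(\gamma)|\dots}$ term); the only role of $\gamma$ here is to set the truncation threshold, and the only $\gamma$-dependent prefactor in the full sum is $\frac{1}{|\mathrm{Aut}(U)|c(\gamma)}$. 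Lemma~\ref{lem: plds_charging2} bounds $n^{w(V(\gamma)\setminus U_\gamma)}S(\gamma)^2$, but $S(\gamma)$ simply never appears here. So the $n^{-\epsilon}$ per-$\gamma$-vertex decay you are counting on is not available.

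What saves the argument — and what the paper does — is that the $\frac{1}{c(\gamma)}$ factor alone already provides a factor of roughly $\frac{1}{2}$ per extra $\gamma$-vertex beyond $U$, which is captured by the refined counting bound Lemma~\ref{lem: tpca_refined_bound}: for $m \ge 1$, $\sum_{\gamma \in \Gamma_{U,*}: |V(\gamma)|\ge |U|+m}\frac{1}{|\mathrm{Aut}(U)|c(\gamma)}\le \frac{\epsilon'}{5\cdot 2^{m-1}}$. Reordering so the $\gamma$-sum is done first for each fixed $(\sigma,\sigma')$, with $m_\sigma := D_V + 1 - |V(\sigma)|$, yields a factor $\frac{1}{2^{\min(m_\sigma, m_{\sigma'})-1}}$. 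Combining with the $n^{-0.5\epsilon|V(\sigma\circ\sigma')|}$ from Lemma~\ref{lem: plds_charging3} and the elementary split $n^{0.5\epsilon|V(\sigma\circ\sigma')|}\ge n^{0.1\epsilon|V(\sigma\circ\sigma')|}\cdot 2^{|V(\sigma\circ\sigma')|}$, the identity $|V(\sigma\circ\sigma')| + \min(m_\sigma,m_{\sigma'}) - 1 \ge D_V$ delivers the desired $2^{-D_V}$, and then Lemma~\ref{lem: gp_sum} absorbs the remaining shape enumeration. You should replace your plds\_charging2-based heuristic by this $c(\gamma)$/Lemma~\ref{lem: tpca_refined_bound} argument; once you do, the rest of your outline goes through.
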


\begin{proof}
	We have
	\begin{align*}
	\sum_{U\in \calI_{mid}} \sum_{\gam \in \Gam_{U, *}} \frac{d_{Id_{U}}(H_{Id_{U}}, H'_{\gam})}{|Aut(U)|c(\gam)} = \sum_{U\in \calI_{mid}} \sum_{\gam \in \Gam_{U, *}} \frac{1}{|Aut(U)|c(\gam)}\sum_{\sigma,\sigma' \in \mathcal{L}_{U_{\gamma}}: |V(\sigma)| \leq D_V, |V(\sigma')| \leq D_V,
		\atop |V(\sigma \circ \gamma)| > D_V \text{ or } |V(\sigma' \circ \gamma)| > D_V}{B_{norm}(\sigma)B_{norm}(\sigma')H_{Id_{U_{\gamma}}}(\sigma,\sigma')}\\
	\end{align*}
	The set of $\sig, \sig'$ that could appear in the above sum must necessarily be non-trivial and hence, $\sig, \sig' \in \calL_U'$. Then,
	\begin{align*}
	\sum_{U\in \calI_{mid}} &\sum_{\gam \in \Gam_{U, *}} \frac{d_{Id_{U}}(H_{Id_{U}}, H'_{\gam})}{|Aut(U)|c(\gam)}\\
	&= \sum_{U\in \calI_{mid}} \sum_{\sigma,\sigma' \in \mathcal{L}'_{U}} {B_{norm}(\sigma)B_{norm}(\sigma')H_{Id_{U}}(\sigma,\sigma')}\sum_{\gam \in \Gam_{U, *}: |V(\sigma \circ \gamma)| > D_V \text{ or } |V(\sigma' \circ \gamma)| > D_V} \frac{1}{|Aut(U)|c(\gam)}
	\end{align*}
	For $\sig \in \calL'_{U}$, define $m_{\sig} = D_V + 1 - |V(\sig)| \ge 1$. This is precisely set so that for all $\gam \in \Gam_{U, *}$, we have $|V(\sigma \circ \gamma)| > D_V$ if and only if $|V(\gam)| \ge |U| + m_{\sig}$. So, for $\sig, \sig' \in \calL'_U$, using \cref{lem: tpca_refined_bound}
	\begin{align*}
	\sum_{\gam \in \Gam_{U, *}: |V(\sigma \circ \gamma)| > D_V \text{ or } |V(\sigma' \circ \gamma)| > D_V} \frac{1}{|Aut(U)|c(\gam)} &=
	\sum_{\gam \in \Gam_{U, *}: |V(\gam)| \ge |U| + \min(m_{\sig}, m_{\sig'})} \frac{1}{|Aut(U)|c(\gam)}\\
	&\le \frac{1}{2^{\min(m_{\sig}, m_{\sig'}) - 1}}
	\end{align*}
	Also, for $\sig, \sig' \in \calL_U'$, we have $|V(\sig \circ \sig')| + min(m_{\sig}, m_{\sig'}) - 1 \ge D_V$.
	Therefore,
	\begin{align*}
		\sum_{U\in \calI_{mid}} \sum_{\gam \in \Gam_{U, *}} \frac{d_{Id_{U}}(H_{Id_{U}}, H'_{\gam})}{|Aut(U)|c(\gam)} &\le \sum_{U\in \calI_{mid}} \sum_{\sigma,\sigma' \in \mathcal{L}'_{U}} {B_{norm}(\sigma)B_{norm}(\sigma')H_{Id_{U}}(\sigma,\sigma')\frac{1}{2^{\min(m_{\sig}, m_{\sig'}) - 1}}}\\
		&\le \sum_{U\in \calI_{mid}} \sum_{\sigma,\sigma' \in \mathcal{L}'_{U}}\frac{n^{O(1) D_{sos}}}{n^{0.5\eps|V(\sig \circ \sig')|}2^{\min(m_{\sig}, m_{\sig'}) - 1}}
	\end{align*}
	where we used \cref{lem: plds_charging3}. Using $n^{0.5\eps |V(\sig \circ \sig')|} \ge n^{0.1\eps |V(\sig \circ \sig')|}2^{|V(\sig \circ \sig')|}$,
	\begin{align*}
		\sum_{U\in \calI_{mid}} \sum_{\gam \in \Gam_{U, *}} \frac{d_{Id_{U}}(H_{Id_{U}}, H'_{\gam})}{|Aut(U)|c(\gam)} &\le \sum_{U\in \calI_{mid}} \sum_{\sigma,\sigma' \in \mathcal{L}'_{U}}\frac{n^{O(1) D_{sos}}}{n^{0.1\eps|V(\sig \circ \sig')|} 2^{|V(\sig \circ \sig')|}2^{\min(m_{\sig}, m_{\sig'}) - 1}}\\
		&\le \sum_{U\in \calI_{mid}} \sum_{\sigma,\sigma' \in \mathcal{L}'_{U}}\frac{n^{O(1) D_{sos}}}{n^{0.1\eps|V(\sig \circ \sig')|} 2^{D_V}}\\
		&\le \sum_{U\in \calI_{mid}} \sum_{\sigma,\sigma' \in \mathcal{L}'_{U}}\frac{n^{O(1) D_{sos}}}{D_{sos}^{D_{sos}}n^{0.1\eps|V(\sig \circ \sig')|} 2^{D_V}}
	\end{align*}
	The final step will be to argue that $\sum_{U\in \calI_{mid}} \sum_{\sigma,\sigma' \in \mathcal{L}'_{U}}\frac{1}{D_{sos}^{D_{sos}}n^{0.1 \eps|V(\sig \circ \sig')|}} \le 1$ which will complete the proof. But this will follow from \cref{lem: gp_sum} if we set $C_V$ small enough.
\end{proof}

We conclude the following.

\begin{lemma}
    Whenever $\norm{M_{\alpha}} \le B_{norm}(\alpha)$ for all $\alpha \in \mathcal{M}'$,
    \[
    \sum_{U \in \mathcal{I}_{mid}}{M^{fact}_{Id_U}{(H_{Id_U})}} \succeq 6\left(\sum_{U \in \mathcal{I}_{mid}}{\sum_{\gamma \in \Gamma_{U,*}}{\frac{d_{Id_{U}}(H'_{\gamma},H_{Id_{U}})}{|Aut(U)|c(\gamma)}}}\right)Id_{sym}
    \]
\end{lemma}

\begin{proof}
	Choose $C_{sos}$ sufficiently small so that $\frac{1}{n^{K_1D_{sos}^2}} \ge 6\frac{n^{K_2D_{sos}}}{2^{D_V}}$ which can be satisfied by setting $C_{sos} < K_3 C_V$ for a sufficiently small constant $K_3 > 0$. Then, since $Id_{Sym} \succeq 0$, using \cref{lem: plds_cond5} and \cref{lem: plds_cond6},
	\begin{align*}
		\sum_{U \in \mathcal{I}_{mid}}{M^{fact}_{Id_U}{(H_{Id_U})}} &\succeq \frac{1}{n^{K_1D_{sos}^2}} Id_{sym}\\
		&\succeq 6\frac{n^{K_2D_{sos}}}{2^{D_V}} Id_{sym}\\
		&\succeq 6\left(\sum_{U \in \mathcal{I}_{mid}}{\sum_{\gamma \in \Gamma_{U,*}}{\frac{d_{Id_{U}}(H'_{\gamma},H_{Id_{U}})}{|Aut(U)|c(\gamma)}}}\right)Id_{sym}
	\end{align*}
\end{proof}

	\section{Tensor PCA: Full verification}\label{sec: tpca_quant}

	In this section, we will prove all the bounds required to prove \cref{thm: tpca_main}.

\TPCAmain*

We reuse the notation and qualitative bounds from \cref{sec: tpca_qual}.
Once we verify the conditions, this theorem will simply follow from the machinery, \cref{generalmaintheorem}.

\subsection{\middleshapeboundstwo}

\begin{lemma}\label{lem: tpca_charging}
	Suppose $\lda \le n^{\frac{k}{4} - \eps}$. For all $U \in \calI_{mid}$ and $\tau \in \calM_U$, suppose $deg^{\tau}(i)$ is even for all $i \in V(\tau) \setminus U_{\tau} \setminus V_{\tau}$, then
	\[\sqrt{n}^{|V(\tau)| - |U_{\tau}|}S(\tau) \le \frac{1}{n^{0.5\eps\sum_{e \in E(\tau)} l_e}}\]
\end{lemma}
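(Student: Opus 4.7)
The plan is to substitute the definitions of $S(\tau)$ and $\Delta = n^{-C_{\Del}\eps}$ directly, apply the hypothesis $\lda \le n^{k/4 - \eps}$, and then reduce everything to a single combinatorial inequality relating $|V(\tau)| - |U_{\tau}|$ to $L := \sum_{e \in E(\tau)} l_e$. More precisely, writing the exponent of $n$ in the left-hand side, using $\Delta^{|V(\tau)|-|U_\tau|}\le 1$ and $\Delta^{-kL/2}=n^{C_{\Del}\eps kL/2}$, one obtains
\[
\sqrt{n}^{|V(\tau)|-|U_{\tau}|}\, S(\tau)\;\le\; n^{\,(|V(\tau)|-|U_{\tau}|)/2 \;-\; kL/4 \;-\; \eps L \;+\; C_{\Del}\eps kL/2}.
\]
Thus the desired bound reduces to showing $|V(\tau)|-|U_{\tau}| \le kL/2$ together with choosing $C_{\Del}$ small enough (in terms of $k$) so that the residual $C_{\Del}\eps kL/2$ is absorbed into $0.5\eps L$.

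Next I would establish the combinatorial inequality $|V(\tau)|-|U_{\tau}| \le kL/2$. Write
\[
|V(\tau)|-|U_{\tau}| \;=\; |V_{\tau}\setminus U_{\tau}| \;+\; |V(\tau)\setminus U_{\tau}\setminus V_{\tau}|,
\]
and use the degree identity $\sum_{i\in V(\tau)} \deg^{\tau}(i) = kL$ (each hyperedge of label $l_e$ contributes $l_e$ to the degree of each of its $k$ endpoints). Since $\tau$ is proper, every vertex in $V(\tau)\setminus U_{\tau}\setminus V_{\tau}$ has degree at least one, and by the evenness hypothesis this degree is in fact at least two. Moreover, because $U_{\tau}$ and $V_{\tau}$ are the leftmost and rightmost minimum vertex separators of $\tau$, no vertex of $V_{\tau}\setminus U_{\tau}$ (respectively $U_{\tau}\setminus V_{\tau}$) can be isolated in $\tau$: otherwise deleting it would yield a smaller separator, contradicting minimality. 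Combined with $|U_{\tau}\setminus V_{\tau}|=|V_{\tau}\setminus U_{\tau}|$ (which follows from $U_{\tau}\equiv V_{\tau}$ for $\tau\in\calM_U$), the degree count gives
\[
kL \;\ge\; 2|V(\tau)\setminus U_{\tau}\setminus V_{\tau}| \;+\; |V_{\tau}\setminus U_{\tau}| \;+\; |U_{\tau}\setminus V_{\tau}| \;=\; 2\bigl(|V(\tau)|-|U_{\tau}|\bigr),
\]
which is exactly the inequality we need.

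Finally, substituting $|V(\tau)|-|U_{\tau}| \le kL/2$ into the exponent gives
\[
\tfrac{|V(\tau)|-|U_{\tau}|}{2}-\tfrac{kL}{4}-\eps L+\tfrac{C_{\Del}\eps kL}{2}\;\le\; \bigl(\tfrac{C_{\Del}k}{2}-1\bigr)\eps L,
\]
and this is at most $-0.5\eps L$ provided $C_{\Del} \le 1/k$, which we may enforce when setting up parameters (recall the remark allowing $C_{\Del}$ to be chosen sufficiently small relative to $k$). The main obstacle is the separator/isolation argument ensuring that vertices in $V_{\tau}\setminus U_{\tau}$ cannot contribute to $|V(\tau)|-|U_{\tau}|$ without contributing to the edge budget; once this is in hand, the rest is straightforward arithmetic.
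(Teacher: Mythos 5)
Your proposal is correct and follows essentially the same route as the paper: the same degree-counting inequality $\sum_{e\in E(\tau)}kl_e \ge 2(|V(\tau)|-|U_\tau|)$ derived from the evenness of interior degrees and the separator-minimality of $U_\tau$ and $V_\tau$, combined with $\Delta \le 1$ and choosing $C_\Del$ small enough relative to $k$. The only cosmetic difference is that you move $\Delta^{-kL/2}$ into the exponent up front and absorb it at the end, whereas the paper folds the choice of $C_\Del$ into the per-edge bound $\frac{\lda}{(\Del n)^{k/2}} \le n^{-k/4-0.5\eps}$ immediately; the substance is identical.
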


\begin{proof}
	Firstly, we claim that $\sum_{e \in E(\tau)} kl_e \ge 2(|V(\tau)| - |U_{\tau}|)$. For any vertex $i \in V(\tau) \setminus U_{\tau} \setminus V_{\tau}$, $deg^{\tau}(i)$ is even and is not $0$, hence, $deg^{\tau}(i) \ge 2$. Any vertex $i \in U_{\tau} \setminus V_{\tau}$ cannot have $deg^{\tau}(i) = 0$ otherwise $U_{\tau} \setminus\{i\}$ is a vertex separator of strictly smaller weight than $U_{\tau}$, which is not possible, hence, $deg^{\tau}(i) \ge 1$. Therefore,
	\begin{align*}
	\sum_{e \in E(\tau)}kl_e &= \sum_{i \in V(\tau)} deg^{\tau}(i)\\
	&\ge \sum_{i \in V(\tau) \setminus U_{\tau} \setminus V_{\tau}} deg^{\tau}(i) + \sum_{i \in U_{\tau} \setminus V_{\tau}} deg^{\tau}(i) + \sum_{i \in V_{\tau} \setminus U_{\tau}} deg^{\tau}(i)\\
	&\ge 2|V(\tau) \setminus U_{\tau} \setminus V_{\tau}| + |U_{\tau} \setminus V_{\tau}| + |V_{\tau} \setminus U_{\tau}|\\
	&= 2(|V(\tau)| - |U_{\tau}|)
	\end{align*}
	By choosing $C_{\Del}$ sufficiently small, we have
	\begin{align*}
	\sqrt{n}^{|V(\tau)| - |U_{\tau}|}S(\tau) &= \sqrt{n}^{|V(\tau)| - |U_{\tau}|} \Delta^{|V(\tau)| - |U_{\tau}|}\prod_{e \in E(\tau)}\left(\frac{\lda}{(\Del n)^{\frac{k}{2}}}\right)^{l_e}\\
	&\le \sqrt{n}^{|V(\tau)| - |U_{\tau}|}\Delta^{|V(\tau)| - |U_{\tau}|}\prod_{e \in E(\tau)}n^{(-\frac{k}{4} - 0.5\eps)l_e}\\
	&= \sqrt{n}^{|V(\tau)| - |U_{\tau}| - \frac{\sum_{e \in E(\tau)}kl_e}{2}}\Delta^{|V(\tau)| - |U_{\tau}|}\prod_{e \in E(\tau)}n^{-0.5\eps l_e}\\
	&= \Delta^{|V(\tau)| - |U_{\tau}|}\prod_{e \in E(\tau)}n^{-0.5 \eps l_e}\\
	&\le\frac{1}{n^{0.5\eps\sum_{e \in E(\tau)} l_e}}
	\end{align*}
\end{proof}

\begin{corollary}\label{cor: tpca_norm_decay}
	For all $U \in \calI_{mid}$ and $\tau \in \calM_U$, we have \[c(\tau)B_{norm}(\tau)S(\tau) \le 1\]
\end{corollary}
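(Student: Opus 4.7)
The plan is to mirror the structure of the analogous corollary for planted slightly denser subgraph (\cref{cor: plds_norm_decay}), since the logic is identical once the tensor-PCA specific bookkeeping is done. If $S(\tau) = 0$ the claim is trivial, so I assume $S(\tau) \neq 0$, which by \cref{def: tpca_stau} means $\deg^{\tau}(i)$ is even on $V(\tau) \setminus U_{\tau} \setminus V_{\tau}$, so \cref{lem: tpca_charging} applies and gives the key bound $\sqrt{n}^{|V(\tau)|-|U_{\tau}|} S(\tau) \le n^{-0.5\epsilon \sum_{e \in E(\tau)} l_e}$. The main task is then to check that the combinatorial prefactors coming from $c(\tau)$ and $B_{norm}(\tau)$ are dominated by this edge decay, with constants $C_V$ and $C_E$ chosen small enough.

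First I would simplify $B_{norm}(\tau)$. Because $\tau$ is a \emph{proper} middle shape it has no isolated vertices outside $U_{\tau} \cup V_{\tau}$, hence $w(I_{\tau}) = 0$; moreover $U_{\tau}$ is the leftmost minimum vertex separator, so $w(S_{\tau,min}) = w(U_{\tau})$. Thus the $n$-factor in $B_{norm}(\tau)$ collapses to exactly $\sqrt{n}^{|V(\tau)|-|U_{\tau}|}$, which is what \cref{lem: tpca_charging} consumes. For the remaining factors I note that for tensor PCA we have $t_{max}=1$, $B_{vertex} = 6qD_V = n^{O(\epsilon C_V)} \cdot \mathrm{polylog}(n)$, and (in the Gaussian case) $B_{edge}(e) \le (400 D_V^2 D_E^2 q)^{l_e} = n^{O(\epsilon(C_V+C_E)) l_e}$, and the $c(\tau)$ prefactor contributes at most $n^{O(\epsilon C_V)(|U_{\tau} \triangle V_{\tau}| + k|E(\tau)| + |V(\tau) \setminus (U_{\tau}\cup V_{\tau})|)}$.

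The second step is to charge each vertex/edge exponent to $\sum_e l_e$. Because $\tau$ is proper and has no $l_e = 0$ edges, every vertex $i$ in $V(\tau) \setminus U_{\tau}$ or in $V(\tau) \setminus V_{\tau}$ satisfies $\deg^{\tau}(i) \ge 1$, so $|V(\tau)\setminus U_{\tau}| + |V(\tau)\setminus V_{\tau}| \le 2\sum_i \deg^{\tau}(i) = 2k \sum_{e \in E(\tau)} l_e$; similarly $|U_{\tau} \setminus V_{\tau}| + |V_{\tau} \setminus U_{\tau}| + 2|V(\tau) \setminus (U_{\tau}\cup V_{\tau})|$ is bounded by the same quantity. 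Consequently every exponent appearing in $c(\tau) B_{norm}(\tau)$ outside the $\sqrt n^{|V(\tau)|-|U_{\tau}|}$ factor is at most a constant times $k \sum_e l_e$, and the whole prefactor is $n^{O(\epsilon(C_V+C_E))\sum_e l_e}$.

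Putting the pieces together yields
\[
c(\tau) B_{norm}(\tau) S(\tau) \le n^{O(\epsilon(C_V+C_E)) \sum_{e \in E(\tau)} l_e} \cdot n^{-0.5\epsilon \sum_{e \in E(\tau)} l_e} \le 1,
\]
where the final inequality holds provided $C_V$ and $C_E$ are chosen sufficiently small compared to the absolute constant $0.5$ (which is the slack left over from \cref{lem: tpca_charging} after the hypothesis $\lambda \le n^{k/4 - \epsilon}$). There is no real obstacle here: \cref{lem: tpca_charging} has already done the hard work of trading the signal-to-noise ratio against the edge labels, and the remainder is purely a calibration of the truncation constants. The one thing to be careful about is ensuring that the degree-counting argument $|V(\tau)\setminus U_{\tau}| + |V(\tau)\setminus V_{\tau}| \le 2k\sum_e l_e$ genuinely covers all vertex-type exponents in both $c(\tau)$ and $B_{norm}(\tau)$; this is essentially the only nontrivial step.
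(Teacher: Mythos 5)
Your proposal is correct and takes essentially the same route as the paper's own proof: reduce to the case $S(\tau) \neq 0$, observe that $w(I_{\tau}) = 0$ and $w(S_{\tau,\min}) = w(U_{\tau})$ for a proper middle shape so the $n$-factor in $B_{norm}(\tau)$ equals $\sqrt n^{|V(\tau)|-|U_{\tau}|}$, bound the vertex-count exponents by a constant multiple of $\sum_{e}l_e$ using the degree-at-least-one/two observations, and absorb the remaining $n^{O(\eps(C_V+C_E))\sum_e l_e}$ prefactor into the $n^{-0.5\eps\sum_e l_e}$ decay from \cref{lem: tpca_charging} by choosing $C_V, C_E$ small. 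The only cosmetic difference is that the paper writes the vertex bound as $|V(\tau)\setminus U_{\tau}| + |V(\tau)\setminus V_{\tau}| \le 4\sum_e l_e$ while you use $\le 2k\sum_e l_e$; both are of the same order since $k$ is constant, and the argument goes through either way.
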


\begin{proof}
	Since $\tau$ is a proper middle shape, we have $w(I_{\tau}) = 0$ and $w(S_{\tau, min}) = w(U_{\tau})$. This implies
	$n^{\frac{w(V(\tau)) + w(I_{\tau}) - w(S_{\tau, min})}{2}} = \sqrt{n}^{|V(\tau)| - |U_{\tau}|}$.
	If $deg^{\tau}(i)$ is odd for any vertex $i \in V(\tau) \setminus U_{\tau} \setminus V_{\tau}$, then $S(\tau) = 0$ and the inequality is true. So, assume $deg^{\tau}(i)$ is even for all $i \in V(\tau) \setminus U_{\tau} \setminus V_{\tau}$.	As was observed in the proof of \cref{lem: tpca_charging}, every vertex $i \in V(\tau) \setminus U_{\tau}$ or $i \in V(\tau) \setminus V_{\tau}$ has $deg^{\tau}(i) \ge 1$ and hence, $|V(\tau)\setminus U_{\tau}| + |V(\tau)\setminus V_{\tau}| \le 4 \sum_{e \in E(\tau)} l_e$. Also, $|E(\tau)| \le \sum_{e \in E(\tau)} l_e$ and $q = n^{O(1) \cdot \eps (C_V + C_E)}$. We can set $C_V, C_E$ sufficiently small so that, using \cref{lem: tpca_charging},
	\begin{align*}
	c(\tau)B_{norm}(\tau)S(\tau)
	&= 100(3D_V)^{|U_{\tau}\setminus V_{\tau}| + |V_{\tau}\setminus U_{\tau}| + k|E(\tau)|}2^{|V(\tau)\setminus (U_{\tau}\cup V_{\tau})|}\\
	&\quad\cdot 2e(6qD_V)^{|V(\tau)\setminus U_{\tau}| + |V(\tau)\setminus V_{\tau}|}\prod_{e \in E(\tau)} (400D_V^2D_E^2q)^{l_e}\sqrt{n}^{|V(\tau)| - |U_{\tau}|}S(\tau)\\
	&\le n^{O(1) \cdot \eps(C_V + C_E) \cdot \sum_{e \in E(\tau)} l_e} \cdot \sqrt{n}^{|V(\tau)| - |U_{\tau}|}S(\tau)\\
	&\le n^{O(1) \cdot \eps(C_V + C_E) \cdot \sum_{e \in E(\tau)} l_e} \cdot \frac{1}{n^{0.5\eps\sum_{e \in E(\tau)} l_e}}\\
	&\le 1
	\end{align*}
\end{proof}

We can now show middle shape bounds.

\begin{lemma}\label{lem: tpca_cond2}
    For all $U \in \calI_{mid}$ and $\tau \in \calM_U$,
    \[
    \begin{bmatrix}
        \frac{1}{|Aut(U)|c(\tau)}H_{Id_U} & B_{norm}(\tau) H_{\tau}\\
        B_{norm}(\tau) H_{\tau}^T & \frac{1}{|Aut(U)|c(\tau)}H_{Id_U}
    \end{bmatrix}
    \succeq 0
    \]
\end{lemma}

\begin{proof}
	We have
	\begin{align*}
	\begin{bmatrix}
	\frac{1}{|Aut(U)|c(\tau)}H_{Id_U} & B_{norm}(\tau)H_{\tau}\\
	B_{norm}(\tau)H_{\tau}^T & \frac{1}{|Aut(U)|c(\tau)}H_{Id_U}
	\end{bmatrix}
	=& \begin{bmatrix}
	\left(\frac{1}{|Aut(U)|c(\tau)} - \frac{S(\tau)B_{norm}(\tau)}{|Aut(U)|}\right)H_{Id_U} & 0\\
	0 & \left(\frac{1}{|Aut(U)|c(\tau)} - \frac{S(\tau)B_{norm}(\tau)}{|Aut(U)|}\right)H_{Id_U}
	\end{bmatrix}\\
	&+ B_{norm}(\tau)\begin{bmatrix}
	\frac{S(\tau)}{|Aut(U)|}H_{Id_U} & H_{\tau}\\
	H_{\tau}^T & \frac{S(\tau)}{|Aut(U)|}H_{Id_U}
	\end{bmatrix}
	\end{align*}
	By \cref{lem: tpca_cond2_simplified}, $\begin{bmatrix}
	\frac{S(\tau)}{|Aut(U)|}H_{Id_U} & H_{\tau}\\
	H_{\tau}^T & \frac{S(\tau)}{|Aut(U)|}H_{Id_U}
	\end{bmatrix}
	\succeq 0$, so the second term above is positive semidefinite. For the first term, by \cref{lem: tpca_cond1}, $H_{Id_U} \succeq 0$ and by \cref{cor: tpca_norm_decay}, $\frac{1}{|Aut(U)|c(\tau)} - \frac{S(\tau)B_{norm}(\tau)}{|Aut(U)|} \ge 0$, which proves that the first term is also positive semidefinite.
\end{proof}

\subsection{\intersectionboundstwo}

\begin{lemma}\label{lem: tpca_charging2}
	Suppose $\lda \le n^{\frac{k}{4} - \eps}$. For all $U, V \in \calI_{mid}$ where $w(U) > w(V)$ and for all $\gam \in \Gam_{U, V}$,
	\[n^{w(V(\gam)\setminus U_{\gam})} S(\gam)^2 \le \frac{1}{n^{B\eps (|V(\gam) \setminus (U_{\gam} \cap V_{\gam})| + \sum_{e \in E(\gam)} l_e)}}\]
	for some constant $B$ that depends only on $C_{\Del}$. In particular, it is independent of $C_V$ and $C_E$.
\end{lemma}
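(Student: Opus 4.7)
The plan is to mimic the bookkeeping already carried out for planted slightly denser subgraph in \cref{lem: plds_charging2}, but to replace the per-vertex decay of the form $k/\sqrt{n}$ used there with a per-edge decay extracted from $\lambda^2/(\Delta n)^k$. First I would note that if the parity condition in \cref{def: tpca_sgam} fails then $S(\gamma) = 0$ and the inequality is trivial, so I may assume $\deg^\gamma(v)$ is even for every $v \in V(\gamma) \setminus U_\gamma \setminus V_\gamma$. Introducing the shorthand $a = |U_\gamma \cap V_\gamma|$, $b = |U_\gamma \setminus V_\gamma|$, $c = |V_\gamma \setminus U_\gamma|$, $d = |V(\gamma) \setminus U_\gamma \setminus V_\gamma|$, and $E = \sum_{e \in E(\gamma)} l_e$, and substituting $\lambda \le n^{k/4-\epsilon}$ and $\Delta = n^{-C_\Delta \epsilon}$ into \cref{def: tpca_sgam}, the claim reduces to showing that the resulting exponent of $n$,
\[
c+d - C_\Delta \epsilon(b+c+2d) - \left(\frac{k}{2} + 2\epsilon - k C_\Delta \epsilon\right) E,
\]
is at most $-B\epsilon(b+c+d+E)$ for some constant $B$ depending only on $C_\Delta$.

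The crux of the argument will be the combinatorial inequality $kE \ge 2(c+d)$. Parity forces each of the $d$ vertices in $V(\gamma) \setminus U_\gamma \setminus V_\gamma$ to have degree at least $2$, and the minimality of $V_\gamma$ as a vertex separator forces each of the $c$ vertices in $V_\gamma \setminus U_\gamma$ to have degree at least $1$ (else the offending vertex could be removed from the separator). These two observations only account for $kE \ge 2d + c$, so a further contribution of $c$ must come from $U_\gamma \setminus V_\gamma$. To extract it, I would argue directly from minimality: if the set $J$ of non-isolated vertices in $U_\gamma \setminus V_\gamma$ had size less than $c$, then $(U_\gamma \cap V_\gamma) \cup J$ would itself separate $U_\gamma$ from $V_\gamma$ (since every path from $U_\gamma$ starts either in $U_\gamma \cap V_\gamma$, in $J$, or at an isolated vertex from which no path exists), yet would have weight $a + |J| < a + c = |V_\gamma|$, contradicting $V_\gamma$ being a minimum separator. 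Hence $|J| \ge c$, and these vertices contribute another $c$ to $\sum_v \deg^\gamma(v) = kE$.

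With $(k/2) E \ge c+d$ in hand, the exponent above collapses to at most $-C_\Delta\epsilon(b+c+2d) + (kC_\Delta - 2)\epsilon E$. Taking $B = C_\Delta$, the target bound $-C_\Delta\epsilon(b+c+d) - C_\Delta\epsilon E$ reduces algebraically to $((k+1)C_\Delta - 2) E \le C_\Delta d$. Since $C_\Delta$ may be chosen a sufficiently small constant (independent of $C_V$ and $C_E$), I can enforce $(k+1)C_\Delta < 2$, making the left-hand side nonpositive while the right-hand side is nonnegative, so the inequality holds automatically.

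The main obstacle is the combinatorial step in the second paragraph. Without it one is stuck with $kE \ge 2d + c$, which leaves an irreducible residue of order $c/2$ in the exponent that no $\epsilon$-order term can absorb, so the whole computation collapses. Exploiting the minimality (rather than mere separation) of $V_\gamma$ is therefore essential, and the cleanest route seems to be the direct contradiction sketched above rather than invoking a hypergraph analogue of Menger's theorem.
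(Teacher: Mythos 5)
Your proposal is correct and follows essentially the same route as the paper: the same reduction via parity, the same key combinatorial inequality $kE \ge 2(c+d)$ proved by the same minimum-separator argument (your set $J$ is the paper's $S'$), and the same use of the slack parameter $\Delta$ to absorb the remaining per-vertex factors. The paper is slightly terser in the final algebraic absorption, while you make the reduction to $((k+1)C_\Delta - 2)E \le C_\Delta d$ explicit, but the argument is the same.
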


\begin{proof}
	Suppose there is a vertex $i \in V(\gam) \setminus U_{\gam} \setminus V_{\gam}$ such that $deg^{\gam}(i)$ is odd, then $S(\gam) = 0$ and the inequality is true. So, assume $deg^{\gam}(i)$ is even for all vertices $i \in V(\gam) \setminus U_{\gam} \setminus V_{\gam}$.
	We first claim that $k\sum_{e \in E(\gam)} l_e \ge 2|V(\gam) \setminus U_{\gam}|$. Since $\gam$ is a left shape, all vertices $i$ in $V(\gam) \setminus U_{\gam}$ have $deg^{\gam}(i) \ge 1$. In particular, all vertices $i \in V_{\gam} \setminus U_{\gam}$ have $deg^{\gam}(i) \ge 1$.
	Moreover, if $i \in V(\gam) \setminus U_{\gam} \setminus V_{\gam}$, since $deg^{\gam}(i)$ is even, we must have $deg^{\gam}(i) \ge 2$.

	Let $S'$ be the set of vertices $i \in U_{\gam} \setminus V_{\gam}$ that have $deg^{\gam}(i) \ge 1$. Then, note that $|S'| + |U_{\gam} \cap V_{\gam}| \ge |V_{\gam}| \Longrightarrow |S'| \ge |V_{\gam} \setminus U_{\gam}|$ since otherwise $S' \cup (U_{\gam} \cap V_{\gam})$ will be a vertex separator of $\gam$ of weight strictly less than $V_{\gam}$, which is not possible. Then,
	\begin{align*}
	\sum_{e \in E(\gam)}kl_e &= \sum_{i \in V(\gam)} deg^{\gam}(i)\\
	&\ge \sum_{i \in V(\gam) \setminus U_{\gam} \setminus V_{\gam}} deg^{\gam}(i) + \sum_{i \in U_{\gam} \setminus V_{\gam}} deg^{\gam}(i) + \sum_{i \in V_{\gam} \setminus U_{\gam}} deg^{\gam}(i)\\
	&\ge 2|V(\gam) \setminus U_{\gam} \setminus V_{\gam}| + |S'| + |V_{\gam} \setminus U_{\gam}|\\
	&\ge 2|V(\gam) \setminus U_{\gam} \setminus V_{\gam}| + 2|V_{\gam} \setminus U_{\gam}|\\
	&= 2|V(\gam) \setminus U_{\gam}|
	\end{align*}

	Finally, note that $2|V(\gam)| - |U_{\gam}| - |V_{\gam}| = |U_{\gam} \setminus V_{\gam}| + |V_{\gam} \setminus U_{\gam}| + 2|V(\gam) \setminus U_{\gam} \setminus V_{\gam}| \ge |V(\gam) \setminus (U_{\gam} \cap V_{\gam})|$. By choosing $C_{\Del}$ sufficiently small, we have
	\begin{align*}
	n^{w(V(\gam)\setminus U_{\gam})} S(\gam)^2 &= n^{|V(\gam)\setminus U_{\gam})|} \Delta^{2|V(\gam)| - |U_{\gam}| - |V_{\gam}|} \prod_{e \in E(\gam)} \left(\frac{\lda^2}{(\Del n)^k}\right)^{l_e}\\
	&\le n^{|V(\gam)\setminus U_{\gam})|} \Delta^{2|V(\gam)| - |U_{\gam}| - |V_{\gam}|} \prod_{e \in E(\gam)} n^{-(\frac{k}{2} + \eps)l_e}\\
	&\le \Delta^{2|V(\gam)| - |U_{\gam}| - |V_{\gam}|} \prod_{e \in E(\gam)} n^{-\eps l_e}\\
	&\le \frac{1}{n^{B\eps (|V(\gam) \setminus (U_{\gam} \cap V_{\gam})| + \sum_{e \in E(\gam)} l_e)}}
	\end{align*}
for a constant $B$ that depends only on $C_{\Del}$.
\end{proof}

\begin{remk}
	In the above bounds, note that there is a decay of $n^{B\eps}$ for each vertex in $V(\gam) \setminus (U_{\gam} \cap V_{\gam})$.	One of the main technical reasons for introducing the slack parameter $C_{\Del}$ in the planted distribution was to introduce this decay, which is needed in the current machinery.
\end{remk}

We can now obtain the intersection term bounds.

\begin{lemma}\label{lem: tpca_cond3}
    For all $U, V \in \calI_{mid}$ where $w(U) > w(V)$ and all $\gam \in \Gam_{U, V}$, \[c(\gam)^2N(\gam)^2B(\gam)^2H_{Id_V}^{-\gam, \gam} \preceq H_{\gam}'\]
\end{lemma}

\begin{proof}
	By \cref{lem: tpca_cond3_simplified}, we have
	\begin{align*}
	c(\gam)^2N(\gam)^2B(\gam)^2H_{Id_V}^{-\gam, \gam} &\preceq c(\gam)^2N(\gam)^2B(\gam)^2 S(\gam)^2 \frac{|Aut(U)|}{|Aut(V)|} H'_{\gam}
	\end{align*}
	Using the same proof as in \cref{lem: tpca_cond1}, we can see that $H'_{\gam} \succeq 0$. Therefore, it suffices to prove that $c(\gam)^2N(\gam)^2B(\gam)^2 S(\gam)^2 \frac{|Aut(U)|}{|Aut(V)|} \le 1$.
	Since $U, V \in \calI_{mid}$, $|Aut(U)| = |U|!,|Aut(V)| = |V|!$. Therefore, $\frac{|Aut(U)|}{|Aut(V)|} = \frac{|U|!}{|V|!} \le D_V^{|U_{\gam} \setminus V_{\gam}|}$. Also, $|E(\gam)| \le \sum_{e \in E(\gam)} l_e$ and $q = n^{O(1) \cdot \eps (C_V + C_E)}$. Let $B$ be the constant from \cref{lem: tpca_charging2}. We can set $C_V, C_E$ sufficiently small so that, using \cref{lem: tpca_charging2},
	\begin{align*}
	c(\gam)^2N(\gam)^2B(\gam)^2S(\gam)^2 \frac{|Aut(U)|}{|Aut(V)|} &\le 100^2 (3D_V)^{2|U_{\gam}\setminus V_{\gam}| + 2|V_{\gam}\setminus U_{\gam}| + 2k|E(\al)|}4^{|V(\gam) \setminus (U_{\gam} \cup V_{\gam})|}\\
	&\quad\cdot (3D_V)^{4|V(\gam)\setminus V_{\gam}| + 2|V(\gam)\setminus U_{\gam}|} (6qD_V)^{2|V(\gam)\setminus U_{\gam}| + 2|V(\gam)\setminus V_{\gam}|} \prod_{e \in E(\gam)} (400D_V^2D_E^2q)^{2l_e}\\
	&\quad\cdot  n^{w(V(\gam)\setminus U_{\gam})} S(\gam)^2 \cdot D_V^{|U_\gam \setminus V_{\gam}|} \\
	&\le n^{O(1) \cdot \eps(C_V + C_E) \cdot (|V(\gam) \setminus (U_{\gam} \cap V_{\gam})| + \sum_{e \in E(\gam)} l_e)} \cdot n^{w(V(\gam)\setminus U_{\gam})} S(\gam)^2\\
	&\le n^{O(1) \cdot \eps(C_V + C_E) \cdot (|V(\gam) \setminus (U_{\gam} \cap V_{\gam})| + \sum_{e \in E(\gam)} l_e)} \cdot \frac{1}{n^{B\eps (|V(\gam) \setminus (U_{\gam} \cap V_{\gam})| + \sum_{e \in E(\gam)} l_e)}}\\
	&\le 1
	\end{align*}
\end{proof}

\subsection{\truncationboundstwo}

In this section, we will obtain the truncation error bounds using the strategy sketched in \cref{sec: showing_positivity}. We also reuse the notation. First, we need the following bound on $B_{norm}(\sig) B_{norm}(\sig') H_{Id_U}(\sig, \sig')$.

\begin{lemma}\label{lem: tpca_charging3}
	Suppose $\lda = n^{\frac{k}{4} - \eps}$. For all $U \in \calI_{mid}$ and $\sig, \sig' \in \calL_U$,
	\[B_{norm}(\sig) B_{norm}(\sig') H_{Id_U}(\sig, \sig') \le \frac{1}{n^{0.5\eps C_{\Del}|V(\sig \circ \sig')|}\Del^{D_{sos}}n^{|U|}}\]
\end{lemma}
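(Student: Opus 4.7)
The plan is to mirror the approach of Lemma~\ref{lem: tpca_charging2}: expand $H_{Id_U}(\sig,\sig') = \frac{1}{|Aut(U)|}\lda_{\sig\circ\sig'^T}$ via \cref{def: tpca_coeffs}, substitute $\lda \le n^{k/4-\eps}$, and then rely on two structural bounds to control the resulting exponents. Writing $\al = \sig \circ \sig'^T$, we have $|V(\al)| = |V(\sig)| + |V(\sig')| - |U|$ and $deg(\al) = deg(U_{\sig}) + deg(U_{\sig'})$. Using $(\lda/(\Del n)^{k/2})^{l_e} \le n^{-kl_e/4 - \eps l_e}\Del^{-kl_e/2}$ per edge, and noting that $(|V(\sig)| - |U|)/2 + (|V(\sig')| - |U|)/2 = (|V(\al)| - |U|)/2$, one obtains
\[
B_{norm}(\sig) B_{norm}(\sig') \lda_\al \le n^{O(\eps)(|V(\al)| + \sum_e l_e)} \cdot n^{(|V(\al)|-|U|)/2 - deg(\al)/2 - k\sum_e l_e/4 - \eps \sum_e l_e} \cdot \Del^{|V(\al)| - deg(\al)/2 - k\sum_e l_e/2}.
\]

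The proof then rests on two inequalities. First, since $\sig$ is a left shape with $V_{\sig} \equiv U$ as the unique minimum vertex separator, $U_{\sig}$ itself is also a vertex separator of $\sig$, so by minimality $|U_{\sig}| \ge |U|$; as every vertex in $U_{\sig}$ carries power at least one, $deg(U_{\sig}) \ge |U_{\sig}| \ge |U|$, and symmetrically for $\sig'$, giving $deg(\al) \ge 2|U|$. Second, the parity condition in \cref{def: tpca_coeffs} forces $deg^{\al}(i) + deg^{U_\al}(i) + deg^{V_\al}(i)$ to be even and (by properness) at least one for every $i \in V(\al)$, hence at least two; summing yields $deg(\al) + k\sum_e l_e \ge 2|V(\al)|$.

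Combining these as $-deg(\al)/2 - k\sum_e l_e/4 = -deg(\al)/4 - (deg(\al)/4 + k\sum_e l_e/4) \le -|U|/2 - |V(\al)|/2$, and adding $(|V(\al)| - |U|)/2$, the principal $n$-exponent becomes exactly $-|U|$, accounting for the $n^{-|U|}$ factor on the right-hand side. Converting $\Del^{|V(\al)| - deg(\al)/2 - k\sum_e l_e/2}$ to $n$-units contributes $-C_\Del\eps|V(\al)| + C_\Del \eps deg(\al)/2 + C_\Del \eps k\sum_e l_e/2$; the middle term is at most $C_\Del \eps D_{sos}/2$ (matching the $\Del^{-D_{sos}}$ slack), while the third term is dominated by the $-k\sum_e l_e/4$ main decay for $C_\Del$ small, and $-C_\Del \eps|V(\al)|$ provides the target $-0.5\eps C_\Del|V(\al)|$ factor with room to spare.

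The main obstacle is the bookkeeping of the interaction between the $\Del^{-k\sum_e l_e/2}$ growth and the $n^{-kl_e/4}$ per-edge decay from $\lda \le n^{k/4-\eps}$: because $\sum_e l_e$ can greatly exceed $D_{sos}$, the potentially huge $\Del^{-k\sum_e l_e/2}$ factor must be absorbed on a per-edge basis into the $n$-decay, which requires choosing $C_\Del, C_V, C_E$ sufficiently small (as is done for \cref{lem: tpca_charging,lem: tpca_charging2}) so that all $O(\eps)$ vertex and edge error terms collapse beneath the $0.5\eps C_\Del|V(\al)|$ slack in the target bound.
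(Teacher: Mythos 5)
Your proof is correct and follows the paper's overall approach, but derives the key combinatorial inequality $\sum_{e \in E(\al)} kl_e + 2\,deg(\al) \ge 2|V(\al)| + 2|U|$ by a slightly different route. The paper does a single-pass vertex count within $\al$: it uses the minimum-separator property of $U$ to argue that at least $|U| - |U_\al \cap V_\al|$ vertices of each of $U_\al \setminus (U_\al \cap V_\al)$ and $V_\al \setminus (U_\al \cap V_\al)$ have positive hyperedge degree, and separately bounds $deg(\al) \ge |U_\al| + |V_\al|$. You instead prove two cleaner sub-claims---$deg(\al) \ge 2|U|$ by applying separator minimality of $V_\sigma$ within $\sigma$ (and of $V_{\sigma'}$ within $\sigma'$) to get $|U_\sigma|, |U_{\sigma'}| \ge |U|$, and $\sum_e kl_e + deg(\al) \ge 2|V(\al)|$ by the parity-plus-properness argument on every $i \in V(\al)$---and add them. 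Your decomposition is a bit tidier, since the separator reasoning lives in $\sigma$ rather than in the composed shape $\al$; both are valid.

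One bookkeeping imprecision worth fixing: you write that the $n^{C_{\Del}\eps k\sum_e l_e/2}$ growth from $\Del^{-k\sum_e l_e/2}$ is ``dominated by the $-k\sum_e l_e/4$ main decay'' and that the $O(\eps)$ edge errors ``collapse beneath the $0.5\eps C_{\Del}|V(\al)|$ slack.'' But the $-k\sum_e l_e/4$ budget is already entirely spent (together with $-deg(\al)/2$) in pushing the principal exponent down to $-|U|$, and $\sum_e l_e$ can far exceed $|V(\al)|$, so the vertex slack cannot carry the edge errors. The correct sink for both the per-edge $\Del$ growth and the $O(\eps(C_V+C_E))l_e$ factors is the per-edge $-\eps l_e$ decay coming from $\lda \le n^{k/4 - \eps}$, which for $C_{\Del}$ small leaves $n^{-0.5\eps l_e}$ per edge---exactly what the paper does in \cref{lem: tpca_charging} when it collapses $\bigl(\lda/(\Del n)^{k/2}\bigr)^{l_e} \le n^{(-k/4 - 0.5\eps)l_e}$. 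With that accounting corrected, your argument closes.
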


\begin{proof}
	Suppose there is a vertex $i \in V(\sig) \setminus V_{\sig}$ such that $deg^{\sig}(i) + deg^{U_{\sig}}(i)$ is odd, then $H_{Id_U}(\sig, \sig') = 0$ and the inequality is true. So, assume that $deg^{\sig}(i) + deg^{U_{\sig}}(i)$ is even for all $i \in V(\sig) \setminus V_{\sig}$. Similarly, assume that $deg^{\sig'}(i) + deg^{U_{\sig'}}(i)$ is even for all $i \in V(\sig') \setminus V_{\sig'}$. Also, if $\rho_{\sig} \neq \rho_{\sig'}$, we will have $H_{Id_U}(\sig, \sig') = 0$ and we'd be done. So, assume $\rho_{\sig} = \rho_{\sig'}$.

	Let $\al = \sig \circ \sig'$. We will first prove that $\sum_{e \in E(\al)} kl_e + 2deg(\al) \ge 2|V(\al)| + 2|U|$. Firstly, note that all vertices $i \in V(\al) \setminus (U_{\al} \cup V_{\al})$ have $deg^{\al}(i)$ to be even and nonzero, and hence at least $2$. Moreover, in both the sets $U_{\al} \setminus (U_{\al} \cap V_{\al})$ and $V_{\al} \setminus (U_{\al} \cap V_{\al})$, there are at least $|U| - |U_{\al} \cap V_{\al}|$ vertices of degree at least $1$, because $U$ is a minimum vertex separator. Also, note that $deg(\al) \ge |U_{\al}| + |V_{\al}|$. This implies that
	\begin{align*}
	\sum_{e \in E(\al)} kl_e + 2deg(\al) &\ge 2 |V(\al) \setminus (U_{\al} \cup V_{\al})| + 2(|U| - |U_{\al} \cap V_{\al}|) + 2(|U_{\al}| + |V_{\al}|)\\
	&= 2 (|V(\al)| - |U_{\al} \cup V_{\al}|) + 2(|U| - |U_{\al} \cap V_{\al}|) + 2(|U_{\al} \cup V_{\al}| + |U_{\al} \cap V_{\al}|)\\
	&= 2|V(\al)| + 2|U|
	\end{align*}
	where we used the fact that $U_{\al} \cap V_{\al} \subseteq U$. Finally, by choosing $C_V, C_E$ sufficiently small,
	\begin{align*}
	B_{norm}(\sig) B_{norm}(\sig') H_{Id_U}(\sig, \sig') &= 2e(6qD_V)^{|V(\sig)\setminus U_{\sig}| + |V(\sig)\setminus V_{\sig}|}\prod_{e \in E(\sig)} (400D_V^2D_E^2q)^{l_e} n^{\frac{w(V(\sig)) - w(U)}{2}}\\
	&\quad\cdot 2e(6qD_V)^{|V(\sig')\setminus U_{\sig'}| + |V(\sig')\setminus V_{\sig'}|}\prod_{e \in E(\sig')} (400D_V^2D_E^2q)^{l_e} n^{\frac{w(V(\sig')) - w(U)}{2}}\\
	&\quad\cdot \frac{1}{|Aut(U)|} \Delta^{|V(\al)|} \left(\frac{1}{\sqrt{\Delta n}}\right)^{deg(\al)} \prod_{e \in E(\al)} \left(\frac{\lda}{(\Del n)^{\frac{k}{2}}}\right)^{l_e}\\
	&\le n^{O(1) \cdot \eps (C_V + C_E) \cdot (|V(\al)| + \sum_{e \in E(\al)} l_e)} \Delta^{|V(\al)|}\left(\frac{1}{\sqrt{\Del}}\right)^{deg(\al)}\\
	&\quad\cdot \sqrt{n}^{|V(\al)| - |U|} \left(\frac{1}{\sqrt{n}}\right)^{deg(\al)}\prod_{e \in E(\al)}n^{(-\frac{k}{4} - 0.5\eps)l_e}\\
	&\le \frac{n^{O(1) \cdot \eps (C_V + C_E) \cdot (|V(\al)| + \sum_{e \in E(\al)} l_e)}}{n^{\eps C_{\Delta}|V(\al)|}n^{0.5\eps\sum_{e \in E(\al)} l_e}} \cdot \frac{1}{\Del^{D_{sos}}n^{|U|}}\sqrt{n}^{|V(\al)| + |U| - deg(\al) - \frac{1}{2}\sum_{e \in E(\al)} kl_e}\\
	&\le \frac{1}{n^{0.5\eps C_{\Del}|V(\al)|}\Del^{D_{sos}}n^{|U|}}
	\end{align*}
where we used the facts $\Del \le 1, deg(\al) \le 2D_{sos}$.
\end{proof}

We now apply the strategy by showing the following bounds.

\begin{restatable}{lemma}{TPCAfive}\label{lem: tpca_cond5}
	Whenever $\norm{M_{\al}} \le B_{norm}(\al)$ for all $\al \in \calM'$,
	\[
	\sum_{U \in \mathcal{I}_{mid}}{M^{fact}_{Id_U}{(H_{Id_U})}} \succeq \frac{\Del^{2D_{sos}^2}}{n^{D_{sos}}} Id_{sym}
	\]
\end{restatable}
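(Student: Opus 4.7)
The plan is to follow the strategy of \cref{subsec: positivity_lower_bound_strategy}, mirroring the PLDS proof of \cref{lem: plds_cond5} but with careful attention to $\Delta$-factors. First, for each $V \in \calI_{mid}$, I would compute $\lambda_V$ by applying \cref{def: tpca_coeffs} to the trivial shape $Id_V \circ Id_V^T = Id_V$: since $|V(Id_V)| = |V|$, $E(Id_V) = \emptyset$, and $deg(Id_V) = 2|V|$, this yields $\lambda_{Id_V} = \Delta^{|V|} (\Delta n)^{-|V|} = n^{-|V|}$. Because the matrix indices for Tensor PCA are non-multilinear, I would then use the modified definition of $\lambda_V$ from the ``Handling Non-multilinear Matrix Indices'' subsection; computing the block $H'_{Id_V}$ restricted to the trivial left shapes $\mathcal{T}_V$ via the closed form $\lambda_{\sigma \circ {\sigma'}^T} = \Delta^{|V| - deg(\sigma \circ {\sigma'}^T)/2} n^{-deg(\sigma \circ {\sigma'}^T)/2}$ gives a lower bound of the shape $\lambda_V \geq \Delta^{O(D_{sos})}/n^{|V|}$, so $1/\lambda_V \leq n^{|V|}/\Delta^{O(D_{sos})}$.

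Next, I would establish a charging lemma, the Tensor PCA analogue of \cref{lem: plds_charging3}, stating that for all $U \in \calI_{mid}$ and all $\sigma, \sigma' \in \mathcal{L}_U$, setting $\alpha = \sigma \circ {\sigma'}^T$,
\[
B_{norm}(\sigma)\, B_{norm}(\sigma')\, H_{Id_U}(\sigma,\sigma') \;\leq\; \frac{\Delta^{|V(\alpha)|}}{n^{0.1\,\epsilon\,|V(\alpha)|}}.
\]
The proof combines three ingredients: (i) the composition identity $|V(\sigma)| + |V(\sigma')| = |V(\alpha)| + |U|$ for left shapes; (ii) the polynomial-in-$(D_V,D_E,q)$ prefactors in $B_{norm}$ from \cref{generalmaintheorem}, absorbed into $n^{O(1)\epsilon(C_V + C_E)(|V(\alpha)| + \sum_e l_e)}$ after choosing $C_V, C_E$ small; (iii) the per-vertex and per-edge decay arguments already used in \cref{lem: tpca_charging} and \cref{lem: tpca_charging2}, which rely on the assumption $\lambda \leq n^{k/4 - \epsilon}$ and on the evenness of $deg^\alpha(i)$ at internal vertices to guarantee $\sum_e k\,l_e \geq 2|V(\alpha) \setminus U_\alpha|$.

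Third, I would use this charging lemma to bound the edge weights $w(e) = 2W(U,V)/\lambda_V$ in the auxiliary graph $G$ of \cref{subsec: positivity_lower_bound_strategy}. The sum defining $W(U,V)$ is controlled by \cref{lem: gp_sum}, giving an upper bound of the form $w(e) \leq n^{O(1)D_{sos}}\,\Delta^{\Omega(|U|)}\,n^{-0.1\epsilon\,|U|}$. Since paths in $G$ have length at most $D_{sos}$ and $G$ has at most $D_{sos}+1$ vertices, the number of paths from $V$ to $U$ is at most $(D_{sos}+1)^{D_{sos}}$ and the product of edge weights along any such path is at most $n^{O(1)D_{sos}^2}/\Delta^{-O(D_{sos}^2)}$. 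Choosing $C_{sos}$ small absorbs the polynomial prefactors as in the PLDS proof. Plugging these bounds into the chain
\[
Id_{Sym,V} \;\preceq\; 2\sum_{U : w(U) \geq w(V)}\!\left(\sum_{P : V \to U}\prod_{e \in E(P)} w(e)\right)\frac{1}{\lambda_U}\, M^{fact}(H_{Id_U})
\]
and summing over $V \in \calI_{mid}$ yields $Id_{Sym} \preceq (n^{D_{sos}}/\Delta^{2D_{sos}^2})\sum_U M^{fact}(H_{Id_U})$, which rearranges to the claim.

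The main obstacle is the careful bookkeeping of the $\Delta$-factors, particularly in the first step: for non-multilinear indices, $\lambda_V$ is defined as the largest constant such that $H'_{Id_V} \succeq \lambda\, Id^*_{Sym,V}$, not simply as a diagonal entry, so one must verify that the off-diagonal entries of $H'_{Id_V}$ (which arise from power mismatches and pick up negative exponents of $\Delta$) do not destroy the desired PSD comparison. Once this block analysis is carried out, the rest of the proof is a direct translation of the PLDS argument with the single replacement $(k/n)^{|V|} \mapsto \Delta^{O(D_{sos})}/n^{|V|}$, which accounts for the weaker $\Delta^{2D_{sos}^2}$ prefactor in the conclusion compared to PLDS.
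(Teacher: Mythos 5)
Your overall strategy — computing $\lambda_V$, proving a charging lemma, bounding edge weights $w(e) = 2W(U,V)/\lambda_V$ in the auxiliary graph $G$, and summing over paths of length at most $D_{sos}$ — matches the paper's approach and is correct at that level. Your concern about non-multilinear matrix indices (that the paper computes $\lambda_V$ via the simplified diagonal entry rather than the modified PSD definition) is a fair observation. However, there is a substantive gap in the charging lemma that would prevent you from reaching the stated bound.

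The paper's charging lemma (\cref{lem: tpca_charging3}) is
\[
B_{norm}(\sigma)\,B_{norm}(\sigma')\,H_{Id_U}(\sigma,\sigma') \;\leq\; \frac{1}{n^{0.5\eps C_{\Delta}|V(\alpha)|}\,\Delta^{D_{sos}}\,n^{|U|}},
\]
which carries an explicit $n^{-|U|}$ factor and treats $\Delta^{-D_{sos}}$ as a controlled \emph{loss}. Your version $\Delta^{|V(\alpha)|}/n^{0.1\eps|V(\alpha)|}$ has no $n^{-|U|}$ and places $\Delta$ as a per-vertex \emph{gain}. This is not cosmetic. For Tensor PCA $\lambda_V = 1/n^{|V|}$, so $1/\lambda_V = n^{|V|}$ appears in every edge weight. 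When the paper's lemma is applied inside $W(U,V)$ (with the lemma's $U$ replaced by $V$), the $n^{-|V|}$ factor \emph{exactly} cancels this $n^{|V|}$, giving $w(e) \leq 1/(n^{0.1 C_{\Delta}\eps|U|}\Delta^{2D_{sos}})$ with no stray $n^{D_{sos}}$. By contrast, the PLDS template you mirror is content to crudely absorb $1/\lambda_V$ into $n^{O(1)D_{sos}}$ per edge because the PLDS target is only $n^{-O(1)D_{sos}^2}$. The Tensor PCA target $\Delta^{2D_{sos}^2}/n^{D_{sos}}$ has only a single $D_{sos}$ in the $n$-exponent, so leaving $n^{|V|}$ uncanceled accumulates to $n^{O(1)D_{sos}^2}$ over a path, yielding a strictly weaker constant than claimed.

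The $n^{-|U|}$ comes from a degree constraint different from the one you cite. The inequality $\sum_e k\,l_e \geq 2|V(\alpha)\setminus U_\alpha|$ governs the $\tau$-shape charging (\cref{lem: tpca_charging}). For $\alpha = \sigma\circ{\sigma'}^T$ with $\sigma,\sigma'\in\mathcal{L}_U$ the relevant inequality is $\sum_e k\,l_e + 2\deg(\alpha) \geq 2|V(\alpha)| + 2|U|$, which crucially involves the degree $\deg(\alpha) = \deg(U_\alpha) + \deg(V_\alpha)$ of the matrix-index monomials. This $2\deg(\alpha)$ term, together with the $(\sqrt{\Delta n})^{-\deg(\alpha)}$ factor in the pseudo-calibration coefficient, produces $n^{-|U|}$ alongside a controlled loss $\Delta^{-\deg(\alpha)/2} \leq \Delta^{-D_{sos}}$. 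The $\Delta^{|V(\alpha)|}$ vertex factor you highlight is indeed present, but it is consumed to buy the $n^{-\eps C_\Delta |V(\alpha)|}$ decay (via $\Delta = n^{-C_\Delta\eps}$); it cannot \emph{also} appear intact in the numerator, because the $\Delta^{-\deg(\alpha)/2}$ and the per-edge $\Delta^{-(k/2)\sum l_e}$ factors outweigh it. Tracking these is what gives the paper both the $\Delta^{-D_{sos}}$ loss per edge weight and the $\Delta^{2D_{sos}^2}$ in the final prefactor.
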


\begin{proof}
    For $V \in \calI_{mid}$, $\lda_V = \frac{1}{n^{|V|}}$. We then choose $w_V = \left(\frac{1}{n}\right)^{D_{sos} - |V|}$. For all left shapes $\sig \in \calL_V$, it's easy to verify $w_{V} \leq \frac{w_{U_{\sigma}}\lambda_{U_{\sigma}}}{|\mathcal{I}_{mid}|B_{norm}(\sigma)^2{c(\sigma)^2}{H_{Id_V}(\sigma,\sigma)}}$ using \cref{lem: tpca_charging3}. \cref{thm: main_positivity} completes the proof.
\end{proof}

\begin{restatable}{lemma}{TPCAsix}\label{lem: tpca_cond6}
	\[\sum_{U\in \calI_{mid}} \sum_{\gam \in \Gam_{U, *}} \frac{d_{Id_{U}}(H_{Id_{U}}, H'_{\gam})}{|Aut(U)|c(\gam)} \le \frac{1}{\Delta^{2D_{sos}}2^{D_V}}\]
\end{restatable}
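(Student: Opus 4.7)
\medskip

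The plan is to mirror the proof of \cref{lem: plds_cond6} for planted slightly denser subgraph, with the principal technical change being the derivation of a TPCA-specific analog of the charging bound \cref{lem: plds_charging3}. First, I would establish the following: for all $U \in \calI_{mid}$ and all $\sig, \sig' \in \calL_U$, writing $\al = \sig \circ \sig'^T$, one has
\[
B_{norm}(\sig)\, B_{norm}(\sig')\, H_{Id_U}(\sig, \sig') \;\le\; \frac{1}{n^{0.5\eps\,|V(\al)|}}.
\]
To prove this, plug in the TPCA coefficient $\lambda_\al = \Delta^{|V(\al)|}(\tfrac{1}{\sqrt{\Delta n}})^{deg(\al)}\prod_{e \in E(\al)}(\tfrac{\lda}{(\Delta n)^{k/2}})^{l_e}$ from \cref{def: tpca_coeffs}, together with $B_{norm}(\sig)B_{norm}(\sig') = (2e(6qD_V))^{|V(\sig)\setminus U_\sig| + |V(\sig)\setminus V_\sig| + |V(\sig')\setminus U_{\sig'}| + |V(\sig')\setminus V_{\sig'}|}\prod_{e} B_{edge}(e)^2 \cdot n^{(w(V(\sig)) - w(U))/2}n^{(w(V(\sig')) - w(U))/2}$. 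The $n^{(|V(\sig)| + |V(\sig')| - 2|U|)/2}$ factors cancel with the $(\Delta n)^{-deg(\al)/2}(\Delta n)^{-k|E(\al)|/2}$ factors in the same way as \cref{lem: tpca_charging2}, using that every vertex in $V(\al)\setminus U_\al\setminus V_\al$ has even degree at least two and that $\lda \le n^{k/4 - \eps}$. The polynomial prefactors in $D_V, D_E, q$ are absorbed by choosing $C_V, C_E$ small enough, using that $\sum_e l_e \ge \tfrac{2}{k}|V(\al)|$ from the edge-count argument already developed in \cref{lem: tpca_charging}.

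With this bound in hand, the argument proceeds exactly as in the PLDS case. Expand
\[
\sum_{U \in \calI_{mid}}\sum_{\gam \in \Gam_{U,*}} \frac{d_{Id_U}(H_{Id_U}, H'_\gam)}{|Aut(U)|c(\gam)} = \sum_{U \in \calI_{mid}} \sum_{\sig, \sig' \in \calL'_U} B_{norm}(\sig)B_{norm}(\sig')H_{Id_U}(\sig,\sig') \sum_{\substack{\gam \in \Gam_{U,*}\\ |V(\sig\circ\gam)| > D_V \text{ or }|V(\sig'\circ\gam)| > D_V}} \frac{1}{|Aut(U)|c(\gam)}.
\]
Setting $m_\sig = D_V + 1 - |V(\sig)| \ge 1$, the truncation condition $|V(\sig \circ \gam)| > D_V$ becomes $|V(\gam)| \ge |U| + m_\sig$, so \cref{lem: tpca_refined_bound} gives the inner $\gam$-sum a bound of $\frac{1}{2^{\min(m_\sig, m_{\sig'})-1}}$. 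Since $|V(\sig \circ \sig'^T)| + \min(m_\sig, m_{\sig'}) - 1 \ge D_V$, the factorization $n^{0.5\eps|V(\al)|} \ge n^{0.1\eps|V(\al)|}\cdot 2^{|V(\al)|}$ produces a $2^{D_V}$ denominator once the $2^{\min(m_\sig,m_{\sig'})-1}$ is combined with $2^{|V(\sig\circ\sig'^T)|}$. The remaining sum over shapes is then controlled by \cref{lem: gp_sum} after absorbing the polynomial $n^{O(D_{sos})}$ slack into the $n^{0.1\eps|V(\al)|}$ decay.

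The claimed $\Delta^{-2D_{sos}}$ cushion on the right-hand side is extremely generous given the actual bound produced by this computation (which contains no $\Delta$ factor at all in the denominator, and in fact an extra $\Delta^{|V(\al)|}$ factor in the numerator that we have thrown away); this slack exists purely to match the $\Delta^{2D_{sos}^2}/n^{D_{sos}}$ lower bound of \cref{lem: tpca_cond5} when the two are combined as in the proof of \cref{lem: plds_cond4}, via the parameter choice $C_{sos} \ll C_V$ and $C_{sos} \ll C_\Delta^{-1}$.

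The main obstacle is the derivation of the charging bound itself, specifically the bookkeeping that ensures every factor of $n$, $D_V$, $D_E$, and $q$ arising from $B_{norm}(\sig)B_{norm}(\sig')$ is properly offset by the $n^{-\eps l_e}$ and $\Delta^{|V(\al)|}$ factors from $H_{Id_U}(\sig,\sig')$. In particular, one must verify that for properly composed $\al = \sig\circ\sig'^T$ with $\sig, \sig' \in \calL'_U$, the parity constraint forcing even $deg^\al(i)$ on internal vertices, combined with $\sig$ and $\sig'$ being left shapes (so every vertex outside $U$ is connected to $U_\al$ or $V_\al$ by an edge-path), yields $\sum_e l_e \ge \Omega(|V(\al)|)$. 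Once this edge-vertex accounting is in place, the remainder of the argument is a direct transcription of the PLDS proof.
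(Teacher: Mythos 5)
Your high-level template is right — mirror the PLDS proof of \cref{lem: plds_cond6}: expand $d_{Id_U}(H_{Id_U}, H'_\gam)$ as a truncation sum over $\sig, \sig' \in \calL'_U$, bound the inner $\gam$-sum via \cref{lem: tpca_refined_bound} and $m_\sig = D_V + 1 - |V(\sig)|$, split the per-vertex decay as $n^{0.1\eps \cdot} \cdot 2^{\cdot}$ to produce $2^{D_V}$, and close with \cref{lem: gp_sum}. The problem is your charging bound, and it is not a cosmetic issue.

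You propose $B_{norm}(\sig)B_{norm}(\sig')H_{Id_U}(\sig,\sig') \le n^{-0.5\eps|V(\al)|}$, and you characterize the $\Delta^{-2D_{sos}}$ cushion in the lemma's RHS as "extremely generous" slack that exists "purely to match" the lower bound. Both claims are wrong, and the error traces to the same source: you have overlooked the $\bigl(\tfrac{1}{\sqrt{\Delta n}}\bigr)^{deg(\al)}$ factor in the TPCA pseudo-calibration coefficient (\cref{def: tpca_coeffs}). Unlike the PLDS coefficient $(k/n)^{|V|}(2p-1)^{|E|}$ — which decays in every factor — the TPCA coefficient carries $\Delta^{-deg(\al)/2}$, a \emph{growing} factor, and $deg(\al)$ can be as large as $2D_{sos}$ independent of $|V(\al)|$ and $\sum_e l_e$. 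The paper's \cref{lem: tpca_charging3} therefore proves the weaker but correct bound
\[
B_{norm}(\sig)B_{norm}(\sig')H_{Id_U}(\sig,\sig') \le \frac{1}{n^{0.5\eps C_{\Del}|V(\al)|}\,\Delta^{D_{sos}}\,n^{|U|}},
\]
retaining the $\Delta^{-D_{sos}}$ growth, and the $\Delta^{-2D_{sos}}$ cushion in the lemma statement is precisely what is needed to absorb it (with the extra $\Delta^{D_{sos}}$ traded against $D_{sos}^{D_{sos}}$ via the parameter choice $C_{sos} \le C_{\Del}$). Your bound, with no $\Delta$ in the denominator at all, does not hold.

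Relatedly, your claim that $\sum_e l_e \ge \Omega(|V(\al)|)$ for $\al = \sig \circ \sig'^T$ is false. The correct degree-accounting identity, which the paper proves inside \cref{lem: tpca_charging3}, is $\sum_{e \in E(\al)} k l_e + 2\,deg(\al) \ge 2|V(\al)| + 2|U|$. Because the $2\,deg(\al)$ term (contributed by the monomial powers on the row/column indices $U_\al, V_\al$) can dominate, you cannot isolate a lower bound on $\sum_e l_e$ linear in $|V(\al)|$. This is exactly why the paper does \emph{not} get its per-vertex decay from the $n^{-\eps l_e}$ edge factors (as you propose); it gets the decay from the $\Delta^{|V(\al)|} = n^{-C_\Del\eps|V(\al)|}$ factor in $\lambda_\al$ — the very factor you describe as "thrown away." Throwing it away is fatal: the slack parameter $\Delta$ is introduced in the planted distribution precisely to manufacture this per-vertex decay, and the charging bound cannot be closed without it.

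Once you replace your charging bound with the correct one and carry the $\Delta^{-D_{sos}}$ and $n^{-|U|}$ factors through the truncation argument, the rest of your outline (the $m_\sig$ bookkeeping, the $2^{D_V}$ extraction, the final appeal to \cref{lem: gp_sum}) goes through as written and matches the paper's proof.
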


\begin{proof}
    We use the same argument and notation as in \cref{lem: plds_cond6}. When we plug in the bounds, we get
	\begin{align*}
	\sum_{U\in \calI_{mid}} \sum_{\gam \in \Gam_{U, *}} \frac{d_{Id_{U}}(H_{Id_{U}}, H'_{\gam})}{|Aut(U)|c(\gam)} &\le \sum_{U\in \calI_{mid}} \sum_{\sigma,\sigma' \in \mathcal{L}'_{U}} {B_{norm}(\sigma)B_{norm}(\sigma')H_{Id_{U}}(\sigma,\sigma')\frac{1}{2^{\min(m_{\sig}, m_{\sig'}) - 1}}}\\
	&\le \sum_{U\in \calI_{mid}} \sum_{\sigma,\sigma' \in \mathcal{L}'_{U}}\frac{1}{n^{0.5\eps C_{\Del}|V(\sig \circ \sig')|}\Del^{D_{sos}}n^{|U|}2^{\min(m_{\sig}, m_{\sig'}) - 1}}\\
	&\le \sum_{U\in \calI_{mid}} \sum_{\sigma,\sigma' \in \mathcal{L}'_{U}}\frac{1}{n^{0.5\eps C_{\Del}|V(\sig \circ \sig')|}\Del^{D_{sos}}2^{\min(m_{\sig}, m_{\sig'}) - 1}}
	\end{align*}
	where we used \cref{lem: tpca_charging3}. Using $n^{0.5 C_{\Del} |V(\sig \circ \sig')|} \ge n^{0.1\eps C_{\Del} |V(\sig \circ \sig')|}2^{|V(\sig \circ \sig')|}$,
	\begin{align*}
	\sum_{U\in \calI_{mid}} \sum_{\gam \in \Gam_{U, *}} \frac{d_{Id_{U}}(H_{Id_{U}}, H'_{\gam})}{|Aut(U)|c(\gam)} &\le \sum_{U\in \calI_{mid}} \sum_{\sigma,\sigma' \in \mathcal{L}'_{U}}\frac{1}{n^{0.1 \eps  C_{\Delta}|V(\sig \circ \sig')|} \Delta^{D_{sos}} 2^{|V(\sig \circ \sig')|}2^{\min(m_{\sig}, m_{\sig'}) - 1}}\\
	&\le \sum_{U\in \calI_{mid}} \sum_{\sigma,\sigma' \in \mathcal{L}'_{U}}\frac{1}{n^{0.1 \eps C_{\Delta}|V(\sig \circ \sig')|}\Delta^{D_{sos}} 2^{D_V}}\\
	&\le \sum_{U\in \calI_{mid}} \sum_{\sigma,\sigma' \in \mathcal{L}'_{U}}\frac{1}{D_{sos}^{D_{sos}}n^{0.1 \eps C_{\Delta}|V(\sig \circ \sig')|}\Delta^{2D_{sos}} 2^{D_V}}
	\end{align*}
	where we set $C_{sos}$ small enough so that $D_{sos} = n^{\eps C_{sos}} \le n^{c\eps C_{\Del}} = \frac{1}{\Del}$. The final step will be to argue that $\sum_{U\in \calI_{mid}} \sum_{\sigma,\sigma' \in \mathcal{L}'_{U}}\frac{1}{D_{sos}^{D_{sos}}n^{0.1 C_{\Delta}\eps|V(\sig \circ \sig')|}} \le 1$ which will complete the proof. But this will follow from \cref{lem: gp_sum} if we set $C_V, C_E$ small enough.
\end{proof}

We can finally complete the analysis of the truncation error.

\begin{lemma}\label{lem: tpca_cond4}
    Whenever $\norm{M_{\alpha}} \le B_{norm}(\alpha)$ for all $\alpha \in \mathcal{M}'$,
    \[
    \sum_{U \in \mathcal{I}_{mid}}{M^{fact}_{Id_U}{(H_{Id_U})}} \succeq 6\left(\sum_{U \in \mathcal{I}_{mid}}{\sum_{\gamma \in \Gamma_{U,*}}{\frac{d_{Id_{U}}(H'_{\gamma},H_{Id_{U}})}{|Aut(U)|c(\gamma)}}}\right)Id_{sym}
    \]
\end{lemma}

\begin{proof}
	Choose $C_{sos}$ sufficiently small so that $\frac{\Del^{2D_{sos}^2}}{n^{D_{sos}}} \ge \frac{6}{\Delta^{2D_{sos}}2^{D_V}}$ which is satisfied by setting $C_{sos} < 0.5 C_V$. Then, since $Id_{Sym} \succeq 0$, using \cref{lem: tpca_cond5} and \cref{lem: tpca_cond6},
	\begin{align*}
	\sum_{U \in \mathcal{I}_{mid}}{M^{fact}_{Id_U}{(H_{Id_U})}} &\succeq \frac{\Del^{2D_{sos}^2}}{n^{D_{sos}}} Id_{sym}\\
	&\succeq \frac{6}{\Delta^{2D_{sos}}2^{D_V}} Id_{sym}\\
	&\succeq 6\left(\sum_{U \in \mathcal{I}_{mid}}{\sum_{\gamma \in \Gamma_{U,*}}{\frac{d_{Id_{U}}(H'_{\gamma},H_{Id_{U}})}{|Aut(U)|c(\gamma)}}}\right)Id_{sym}
	\end{align*}
\end{proof}

	\section{Sparse PCA: Full verification}\label{sec: spca_quant}

	In this section, we will full prove \cref{thm: spca_main}.

\SPCAmain*

We already showed the relevant qualitative bounds in \cref{sec: spca_qual}. We use the bounds and also the notation from that section.
We will apply \cref{generalmaintheorem}.

\begin{definition}
    Define $n = \max(d, m)$.
\end{definition}

The above definition conforms with the notation used in \cref{generalmaintheorem}. So, we can use the bounds as stated there.
Once we verify the conditions, the theorem will immediately follow from the machinery, \cref{generalmaintheorem}.

\subsection{\middleshapeboundstwo}

\begin{lemma}\label{lem: spca_charging}
	Suppose $0 < A < \frac{1}{4}$ is a constant such that $\frac{\sqrt{\lda}}{\sqrt{k}} \le d^{-A\eps}$ and $\frac{1}{\sqrt{k}} \le d^{-2A}$. For all $m$ such that $m \le \frac{d^{1 - \eps}}{\lda^2}, m \le \frac{k^{2 - \eps}}{\lda^2}$, for all $U \in \calI_{mid}$ and $\tau \in \calM_U$, suppose $deg^{\tau}(i)$ is even for all $i \in V(\tau) \setminus U_{\tau} \setminus V_{\tau}$, then
	\[\sqrt{d}^{|\tau|_1 - |U_{\tau}|_1}\sqrt{m}^{|\tau|_2 - |U_{\tau}|_2}S(\tau) \le \prod_{j \in V_2(\tau) \setminus U_{\tau} \setminus V_{\tau}} (deg^{\tau}(j) - 1)!!\cdot \frac{1}{d^{A\eps\sum_{e \in E(\tau)} l_e}}\]
\end{lemma}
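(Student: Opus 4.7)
The plan is, after expanding $S(\tau)$ and canceling the double-factorial factors present on both sides, to reduce the inequality to
\[
(k/\sqrt{d})^{|V_1(\tau)\setminus U_\tau|}\,(\sqrt{m}\,\Delta)^{|V_2(\tau)\setminus U_\tau|}\,\prod_{e\in E(\tau)}(\sqrt{\lda}/\sqrt{k})^{l_e}\ \le\ d^{-A\eps L},
\]
where $L=\sum_{e\in E(\tau)}l_e$, and then to prove it by a two-by-two case analysis on the sizes of the two vertex factors.

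The main combinatorial ingredient is the pair of inequalities
\[
L\ \ge\ 2\,|V_1(\tau)\setminus U_\tau|\qquad\text{and}\qquad L\ \ge\ 2\,|V_2(\tau)\setminus U_\tau|,
\]
which I would derive from the fact that $\tau$ is a proper middle shape together with the even-degree hypothesis. Concretely, every vertex of $V(\tau)\setminus U_\tau\setminus V_\tau$ has even degree (else $S(\tau)=0$ and the bound is trivial), hence degree $\ge 2$; every vertex of $(U_\tau\setminus V_\tau)\cup(V_\tau\setminus U_\tau)$ has degree $\ge 1$ by the minimality of $U_\tau$ and $V_\tau$ as separators; and because $U_\tau\equiv V_\tau$ one has $|V_t(U_\tau\setminus V_\tau)|=|V_t(V_\tau\setminus U_\tau)|$ for each $t\in\{1,2\}$. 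Summing degrees on the type-$t$ side then gives $L=\sum_{v\in V_t(\tau)}\deg^\tau(v)\ge 2|V_t(\tau)\setminus U_\tau|$.

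With these bounds in hand, I would split into the four cases determined by whether $k/\sqrt{d}\le 1$ and whether $\sqrt{m}\,\Delta\le 1$. Any factor $\le 1$ is dropped; any factor $>1$ has its exponent bounded above by $L/2$ using the combinatorial inequalities. In each case the whole product collapses to a single ratio raised to a power of the form $L/c$, and the target $d^{-A\eps L}$ follows from exactly one of the three hypotheses: the case when both factors are $\le 1$ uses only $\sqrt{\lda}/\sqrt{k}\le d^{-A\eps}$; the two mixed cases use respectively $m\lda^2\le k^{2-\eps}$ combined with $k\ge d^{4A}$, and $m\lda^2\le d^{1-\eps}$; and the hardest case, where both vertex factors exceed $1$, reduces to the inequality $m\Delta^2\lda^2\le d^{1-4A\eps}$, which follows from $m\lda^2\le d^{1-\eps}$ and $\Delta\le 1$ precisely when $A\le 1/4$, so the hypothesis $A<1/4$ is used tightly there.

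The main obstacle is finding the correct case split together with the matching combinatorial inequality: the factor of $2$ on the right-hand side of $L\ge 2|V_t(\tau)\setminus U_\tau|$ is exactly what is needed for the exponent $L/2$ in the bound $(k/\sqrt{d})^{|V_1\setminus U_\tau|}\le (k/\sqrt{d})^{L/2}$ (when $k/\sqrt{d}>1$) to be consistent with the target $d^{-A\eps L}$, and the identification $|V_t(U_\tau\setminus V_\tau)|=|V_t(V_\tau\setminus U_\tau)|$ coming from $U_\tau\equiv V_\tau$ is essential for this factor of $2$. Once the combinatorial bounds are in place, the arithmetic in each of the four cases is a one-line check using the given parameter constraints and $A<1/4$.
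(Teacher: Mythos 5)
Your proposal is correct and proves the lemma, but it organizes the arithmetic differently from the paper. The combinatorial core is the same: both proofs rely on the bound $L := \sum_{e\in E(\tau)} l_e \ge 2\max(r_1,r_2)$ where $r_t = |\tau|_t - |U_\tau|_t = |V_t(\tau)\setminus U_\tau|$, proved exactly as you describe — degree $\ge 2$ on internal vertices by parity, degree $\ge 1$ on $(U_\tau\setminus V_\tau)\cup(V_\tau\setminus U_\tau)$ by separator minimality, bipartiteness to count each side, and $|V_t(U_\tau\setminus V_\tau)|=|V_t(V_\tau\setminus U_\tau)|$ from $U_\tau\equiv V_\tau$ to close the factor of $2$. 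Where you diverge is in the case split and the way the parameter hypotheses are deployed. The paper first discards $\Delta^{r_2}\le 1$, then proves the decay in two cases on $r_1\ge r_2$ vs.\ $r_1<r_2$ via telescoping substitutions $\sqrt m\le\sqrt d^{\,1-\eps}/\lda$ and $\sqrt m\le\sqrt k^{\,2-\eps}/\lda$ applied to different pieces of $\sqrt m^{\,r_2}$; the case $r_1<r_2$ also needs $k\le d$ to kill a stray $(\sqrt k/\sqrt d)^{\eps r_1}$. You instead keep $\Delta$ in the target and split on whether each per-vertex base $k/\sqrt d$ and $\sqrt m\,\Delta$ exceeds $1$; factors $\le 1$ are dropped and factors $>1$ are bounded by the $L/2$-th power, after which each of the four cases collapses to a single ratio and one hypothesis applies. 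Your variant is slightly more mechanical to check, avoids the explicit use of $k\le d$, and makes it transparent that the "both $>1$" case is where $A\le 1/4$ is tight. Both arguments implicitly use $m\ge 1$ (the paper drops $(1/\sqrt m)^{r_1-r_2}$, you pass from $m\lda^2\le d^{1-\eps}$ to $\lda^2\le d^{1-\eps}$), which is harmless since $m$ is a sample count.
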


\begin{proof}
	Let $r_1 = |\tau|_1 - |U_{\tau}|_1, r_2 = |\tau|_2 - |U_{\tau}|_2$. Since $\Delta \le 1$, it suffices to prove
	\[E := \sqrt{d}^{r_1}\sqrt{m}^{r_2}\left(\frac{k}{d}\right)^{r_1} \left(\frac{\sqrt{\lda}}{\sqrt{k}}\right)^{\sum_{e \in E(\tau)} l_e} \le  \frac{1}{d^{A\eps\sum_{e \in E(\tau)} l_e}}\]

	We will need the following claim.
	\begin{claim}
		$\sum_{e \in E(\tau)} l_e \ge 2 \max(r_1, r_2)$.
	\end{claim}

	\begin{proof}
		We will first prove $\sum_{e \in E(\tau)} l_e \ge 2 r_1$. For any vertex $i \in V_1(\tau) \setminus U_{\tau} \setminus V_{\tau}$, $deg^{\tau}(i)$ is even and is not $0$, hence, $deg^{\tau}(i) \ge 2$. Any vertex $i \in U_{\tau} \setminus V_{\tau}$ cannot have $deg^{\tau}(i) = 0$ otherwise $U_{\tau} \setminus\{i\}$ is a vertex separator of strictly smaller weight than $U_{\tau}$, which is not possible, hence, $deg^{\tau}(i) \ge 1$. Similarly, for $i \in  V_{\tau} \setminus U_{\tau}$, $deg^{\tau}(i) \ge 1$. Also, since $H_{\tau}$ is bipartite, we have $\sum_{i \in V_1(\tau)} deg^{\tau}(i) = \sum_{j \in V_2(\tau)} deg^{\tau}(j)= \sum_{e \in E(\tau)} l_e$. Consider
		\begin{align*}
		\sum_{e \in E(\tau)} l_e &= \sum_{i \in V_1(\tau)} deg^{\tau}(i)\\
		&\ge \sum_{i \in V_1(\tau) \setminus U_{\tau} \setminus V_{\tau}} deg^{\tau}(i) + \sum_{i \in (U_{\tau})_1 \setminus V_{\tau}} deg^{\tau}(i) + \sum_{i \in (V_{\tau})_1 \setminus U_{\tau}} deg^{\tau}(i)\\
		&\ge 2|V_1(\tau) \setminus U_{\tau} \setminus V_{\tau}| + |(U_{\tau})_1 \setminus V_{\tau}| + |(V_{\tau})_1 \setminus U_{\tau}|\\
		&= 2r_1
		\end{align*}
		We can similarly prove $\sum_{e \in E(\tau)} l_e \ge 2 r_2$
	\end{proof}

	To illustrate the main idea, we will start by proving the weaker bound $E \le 1$. Observe that our assumptions imply $m \le \frac{d}{\lda^2}, m \le \frac{k^2}{\lda^2}$ and also, $E \le \sqrt{d}^{r_1}\sqrt{m}^{r_2}\left(\frac{k}{d}\right)^{r_1} \left(\frac{\sqrt{\lda}}{\sqrt{k}}\right)^{2\max(r_1, r_2)}$ where we used the fact that $\frac{\sqrt{\lda}}{\sqrt{k}} \le d^{-A\eps} \le 1$.

	\begin{claim}\label{claim: spca_decay}
		For integers $r_1, r_2 \ge 0$, if $m \le \frac{d}{\lda^2}$ and $m \le \frac{k^2}{\lda^2}$, then,
		\[\sqrt{d}^{r_1}\sqrt{m}^{r_2}\left(\frac{k}{d}\right)^{r_1} \left(\frac{\sqrt{\lda}}{\sqrt{k}}\right)^{2\max(r_1, r_2)} \le  1\]
	\end{claim}

	\begin{proof}
	We will consider the cases $r_1 \ge r_2$ and $r_1 < r_2$ separately. If $r_1 \ge r_2$, we have
	\begin{align*}
		\sqrt{d}^{r_1}\sqrt{m}^{r_2}\left(\frac{k}{d}\right)^{r_1} \left(\frac{\sqrt{\lda}}{\sqrt{k}}\right)^{2r_1} &\le \sqrt{d}^{r_1}\left(\frac{\sqrt{d}}{\lda}\right)^{r_2}\left(\frac{k}{d}\right)^{r_1} \left(\frac{\sqrt{\lda}}{\sqrt{k}}\right)^{2r_1}
		= \left(\frac{\lda}{\sqrt{d}}\right)^{r_1 - r_2}
		\le \left(\frac{1}{\sqrt{m}}\right)^{r_1 - r_2}
		\le 1
	\end{align*}
	And if $r_1 < r_2$, we have
	\begin{align*}
	\sqrt{d}^{r_1}\sqrt{m}^{r_2}\left(\frac{k}{d}\right)^{r_1} \left(\frac{\sqrt{\lda}}{\sqrt{k}}\right)^{2r_2} &= \sqrt{d}^{r_1}\sqrt{m}^{r_2 - r_1}\sqrt{m}^{r_1}\left(\frac{k}{d}\right)^{r_1} \left(\frac{\sqrt{\lda}}{\sqrt{k}}\right)^{2r_2}\\
	&\le \sqrt{d}^{r_1}\left(\frac{k}{\lda}\right)^{r_2 - r_1}\left(\frac{\sqrt{d}}{\lda}\right)^{r_1}\left(\frac{k}{d}\right)^{r_1} \left(\frac{\sqrt{\lda}}{\sqrt{k}}\right)^{2r_2}\\
		&= 1
	\end{align*}
	\end{proof}

	For the desired bounds, we mimic this argument while carefully keeping track of factors of $d^{\eps}$.

	\begin{claim}\label{claim: spca_decay2}
		For integers $r_1, r_2 \ge 0$ and an integer $r \ge 2\max(r_1, r_2)$, if $m \le \frac{d^{1 - \eps}}{\lda^2}$ and $m \le \frac{k^{2 - \eps}}{\lda^2}$, then,
		\[\sqrt{d}^{r_1}\sqrt{m}^{r_2}\left(\frac{k}{d}\right)^{r_1} \left(\frac{\sqrt{\lda}}{\sqrt{k}}\right)^r \le  \left(\frac{1}{d^{A\eps}}\right)^r\]
	\end{claim}
	\begin{proof}
	If $r_1 \ge r_2$,
	\begin{align*}
		E &= \sqrt{d}^{r_1}\sqrt{m}^{r_2} \left(\frac{k}{d}\right)^{r_1}\left(\frac{\sqrt{\lda}}{\sqrt{k}}\right)^{2r_1}\left(\frac{\sqrt{\lda}}{\sqrt{k}}\right)^{r - 2r_1}\\
		&\le \sqrt{d}^{r_1}\left(\frac{\sqrt{d}^{1 - \eps}}{\lda}\right)^{r_2} \left(\frac{k}{d}\right)^{r_1}\left(\frac{\sqrt{\lda}}{\sqrt{k}}\right)^{2r_1}\left(\frac{\sqrt{\lda}}{\sqrt{k}}\right)^{r - 2r_1}\\
		& = \left(\frac{\lda}{\sqrt{d}^{1 - \eps}}\right)^{r_1 - r_2} \left(\frac{1}{\sqrt{d}}\right)^{\eps r_1}\left(\frac{\sqrt{\lda}}{\sqrt{k}}\right)^{r - 2r_1}\\
		& \le \left(\frac{1}{\sqrt{m}}\right)^{r_1 - r_2} \left(\frac{1}{\sqrt{d}}\right)^{\eps r_1}\left(\frac{1}{d^{A\eps}}\right)^{r - 2r_1}\\
		&\le \left(\frac{1}{d^{2A}}\right)^{\eps r_1}\left(\frac{1}{d^{A\eps}}\right)^{r - 2r_1}\\
		&= \left(\frac{1}{d^{A\eps}}\right)^r
	\end{align*}
	And if $r_1 < r_2$,
	\begin{align*}
	E &= \sqrt{d}^{r_1}\sqrt{m}^{r_2 - r_1} \sqrt{m}^{r_1} \left(\frac{k}{d}\right)^{r_1}\left(\frac{\sqrt{\lda}}{\sqrt{k}}\right)^{2r_2}\left(\frac{\sqrt{\lda}}{\sqrt{k}}\right)^{r - 2r_2}\\
	&\le \sqrt{d}^{r_1}\left(\frac{\sqrt{k}^{2 - \eps}}{\lda}\right)^{r_2 - r_1}\left(\frac{\sqrt{d}^{1 - \eps}}{\lda}\right)^{r_1} \left(\frac{k}{d}\right)^{r_1}\left(\frac{\sqrt{\lda}}{\sqrt{k}}\right)^{2r_2}\left(\frac{\sqrt{\lda}}{\sqrt{k}}\right)^{r - 2r_2}\\
	&= \left(\frac{\sqrt{k}}{\sqrt{d}}\right)^{\eps r_1}\left(\frac{1}{\sqrt{k}}\right)^{\eps r_2}\left(\frac{\sqrt{\lda}}{\sqrt{k}}\right)^{r - 2r_2}\\
	&\le \left(\frac{1}{\sqrt{k}}\right)^{\eps r_2}\left(\frac{\sqrt{\lda}}{\sqrt{k}}\right)^{r - 2r_2}\\
	&\le \left(\frac{1}{d^{2A}}\right)^{\eps r_2}\left(\frac{1}{d^{A\eps}}\right)^{r - 2r_2}\\
	&\le \left(\frac{1}{d^{A\eps}}\right)^{\sum_{e \in E(\tau)} l_e}
	\end{align*}
\end{proof}
The result follows by setting $r = \sum_{e \in E(\tau)} l_e$ in the above claim.
\end{proof}

\begin{corollary}\label{cor: spca_norm_decay}
	For all $U \in \calI_{mid}$ and $\tau \in \calM_U$, we have
	\[c(\tau) B_{norm}(\tau)S(\tau)R(\tau) \le 1\]
\end{corollary}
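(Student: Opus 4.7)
\begin{proofsketch}
The plan is to mimic the argument used for \cref{cor: tpca_norm_decay} in the Tensor PCA analysis, with two new twists arising from the Sparse PCA setting: there are two vertex types (rather than one), and the bound must also absorb the extra $R(\tau)$ factor and the double-factorial factor that appears on the right-hand side of \cref{lem: spca_charging}. As a first reduction, if some $i \in V(\tau) \setminus U_\tau \setminus V_\tau$ has odd $deg^{\tau}(i)$, then $S(\tau)=0$ and the inequality is trivial; so we may assume all such degrees are even.

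Next, since $\tau$ is a \emph{proper} middle shape we have $w(I_\tau) = 0$ and $w(S_{\tau,\min}) = w(U_\tau)$. Combined with $n = \max(d,m)$ and the definition of $w$ on index shape pieces of types $1$ and $2$, this gives the clean identity
\[
n^{\frac{w(V(\tau)) + w(I_\tau) - w(S_{\tau,\min})}{2}} \;=\; \sqrt{d}^{\,|\tau|_1 - |U_\tau|_1}\,\sqrt{m}^{\,|\tau|_2 - |U_\tau|_2},
\]
which is precisely the quantity bounded by \cref{lem: spca_charging} in terms of $S(\tau)$.

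Applying \cref{lem: spca_charging} then reduces the problem to showing that the ``overhead'' factors $c(\tau)$, $B_{norm}(\tau)$, $R(\tau)$, and the double-factorial product $\prod_{j \in V_2(\tau)\setminus U_\tau\setminus V_\tau}(deg^{\tau}(j)-1)!!$ can all be absorbed by the decay $d^{-A\eps \sum_e l_e}$ from \cref{lem: spca_charging}. For this I would use the properness of $\tau$ to get $|V(\tau)\setminus U_\tau| + |V(\tau)\setminus V_\tau| \le 4\sum_{e\in E(\tau)} l_e$ (each such vertex is incident to at least one labeled edge), and the fact that every $deg^{\tau}(j)$ is at most $D_V D_E$, so that $\prod_j (deg^{\tau}(j)-1)!! \le (D_VD_E)^{\sum_e l_e}$ and $R(\tau) \le (C_{disc}\sqrt{D_E})^{2\sum_e l_e}$. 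Plugging in the explicit forms of $c(\tau)$ and $B_{norm}(\tau)$ from \cref{generalmaintheorem}, and using $q = d^{O(1)\eps(C_V + C_E)}$ and $D_V,D_E = d^{O(\eps C_V+\eps C_E)}$, every overhead term becomes $d^{O(\eps(C_V+C_E))\cdot \sum_e l_e}$.

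Putting it all together yields
\[
c(\tau)B_{norm}(\tau)S(\tau)R(\tau) \;\le\; d^{\,O(\eps(C_V+C_E))\sum_e l_e} \cdot d^{-A\eps \sum_e l_e} \;\le\; 1,
\]
where the final inequality holds by choosing $C_V$ and $C_E$ sufficiently small compared to the absolute constant $A$ (which is the same constant fixed in \cref{thm: spca_main}). The step I expect to require the most care is the bookkeeping for the double-factorial/$R(\tau)$ factors, since these have no analogue in the Tensor PCA proof; the point is simply that they are bounded by $(D_VD_E)^{O(1)\sum_e l_e}$ and hence absorbed in the same way as the $B_{vertex}$ and $B_{edge}$ factors, as long as $C_V,C_E$ are taken small enough relative to $A$.
\end{proofsketch}
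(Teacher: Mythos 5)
Your proposal is correct and matches the paper's proof essentially step for step: reduce to the all-even-degree case, use properness of $\tau$ to rewrite the $n$-power as $\sqrt{d}^{|\tau|_1-|U_\tau|_1}\sqrt{m}^{|\tau|_2-|U_\tau|_2}$, invoke \cref{lem: spca_charging} for the $d^{-A\eps\sum_e l_e}$ decay, and absorb $c(\tau)$, $B_{norm}(\tau)$, $R(\tau)$, and the double-factorial product into $d^{O(\eps(C_V+C_E))\sum_e l_e}$ using $|V(\tau)\setminus U_\tau|+|V(\tau)\setminus V_\tau|\le 4\sum_e l_e$ and $(D_VD_E)$-bounds on the degrees. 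Your bound $R(\tau)\le(C_{disc}\sqrt{D_E})^{2\sum_e l_e}$ is slightly looser than necessary (the exponent can be taken to be $\sum_e l_e$ since the sum defining $R(\tau)$ is over the union $U'\cup V'$ of type-$2$ vertices and hence at most $\sum_{j\in V_2(\tau)}deg^{\tau}(j)=\sum_e l_e$), but this does not affect the argument.
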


\begin{proof}
	First, note that if $deg^{\tau}(i)$ is odd for any vertex $i \in V(\tau) \setminus U_{\tau} \setminus V_{\tau}$, then $S(\tau) = 0$ and the inequality is true. So, assume that $deg^{\tau}(i)$ is even for all $i \in V(\tau) \setminus U_{\tau} \setminus V_{\tau}$.
	Since $\tau$ is a proper middle shape, we have $w(I_{\tau}) = 0$ and $w(S_{\tau, min}) = w(U_{\tau})$. This implies
	$n^{\frac{w(V(\tau)) + w(I_{\tau}) - w(S_{\tau, min})}{2}} = \sqrt{d}^{|\tau|_1 - |U_{\tau}|_1}\sqrt{m}^{|\tau|_2 - |U_{\tau}|_2}$.
	As was observed in the proof of \cref{lem: spca_charging}, every vertex $i \in V(\tau) \setminus U_{\tau}$ or $i \in V(\tau) \setminus V_{\tau}$ has $deg^{\tau}(i) \ge 1$ and hence, $|V(\tau)\setminus U_{\tau}| + |V(\tau)\setminus V_{\tau}| \le 4 \sum_{e \in E(\tau)} l_e$. Also, $q = d^{O(1)\cdot \eps(C_V + C_E)}$. We can set $C_V, C_E$ sufficiently small so that
	\begin{align*}
	c(\tau)B_{norm}(\tau)S(\tau)R(\tau)	&=100(6D_V)^{|U_{\tau}\setminus V_{\tau}| + |V_{\tau}\setminus U_{\tau}| + 2|E(\tau)|}4^{|V(\tau)\setminus (U_{\tau}\cup V_{\tau})|}\\
	&\cdot 2e(6qD_V)^{|V(\tau)\setminus U_{\tau}| + |V(\tau)\setminus V_{\tau}|}\prod_{e \in E(\tau)} (400D_V^2D_E^2q)^{l_e}\\
	&\cdot \sqrt{d}^{|\tau|_1 - |U_{\tau}|_1}\sqrt{m}^{|\tau|_2 - |U_{\tau}|_2} S(\tau) (C_{disc}\sqrt{D_E})^{\sum_{j \in (U_{\tau})_2 \cup (V_{\tau})_2} deg^{\tau}(j)}\\
	&\le d^{O(1) \cdot (C_V + C_E) \cdot \eps\sum_{e \in E(\tau)} l_e}\cdot \prod_{j \in V_2(\tau) \setminus V_2(U_{\tau}) \setminus V_2(V_{\tau})} (deg^{\tau}(j) - 1)!!\cdot \frac{1}{d^{A\eps\sum_{e \in E(\tau)} l_e}}\\
	&\le d^{O(1) \cdot (C_V + C_E) \cdot \eps\sum_{e \in E(\tau)} l_e}\cdot (D_VD_E)^{\sum_{e \in E(\tau)} l_e}\cdot \frac{1}{d^{A\eps\sum_{e \in E(\tau)} l_e}}\\
	&\le 1
	\end{align*}
\end{proof}

We can now obtain our desired middle shape bounds.

\begin{lemma}\label{lem: spca_cond2}
    For all $U \in \calI_{mid}$ and $\tau \in \calM_U$,
    \[
    \begin{bmatrix}
        \frac{1}{|Aut(U)|c(\tau)}H_{Id_U} & B_{norm}(\tau) H_{\tau}\\
        B_{norm}(\tau) H_{\tau}^T & \frac{1}{|Aut(U)|c(\tau)}H_{Id_U}
    \end{bmatrix}
    \succeq 0
    \]
\end{lemma}

\begin{proof}
	We have
	\begin{align*}
		&\begin{bmatrix}
			\frac{1}{|Aut(U)|c(\tau)}H_{Id_U} & B_{norm}(\tau)H_{\tau}\\
			B_{norm}(\tau)H_{\tau}^T & \frac{1}{|Aut(U)|c(\tau)}H_{Id_U}
		\end{bmatrix}\\
		&\qquad= \begin{bmatrix}
			\left(\frac{1}{|Aut(U)|c(\tau)} - \frac{S(\tau)R(\tau)B_{norm}(\tau)}{|Aut(U)|}\right)H_{Id_U} & 0\\
			0 & \left(\frac{1}{|Aut(U)|c(\tau)} - \frac{S(\tau)R(\tau)B_{norm}(\tau)}{|Aut(U)|}\right)H_{Id_U}
		\end{bmatrix}\\
		&\qquad\quad+ B_{norm}(\tau)\begin{bmatrix}
			\frac{S(\tau)R(\tau)}{|Aut(U)|}H_{Id_U} & H_{\tau}\\
			H_{\tau}^T & \frac{S(\tau)R(\tau)}{|Aut(U)|}H_{Id_U}
		\end{bmatrix}
	\end{align*}
	By \cref{lem: spca_cond2_simplified}, $\begin{bmatrix}
		\frac{S(\tau)R(\tau)}{|Aut(U)|}H_{Id_U} & H_{\tau}\\
		H_{\tau}^T & \frac{S(\tau)R(\tau)}{|Aut(U)|}H_{Id_U}
	\end{bmatrix}
	\succeq 0$, so the second term above is positive semidefinite. For the first term, by \cref{lem: spca_cond1}, $H_{Id_U} \succeq 0$ and by \cref{cor: spca_norm_decay}, $\frac{1}{|Aut(U)|c(\tau)} - \frac{S(\tau)R(\tau)B_{norm}(\tau)}{|Aut(U)|} \ge 0$, which proves that the first term is also positive semidefinite.
\end{proof}

\subsection{\intersectionboundstwo}

\begin{lemma}\label{lem: spca_charging2}
	Suppose $0 < A < \frac{1}{4}$ is a constant such that $\frac{\sqrt{\lda}}{\sqrt{k}} \le d^{-A\eps}, \frac{1}{\sqrt{k}} \le d^{-2A}$ and $\frac{k}{d} \le d^{-A\eps}$. For all $m$ such that $m \le \frac{d^{1 - \eps}}{\lda^2}, m \le \frac{k^{2 - \eps}}{\lda^2}$, for all $U, V \in \calI_{mid}$ where $w(U) > w(V)$ and for all $\gam \in \Gamma_{U, V}$,
	\[n^{w(V(\gam)\setminus U_{\gam})} S(\gam)^2 \le \left(\prod_{j \in V_2(\gam) \setminus U_{\gam} \setminus V_{\gam}}(deg^{\gam}(j)- 1)!!\right)^2\frac{1}{d^{B\eps (|V(\gam) \setminus (U_{\gam} \cap V_{\gam})| + \sum_{e \in E(\gam)} l_e)}}\]
	for some constant $B > 0$ that depends only on $C_{\Del}$. In particular, it is independent of $C_V$ and $C_E$.
\end{lemma}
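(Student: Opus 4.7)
The plan is to mirror the structure of Lemma \ref{lem: spca_charging} for middle shapes and Lemma \ref{lem: tpca_charging2} for Tensor PCA, adapted to the Sparse PCA left-shape setting. First, I would dispose of the trivial case: if some $i \in V(\gam) \setminus U_{\gam} \setminus V_{\gam}$ has odd $deg^{\gam}(i)$ then $S(\gam) = 0$ and the inequality is vacuous, so assume henceforth that all such degrees are even. I would then factor $S(\gam)^2$ as
\[
S(\gam)^2 \;=\; \Bigl(\prod_{j \in V_2(\gam) \setminus U_{\gam} \setminus V_{\gam}} (deg^{\gam}(j) - 1)!!\Bigr)^2 \cdot \left(\frac{k}{d}\right)^{2r_1} \Delta^{2r_2} \prod_{e \in E(\gam)} \left(\frac{\lambda}{k}\right)^{l_e},
\]
where $r_t = |\gam|_t - (|U_{\gam}|_t + |V_{\gam}|_t)/2$ for $t \in \{1,2\}$, and multiply by $n^{w(V(\gam) \setminus U_{\gam})} = d^{|\gam|_1 - |U_{\gam}|_1} m^{|\gam|_2 - |U_{\gam}|_2} \le d^{2r_1} m^{2r_2}$ using $|V_{\gam}|_t \le |\gam|_t$.

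The central combinatorial step is the inequality $\sum_{e \in E(\gam)} l_e \ge 2\max(r_1,r_2)$. This is proved separately for each type by the same bipartite double-counting used in Lemma \ref{lem: spca_charging}: $\sum_e l_e = \sum_{i \in V_t(\gam)} deg^{\gam}(i)$ for each $t$. Each vertex $i \in V_t(\gam) \setminus (U_\gam \cup V_\gam)$ contributes at least $2$ (even and nonzero), each $i \in (V_\gam)_t \setminus U_\gam$ contributes at least $1$ (by the reachability condition in the definition of a left shape). The only subtlety, absent from the middle-shape argument, is handling $(U_\gam)_t \setminus V_\gam$: here I would let $S'_t$ be the subset of these vertices with $deg^{\gam} \ge 1$ and argue that $S'_t$ combined with $(U_\gam \cap V_\gam)_t$ and all $(V_\gam)_{3-t}$ forms a vertex separator of $\gam$ (any path from $U_\gam$ must leave through a positive-degree vertex), so by minimality of $V_\gam$ we get $|S'_t| \ge |(V_\gam)_t \setminus U_\gam|$, completing the counting inequality.

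Given $\sum_e l_e \ge 2\max(r_1,r_2)$, the rest is an algebraic tradeoff argument. I would split the product $(\sqrt{\lambda}/\sqrt{k})^{\sum l_e}$ into a ``decay-budget'' piece $(\sqrt{\lambda}/\sqrt{k})^{2\max(r_1,r_2)}$ that absorbs the $d^{2r_1} m^{2r_2} (k/d)^{2r_1}$ factor via the two assumptions $m \le d^{1-\eps}/\lambda^2$ and $m \le k^{2-\eps}/\lambda^2$, exactly as in Claim \ref{claim: spca_decay2} (splitting into the cases $r_1 \ge r_2$ and $r_1 < r_2$), and a leftover piece $(\sqrt{\lambda}/\sqrt{k})^{\sum l_e - 2\max(r_1,r_2)}$. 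The budget piece yields a factor $d^{-2A\eps\,\max(r_1,r_2)}$ and the leftover piece contributes $d^{-A\eps}$ per excess edge-label unit, by $\sqrt{\lambda/k} \le d^{-A\eps}$. Together these give $d^{-A\eps \sum_e l_e}$ times an extra vertex decay of $d^{-2A\eps\,\max(r_1,r_2)}$.

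Finally, I would convert the vertex decay into the form $d^{-B\eps \cdot |V(\gam) \setminus (U_\gam \cap V_\gam)|}$ required by the statement. Writing $|V(\gam) \setminus (U_\gam \cap V_\gam)| = |U_\gam \setminus V_\gam| + |V_\gam \setminus U_\gam| + 2|V(\gam) \setminus (U_\gam \cup V_\gam)|$ and observing $2\max(r_1,r_2) \ge r_1 + r_2 \ge \tfrac12(|U_\gam \setminus V_\gam| + |V_\gam \setminus U_\gam| + 2|V(\gam) \setminus (U_\gam \cup V_\gam)|)$ up to constants, the $d^{-2A\eps \max(r_1,r_2)}$ decay suffices, with a further contribution of $\Delta^{2r_2}$ (which equals $d^{-2C_\Delta \eps r_2}$) absorbing type-2 vertices if the $\lambda$-based decay is not enough. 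I expect the main obstacle to be the weighted-separator step in the counting inequality: in the bipartite, mixed-type setting, showing that $S'_t$ plus the intersection plus the opposite-type half of $V_\gam$ is genuinely a separator (and not just a weight-based heuristic) requires some care, and may need the argument to be run in a combined way for both types rather than coordinate-by-coordinate.
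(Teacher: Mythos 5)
Your proposal has the right high-level structure (even-degree dispensation, bipartite double counting, tradeoff via Claim~\ref{claim: spca_decay2}, and using $\Delta$-decay for type-2 vertices), but there are two concrete gaps that make the argument fail. First, the bound $n^{w(V(\gam)\setminus U_\gam)} \le d^{2r_1} m^{2r_2}$ with $r_t = |\gam|_t - (|U_\gam|_t + |V_\gam|_t)/2$ is far too lossy: it replaces $d^{|\gam|_1 - |U_\gam|_1} m^{|\gam|_2 - |U_\gam|_2}$ by something roughly its square, and the resulting expression $k^{2r_1}(m\Delta)^{2r_2}(\lambda/k)^{\sum l_e}$ does not match the exponent pattern $d^{r_1} m^{r_2}(k/d)^{2r_1}(\lambda/k)^r$ handled by Claim~\ref{claim: spca_decay2}. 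The paper instead writes $n^{w(\cdots)} = d^{p+g}m^{q+h}$ (with $p,q$ counting interior vertices and $g,h$ counting $V_\gam \setminus U_\gam$) and uses the \emph{weighted} separator inequality $d^e m^f \ge d^g m^h$ (where $e,f$ count the positive-degree vertices of $U_\gam \setminus V_\gam$, coming from the fact that $S' \cup (U_\gam \cap V_\gam)$ is a separator and $V_\gam$ is minimum-weight) to get the tight bound $d^{p+g}m^{q+h} \le d^{p+(e+g)/2} m^{q+(f+h)/2}$, whose exponents are exactly the ones that satisfy $\sum_e l_e \ge 2\max(\cdot)$ and fit the claim.

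Second, your combinatorial claim $\sum_e l_e \ge 2\max(r_1,r_2)$ with your choice of $r_t$ is simply false, because degree-$0$ vertices in $U_\gam \setminus V_\gam$ (which are permitted, since isolated vertices inside $U_\gam$ are allowed in a proper left shape) contribute to $r_t$ but contribute nothing to $\sum_e l_e$. For a concrete counterexample take $U_\gam = \{u_1,u_2,u_3\}$ all type~$1$, $V_\gam = \emptyset$, $u_1$ isolated, and $u_2,u_3$ each joined to a single type-$2$ vertex $v$ by label-$1$ edges; then $\sum_e l_e = 2$ but $2r_1 = 3$. The paper avoids this by defining its tradeoff exponents only from the positive-degree vertices, and then paying for each degree-$0$ type-$1$ vertex separately with a $(k/d)^{c_0} \le d^{-A\eps c_0}$ factor (this is the ``$c_0$'' bookkeeping near the end of the paper's proof). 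Finally, your per-type inequality $|S'_t| \ge |(V_\gam)_t \setminus U_\gam|$ is also false in general — the set $S'_t \cup (U_\gam\cap V_\gam)_t \cup (V_\gam)_{3-t}$ need not be a separator (a path from a type-$(3-t)$ vertex of $U_\gam\setminus V_\gam$ directly to a type-$t$ vertex of $V_\gam$ avoids it), and one can have, e.g., $U_\gam$ entirely type~$2$ and $V_\gam$ type~$1$ with $|S'_1|=0 < g_1$ while still satisfying $d^e m^f \ge d^g m^h$. The weighted inequality is what the argument actually needs, and you correctly flagged this step as suspect; but the other two issues would also need to be fixed for your plan to go through.
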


\begin{proof}
	Suppose there is a vertex $i \in V(\gam) \setminus U_{\gam} \setminus V_{\gam}$ such that $deg^{\gam}(i)$ is odd, then $S(\gam) = 0$ and the inequality is true. So, assume $deg^{\gam}(i)$ is even for all vertices $i \in V(\gam) \setminus U_{\gam} \setminus V_{\gam}$.
	We have $n^{w(V(\gam) \setminus U_{\gam})} = d^{|\gam|_1 - |U_{\gam}|_1}m^{|\gam|_2 - |U_{\gam}|_2}$. Plugging in $S(\gamma)$, we get that we have to prove
	\begin{align*}
		E := d^{|\gam|_1 - |U_{\gam}|_1}m^{|\gam|_2 - |U_{\gam}|_2} \left(\frac{k}{d}\right)^{2|\gamma|_1 - |U_{\gamma}|_1 - |V_{\gamma}|_1}\Del^{2|\gamma|_2 - |U_{\gamma}|_2 - |V_{\gamma}|_2} \prod_{e \in E(\gamma)} \frac{\lambda^{l_e}}{k^{l_e}} \le \frac{1}{d^{B\eps (|V(\gam) \setminus (U_{\gam} \cap V_{\gam})| + \sum_{e \in E(\gam)} l_e)}}
	\end{align*}

	Let $S'$ be the set of vertices $i \in U_{\gam} \setminus V_{\gam}$ that have $deg^{\gam}(i) \ge 1$. Let $e, f$ be the number of type $1$ vertices and the number of type $2$ vertices in $S'$ respectively. Observe that $S' \cup (U_{\gam} \cap V_{\gam})$ is a vertex separator of $\gam$.
	Let $g = |V_{\gam} \setminus U_{\gam}|_1$ (resp. $h = |V_{\gam} \setminus U_{\gam}|_2$) be the number of type $1$ vertices (resp. type $2$ vertices) in $V_{\gam} \setminus U_{\gam}$.
	We first claim that $d^em^f \ge d^gm^h$. To see this, note that the vertex separator $S' \cup (U_{\gam} \cap V_{\gam})$ has weight $\sqrt{d}^{e + |U_{\gam} \cap V_{\gam}|_1}\sqrt{m}^{f + |U_{\gam} \cap V_{\gam}|_2}$. On the other hand, $V_{\gam}$ has weight $\sqrt{d}^{g + |U_{\gam} \cap V_{\gam}|_1}\sqrt{m}^{h + |U_{\gam} \cap V_{\gam}|_2}$. Since $\gam$ is a left shape, $V_{\gam}$ is the unique minimum vertex separator and hence, $\sqrt{d}^{e + |U_{\gam} \cap V_{\gam}|_1}\sqrt{m}^{f + |U_{\gam} \cap V_{\gam}|_2} \ge \sqrt{d}^{g + |U_{\gam} \cap V_{\gam}|_1}\sqrt{m}^{h + |U_{\gam} \cap V_{\gam}|_2}$ which implies $d^em^f \ge d^gm^h$.
	Let $p = |V(\gam) \setminus (U_{\gam} \cup V_{\gam})|_1$ (resp. $q = |V(\gam) \setminus (U_{\gam} \cup V_{\gam})|_2$) be the number of type $1$ vertices (resp. type $2$ vertices) in $V(\gam) \setminus (U_{\gam} \cup V_{\gam})$.
	To illustrate the main idea, we will first prove the weaker inequality $E \le 1$. Since $\Del \le 1$, it suffices to prove
	\begin{align*}
		d^{|\gam|_1 - |U_{\gam}|_1}m^{|\gam|_2 - |U_{\gam}|_2} \left(\frac{k}{d}\right)^{2|\gamma|_1 - |U_{\gamma}|_1 - |V_{\gamma}|_1} \prod_{e \in E(\gamma)} \frac{\lambda^{l_e}}{k^{l_e}} \le 1
	\end{align*}
	We have
	$d^{|\gam|_1 - |U_{\gam}|_1}m^{|\gam|_2 - |U_{\gam}|_2} = d^{p + g}m^{q + h} \le n^{p + \frac{e + g}{2}}m^{q + \frac{f + h}{2}}$
	since $d^em^f \ge d^gm^h$. Also, $2|\gamma|_1 - |U_{\gamma}|_1 - |V_{\gamma}|_1 = 2p + e + g$. So, it suffices to prove
	\begin{align*}
		n^{p + \frac{e + g}{2}}m^{q + \frac{f + h}{2}}\left(\frac{k}{d}\right)^{2p + e + g} \prod_{e \in E(\gam)} \left(\frac{\lda}{k}\right)^{l_e} \le 1
	\end{align*}

	We will need the following claim.
	\begin{claim}
		$\sum_{e \in E(\gam)} l_e \ge \max(2p + e + g, 2q + f + h)$
	\end{claim}
	\begin{proof}
		Since $H_{\gam}$ is bipartite, we have $\sum_{e \in E(\gam)}l_e = \sum_{i \in V_1(\gam)} deg^{\gam}(i) = \sum_{i \in V_2(\gam)} deg^{\gam}(i)$. Observe that all vertices $i \in V(\gam) \setminus U_{\gam} \setminus V_{\gam}$ have $deg^{\gam}(i)$ nonzero and even, and hence, $deg^{\gam}(i) \ge 2$. Then,
	\begin{align*}
		\sum_{e \in E(\gam)}l_e &= \sum_{i \in V_1(\gam)} deg^{\gam}(i)\\
		&\ge \sum_{i \in V_1(\gam) \setminus U_{\gam} \setminus V_{\gam}} deg^{\gam}(i) + \sum_{i \in (U_{\gam})_1 \setminus V_{\gam}} deg^{\gam}(i) + \sum_{i \in (V_{\gam})_1 \setminus U_{\gam}} deg^{\gam}(i)\\
		&\ge 2p + e + g
	\end{align*}
	Similarly,
	\begin{align*}
	\sum_{e \in E(\gam)}l_e &= \sum_{i \in V_2(\gam)} deg^{\gam}(i)\\
	&\ge \sum_{i \in V_2(\gam) \setminus U_{\gam} \setminus V_{\gam}} deg^{\gam}(i) + \sum_{i \in (U_{\gam})_2 \setminus V_{\gam}} deg^{\gam}(i) + \sum_{i \in (V_{\gam})_2 \setminus U_{\gam}} deg^{\gam}(i)\\
	&\ge 2q + f + h
\end{align*}
Therefore, $\sum_{e \in E(\gam)} l_e \ge \max(2p + e + g, 2q + f + h)$.
\end{proof}

Now, let $r_1 = p + \frac{e + g}{2}, r_2 = q + \frac{f + h}{2}$. Then, $\sum_{e \in E(\gam)} l_e \ge 2\max(r_1, r_2)$ and we wish to prove
	$d^{r_1}m^{r_2} \left(\frac{k}{d}\right)^{2r_1} \left(\frac{\lda}{k}\right)^{2\max(r_1, r_2)} \le 1$
This expression simply follows by squaring \cref{claim: spca_decay}.

Now, to prove that $E \le \frac{1}{d^{B\eps (|V(\gam) \setminus (U_{\gam} \cap V_{\gam})| + \sum_{e \in E(\gam)} l_e)}}$, we mimic this argument while carefully keeping track of factors of $d^{\eps}$. Again, using $d^em^f \ge d^gm^h$, it suffices to prove that
\begin{align*}
	d^{p + \frac{e + g}{2}}m^{q + \frac{f + h}{2}} \left(\frac{k}{d}\right)^{2|\gamma|_1 - |U_{\gamma}|_1 - |V_{\gamma}|_1}\Del^{2|\gamma|_2 - |U_{\gamma}|_2 - |V_{\gamma}|_2} \prod_{e \in E(\gamma)} \frac{\lambda^{l_e}}{k^{l_e}} \le \frac{1}{d^{B\eps (|V(\gam) \setminus (U_{\gam} \cap V_{\gam})| + \sum_{e \in E(\gam)} l_e)}}
\end{align*}

The idea is that the $d^{B\eps}$ decay for the edges are obtained from the stronger assumption on $m$, namely $m \le \frac{d^{1 - \eps}}{\lda^2}, m \le \frac{k^{2 - \eps}}{\lda^2}$. And the $d^{B\eps}$ decay for the type $1$ vertices of $V(\gam) \setminus(U_{\gam} \cap V_{\gam})$ are obtained both from the stronger assumption on $m$ as well as the factors of $\frac{k}{d}$, the latter especially useful for the degree $0$ vertices. Finally, the $d^{B\eps}$ decay for the type $2$ vertices of $V(\gam) \setminus (U_{\gam} \cap V_{\gam})$ are obtained from the factors of $\Delta$.
Indeed, note that for a constant $B$ that depends on $C_{\Del}$, $\Del^{2|\gamma|_2 - |U_{\gamma}|_2 - |V_{\gamma}|_2} \le d^{-B\eps|V(\gam) \setminus (U_{\gam} \cap V_{\gam})|_2}$. So, we would be done if we prove
\begin{align*}
	d^{p + \frac{e + g}{2}}m^{q + \frac{f + h}{2}} \left(\frac{k}{d}\right)^{2|\gamma|_1 - |U_{\gamma}|_1 - |V_{\gamma}|_1}\left(\frac{\lambda}{k}\right)^{\sum_{e \in E(\gamma)} l_e} \le \frac{1}{d^{B\eps (|V(\gam) \setminus (U_{\gam} \cap V_{\gam})|_1 + \sum_{e \in E(\gam)} l_e)}}
\end{align*}

Let $c_0$ be the number of type $1$ vertices $i$ in $V(\gam) \setminus (U_{\gam} \cap V_{\gam})$ such that $deg^{\gam}(i) = 0$. Since they have degree $0$, they must be in $(U_{\gam})_1 \setminus V_{\gam}$. Also, we have $2|\gamma|_1 - |U_{\gamma}|_1 - |V_{\gamma}|_1 = 2p + e + g + c_0$ and hence, $\left(\frac{k}{d}\right)^{2|\gamma|_1 - |U_{\gamma}|_1 - |V_{\gamma}|_1} = \left(\frac{k}{d}\right)^{2p + e + g + c_0}$. For these degree $0$ vertices, we have that the factors of $\frac{k}{d} \le d^{-A\eps}$ offer a decay of $\frac{1}{d^{B\eps}}$. Therefore, it suffices to prove
\begin{align*}
	d^{p + \frac{e + g}{2}}m^{q + \frac{f + h}{2}} \left(\frac{k}{d}\right)^{2p + e + g}\left(\frac{\lambda}{k}\right)^{\sum_{e \in E(\gamma)} l_e} \le \frac{1}{d^{B\eps (p + q + e + f + g + h) + \sum_{e \in E(\gam)} l_e)}}
\end{align*}
for a constant $B > 0$. Observe that $p + q + e + f + g + h \le 2(\sum_{e \in E(\gam)} l_e)$. Therefore, using the notation $r_1 = p + \frac{e + g}{2}, r_2 = q + \frac{f + h}{2}$, it suffices to prove
\begin{align*}
	d^{r_1}m^{r_2} \left(\frac{k}{d}\right)^{2r_1}\left(\frac{\lambda}{k}\right)^{\sum_{e \in E(\gamma)} l_e} \le \frac{1}{d^{B\eps \sum_{e \in E(\gam)} l_e}}
\end{align*}
for a constant $B > 0$. But this follows by squaring \cref{claim: spca_decay2} where we set $r = \sum_{e \in E(\gam)} l_e$.
\end{proof}

\begin{remk}
	In the above bounds, note that there is a decay of $d^{B\eps}$ for each vertex in $V(\gam) \setminus (U_{\gam} \cap V_{\gam})$.	One of the main technical reasons for introducing the slack parameter $C_{\Del}$ in the planted distribution was to introduce this decay, which is needed in the current machinery.
\end{remk}

With this, we obtain intersection term bounds.

\begin{lemma}\label{lem: spca_cond3}
    For all $U, V \in \calI_{mid}$ where $w(U) > w(V)$ and all $\gam \in \Gam_{U, V}$, \[c(\gam)^2N(\gam)^2B(\gam)^2H_{Id_V}^{-\gam, \gam} \preceq H_{\gam}'\]
\end{lemma}

\begin{proof}
	By \cref{lem: spca_cond3_simplified}, we have
	\begin{align*}
		c(\gam)^2N(\gam)^2B(\gam)^2H_{Id_V}^{-\gam, \gam} &\preceq c(\gam)^2N(\gam)^2B(\gam)^2 S(\gam)^2R(\gam)^2 \frac{|Aut(U)|}{|Aut(V)|} H'_{\gam}
	\end{align*}
	Using the same proof as in \cref{lem: spca_cond1}, we can see that $H'_{\gam} \succeq 0$. Therefore, it suffices to prove that $c(\gam)^2N(\gam)^2B(\gam)^2 S(\gam)^2R(\gam)^2 \frac{|Aut(U)|}{|Aut(V)|} \le 1$.
	Since $U, V \in \calI_{mid}$, $Aut(U) = |U|_1!|U|_2!, Aut(V) = |V|_1!|V|_2!$. Therefore, $\frac{|Aut(U)|}{|Aut(V)|} = \frac{|U|_1!|U|_2!}{|V|_1!|V|_2!} \le D_V^{|U_{\gam} \setminus V_{\gam}|}$. Also, $|E(\gam)| \le \sum_{e \in E(\gam)} l_e$ and $q = d^{O(1) \cdot \eps (C_V + C_E)}$. Note that $R(\gam)^2 = (C_{disc}\sqrt{D_E})^{2\sum_{j \in (U_{\gam})_2 \cup (V_{\gam})_2} deg^{\gam}(j)} \le d^{O(1)\cdot \eps C_E \cdot \sum_{e \in E(\gam)} l_e}$ and $\left(\prod_{j \in V_2(\gam) \setminus U_{\gam} \setminus V_{\gam}}(deg^{\gam}(j)- 1)!!\right)^2 \le (D_VD_E)^{2\sum_{e \in E(\tau)} l_e} \le d^{O(1)\cdot \eps (C_V + C_E) \cdot \sum_{e \in E(\gam)} l_e}$.

	Let $B$ be the constant from \cref{lem: spca_charging2}. We can set $C_V, C_E$ sufficiently small so that, using \cref{lem: spca_charging2},
	\begin{align*}
		c(\gam)^2N(\gam)^2B(\gam)^2&S(\gam)^2R(\gam)^2 \frac{|Aut(U)|}{|Aut(V)|} \\
		&\le 100^2 (6D_V)^{2|U_{\gam}\setminus V_{\gam}| + 2|V_{\gam}\setminus U_{\gam}| + |E(\al)|}16^{|V(\gam) \setminus (U_{\gam} \cup V_{\gam})|}\\
		&\quad\cdot (3D_V)^{4|V(\gam)\setminus V_{\gam}| + 2|V(\gam)\setminus U_{\gam}|} (6qD_V)^{2|V(\gam)\setminus U_{\gam}| + 2|V(\gam)\setminus V_{\gam}|} \prod_{e \in E(\gam)} (400D_V^2D_E^2q)^{2l_e}\\
		&\quad\cdot  n^{w(V(\gam)\setminus U_{\gam})} S(\gam)^2 d^{O(1)\cdot \eps C_E \cdot \sum_{e \in E(\gam)} l_e}\cdot D_V^{|U_\gam \setminus V_{\gam}|} \\
		&\le d^{O(1) \cdot \eps(C_V + C_E) \cdot (|V(\gam) \setminus (U_{\gam} \cap V_{\gam})| + \sum_{e \in E(\gam)} l_e)} \cdot n^{w(V(\gam)\setminus U_{\gam})} S(\gam)^2\\
		&\le d^{O(1) \cdot \eps(C_V + C_E) \cdot (|V(\gam) \setminus (U_{\gam} \cap V_{\gam})| + \sum_{e \in E(\gam)} l_e)}\cdot \frac{1}{d^{B\eps (|V(\gam) \setminus (U_{\gam} \cap V_{\gam})| + \sum_{e \in E(\gam)} l_e)}}\\
		&\le 1
	\end{align*}
\end{proof}

\subsection{\truncationboundstwo}

In this section, we will obtain truncation error bounds using the strategy sketched in \cref{sec: showing_positivity}. We also reuse the notation. To start with, we obtain a bound on $B_{norm}(\sig) B_{norm}(\sig') H_{Id_U}(\sig, \sig')$.

\begin{lemma}\label{lem: spca_charging3}
	Suppose $0 < A < \frac{1}{4}$ is a constant such that $\frac{\sqrt{\lda}}{\sqrt{k}} \le d^{-A\eps}$ and $\frac{1}{\sqrt{k}} \le d^{-2A}$. Suppose $m$ is such that $m \le \frac{d^{1 - \eps}}{\lda^2}, m \le \frac{k^{2 - \eps}}{\lda^2}$. For all $U \in \calI_{mid}$ and $\sig, \sig' \in \calL_U$,
	\[B_{norm}(\sig) B_{norm}(\sig') H_{Id_U}(\sig, \sig') \le \frac{1}{d^{0.5A\eps(|V(\sig \circ \sig')| + \sum_{e \in E(\al) l_e}}} \cdot \frac{1}{d^{|U_{\sig}|_1 + |U_{\sig'}|_1}m^{|U_{\sig'}|_2 + |U_{\sig'}|_2}}\]
\end{lemma}

\begin{proof}
	Suppose there is a vertex $i \in V(\sig) \setminus V_{\sig}$ such that $deg^{\sig}(i) + deg^{U_{\sig}}(i)$ is odd, then $H_{Id_U}(\sig, \sig') = 0$ and the inequality is true. So, assume that $deg^{\sig}(i) + deg^{U_{\sig}}(i)$ is even for all $i \in V(\sig) \setminus V_{\sig}$. Similarly, assume that $deg^{\sig'}(i) + deg^{U_{\sig'}}(i)$ is even for all $i \in V(\sig') \setminus V_{\sig'}$. Also, if $\rho_{\sig} \neq \rho_{\sig'}$, we will have $H_{Id_U}(\sig, \sig') = 0$ and we would be done. So, assume $\rho_{\sig} = \rho_{\sig'}$.

	Let there be $e$ (resp. $f$) vertices of type $1$ (resp. type $2$) in $V(\sig) \setminus U_{\sig} \setminus V_{\sig}$. Then, $n^{\frac{w(V(\sig)) - w(U)}{2}} = \sqrt{d}^{|V(\sig)|_1 - |U|_1}\sqrt{m}^{|V(\sig)|_2 - |U|_2} = \sqrt{d}^{|U_{\sig}|_1}\sqrt{m}^{|U_{\sig}|_2} \sqrt{d}^e\sqrt{m}^f$. Let there be $g$ (resp. $h$) vertices of type $1$ (resp. type $2$) in $V(\sig') \setminus U_{\sig'} \setminus V_{\sig'}$. Then, similarly, $n^{\frac{w(V(\sig')) - w(U)}{2}} \le \sqrt{d}^{|U_{\sig'}|_1}\sqrt{m}^{|U_{\sig'}|_2}\sqrt{d}^g\sqrt{m}^h$.

	Let $\al = \sig \circ \sig'$. Since all vertices in $V(\al) \setminus U_{\al} \setminus V_{\al}$ have degree at least $2$, we have $\sum_{e \in E(\al)} l_e \ge \sum_{i \in V_1(\al) \setminus U_{\al} \setminus V_{\al}} deg^{\al}(i) \ge 2(e + g) + |U_{\sig}|_1 + |U_{\sig}|_2$. Similarly, $\sum_{e \in E(\al)} l_e \ge 2(f + h) + |U_{\sig'}|_1 + |U_{\sig'}|_2$. Therefore, by setting $r_1 = e + g, r_2 = f + h$ in \cref{claim: spca_decay2}, we have
	\[\sqrt{d}^{e + g}\sqrt{m}^{f + h} \left(\frac{k}{d}\right)^{e + g}\prod_{e \in E(\alpha)} \frac{\sqrt{\lambda}^{l_e}}{\sqrt{k}^{l_e}} \le \frac{1}{d^{A\eps \sum_{e \in E(\al)} l_e}}\]
	Also, $\left(\frac{k}{d}\right)^{|\al|_1} \le \left(\frac{k}{d}\right)^{e + g + |U_{\sig}|_1 + |U_{\sig'}|_1}$ and $\prod_{j \in V_2(\alpha)} (deg^{\alpha}(j) - 1)!! \le d^{\eps C_V \sum_{e \in E(\al)} l_e}$. Therefore,
    {\footnotesize
	\begin{align*}
		n^{\frac{w(V(\sig)) - w(U)}{2}}&n^{\frac{w(V(\sig')) - w(U)}{2}}H_{Id_U}(\sig, \sig')\\
		&\le d^{O(1)D_{sos}}\sqrt{d}^e\sqrt{m}^f d^{O(1)D_{sos}}\sqrt{d}^g\sqrt{m}^h \cdot\frac{1}{|Aut(U)|}\left(\frac{1}{\sqrt{k}}\right)^{deg(\alpha)}\left(\frac{k}{d}\right)^{|\alpha|_1}\Del^{|\alpha|_2} \prod_{j \in V_2(\alpha)} (deg^{\alpha}(j) - 1)!!\prod_{e \in E(\alpha)} \frac{\sqrt{\lambda}^{l_e}}{\sqrt{k}^{l_e}}\\
		&\le d^{O(1)D_{sos}} d^{\eps C_V \sum_{e \in E(\al)} l_e} \sqrt{d}^{e + g}\sqrt{m}^{f + h} \left(\frac{k}{d}\right)^{e + g}\prod_{e \in E(\alpha)} \frac{\sqrt{\lambda}^{l_e}}{\sqrt{k}^{l_e}} \cdot \frac{1}{d^{|U_{\sig}|_1 + |U_{\sig'}|_1}m^{|U_{\sig'}|_2 + |U_{\sig'}|_2}}\\
		&\le \frac{d^{\eps C_V \sum_{e \in E(\al)} l_e}}{d^{A\eps \sum_{e \in E(\al)} l_e}} \cdot \frac{1}{d^{|U_{\sig}|_1 + |U_{\sig'}|_1}m^{|U_{\sig'}|_2 + |U_{\sig'}|_2}}
	\end{align*}
}
	By setting $C_V, C_E$ sufficiently small and plugging in the expressions for $B_{norm}(\sig), B_{norm}(\sig')$, we obtain the result.
\end{proof}

We can apply the the strategy now.

\begin{restatable}{lemma}{SPCAfive}\label{lem: spca_cond5}
	Whenever $\norm{M_{\al}} \le B_{norm}(\al)$ for all $\al \in \calM'$,
	\[
	\sum_{U \in \mathcal{I}_{mid}}{M^{fact}_{Id_U}{(H_{Id_U})}} \succeq \frac{1}{d^{K_1D_{sos}^2}} Id_{sym}
	\]
	for a constant $K_1 > 0$ that can depend on $C_{\Del}$.
\end{restatable}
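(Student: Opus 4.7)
The proof will follow the strategy of \cref{subsec: positivity_lower_bound_strategy}, using the non-multilinear-matrix-index generalization described at the end of that subsection, and will closely mirror the tensor PCA proof of \cref{lem: tpca_cond5}. As a first step, for each $V \in \calI_{mid}$, applying \cref{def: spca_coeffs} to the trivial shape $Id_V$ (and modifying $\lambda_V$ as described in \cref{subsec: positivity_lower_bound_strategy} to handle higher-power indices) yields $\lambda_V \geq \frac{1}{d^{|V|_1} m^{|V|_2}} \geq \frac{1}{d^{D_{sos}}}$.

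The main technical step is a charging lemma (the sparse PCA analogue of \cref{lem: tpca_charging3}): for every $U \in \calI_{mid}$ and every $\sig, \sig' \in \calL_U$,
\[
B_{norm}(\sig)\, B_{norm}(\sig')\, H_{Id_U}(\sig, \sig') \;\leq\; \frac{1}{d^{c \eps C_{\Del} |V(\sig \circ \sig'^T)|}\, \Delta^{D_{sos}}\, d^{|U|_1} m^{|U|_2}}
\]
for some absolute constant $c > 0$. The plan is to adapt \cref{lem: tpca_charging3}: first, we discard cases where the parity conditions in \cref{def: spca_coeffs} fail; then, using that $U$ appears in both endpoints of $\sig \circ \sig'^T$ and is the unique minimum vertex separator of each of $\sig, \sig'$, we derive a bipartite minimum-separator inequality analogous to the one in \cref{lem: tpca_charging3}, giving lower bounds on $\sum_{e \in E(\sig \circ \sig'^T)} l_e$ in terms of type-1 and type-2 vertex deficits separately; next, we invoke the conditions $m \leq d^{1-\eps}/\lda^2$ and $m \leq k^{2-\eps}/\lda^2$ exactly as in \cref{claim: spca_decay2} to extract a $d^{\Omega(\eps)}$ factor per edge; finally, the $\Delta^{|V(\sig \circ \sig'^T)|}$ factor in \cref{def: spca_coeffs} supplies $d^{\Omega(C_{\Del}\eps)}$ per vertex, absorbing the polynomial slack ($B_{norm}$ prefactors, $(\mathrm{deg}-1)!!$ terms from \cref{def: spca_coeffs}, etc.) by choosing $C_V, C_E$ small enough relative to $C_{\Del}$.

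Given the charging lemma, the remainder is mechanical. Summing the bound over $\sig' \in \calL'_V$ with $U_{\sig'} \neq V$ and applying \cref{lem: gp_sum} (after absorbing prefactors into $C_V, C_E$) gives, for each edge $e = (V, U)$ in the graph $G$ of \cref{subsec: positivity_lower_bound_strategy}, a bound $w(e) = 2W(U,V)/\lambda_V \leq \frac{1}{d^{c' \eps C_{\Del} |U|}\,\Delta^{2 D_{sos}}}$. Since $G$ has at most $D_{sos}+1$ vertices and paths of length at most $D_{sos}$, choosing $C_{sos}$ sufficiently small in terms of $C_{\Del}$ ensures the path sum is at most $\frac{1}{2D_{sos}\,\Delta^{2 D_{sos}^2}}$. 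The path-summation identity from \cref{subsec: positivity_lower_bound_strategy}, combined with $\lambda_U \geq d^{-D_{sos}}$, then yields
\[
Id_{Sym} \;\preceq\; \sum_{U \in \calI_{mid}} \frac{d^{D_{sos}}}{\Delta^{2 D_{sos}^2}}\, M^{fact}(H_{Id_U}),
\]
which rearranges to the claimed inequality with $K_1$ chosen to absorb both $d^{D_{sos}}$ and $\Delta^{-2 D_{sos}^2} = d^{2 C_{\Del}\eps D_{sos}^2}$ into $d^{K_1 D_{sos}^2}$.

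The main obstacle is the charging lemma itself. Unlike in tensor PCA, the bipartite two-type structure of sparse PCA forces separate bookkeeping for type-1 and type-2 vertices when matching the dimension product $d^{|U|_1} m^{|U|_2}$ against the $\sqrt{d}$- and $\sqrt{m}$-weights in $B_{norm}$, and we must also track the $(\mathrm{deg}^{\sig \circ \sig'^T}(j) - 1)!!$ factors from \cref{def: spca_coeffs} which do not appear in the tensor PCA setting. Once this bookkeeping is set up correctly, the argument is a careful merge of the techniques used in \cref{claim: spca_decay2}, \cref{lem: spca_charging2}, and \cref{lem: tpca_charging3}.
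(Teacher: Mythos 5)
Your proposal follows the paper's proof strategy exactly: the same $\lambda_V \geq d^{-O(D_{sos})}$ estimate, the same sparse-PCA charging lemma (the paper's \cref{lem: spca_charging3}, which indeed mirrors \cref{lem: tpca_charging3} and \cref{claim: spca_decay2} as you describe), the same edge-weight bound $w(e) \leq d^{O(D_{sos})}/(d^{\Omega(\eps C_\Delta)|U|}\Delta^{2D_{sos}})$ and path-sum argument in the graph $G$, and the same absorption of $\Delta^{-2D_{sos}^2} = d^{2C_\Delta\eps D_{sos}^2}$ into $d^{K_1 D_{sos}^2}$. One small slip in your first step: since $\lambda_V = \Delta^{|V|_2}/(d^{|V|_1}k^{|V|_2})$, the intermediate claim $\lambda_V \geq 1/(d^{|V|_1}m^{|V|_2})$ would require $\Delta m \geq k$, which the hypotheses do not guarantee; however the bound you actually use, $\lambda_V \geq d^{-O(D_{sos})}$, follows directly from $\Delta = d^{-C_\Delta\eps}$ and $k \leq d$, so nothing downstream is affected.
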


\begin{proof}
    We will use \cref{thm: main_positivity}.
    For $V \in \calI_{mid}$, $\lda_V = \frac{\Del^{|V|_2}}{d^{|V|_1}k^{|V|_2}}$. Let the minimum value of this quantity over all $V$ be $N$. We then choose $w_V = N / \lda_V$ so that for all left shapes $\sig \in \calL_V$, \cref{lem: spca_charging3} implies $w_{V} \leq \frac{w_{U_{\sigma}}\lambda_{U_{\sigma}}}{|\mathcal{I}_{mid}|B_{norm}(\sigma)^2{c(\sigma)^2}{H_{Id_V}(\sigma,\sigma)}}$, completing the proof.
\end{proof}

\begin{restatable}{lemma}{SPCAsix}\label{lem: spca_cond6}
	\[\sum_{U\in \calI_{mid}} \sum_{\gam \in \Gam_{U, *}} \frac{d_{Id_{U}}(H_{Id_{U}}, H'_{\gam})}{|Aut(U)|c(\gam)} \le \frac{d^{K_2 D_{sos}}}{2^{D_V}}\]
	for a constant $K_2 > 0$ that can depend on $C_{\Del}$.
\end{restatable}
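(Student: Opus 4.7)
The proof will follow the same template as \cref{lem: tpca_cond6} in the Tensor PCA section, with the main new technical ingredient being an analog of \cref{lem: tpca_charging3} tailored to the Sparse PCA coefficients. Specifically, my first step is to establish a bound of the form
\[
B_{norm}(\sig) B_{norm}(\sig') H_{Id_U}(\sig,\sig') \;\le\; \frac{1}{d^{F\eps \,|V(\sig \circ \sig')|}\,\Del^{D_{sos}}}
\]
for $\sig,\sig' \in \calL_U$ and some constant $F > 0$ depending only on $C_{\Del}$ and $A$ (and crucially independent of $C_V,C_E$). To prove this, I would run the same style of charging argument used in \cref{lem: spca_charging2}: set $\al = \sig \circ \sig'$, observe that nonzero $H_{Id_U}(\sig,\sig')$ forces parities to match so every $i \in V(\al) \setminus (U_\al \cup V_\al)$ has $deg^\al(i)$ even and hence $\ge 2$, every $i \in U_\al \setminus V_\al$ and $i \in V_\al \setminus U_\al$ has $deg^\al(i) \ge 1$ except for the degree-$0$ ``type $1$'' vertices in $U_\al \setminus V_\al$ which contribute separate factors of $k/d \le d^{-A\eps}$, and finally use $\Del^{|\al|_2} \le d^{-C_{\Del}\eps |\al|_2}$ for the type-$2$ vertex decay. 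The case analysis on whether the number of un-matched type-$1$ vertices dominates the un-matched type-$2$ vertices mirrors the two cases in \cref{claim: spca_decay2}, using $m \le d^{1-\eps}/\lda^2$ in one case and $m \le k^{2-\eps}/\lda^2$ in the other, with the $\Del^{D_{sos}}/d^{O(D_{sos})}$ slack absorbing the contribution of $\Del^{|V(\al)|}$ and $(1/\sqrt{\Del n})^{deg(\al)}$ factors that do not scale in $|V(\al)|$.

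Given this bound, my second step is to expand $d_{Id_U}(H_{Id_U},H'_\gam)$ using the canonical choice of $H'_\gam$ from \cref{sec: choosing_hgamma}. Then
\[
\sum_{U,\gam} \frac{d_{Id_U}(H_{Id_U},H'_\gam)}{|Aut(U)|c(\gam)} \;=\; \sum_{U} \sum_{\sig,\sig' \in \calL'_U} B_{norm}(\sig)B_{norm}(\sig') H_{Id_U}(\sig,\sig') \sum_{\substack{\gam \in \Gam_{U,*} \\ |V(\sig\circ\gam)| > D_V \text{ or } |V(\sig'\circ\gam)| > D_V}} \frac{1}{|Aut(U)|c(\gam)}.
\]
Setting $m_\sig = D_V + 1 - |V(\sig)|$, the inner sum is bounded by $1/2^{\min(m_\sig,m_{\sig'}) - 1}$ via the Sparse PCA analog of \cref{lem: tpca_refined_bound} (which requires a trivial modification of the counting argument in \cref{calphalemma} to account for two vertex types and edges of arity $k=2$; the proof is identical after substituting $t_{max} = 2$ and $k=2$).

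My third step is the standard splitting: since $|V(\sig\circ\sig')| + \min(m_\sig,m_{\sig'}) - 1 \ge D_V$, I write $d^{F\eps|V(\sig\circ\sig')|} = d^{(F/2)\eps|V(\sig\circ\sig')|} \cdot d^{(F/2)\eps|V(\sig\circ\sig')|}$ and use the second factor together with $d^{(F/2)\eps} \ge 2$ (which follows by choosing $C_{sos},C_V$ small relative to $C_{\Del}$) to extract a $2^{-|V(\sig\circ\sig')|}$. Combining with the $1/2^{\min(m_\sig,m_{\sig'})-1}$ factor produces the needed $1/2^{D_V}$. The remaining sum $\sum_{U} \sum_{\sig,\sig'} 1/d^{(F/2)\eps|V(\sig\circ\sig')|}$ is bounded by a constant using \cref{lem: gp_sum} after choosing $C_V,C_E$ sufficiently small relative to $F$, while the $\Del^{-D_{sos}}$ factor is absorbed into $d^{K_2 D_{sos}}$ by choosing $K_2$ large in terms of $C_{\Del}$.

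The main obstacle is the first step: establishing the per-vertex decay for $B_{norm}(\sig)B_{norm}(\sig')H_{Id_U}(\sig,\sig')$. This is harder than the Tensor PCA analog because Sparse PCA has two vertex types with different weights and the balance between them depends on the parameter regime ($m$ vs $d$ vs $k$). The required inequalities are precisely the squared form of the inequalities already established in \cref{lem: spca_charging2}, so the same two-case split (whether type-$1$ or type-$2$ unmatched vertices dominate) handles both, but care is needed to verify that the extra $\Del^{D_{sos}}$ slack in the denominator suffices to absorb the $\Del^{|V(\al)|}$, $(1/\sqrt{\Del n})^{deg(\al)}$, and $(deg^\al(j)-1)!!$ factors arising from \cref{def: spca_coeffs} that are not captured by the pure vertex/edge decay.
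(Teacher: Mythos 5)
Your overall strategy is exactly the paper's: first establish a per-vertex decay bound on $B_{norm}(\sigma)B_{norm}(\sigma')H_{Id_U}(\sigma,\sigma')$ (the Sparse PCA analog of \cref{lem: tpca_charging3}), then expand $d_{Id_U}(H_{Id_U},H'_\gamma)$ with the canonical $H'_\gamma$, apply \cref{lem: tpca_refined_bound} to the inner $\gamma$-sum (which indeed applies verbatim; no modification is needed since it is already stated and proved for general $t_{max}$ and arity $k$), and finish by extracting $2^{-D_V}$ and invoking \cref{lem: gp_sum}. The paper does precisely this in \cref{lem: spca_charging3} followed by the proof of \cref{lem: spca_cond6}.

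However, there is a concrete gap in your first step. You assert the intermediate bound
$B_{norm}(\sigma)B_{norm}(\sigma')H_{Id_U}(\sigma,\sigma') \le \frac{1}{d^{F\epsilon|V(\sigma\circ\sigma')|}\Delta^{D_{sos}}}$, i.e., with numerator $\Delta^{-D_{sos}} = d^{C_\Delta\epsilon D_{sos}}$, and you justify the $\Delta^{D_{sos}}$ slack by claiming it absorbs ``$\Delta^{|V(\alpha)|}$, $(1/\sqrt{\Delta n})^{\deg(\alpha)}$, and $(\deg^\alpha(j)-1)!!$ factors arising from \cref{def: spca_coeffs}.'' But the first two factors you name come from the Tensor PCA coefficients (\cref{def: tpca_coeffs}), not from \cref{def: spca_coeffs}; the actual Sparse PCA coefficient has $\Delta^{|\alpha|_2}$ (type-$2$ vertices only, not all of $V(\alpha)$) and $(1/\sqrt{k})^{\deg(\alpha)}$ (not $(1/\sqrt{\Delta n})^{\deg(\alpha)}$). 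This matters: in TPCA, the $(1/\sqrt{\Delta n})^{\deg(\alpha)}$ factor is what cancels the $\sqrt{n}^{|U_\sigma|}\sqrt{n}^{|U_{\sigma'}|}$ contribution from $B_{norm}(\sigma)B_{norm}(\sigma')$ and leaves a clean $n^{-|U|}$ in \cref{lem: tpca_charging3}. In Sparse PCA, the corresponding factor from $B_{norm}$ is $\sqrt{d}^{|U_\sigma\setminus V_\sigma|}\sqrt{m}^{\,\cdot}$ (plus the analog for $\sigma'$), and the only available counterweights are $(1/\sqrt{k})^{\deg(\alpha)}(k/d)^{|\alpha|_1}$, which do not fully cancel it in general (e.g.\ when $k\ll d$). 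The paper therefore simply absorbs these contributions into a $d^{O(1)D_{sos}}$ slack, and \cref{lem: spca_charging3} has numerator $d^{O(1)D_{sos}}$, which is genuinely larger than your claimed $d^{C_\Delta\epsilon D_{sos}}$. Your stated intermediate bound would likely fail for $\sigma,\sigma'$ with large, disjoint $U_\sigma,U_{\sigma'}$. The fix is easy — replace the numerator by $d^{O(1)D_{sos}}$ — and it does not affect the downstream conclusion since the lemma's target only requires a $d^{K_2D_{sos}}$ numerator; but as written, the intermediate step and the reasoning you offer for it (importing the TPCA coefficient structure) would not go through.
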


\begin{proof}
    We do the same calculations as in the proof of \cref{lem: plds_cond6}, until
	\begin{align*}
		\sum_{U\in \calI_{mid}} \sum_{\gam \in \Gam_{U, *}} \frac{d_{Id_{U}}(H_{Id_{U}}, H'_{\gam})}{|Aut(U)|c(\gam)} &\le \sum_{U\in \calI_{mid}} \sum_{\sigma,\sigma' \in \mathcal{L}'_{U}} {B_{norm}(\sigma)B_{norm}(\sigma')H_{Id_{U}}(\sigma,\sigma')\frac{1}{2^{\min(m_{\sig}, m_{\sig'}) - 1}}}\\
		&\le \sum_{U\in \calI_{mid}} \sum_{\sigma,\sigma' \in \mathcal{L}'_{U}}\frac{d^{O(1) D_{sos}}}{d^{0.5A\eps|V(\sig \circ \sig')|}2^{\min(m_{\sig}, m_{\sig'}) - 1}}
	\end{align*}
	where we used \cref{lem: spca_charging3}. Using $d^{0.5A\eps |V(\sig \circ \sig')|} \ge d^{0.1A\eps |V(\sig \circ \sig')|}2^{|V(\sig \circ \sig')|}$,
	\begin{align*}
		\sum_{U\in \calI_{mid}} \sum_{\gam \in \Gam_{U, *}} \frac{d_{Id_{U}}(H_{Id_{U}}, H'_{\gam})}{|Aut(U)|c(\gam)} &\le \sum_{U\in \calI_{mid}} \sum_{\sigma,\sigma' \in \mathcal{L}'_{U}}\frac{d^{O(1) D_{sos}}}{d^{0.1A\eps|V(\sig \circ \sig')|}  2^{|V(\sig \circ \sig')|}2^{\min(m_{\sig}, m_{\sig'}) - 1}}\\
		&\le \sum_{U\in \calI_{mid}} \sum_{\sigma,\sigma' \in \mathcal{L}'_{U}}\frac{d^{O(1) D_{sos}}}{d^{0.1A\eps|V(\sig \circ \sig')|} 2^{D_V}}\\
		&\le \sum_{U\in \calI_{mid}} \sum_{\sigma,\sigma' \in \mathcal{L}'_{U}}\frac{d^{O(1) D_{sos}}}{D_{sos}^{D_{sos}}d^{0.1A\eps|V(\sig \circ \sig')|} 2^{D_V}}
	\end{align*}
	The final step will be to argue that $\sum_{U\in \calI_{mid}} \sum_{\sigma,\sigma' \in \mathcal{L}'_{U}}\frac{1}{D_{sos}^{D_{sos}}d^{0.1 A\eps|V(\sig \circ \sig')|}} \le 1$ which will complete the proof. But this will follow from \cref{lem: gp_sum} if we set $C_V, C_E$ small enough.
\end{proof}

We can finally show that truncation errors can be handled.

\begin{restatable}{lemma}{SPCAfour}\label{lem: spca_cond4}
    Whenever $\norm{M_{\alpha}} \le B_{norm}(\alpha)$ for all $\alpha \in \mathcal{M}'$,
    \[
    \sum_{U \in \mathcal{I}_{mid}}{M^{fact}_{Id_U}{(H_{Id_U})}} \succeq 6\left(\sum_{U \in \mathcal{I}_{mid}}{\sum_{\gamma \in \Gamma_{U,*}}{\frac{d_{Id_{U}}(H'_{\gamma},H_{Id_{U}})}{|Aut(U)|c(\gamma)}}}\right)Id_{sym}
    \]
\end{restatable}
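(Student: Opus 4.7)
The proof will follow the same two-step template already executed for planted slightly denser subgraph (Lemma~\ref{lem: plds_cond4}) and tensor PCA (Lemma~\ref{lem: tpca_cond4}): prove a spectral lower bound on $\sum_U M^{fact}_{Id_U}(H_{Id_U})$, prove a scalar upper bound on $\sum_U\sum_\gamma d_{Id_U}(H'_\gamma,H_{Id_U})/(|Aut(U)|c(\gamma))$, and combine the two by choosing $C_{sos}$ sufficiently small. The entire argument is driven by a single charging lemma that controls $B_{norm}(\sigma)B_{norm}(\sigma')H_{Id_U}(\sigma,\sigma')$, which is the Sparse PCA analogue of Lemma~\ref{lem: plds_charging3} and Lemma~\ref{lem: tpca_charging3}. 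Specifically, the plan is to establish a bound of the shape
\[
B_{norm}(\sigma)B_{norm}(\sigma')H_{Id_U}(\sigma,\sigma')\ \le\ \frac{1}{d^{c\epsilon|V(\sigma\circ\sigma'^T)|}\,\Delta^{O(D_{sos})}\,w(U)^2}
\]
for some constant $c>0$, where the crucial exponential decay $d^{-c\epsilon|V(\sigma\circ\sigma'^T)|}$ per vertex is obtained by exploiting, for every vertex $i\in V(\sigma\circ\sigma'^T)\setminus(U_\sigma\cup V_\sigma)$, the parity condition $\deg^{\sigma\circ\sigma'^T}(i)+\deg^{U_\sigma}(i)+\deg^{U_{\sigma'}}(i)\equiv 0\pmod 2$ together with the hypotheses $m\le d^{1-\epsilon}/\lambda^2$, $m\le k^{2-\epsilon}/\lambda^2$, and $\sqrt{\lambda}/\sqrt{k}\le d^{-A\epsilon}$.

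Given this charging lemma, the lower bound follows by the graph-$G$ argument of Section~\ref{subsec: positivity_lower_bound_strategy}: I will compute $\lambda_V$ (which for Sparse PCA will be essentially $1/w(V)^2$, up to lower-order factors), bound each edge weight $w(e)=2W(U,V)/\lambda_V$ by decomposing the sum over $\sigma,\sigma'$ using the charging lemma and Lemma~\ref{lem: gp_sum}, and then multiply over the at most $D_{sos}$-long paths in $G$. This will yield
\[
\sum_{U\in\mathcal{I}_{mid}}M^{fact}_{Id_U}(H_{Id_U})\ \succeq\ \frac{\Delta^{O(D_{sos}^2)}}{n^{O(D_{sos})}}\,Id_{sym}.
\]
The upper bound uses the same $m_\sigma:=D_V+1-|V(\sigma)|$ trick as in Lemma~\ref{lem: tpca_cond6}, combined with the (straightforward) Sparse PCA analogue of the refined shape-counting bound in Lemma~\ref{lem: tpca_refined_bound}, to give
\[
\sum_{U\in\mathcal{I}_{mid}}\sum_{\gamma\in\Gamma_{U,*}}\frac{d_{Id_U}(H_{Id_U},H'_\gamma)}{|Aut(U)|c(\gamma)}\ \le\ \frac{1}{\Delta^{O(D_{sos})}\,2^{D_V}},
\]
where $C_V,C_E$ are chosen small enough that the per-vertex decay $d^{c\epsilon|V(\sigma\circ\sigma'^T)|}$ dominates the $2^{|V(\sigma\circ\sigma'^T)|}$ factor needed to reach $2^{D_V}$. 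Choosing $C_{sos}$ to satisfy $C_{sos}<c'C_V$ for a small enough constant $c'$ forces $\Delta^{O(D_{sos}^2)}/n^{O(D_{sos})}\ge 6/(\Delta^{O(D_{sos})}2^{D_V})$, and combining the two bounds using $Id_{sym}\succeq 0$ completes the proof.

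The main obstacle is the charging lemma itself. Unlike in tensor PCA, here $B_{norm}(\alpha)$ involves two separate weight contributions $n^{w(\cdot)/2}$ that must be accounted for by different parts of the planted distribution: each type-1 vertex outside the separator produces a $\sqrt{d}$ factor that has to be absorbed by the $(k/d)$ factors of $|\alpha|_1$ and the $(\sqrt{\lambda}/\sqrt{k})^{l_e}$ edge factors, whereas each type-2 vertex produces a $\sqrt{m}$ factor that has to be absorbed by $\Delta^{|\alpha|_2}$, the double-factorial $(\deg^\alpha(j)-1)!!$, and again the edge factors. Carrying out this bookkeeping per vertex type while keeping the per-edge decay visible, and simultaneously invoking all three regime hypotheses on $m,k,\lambda,d$ at the right places, is the delicate step; once the per-vertex decay $d^{-c\epsilon}$ is in hand (with $c$ depending only on $C_\Delta$ and the constant $A$), the remainder of the argument mirrors the tensor PCA proof essentially verbatim.
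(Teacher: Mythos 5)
Your proposal follows essentially the same two-step template the paper uses (spectral lower bound on $\sum_U M^{fact}_{Id_U}(H_{Id_U})$, scalar upper bound on the truncation error, then pick $C_{sos}$ small enough), and you have correctly identified the key charging lemma and where the main bookkeeping difficulty lies. The only inaccuracy is your characterization of $\lambda_V$ as ``essentially $1/w(V)^2$ up to lower-order factors'': the paper computes $\lambda_V = \Delta^{|V|_2}/(d^{|V|_1}k^{|V|_2})$, whereas $1/w(V)^2 = 1/(d^{|V|_1}m^{|V|_2})$, and the ratio $(\Delta m/k)^{|V|_2}$ is not generically a lower-order factor; the paper sidesteps this by using only the crude bound $\lambda_V \ge d^{-O(D_{sos})}$, and correspondingly its charging lemma reads $B_{norm}(\sigma)B_{norm}(\sigma')H_{Id_U}(\sigma,\sigma') \le d^{O(D_{sos})}/d^{0.5A\epsilon|V(\sigma\circ\sigma')|}$ rather than carrying a $w(U)^2$ in the denominator. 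This is a cosmetic difference, not a gap: as long as you recognize that $\lambda_V$ is of order $d^{-O(D_{sos})}$ rather than exactly $1/w(V)^2$, your argument goes through and matches the paper's.
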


\begin{proof}
	Choose $C_{sos}$ sufficiently small so that $\frac{1}{d^{K_1D_{sos}^2}} \ge 6\frac{d^{K_2D_{sos}}}{2^{D_V}}$ which can be satisfied by setting $C_{sos} < K_3 C_V$ for a sufficiently small constant $K_3 > 0$. Then, since $Id_{Sym} \succeq 0$, using \cref{lem: spca_cond5} and \cref{lem: spca_cond6},
	\begin{align*}
		\sum_{U \in \mathcal{I}_{mid}}{M^{fact}_{Id_U}{(H_{Id_U})}} &\succeq \frac{1}{d^{K_1D_{sos}^2}} Id_{sym}\\
		&\succeq 6\frac{d^{K_2D_{sos}}}{2^{D_V}} Id_{sym}\\
		&\succeq 6\left(\sum_{U \in \mathcal{I}_{mid}}{\sum_{\gamma \in \Gamma_{U,*}}{\frac{d_{Id_{U}}(H'_{\gamma},H_{Id_{U}})}{|Aut(U)|c(\gamma)}}}\right)Id_{sym}
	\end{align*}
\end{proof}

	\section{Conclusion}\label{sec: conclusion}

	In this paper, we developed general machinery for proving Sum of Squares (SoS) lower bounds on certification problems. While proving SoS lower bounds is notoriously hard, our machinery reduces the task of proving SoS lower bounds to verifying conditions on the coefficient matrices. For this, the three main conditions which need to be verified are PSD mass, middle shape bounds, and intersection term bounds. Once this is done, the proof can be completed by showing positivity of the PSD mass and bounding the truncation error.

Using our machinery, we proved SoS lower bounds for three problems - planted slightly denser subgraph, tensor PCA, and the Wishart model of sparse PCA. As discussed in \cref{sec: prior_work}, our lower bounds match the best known algorithmic guarantees, thereby giving strong evidence of the correct computational threshold behavior of these problems.

That said, there is considerable room for further work. One direction is to try and use our machinery to prove a variant of the low-degree conjecture, which is a major open problem. Another direction is to push the boundaries of SoS lower bounds techniques and further extend our machinery. Our machinery currently works best when the input is dense and we have both vertex and edge decay, i.e. the coefficients for the shapes decay exponentially with the number of vertices and edges in the shape. However, several important problems such as Densest $k$-Subgraph are not captured by this setting. Thus, to prove SoS lower bounds for these problems, new techniques need to be developed as was done for the Sherrington-Kirkpatrick and sparse independent set problems by the recent works \cite{sklowerbounds,jones2021sum}. Developing these techniques will give further insight into SoS lower bounds and how our machinery can potentially be improved.

	\subsection*{Acknowledgements}
	We thank Sam Hopkins, Pravesh Kothari, Prasad Raghavendra, Tselil Schramm, David Steurer and Madhur Tulsiani for helpful discussions. We also thank Sam Hopkins and Pravesh Kothari for assistance in drafting the informal description of the machinery (Section \ref{sec: informal_statement}). Parts of this work have also appeared in \cite{potechin2022sub, rajendran2022nonlinear}.

	\bibliographystyle{alphaurl}
	\bibliography{madhur}

	\begin{appendix}
	\section{Proof that the Leftmost and Rightmost Minimum Vertex Separators are Well-defined}\label{separatorswelldefinedsection}
	In this section, we give a general proof that the leftmost and rightmost minimum vertex separators are well-defined.
	\begin{lemma}\label{leftrightseparatorlemma}
		For any two distinct vertex separators $S_1$ and $S_2$ of $\alpha$, there exist vertex separators $S_L$ and $S_R$ of $\alpha$ such that:
		\begin{enumerate}
			\item $S_L$ is a vertex separator of $U_{\alpha}$ and $S_1$ and a vertex separator of $U_{\alpha}$ and $S_2$.
			\item $S_R$ is a vertex separator of $S_1$ and $V_{\alpha}$ and a vertex separator of $S_2$ and $V_{\alpha}$.
			\item $w(S_L) + w(S_R) \leq w(S_1) + w(S_2)$
		\end{enumerate}
	\end{lemma}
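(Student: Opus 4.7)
\medskip

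\noindent\textbf{Proof Proposal.} The plan is to give a submodularity-style ``uncrossing'' construction that, given any two separators $S_1,S_2$, produces a left-shifted separator $S_L$ and a right-shifted separator $S_R$ by redistributing the mass of $S_1$ and $S_2$. First, for each $i \in \{1,2\}$, I would partition $V(\alpha) \setminus S_i$ into two pieces: $\mathrm{Left}_i$, the set of vertices reachable from $U_\alpha \setminus S_i$ in the graph obtained from $\alpha$ by deleting $S_i$ (together with every hyperedge meeting $S_i$), and $\mathrm{Right}_i$, the analogous set reachable from $V_\alpha \setminus S_i$. Because $S_i$ is a separator, these two sets are disjoint and cover $V(\alpha) \setminus S_i$, and $U_\alpha \setminus S_i \subseteq \mathrm{Left}_i$, $V_\alpha \setminus S_i \subseteq \mathrm{Right}_i$.

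Next, I would define
\[
S_L \;=\; (S_1 \cap \mathrm{Left}_2)\ \cup\ (S_2 \cap \mathrm{Left}_1)\ \cup\ (S_1 \cap S_2),
\qquad
S_R \;=\; (S_1 \cap \mathrm{Right}_2)\ \cup\ (S_2 \cap \mathrm{Right}_1)\ \cup\ (S_1 \cap S_2).
\]
The weight bound is the easy part: using the disjoint decompositions $S_1 = (S_1 \cap \mathrm{Left}_2) \sqcup (S_1 \cap S_2) \sqcup (S_1 \cap \mathrm{Right}_2)$ and $S_2 = (S_2 \cap \mathrm{Left}_1) \sqcup (S_1 \cap S_2) \sqcup (S_2 \cap \mathrm{Right}_1)$, a direct term-by-term count gives $w(S_L) + w(S_R) = w(S_1) + w(S_2)$, even stronger than the claimed inequality.

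The main work is showing that $S_L$ separates $U_\alpha$ from $S_1$ (and, by symmetry, from $S_2$), and that $S_R$ separates $S_1$ and $S_2$ from $V_\alpha$. For the $S_L$ / $S_1$ case, take any path $P = u_1,\dots,u_j$ from $U_\alpha$ to $S_1$ with $u_j$ its first vertex in $S_1$, so $u_1,\dots,u_{j-1} \in \mathrm{Left}_1$. If $u_j \in S_1 \cap (\mathrm{Left}_2 \cup S_2)$ then $u_j \in S_L$ and we are done; otherwise $u_j \in S_1 \cap \mathrm{Right}_2$, and the prefix of $P$ must itself cross $S_2$ (since it goes from $\mathrm{Left}_2 \cup (U_\alpha \cap S_2)$ to $\mathrm{Right}_2$), giving an index $j'' < j$ with $u_{j''} \in S_2 \cap \mathrm{Left}_1 \subseteq S_L$. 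A small edge case at the first vertex, when $u_1 \in U_\alpha \cap S_2$, is handled by observing $u_1 \in S_2 \cap \mathrm{Left}_1 \subseteq S_L$ directly. The $S_R$ argument is symmetric. Once $S_L$ separates $U_\alpha$ from $S_1$, it is automatically a separator of $\alpha$, since any $U_\alpha$-to-$V_\alpha$ path must hit $S_1$ and so must hit $S_L$ beforehand.

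The main obstacle I anticipate is simply being careful with the hypergraph path definition and with the boundary cases where vertices of $S_1, S_2$ lie in $U_\alpha$ or $V_\alpha$; the case analysis above is slightly delicate but routine. All of the weight-counting is exact once the partition is set up, so there is no loss and the lemma goes through in the general weighted/hyperedge setting used in the paper.
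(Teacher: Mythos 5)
Your construction is a genuinely different one from the paper's, and it is correct \emph{once a definitional issue is fixed}, so let me describe both points.

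\textbf{Comparison of approaches.} The paper defines $S_L$ directly as the set of vertices $v \in S_1 \cup S_2$ that have a path from $U_\alpha$ hitting $S_1 \cup S_2$ for the first time at $v$ (a ``first-contact'' set), and similarly $S_R$ from the $V_\alpha$ side; it then proves $w(S_L) + w(S_R) \le w(S_1) + w(S_2)$ by showing $S_L \cup S_R \subseteq S_1 \cup S_2$ and $S_L \cap S_R \subseteq S_1 \cap S_2$. You instead set up the $\mathrm{Left}_i / \mathrm{Right}_i$ partitions relative to each separator and take the submodular ``uncrossing''
\[
S_L = (S_1 \cap \mathrm{Left}_2) \cup (S_2 \cap \mathrm{Left}_1) \cup (S_1 \cap S_2),\qquad
S_R = (S_1 \cap \mathrm{Right}_2) \cup (S_2 \cap \mathrm{Right}_1) \cup (S_1 \cap S_2).
\]
These two constructions do \emph{not} produce the same sets in general: the paper's $S_L$ can be a strict subset of yours (e.g.\ when a vertex of $S_1$ is ``shielded'' from $U_\alpha$ by another vertex of $S_1$, the paper drops it but your construction keeps it). The paper's choice therefore gives a potentially strictly smaller bound, while yours gives $w(S_L)+w(S_R)\le w(S_1)+w(S_2)$ with equality whenever $\mathrm{Left}_i,\,S_i,\,\mathrm{Right}_i$ exhaust $V(\alpha)$. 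Both are sufficient for the lemma, and your version is closer in spirit to the classical submodularity-of-cuts argument, which is a nice framing.

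\textbf{Where the proposal goes wrong as stated.} The paper's notion of a path hitting $S$ is that some \emph{vertex} of the path lies in $S$; a hyperedge that merely contains a vertex of $S$ may still be traversed by a path that avoids $S$. Your definition of $\mathrm{Left}_i$ as ``reachable from $U_\alpha\setminus S_i$ after deleting $S_i$ \emph{together with every hyperedge meeting $S_i$}'' is the vertex-induced-subhypergraph notion, and it is strictly weaker than path-avoidance reachability when hyperedges have arity $\ge 3$. Concretely, take $V(\alpha)=\{u,a,b,c,w\}$ with $U_\alpha=\{u\}$, $V_\alpha=\{w\}$, and hyperedges $\{u,a,b\}$, $\{b,c\}$, $\{c,w\}$. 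Then $S_1=\{a,c\}$ and $S_2=\{b\}$ are separators, but after deleting the hyperedges meeting $S_1$ the vertex $b$ is not reachable from $u$, and after deleting the hyperedges meeting $S_2$ the vertex $a$ is not reachable from $u$. Your formula then gives $S_L = \emptyset$, which is not a separator of $U_\alpha$ and $S_1$ (the path $u,a$ avoids it). The fix is simply to define $\mathrm{Left}_i$ by path reachability in the paper's sense, i.e.\ $\mathrm{Left}_i = \{v \notin S_i : \exists\text{ a path from }U_\alpha\text{ to }v\text{ containing no vertex of }S_i\}$, and analogously $\mathrm{Right}_i$; in the example this yields $\mathrm{Left}_1=\{u,b\}$, $\mathrm{Left}_2=\{u,a\}$, and hence $S_L=\{a,b\}$, which does separate. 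With this corrected definition your case analysis (first hit of $S_1$, first hit of $S_2$, plus the $j=1$ boundary case) goes through as you outline, and the weight bound follows from the disjointness of the three pieces of $S_L$ (and of $S_R$) plus $S_L\cap S_R = S_1\cap S_2$; you only get an inequality, not the claimed equality. Also note that your side remark that $\mathrm{Left}_i$ and $\mathrm{Right}_i$ ``cover $V(\alpha)\setminus S_i$'' is false when $\alpha$ has components or isolated vertices not meeting $U_\alpha\cup V_\alpha$, but since you only need $\le$ rather than $=$, this is harmless. For ordinary graphs (arity $2$) the two notions of $\mathrm{Left}_i$ coincide, so your proposal as written is already correct in the simplified setting; the discrepancy only arises for the hypergraph case (e.g.\ Tensor PCA with $k \ge 3$).
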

	\begin{proof}
		Take $S_L$ to be the set of vertices $v \in V(\alpha) \cap (S_1 \cup S_2)$ such that there is a path from $U_{\alpha}$ to $v$ which doesn't intersect $S_1 \cup S_2$ before reaching $v$. Similarly, take $S_R$ to be the set of vertices $v \in V(\alpha) \cap (S_1 \cup S_2)$ such that there is a path from $V_{\alpha}$ to $v$ which doesn't intersect $S_1 \cup S_2$ before reaching $v$.
		
		Now observe that $S_L$ is a vertex separator between $U_{\alpha}$ and $S_1$. To see this, note that for any path $P$ from $U_{\alpha}$ to a vertex $v \in S_1$, either $P$ intersects $S_L$ before reaching $v$ or $P$ does not intersect $S_L$ before reaching $v$. In the latter case, $v \in S_L$. Thus, in either case, $P$ intersects $S_L$. Following similar logic, $S_L$ is also a vertex separator between $U_{\alpha}$ and $S_2$, $S_R$ is a vertex separator between $S_1$ and $V_{\alpha}$, and $S_R$ is also a vertex separator between $S_2$ and $V_{\alpha}$.
		
		To show that $w(S_L) + w(S_R) \leq w(S_1) + w(S_2)$, observe that $w(S_L) + w(S_R) = w(S_R \cup S_R) + w(S_L \cap S_R)$ and $w(S_1) + w(S_2) = w(S_1 \cup S_2) + w(S_1 \cap S_2)$. Thus, to show that $w(S_L) + w(S_R) \leq w(S_1) + w(S_2)$, it is sufficient to show that
		\begin{enumerate}
			\item $S_L \cup S_R \subseteq S_1 \cup S_2$
			\item $S_L \cap S_R \subseteq S_1 \cap S_2$
		\end{enumerate}
		For the first statement, note that by definition any vertex in $S_L \cup S_R$ must be in $S_1 \cup S_2$. For the second statement, note that if $v \in S_L \cap S_R$ then there is a path from $U_{\alpha}$ to $v$ which does not intersect any other vertices in $S_1 \cup S_2$ and there is a path from $v$ to $V_{\alpha}$ which does not intersect any other vertices in $S_1 \cup S_2$. Combining these paths, we obtain a path $P$ from $U_{\alpha}$ to $V_{\alpha}$ such that $v$ is the only vertex in $P$ which is in $S_1 \cup S_2$. This implies that $v \in S_1 \cap S_2$ as otherwise either $S_1$ or $S_2$ would not be a vertex separator between $U_{\alpha}$ and $V_{\alpha}$.
	\end{proof}
	\begin{corollary}
		The leftmost and rightmost minimum vertex separators between $U_{\alpha}$ and $V_{\alpha}$ are well-defined.
	\end{corollary}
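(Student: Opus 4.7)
The plan is to deduce the corollary as a near-immediate consequence of Lemma \ref{leftrightseparatorlemma}. I will argue existence and uniqueness of the leftmost minimum vertex separator; the argument for the rightmost is symmetric.

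Let $w_{\min}$ be the minimum weight of any vertex separator of $\alpha$, and let $\mathcal{S}_{\min}$ be the (finite) set of vertex separators of weight $w_{\min}$. The first step is to upgrade Lemma \ref{leftrightseparatorlemma} into a meet operation on $\mathcal{S}_{\min}$: given $S_1, S_2 \in \mathcal{S}_{\min}$, the sets $S_L, S_R$ produced by the lemma are themselves minimum vertex separators of $\alpha$. Indeed, $S_L$ separates $U_\alpha$ from $S_1$, and $S_1$ separates $U_\alpha$ from $V_\alpha$, so every path from $U_\alpha$ to $V_\alpha$ must first hit $S_L$; hence $S_L$ is a vertex separator of $\alpha$ and $w(S_L) \ge w_{\min}$. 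The same reasoning gives $w(S_R) \ge w_{\min}$. Combined with $w(S_L)+w(S_R) \le w(S_1)+w(S_2) = 2w_{\min}$, this forces $w(S_L) = w(S_R) = w_{\min}$, so $S_L, S_R \in \mathcal{S}_{\min}$.

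For existence of the leftmost minimum vertex separator, I would iterate this meet. Pick any enumeration $S_1, \ldots, S_N$ of $\mathcal{S}_{\min}$, set $T_1 = S_1$, and inductively define $T_{i+1}$ to be the ``$S_L$'' produced by applying Lemma \ref{leftrightseparatorlemma} to $T_i$ and $S_{i+1}$. By the previous paragraph each $T_i \in \mathcal{S}_{\min}$, and by construction $T_{i+1}$ is a vertex separator of $U_\alpha$ and $S_{i+1}$, as well as of $U_\alpha$ and $T_i$. A short induction then shows $T_N$ is a vertex separator of $U_\alpha$ and every $S_j$, so $T_N$ satisfies the definition of a leftmost minimum vertex separator.

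For uniqueness, suppose $S$ and $S'$ are both leftmost minimum vertex separators. Then $S$ is a vertex separator of $U_\alpha$ and $S'$, and symmetrically $S'$ is a vertex separator of $U_\alpha$ and $S$. Applying Lemma \ref{leftrightseparatorlemma} to $S, S'$ produces $S_L \in \mathcal{S}_{\min}$ which separates $U_\alpha$ from both $S$ and $S'$. Since $S$ is leftmost, every path from $U_\alpha$ to $S_L$ meets $S$; but $S_L$ already separates $U_\alpha$ from $S$, so in fact $S \subseteq S_L$, and by equality of weights and the weight being additive over disjoint type-components, $S = S_L$. The same argument gives $S' = S_L$, hence $S = S'$. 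The mildest obstacle here is the step ``$S \subseteq S_L$'': one must check that a minimum separator which separates $U_\alpha$ from another minimum separator of equal weight must actually contain it, which follows from the observation that any vertex of $S$ not in $S_L$ would be reachable from $U_\alpha$ avoiding $S_L$, contradicting $S_L$ separating $U_\alpha$ from $S$. This is the only place where a small additional argument beyond Lemma \ref{leftrightseparatorlemma} is required.
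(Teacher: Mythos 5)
Your existence argument is correct but takes a genuinely different route from the paper's. The paper argues by contradiction: it assumes no leftmost minimum separator exists, picks an extremal $S_1$ (one not separated from $U_\alpha$ by any other minimum separator, which exists by finiteness), and then applies Lemma~\ref{leftrightseparatorlemma} with some $S_2$ that $S_1$ fails to separate from $U_\alpha$ to produce a minimum separator $S_L \neq S_1$ separating $U_\alpha$ from $S_1$, a contradiction. Your argument is a direct construction iterating the meet, which is arguably cleaner; but the ``short induction'' silently uses a transitivity fact you should state: if $A$ separates $U_\alpha$ from $B$ and $B$ separates $U_\alpha$ from $C$, then $A$ separates $U_\alpha$ from $C$ (proof: split any path from $U_\alpha$ to $C$ at its first vertex of $B$). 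You also need to handle the degenerate case $T_i = S_{i+1}$ by setting $T_{i+1} = T_i$, since the lemma as stated requires the two separators to be distinct.

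Your uniqueness paragraph is a genuine addition -- the paper's proof addresses only existence -- but the step you flag as the ``mildest obstacle'' is where all the work lies, and your one-sentence justification does not close it: it is not self-evident that $s \in S \setminus S_L$ is reachable from $U_\alpha$ avoiding $S_L$; after all, $S_L$ separates $U_\alpha$ from $S \ni s$, so naively such a vertex is \emph{not} reachable avoiding $S_L$. The missing ingredient is minimality of $S$: since $S \setminus \{s\}$ is not a separator, there is a path $P$ from $U_\alpha$ to $V_\alpha$ whose only vertex in $S$ is $s$. The prefix of $P$ up to $s$ cannot meet $S_L$ strictly before $s$, because then its sub-prefix up to that meeting point would be a path from $U_\alpha$ to $S_L$ missing $S$, contradicting that $S$ (being leftmost) separates $U_\alpha$ from $S_L$; and that prefix does not meet $S_L$ at $s$ since $s \notin S_L$. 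So the prefix witnesses the reachability claim, and the contradiction follows. In fact you can bypass $S_L$ entirely: run the same path argument directly on $S$ and $S'$ to see the first hit of $S'$ on $P$ must equal $s$, giving $S \subseteq S'$ and, by symmetry, $S = S'$.
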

	\begin{proof}
		Assume that there is no minimum leftmost vertex separator. If so, then there exists a minimum vertex separator $S_1$ between $U_{\alpha}$ and $V_{\alpha}$ such that 
		\begin{enumerate}
			\item There does not exist a minimum vertex separator $S'$ of $\alpha$ such that $S'$ is also a minimum vertex separator of $U_{\alpha}$ and $S_1$ (otherwise we would take $S'$ rather than $S$)
			\item There exists a minimum vertex separator $S_2$ of $\alpha$ such that $S'$ is not a minimum vertex separator of $U_{\alpha}$ and $S_2$ (as otherwise $S_1$ would be the leftmost minimum vertex separator)
		\end{enumerate}
		Now let $S_L$ and $S_R$ be the vertex separators of $\alpha$ obtained by applying Lemma \ref{leftrightseparatorlemma} to $S_1$ and $S_2$. Since $S_1$ and $S_2$ are minimum vertex separators of $\alpha$, we must have that $w(S_L) = w(S_R) = w(S_1) = w(S_2)$. Since $S_L$ is a vertex separator of $U_{\alpha}$ and $S_2$, $S_L \neq S_1$. However, $S_L$ is a vertex separator of $U_{\alpha}$ and $S_1$, which contradicts our choice of $S_1$.
		
		Thus, there must be a leftmost minimum vertex separator of $\alpha$. Following similar logic, there must be a rightmost minimum vertex separator of $\alpha$ as well.
	\end{proof}
	\section{Proofs with Canonical Maps}\label{canonicalmapsection}
	In this section, we give alternative proofs of Lemmas \ref{lm:morthsimplereexpression} and \ref{lm:singleshapeintersections} using canonical maps.
	\begin{definition}[Canonical Maps]
		For each shape $\alpha$ and each ribbon $R$ of shape $\alpha$, we arbitrarily choose a canonical map $\phi_R: V(\alpha) \to V(R)$ such that $\phi_R(H_{\alpha}) = H_R$, $\phi_{R}(U_{\alpha}) = A_R$, and $\phi_{R}(V_{\alpha}) = B_R$. Note that there are $|Aut(\alpha)|$ possible choices for this map.
	\end{definition}
	\subsection{Proof of Lemma \ref{lm:morthsimplereexpression}}
	\begin{lemma}
		\[
		M^{orth}_{\tau}(H) = \sum_{\sigma \in Row(H),\sigma' \in Col(H)}{H(\sigma,\sigma')|Decomp(\sigma,\tau,{\sigma'}^T)|M_{\sigma \circ \tau \circ {\sigma'}^T}}
		\]
	\end{lemma}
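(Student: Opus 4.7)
My plan is to verify the stated identity by comparing, ribbon by ribbon, the coefficient of $M_R$ on each side. The right-hand side is transparent: for each pair $(\sigma,\sigma')$ and each ribbon $R$ of shape $\sigma \circ \tau \circ {\sigma'}^T$, the coefficient of $M_R$ is exactly $H(\sigma,\sigma')\,|Decomp(\sigma,\tau,{\sigma'}^T)|$. On the left-hand side, expanding the definition of $M^{orth}_\tau(H)$, the coefficient of $M_R$ equals $H(\sigma,\sigma')$ times the number of properly composable triples $(R_1,R_2,R_3)$ of shapes $\sigma,\tau,{\sigma'}^T$ with $R_1 \circ R_2 \circ R_3 = R$. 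Hence the lemma reduces to the combinatorial identity
\[
\#\{(R_1,R_2,R_3) \text{ properly composable with shapes } \sigma,\tau,{\sigma'}^T, \text{ and } R_1 \circ R_2 \circ R_3 = R\} = |Decomp(\sigma,\tau,{\sigma'}^T)|.
\]

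To prove this identity, I will construct an explicit bijection between such triples $(R_1,R_2,R_3)$ and elements $\pi \in Decomp(\sigma,\tau,{\sigma'}^T)$ using canonical maps. In the forward direction, given $(R_1,R_2,R_3)$ with $R = R_1 \circ R_2 \circ R_3$, proper composability means the images of $V(\sigma), V(\tau), V({\sigma'}^T)$ under the canonical maps $\phi_{R_1},\phi_{R_2},\phi_{R_3}$ partition $V(R)$ (up to the forced overlaps $V_\sigma = U_\tau$ and $V_\tau = U_{{\sigma'}^T}$). These maps together assemble into a bijection $V(\sigma \circ \tau \circ {\sigma'}^T) \to V(R)$, which differs from $\phi_R$ by a unique element of $Aut(\sigma \circ \tau \circ {\sigma'}^T)$; the ambiguity of choosing $\phi_{R_1},\phi_{R_2},\phi_{R_3}$ within their automorphism orbits is exactly $Aut(\sigma) \times Aut(\tau) \times Aut({\sigma'}^T)$, so the coset it defines in $Decomp(\sigma,\tau,{\sigma'}^T)$ is well-defined.

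Conversely, given $\pi \in Decomp(\sigma,\tau,{\sigma'}^T)$, choose any representative automorphism and apply it to identify subshapes $\pi(\sigma), \pi(\tau), \pi({\sigma'}^T)$ inside $\sigma \circ \tau \circ {\sigma'}^T$. Restricting $\phi_R$ to each subshape yields ribbons $R_1, R_2, R_3$ which are, by construction, properly composable and satisfy $R_1 \circ R_2 \circ R_3 = R$; independence of this construction from the representative is precisely the statement that we quotiented by $Aut(\sigma) \times Aut(\tau) \times Aut({\sigma'}^T)$. Verifying that the two maps are mutually inverse is then routine bookkeeping.

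The main obstacle, as in the sketch, is handling the automorphism groups correctly: one must check that the forward construction produces a well-defined element of the quotient $Decomp(\sigma,\tau,{\sigma'}^T)$ (not just of $Aut(\sigma \circ \tau \circ {\sigma'}^T)$), and that the backward construction is independent of the chosen coset representative. Both checks come down to verifying that automorphisms of the constituent shapes $\sigma, \tau, {\sigma'}^T$ act trivially on the triple $(R_1,R_2,R_3)$ up to the freedom in choosing canonical maps. Once this is done, the bijection immediately gives the desired count and the lemma follows.
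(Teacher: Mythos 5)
Your proposal is correct and matches the paper's approach. The paper (in the proof sketch following Lemma~\ref{lm:morthsimplereexpression} and in Appendix~\ref{canonicalmapsection}) establishes exactly the bijection you describe, between pairs (ribbon $R$ of shape $\sigma\circ\tau\circ{\sigma'}^T$, element $\pi\in Decomp(\sigma,\tau,\sigma')$) and properly composable triples $(R_1,R_2,R_3)$, using the same canonical maps $\phi_{R_i}$ to build $\pi = \phi_R^{-1}(\phi_{R_1}\circ\phi_{R_2}\circ\phi_{R_3})$ in one direction and $R_i = \phi_R(\pi(\cdot))$ in the other. Your extra step of first reducing to the coefficient-of-$M_R$ identity is a harmless reorganization of the same argument. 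One small inaccuracy in your narrative: the canonical maps $\phi_{R_i}$ are \emph{fixed} once and for all, so there is no ``ambiguity'' in choosing them; what the quotient by $Aut(\sigma)\times Aut(\tau)\times Aut({\sigma'}^T)$ buys is that the composite map $(R_1,R_2,R_3)\mapsto\pi$, then $\pi\mapsto(R_1',R_2',R_3')$, returns the original triple even though the intermediate $\pi$ lives only in the quotient — this is precisely the bookkeeping the appendix carries out, and it is where the actual content lies.
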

	\begin{proof}
		Observe that there is a bijection between ribbons $R$ with shape $\sigma \circ \tau \circ {\sigma'}^T$ together with an element $\pi \in Decomp(\sigma,\tau,\sigma')$ and triples of ribbons $(R_1,R_2,R_3)$ such that
		\begin{enumerate}
			\item $R_1,R_2,R_3$ have shapes $\sigma$, $\tau$, and ${\sigma'}^T$, respectively.
			\item $V(R_1) \cap V(R_2) = A_{R_2} = B_{R_1}$,  $V(R_2) \cap V(R_3) = A_{R_3} = B_{R_2}$, and $V(R_1) \cap V(R_3) = A_{R_2} \cap B_{R_2}$
		\end{enumerate}
		To see this, note that given such ribbons $R_1,R_2,R_3$, the ribbon $R = R_1 \circ R_2 \circ R_3$ has shape $\sigma \circ \tau \circ {\sigma'}^T$. Further note that we have two bijective maps from $V(\sigma \circ \tau \circ {\sigma'}^T)$ to $V(R)$. The first map is $\phi_R$. The second map is $\phi_{R_1} \circ \phi_{R_2} \circ \phi_{R_3}$. Using this, we can take $\pi = \phi^{-1}_R(\phi_{R_1} \circ \phi_{R_2} \circ \phi_{R_3})$
		
		Conversely, given a ribbon $R$ of shape $\sigma \circ \tau \circ {\sigma'}^T$ and an element $\pi \in Decomp(\sigma,\tau,\sigma')$, let $R_1 = \phi_R(\pi(\sigma))$, let $R_2 = \phi_R(\pi(\tau))$, and let $R_3 = \phi_R(\pi({\sigma'}^T))$. Note that this is well defined because for any element $\pi' \in Aut(\sigma) \times Aut(\tau) \times Aut({\sigma'}^T)$, $\phi_R(\pi\pi'(\sigma)) = \phi_R(\pi(\pi'(\sigma))) = \phi_R(\pi(\sigma))$. Similarly, $\phi_R(\pi\pi'(\tau)) = \phi_R(\pi(\tau))$ and $\phi_R(\pi\pi'({\sigma'}^T)) = \phi_R(\pi({\sigma'}^T))$.
		
		To confirm that this is bijection, we have to show that these two maps are inverses of each other. Given $R_1$, $R_2$, and $R_3$, applying these two maps gives us ribbons $R'_1 = \phi_R\phi^{-1}_R(\phi_{R_1} \circ \phi_{R_2} \circ \phi_{R_3})(H_{\sigma}) = R_1$, $R'_2 = \phi_R\phi^{-1}_R(\phi_{R_1} \circ \phi_{R_2} \circ \phi_{R_3})(H_{\tau}) = R_2$, and $R'_3 = \phi_R\phi^{-1}_R(\phi_{R_1} \circ \phi_{R_2} \circ \phi_{R_3})(H_{{\sigma'}^T}) = R_3$. Conversely, given $R$ and an element $\pi \in Decomp(\sigma,\tau,\sigma')$ (which we represent by an element $\pi \in Aut(\sigma \circ \tau \circ {\sigma'}^T)$), applying these two maps gives us the ribbon 
		\[
		R' = \phi_R(\pi(\sigma)) \circ \phi_R(\pi(\tau)) \circ \phi_R(\pi({\sigma'}^T)) = {\phi_R}\pi(\sigma \circ \tau \circ {\sigma'}^T) = R
		\]
		and gives us the map 
		\[
		\phi^{-1}_R(\phi_{\phi_R(\pi(\sigma))} \circ \phi_{\phi_R(\pi(\tau))} \circ \phi_{\phi_R(\pi({\sigma'}^T))})
		\]
		Now observe that both ${\phi_R}\pi$ and $\phi_{\phi_R(\pi(\sigma))}$ give bijective maps from $\sigma$ to the ribbon ${\phi_R}\pi(\sigma)$ so $\phi^{-1}_{\phi_R(\pi(\sigma))}{\phi_R}\pi \in Aut(\sigma)$. Following similar logic for $\tau$ and ${\sigma'}^T$, in $Decomp(\sigma,\tau,\sigma')$ this map is equivalent to $
		\phi^{-1}_R({\phi_R}\pi) = \pi$
	\end{proof}
	\subsection{Proof of Lemma \ref{lm:singleshapeintersections}}
	\begin{definition}[Rigorous definition of intersection patterns]
		We define an intersection pattern $P$ on composable shapes $\gamma,\tau,{\gamma'}^T$ to consist of the shape $\gamma \circ \tau \circ {\gamma'}^T$ together with a non-empty set of constraint edges $E(P)$ on $V(\gamma \circ \tau \circ {\gamma'}^T)$ such that:
		\begin{enumerate}
			\item For all vertices $u,v,w \in V(\gamma \circ \tau \circ {\gamma'}^T)$, if $(u,v),(v,w) \in E(P)$ then $(u,w) \in E(P)$
			\item $E(P)$ does not contain a path between two vertices of $\gamma$, two vertices of $\tau$, or two vertices of ${\gamma'}^T$. This ensures that when we consider $\gamma,\tau,\gamma'$ individually, their vertices are distinct.
			\item Defining $V_{*}(\gamma) \subseteq V(\gamma)$ to be the vertices of $\gamma$ which are incident to an edge in $E(P)$, $U_{\gamma}$ is the unique minimum-weight vertex separator between $U_{\gamma}$ and $V_{*}(\gamma) \cup V_{\gamma}$
			\item Similarly, defining $V_{*}({\gamma'}^T) \subseteq V({\gamma'}^T)$ to be the vertices of ${\gamma'}^T$ which are incident to an edge in $E(P)$, $V_{{\gamma'}^T}$ is the unique minimum-weight vertex separator between $V_{*}({\gamma'}^T) \cup U_{{\gamma'}^T}$ and $V_{U_{{\gamma'}^T}}$
			\item[5.*] All edges in $E(P)$ are between vertices of the same type.
		\end{enumerate}
	\end{definition}
	\begin{definition}
		We say that two intersection patterns $P,P'$ on shapes $\gamma,\tau,{\gamma'}^T$ are equivalent (which we write as $P \equiv P'$) if there is an automorphism $\pi \in Aut(\gamma) \times Aut(\tau) \times Aut({\gamma'}^T)$ such that $\pi(P) = P'$ (i.e. if $E(P)$ and $E(P')$ are the constraint edges for $P$ and $P'$ respectively then $\pi(E(P)) = E(P')$).
	\end{definition}
	\begin{definition}
		Given composable shapes $\gamma,\tau,{\gamma'}^T$, we define $\mathcal{P}_{\gamma,\tau,{\gamma'}^T}$ to be the set of all possible intersection patterns $P$ on $\gamma,\tau,{\gamma'}^T$ (up to equivalence)
	\end{definition}
	\begin{definition}
		Given composable (but not properly composable) ribbons $R_1$, $R_2$, $R_3$ of shapes $\gamma, \tau, {\gamma'}$, we define the intersection pattern $P \in \mathcal{P}_{\gamma,\tau,{\gamma'}^T}$ induced by $R_1,R_2,R_3$ as follows:
		\begin{enumerate}
			\item Take the canonical maps $\phi_{R_1}: V(\gamma) \to V(R_1)$, $\phi_{R_2}: V(\tau) \to V(R_2)$, and $\phi_{R_3}: V({\gamma'}^T) \to V(R_3)$
			\item Given vertices $u \in V(\gamma)$ and $v \in V(\tau)$, add a constraint edge between $u$ and $v$ if and only if $\phi_{R_1}(u) = \phi_{R_2}(v)$. Similarly, given vertices $u \in V(\gamma)$ and $w \in V({\gamma'}^T)$, add a constraint edge between $u$ and $w$ if and only if $\phi_{R_1}(u) = \phi_{R_3}(w)$ and given vertices $v \in V(\tau)$ and $w \in V({\gamma'}^T)$, add a constraint edge between $v$ and $w$ if and only if $\phi_{R_2}(v) = \phi_{R_3}(w)$.
		\end{enumerate}
	\end{definition}
	\begin{definition}
		Given an intersection pattern $P \in \mathcal{P}_{\gamma,\tau,{\gamma'}^T}$, we define $V(\gamma \circ \tau \circ {\gamma'}^T)/E(P)$ to be $V(\gamma \circ \tau \circ {\gamma'}^T)$ where all of the edges in $E(P)$ are contracted (i.e. if $(u,v) \in E(P)$ then $u = v$ and $u = v$ only appears once).
	\end{definition}
	\begin{definition}
		Given an intersection pattern $P \in \mathcal{P}_{\gamma,\tau,{\gamma'}^T}$, we define $\tau_{P}$ to be the shape such that:
		\begin{enumerate}
			\item $V(H_{\tau_P}) = V(\gamma \circ \tau \circ {\gamma'}^T)/E(P)$
			\item $E(H_{\tau_P}) = E(\gamma) \cup E(\tau) \cup E({\gamma'}^{T})$
			\item $U_{\tau_P} = U_{\gamma}$
			\item $V_{\tau_P} = V_{{\gamma'}^T}$
		\end{enumerate}
	\end{definition}
	\begin{definition}
		Given an intersection pattern $P \in \mathcal{P}_{\gamma,\tau,{\gamma'}^T}$, we make the following definitions:
		\begin{enumerate}
			\item We define $Aut(P) = \{\pi \in Aut(\gamma \circ \tau \circ {\gamma'}^T): \pi(E(P)) = E(P)\}$
			\item We define $Aut_{pieces}(P) = \{\pi \in Aut(U_{\gamma}) \times Aut(\tau) \times Aut({\gamma'}^T): \pi(E(P)) = E(P)\}$
			\item We define $N(P) = |Aut(P)/Aut_{pieces}(P)|$
		\end{enumerate}
	\end{definition}
	\begin{lemma}
		For all composable $\sigma$, $\tau$, and ${\sigma'}^T$ (inclulding improper $\tau$), 
		\begin{align*}
			&M^{fact}_{\tau}(e_{\sigma}e^T_{\sigma'}) - M^{orth}_{\tau}(e_{\sigma}e^T_{\sigma'}) = \sum_{\sigma_2, \gamma: \gamma \text{ is non-trivial }, \atop \sigma_2 \cup \gamma = \sigma}{\frac{1}{|Aut(U_{\gamma})|}\sum_{P \in \mathcal{P}_{\gamma,\tau,Id_{V_{\tau}}}}N(P)M^{orth}_{\tau_P}(e_{\sigma_2}e^T_{\sigma'})} \\
			&+ \sum_{\sigma'_2, \gamma': \gamma' \text{ is non-trivial }, \atop \sigma'_2 \cup \gamma' = \sigma'}{\frac{1}{|Aut(U_{\gamma'})|}\sum_{P \in \mathcal{P}_{Id_{U_{\tau}},\tau,{\gamma'}^T}}N(P)M^{orth}_{\tau_P}(e_{\sigma}e^T_{\sigma'_2})} \\
			&+ \sum_{\sigma_2, \gamma: \gamma \text{ is non-trivial }, \atop \sigma_2 \cup \gamma = \sigma}{\sum_{\sigma'_2, \gamma': \gamma' \text{ is non-trivial }, \atop \sigma'_2 \cup \gamma' = \sigma'}{
					\frac{1}{|Aut(U_{\gamma})|\cdot|Aut(U_{\gamma'})|}\sum_{P \in \mathcal{P}_{\gamma,\tau,{\gamma'}^T}}N(P)M^{orth}_{\tau_P}(e_{\sigma_2}e^T_{\sigma'_2})}}
		\end{align*}
	\end{lemma}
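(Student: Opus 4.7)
My plan is to verify the identity by exhibiting a coefficient-preserving bijection between the ribbon triples counted on each side, analogous to the proof sketch of Lemma~\ref{lm:singleshapeintersections}, but made rigorous using the canonical maps $\phi_R$ introduced earlier in this section. Concretely, I first expand
\[
M^{fact}_{\tau}(e_{\sigma}e^T_{\sigma'}) - M^{orth}_{\tau}(e_{\sigma}e^T_{\sigma'})
\]
as a sum over triples $(R_1,R_2,R_3)$ of composable ribbons of shapes $\sigma,\tau,\sigma'^T$ that fail to be properly composable, i.e.\ where $R_1$ has unexpected intersections with $R_2\circ R_3$, or $R_3$ has unexpected intersections with $R_1\circ R_2$, or both. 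This naturally partitions the difference into three sums, matching the three sums on the right-hand side.

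Next, for each such triple $(R_1,R_2,R_3)$, I decompose the ribbons along canonical separators. Let $V_*$ be the set of vertices intersected unexpectedly. In $R_1$, take $A'$ to be the leftmost minimum-weight separator between $A_{R_1}$ and $V_* \cup B_{R_1}$, together with an ordering $O_{A'}$; this factors $R_1 = R'_1 \circ R_4$ where $R'_1$ has some shape $\sigma_2$ and $R_4$ has a shape $\gamma$ satisfying $\sigma = \sigma_2 \circ \gamma$, and the minimality/leftmost-ness properties imply $\gamma$ is a (non-trivial) left shape in $\Gamma_{*,U_\tau}$. Perform the symmetric decomposition on $R_3 = R_5 \circ R'_3$ giving $\sigma' = \sigma'_2 \circ \gamma'$, and form $R'_2 = R_4 \circ R_2 \circ R_5$; by construction $(R'_1, R'_2, R'_3)$ is properly composable, $R'_2$ has shape $\tau_P$ for the intersection pattern $P$ induced by $R_4, R_2, R_5$, and summing $\chi_{R'_1}\chi_{R'_2}\chi_{R'_3}$ against $e_{\sigma_2}e^T_{\sigma'_2}$ is exactly what $M^{orth}_{\tau_P}$ computes.

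The heart of the argument is then to check that this assignment is a bijection up to the stated multiplicities. Going backwards, given $(\sigma_2,\gamma,\sigma'_2,\gamma',P)$ with a properly composable $(R'_1,R'_2,R'_3)$ of shapes $(\sigma_2,\tau_P,{\sigma'_2}^T)$, I recover the orderings $O_{A'}$ and $O_{B'}$ from $R'_2$ and then the triples $(R_4,R_2,R_5)$ from the restriction of $R'_2$ to $\gamma,\tau,{\gamma'}^T$, up to exactly $N(P)$ choices by the definition of $N(P)$ as $|Aut(P)/Aut_{pieces}(P)|$. Finally, the divisions by $|Aut(U_\gamma)|$ and $|Aut(U_{\gamma'})|$ exactly count the number of orderings $O_{A'}$ and $O_{B'}$ that produce the same shapes $\gamma,\gamma'$; in the mixed terms where one of $\gamma,\gamma'$ is trivial, only one side has an ordering/separator, which matches the structure of the first two sums on the right.

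The main obstacle, and the reason I prefer to argue via canonical maps rather than the informal sketch, is making the multiplicity bookkeeping airtight: the interplay between $Aut(\sigma\circ\tau\circ\sigma'^T)$, $Aut(P)$, and $Aut_{pieces}(P)$ is subtle, and one must be careful that the same intersection pattern $P$ is not counted multiple times under the equivalence $P\equiv P'$ defined just above. I would handle this by fixing a canonical representative in each $Aut_{pieces}$-orbit of constraint-edge sets, expressing every triple $(R_1,R_2,R_3)$ through $\phi_{R_1},\phi_{R_2},\phi_{R_3}$ acting on this representative, and then verifying that the orbit-stabilizer count gives precisely $N(P)$. Once this is set up, the identity is a matter of matching coefficients ribbon by ribbon.
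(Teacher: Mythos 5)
Your proposal is correct and takes essentially the same approach as the paper's proof: partition the non-properly-composable triples according to which of $R_1$, $R_3$ intersects unexpectedly, decompose along the leftmost/rightmost minimum separators with orderings, form $R'_2 = R_4 \circ R_2 \circ R_5$ carrying the intersection pattern $P$, and verify a multiplicity-preserving bijection using the canonical maps $\phi_R$, with $N(P) = |Aut(P)/Aut_{pieces}(P)|$ absorbing the overcount on the properly-composable side and the $|Aut(U_\gamma)|$, $|Aut(U_{\gamma'})|$ factors cancelling the orderings $O_{A'}$, $O_{B'}$. The paper realizes the bijection exactly as you plan — augmenting triples with the two orderings on one side and with a coset representative $\pi \in Aut(P)/Aut_{pieces}(P)$ on the other — and your proposed fix of choosing canonical representatives in each $Aut_{pieces}$-orbit is precisely what is implicit in defining $\mathcal{P}_{\gamma,\tau,{\gamma'}^T}$ up to equivalence.
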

	\begin{proof}
		This lemma follows from the following bijection. Consider the third term
		\[
		\sum_{\sigma_2, \gamma: \gamma \text{ is non-trivial }, \atop \sigma_2 \cup \gamma = \sigma}{\sum_{\sigma'_2, \gamma': \gamma' \text{ is non-trivial }, \atop \sigma'_2 \cup \gamma' = \sigma'}{
				\frac{1}{|Aut(U_{\gamma})|\cdot|Aut(U_{\gamma'})|}\sum_{P \in \mathcal{P}_{\gamma,\tau,{\gamma'}^T}}N(P)M^{orth}_{\tau_P}(e_{\sigma_2}e^T_{\sigma'_2})}}
		\]
		On one side, we have the following data:
		\begin{enumerate}
			\item Ribbons $R_1$, $R_2$, and $R_3$ such that 
			\begin{enumerate}
				\item $R_1,R_2,R_3$ have shapes $\sigma$, $\tau$, and ${\sigma'}^T$, respectively.
				\item $A_{R_2} = B_{R_1}$ and $A_{R_3} = B_{R_2}$
				\item $\left(V(R_1) \cup V(R_2)\right) \cap V(R_3) \neq A_{R_3}$ and $\left(V(R_2) \cup V(R_3)\right) \cap V(R_1) \neq B_{R_1}$
			\end{enumerate}
			\item An ordering $O_{S'}$ on the leftmost minimum vertex separator $S'$ between $A_{R_1}$ and $V_{*} \cup B_{R_1}$.
			\item An ordering $O_{T'}$ on the rightmost minimum vertex separator $S'$ between $V_{*} \cup A_{R_3}$ and $B_{R_3}$.
		\end{enumerate}
		On the other side, we have the following data
		\begin{enumerate}
			\item An intersection pattern $P \in \mathcal{P}_{\gamma,\tau,{\gamma'}^T}$ where $\gamma$ and ${\gamma'}^T$ are non-trivial.
			\item Ribbons $R'_1$, $R'_2$, $R'_3$ of shapes $\sigma_2$, $\tau_P$, ${\sigma'_2}^T$ such that $V(R'_1) \cap V(R'_2) = A_{R'_2} = B_{R'_1}$, $V(R'_2) \cap V(R'_3) = B_{R'_2} = A_{R'_3}$, and $V(R'_1) \cap V(R'_3) = A_{R'_2} \cap B_{R'_2}$
			\item An element $\pi \in Aut(P)/Aut_{pieces}(P)$
		\end{enumerate}
		To see this bijection, given $R_1,R_2,R_3$, we again implement our strategy for analyzing intersection terms. Recall that $V_{*}$ is the set of vertices in $V(R_1) \cup V(R_2) \cup V(R_3)$ which have an unexpected equality with another vertex, $S'$ is the leftmost minimum vertex separator between $A_{R_1}$ and $B_{R_1} \cup V_{*}$, and $T'$ is the rightmost minimum vertex separator between $A_{R_3} \cup V_{*}$ and $B_{R_3}$.
		\begin{enumerate}
			\item Decompose $R_1$ as $R_1 = {R'}_1 \circ R_4$ where ${R'}_1$ is the part of $R_1$ between $A_{R_1}$ and $(S',O_{S'})$ and $R_4$ is the part of $R_1$ between $(S',O_{S'})$ and $B_{R_1} = A_{R_2}$. Decompose $R_3$ as $R_5 \cup R'_3$ where $R_5$ is the part of $R_3$ between $A_{R_3}$ and $(T',O_{T'})$ and $R'_3$ is the part of $R_3$ between $(T',O_{T'})$ and $B_{R_3}$
			\item Take the intersection pattern $P$ and the ribbon $R'_2$ induced by $R_4$, $R_2$, and $R_5$.
			\item Observe that we have two bijective maps from $V(\gamma \circ \tau \circ {\gamma'}^T)/E(P)$ to $V(R_4) \cup V(R_2) \cup V(R_5)$. The first map is $\phi_{R_4} \circ \phi_{R_2} \circ \phi_{R_5}$ and the second map is $\phi_{R'_2}$. We take $\pi = \phi^{-1}_{R'_2}(\phi_{R_4} \circ \phi_{R_2} \circ \phi_{R_5})$.
		\end{enumerate}
		Conversely, given an intersection pattern $P \in \mathcal{P}_{\gamma,\tau,{\gamma'}^T}$, $R'_1$, $R'_2$, $R'_3$, and an element $\pi \in Aut(P)/Aut_{pieces}(P)$:
		\begin{enumerate}
			\item Take $R_4 = \phi_{R'_2}\pi(V(\gamma))$, $R_2 = \phi_{R'_2}\pi(V(\tau))$, and $R_5 = \phi_{R'_2}\pi(V({\gamma'}^T))$.
			\item Take $R_1 = R'_1 \cup R_4$ and take $R_3 = R_5 \cup R'_3$.
			\item Take $O_S$ and $O_T$ based on $B_{R'_1} = A_{R_4}$ and $B_{R_5} = A_{R'_3}$.
		\end{enumerate}
		To confirm that this is a bijection, we need to show that these maps are inverses of each other.
		
		If we apply the first map and then the second, we obtain the following:
		\begin{enumerate}
			\item We obtain the ribbons 
			\begin{enumerate}
				\item $R''_1 = R'_1 \circ \phi_{R'_2}\phi^{-1}_{R'_2}(\phi_{R_4} \circ \phi_{R_2} \circ \phi_{R_5})(V(\gamma))$
				\item $R''_2 = \phi_{R'_2}\phi^{-1}_{R'_2}(\phi_{R_4} \circ \phi_{R_2} \circ \phi_{R_5})(V(\tau))$
				\item $R''_3 = \phi_{R'_2}\phi^{-1}_{R'_2}(\phi_{R_4} \circ \phi_{R_2} \circ \phi_{R_5})(V({\gamma'}^T)) \circ R'_3$
			\end{enumerate} 
			where 
			\begin{enumerate}
				\item $R'_1$ is the part of $R_1$ between $A_{R_1}$ and $(S',O_{S'})$ where $S'$ is the minimum vertex separator between $A_{R_1}$ and $V_{*} \cup B_{R_1}$.
				\item $R_4$ is the part of $R_1$ between $(S',O_{S'})$ and $B_{R_1}$
				\item $R'_2$ is the ribbon of shape $\tau_{P}$ induced (along with the intersection pattern $P$) by $R_1$, $R_2$, and $R_3$.
				\item $R_5$ is the part of $R_3$ between $A_{R_3}$ and $(T',O_{T'})$.
				\item $R'_3$ is the part of $R_3$ between $(T',O_{T'})$ and $B_{R_3}$
			\end{enumerate}
			This implies that $R''_1 = R'_1 \circ R_4 = R_1$, $R''_2 = R_2$, and $R''_3 = R_5 \circ R'_3 = R_3$. Since the second map leaves $R'_1$ and $R'_3$ unchanged, we recover the orderings $O_S$ and $O_T$ as well.
		\end{enumerate}
		
		Conversely, if we apply the second map, we have that $R_1 = R'_1 \circ \phi_{R'_2}\pi(V(\gamma))$, $R_2 = \phi_{R'_2}\pi(V(\tau))$, and $R_3 = \phi_{R'_2}\pi(V({\gamma'}^T)) \circ R'_3$ and we have the orderings $O_S$ and $O_T$ corresponding to $B_{R'_1}$ and $A_{R'_3}$ respectively. If we apply the first map, 
		\begin{enumerate}
			\item $R'_1$ and $R'_3$ are preserved.
			\item $R''_2$ and $P''$ are the ribbon and intersection pattern induced by the ribbons $\phi_{R'_2}\pi(\gamma)$, $\phi_{R'_2}\pi(\tau)$, and $\phi_{R'_2}\pi({\gamma'}^T)$. To see that $R''_2 = R'_2$, observe that 
			\[
			R''_2 = \phi_{R'_2}\pi(V(\gamma)) \circ \phi_{R'_2}\pi(V(\tau)) \circ \phi_{R'_2}\pi(V({\gamma'}^T)) = \phi_{R'_2}{\pi(\gamma \circ \tau \circ {\gamma'}^T)} = \phi_{R_2}(\gamma \circ \tau \circ {\gamma'}^T) = R'_2
			\]
			To see that $P'' \equiv P$, observe that:
			\begin{enumerate}
				\item We have two bijective maps from $V(\gamma)$ to $V(\phi_{R'_2}\pi(\gamma))$. These two maps are $\phi_{R'_2}\pi$ and $\phi_{\phi_{R'_2}\pi(\gamma)}$.
				\item We have two bijective maps from $V(\tau)$ to $V(\phi_{R'_2}\pi(\tau))$. These two maps are $\phi_{R'_2}\pi$ and $\phi_{\phi_{R'_2}\pi(\tau)}$.
				\item We have two bijective maps from $V({\gamma'}^T)$ to $V(\phi_{R'_2}\pi({\gamma'}^T))$. These two maps are $\phi_{R'_2}\pi$ and $\phi_{\phi_{R'_2}\pi({\gamma'}^T)}$.
				\item For $P''$, the constraint edges are 
				\[
				\left(\phi^{-1}_{\phi_{R'_2}\pi(\gamma)}\phi_{R'_2}\pi \circ \phi^{-1}_{\phi_{R'_2}\pi(\tau)}\phi_{R'_2}\pi \circ \phi^{-1}_{\phi_{R'_2}\pi({\gamma'}^T))}\phi_{R'_2}\pi\right)(E(P))
				\]
			\end{enumerate}
			\item We have that 
			\[
			\pi'' = \phi^{-1}_{R'_2}(\phi_{\phi_{R'_2}\pi(V(\gamma))} \circ \phi_{\phi_{R'_2}\pi(V(\tau))} \circ \phi_{\phi_{R'_2}\pi(V({\gamma'}^T))})
			\]
			To see that $\pi'' \equiv \pi$, note that 
			\[
			\pi = \pi''\left(\phi^{-1}_{\phi_{R'_2}\pi(V(\gamma))}\phi_{R'_2}\pi \circ \phi^{-1}_{\phi_{R'_2}\pi(V(\tau))}\phi_{R'_2}\pi \circ \phi^{-1}_{\phi_{R'_2}\pi(V({\gamma'}^T))}\phi_{R'_2}\pi)\right)
			\]
		\end{enumerate}
		The analysis for the the first term is the same except that when $\gamma'$ is trivial, we always take $\gamma'$ to be the identity so $T = V(V_{\tau}) = V(U_{{\sigma'}^T})$ and the ordering $O_{T}$ is given by $V_{\tau} = U_{{\sigma'}^T}$. 
		Similarly, the analysis for the the second term is the same except that when $\gamma$ is trivial, we always take $\gamma$ to be the identity so $S = V(V_{\sigma}) = V(U_{\tau})$ and the ordering $O_{S}$ is given by $V_{\sigma} = U_{\tau}$.
	\end{proof}

\section{Degree 4 Planted Clique Analysis}\label{sec: deg_4_planted_clique}
For this example, we name the shapes based on what they look like to make them easier to keep track of. With the exception of $Id_{U}$, these names only appear in this section.
\subsection{The shapes $\alpha$ and coefficients $\lambda_{\alpha}$}
After several preprocessing steps, the moment matrix which needs to be analyzed is $M \approx \sum_{\alpha}{\lambda_{\alpha}M_{\alpha}}$ for the following shapes $\alpha$ and coefficients $\lambda_{\alpha}$
\begin{definition} \ 
\begin{enumerate}
\item Given $E \subseteq \{(u_1,v_1), (u_1,v_2), (u_2,v_1), (u_2,v_2)\}$, we define $\alpha_{E}$ to be the shape where $U_{\alpha_E} = (u_1,u_2)$, $V_{\alpha_E} = (v_1,v_2)$, and $E(\alpha) = E$.
\item Given $E \subseteq \{(u_1,v_1), (u_1,v_2), (u_2,v_1), (u_2,v_2)\}$, we define $\alpha_{X,E}$ to be the shape where $U_{\alpha_E} = (u_1,u_2)$, $V_{\alpha_E} = (v_1,v_2)$, there is one additional vertex $w_1$, and $E(\alpha) = E \cup \{(u_1,w_1), (u_2,w_1), (w_1,v_1), (w_1,v_2)\}$.
\item Given $i,j \in \{1,2\}$, we define $\alpha_{u_i = v_j,e}$ to be the shape where $U_{\alpha_{u_i = v_j,e}} = (u_1,u_2)$, $V_{\alpha_{u_i = v_j,e}} = (v_1,v_2)$, $u_i = v_j$, and $E(\alpha_{u_i = v_j,e}) = \{(u_{2-i},v_{2-j})\}$.
\item Given $i,j \in \{1,2\}$, we define $\alpha_{u_i = v_j,\emptyset}$ to be the shape where $U_{\alpha_E} = (u_1,u_2)$, $V_{\alpha_E} = (v_1,v_2)$, $u_i = v_j$, and $E(\alpha) = \emptyset$.
\item We define $\alpha_{Id:} = Id_{(u_1,u_2)}$ to be the shape where $U_{Id_{(u_1,u_2)}} = V_{Id_{(u_1,u_2)}} = (u_1,u_2)$ and $E(Id_{(u_1,u_2)}) = \emptyset$.
\item We define $\alpha_{swap}$ to be the shape where $U_{\alpha_{swap}} = (u_1,u_2)$, $V_{\alpha_{swap}} = (u_2,u_1)$ and $E(\alpha_{swap}) = \emptyset$.
\end{enumerate}
\end{definition}
For illustrations of these shapes $\alpha$, see Figures \ref{zerointersectionalphasfigure}.\\
\begin{figure}[ht]\label{zerointersectionalphasfigure}
\centerline{\includegraphics[height=4cm]{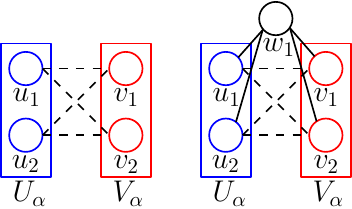}}
\caption{This figure shows the shapes $\alpha$ where $|U_{\alpha} \cap V_{\alpha}| = 0$. On the left we have $\alpha_{E}$ and on the right we have $\alpha_{X,E}$.}
\end{figure}
\begin{figure}[ht]\label{oneintersectionalphasfigure}
\centerline{\includegraphics[height=4cm]{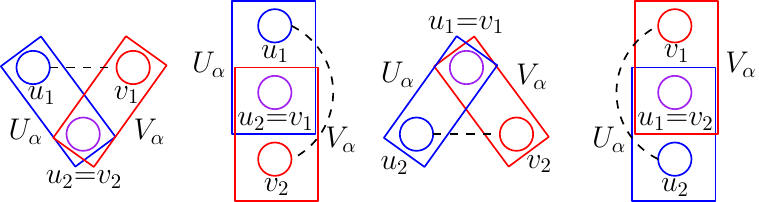}}
\caption{This figure shows the shapes $\alpha$ where $|U_{\alpha} \cap V_{\alpha}| = 1$. From left to right, we have $\alpha_{u_2 = v_2,\emptyset}$ and $\alpha_{u_2 = v_2,e}$, $\alpha_{u_2 = v_1,\emptyset}$ and $\alpha_{u_2 = v_1,e}$, $\alpha_{u_1 = v_1,\emptyset}$ and $\alpha_{u_1 = v_1,e}$, and $\alpha_{u_1 = v_2,\emptyset}$ and $\alpha_{u_1 = v_2,e}$.}
\end{figure}
\begin{figure}[ht]\label{twointersectionalphasfigure}
\centerline{\includegraphics[height=4cm]{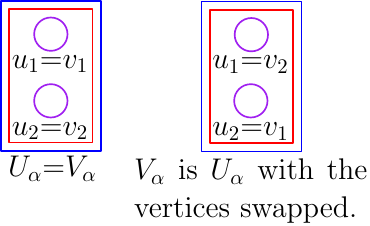}}
\caption{This figure shows the shapes $\alpha$ where $|U_{\alpha} \cap V_{\alpha}| = 2$. On the left we have $\alpha_{Id:}$ and on the right we have $\alpha_{swap}$.}
\end{figure}
We have the following coefficients on these shapes.
\begin{enumerate}
\item For each $E \subseteq \{(u_1,v_1), (u_1,v_2), (u_2,v_1), (u_2,v_2)\}$, $\lambda_{\alpha_E} = \frac{k^4}{n^4}$.
\item For each $E \subseteq \{(u_1,v_1), (u_1,v_2), (u_2,v_1), (u_2,v_2)\}$, $\lambda_{\alpha_{X,E}} = C\frac{k^5}{n^5}$ for some constant $C > 1$. These coefficients are the ad-hoc fix to the candidate pseudo-expectation values for planted clique in \cite{meka2015sum}.
\item For each $i,j \in \{1,2\}$, $\lambda_{\alpha_{u_i = v_j,e}} = \lambda_{\alpha_{u_i = v_j,\emptyset}} = \frac{k^3}{n^3}$.
\item $\lambda_{\alpha_{Id:}} = \lambda_{\alpha_{swap}} = \frac{k^2}{n^2}$
\end{enumerate}
\subsection{Decomposing $\alpha$ and coefficient matrices}
To find the coefficient matrices $H_{Id_{\emptyset}}$, $H_{Id_{(u_1)}}$, $H_{Id_{(u_1,u_2)}}$, and $H_{\tau}$, we need to decompose each $\alpha$ into a left part $\sigma$, a proper middle part $\tau$, and a right part ${\sigma'}^T$. 

The following left shapes will appear in these decompositions
\begin{definition} \ 
\begin{enumerate}
\item Define $\sigma_{Id:} = Id_{(u_1,u_2)}$. Note that $\sigma_{Id:} = \alpha_{Id:}$ but it is playing a different role.
\item Define $\sigma_{swap} = \alpha_{swap}$.
\item Define $\sigma_7$ to be the shape where $U_{\sigma_{7}} = (u_1,u_2)$, $V_{\sigma_{7}} = (v_1)$, and $E(\sigma_{7}) = \{(u_1,v_1),(u_2,v_1)\}$.
\item Define $\sigma_{u_1,u_2 \to u_1}$ to be the shape where $U_{\sigma_{u_1,u_2 \to u_1}} = (u_1,u_2)$, $V_{\sigma_{u_1,u_2 \to u_1}} = (u_1)$, and $E(\sigma_{u_1,u_2 \to u_1}) = \emptyset$.
\item Similarly, define $\sigma_{u_1,u_2 \to u_2}$ to be the shape where $U_{\sigma_{u_1,u_2 \to u_2}} = (u_1,u_2)$, $V_{\sigma_{u_1,u_2 \to u_2}} = (u_2)$, and $E(\sigma_{u_1,u_2 \to u_2}) = \emptyset$.
\item Define $\sigma_{u_1,u_2 \to \emptyset}$ to be the shape where $U_{\sigma_{u_1,u_2 \to \emptyset}} = (u_1,u_2)$, $V_{\sigma_{u_1,u_2 \to \emptyset}} = \emptyset$, and $E(\sigma_{u_1,u_2 \to \emptyset}) = \emptyset$.
\end{enumerate}
\end{definition}
These left shapes are illustrated in Figure \ref{onerightsidevertexsigmasfigure}. \\
\begin{figure}[ht]\label{onerightsidevertexsigmasfigure}
\centerline{\includegraphics[height=4cm]{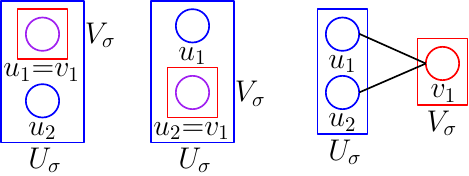}}
\caption{This figure shows the left shapes $\sigma$ where $|V_{sigma}| = 1$. From left to right we have $\sigma_{u_1,u_2 \to u_1}$, $\sigma_{u_1,u_2 \to u_2}$, and $\sigma_7$.}
\end{figure}
\begin{figure}[ht]\label{zeroortworightvertexsigmasfigure}
\centerline{\includegraphics[height=4cm]{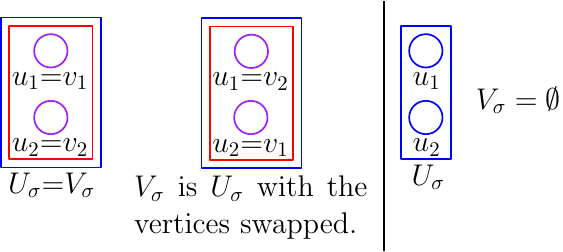}}
\caption{This figure shows the left shapes $\sigma$ where $|V_{\sigma}| = 2$ or $|V_{\sigma}| = 0$. On the left we have $\sigma_{Id:}$ and $\sigma_{swap}$. On the right we have $\sigma_{u_1,u_2 \to \emptyset}$.}
\end{figure}

The following proper middle shapes will appear in these decompositions.
\begin{definition} \ 
\begin{enumerate}
\item Define $\tau_{Id:} = Id_{(u_1,u_2)}$
\item Given $E \subseteq \{(u_1,v_1), (u_1,v_2), (u_2,v_1), (u_2,v_2)\}$ such that all four vertices $u_1,u_2,v_1,v_2$ are incident to at least one edge in $E$, we define $\tau_{E} = \alpha_{E}$.
\item Given $E \subseteq \{(u_1,v_1), (u_1,v_2), (u_2,v_1), (u_2,v_2)\}$ such that $E \neq \emptyset$, we define $\tau_{X,E} = \alpha_{X,E}$.
\item Given $i,j \in \{1,2\}$, we define $\tau_{u_i = v_j,e} = \alpha_{u_i = v_j,e}$.
\item Define $\tau_{Id\cdot} = Id_{(u_1)}$ to be the shape where $U_{Id_{(u_1)}} = V_{Id_{(u_1)}} = (u_1)$ and $E(Id_{(u_1)}) = \emptyset$.
\item Define $\tau_{e}$ to be the shape where $U_{\tau_{e}} = (u_1)$, $V_{\tau_{e}} = (v_1)$, and $E(\tau_{e}) = \{(u_1,v_1)\}$.
\item Define $\tau_{\emptyset}$ to be the empty shape with no vertices. 
\end{enumerate}
\end{definition}
These proper middle shapes (except for $\tau_{\emptyset}$) are illustrated in Figure \ref{twovertexseparatortausfigure}. \\
\begin{figure}[ht]\label{twovertexseparatortausfigure}
\centerline{\includegraphics[height=8cm]{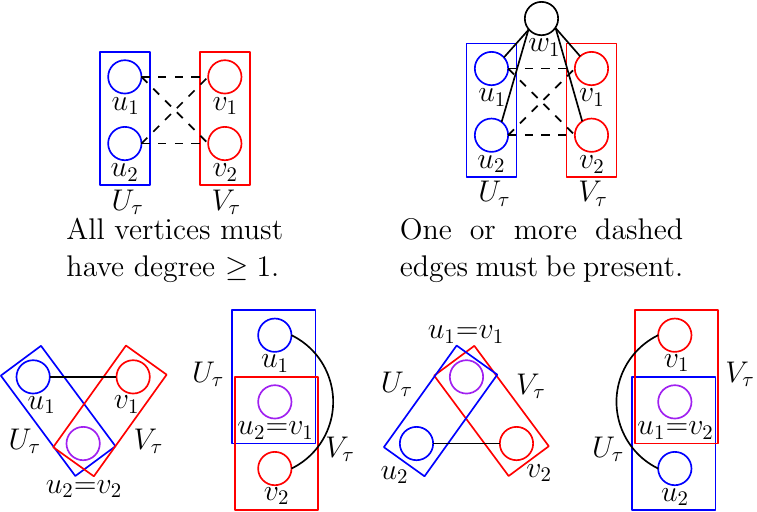}}
\caption{This figure shows the proper middle shapes $\tau$ where $|U_{tau}| = |V_{tau}| = 2$. In the upper row, we have $\tau_{Id:}$, $\tau_{E}$, and $\tau_{X,E}$. In the bottom row, we have  $\tau_{u_2 = v_2,e}$, $\tau_{u_2 = v_1,e}$, $\tau_{u_1 = v_1,e}$, and $\tau_{u_1 = v_2,e}$.}
\end{figure}
\begin{figure}[ht]\label{onevertexseparatortausfigure}
\centerline{\includegraphics[height=3cm]{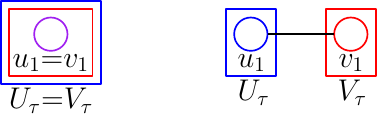}}
\caption{This figure shows the proper middle shapes $\tau$ where $|U_{tau}| = |V_{tau}| = 1$. On the left we have $\tau_{Id\cdot}$ and on the right we have $\tau_{e}$.}
\end{figure}

Some decompositions are as follows:
\begin{example} \ 
\begin{enumerate}
\item $\alpha_{X,\emptyset} = \sigma_7 \circ \tau_{Id\cdot} \circ \sigma_7^T$ (see Figure \ref{decompositiononefigure}).
\item $\alpha_{\{(u_1,v_1),(u_2,v_1)\}} = \sigma_7 \circ \tau_{Id\cdot} \circ \sigma_{u_1,u_2 \to u_1}^T$ (see Figure \ref{decompositiontwofigure}).
\item $\alpha_{\{(u_2,v_1)\}} = \sigma_{u_1,u_2 \to u_2} \circ \tau_{e} \circ \sigma_{u_1,u_2 \to u_1}^T$ (see Figure \ref{decompositionthreefigure}).
\item \begin{align*}
\alpha_{\{(u_1,v_1),(u_2,v_2)\}} &= \sigma_{Id:} \circ \alpha_{\{(u_1,v_1),(u_2,v_2)\}} \circ \sigma_{Id:}^T = \sigma_{Id:} \circ \alpha_{\{(u_1,v_2),(u_2,v_1)\}} \circ \sigma_{swap}^T \\
&= \sigma_{swap}^T \circ \alpha_{\{(u_1,v_2),(u_2,v_1)\}} \circ \sigma_{Id:} = \sigma_{swap} \circ \alpha_{\{(u_1,v_1),(u_2,v_2)\}} \circ \sigma_{swap}^T
\end{align*}
\end{enumerate}
\end{example}
\begin{remark}
Since there are $4$ different ways to decompose $\alpha_{\{(u_1,v_1),(u_2,v_2)\}}$, we split the coefficient $\lambda_{\alpha_{\{(u_1,v_1),(u_2,v_2)\}}}$ among these four decompositions. This is the reason for the factor of $4$ in the denominator in the entries of the matrix $H_{\tau}$.
\end{remark}
\begin{figure}[ht]\label{decompositiononefigure}
\centerline{\includegraphics[height=4cm]{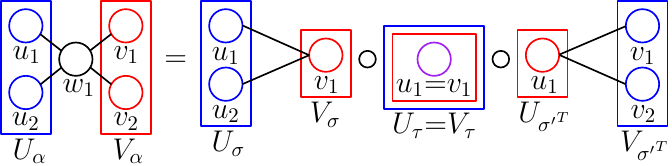}}
\caption{This figure shows the decomposition $\alpha_{X,\emptyset} = \sigma_7 \circ \tau_{Id\cdot} \circ \sigma_7^T$.}
\end{figure}
\begin{figure}[ht]\label{decompositiontwofigure}
\centerline{\includegraphics[height=4cm]{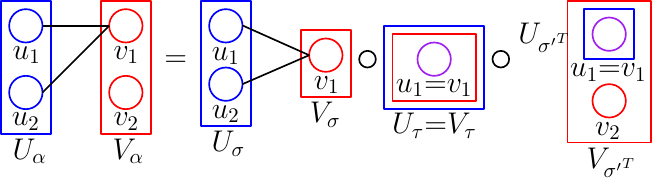}}
\caption{This figure shows the decomposition $\alpha_{\{(u_1,v_1),(u_2,v_1)\}} = \sigma_7 \circ \tau_{Id\cdot} \circ \sigma_{u_1,u_2 \to u_1}^T$.}
\end{figure}
\begin{figure}[ht]\label{decompositionthreefigure}
\centerline{\includegraphics[height=4cm]{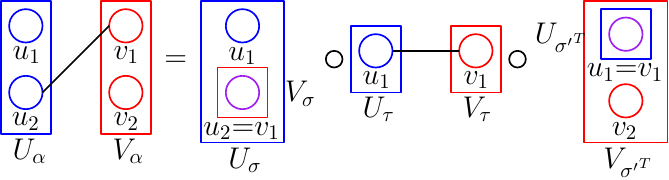}}
\caption{This figure shows the decomposition $\alpha_{\{(u_2,v_1)\}} = \sigma_{u_1,u_2 \to u_2} \circ \tau_{e} \circ \sigma_{u_1,u_2 \to u_1}^T$.}
\end{figure}

Our coefficient matrices are as follows (ignoring zero rows and columns):
\begin{enumerate}
\item $H_{Id_{(u_1,u_2)}}$ has two rows and columns indexed by $\sigma_{Id:}$ and $\sigma_{swap}$ and has entries $\left(\begin{matrix}
\frac{k^2}{2n^2} & \frac{k^2}{2n^2}\\
\frac{k^2}{2n^2} & \frac{k^2}{2n^2}\\
\end{matrix}\right)$.
\item $H_{Id_{(u_1)}}$ has rows and columns indexed by $\sigma_{u_1,u_2 \to u_1}$, $\sigma_{u_1,u_2 \to u_2}$, and $\sigma_7$ and has entries
\[
\left(\begin{matrix}
\frac{k^3}{n^3} & \frac{k^3}{n^3} & \frac{k^4}{n^4}\\
\frac{k^3}{n^3} & \frac{k^3}{n^3} & \frac{k^4}{n^4}\\
\frac{k^4}{n^4} & \frac{k^4}{n^4} & C\frac{k^5}{n^5}\\
\end{matrix}\right) \]
\item $H_{Id_{\emptyset}}$ has a single row and column indexed by $\sigma_{u_1,u_2 \to \emptyset}$ and has a single entry which is $\frac{k^4}{n^4}$.
\item For all $E \subseteq \{(u_1,v_1), (u_1,v_2), (u_2,v_1), (u_2,v_2)\}$ such that all four vertices $u_1,u_2,v_1,v_2$ are incident to at least one edge in $E$, $H_{\tau_{E}}$ has two rows and columns indexed by $\sigma_{Id:}$ and $\sigma_{swap}$ and has entries $\left(\begin{matrix}
\frac{k^4}{4n^4} & \frac{k^4}{4n^4}\\
\frac{k^4}{4n^4} & \frac{k^4}{4n^4}\\
\end{matrix}\right)$.
\item For all $E \subseteq \{(u_1,v_1), (u_1,v_2), (u_2,v_1), (u_2,v_2)\}$ such that $E \neq \emptyset$,  $H_{\tau_{X,E}}$ has two rows and columns indexed by $\sigma_{Id:}$ and $\sigma_{swap}$ and has entries $\left(\begin{matrix}
\frac{k^5}{4n^5} & \frac{k^5}{4n^5}\\
\frac{k^5}{4n^5} & \frac{k^5}{4n^5}\\
\end{matrix}\right)$.
\item For all $i,j \in \{1,2\}$, $H_{\tau_{u_i = v_j,e}}$ has two rows and columns indexed by $\sigma_{Id:}$ and $\sigma_{swap}$ and has entries $\left(\begin{matrix}
\frac{k^3}{4n^3} & \frac{k^3}{4n^3}\\
\frac{k^3}{4n^3} & \frac{k^3}{4n^3}\\
\end{matrix}\right)$.
\item $H_{\tau_{e}}$ has rows and columns indexed by $\sigma_{u_1,u_2 \to u_1}$ and $\sigma_{u_1,u_2 \to u_2}$ and has entries 
\[
\left(\begin{matrix}
\frac{k^4}{n^4} & \frac{k^4}{n^4}\\
\frac{k^4}{n^4} & \frac{k^4}{n^4}\\
\end{matrix}\right)\]
\end{enumerate}
\subsection{Verifying the first and second conditions of the machinery}
We can verify the fist and second conditions of the machinery as follows.
\begin{enumerate}
\item $H_{Id_{(u_1,u_2)}} \succeq 0$ and $H_{Id_{\emptyset}} \succeq 0$
\item As long as $C \geq 1$, $H_{Id_{(u_1)}} \succeq 0$. This condition is the reason why we need to add this term in.
\item For all $E \subseteq \{(u_1,v_1), (u_1,v_2), (u_2,v_1), (u_2,v_2)\}$ such that all four vertices $u_1,u_2,v_1,v_2$ are incident to at least one edge in $E$, $||M_{\tau_{E}}||$ is $\tilde{O}(n)$ so $||M_{\tau_{E}}||H_{\tau_{E}} \preceq H_{Id_{(u_1,u_2)}}$ as long as $k << \sqrt{n}$.
\item For all $E \subseteq \{(u_1,v_1), (u_1,v_2), (u_2,v_1), (u_2,v_2)\}$ such that $E \neq \emptyset$, 
$||M_{\tau_{X,E}}||$ is $\tilde{O}(n^{3/2})$ so $||M_{\tau_{X,E}}||H_{\tau_{X,E}} \preceq H_{Id_{(u_1,u_2)}}$ as long as $k << \sqrt{n}$.
\item For all $i,j \in \{1,2\}$, $||M_{\tau_{u_i = v_j,e}}||$ is $\tilde{O}(\sqrt{n})$ so $||M_{\tau_{u_i = v_j,e}}||H_{\tau_{u_i = v_j,e}} \preceq H_{Id_{(u_1,u_2)}}$ as long as $k << \sqrt{n}$.
\item Since $\left(\begin{matrix}
\frac{k^3}{n^3} & \frac{k^3}{n^3} & \frac{k^4}{n^4}\\
\frac{k^3}{n^3} & \frac{k^3}{n^3} & \frac{k^4}{n^4}\\
\frac{k^4}{n^4} & \frac{k^4}{n^4} & C\frac{k^5}{n^5}\\
\end{matrix}\right) \succeq (1 - \frac{1}{C})\left(\begin{matrix}
\frac{k^3}{n^3} & \frac{k^3}{n^3} & 0\\
\frac{k^3}{n^3} & \frac{k^3}{n^3} & 0\\
0 & 0 & 0\\
\end{matrix}\right)$ and $||M_{\tau_{e}}||$ is $\tilde{O}(\sqrt{n})$, $||M_{\tau_{e}}||H_{\tau_{e}} \preceq H_{Id_{(u_1)}}$ as long as $C > 1$ and $k << \sqrt{n}$. Note that for pseudo-calibration we take $C = 1$. We can do this because we have more terms which allows us to have a more delicate factorization.
\end{enumerate}
\subsection{Verifying the third condition of the machinery}
The following left shapes $\gamma$ appear.
\begin{definition} \ 
\begin{enumerate}
\item Define $\gamma_7 = \sigma_7$.
\item Define $\gamma_{u_1,u_2 \to u_1} = \sigma_{u_1,u_2 \to u_1}$.
\item Define $\gamma_{u_1,u_2 \to u_2} = \sigma_{u_1,u_2 \to u_2}$.
\item Define $\gamma_{u_1,u_2 \to \emptyset} = \sigma_{u_1,u_2 \to \emptyset}$.
\item Define $\gamma_{u_1 \to \emptyset}$ to be the shape with $U_{\gamma_{u_1 \to \emptyset}} = (u_1)$ and $V_{\gamma_{u_1 \to \emptyset}} = \emptyset$
\end{enumerate}
\end{definition}
For illustrations of these gammas, see Figure \ref{possiblegammasfigure}. \\
\begin{figure}[ht]\label{possiblegammasfigure}
\centerline{\includegraphics[height=4cm]{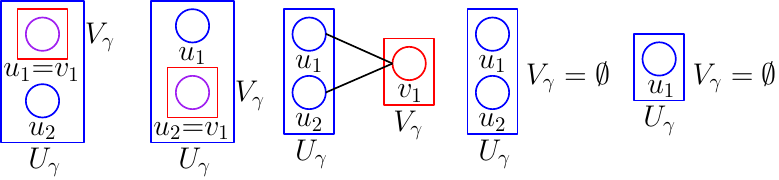}}
\caption{This figure shows the left shapes $\gamma$ which appear in the analysis. From left to right we have $\gamma_{u_1,u_2 \to u_1}$, $\gamma_{u_1,u_2 \to u_2}$, $\gamma_7$, $\gamma_{u_1,u_2 \to \emptyset}$, and $\gamma_{u_1 \to \emptyset}$}
\end{figure}
We have the following compositions.
\begin{enumerate}
\item $\sigma_{Id:} \circ \gamma_7 = \sigma_{swap} \circ \gamma_7 = \sigma_7$ (see Figure \ref{compositiononefigure}).
\item $\sigma_{Id:} \circ \gamma_{u_1,u_2 \to u_1} = \sigma_{u_1,u_2 \to u_1}$ and $\sigma_{swap} \circ \gamma_{u_1,u_2 \to u_1} = \sigma_{u_1,u_2 \to u_2}$ (see Figure \ref{compositiontwofigure}).
\item Similarly, $\sigma_{Id:} \circ \gamma_{u_1,u_2 \to u_2} = \sigma_{u_1,u_2 \to u_2}$ and $\sigma_{swap} \circ \gamma_{u_1,u_2 \to u_2} = \sigma_{u_1,u_2 \to u_1}$.
\item $\sigma_{Id:} \circ \gamma_{u_1,u_2 \to \emptyset} = \sigma_{swap} \circ \gamma_{u_1,u_2 \to \emptyset} = \sigma_{u_1,u_2 \to \emptyset}$.
\item $\sigma_{u_1,u_2 \to u_1} \circ \gamma_{u_1 \to \emptyset} = \sigma_{u_1,u_2 \to u_2} \circ \gamma_{u_1 \to \emptyset} = \sigma_{u_1,u_2 \to \emptyset}$ (see Figure \ref{compositionthreefigure}).
\end{enumerate}
\begin{figure}[ht]\label{compositiononefigure}
\centerline{\includegraphics[height=4cm]{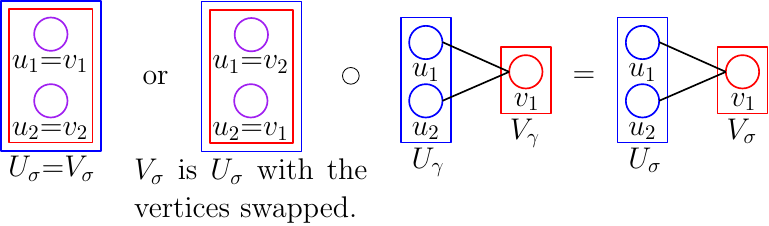}}
\caption{This figure shows the compositions $\sigma_{Id:} \circ \gamma_{7} = \sigma_{7}$ and $\sigma_{swap} \circ \gamma_{7} = \sigma_{7}$.}
\end{figure}
\begin{figure}[ht]\label{compositiontwofigure}
\centerline{\includegraphics[height=8cm]{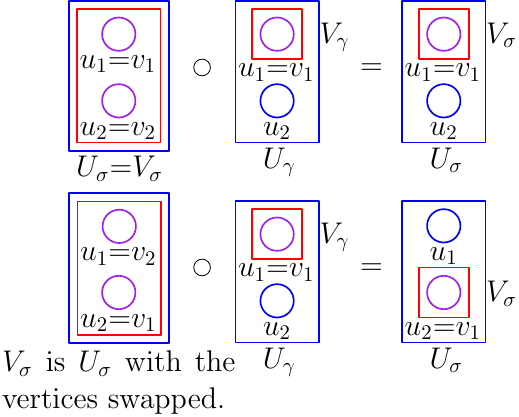}}
\caption{This figure shows the compositions $\sigma_{Id:} \circ \gamma_{u_1,u_2 \to u_1} = \sigma_{u_1,u_2 \to u_1}$ and $\sigma_{swap} \circ \gamma_{u_1,u_2 \to u_1} = \sigma_{u_1,u_2 \to u_2}$ .}
\end{figure}
\begin{figure}[ht]\label{compositionthreefigure}
\centerline{\includegraphics[height=4cm]{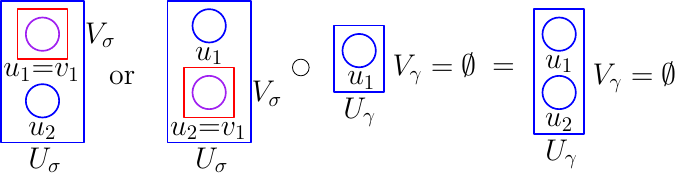}}
\caption{This figure shows the compositions $\sigma_{u_1,u_2 \to u_1} \circ \gamma_{u_1 \to \emptyset} = \sigma_{u_1,u_2 \to \emptyset}$ and $\sigma_{u_1,u_2 \to u_2} \circ \gamma_{u_1 \to \emptyset} = \sigma_{u_1,u_2 \to \emptyset}$.}
\end{figure}

Based on these compositions, we have the following matrices:
\begin{enumerate}
\item $H_{Id_{(u_1)}}^{-\gamma_7,\gamma_7}$ has two rows and columns indexed by $\sigma_{Id:}$ and $\sigma_{swap}$ and has entries $\left(\begin{matrix}
\frac{k^5}{n^5} & \frac{k^5}{n^5}\\
\frac{k^5}{n^5} & \frac{k^5}{n^5}\\
\end{matrix}\right)$.
\item $H_{Id_{(u_1)}}^{-\gamma_{u_1,u_2 \to u_1},\gamma_{u_1,u_2 \to u_1}}$ has two rows and columns indexed by $\sigma_{Id:}$ and $\sigma_{swap}$ and has entries $\left(\begin{matrix}
\frac{k^3}{n^3} & \frac{k^3}{n^3}\\
\frac{k^3}{n^3} & \frac{k^3}{n^3}\\
\end{matrix}\right)$.
\item $H_{Id_{(u_1)}}^{-\gamma_{u_1,u_2 \to u_2},\gamma_{u_1,u_2 \to u_2}} = H_{Id_{(u_1)}}^{-\gamma_{u_1,u_2 \to u_1},\gamma_{u_1,u_2 \to u_1}}$.
\item $H_{Id_{\emptyset}}^{-\gamma_{u_1,u_2 \to \emptyset},\gamma_{u_1,u_2 \to \emptyset}}$  has two rows and columns indexed by $\sigma_{Id:}$ and $\sigma_{swap}$ and has entries $\left(\begin{matrix}
\frac{k^4}{n^4} & \frac{k^4}{n^4}\\
\frac{k^4}{n^4} & \frac{k^4}{n^4}\\
\end{matrix}\right)$.
\item $H_{Id_{\emptyset}}^{-\gamma_{u_1 \to \emptyset},\gamma_{u_1 \to \emptyset}}$ has two rows and columns indexed by $\sigma_{u_1,u_2 \to u_1}$ and $\sigma_{u_1,u_2 \to u_2}$ and has entries $\left(\begin{matrix}
\frac{k^4}{n^4} & \frac{k^4}{n^4}\\
\frac{k^4}{n^4} & \frac{k^4}{n^4}\\
\end{matrix}\right)$.
\end{enumerate}
We can qualitatively verify the third condition of the machinery as follows
\begin{enumerate}
\item $B(\gamma_7)$ is $\tilde{O}(n^{\frac{|V(\gamma_7) \setminus U_{\gamma_7}|}{2}}) = \tilde{O}(\sqrt{n})$ so ${B(\gamma_7)^2}H_{Id_{(u_1)}}^{-\gamma_7,\gamma_7} \preceq H_{Id_{(u_1,u_2)}}$ as long as $k << n^{\frac{2}{3}}$.
\item $B(\gamma_{u_1,u_2 \to u_1})$ is $\tilde{O}(n^{\frac{|V(\gamma_{u_1,u_2 \to u_1}) \setminus U_{\gamma_{u_1,u_2 \to u_1}}|}{2}}) = \tilde{O}(1)$ so ${B(\gamma_{u_1,u_2 \to u_1})^2}H_{Id_{(u_1)}}^{-\gamma_{u_1,u_2 \to u_1},\gamma_{u_1,u_2 \to u_1}} \preceq H_{Id_{(u_1,u_2)}}$. Following the same logic, ${B(\gamma_{u_1,u_2 \to u_2})^2}H_{Id_{(u_1)}}^{-\gamma_{u_1,u_2 \to u_2},\gamma_{u_1,u_2 \to u_2}} \preceq H_{Id_{(u_1,u_2)}}$.
\item $B(\gamma_{u_1,u_2 \to \emptyset})$ is $\tilde{O}(n^{\frac{|V(\gamma_{u_1,u_2 \to \emptyset}) \setminus U_{\gamma_{u_1,u_2 \to \emptyset}}|}{2}}) = \tilde{O}(1)$ so ${B(\gamma_{u_1,u_2 \to \emptyset})^2}H_{Id_{\emptyset}}^{-\gamma_{u_1,u_2 \to \emptyset},\gamma_{u_1,u_2 \to \emptyset}} \preceq H_{Id_{(u_1,u_2)}}$.
\item $B(\gamma_{u_1 \to \emptyset})$ is $\tilde{O}(n^{\frac{|V(\gamma_{u_1 \to \emptyset}) \setminus U_{\gamma_{u_1 \to \emptyset}}|}{2}}) = \tilde{O}(1)$ so ${B(\gamma_{u_1 \to \emptyset})^2}H_{Id_{\emptyset}}^{-\gamma_{u_1 \to \emptyset},\gamma_{u_1 \to \emptyset}} \preceq H_{Id_{(u_1)}}$ as 
\[
\left(\begin{matrix}
\frac{k^3}{n^3} & \frac{k^3}{n^3} & \frac{k^4}{n^4}\\
\frac{k^3}{n^3} & \frac{k^3}{n^3} & \frac{k^4}{n^4}\\
\frac{k^4}{n^4} & \frac{k^4}{n^4} & C\frac{k^5}{n^5}\\
\end{matrix}\right) \succeq (1 - \frac{1}{C})\left(\begin{matrix}
\frac{k^3}{n^3} & \frac{k^3}{n^3} & 0\\
\frac{k^3}{n^3} & \frac{k^3}{n^3} & 0\\
0 & 0 & 0\\
\end{matrix}\right) \succeq \tilde{O}(1)\left(\begin{matrix}
\frac{k^4}{n^4} & \frac{k^4}{n^4} & 0\\
\frac{k^4}{n^4} & \frac{k^4}{n^4} & 0\\
0 & 0 & 0\\
\end{matrix}\right)
\].
\end{enumerate}	

\end{appendix}


\end{document}